\renewcommand{\chaptermark}[1]{\markboth{\chaptername\ \thechapter.\ #1}{}}
\renewcommand{\sectionmark}[1]{\markright{\thesection\ #1}}
\renewcommand{\headrulewidth}{0.1ex}
\renewcommand{\footrulewidth}{0.1ex}
\newtheorem{theorem}{Theorem}
\begin{document}

%% PART 1
\pagenumbering{Roman} 				
\setcounter{page}{1}
%Make a title page
\title{Master's Thesis: Excitation Spectrum of a Weakly Interacting Spin-Orbit Coupled Bose-Einstein Condensate}

\author{Kristian Mæland}
\date{\today  \flushleft\endgraf\bigskip\bigskip\bigskip\bigskip\bigskip\bigskip\bigskip\bigskip \noindent Master of Science in Physics \endgraf \noindent Submission date: May 2020 \endgraf \noindent Supervisor: Asle Sudbø \bigskip\bigskip\endgraf \noindent Norwegian University of Science and Technology \endgraf \noindent Department of Physics}

\maketitle

\cleardoublepage
%% PART 2
\clearpage
\pagenumbering{roman} 				
\setcounter{page}{1}

\pagestyle{fancy}
\fancyhf{}
\renewcommand{\chaptermark}[1]{\markboth{\chaptername\ \thechapter.\ #1}{}}
\renewcommand{\sectionmark}[1]{\markright{\thesection\ #1}}
\renewcommand{\headrulewidth}{0.1ex}
\renewcommand{\footrulewidth}{0.1ex}
\fancyfoot[LE,RO]{\thepage}
\fancypagestyle{plain}{\fancyhf{}\fancyfoot[LE,RO]{\thepage}\renewcommand{\headrulewidth}{0ex}}

\section*{\Huge Summary}
\addcontentsline{toc}{chapter}{Summary}	
$\\[0.5cm]$

\noindent A weakly interacting, spin-orbit coupled, two-component, ultracold Bose gas bound to a Bravais lattice is studied. Motivated by recent experimental advances in the field of synthetically spin-orbit coupled, ultracold, neutral atomic gases showing Bose-Einstein condensation, an analytic framework with which to describe such systems in the superfluid regime is presented. This is applied to a Rashba spin-orbit-coupled Bose gas in a two-dimensional optical lattice. The exotic nature of Bose-Einstein condensation in the presence of spin-orbit coupling is an interesting study by itself. Additionally, when the optical lattice is introduced, the system provides a highly controllable experimental testing ground for numerous condensed matter physics phenomena. Five phases of the system are considered, and their excitation spectra, critical superfluid velocities and free energies are found. In obtaining the free energy, the effects of terms in the Hamiltonian that are linear in excitation operators are included, and such terms have not been studied previously in this context. Minimization of the free energy at zero temperature is used to confirm the phase diagrams reported in the literature, where it has usually been obtained by neglecting the effect of excitations. The plane and stripe wave phases in the phase diagram are bosonic analogues of Fulde-Ferrell-Larkin-Ovchinnikov states in superconductors involving nonzero condensate momenta.

\clearpage

\section*{\Huge Sammendrag}
\addcontentsline{toc}{chapter}{Sammendrag}	
$\\[0.5cm]$

\noindent En svakt vekselvirkende, spinn-bane koblet, to-komponent, ultrakald Bose-gass bundet til et Bravais gitter blir studert. En analytisk framgangsmåte for å beskrive slike systemer i superfluid regimet blir presentert, motivert av nylig fremgang innen eksperimenter på syntetisk spinn-bane koblede, ultrakalde gasser av nøytrale atomer som viser Bose-Einstein kondensasjon. Dette blir så anvendt på en Rashba spinn-bane koblet Bose-gass i et todimensjonalt optisk gitter. Bose-Einstein kondensasjon sammen med spinn-bane kobling er en interessant studie i seg selv. Videre, ved å introdusere et optisk gitter, gir systemet en høyst kontrollerbar eksperimentell fremgangsmåte for å teste flerfoldige fenomener i faste stoffers fysikk. Eksitasjonsspektre, kritisk superfluid hastighet og fri energi blir funnet for fem faser av systemet. Ledd i Hamiltonoperatoren som er lineære i eksitasjonsoperatorer blir behandlet for å finne fri energi, og slike ledd har ikke blitt studert tidligere i denne sammenhengen. Minimering av fri energi ved null temperatur brukes til å finne et fasediagram som stemmer overens med litteraturen, der det oftest er funnet uten å ta hensyn til eksitasjoner. Plan- og stripebølgefasene i fasediagrammet er bosoniske analogier til Fulde-Ferrell-Larkin-Ovchinnikov tilstander i superledere som involverer kondensering ved ikke-null impuls.

\clearpage

%preface to arxiv version
\section*{\Huge Preface}
\addcontentsline{toc}{chapter}{Preface}
$\\[0.5cm]$

\noindent This Master's thesis presents the results of research conducted in the field of theoretical condensed matter physics. The research was carried out in the final year of the two year Master of Science in Physics program at the Norwegian University of Science and Technology (NTNU). I also completed by Bachelor in Physics at the same university, and I would like to thank NTNU for providing a great arena for the study of physics. Many thanks go to my supervisor Professor Asle Sudbø, whose excellent availability and guidance has been a great help. Furthermore, his excitement for the subject has been a terrific motivation. I would also like to thank fellow Master student Jonas Halse Rygh for rewarding discussions on the topic of this thesis. My gratitude is extended to my other friends and my family for their support.

\begin{flushright}
\noindent Kristian Mæland

\noindent Trondheim, Norway

\noindent May 2020
\end{flushright}

\clearpage 

\section*{\Huge Preface to arXiv Version}
\addcontentsline{toc}{chapter}{Preface to arXiv Version}
$\\[0.5cm]$

\noindent This Master's thesis was based on work done by Andreas T. G. Janssønn, in his Master's thesis \cite{master}. While working on this, I had many valuable discussions with my fellow Master student Jonas H. Rygh. Additionally, the insight of my supervisor Professor Asle Sudbø was instrumental. Together with some continued work after submitting the thesis in May 2020, this resulted in a publication in Physical Review A \cite{PhysRevA.102.053318}. My Master's thesis is now uploaded to arXiv, to act as a comprehensive overview of the methods involved in obtaining the final results in \cite{PhysRevA.102.053318}. In this arXiv version some typos have been corrected, and a longer version of appendix \ref{app:LW} is included. Additionally, some notes are added in italics. I would like to highlight chapters \ref{sec:BV}, \ref{sec:MFTH} and \ref{chap:Excitations} as particularly useful when it comes to the method involved in obtaining the excitation spectrum. Note that while \cite{PhysRevA.102.053318} considers an external Zeeman field, this Master's thesis does not. 

The reader is advised of the following error made in the thesis: the slope of linear excitation spectra is identified as the critical superfluid velocity. Especially for phases with nonzero condensate momenta, this is highly questionable. The reader is referred to \cite{37} which covers this topic, wherein this is also cast into doubt for zero-momentum condensates in a lattice. Hence, whenever superfluidity is discussed, it should have been a discussion of whether phonon-like excitations exist in the system. And when results of the critical superfluid velocity are presented, it should have been called the sound velocity of the excitations.

\begin{flushright}
\noindent Kristian Mæland

\noindent Trondheim, Norway

\noindent November 2020
\end{flushright}

\cleardoublepage		
\phantomsection %So that link is in right place
\addcontentsline{toc}{chapter}{Table of Contents} %Before \toc, so that the page is correct
\tableofcontents
%\addcontentsline{toc}{chapter}{Table of Contents}
\clearpage

% \listoftables
% \addcontentsline{toc}{chapter}{List of Tables}
% \clearpage

% \listoffigures									
% \addcontentsline{toc}{chapter}{List of Figures}
% \clearpage			%% Generate TOC
\cleardoublepage
\pagestyle{fancy}
\fancyhf{}
\renewcommand{\chaptermark}[1]{\markboth{\chaptername\ \thechapter.\ #1}{}}
\renewcommand{\sectionmark}[1]{\markright{\thesection\ #1}}
\renewcommand{\headrulewidth}{0.1ex}
\renewcommand{\footrulewidth}{0.1ex}
\fancyfoot[LE,RO]{\thepage}
\fancyhead[LE]{\leftmark}
\fancyhead[RO]{\rightmark}
\fancypagestyle{plain}{\fancyhf{}\fancyfoot[LE,RO]{\thepage}\renewcommand{\headrulewidth}{0ex}}
\pagenumbering{arabic} 				
\setcounter{page}{1}
%% PART 3 -- The Chapters
%===================================== CHAP 1 =================================

\chapter{Introduction}

Bosons, like the photon for instance, are particles with integer spin which separates them from fermions, like the electron, with half-integer spin. An important consequence is that bosons are not influenced by the Pauli exclusion principle. Unlike fermions, there is in principle no limit to how many bosons that can occupy the same quantum mechanical state. Thus, in certain bosonic systems when cooled below a critical temperature, a macroscopic number of particles can occupy the ground state. This is what is known as Bose-Einstein condensation, named after S. N. Bose and A. Einstein who first studied the concept \cite{Bose, Einstein1, Einstein2}. %See Andreas for references

After the discovery of superfluid liquid helium in 1938 in the experiments \cite{Kapitza, Allen}, Bose-Einstein condensation was suggested as a way of describing the system \cite{London}. L. D. Landau further explored the system, accounting for interactions between the condensate and the excitations \cite{Landau}. When dragging an impurity through the condensate below a critical velocity, excitations become energetically unfavorable. Hence, the dissipation is eliminated, and below this critical superfluid velocity the system permits frictionless flow, explaining the superfluid behavior. Later, N. N. Bogoliubov calculated the excitation spectrum and found a linear dispersion close to the minimum adding to the microscopic theory of superfluidity \cite{bogoliubov1947theory}. 

The constituents of atoms are fermions, but due to addition of spins, some atoms have integer spin in total and thus behave like bosons. Hence, ultracold dilute atomic gases can exhibit Bose-Einstein condensation. Unlike the strongly interacting superfluid liquid helium such atomic gasses can be weakly interacting allowing for greater occupation fractions of the condensate. Dilute gases are used to avoid the formation of liquids or solids during the cooling. Typically, both laser cooling and evaporative cooling techniques are used to bring the system down to nanokelvin temperatures. After decades of technological advances in said cooling techniques, Bose-Einstein condensation in ultracold dilute atomic gases was first realized experimentally in 1995 using rubidium atoms in a group led by E. A. Cornell and C. E. Wieman \cite{Cornell}. Bose-Einstein condensation was later achieved in other alkali metals as well, including for lithium atoms by C. C. Bradley et al. \cite{Bradley} and sodium atoms in a group led by W. Ketterle \cite{Ketterle}. For this work, E. A. Cornell, C. E. Wieman and W. Ketterle were awarded the 2001 Nobel Prize in Physics \cite{Nobel2001}.  

One can also use lasers to set up a periodic potential landscape that generates an optical Bravais lattice. With the atoms bound by the periodic potential, the system resembles that of electrons in a crystal lattice. This means it can be used to simulate many phenomena of condensed matter physics. Among the applications of neutral atoms trapped in optical lattices is quantum computing, because the system is highly controllable \cite{Jaksch}. Additionally, the system can be further expanded to study spin-orbit coupling.

Spin-orbit coupling describes the interesting appearance of a coupling between a particle’s spin and its momentum when subjected to an electric field. It is a relativistic effect, derived from the Dirac equation, and therefore breaks Galilean invariance \cite{bransdenQM}. An example is how an electron’s spin couples to its orbital angular momentum in an atom, from which spin-orbit coupling derives its name. One way to understand this interaction, is by thinking of an electron moving in an electric field. If a Lorentz boost to the rest frame of the electron is performed, one finds an effective nonzero magnetic field. The electron has a magnetic dipole moment proportional to its spin, and therefore interacts with this effective magnetic field \cite{PCH, DJGriffithsQM}. Spin-orbit coupling has applications in data storage \cite{manchon2015newRashbaApp}, is important for the quantum spin Hall effect \cite{KaneMele}, for topological insulators \cite{KaneTopologicalInsulators}, and in general the rapidly expanding field of spintronics, in which manipulation of the spins in condensed matter systems is of interest \cite{galitskispielmanSOCrev}.

The first proposals for an experimentally realizable method to introduce a synthetic spin-orbit coupling to a dilute atomic Bose gas were reported in 2002 and 2005 \cite{Higbie, osterloh2005cold, ruseckas2005non}. This was first achieved experimentally in 2011 with a one-dimensional spin-orbit coupling in a group led by I. B. Spielman \cite{lin2011expSOC}. In later years the methods have been refined, and two-dimensional spin orbit couplings have also been achieved \cite{wu2016realization2DSOC}. Many proposals exist for methods to realize any linear combination of Rashba \cite{RashbaSOC} and Dresselhaus \cite{DresselhausSOC} spin-orbit couplings in two dimensions and beyond \cite{galitskispielmanSOCrev, generalSOC, wang2DRashba, spielman2016rashba, zhaiSWRashbaRev}. 

Experimentalists can pick out two states of the atoms with very similar energies, called two hyperfine states, and make sure the occupation numbers of other states are negligible. These two states are then labeled pseudospin up and pseudospin down. The name pseudospin is used because having picked out two states, one can use the same formalism as in a spin-1/2 system. The more mathematical explanation is that the two-dimensional Hilbert-space is isomorphic to a spin-1/2 system. The two pseudospin states are considered as two different components of the condensate. Multi-component condensates with more than two components are also possible.

To introduce a synthetic spin-orbit coupling to the system requires generation of momentum dependent transitions between the two pseudospin states. This can be achieved by lasers, with energies slightly detuned from transition energies of the atoms. The lasers generate transitions between the pseudospin states, and the Doppler effect ensures that the transition rates are dependent on the momenta of the atoms. As mentioned, spin-orbit coupling breaks Galilean invariance. The same is true for this synthetic version, and therefore systems of ultracold bosonic atoms with synthetic spin-orbit coupling are not Galilean invariant, something which has been proven experimentally \cite{HamnerSOCGalileanExp}. 

A reason why systems of ultracold, dilute atomic gasses have garnered so much interest, is because they offer high experimental tunability. Just by changing the frequencies, directions or intensities of the lasers used to generate the optical lattice or the spin-orbit coupling, one can tune parameters like the hopping parameter, interaction strength \cite{LS} and the strength of the spin-orbit coupling \cite{spielmantunable}. The hopping parameter is an energy associated with atoms tunneling between lattice sites that appears in the Bose-Hubbard model. The two-component Bose-Einstein condensate bound to an optical lattice can be described using the Bose-Hubbard model, and as such provides a method to experimentally test the predictions of the model \cite{LS}.

Due to the aforementioned tunability, these systems are also good probes of quantum phenomena that are often difficult to detect in solid state materials. Of particular interest to this thesis, one can study the concept of spin-orbit coupling in great detail. Being a relativistic effect, the effects of true spin-orbit coupling are often difficult to measure, and one is not able to tune its strength. Additionally, this thesis studies states which can be thought of as bosonic analogues to fermionic Fulde-Ferrell-Larkin-Ovchinnikov states in superconductors \cite{FF, LO}. This further connects the system to superconductors, which have many technological applications \cite{SFsuperconductivity}.

The structure of the thesis is as follows. In chapter \ref{chap:Prelim} we present preliminary material regarding the Bose-Hubbard model, spin-orbit coupling and superfluidity. The special cases of non-interacting, spin-orbit coupled Bose gas and a weakly interacting Bose gas with no spin-orbit coupling are presented. In addition, a generalized diagonalization method for Hamiltonians quadratic in bosonic operators is studied extensively, due to its heavy usage in the thesis. A mean field theory is applied to the Bose-Hubbard model describing the two-component, weakly interacting, spin-orbit coupled Bose-Einstein condensate in chapter \ref{chap:MFT}. The most interesting phases of the system identified in chapter \ref{chap:MFT} are then studied in chapter \ref{chap:Excitations}, wherein the elementary excitations and critical superfluid velocities are found. In addition, the free energy at zero temperature, i.e. the ground state energy, is obtained, which allows for the construction of a phase diagram in chapter \ref{chap:PDdiscuss}, presented together with a discussion of the overall results of the thesis. The conclusions are summarized in chapter \ref{chap:ConcOutlook} together with an outlook on potential continuations and applications of the results. The appendices give further details of the calculations.

\cleardoublepage		%% Edit each chapter
%===================================== CHAP 2 =================================

\chapter{Preliminaries} \label{chap:Prelim}
\section{Notation}
In this thesis vector quantities are denoted in bold font, e.g. $\boldsymbol{x}$. Unit vectors are denoted $\hat{\boldsymbol{x}} = \boldsymbol{x}/|\boldsymbol{x}|$. Operators and matrices are not given a special notation, the fact that they are operators and matrices should be clear from context. For a matrix $M$ we will use the notation $M^T$ for its transpose, $M^\dagger$ for its Hermitian conjugate and $M^*$ for its complex conjugate. The identity matrix will be denoted $I$, its size will be left implicit. The Pauli matrices are represented by $\sigma_i$ for $i=x,y,z$, and the usual definitions
\begin{equation}
    \sigma_x = \begin{pmatrix} 0 & 1 \\ 1 & 0    \end{pmatrix}, \mbox{\quad} \sigma_y = \begin{pmatrix} 0 & -i \\ i & 0    \end{pmatrix} \mbox{\quad and \quad} \sigma_z = \begin{pmatrix} 1 & 0 \\ 0 & -1    \end{pmatrix}
\end{equation}
are used. We will let $\alpha, \beta =\{ \uparrow, \downarrow \}$ represent spin indices. A $2 \cross 2$ matrix labelled $\eta^{\alpha\beta}$ represents the elements of the matrix in the sense that
\begin{equation}
    \eta = \begin{pmatrix} \eta^{\uparrow\uparrow} & \eta^{\uparrow\downarrow} \\ \eta^{\downarrow\uparrow} & \eta^{\downarrow\downarrow}   \end{pmatrix}.
\end{equation}
Planck's constant divided by $2\pi$ is set equal to one throughout the thesis, i.e. $\hbar = 1$. To simplify some expressions, $ab^\dagger + ba^\dagger = ab^\dagger + \textrm{H.c.}$ will be used, where H.c. indicates that a term is the same as the Hermitian conjugate of the preceding term. 

\section{Bose-Hubbard Model}
This thesis is concerned with ultracold bosonic atoms bound to optical lattices. The formation of optical lattices in one, two and three dimensions is described in \cite{bloch2008many, PethickSmith}. The simplest configurations utilize counterpropagating lasers with the same frequency that generate standing waves. Through the ac Stark effect, the energy of an atom is shifted in the presence of an electric field. With the periodic electric field from the lasers, this can be thought of as a periodic external potential acting on the atom \cite{PethickSmith}. 

One of the reasons we introduce an optical lattice is that the system then resembles electrons in a crystal potential. Hence, experiments on cold atom systems in optical lattices can be used to test theories from condensed matter systems \cite{PethickSmith}. The advantage of the cold atom experiments is the high degree of tunability of the parameters in the system. The optical lattice is generated by controllable external lasers. Hence, the lattice constant, the hopping parameter and the interaction parameters can be tuned by changing the frequency or intensity of the lasers \cite{PethickSmith}. E.g. by increasing the intensity of the laser the periodic potential becomes deeper, thus reducing the hopping parameter and increasing the on-site interactions \cite{LS}. The interactions can also be tuned using Feshback resonance which can alter the scattering lengths, as described in \cite{PethickSmith}. This appears when the total energy $E$ of the particles in the interaction is close to the energy of a bound state in the system, $E_{\textrm{res}}$. The scattering length then has a contribution \cite{PethickSmith}
\begin{equation}
    a_s \sim \frac{1}{E-E_{\textrm{res}}}.
\end{equation}
The energy of the bound states can e.g. be controlled by an external magnetic field, making it possible to tune the interaction parameters \cite{PethickSmith}.

Since we will introduce a synthetic spin-$1/2$ spin-orbit coupling (SOC) to the system we need to have two components that act as the two pseudospin states. Hence, we are considering a weakly interacting, SOC, two-component Bose gas bound to a Bravais lattice. We will assume the temperature is below the critical temperature for Bose-Einstein condensation (BEC) to occur such that only the low energy contribution to the scattering amplitude is of importance. This is described by the s-wave scattering length, $a_s$ \cite{Pitaevskii, abrikosov}. The Bose gas is also assumed to be dilute enough that any scatterings beyond two-body scatterings can be neglected. The condition for this is $n|a_s|^3 \ll 1$, where $n=N/V$ is the number of particles per volume, i.e. the average separation between particles is much greater that the s-wave scattering length \cite{Pitaevskii}. 

Our starting point is the same as the Hamiltonian used by Linder and Sudbø \cite{LS} to describe a weakly interacting, two-component BEC without SOC. This Hamiltonian was also used by Janssønn \cite{master} and the following derivations follow these references closely. In second quantization we describe the system in terms of bosonic field operators $\psi^{\alpha\dagger} (\boldsymbol{r}), \psi^\alpha (\boldsymbol{r})$ creating or annihilating bosons of particle species $\alpha$ at position $\boldsymbol{r}$. We have two bosonic species labeled $\alpha, \beta = \{\uparrow, \downarrow \}$, for pseudospin up and down, with masses $m^\alpha$. The Hamiltonian is
\begin{align}
    \begin{split}
    \label{eq:firstH}
        H = &\sum_\alpha \int d\boldsymbol{r} \psi^{\alpha\dagger} (\boldsymbol{r}) h^\alpha (\boldsymbol{r}) \psi^\alpha (\boldsymbol{r}) \\
        &+\frac{1}{2}\sum_{\alpha\beta} \int d\boldsymbol{r} d\boldsymbol{r}' \psi^{\alpha\dagger} (\boldsymbol{r}) \psi^{\beta\dagger} (\boldsymbol{r}') v^{\alpha\beta}(\abs{\boldsymbol{r}-\boldsymbol{r}'})\psi^{\beta} (\boldsymbol{r}') \psi^{\alpha} (\boldsymbol{r}).
    \end{split}
\end{align}
Here, $h^\alpha (\boldsymbol{r})$ is the single particle Hamiltonian, while $v^{\alpha\beta}(\abs{\boldsymbol{r}-\boldsymbol{r}'})$ represents the two-body scattering potential. The single particle Hamiltonian is given by
\begin{equation}
    h^\alpha (\boldsymbol{r}) = -\frac{\nabla^2}{2m^\alpha} -\mu^\alpha + V (\boldsymbol{r}), 
\end{equation}
where $\mu^\alpha$ is a species dependent chemical potential and  $V(\boldsymbol{r})$ represents the external potential generating the optical lattice. Hence, if $\boldsymbol{a}_n, n=1,\dots,d$ are the $d$ primitive vectors of the $d$- dimensional ($d$D) Bravais lattice we have
\begin{align}
    \begin{split}
    \label{eq:hlatticeperiodic}
        V(\boldsymbol{r} + c_1 \boldsymbol{a}_1 + \dots + c_d \boldsymbol{a}_d) &= V(\boldsymbol{r}) \Rightarrow \\
        h^\alpha(\boldsymbol{r} + c_1 \boldsymbol{a}_1 + \dots + c_d \boldsymbol{a}_d) &= h^\alpha(\boldsymbol{r}), \mbox{\qquad} c_n \in \mathbb{Z}.
    \end{split}
\end{align}
%%%%%%%%%%%%--------------------------------------------------------------%%%%%%%%%%%%%%%%%%%%%%%%%%
We will also assume $v^{\alpha\beta}(\abs{\boldsymbol{r}-\boldsymbol{r}'}) = v^{\beta\alpha}(\abs{\boldsymbol{r}-\boldsymbol{r}'})$, i.e. that the interspecies interaction only depends on the relative presence of particle species. The terms in \eqref{eq:firstH} are visualized by Feynman diagrams in figure 2.1 of \cite{master}. 

As done in \cite{master, LS} and discussed in \cite{bloch2008many} we assume we can expand the bosonic field operators using a basis of Wannier functions $w^\alpha(\boldsymbol{r}-\boldsymbol{r}_i)$ located at the lattice sites $\boldsymbol{r}_i$. This is done to obtain a lattice formulation of the Hamiltonian in terms of bosonic operators $b_i^{\alpha\dagger}, b_i^\alpha$ creating or annihilating bosons of particle species $\alpha$ at specific lattice sites $i$ . Inserting
\begin{equation}
    \psi^\alpha (\boldsymbol{r}) = \sum_i w^\alpha(\boldsymbol{r}-\boldsymbol{r}_i)b_i^\alpha
\end{equation}
in \eqref{eq:firstH} yields
\begin{align}
    \begin{split}
        H &= \sum_\alpha \int d\boldsymbol{r}\sum_{ij} w^{\alpha*}(\boldsymbol{r}-\boldsymbol{r}_i)b_i^{\alpha\dagger} h^\alpha (\boldsymbol{r}) w^\alpha (\boldsymbol{r}-\boldsymbol{r}_j)b_j^\alpha \\
        &+\frac{1}{2}\sum_{\alpha\beta}\int d\boldsymbol{r}d\boldsymbol{r}' \sum_{ijkl} w^{\alpha*}(\boldsymbol{r}-\boldsymbol{r}_i)b_i^{\alpha\dagger} w^{\beta*}(\boldsymbol{r}'-\boldsymbol{r}_j)b_j^{\beta\dagger} \\
        &\mbox{\qquad\qquad\qquad\qquad} \cdot v^{\alpha\beta}(\abs{\boldsymbol{r}-\boldsymbol{r}'}) w^{\beta}(\boldsymbol{r}'-\boldsymbol{r}_k)b_k^{\beta}w^{\alpha}(\boldsymbol{r}-\boldsymbol{r}_l)b_l^{\alpha}\\
        &= -\sum_\alpha \sum_{i\neq j } t_{ij}^\alpha b_i^{\alpha\dagger}b_j^\alpha + \sum_\alpha \sum_i T_i^\alpha b_i^{\alpha\dagger}b_i^\alpha \\
        &+\frac{1}{2}\sum_{\alpha\beta}\sum_{ijkl} U_{ijkl}^{\alpha\beta} b_i^{\alpha\dagger}b_j^{\beta\dagger}b_k^\beta b_l^\alpha,
    \end{split}
\end{align}
where the hopping parameter
\begin{equation}
    t_{ij}^\alpha = -\int d\boldsymbol{r} w^{\alpha*}(\boldsymbol{r}-\boldsymbol{r}_i) h^\alpha (\boldsymbol{r}) w^\alpha (\boldsymbol{r}-\boldsymbol{r}_j) 
\end{equation}
is an energy associated with particles hopping between lattice sites $i$ and $j$. The quantity
\begin{align}
    \begin{split}
        T_i^\alpha &= \int d\boldsymbol{r} w^{\alpha*}(\boldsymbol{r}-\boldsymbol{r}_i) h^\alpha (\boldsymbol{r}) w^\alpha (\boldsymbol{r}-\boldsymbol{r}_i) \\
        & \stackrel{(\ref{eq:hlatticeperiodic})}{=} \int d\boldsymbol{r} w^{\alpha*}(\boldsymbol{r}) h^\alpha (\boldsymbol{r}) w^\alpha (\boldsymbol{r}) \equiv T^\alpha
    \end{split}
\end{align}
is a species dependent energy offset at each lattice site \cite{LS}. The interaction parameters are
\begin{align}
\begin{split}
    U_{ijkl}^{\alpha\beta} &= \int d\boldsymbol{r}d\boldsymbol{r}'  w^{\alpha*}(\boldsymbol{r}-\boldsymbol{r}_i) w^{\beta*}(\boldsymbol{r}'-\boldsymbol{r}_j)\\
        &\mbox{\qquad\qquad} \cdot v^{\alpha\beta}(\abs{\boldsymbol{r}-\boldsymbol{r}'})w^{\beta}(\boldsymbol{r}'-\boldsymbol{r}_k)w^{\alpha}(\boldsymbol{r}-\boldsymbol{r}_l).
\end{split}
\end{align}
From now on, it is assumed that the lattice depth is sufficiently large to ensure neighboring Wannier functions have negligible overlap. In such a tight-binding limit, the Wannier functions decay exponentially away from the lattice sites \cite{bloch2008many}, and it is assumed that only nearest neighbor hopping and on-site interactions are relevant. We then have
\begin{equation}
    v^{\alpha\beta}(\abs{\boldsymbol{r}-\boldsymbol{r}'}) = \gamma^{\alpha\beta}\delta(\boldsymbol{r}-\boldsymbol{r}'),
\end{equation}
where \cite{LS}
\begin{equation}
    \gamma^{\alpha\beta} = \gamma^{\beta\alpha} = \frac{2\pi (m^\alpha + m^\beta)a_s^{\alpha\beta}}{m^\alpha m^\beta}.
\end{equation}
Hence, the particles are subjected to interactions only when they occupy the same lattice site. The interaction strength is proportional to the inter- and intraspecies s-wave scattering lengths $a_s^{\alpha\beta}$. The only relevant interaction parameters are $U_{iiii}^{\alpha\beta}$ that now become
\begin{align}
    \begin{split}
        U_{iiii}^{\alpha\beta} &= \int d\boldsymbol{r}d\boldsymbol{r}'  w^{\alpha*}(\boldsymbol{r}-\boldsymbol{r}_i) w^{\beta*}(\boldsymbol{r}'-\boldsymbol{r}_i)\\
        &\mbox{\qquad\qquad} \cdot \gamma^{\alpha\beta}\delta(\boldsymbol{r}-\boldsymbol{r}')w^{\beta}(\boldsymbol{r}'-\boldsymbol{r}_i)w^{\alpha}(\boldsymbol{r}-\boldsymbol{r}_i) \\
        &= \int d\boldsymbol{r} \gamma^{\alpha\beta} \abs{w^{\alpha}(\boldsymbol{r})}^2\abs{w^{\beta}(\boldsymbol{r})}^2 \equiv U^{\alpha\beta} = U^{\beta\alpha}.
    \end{split}
\end{align}
Additionally, it is assumed that the hopping parameter is the same for all nearest neighbor hoppings, i.e.
\begin{equation}
    t_{\langle i, j \rangle}^\alpha \equiv t^\alpha,
\end{equation}
where $\langle i, j \rangle$ denotes nearest neighbors. The final Bose-Hubbard Hamiltonian in real space is then
\begin{align}
    \begin{split}
    \label{eq:realspaceBoseHubbard}
        H &= -\sum_\alpha t^\alpha\sum_{\langle i, j \rangle}  b_i^{\alpha\dagger}b_j^\alpha + \sum_\alpha T^\alpha \sum_i  b_i^{\alpha\dagger}b_i^\alpha \\
        &+\frac{1}{2}\sum_{\alpha\beta} U^{\alpha\beta} \sum_{i}  b_i^{\alpha\dagger}b_i^{\beta\dagger}b_i^\beta b_i^\alpha.
    \end{split}
\end{align}
The parameters $t^\alpha$, $T^\alpha$ and $U^{\alpha\beta}$ will be assumed real. Also, we assume $t^\alpha$ and $U^{\alpha\beta}$ are positive, such that hopping it energetically favorable, and interactions are energetically unfavorable. Repulsive interactions are also a natural choice together with diluteness to ensure the Bose gas does not form a liquid or a solid during the cooling process \cite{abrikosov}. 

BEC is closely related to the momentum distribution of the particles. It will therefore be favorable to study the system in momentum space by performing a Fourier transform of the bosonic operators
\begin{equation}
\label{eq:FTbosonicop}
    b_i^\alpha = \frac{1}{\sqrt{N_s}}\sum_{\boldsymbol{k}} A_{\boldsymbol{k}}^\alpha e^{-i \boldsymbol{k}\cdot \boldsymbol{r}_i}. 
\end{equation}
Here, $N_s$ is the number of lattice sites and $A_{\boldsymbol{k}}^\alpha$ is a bosonic operator annihilating a boson of particle species $\alpha$ with momentum $\boldsymbol{k}$. Inserting \eqref{eq:FTbosonicop} into \eqref{eq:realspaceBoseHubbard} yields
\begin{align}
    \begin{split}
        H &= -\frac{1}{N_s} \sum_\alpha t^\alpha \sum_{\langle i, j \rangle } \sum_{\boldsymbol{k}\boldsymbol{k}'} A_{\boldsymbol{k}}^{\alpha\dagger} e^{i \boldsymbol{k}\cdot \boldsymbol{r}_i} A_{\boldsymbol{k}'}^\alpha e^{-i \boldsymbol{k}'\cdot \boldsymbol{r}_j} \\
        &+ \frac{1}{N_s} \sum_\alpha T^\alpha  \sum_i \sum_{\boldsymbol{k}\boldsymbol{k}'} A_{\boldsymbol{k}}^{\alpha\dagger} e^{i \boldsymbol{k}\cdot \boldsymbol{r}_i} A_{\boldsymbol{k}'}^\alpha e^{-i \boldsymbol{k}' \cdot \boldsymbol{r}_i} \\
        &+\frac{1}{2N_s^2} \sum_{\alpha\beta}U^{\alpha\beta}\sum_i \sum_{\boldsymbol{k}\boldsymbol{k}'\boldsymbol{p}\boldsymbol{p}'} A_{\boldsymbol{k}}^{\alpha\dagger}A_{\boldsymbol{k}'}^{\beta\dagger}A_{\boldsymbol{p}}^{\beta}A_{\boldsymbol{p}'}^{\alpha}e^{i(\boldsymbol{k}+\boldsymbol{k}'-\boldsymbol{p}-\boldsymbol{p}')\cdot \boldsymbol{r}_i}.
    \end{split}
\end{align}
Using 
\begin{equation}
\label{eq:deltasum}
    \frac{1}{N_s}\sum_i e^{i(\boldsymbol{k}-\boldsymbol{k}')\cdot \boldsymbol{r}_i} = \delta_{\boldsymbol{k}\boldsymbol{k}'}
\end{equation}
and
\begin{equation}
    e^{i\boldsymbol{k}\cdot \boldsymbol{r}_i}e^{-i\boldsymbol{k}' \cdot \boldsymbol{r}_j } = e^{-i\boldsymbol{k}'\cdot(\boldsymbol{r}_j -\boldsymbol{r}_i)}e^{i(\boldsymbol{k}-\boldsymbol{k}')\cdot\boldsymbol{r}_i} = e^{-i\boldsymbol{k}'\cdot \boldsymbol{\delta}_{ji}}e^{i(\boldsymbol{k}-\boldsymbol{k}')\cdot\boldsymbol{r}_i}
\end{equation}
allows for some simplifications. Applied to the hopping term we find
\begin{align}
    \begin{split}
        &\frac{1}{N_s} t^\alpha \sum_{\langle i, j \rangle } \sum_{\boldsymbol{k}'} e^{i \boldsymbol{k}\cdot \boldsymbol{r}_i} e^{-i \boldsymbol{k}'\cdot \boldsymbol{r}_j} = \sum_{\boldsymbol{k}'}\sum_{\boldsymbol{\delta}\in \boldsymbol{\delta}_{\langle i, j \rangle}} t^\alpha e^{-i\boldsymbol{k}'\cdot \boldsymbol{\delta}} \frac{1}{N_s} \sum_i e^{i(\boldsymbol{k}-\boldsymbol{k}')\cdot \boldsymbol{r}_i}\\
        &= \sum_{\boldsymbol{k}'}\sum_{\boldsymbol{\delta}\in \boldsymbol{\delta}_{\langle i, j \rangle}} t^\alpha e^{-i\boldsymbol{k}'\cdot \boldsymbol{\delta}} \delta_{\boldsymbol{k}\boldsymbol{k}'} = \sum_{\boldsymbol{\delta}\in \boldsymbol{\delta}_{\langle i, j \rangle}} t^\alpha e^{-i\boldsymbol{k}\cdot \boldsymbol{\delta}}.
    \end{split}
\end{align}
The nearest neighbor vectors are 
\begin{equation}
    \boldsymbol{\delta}_{\langle i, j \rangle} \equiv \{ \pm \boldsymbol{a}_1, \dots, \pm \boldsymbol{a}_d \}.
\end{equation}
Using these, we define
\begin{align}
\label{eq:generalek}
    \begin{split}
    \epsilon_{\boldsymbol{k}}^\alpha &\equiv - t^\alpha  \sum_{\boldsymbol{\delta}\in \boldsymbol{\delta}_{\langle i, j \rangle}} e^{-i\boldsymbol{k}\cdot \boldsymbol{\delta}} = -  t^\alpha \sum_{n=1}^d \left(e^{i\boldsymbol{k}\cdot \boldsymbol{a}_n} +e^{-i\boldsymbol{k}\cdot \boldsymbol{a}_n} \right) \\
    &= -2t^\alpha \sum_{n=1}^d \cos(\boldsymbol{k}\cdot \boldsymbol{a}_n).
    \end{split}
\end{align}
In total, we find the Hamiltonian 
\begin{align}
    \begin{split}
    \label{eq:boseHubbardnoSOC}
        H &= \sum_{\boldsymbol{k}}\sum_\alpha (\epsilon_{\boldsymbol{k}}^\alpha + T^\alpha) A_{\boldsymbol{k}}^{\alpha\dagger}A_{\boldsymbol{k}}^\alpha \\
        &+\frac{1}{2N_s}\sum_{\boldsymbol{k}\boldsymbol{k}'\boldsymbol{p}\boldsymbol{p}'}\sum_{\alpha\beta}U^{\alpha\beta} A_{\boldsymbol{k}}^{\alpha\dagger}A_{\boldsymbol{k}'}^{\beta\dagger}A_{\boldsymbol{p}}^{\beta}A_{\boldsymbol{p}'}^{\alpha} \delta_{\boldsymbol{k}+\boldsymbol{k}', \boldsymbol{p}+\boldsymbol{p}'}.
    \end{split}
\end{align}
In the next subchapter we discuss a synthetic SOC using the two particle species as pseudospin states, and how it can be modeled analytically and added to the above Hamiltonian. 

\section{Synthetic Spin-Orbit Coupling}
A 2D electron gas in the $xy$-plane subjected to an electric field in the $z$-direction, $\boldsymbol{E} = E \hat{\boldsymbol{z}}$, experiences a spin-orbit coupling
\begin{equation}
    H_{\textrm{SOC}} \propto \boldsymbol{\sigma} \cdot (\boldsymbol{E}\cross \boldsymbol{k})
\end{equation}
as used by Bychov and Rashba to explain spin-resonance in 2D semi-conductors \cite{RashbaSOC}. With $k_z = 0$ in 2D, this is
\begin{equation}
    H_{\textrm{SOC}} = \lambda_R (\sigma_x k_y -\sigma_y k_x),
\end{equation}
where $\lambda_R$ is the Rashba SOC strength. Dresselhaus also proposed a coupling of higher order in momentum that can be represented as  \cite{DresselhausSOC}
\begin{equation}
    H_{\textrm{SOC}} = \lambda_D (\sigma_x k_x -\sigma_y k_y)
\end{equation}
in 2D. 

As mentioned, SOC is derived from the Dirac equation and is therefore a relativistic effect \cite{bransdenQM}. Hence its effects are only significant in electron systems when the electrons have relativistic speed or are subjected to strong electric fields. The latter is the case for electrons in numerous condensed matter systems. However, in condensed matter systems the parameters are largely constrained by the properties of the material. The synthetic SOC introduced to cold atom systems can however be controlled externally, and thus provides a platform to study the effects of SOC in greater detail experimentally. The first realization of SOC in neutral bosonic atoms engineered a 1D SOC that displayed an equal combination of Rashba and Dresselhaus SOC \cite{lin2011expSOC}. A highly tunable version was later reported in 2015 \cite{spielmantunable}. Many proposals exist for ways to generalize these methods to obtain higher dimensional SOC and arbitrary linear combinations of Rashba and Dresselhaus SOC \cite{galitskispielmanSOCrev, wang2DRashba, spielman2016rashba, zhaiSWRashbaRev}. A tunable 2D SOC was achieved for bosons in 2016 by Wu et al. \cite{wu2016realization2DSOC, generalSOC}.

The most widely used method of introducing a synthetic SOC to a system of cold neutral atoms employs Raman transitions. Raman transitions are transitions between two atomic states via an intermediate state induced by absorption and emission of two photons. Versions of this were used in \cite{lin2011expSOC, spielmantunable, wu2016realization2DSOC} among others. Though we will focus on pure Rashba SOC in 2D, we will below give a short and simplified introduction to the experimental method proposed in \cite{Higbie} and used in \cite{lin2011expSOC} to produce an equal combination of Rashba and Dresselhaus SOC affecting one dimension. As was stated, many of the proposals to create pure 2D Rashba SOC are generalizations of this procedure.

Let $\ket{a}$ and $\ket{b}$ represent two states of the atoms of approximately equal energy, i.e. two hyperfine states. These will be labeled pseudospin up and pseudospin down, and represents the two components of the system. Experimentalists can ensure that the occupation numbers of other states are negligible. The intermediate excited state is labeled $\ket{e}$ and the energy difference of the states $\ket{a}$ and $\ket{b}$ is $\omega_0$. The illustration in figure \ref{fig:Raman} accompanies the following description of the Raman transition.

\begin{figure}
    \centering
    \includegraphics[width = 0.5\linewidth]{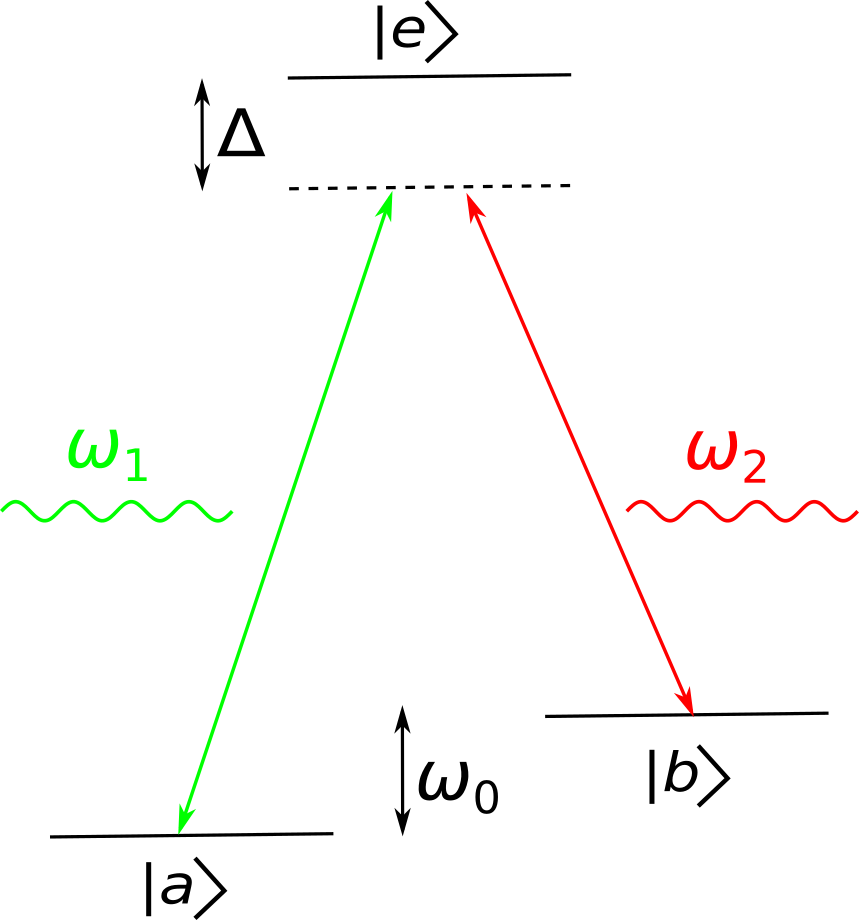}
    \caption{A Raman transition between hyperfine states $\ket{a}$ and $\ket{b}$ via an excited state $\ket{e}$ induced by lasers with frequency $\omega_1$ and $\omega_2$. Figure adapted from \cite{Higbie, galitskispielmanSOCrev}.}
    \label{fig:Raman}
\end{figure}

A laser with frequency $\omega_1$ detuned $\Delta$ from the energy difference of $\ket{e}$ and $\ket{a}$ is introduced along with a laser with frequency $\omega_2$ detuned $\Delta$ from the relative energy of $\ket{e}$ and $\ket{b}$. These lasers induce transitions between the hyperfine states via the intermediate state $\ket{e}$ by absorption of a photon from one laser, and stimulated emission of a photon with the same frequency as the other laser. Since the frequency experienced by an atom depends on its velocity through the Doppler effect, the transition rates will depend on the momenta of the atoms. Thus, a momentum dependent transition between two pseudospin states has been achieved, emulating the SOC experienced by spin-$1/2$ particles. 

From now on, this thesis is concerned with modeling a pure Rashba SOC in 2D due to its numerous applications in condensed matter systems like the aforementioned quantum spin-Hall effect and topological insulators as discussed in \cite{manchon2015newRashbaApp}.  The starting point is the Rashba SOC Hamiltonian
\begin{equation}
\label{eq:RashbaSOC}
    H_{\textrm{SOC}} = \lambda_R (\sigma_x k_y -\sigma_y k_x).
\end{equation}
A heuristic discretization of the above Hamiltonian to a 2D Bravais lattice was performed by Solli \cite{Solli}, with corrections provided by Janssønn \cite{master}, based on work by Sjømark \cite{smark} in 1D. The same will be presented here, with minor adjustments due to some typos in \cite{master}. The end result will be the same that was found by Thingstad \cite{Thingstad} using an alternate method, suggesting the heuristic approach is valid. The goal is to write $H_{\textrm{SOC}}$ on a form which can be incorporated in the Bose-Hubbard Hamiltonian \eqref{eq:boseHubbardnoSOC}

In terms of the lattice operators
\begin{equation}
    b_i = \begin{pmatrix}
    b_i^\uparrow \\ b_i^\downarrow
    \end{pmatrix},
\end{equation}
the component $k_{\boldsymbol{a}_n} = \boldsymbol{k}\cdot \hat{\boldsymbol{a}}_n$ of the momentum along the direction $\hat{\boldsymbol{a}}_n = \boldsymbol{a}_n/|\boldsymbol{a}_n|$ of the primitive lattice vector $\boldsymbol{a}_n$, is discretized as
\begin{align}
    \begin{split}
        k_{\boldsymbol{a}_n} &= -i \sum_i (b_i^\dagger b_{i+n} - b_i^\dagger b_{i-n}) = -i \sum_i (b_i^\dagger b_{i+n} - b_{i+n}^\dagger b_{i}).
    \end{split}
\end{align}
Periodic boundary conditions were used when shifting the summation variable in the second term and the indices $i\pm n$ indicate the operators create or annihilate bosons at lattice sites $\boldsymbol{r}_i \pm \boldsymbol{a}_n$. Then,
\begin{align}
    \begin{split}
        k_x = \sum_n k_{\boldsymbol{a}_n} \hat{\boldsymbol{a}}_n \cdot \hat{\boldsymbol{x}} = -i\sum_i\sum_n (b_i^\dagger b_{i+n} - b_{i+n}^\dagger b_{i}) \hat{\boldsymbol{a}}_n \cdot \hat{\boldsymbol{x}}, \\
        k_y = \sum_n k_{\boldsymbol{a}_n} \hat{\boldsymbol{a}}_n \cdot \hat{\boldsymbol{y}} = -i\sum_i\sum_n (b_i^\dagger b_{i+n} - b_{i+n}^\dagger b_{i}) \hat{\boldsymbol{a}}_n \cdot \hat{\boldsymbol{y}}.
    \end{split}
\end{align}
We insert this in \eqref{eq:RashbaSOC} and heuristically move the Pauli matrices inside the operator products to produce scalars. 
\begin{align}
    \begin{split}
        H_{\textrm{SOC}} &= i\lambda_R \sum_{\alpha\beta}\sum_i\sum_n \Big(b_i^{\alpha\dagger}\big(-\sigma_x^{\alpha\beta} (\hat{\boldsymbol{a}}_n \cdot \hat{\boldsymbol{y}}) + \sigma_y^{\alpha\beta} (\hat{\boldsymbol{a}}_n \cdot \hat{\boldsymbol{x}})\big)b_{i+n}^\beta \\
        &\mbox{\qquad\qquad\qquad\qquad} - b_{i+n}^{\alpha\dagger}\big(-\sigma_x^{\alpha\beta} (\hat{\boldsymbol{a}}_n \cdot \hat{\boldsymbol{y}}) + \sigma_y^{\alpha\beta} (\hat{\boldsymbol{a}}_n \cdot \hat{\boldsymbol{x}})\big)b_{i}^\beta \Big) \\
        &=i\lambda_R \sum_{\alpha\beta}\sum_i\sum_n \Big(b_i^{\alpha\dagger}\big(-\sigma_x^{\alpha\beta} (\hat{\boldsymbol{a}}_n \cdot \hat{\boldsymbol{y}}) + \sigma_y^{\alpha\beta} (\hat{\boldsymbol{a}}_n \cdot \hat{\boldsymbol{x}})\big)b_{i+n}^\beta \\
        &\mbox{\qquad\qquad\qquad\qquad} - b_{i+n}^{\beta\dagger}\big(-\sigma_x^{\beta\alpha} (\hat{\boldsymbol{a}}_n \cdot \hat{\boldsymbol{y}}) + \sigma_y^{\beta\alpha} (\hat{\boldsymbol{a}}_n \cdot \hat{\boldsymbol{x}})\big)b_{i}^\beta \Big) \\
        &=i\lambda_R \sum_{\alpha\beta}\sum_i\sum_n \Big(b_i^{\alpha\dagger}\big(-\sigma_x^{\alpha\beta} (\hat{\boldsymbol{a}}_n \cdot \hat{\boldsymbol{y}}) + \sigma_y^{\alpha\beta} (\hat{\boldsymbol{a}}_n \cdot \hat{\boldsymbol{x}})\big)b_{i+n}^\beta \\
        &\mbox{\qquad\qquad\qquad\qquad} - \textrm{H.c.} \Big).
    \end{split}
\end{align}
The summation indices $\alpha$ and $\beta$ were interchanged in the second term, and the Hermiticity of the Pauli matrices allowed for the identification of the second term as the Hermitian conjugate (H.c.) of the first. Next, \eqref{eq:FTbosonicop} together with \eqref{eq:deltasum} is applied to transform to momentum space.
\begin{align}
    \begin{split}
        H_{\textrm{SOC}} &= i\lambda_R \sum_{\alpha\beta}\sum_i\sum_n \Bigg[ \left( \frac{1}{\sqrt{N_s}} \sum_{\boldsymbol{k}}A_{\boldsymbol{k}}^{\alpha\dagger}e^{i\boldsymbol{k}\cdot \boldsymbol{r}_i} \right) \big(-\sigma_x^{\alpha\beta} (\hat{\boldsymbol{a}}_n \cdot \hat{\boldsymbol{y}}) \\
        &\mbox{\qquad} + \sigma_y^{\alpha\beta} (\hat{\boldsymbol{a}}_n \cdot \hat{\boldsymbol{x}})\big) \left(  \frac{1}{\sqrt{N_s}} \sum_{\boldsymbol{k}'}A_{\boldsymbol{k}'}^{\beta}e^{-i\boldsymbol{k}'\cdot (\boldsymbol{r}_i+\boldsymbol{a}_n)} \right) - \textrm{H.c.} \Bigg] \\
        &= i\lambda_R \sum_{\boldsymbol{k}}\sum_{\alpha\beta}\sum_n \Big( A_{\boldsymbol{k}}^{\alpha\dagger} \big(-\sigma_x^{\alpha\beta} (\hat{\boldsymbol{a}}_n \cdot \hat{\boldsymbol{y}}) + \sigma_y^{\alpha\beta} (\hat{\boldsymbol{a}}_n \cdot \hat{\boldsymbol{x}})\big)A_{\boldsymbol{k}}^{\beta} e^{-i \boldsymbol{k} \cdot \boldsymbol{a}_n} \\
        &\mbox{\qquad\qquad\qquad\qquad}-\textrm{H.c.} \Big).
    \end{split}
\end{align}
Performing the sum over pseudospin indices yields
\begin{align}
    \begin{split}
        H_{\textrm{SOC}} &= \lambda_R \sum_{\boldsymbol{k}}\sum_n \Big( A_{\boldsymbol{k}}^{\uparrow\dagger}(-i \hat{\boldsymbol{a}}_n \cdot \hat{\boldsymbol{y}} + \hat{\boldsymbol{a}}_n \cdot \hat{\boldsymbol{x}} ) (e^{-i \boldsymbol{k}\cdot \boldsymbol{a}_n}- e^{i \boldsymbol{k}\cdot \boldsymbol{a}_n}) A_{\boldsymbol{k}}^{\downarrow} \\
        & \mbox{\qquad\qquad\qquad}+ A_{\boldsymbol{k}}^{\downarrow\dagger}(-i \hat{\boldsymbol{a}}_n \cdot \hat{\boldsymbol{y}} - \hat{\boldsymbol{a}}_n \cdot \hat{\boldsymbol{x}} ) (e^{-i \boldsymbol{k}\cdot \boldsymbol{a}_n}- e^{i \boldsymbol{k}\cdot \boldsymbol{a}_n}) A_{\boldsymbol{k}}^{\uparrow} \Big) \\
        &= \sum_{\boldsymbol{k}}\left[ A_{\boldsymbol{k}}^{\uparrow\dagger}\left( -2\lambda_R \sum_n (\hat{\boldsymbol{a}}_n \cdot \hat{\boldsymbol{y}} + i \hat{\boldsymbol{a}}_n \cdot \hat{\boldsymbol{x}} )\sin(\boldsymbol{k} \cdot \boldsymbol{a}_n) \right) A_{\boldsymbol{k}}^{\downarrow} + \textrm{H.c.} \right] \\
        &=   \sum_{\boldsymbol{k}}\big( A_{\boldsymbol{k}}^{\uparrow\dagger}s_{\boldsymbol{k}} A_{\boldsymbol{k}}^{\downarrow} + \textrm{H.c.} \big),
    \end{split}
\end{align}
where we defined the Rashba SOC term
\begin{equation}
    \label{eq:Bravaissk}
    s_{\boldsymbol{k}} = -2\lambda_R \sum_n (\hat{\boldsymbol{a}}_n \cdot \hat{\boldsymbol{y}} + i \hat{\boldsymbol{a}}_n \cdot \hat{\boldsymbol{x}} )\sin(\boldsymbol{k} \cdot \boldsymbol{a}_n) .
\end{equation}
Notice that it is momentum dependent and is involved in spin-flip processes as expected. The full Bose-Hubbard Hamiltonian with SOC is now
\begin{align}
    \begin{split}
    \label{eq:FullH}
        H &= \sum_{\boldsymbol{k}}\sum_\alpha (\epsilon_{\boldsymbol{k}}^\alpha + T^\alpha) A_{\boldsymbol{k}}^{\alpha\dagger}A_{\boldsymbol{k}}^\alpha + \sum_{\boldsymbol{k}}\big( A_{\boldsymbol{k}}^{\uparrow\dagger}s_{\boldsymbol{k}} A_{\boldsymbol{k}}^{\downarrow} + \textrm{H.c.} \big)\\
        &+\frac{1}{2N_s}\sum_{\boldsymbol{k}\boldsymbol{k}'\boldsymbol{p}\boldsymbol{p}'}\sum_{\alpha\beta}U^{\alpha\beta} A_{\boldsymbol{k}}^{\alpha\dagger}A_{\boldsymbol{k}'}^{\beta\dagger}A_{\boldsymbol{p}}^{\beta}A_{\boldsymbol{p}'}^{\alpha} \delta_{\boldsymbol{k}+\boldsymbol{k}', \boldsymbol{p}+\boldsymbol{p}'} \\
        &= \sum_{\boldsymbol{k}}\sum_{\alpha\beta}\eta_{\boldsymbol{k}}^{\alpha\beta}A_{\boldsymbol{k}}^{\alpha\dagger}A_{\boldsymbol{k}}^\beta   +\frac{1}{2N_s}\sum_{\boldsymbol{k}\boldsymbol{k}'\boldsymbol{p}\boldsymbol{p}'}\sum_{\alpha\beta}U^{\alpha\beta} A_{\boldsymbol{k}}^{\alpha\dagger}A_{\boldsymbol{k}'}^{\beta\dagger}A_{\boldsymbol{p}}^{\beta}A_{\boldsymbol{p}'}^{\alpha} \delta_{\boldsymbol{k}+\boldsymbol{k}', \boldsymbol{p}+\boldsymbol{p}'},
    \end{split}
\end{align}
where we introduced the matrix
\begin{gather}
    \eta_{\boldsymbol{k}} = \begin{pmatrix} \epsilon_{\boldsymbol{k}}^{\uparrow} +T^{\uparrow} & s_{\boldsymbol{k}} \\ s_{\boldsymbol{k}}^* &  \epsilon_{\boldsymbol{k}}^{\downarrow} +T^{\downarrow} \end{pmatrix}.
\end{gather}

\section{Superfluidity}

Superfluids are fluids that can flow without dissipating any energy. When Landau \cite{Landau} first provided a theoretical understanding of the superfluidity found experimentally by Kapitza \cite{Kapitza} and Allen and Jones \cite{Allen} for liquid helium at sufficiently low temperature, he proposed one can view the system as a mixture of two fluids. One normal fluid that does experience friction, and one superfluid component that can support frictionless flow. Imagine the sample is placed in a container initially at rest. If one rotates the container, the normal fluid part will follow the walls of the container, while the superfluid part remains stationary \cite{Landau}.

Landau's criterion for superfluidity is derived using Galilean invariance in \cite{pitaevskiisuperfluid}. This will be presented, and the consequence of SOC breaking Galilean invariance will then be discussed afterwards. We consider a fluid inside a cylindrical container that is in motion relative to the container. In the reference frame $K$ where the fluid is at rest we allow for elementary excitations $\Omega(\boldsymbol{k})$ away from the ground state energy $E_0$. The formation of such excitations is the dissipative process under consideration. The reference frame $K'$ in which the container is at rest moves with velocity $-\boldsymbol{v}$ relative to $K$. Performing a Galilean transformation of the total energy, $E = E_0 + \Omega(\boldsymbol{k})$, yields
\begin{equation}
    E' = E_0 + \Omega(\boldsymbol{k}) + \boldsymbol{k} \cdot \boldsymbol{v} + \frac{1}{2}mv^2,
\end{equation}
where $m$ is the total mass of the fluid. It is clear that $\Omega(\boldsymbol{k}) + \boldsymbol{k} \cdot \boldsymbol{v}$ is the change in energy due to the presence of the excitation with momentum $\boldsymbol{k}$. Dissipation occurs if creation of the excitation is energetically favorable, i.e. if
\begin{equation}
    \Omega(\boldsymbol{k}) + \boldsymbol{k} \cdot \boldsymbol{v} < 0.
\end{equation}
This condition becomes $v > \Omega(\boldsymbol{k})/k $, where $v = |\boldsymbol{v}|$ and $k = \abs{\boldsymbol{k}}$. When this is satisfied the fluid will transfer energy to the container, and kinetic energy is lost to heat. The minimal value of such a velocity is
\begin{equation}
\label{eq:Landaucrit}
    v_c = \min_{\boldsymbol{k}} \frac{\Omega(\boldsymbol{k})}{k}.
\end{equation}
This is called the critical superfluid velocity, and the minimum is found by considering all values of $\boldsymbol{k}$. Landau's criterion for superfluidity is
\begin{equation}
    v < v_c,
\end{equation}
and if satisfied, elementary excitations will not lead to a reduction in energy, meaning the fluid can flow without friction and displays superfluid behavior. Superfluidity and BEC are closely related, but not equivalent \cite{pitaevskiisuperfluid}. For instance, an ideal Bose gas in 3D displays BEC below a critical temperature with dispersion $\Omega(\boldsymbol{k}) \sim k^2$, meaning $v_c = 0$ and no superfluidity. Meanwhile, we will see that the excitation spectrum of a weakly interacting Bose gas is linear close to its minimum. For such a phonon spectrum, $\Omega(\boldsymbol{k}) = ck$, the critical superfluid velocity corresponds to the speed of sound, $v_c = c$.

\subsection{Two Kinds of Critical Superfluid Velocity} %Due to Lack of Galilean Invariance
Synthetic SOC introduced to a BEC will break the Galilean invariance of the system. Theoretical consequences are discussed in \cite{zhaiSWRashbaRev, 36, 37}, and experimental observation was made in \cite{HamnerSOCGalileanExp}. The main consequence is that there are two kinds of critical superfluid velocity in our system. In a system with Galilean invariance, the case (a) where a superfluid is flowing through a stationary container and the case (b) where a container is dragged against a stationary superfluid are equivalent. These two cases are connected by a Galilean transformation, and since our system is not Galilean invariant they are no longer equivalent. Thus the critical flowing velocity of case (a) is different from the critical dragging velocity of case (b) \cite{37}. These cases are illustrated in figure \ref{fig:TwoVel}. Also note that case (b) is equivalent to case (c), considering an impurity moving in a superfluid at rest.

\begin{figure}
    \centering
    \includegraphics[width = 0.9\linewidth]{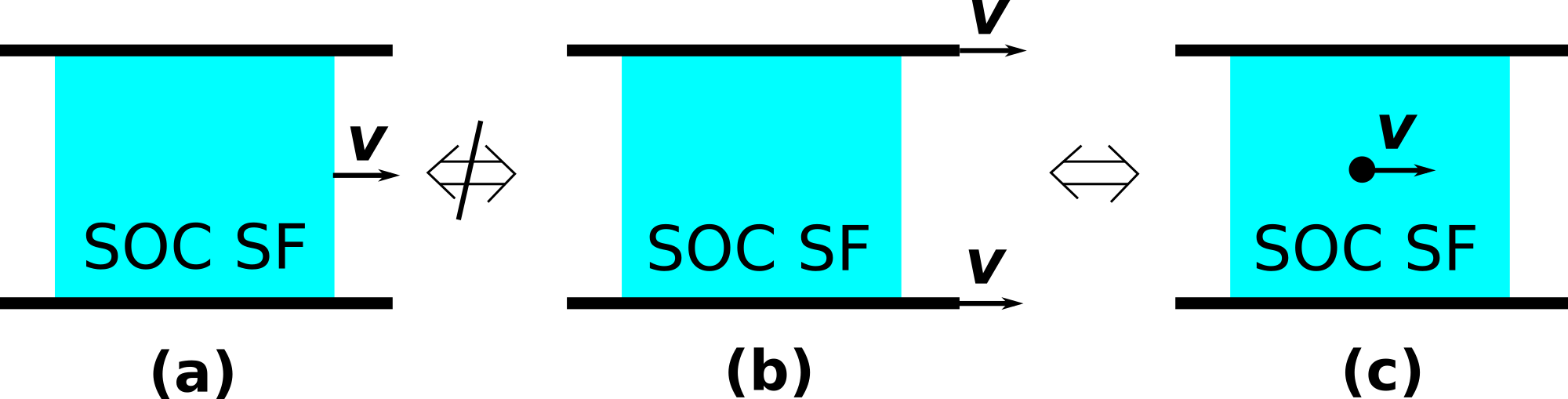}
    \caption{An illustration of possible superfluid (SF) flows in the lab frame. A SOC superfluid moving against a stationary container is shown in (a). Due to lack of Galilean invariance, this is not equivalent to case (b), where the container is dragged, and the SOC superfluid is at rest. Case (b) is however equivalent to case (c) showing an impurity moving through the stationary SOC superfluid. Figure adapted from \cite{37}.}
    \label{fig:TwoVel}
\end{figure}

These two kinds of critical velocities are named $v_\textrm{flow}$ for case (a) and $v_\textrm{drag}$ for case (b). It is argued in \cite{zhaiSWRashbaRev, 36, 37} that because the condensate is at rest in case (b) Landau's criterion is still valid even though it was derived using Galilean invariance \cite{Landau}. Additionally, \cite{37} gives an alternate argument based on conservation of energy and momentum that does not rely on Galilean invariance. Imagine the case of a static SOC superfluid with an impurity. The critical superfluid velocity is a measure of the maximum speed with which the impurity can move without dissipation. Consider an excitation formed in the static superfluid by the moving impurity. Conservation of momentum and energy reads
\begin{align}
    \begin{split}
        m_0 \boldsymbol{v}_i &= m_0 \boldsymbol{v}_f + \boldsymbol{k}, \\
        \frac{m_0 \boldsymbol{v}_i^2}{2} &= \frac{m_0 \boldsymbol{v}_f^2}{2} + \Omega_0 (\boldsymbol{k}).
    \end{split}
\end{align}
Here, $m_0$ is the mass of the impurity, $\boldsymbol{v}_i$ its initial velocity and $\boldsymbol{v}_f$ its velocity after the formation of the excitation $\Omega_0 (\boldsymbol{k})$ with momentum $\boldsymbol{k}$. The subscript indicates that the excitation energy is calculated for a condensate at rest. Once again, the question is if such a formation of an excitation is possible. Inserting the momentum conservation into the energy conservation yields
\begin{equation}
\label{eq:vik2m0}
    v_i = \frac{\Omega_0(\boldsymbol{k})}{k} + \frac{k}{2m_0}.
\end{equation}
The minimal velocity capable of satisfying this is the critical dragging velocity
\begin{equation}
\label{eq:vcdrag}
    v_\textrm{drag} = \min_{\boldsymbol{k}}\frac{\Omega_0 (\boldsymbol{k})}{k},
\end{equation}
which is the same as \eqref{eq:Landaucrit} given that the superfluid is at rest. When $v_i < v_\textrm{drag}$ the formation of an excitation is not energetically favorable, and the impurity moves without loosing energy. 

% From \cite{37}: `` With the Galilean transformation, Landau found that the excitation of this flowing superfluid is related to the excitation of a motionless superfluid as''
% \begin{equation}
%     \Omega_v(\boldsymbol{k}) = \Omega_0(\boldsymbol{k}) + \boldsymbol{k}\cdot\boldsymbol{v},
% \end{equation}
% where $\Omega_0(\boldsymbol{k})$ ``is the excitation for a stationary superfluid and [$\boldsymbol{k}$] is the momentum of the excitation. For phonon excitation [$\Omega_0(\boldsymbol{k})=c|\boldsymbol{k}|$], the excitation [$\Omega_v(\boldsymbol{k})$] can be negative only when $|\boldsymbol{v}|>c$. Therefore, the speed of sound $c$ is the critical velocity beyond which the flowing superfluid loses its superfluidity and suffers viscosity. We switch to a different reference frame, where the superfluid is at rest while the tube is moving. It is apparent to many that these two reference frames are equivalent so that the superfluid will be dragged along only when the tube speed exceeds the speed of sound $c$. However, this equivalence is based on that the superfluid is invariant under the Galilean transformation. As SOC breaks the Galilean invariance of the system [24], we find that these two reference frames are no longer equivalent'' \cite{37}. 

Without knowing what transformation our system is invariant under, we would have to find the spectrum of a moving condensate directly. An example of such a calculation for a Rashba SOC continuum BEC is found in \cite{37}. Nevertheless, it is noted in \cite{37} that the dragging velocity is much easier to probe experimentally than the flowing velocity. Our approach is also best suited to find the critical dragging velocity, and so we will focus solely on this kind of critical superfluid velocity. Therefore, the critical dragging velocity will from now on be referred to as the critical superfluid velocity, $v_c$.

\textit{Note: As mentioned in the preface to the arXiv version, the following is highly questionable. Equations \eqref{eq:vs} and \eqref{eq:vc} are however valid methods to find the sound velocity of phonon-like excitations.}

We will however study condensates at nonzero momenta as well, in which case the condensate is not at rest. The excitation spectra we find are then for moving condensates. In the case of condensation at zero momentum, we have argued that the critical superfluid velocity corresponds to the slope of an excitation spectrum which is linear close to its minimum. We propose the same is true if the minimum occurs at a nonzero condensate momentum, $\boldsymbol{k}_0$. The important point to remember is that the value obtained is frame dependent, and thus only valid in the lab frame where the optical lattice is at rest. The critical superfluid velocity obtained in such cases will be calculated using \cite{LS, Toniolo} 
\begin{equation}
\label{eq:vs}
    \boldsymbol{v}_c =\left. \pdv{\Omega(\boldsymbol{k})}{\boldsymbol{k}}\right\rvert_{\boldsymbol{k} \to \boldsymbol{k}_0}.
\end{equation}
In isotropic cases, the $x$ and $y$ components will be equal, and we will give the result as a scalar, $v_c$, equal to the components. Alternatively one can use the discretized version 
\begin{equation}
\label{eq:vc}
    v_c = \lim_{\boldsymbol{q}\to \boldsymbol{0}}\frac{\Omega(\boldsymbol{k}_0+\boldsymbol{q})}{\abs{\boldsymbol{q}}},
\end{equation}
assuming $\Omega(\boldsymbol{k}_0) = 0$.

\section{Non-Interacting Spin-Orbit Coupled Bose Gas} \label{sec:SOCU0}
In preparation for treating the weakly interacting, synthetically SOC Bose gas we first investigate its behavior if the interactions are set to zero. The Hamiltonian \eqref{eq:FullH} then reduces to
\begin{equation}
    H = \sum_{\boldsymbol{k}}\sum_{\alpha\beta}\eta_{\boldsymbol{k}}^{\alpha\beta}A_{\boldsymbol{k}}^{\alpha\dagger}A_{\boldsymbol{k}}^\beta.
\end{equation}
Here,
\begin{gather}
    \eta_{\boldsymbol{k}} = \begin{pmatrix} \epsilon_{\boldsymbol{k}}^{\uparrow} +T^{\uparrow} & s_{\boldsymbol{k}} \\ s_{\boldsymbol{k}}^* &  \epsilon_{\boldsymbol{k}}^{\downarrow} +T^{\downarrow} \end{pmatrix},
\end{gather}
where
\begin{equation}
    \label{eq:epsk}
    \epsilon_{\boldsymbol{k}}^\alpha \stackrel{(\ref{eq:generalek})}{=} -2t^{\alpha}\left(\cos(k_x a) + \cos(k_y a)\right),
\end{equation}
and
\begin{equation}
    \label{eq:sk}
    s_{\boldsymbol{k}} \stackrel{(\ref{eq:Bravaissk})}{=} -2\lambda_R \left( \sin(k_y a) + i \sin(k_x a)    \right),
\end{equation}
for a 2D square lattice with lattice constant $a$. Defining the operator vector $\boldsymbol{A}_{\boldsymbol{k}} = (A_{\boldsymbol{k}}^{\uparrow}, A_{\boldsymbol{k}}^{\downarrow})^T$ we can write
\begin{equation}
    H = \sum_{\boldsymbol{k}}\boldsymbol{A}_{\boldsymbol{k}}^\dagger \eta_{\boldsymbol{k}} \boldsymbol{A}_{\boldsymbol{k}}.
\end{equation}
We now attempt to diagonalize the problem using a unitary transformation. One should check that such a transformation is in fact a canonical transformation, i.e. that the new operators one defines are bosonic. Our goal is to find a unitary matrix $P_{\boldsymbol{k}}$ such that 
\begin{equation}
    \boldsymbol{A}_{\boldsymbol{k}}^\dagger \eta_{\boldsymbol{k}} \boldsymbol{A}_{\boldsymbol{k}} = \boldsymbol{A}_{\boldsymbol{k}}^\dagger P_{\boldsymbol{k}} P_{\boldsymbol{k}}^\dagger \eta_{\boldsymbol{k}} P_{\boldsymbol{k}} P_{\boldsymbol{k}}^\dagger \boldsymbol{A}_{\boldsymbol{k}} = \boldsymbol{C}_{\boldsymbol{k}}^\dagger \lambda_{\boldsymbol{k}} \boldsymbol{C}_{\boldsymbol{k}}.
\end{equation}
We defined the new operators $\boldsymbol{C}_{\boldsymbol{k}} = (C_{\boldsymbol{k}}^+, C_{\boldsymbol{k}}^-)^T = P_{\boldsymbol{k}}^\dagger \boldsymbol{A}_{\boldsymbol{k}}$. If the transformation matrix $P_{\boldsymbol{k}}$ contains the eigenvectors of $\eta_{\boldsymbol{k}}$ as its columns, then the matrix $\lambda_{\boldsymbol{k}}$ is diagonal, with the eigenvalues of $\eta_{\boldsymbol{k}}$ on its diagonal,
\begin{equation}
    \lambda_{\boldsymbol{k}} = 
    \begin{pmatrix} 
    \lambda_{\boldsymbol{k}}^+ & 0 \\ 
    0 & \lambda_{\boldsymbol{k}}^- 
    \end{pmatrix}.
\end{equation}
The eigenvalues of $\eta_{\boldsymbol{k}}$ are found to be
\begin{align}
    \begin{split}
        \lambda_{\boldsymbol{k}}^\pm = \frac{1}{2}\Big( &(\epsilon_{\boldsymbol{k}}^{\uparrow} + \epsilon_{\boldsymbol{k}}^{\downarrow}) + (T^\uparrow + T^\downarrow) \\
        & \pm \sqrt{4\abs{s_{\boldsymbol{k}}}^2+\big((\epsilon_{\boldsymbol{k}}^{\uparrow} - \epsilon_{\boldsymbol{k}}^{\downarrow}) - (T^\uparrow - T^\downarrow)\big)^2} \Big).
    \end{split}
\end{align}
At $\boldsymbol{k} = \boldsymbol{0}$ there is a Zeeman splitting
\begin{equation}
    \lambda_{\boldsymbol{0}}^+-\lambda_{\boldsymbol{0}}^- = \abs{(\epsilon_{\boldsymbol{0}}^{\uparrow} - \epsilon_{\boldsymbol{0}}^{\downarrow}) - (T^\uparrow -T^\downarrow)}
\end{equation}
due to differences in hopping parameters $t^\uparrow$ and $t^\downarrow$ and differences in the energy offsets $T^\downarrow$ and $T^\uparrow$. We choose to assume $t^\uparrow= t^\downarrow = t$ and let the energy offsets parametrize the Zeeman splitting. Defining $T = (T^\uparrow + T^\downarrow)/2$ and $\Delta T = T^\uparrow - T^\downarrow$ the energies are
\begin{align}
    \begin{split}
        \lambda_{\boldsymbol{k}}^\pm = \epsilon_{\boldsymbol{k}} + T \pm \sqrt{\abs{s_{\boldsymbol{k}}}^2+ \left(\frac{\Delta T}{2}\right)^2} .
    \end{split}
\end{align}
These are plotted for increasing $\Delta T$ in figure \ref{fig:SOCU0DT}. The minima of $\lambda_{\boldsymbol{k}}^-$ are in general four-fold degenerate, however, as one can see, the minima at nonzero $\boldsymbol{k}$ converge to $\boldsymbol{k} = \boldsymbol{0}$ as the Zeeman splitting $\Delta T$ is increased. These one-fold and four-fold cases are illustrated in figure \ref{fig:1fold4fold} for the 2D square lattice in momentum space.
\begin{figure}
    \centering
    \includegraphics[width = 0.9\linewidth]{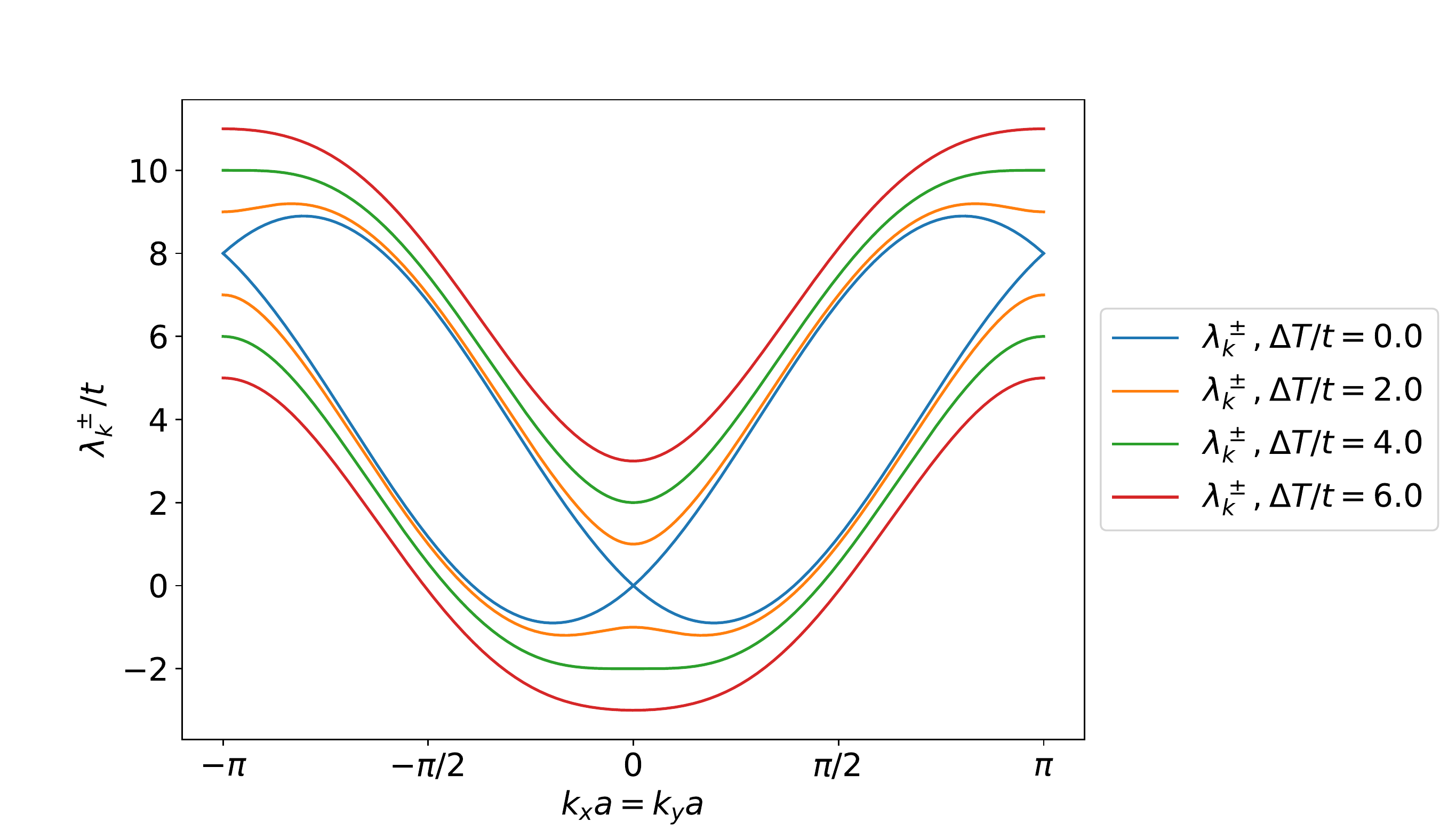}
    \caption{The energies $\lambda_{\boldsymbol{k}}^\pm$ for several $\Delta T$. The parameters in the plot are $\lambda_R/t = 1.0$ and $T/t = 4.0$.}
    \label{fig:SOCU0DT}
\end{figure}

% \begin{figure}
%     \centering
%     \includegraphics[width = 0.9\linewidth]{}
%     \caption{An illustration of a minimum at $\boldsymbol{k} = \boldsymbol{k}_{00} = \boldsymbol{0}$ (a) compared to the SOC induced four-fold degenerate minima $\boldsymbol{k} = \boldsymbol{k}_{0i}$ (b). The black points represent lattice sites, the red points represent the minima. How far the $\boldsymbol{k}_{0i}$ are placed from zero momentum depends on the Zeeman splitting and the strength of the SOC.}
%     \label{fig:1fold4fold}
% \end{figure}

\begin{figure}
    \centering
    \begin{subfigure}{.45\textwidth}
      \includegraphics[width=0.9\linewidth]{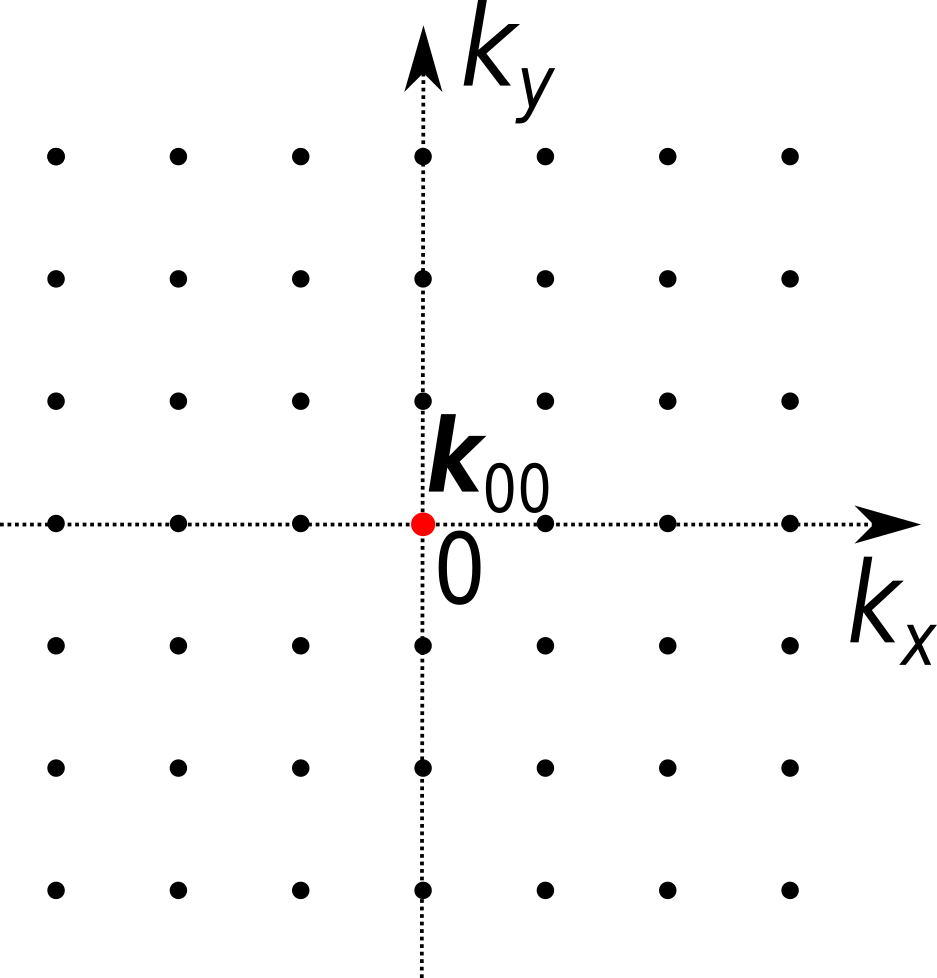}
      \caption{ }
    \end{subfigure}%
    \begin{subfigure}{.45\textwidth}
      \includegraphics[width=0.9\linewidth]{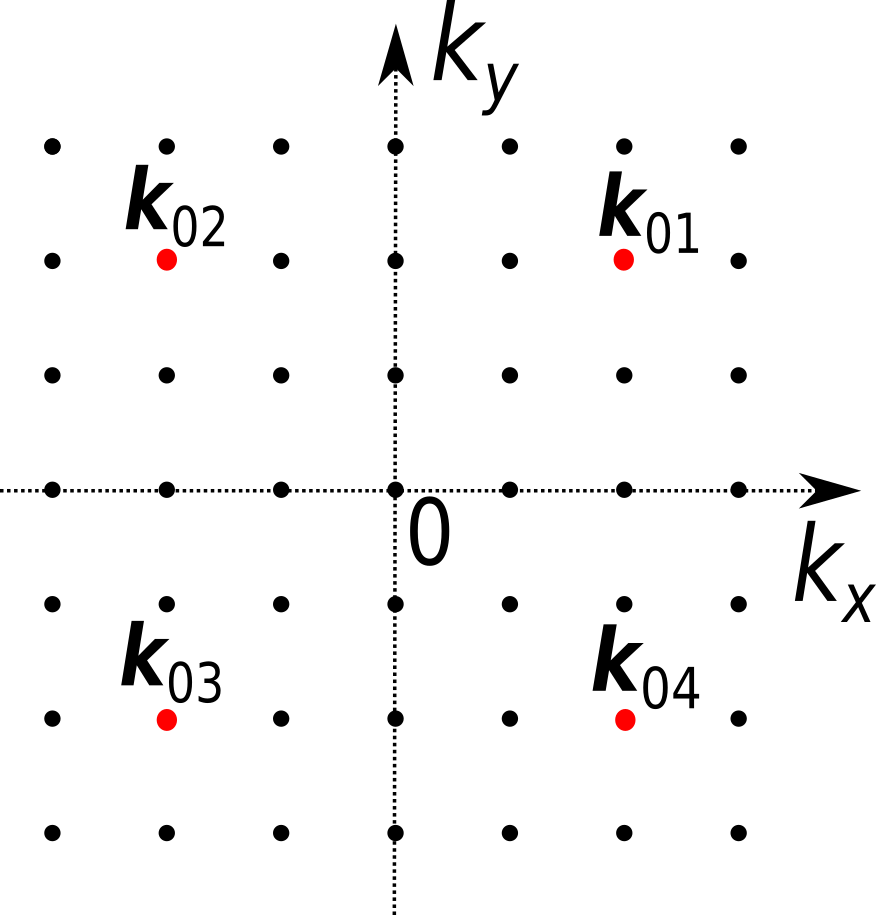}
      \caption{ }
    \end{subfigure}
    \caption{An illustration of a minimum at $\boldsymbol{k} = \boldsymbol{k}_{00} = \boldsymbol{0}$ (a) compared to the SOC induced four-fold degenerate minima $\boldsymbol{k} = \boldsymbol{k}_{0i}$ (b). The black points represent lattice sites, while the red points represent the minima. How far the $\boldsymbol{k}_{0i}$ are placed from zero momentum depends on the Zeeman splitting and the strength of the SOC. Figure adapted from \cite{master}. \label{fig:1fold4fold}}
\end{figure}

From now on, we focus on the case of no Zeeman splitting. Assuming $t^\uparrow= t^\downarrow = t$ and $T^\uparrow= T^\downarrow = T$, the energies reduce to
\begin{equation}
\label{eq:helicitybands}
     \lambda_{\boldsymbol{k}}^\pm = \epsilon_{\boldsymbol{k}} + T \pm \abs{s_{\boldsymbol{k}}}
\end{equation}
\begin{figure}
    \centering
    \includegraphics[width = 0.7\linewidth]{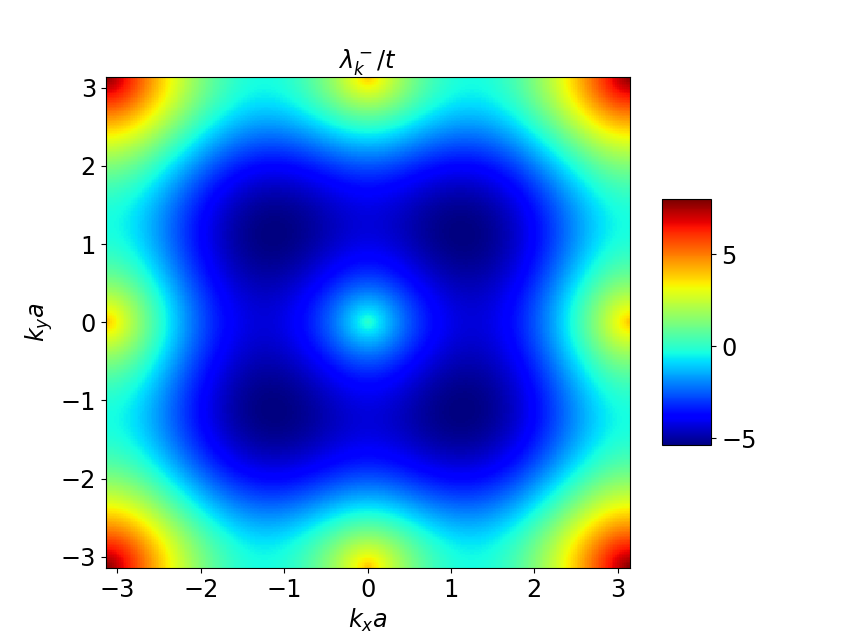}
    \caption{The lowest energy $\lambda_{\boldsymbol{k}}^-$ for $\lambda_R/t = 3.0$ and $T/t = 4.0$.}
    \label{fig:SOCU0eig}
\end{figure}
The lowest eigenvalue $\lambda_{\boldsymbol{k}}^-$ is plotted in the first Brillouin zone (1BZ) in figure \ref{fig:SOCU0eig}. Its minima occur at the four points $\boldsymbol{k}_{01} = (k_0, k_0), \boldsymbol{k}_{02} = (-k_0, k_0), \boldsymbol{k}_{03} = (-k_0, -k_0)$ and $\boldsymbol{k}_{04} = (k_0, -k_0)$ with
\begin{equation}
    k_0 a = k_{0m} a \equiv \arctan(\frac{\lambda_R}{\sqrt{2}t}).
\end{equation}
Hence, with no Zeeman splitting any nonzero $\lambda_R$ will lead to minima at nonzero $\boldsymbol{k}$. The minimal value of $\lambda_{\boldsymbol{k}}^-$ is
\begin{equation}
    \lambda_0 = T - 4t\sqrt{\frac{\lambda_R^2}{2t^2}+1}
\end{equation}
Whether or not this is negative is a matter of the choice of value for $T$. In figures \ref{fig:SOCU0DT} and \ref{fig:SOCU0eig} the value for $T$ was chosen such that $\lambda_{\boldsymbol{0}}^\pm = 0$ when $\Delta T = 0$ and hence $\lambda_0 < 0$. If one wishes to avoid negative energies, one can e.g. tune $T$ such that $\lambda_0 = 0$. The final expression for $H$ is 
\begin{equation}
    H =  \sum_{\boldsymbol{k}}\sum_{\sigma = \pm}\lambda_{\boldsymbol{k}}^{\sigma}C_{\boldsymbol{k}}^{\sigma\dagger}C_{\boldsymbol{k}}^\sigma.
\end{equation}
Provided $s_{\boldsymbol{k}} \neq 0$, the eigenvectors of $\eta_{\boldsymbol{k}}$ are
\begin{equation}
    \boldsymbol{\chi}^\pm = \frac{1}{\sqrt{2}}\begin{pmatrix} \pm \frac{s_{\boldsymbol{k}}}{\abs{s_{\boldsymbol{k}}}} \\ 1 \end{pmatrix}.
\end{equation}
If we define $s_{\boldsymbol{k}} \equiv \abs{s_{\boldsymbol{k}}}e^{-i\gamma_{\boldsymbol{k}}}$ this is
\begin{equation}
\label{eq:SOCU0eigenvec}
    \boldsymbol{\chi}^\pm = \frac{1}{\sqrt{2}}\begin{pmatrix} \pm e^{-i\gamma_{\boldsymbol{k}}} \\ 1 \end{pmatrix}.
\end{equation}
Hence the definitions of the new operators are,
\begin{equation}
\label{eq:Helicitybasis}
    \begin{pmatrix} C_{\boldsymbol{k}}^+ \\ C_{\boldsymbol{k}}^- \end{pmatrix} = P_{\boldsymbol{k}}^\dagger  \begin{pmatrix} A_{\boldsymbol{k}}^\uparrow \\ A_{\boldsymbol{k}}^\downarrow \end{pmatrix} = \frac{1}{\sqrt{2}}\begin{pmatrix} A_{\boldsymbol{k}}^\downarrow + e^{i\gamma_{\boldsymbol{k}}}  A_{\boldsymbol{k}}^\uparrow \\ A_{\boldsymbol{k}}^\downarrow - e^{i\gamma_{\boldsymbol{k}}}  A_{\boldsymbol{k}}^\uparrow\end{pmatrix}.
\end{equation}
With $\sigma, \rho = \pm$ we find that
\begin{align}
    \begin{split}
        [ C_{\boldsymbol{k}}^\sigma,  C_{\boldsymbol{k}'}^{\rho\dagger}] &= \frac{1}{2}\left[A_{\boldsymbol{k}}^\downarrow + \sigma e^{i\gamma_{\boldsymbol{k}}}  A_{\boldsymbol{k}}^\uparrow, (A_{\boldsymbol{k}'}^\downarrow + \rho e^{i\gamma_{\boldsymbol{k}'}}  A_{\boldsymbol{k}'}^\uparrow)^\dagger\right]\\
        &= \frac{1}{2}\left([A_{\boldsymbol{k}}^\downarrow, A_{\boldsymbol{k}'}^{\downarrow\dagger}] + \sigma\rho e^{i(\gamma_{\boldsymbol{k}}-\gamma_{\boldsymbol{k}'})}[A_{\boldsymbol{k}}^\uparrow, A_{\boldsymbol{k}'}^{\uparrow\dagger}]\right) = \delta_{\boldsymbol{k}\boldsymbol{k}'}\delta^{\sigma\rho}.
    \end{split}
\end{align}
As required, the new operators are bosonic. By inversion, the old operators in terms of the new are
\begin{equation}
    \begin{pmatrix} A_{\boldsymbol{k}}^\uparrow \\ A_{\boldsymbol{k}}^\downarrow \end{pmatrix} = \frac{1}{\sqrt{2}} \begin{pmatrix} e^{-i\gamma_{\boldsymbol{k}}} \left( C_{\boldsymbol{k}}^+ - C_{\boldsymbol{k}}^- \right)\\ C_{\boldsymbol{k}}^+ + C_{\boldsymbol{k}}^-  \end{pmatrix}.
\end{equation}

Finally, we may compare the eigenvectors \eqref{eq:SOCU0eigenvec} to the general helicity eigenvectors \cite{UnconventionalStates}
\begin{equation}
    \boldsymbol{\xi}^+ = \begin{pmatrix} e^{-i\phi}\cos(\frac{\theta}{2}) \\ \sin(\frac{\theta}{2}) \end{pmatrix} \mbox{\qquad and \qquad} \boldsymbol{\xi}^- = \begin{pmatrix} -e^{-i\phi}\sin(\frac{\theta}{2}) \\ \cos(\frac{\theta}{2}) \end{pmatrix}.
\end{equation}
This leads to the identifications $\theta = \pi/2$ and $\phi = \gamma_{\boldsymbol{k}}$. The former fits well with the fact that our synthetic SOC for a pseudospin-$1/2$ system models the SOC induced in a spin-$1/2$ system constrained to the $xy$-plane by an electric field along the $z$-axis. The latter identification requires some care. $\gamma_{\boldsymbol{k}}$ is defined by $s_{\boldsymbol{k}} \equiv \abs{s_{\boldsymbol{k}}}e^{-i\gamma_{\boldsymbol{k}}}$. Consulting \eqref{eq:sk} it becomes clear that $\gamma_{\boldsymbol{k}}$ can not be identified with the azimuth angle $\boldsymbol{k}$ makes with the $k_x$-axis. In fact, such an interpretation can only make sense if $k_x=0$, $k_y=0$ or $k_x=\pm k_y$ and in those cases $\gamma_{\boldsymbol{k}}$ is the angle $\boldsymbol{k}$ makes with the negative $k_y$-axis. This imperfect correspondence between $\phi$ and $\gamma_{\boldsymbol{k}}$ leads us to define the eigenvectors \eqref{eq:SOCU0eigenvec} as pseudohelicity eigenvectors, and the basis \eqref{eq:Helicitybasis} as a pseudohelicity basis. We will however refer to \eqref{eq:Helicitybasis} as a helicity basis. We note for posterity that
\begin{equation}
\label{eq:GSgammas}
    \gamma_{\boldsymbol{k}_{01}} = \frac{3\pi}{4}, \mbox{\qquad} \gamma_{\boldsymbol{k}_{02}} = -\frac{3\pi}{4}, \mbox{\qquad} \gamma_{\boldsymbol{k}_{03}} = -\frac{\pi}{4} \mbox{\qquad and \qquad} \gamma_{\boldsymbol{k}_{04}} = \frac{\pi}{4}.
\end{equation}

%Thus, the statement in \cite{master} that $\gamma_{\boldsymbol{k}}$ is the angle $\boldsymbol{k}$ makes with the $k_x$-axis is erroneous.

\section{Weakly Interacting Dilute Bose Gas} \label{sec:Weaklyinteracting}
As a further precursor to treating a two-component, SOC, weakly interacting BEC, we study the one-component, weakly interacting, dilute Bose gas. In the process we will review the Bogoliubov transformation and discover that the presence of interactions makes the dispersion relation linear close to the minimum. We will follow the treatments in \cite{PethickSmith, Pitaevskii, abrikosov} with the exception that we will treat a Bose gas bound to a 2D square Bravais lattice. The Hamiltonian is 
\begin{equation}
    H = \sum_{\boldsymbol{k}} (\epsilon_{\boldsymbol{k}} +T) A_{\boldsymbol{k}}^\dagger A_{\boldsymbol{k}} + \frac{U}{2N_s}\sum_{\boldsymbol{k}\boldsymbol{k'}\boldsymbol{p}\boldsymbol{p'}} A_{\boldsymbol{k}}^\dagger A_{\boldsymbol{k'}}^\dagger A_{\boldsymbol{p}} A_{\boldsymbol{p'}} \delta_{\boldsymbol{k}+\boldsymbol{k'},\boldsymbol{p}+\boldsymbol{p'}},
\end{equation}
where
\begin{equation}
    \label{epskpre}
    \epsilon_{\boldsymbol{k}} = -2t\left(\cos(k_x a) + \cos(k_y a)\right),
\end{equation}
which has a quadratic minimum at $\boldsymbol{k} = \boldsymbol{0}$. When including weak interactions, our aim is to find new bosonic quasiparticle operators defined as linear combinations of the original operators. In terms of these quasiparticle operators the Hamiltonian will be diagonal, and the coefficient of the number operators is the quasiparticle energy spectrum we are interested in.

As the interactions are weak, we expect the quasiparticle energy spectrum will also have its minimum at $\boldsymbol{0}$. We also assume the temperature is low enough that BEC occurs, such that the occupation of the states with $\boldsymbol{k} = \boldsymbol{0}$ is macroscopic. The number of particles in the condensate is denoted $N_0$ while the total number of particles in the system is denoted $N$. We assume that $(N-N_0)/N \ll 1$ and the Bogoliubov approach then suggests replacing the condensate operators $A_{\boldsymbol{0}}$ and $A_{\boldsymbol{0}}^\dagger$ by $\sqrt{N_0}$ since the mean value of the number operator $A_{\boldsymbol{0}}^\dagger A_{\boldsymbol{0}}$ is $N_0$. 

In this thesis we will however include a complex phase such that $A_{\boldsymbol{0}}$ is replaced by $\sqrt{N_0}e^{-i\theta_0}$. Such an approach will prove to be significant when SOC is included in the problem. The angle $\theta_0$ is at this point an arbitrary variational parameter. Variational parameters can be determined by minimization of the free energy in case the free energy depends on them, as discussed in chapter 4 of \cite{bruusflensberg}. If not, they are arbitrary, in the sense that any choice gives the same free energy, and hence the same physics. 

The excitations represent small perturbations from a pure condensate, and so we may neglect terms that are more than quadratic in excitation operators. One may then write the Hamiltonian as $H= H'_0 + H'_2$ with
\begin{equation}
    H'_0 = (\epsilon_{\boldsymbol{0}} + T) A_{\boldsymbol{0}}^\dagger A_{\boldsymbol{0}} + \frac{U}{2N_s} A_{\boldsymbol{0}}^\dagger A_{\boldsymbol{0}}^\dagger A_{\boldsymbol{0}} A_{\boldsymbol{0}}
\end{equation}
and
\begin{align}
    \begin{split}
        H'_2 &= \sum_{\boldsymbol{k}\neq\boldsymbol{0}} (\epsilon_{\boldsymbol{k}} + T) A_{\boldsymbol{k}}^\dagger A_{\boldsymbol{k}} \\
        &+ \frac{U}{2N_s}\sum_{{\boldsymbol{k}}\neq {\boldsymbol{0}}} \left(A_{\boldsymbol{0}}^\dagger A_{\boldsymbol{0}}^\dagger A_{\boldsymbol{k}} A_{-\boldsymbol{k}} + 4A_{\boldsymbol{0}}^\dagger A_{\boldsymbol{k}}^\dagger A_{\boldsymbol{k}} A_{\boldsymbol{0}} + A_{\boldsymbol{k}}^\dagger A_{-\boldsymbol{k}}^\dagger A_{\boldsymbol{0}} A_{\boldsymbol{0}} \right).
    \end{split}
\end{align}
We now make the replacement
\begin{equation}
    A_{\boldsymbol{0}} \rightarrow \sqrt{N_0}e^{-i\theta_0}.
\end{equation}
Additionally, following \cite{Pitaevskii},
\begin{equation}
    N_0 = N - \sum_{{\boldsymbol{k}} \neq {\boldsymbol{0}}} A_{\boldsymbol{k}}^\dagger A_{\boldsymbol{k}}
\end{equation}
is used to replace $N_0$ by $N$ in the Hamiltonian. In $H'_2$ we may replace $N_0$ by $N$ directly to the same order of approximation as done in \cite{Pitaevskii}. From $H'_0$ this gives
\begin{equation}
    -(\epsilon_{\boldsymbol{0}}+T)\sum_{{\boldsymbol{k}}\neq {\boldsymbol{0}}} A_{\boldsymbol{k}}^\dagger A_{\boldsymbol{k}} -\frac{UN}{N_s}\sum_{{\boldsymbol{k}} \neq {\boldsymbol{0}}} A_{\boldsymbol{k}}^\dagger A_{\boldsymbol{k}}
\end{equation}
which we move into the new quadratic part $H_2$. Then $H = H_0 + H_2$ with
\begin{equation}
    H_0 = (\epsilon_{\boldsymbol{0}} + T) N + \frac{UN^2}{2N_s}
\end{equation}
and
\begin{align}
    \begin{split}
        H_2 =  &\sum_{{\boldsymbol{k}} \neq {\boldsymbol{0}}} \left(\mathcal{E}_{\boldsymbol{k}} + \frac{UN}{N_s}\right) A_{\boldsymbol{k}}^\dagger A_{\boldsymbol{k}} \\
        &+ \frac{UN}{2N_s}\sum_{{\boldsymbol{k}} \neq {\boldsymbol{0}}} \left(e^{i2\theta_0}A_{\boldsymbol{k}} A_{-\boldsymbol{k}} + e^{-i2\theta_0}A_{\boldsymbol{k}}^\dagger A_{-\boldsymbol{k}}^\dagger \right).
    \end{split}
\end{align}
Here, we defined
\begin{equation}
    \mathcal{E}_{\boldsymbol{k}} \equiv \epsilon_{\boldsymbol{k}} - \epsilon_0 = 4t -2t\left(\cos(k_x a) + \cos(k_y a)\right).
\end{equation}
In order to diagonalize the problem we attempt a Bogoliubov transformation. We postulate that the new quasiparticle operators are given by
\begin{align}
    \begin{split}
        B_{\boldsymbol{k}} &= u_{\boldsymbol{k}}^* A_{\boldsymbol{k}}+v_{\boldsymbol{k}}A_{-\boldsymbol{k}}^\dagger, \\
        B_{-\boldsymbol{k}}^\dagger &= v_{\boldsymbol{k}}^*A_{\boldsymbol{k}}+u_{\boldsymbol{k}}A_{-\boldsymbol{k}}^\dagger.
    \end{split}
\end{align}
In order for the transformation to be canonical we must have $[B_{\boldsymbol{k}}, B_{\boldsymbol{k}}^\dagger] = 1$. This requirement reduces to $|u_{\boldsymbol{k}}|^2- |v_{\boldsymbol{k}}|^2 = 1$. We use this to identify
\begin{align}
    \begin{split}
    \label{eq:Bogoliubovtran}
        A_{\boldsymbol{k}} &= u_{\boldsymbol{k}}B_{\boldsymbol{k}}-v_{\boldsymbol{k}}B_{-\boldsymbol{k}}^\dagger, \\
        A_{-\boldsymbol{k}}^\dagger &= -v_{\boldsymbol{k}}^*B_{\boldsymbol{k}}+u_{\boldsymbol{k}}^*B_{-\boldsymbol{k}}^\dagger.
    \end{split}
\end{align}
Using that $\mathcal{E}_{-\boldsymbol{k}} = \mathcal{E}_{\boldsymbol{k}}$ we rewrite $H_2$ to
\begin{align}
    \begin{split}
        H_2 = \frac{1}{2}\sum_{{\boldsymbol{k}} \neq {\boldsymbol{0}}}\bigg[ &\left(\mathcal{E}_{\boldsymbol{k}} + \frac{UN}{N_s}\right) \left(A_{\boldsymbol{k}}^\dagger A_{\boldsymbol{k}} +A_{-\boldsymbol{k}} A_{-\boldsymbol{k}}^\dagger  \right) \\
        &+ \frac{UN}{N_s}\left(e^{i2\theta_0}A_{\boldsymbol{k}} A_{-\boldsymbol{k}} + e^{-i2\theta_0}A_{\boldsymbol{k}}^\dagger A_{-\boldsymbol{k}}^\dagger \right)\bigg],
    \end{split}
\end{align}
simultaneously shifting $H_0$ by $$-\frac{1}{2}\sum_{{\boldsymbol{k}} \neq {\boldsymbol{0}}}\left(\mathcal{E}_{\boldsymbol{k}} + \frac{UN}{N_s}\right)$$ because a commutation relation was used. Inserting \eqref{eq:Bogoliubovtran} yields
\begin{equation}
    H_2 = \frac{1}{2}\sum_{{\boldsymbol{k}} \neq {\boldsymbol{0}}}\bigg[ \omega_{\boldsymbol{k}}\left(B_{\boldsymbol{k}}^\dagger B_{\boldsymbol{k}} + B_{-\boldsymbol{k}} B_{-\boldsymbol{k}}^\dagger\right) + a_{\boldsymbol{k}}B_{\boldsymbol{k}} B_{-\boldsymbol{k}} + a_{\boldsymbol{k}}^*B_{\boldsymbol{k}}^\dagger B_{-\boldsymbol{k}}^\dagger \bigg],
\end{equation}
where
\begin{align}
    \begin{split}
        \omega_{\boldsymbol{k}} &= \left(\abs{u_{\boldsymbol{k}}}^2+\abs{v_{\boldsymbol{k}}}^2\right)\left(\mathcal{E}_{\boldsymbol{k}} + \frac{UN}{N_s}\right) -\left(u_{\boldsymbol{k}}v_{\boldsymbol{k}}e^{i2\theta_0} + u_{\boldsymbol{k}}^* v_{\boldsymbol{k}}^* e^{-i2\theta_0}\right)\frac{UN}{N_s},\\
        a_{\boldsymbol{k}} &= \left(u_{\boldsymbol{k}}^2e^{i2\theta_0} +  (v_{\boldsymbol{k}}^*)^2 e^{-i2\theta_0}\right)\frac{UN}{N_s}-2u_{\boldsymbol{k}}v_{\boldsymbol{k}}^*\left(\mathcal{E}_{\boldsymbol{k}} + \frac{UN}{N_s}\right).
    \end{split}
\end{align}
Insisting that the Hamiltonian is diagonal in terms of the quasiparticle operators we must have $a_{\boldsymbol{k}} = 0$. Upon choosing $u_{\boldsymbol{k}} = |u_{\boldsymbol{k}}|\exp(-i\theta_0)$ and $v_{\boldsymbol{k}} = |v_{\boldsymbol{k}}|\exp(-i\theta_0)$ the equations are the same as in \cite{PethickSmith}. The solution, using that $|u_{\boldsymbol{k}}|^2- |v_{\boldsymbol{k}}|^2 = 1$ is
\begin{align}
        \abs{u_{\boldsymbol{k}}}^2 &= \abs{v_{\boldsymbol{k}}}^2+1 = \frac{1}{2}\left(\frac{\mathcal{E}_{\boldsymbol{k}} + \frac{UN}{N_s}}{\omega_{\boldsymbol{k}}}+1\right), \label{eq:Boguv} \\
        \omega_{\boldsymbol{k}} &= \sqrt{\mathcal{E}_{\boldsymbol{k}}\left(\mathcal{E}_{\boldsymbol{k}} + 2\frac{UN}{N_s}\right)}. \label{eq:noSOCom}
\end{align}
Using that $\omega_{-\boldsymbol{k}} = \omega_{\boldsymbol{k}}$, the Hamiltonian may now be written
\begin{align}
    \begin{split}
        H =& (\epsilon_{\boldsymbol{0}} + T) N + \frac{UN^2}{2N_s} -\frac{1}{2}\sum_{{\boldsymbol{k}} \neq {\boldsymbol{0}}}\left(\mathcal{E}_{\boldsymbol{k}} + \frac{UN}{N_s}\right) \\
        &+ \sum_{{\boldsymbol{k}} \neq {\boldsymbol{0}}}\omega_{\boldsymbol{k}}\left(B_{\boldsymbol{k}}^\dagger B_{\boldsymbol{k}} +\frac12\right).
    \end{split}
\end{align}
The quasiparticle energy spectrum, $\omega_{\boldsymbol{k}}$, is linear for small $|\boldsymbol{k}|$ since $\mathcal{E}_{\boldsymbol{k}}$ is zero at $\boldsymbol{k}=0$ and quadratic for small $|\boldsymbol{k}|$. This represents new physics due to the interactions. The critical superfluid velocity has become nonzero, and to be specific it is $v_c = \sqrt{2UNta^2/N_s}$.

% For large $|\boldsymbol{k}|$ the quasiparticle energy spectrum, $\omega_{\boldsymbol{k}}$, approaches $\mathcal{E}_{\boldsymbol{k}}$ which is the energy spectrum if we neglect interactions. Meanwhile, the spectrum is linear for small $|\boldsymbol{k}|$ since $\mathcal{E}_{\boldsymbol{k}}$ is zero at $\boldsymbol{k}=0$ and quadratic for small $|\boldsymbol{k}|$. This represents new physics due to the interactions. The critical superfluid velocity has become nonzero, and to be specific it is $v_c = \sqrt{2UNta^2/N_s}$.

Without interactions we would find that all particles are in the condensate at zero temperature. Let us investigate the ground state depletion in the presence of interactions. We have
\begin{equation}
    N = N_0 + \sum_{\boldsymbol{k}\neq \boldsymbol{0}} \langle A_{\boldsymbol{k}}^\dagger A_{\boldsymbol{k}} \rangle.
\end{equation}
To obtain the mean value, we transform to the diagonal basis and get
\begin{align}
    \begin{split}
        N &= N_0 + \sum_{\boldsymbol{k}\neq \boldsymbol{0}} \bigg( \abs{u_{\boldsymbol{k}}}^2 \langle  B_{\boldsymbol{k}}^\dagger B_{\boldsymbol{k}} \rangle + \abs{v_{\boldsymbol{k}}}^2 \langle  B_{-\boldsymbol{k}} B_{-\boldsymbol{k}}^\dagger \rangle \\
        & \mbox{\qquad\qquad\qquad\qquad} -u_{\boldsymbol{k}}^* v_{\boldsymbol{k}} \langle  B_{\boldsymbol{k}}^\dagger B_{-\boldsymbol{k}}^\dagger \rangle - u_{\boldsymbol{k}} v_{\boldsymbol{k}}^* \langle  B_{\boldsymbol{k}} B_{-\boldsymbol{k}} \rangle \bigg)\\
        &= N_0 + \sum_{\boldsymbol{k}\neq \boldsymbol{0}} \bigg( (\abs{u_{\boldsymbol{k}}}^2+\abs{v_{\boldsymbol{k}}}^2) \langle  B_{\boldsymbol{k}}^\dagger B_{\boldsymbol{k}} \rangle + \abs{v_{\boldsymbol{k}}}^2 \bigg),
    \end{split}
\end{align}
where we used a commutator along with the fact that $v_{-\boldsymbol{k}} = v_{\boldsymbol{k}}$. Because the Hamiltonian is diagonal in terms of the quasiparticle operators, the quasiparticles behave like an ideal Bose gas \cite{Pitaevskii}. Therefore the mean values of the off-diagonal terms are zero. Furthermore, the mean value of $B_{\boldsymbol{k}}^\dagger B_{\boldsymbol{k}}$ follows Bose-Einstein statistics. Thus,
\begin{equation}
    N = N_0 + \sum_{\boldsymbol{k}\neq \boldsymbol{0}} \bigg( \frac{\abs{u_{\boldsymbol{k}}}^2+\abs{v_{\boldsymbol{k}}}^2}{e^{\beta\omega_{\boldsymbol{k}}}-1} + \abs{v_{\boldsymbol{k}}}^2 \bigg),
\end{equation}
where $\beta = 1/k_B T'$, $k_B$ is Boltzmann's constant and $T'$ is the temperature. At zero temperature, we are left with
\begin{equation}
\label{eq:GSdepnoSoc}
    N = N_0 + \sum_{\boldsymbol{k}\neq \boldsymbol{0}}\abs{v_{\boldsymbol{k}}}^2.
\end{equation}
Hence, there is a depletion of the ground state even at zero temperature. For our initial assumption that the depletion is small to hold, we see that we must require $U\ll t$ such that $|v_{\boldsymbol{k}}|^2$ given in \eqref{eq:Boguv} is small. This is what is meant by weakly interacting Bose gas in the context of a Bravais lattice. On the other hand, for $U \gg t$ the ground state depletion is severe, and the system is expected to be in the Mott insulator phase for such strong interactions \cite{Oosten}. This thesis is concerned with the superfluid phase, where $U\ll t$.

\subsection{Free Energy}
To determine the variational parameter $\theta_0$ we must calculate the free energy. We first derive a general procedure for finding the free energy based on the calculation in \cite{master}, and then apply it to the weakly interacting Bose gas. The Hamiltonian is assumed to be on the form
\begin{equation}
    H = H'_0 + \left.\sum_{\boldsymbol{k}}\right.^{'}\sum_\sigma \Omega_\sigma(\boldsymbol{k}) \left(  B_{\boldsymbol{k}, \sigma}^{\dagger}B_{\boldsymbol{k}, \sigma} + \frac{1}{2} \right), 
\end{equation}
where the sum $\left.\sum_{\boldsymbol{k}}\right.^{'}$ excludes any condensate momenta and the sum over $\sigma$ takes into account the possibility of several branches in the excitation spectrum $\Omega_\sigma(\boldsymbol{k})$. Assume $\ket{\Tilde{N}_m} = \prod_{i=1}^m \ket{N_i}$, where $N_i = N_{\boldsymbol{k},\sigma} = B_{\boldsymbol{k}, \sigma}^{\dagger}B_{\boldsymbol{k},\sigma}$, is a many-particle Fock basis. Then, the partition function is
% \begin{align}
%     \begin{split}
%         Z &= \Tr(e^{-\beta H}) = \sum_m \bra{\Tilde{N}_m} e^{-\beta H} \ket{\Tilde{N}_m} \\
%         &=e^{-\beta H'_0}e^{-\frac{\beta}{2} \left.\sum_{\boldsymbol{k}}\right.^{'}\sum_\sigma \Omega_\sigma(\boldsymbol{k})}\\
%         &\mbox{\qquad} \cdot\sum_{m} \bra{\Tilde{N}_m} e^{-\beta \left.\sum_{\boldsymbol{k}}\right.^{'}\sum_\sigma \Omega_\sigma(\boldsymbol{k}) N_{\boldsymbol{k},\sigma} } \ket{\Tilde{N}_m}.
%     \end{split}
% \end{align}
% Here,
% \begin{align}
%     \begin{split}
%         &\sum_{m}  \bra{\Tilde{N}_m} e^{-\beta \left.\sum_{\boldsymbol{k}}\right.^{'}\sum_\sigma \Omega_\sigma(\boldsymbol{k}) N_{\boldsymbol{k},\sigma} } \ket{\Tilde{N}_m} \\
%         &= \left.\prod_{\boldsymbol{k}, \sigma}\right.^{'}\sum_{N_{\boldsymbol{k},\sigma}} e^{-\beta \Omega_\sigma(\boldsymbol{k}) N_{\boldsymbol{k},\sigma} }  = \left.\prod_{\boldsymbol{k}, \sigma}\right.^{'} \frac{1}{1-e^{-\beta \Omega_\sigma(\boldsymbol{k})} }.
%     \end{split}
% \end{align}
% The computation of the $N_{\boldsymbol{k},\sigma}$ sum (a sum over all non-negative integers) requires $\Omega_\sigma(\boldsymbol{k})>0$ which is assumed to be true when $\boldsymbol{k}$ is not a condensate momentum. We get the partition function
\begin{align}
    \begin{split}
        Z &= \Tr(e^{-\beta H}) = \sum_m \bra{\Tilde{N}_m} e^{-\beta H} \ket{\Tilde{N}_m} \\
        &=e^{-\beta H'_0}e^{-\frac{\beta}{2} \left.\sum_{\boldsymbol{k}}\right.^{'}\sum_\sigma \Omega_\sigma(\boldsymbol{k})}\\
        &\mbox{\qquad} \cdot\sum_{m} \bra{\Tilde{N}_m} e^{-\beta \left.\sum_{\boldsymbol{k}}\right.^{'}\sum_\sigma \Omega_\sigma(\boldsymbol{k}) N_{\boldsymbol{k},\sigma} } \ket{\Tilde{N}_m} \\
        &=e^{-\beta H'_0}e^{-\frac{\beta}{2} \left.\sum_{\boldsymbol{k}}\right.^{'}\sum_\sigma \Omega_\sigma(\boldsymbol{k})} \left.\prod_{\boldsymbol{k}, \sigma}\right.^{'}\sum_{N_{\boldsymbol{k},\sigma} = 0}^{\infty} e^{-\beta \Omega_\sigma(\boldsymbol{k}) N_{\boldsymbol{k},\sigma} }  \\
        &=e^{-\beta H'_0}e^{-\frac{\beta}{2} \left.\sum_{\boldsymbol{k}}\right.^{'}\sum_\sigma \Omega_\sigma(\boldsymbol{k})} \left.\prod_{\boldsymbol{k}, \sigma}\right.^{'} \frac{1}{1-e^{-\beta \Omega_\sigma(\boldsymbol{k})} }.
    \end{split}
\end{align}
The computation of the $N_{\boldsymbol{k},\sigma}$ sum requires $\Omega_\sigma(\boldsymbol{k})>0$ which is assumed to be true when $\boldsymbol{k}$ is not a condensate momentum. 
% We get the partition function
% \begin{equation}
%     Z = e^{-\beta H'_0}e^{-\frac{\beta}{2} \left.\sum_{\boldsymbol{k}}\right.^{'}\sum_\sigma \Omega_\sigma(\boldsymbol{k})}\left(\left.\prod_{\boldsymbol{k}, \sigma}\right.^{'} \frac{1}{1-e^{-\beta \Omega_\sigma(\boldsymbol{k})} }\right).
% \end{equation}
Using $F=-\ln(Z)/\beta$ for the free energy, we get
\begin{align}
    \begin{split}
    F &= H'_0 +\frac{1}{2}\left.\sum_{\boldsymbol{k}}\right.^{'}\sum_\sigma \Omega_\sigma(\boldsymbol{k}) + \frac{1}{\beta} \left.\sum_{\boldsymbol{k}}\right.^{'}\sum_\sigma \ln\Big(1-\exp\big(-\beta\Omega_\sigma(\boldsymbol{k})\big)\Big).
\end{split}
\end{align}
We will focus on the effects of the elementary excitations due to interactions and SOC rather than thermal effects. Therefore we set the temperature to zero, or $\beta \to \infty$. Then, $F = \langle H \rangle$, which is the ground state energy. Thus, we finally get
\begin{align}
\begin{split}
\label{eq:F}
    F \stackrel{\beta\to\infty}{=} \langle H \rangle = H'_0 +\frac{1}{2}\left.\sum_{\boldsymbol{k}}\right.^{'}\sum_\sigma \Omega_\sigma(\boldsymbol{k}).
\end{split}
\end{align}

For the weakly interacting Bose gas we find 
\begin{equation}
    F = (\epsilon_{\boldsymbol{0}} + T) N + \frac{UN^2}{2N_s} -\frac{1}{2}\sum_{{\boldsymbol{k}} \neq {\boldsymbol{0}}}\left(\mathcal{E}_{\boldsymbol{k}} + \frac{UN}{N_s} - \omega_{\boldsymbol{k}}\right).
\end{equation}
As this is independent of $\theta_0$, the angle is arbitrary and may be set to $0$ as is usually done a priori in the literature \cite{abrikosov, PethickSmith,Pitaevskii}.

\section{Generalized Diagonalization Theory} \label{sec:BV} %of Hamiltonians that are Quadratic in Bosonic Operators
In the previous sections we have seen two examples of canonical transformations used to diagonalize Hamiltonians that are quadratic in bosonic operators. For the SOC, non-interacting Bose gas we could use a unitary transformation, while for the one-component, weakly interacting Bose gas we used a Bogoliubov transformation of a two-component basis. When the size of the basis becomes larger, it is convenient to introduce a matrix generalization of the Bogoliubov transformation. This section is concerned with the theory of the resulting Bogoliubov-Valatin transformation that will be used extensively in the remainder of the thesis. Due to this extensive use, the method will be presented in great detail based on papers by Tsallis \cite{Tsallis}, Xiao \cite{Xiao} and van Hemmen \cite{Hemmen}.

%%%-------------Replace by
The most general Hamiltonian which is quadratic in bosonic operators is \cite{Tsallis}
\begin{equation}
    H = \sum_{i=1}^n \sum_{j=1}^n \left((M_1)_{ij}A_i^\dagger A_j + + (M_1)_{ij}^* A_i A_j^\dagger + (M_2)_{ij}A_i^\dagger A_j^\dagger + (M_2)_{ij}^* A_i A_j\right),
\end{equation}
where $A_i^\dagger$ and $A_i$ are bosonic creation and annihilation operators, satisfying $[A_i, A_j] = 0, [A_i^\dagger, A_j^\dagger] = 0$ and $[A_i, A_j^\dagger] = \delta_{ij}$. The $n \cross n$ matrices $M_1$ and $M_2$ must be Hermitian and symmetric respectively \cite{Tsallis}. We now seek to rewrite this Hamiltonian in matrix notation and define operator vectors
\begin{align}
\begin{split}
    \boldsymbol{A} &= (A_1, \dots, A_n, A_1^\dagger, \dots, A_n^\dagger)^T \mbox{\qquad and}\\
    \boldsymbol{A}^{\dagger} &= (A_1^\dagger, \dots, A_n^\dagger, A_1, \dots, A_n).
\end{split}
\end{align} 
Given that $A_i^\dagger$ and $A_i$ are bosonic creation and annihilation operators, $\boldsymbol{A}$ and $\boldsymbol{A}^{\dagger}$ satisfy the commutation relation $\boldsymbol{A}\otimes \boldsymbol{A}^{\dagger} -((\boldsymbol{A}^{\dagger})^T\otimes(\boldsymbol{A})^T)^T=J$, where we defined a matrix $J$ by
%which looks like $I$, but where the last half of the values on the diagonal are -1 instead of 1:
\begin{equation}
    J = \begin{pmatrix} I & 0 \\ 0&-I \end{pmatrix}.
\end{equation}
The matrix $J$ is its own inverse, i.e. $J^2 = I$. In terms of components the commutation relation is
\begin{equation}
    \boldsymbol{A}_i \boldsymbol{A}^{\dagger}_j - \boldsymbol{A}^{\dagger}_j \boldsymbol{A}_i = J_{ij} = \begin{cases} \delta_{ij} \mbox{\qquad if \qquad} i\leq n \\ -\delta_{ij} \mbox{\qquad if \qquad} i > n \end{cases}
\end{equation}
%Note that $\ket{A^\dagger} \neq \boldsymbol{A}^\dagger = \boldsymbol{A}^{\dagger}$, rather it has the same form as $\boldsymbol{A}$, but with the adjoint taken on all its elements.
We can now write the Hamiltonian as 
\begin{equation}
    H = \boldsymbol{A}^{\dagger}M\boldsymbol{A},
\end{equation}
where $M$ is a $2n\cross 2n$ Hermitian matrix on the form
\begin{equation}
    M = 
    \begin{pmatrix}
    M_1 & M_2 \\
    M_2^* & M_1^* \\
    \end{pmatrix},
\end{equation}
where $M_1^\dagger = M_1$ and $M_2^T = M_2$, such that $M^\dagger = M$. $\boldsymbol{A}^{\dagger}M\boldsymbol{A} = \sum_{ij}\boldsymbol{A}^{\dagger}_i M_{ij} \boldsymbol{A}_j$ tells us that $M_{ij}$ is the coefficient in front of $\boldsymbol{A}^{\dagger}_i\boldsymbol{A}_j$ in the Hamiltonian.

\subsection{The Bogoliubov-Valatin Transformation} 
Whenever we attempt to diagonalize a Hamiltonian, we simultaneously define new operators $\boldsymbol{B}^{\dagger}$ and $\boldsymbol{B}$. When the original operators are bosonic we also want the new operators to be bosonic, and we have two requirements we need to fulfill. The transformation matrix must satisfy $T^{-1} = JT^\dagger J$, which will be shown later in theorem \ref{thm:non-unitary}, and we also want $\boldsymbol{B}$ to be the Hermitian conjugate of $\boldsymbol{B}^{\dagger}$. If we define the new operators as $\boldsymbol{B}^{\dagger} = \boldsymbol{A}^{\dagger}T$, we want $\boldsymbol{B} = T^\dagger \boldsymbol{A}$ which means we require $\boldsymbol{B} = JT^{-1}J \boldsymbol{A}$. With this choice, the diagonalization procedure is \cite{Tsallis} 
\begin{align}
    \begin{split}
        \boldsymbol{A}^{\dagger}M\boldsymbol{A} &= \boldsymbol{A}^{\dagger}(TT^{-1})M(J(T(JJ)T^{-1})J)\boldsymbol{A} \\
        &= (\boldsymbol{A}^{\dagger}T)(T^{-1}MJTJ)(JT^{-1}J\boldsymbol{A}) = \boldsymbol{B}^{\dagger}D\boldsymbol{B}.
    \end{split}
\end{align}
We will call such a transformation a Bogoliubov-Valatin (BV) transformation motivated by \cite{Xiao}, and define it more clearly later. The method is also known as the dynamic matrix method because $MJ$ is closely related to the dynamic matrix $JM$ in the Heisenberg equation of motion \cite{Xiao}. %In appendix \ref{app:Negeigenvaluediag} we briefly explain why a unitary transformation is not sufficient in general.

Notice that it is actually $MJ$ we are diagonalizing, and thus we should look for the eigenvalues, $\lambda$, of $MJ$ using $\det(MJ-\lambda I) = 0$. These eigenvalues go on the diagonal of a matrix $DJ$, which we then have to multiply from the right by $J$ to get the matrix $D$ in the Hamiltonian, $H = \boldsymbol{B}^{\dagger}D\boldsymbol{B}$. The new operators $\boldsymbol{B}$ describe bosonic quasiparticles that behave essentially like uncoupled harmonic oscillators. These quasiparticles describe collective excitations in the system, analogously to the way phonons describe collective vibrations of the atoms in a lattice.

\subsection{Complex Eigenvalues and Dynamical Instabilities}
$M$ is by definition Hermitian, $M^\dagger = M$, and so $MJ$ is not Hermitian, $(MJ)^\dagger = JM$, unless $M_2=0$. Hence, $MJ$ can in general have complex eigenvalues. 
%Imagine our Hamiltonian depends on a set of parameters. If the eigenvalues of $MJ$ are complex for some parameters, these complex eigenvalues would go on the diagonal in $D$ and thus end up in the Hamiltonian. The Hamiltonian is by assumption Hermitian, and complex eigenvalues violates this. Thus, the eigenvalues we obtain are only valid physically when the parameters are chosen such that they are real.  
There are different definitions in the literature for the transformation procedure we are using. While we follow Tsallis \cite{Tsallis} and diagonalize $MJ$, Xiao \cite{Xiao} and others define the transformation in an alternate way such that $JM$ is the matrix being diagonalized. This should all amount to a change of eigenvectors but not of eigenvalues, something which can be proven. 
If $\lambda$ is an eigenvalue of $MJ$ and $\boldsymbol{x}$ its corresponding eigenvector, we have that $MJ\boldsymbol{x} = \lambda \boldsymbol{x}$. Multiplying from the left by $J$ we get
\begin{equation}
    MJ\boldsymbol{x} = \lambda \boldsymbol{x} \iff JMJ\boldsymbol{x} = J \lambda \boldsymbol{x} \iff JM(J\boldsymbol{x}) = \lambda (J\boldsymbol{x}),
\end{equation}
showing that $\lambda$ is also an eigenvalue of $JM$. In conclusion, $JM$ and $MJ$ have the same set of eigenvalues, while their eigenvectors are related by a multiplication by $J$. 
%%%%If diag theory appendix is removed
It can even be shown that the new operators are defined equivalently.

% \begin{theorem}
%     If $\lambda$ is an eigenvalue of $MJ$ and $\boldsymbol{x}$ its corresponding eigenvector, then $\lambda$ is also an eigenvalue of $JM$ with $J\boldsymbol{x}$ being its corresponding eigenvector.
% \end{theorem}
% \begin{proof}
% We have that $MJ\boldsymbol{x} = \lambda \boldsymbol{x}$. Multiplying from the left by $J$ we get
% \begin{equation}
%     MJ\boldsymbol{x} = \lambda \boldsymbol{x} \iff JMJ\boldsymbol{x} = J \lambda \boldsymbol{x} \iff JM(J\boldsymbol{x}) = \lambda (J\boldsymbol{x}),
% \end{equation}
% showing that $\lambda$ is also an eigenvalue of $JM$. In conclusion, $JM$ and $MJ$ have the same set of eigenvalues, while their eigenvectors are related by a multiplication by $J$.
% \end{proof}

Complex eigenvalues of $JM$, or equivalently of $MJ$,  are defined as dynamical instabilities by Pethick and Smith in chapter 14.3 of \cite{PethickSmith}. This is because it is proved in \cite{WuNiuStability} that if $\omega \in \mathbb{C}$ is an eigenvalue of $JM$ then $\omega^*$ is also an eigenvalue of $JM$, i.e. complex eigenvalues come in conjugate pairs. As the time-dependence of states are related to the eigenvalues of $JM$ by $\exp(-i\omega t)$ \cite{PethickSmith}, a complex eigenvalue of $JM$ will always mean there is an unstable mode, in the sense that small perturbations grow exponentially in time \cite{PethickSmith}. In conclusion, complex eigenvalues of $MJ$ at some parameters are equivalent to the system described by the Hamiltonian being dynamically unstable at those parameters. Furthermore, if the eigenvalues of $MJ$ are complex it is not possible to diagonalize $MJ$ in a way that defines new bosonic quasiparticles. Meanwhile, if $MJ$ has real eigenvalues and is diagonalizable, we will always be able to set up a transformation matrix $T$ such that the new operators are bosonic. To prove this, we need to prove some other properties as well. In the cases where the proofs offer little new insight, the reader is referred to the proofs in \cite{Xiao}. 

%Should we tweak parameters such that the eigenvalues we obtain are real? The answer to the last question is yes, sort of. 
%then it turns out the diagonalization procedure does not exist. I.e. the eigenvalues we obtain are only valid physically when the parameters are chosen such that they are real. To prove this, we need to prove some other properties as well.

\subsection{Existence of the Bogoliubov-Valatin Transformation}

\begin{theorem}
\label{thm:non-unitary}
    Assume the original operators satisfy the bosonic commutation relation $\boldsymbol{A}\otimes \boldsymbol{A}^{\dagger} -((\boldsymbol{A}^{\dagger})^T\otimes(\boldsymbol{A})^T)^T=J$. For the new operators $\boldsymbol{B}^{\dagger} = \boldsymbol{A}^{\dagger}T$ and $\boldsymbol{B} = T^\dagger \boldsymbol{A}$ to satisfy the same commutation relation $\boldsymbol{B}\otimes \boldsymbol{B}^{\dagger} -((\boldsymbol{B}^{\dagger})^T\otimes(\boldsymbol{B})^T)^T=J$, we get the requirement $T^{-1} = JT^\dagger J$. 
\end{theorem}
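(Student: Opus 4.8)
The plan is to expand the desired commutation relation for the new operators entry by entry and simply read off the condition on $T$. Writing the definitions $\boldsymbol{B}^{\dagger} = \boldsymbol{A}^{\dagger}T$ and $\boldsymbol{B} = T^{\dagger}\boldsymbol{A}$ in components gives $\boldsymbol{B}^{\dagger}_j = \sum_l \boldsymbol{A}^{\dagger}_l T_{lj}$ and $\boldsymbol{B}_i = \sum_k (T^{\dagger})_{ik}\boldsymbol{A}_k$, where $(T^{\dagger})_{ik} = \overline{T_{ki}}$. Since the entries of $T$ are $c$-numbers they commute past the operators, so the generalized bracket in component form works out to $\boldsymbol{B}_i\boldsymbol{B}^{\dagger}_j - \boldsymbol{B}^{\dagger}_j\boldsymbol{B}_i = \sum_{k,l}(T^{\dagger})_{ik}T_{lj}\big(\boldsymbol{A}_k\boldsymbol{A}^{\dagger}_l - \boldsymbol{A}^{\dagger}_l\boldsymbol{A}_k\big) = \sum_{k,l}(T^{\dagger})_{ik}J_{kl}T_{lj} = (T^{\dagger}JT)_{ij}$, where I used the assumed relation $\boldsymbol{A}_k\boldsymbol{A}^{\dagger}_l - \boldsymbol{A}^{\dagger}_l\boldsymbol{A}_k = J_{kl}$.

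Next I would impose that $\boldsymbol{B},\boldsymbol{B}^{\dagger}$ satisfy the same relation as $\boldsymbol{A},\boldsymbol{A}^{\dagger}$, i.e. $\boldsymbol{B}_i\boldsymbol{B}^{\dagger}_j - \boldsymbol{B}^{\dagger}_j\boldsymbol{B}_i = J_{ij}$ for all $i,j$. Comparing with the expression just obtained, this is precisely the matrix identity $T^{\dagger}JT = J$. To bring it into the stated form, recall that $J$ is its own inverse, $J^2 = I$; multiplying $T^{\dagger}JT = J$ on the left by $J$ yields $(JT^{\dagger}J)T = I$. Since $T$ is a square matrix a left inverse is automatically a two-sided inverse, so $T$ is invertible and $T^{-1} = JT^{\dagger}J$, which is the claim.

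I would also record two short consistency remarks that keep the statement meaningful. First, the definitions already guarantee that $\boldsymbol{B}$ is the Hermitian conjugate of $\boldsymbol{B}^{\dagger}$, since $(\boldsymbol{B}^{\dagger}_j)^{\dagger} = \sum_l \overline{T_{lj}}\,\boldsymbol{A}_l = (T^{\dagger}\boldsymbol{A})_j = \boldsymbol{B}_j$; this needs nothing about $T$. Second, the single tensor relation $\boldsymbol{B}\otimes\boldsymbol{B}^{\dagger} - ((\boldsymbol{B}^{\dagger})^T\otimes(\boldsymbol{B})^T)^T = J$ does encode all the canonical brackets among the individual $B_i$ and $B_i^{\dagger}$ (the $\delta_{ij}$ from the upper-left $I$-block, the $-\delta_{ij}$ from the lower-right $-I$-block, and vanishing brackets from the off-diagonal blocks), exactly as it did for $\boldsymbol{A}$, so working with the compact relation loses nothing.

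There is no genuine obstacle in this argument; it is a verification. The only place to be careful is the index and transpose bookkeeping when passing between the $\otimes$-notation $\boldsymbol{B}\otimes\boldsymbol{B}^{\dagger} - ((\boldsymbol{B}^{\dagger})^T\otimes(\boldsymbol{B})^T)^T$ and the scalar form $\boldsymbol{B}_i\boldsymbol{B}^{\dagger}_j - \boldsymbol{B}^{\dagger}_j\boldsymbol{B}_i$, and in keeping $(T^{\dagger})_{ik} = \overline{T_{ki}}$ straight so that the double sum collapses to $T^{\dagger}JT$ rather than to something like $T^{T}J\overline{T}$. Getting that bookkeeping right is what makes the condition come out as $T^{-1} = JT^{\dagger}J$ instead of a mis-conjugated variant.
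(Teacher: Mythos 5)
Your proof is correct and follows essentially the same route as the paper: both reduce the new operators' commutation relation to the matrix identity $T^{\dagger}JT = J$ (you via component-wise bookkeeping, the paper by pulling $T^{\dagger}$ and $T$ out of the tensor expression directly) and then use $J^2 = I$ to conclude $T^{-1} = JT^{\dagger}J$. The component form and the consistency remarks are fine additions but do not change the argument.
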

\begin{proof}
If we define $\boldsymbol{B} = T^\dagger \boldsymbol{A}$ and $\boldsymbol{B}^{\dagger} = \boldsymbol{A}^{\dagger}T$, we get $(\boldsymbol{B}^{\dagger})^T = T^T (\boldsymbol{A}^{\dagger})^T $ and $(\boldsymbol{B})^T = (\boldsymbol{A})^T T^*$. Hence,
%we get $\ket{B^\dagger} = T^T\ket{A^\dagger}$ and $\bra{B^\dagger} = \bra{A^\dagger}T^*$, because taking the hermitian conjugate of the elements of a matrix, is the same as taking the complex conjugate. Note that this disagrees with the bottom of page 278 (3) in Tsallis \cite{Tsallis}.
% \begin{gather*}
%     \boldsymbol{B}\boldsymbol{B}^{\dagger} -(\ket{B^\dagger}\bra{B^\dagger})^T=J\\
%     T^\dagger \boldsymbol{A} \boldsymbol{A}^{\dagger}T - (T^T\ket{A^\dagger}\bra{A^\dagger}T^*)^T = J\\
%     T^\dagger \boldsymbol{A} \boldsymbol{A}^{\dagger}T - T^\dagger (\ket{A^\dagger}\bra{A^\dagger})^T T = J\\
%     T^\dagger (\boldsymbol{A} \boldsymbol{A}^{\dagger} - (\ket{A^\dagger}\bra{A^\dagger})^T) T = J\\
%     T^\dagger J T = J\\
%     (J T^\dagger J) T = I =  T^{-1}T\\
%     T^{-1} = JT^\dagger J
% \end{gather*}
\begin{gather*}
    \boldsymbol{B}\otimes \boldsymbol{B}^{\dagger} -((\boldsymbol{B}^{\dagger})^T\otimes(\boldsymbol{B})^T)^T = J\\
    T^\dagger \boldsymbol{A}\otimes \boldsymbol{A}^{\dagger}T - \big(T^T(\boldsymbol{A}^{\dagger})^T\otimes(\boldsymbol{A})^T T^*\big)^T = J\\
    T^\dagger \boldsymbol{A}\otimes \boldsymbol{A}^{\dagger}T - T^\dagger \big((\boldsymbol{A}^{\dagger})^T\otimes(\boldsymbol{A})^T)\big)^T T = J\\
    T^\dagger \left(\boldsymbol{A}\otimes \boldsymbol{A}^{\dagger} -\big((\boldsymbol{A}^{\dagger})^T\otimes(\boldsymbol{A})^T\big)^T\right) T = J\\
    T^\dagger J T = J \iff (J T^\dagger J) T = I =  T^{-1}T \iff T^{-1} = JT^\dagger J.
\end{gather*}
\end{proof}

\begin{theorem}
\label{thm:Tform}
    The transformation matrix $T$ takes the form \cite{Xiao}
    \begin{equation}
        T = 
    \begin{pmatrix}
    T_1 & T_2 \\
   T_2^* & T_1^* \\
    \end{pmatrix}.
    \end{equation}
\end{theorem}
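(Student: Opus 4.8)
\emph{Proof idea.} The block form follows purely from the ``doubled'' structure of the operator vectors together with the requirement that the new vector $\boldsymbol{B}$ be of the same kind --- no properties of $M$ are used. The plan is to encode that structure as a single matrix identity, feed in $\boldsymbol{B}=T^{\dagger}\boldsymbol{A}$, and read off the blocks. I would introduce the $2n\cross 2n$ exchange matrix $\Sigma=\begin{pmatrix}0&I\\ I&0\end{pmatrix}$, satisfying $\Sigma^{2}=I$ and $\Sigma^{T}=\Sigma$. Since by construction the last $n$ entries of $\boldsymbol{A}$ are the adjoints of the first $n$, and $(\boldsymbol{A}^{\dagger})^{T}$ is exactly the column vector of adjoints of the entries of $\boldsymbol{A}$, the structure of $\boldsymbol{A}$ is the identity $(\boldsymbol{A}^{\dagger})^{T}=\Sigma\boldsymbol{A}$. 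Demanding that $\boldsymbol{B}=(B_{1},\dots,B_{n},B_{1}^{\dagger},\dots,B_{n}^{\dagger})^{T}$ is likewise the single requirement $(\boldsymbol{B}^{\dagger})^{T}=\Sigma\boldsymbol{B}$.

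Next I would use $\boldsymbol{B}^{\dagger}=\boldsymbol{A}^{\dagger}T$ to get $(\boldsymbol{B}^{\dagger})^{T}=T^{T}(\boldsymbol{A}^{\dagger})^{T}=T^{T}\Sigma\boldsymbol{A}$, while the requirement reads $(\boldsymbol{B}^{\dagger})^{T}=\Sigma\boldsymbol{B}=\Sigma T^{\dagger}\boldsymbol{A}$. Because no nontrivial complex-linear combination of $A_{1},\dots,A_{n},A_{1}^{\dagger},\dots,A_{n}^{\dagger}$ is the zero operator, the two coefficient matrices must agree:
\begin{equation*}
T^{T}\Sigma=\Sigma T^{\dagger}.
\end{equation*}
Right-multiplying by $\Sigma$ gives $T^{T}=\Sigma T^{\dagger}\Sigma$, and transposing (using $\Sigma^{T}=\Sigma$ and $(T^{\dagger})^{T}=T^{*}$) yields $T=\Sigma T^{*}\Sigma$.

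Finally, writing $T$ in $n\cross n$ blocks $T=\begin{pmatrix}T_{1}&T_{2}\\ T_{3}&T_{4}\end{pmatrix}$, conjugation by $\Sigma$ merely interchanges the two block-rows and the two block-columns, so $\Sigma T^{*}\Sigma=\begin{pmatrix}T_{4}^{*}&T_{3}^{*}\\ T_{2}^{*}&T_{1}^{*}\end{pmatrix}$; equating with $T$ forces $T_{4}=T_{1}^{*}$ and $T_{3}=T_{2}^{*}$, which is precisely the asserted form. I do not expect a genuine obstacle here; the only step needing a word of care is the passage from the operator identity $T^{T}\Sigma\boldsymbol{A}=\Sigma T^{\dagger}\boldsymbol{A}$ to the matrix identity $T^{T}\Sigma=\Sigma T^{\dagger}$, which rests on the linear independence of the creation and annihilation operators. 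Alternatively one can avoid $\Sigma$ entirely and argue componentwise: conjugating the $j$-th entry of $\boldsymbol{B}^{\dagger}=\boldsymbol{A}^{\dagger}T$ and comparing it with the $(j+n)$-th entry, then matching the coefficients of $A_{k}$ and of $A_{k}^{\dagger}$, gives the same relations $T_{4}=T_{1}^{*}$, $T_{3}=T_{2}^{*}$.
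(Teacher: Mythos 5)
Your proof is correct. The thesis itself does not spell out a proof of Theorem \ref{thm:Tform} (it defers to \cite{Xiao}), and your argument is essentially the standard one that reference supplies: the block structure is forced solely by demanding that $\boldsymbol{B}=T^{\dagger}\boldsymbol{A}$ again be a doubled vector $(B_1,\dots,B_n,B_1^{\dagger},\dots,B_n^{\dagger})^T$, with no input from $M$. Encoding that demand as $(\boldsymbol{B}^{\dagger})^{T}=\Sigma_x\boldsymbol{B}$, using $(\boldsymbol{A}^{\dagger})^{T}=\Sigma_x\boldsymbol{A}$, and extracting $T^{T}\Sigma_x=\Sigma_x T^{\dagger}$, hence $T=\Sigma_x T^{*}\Sigma_x$, is exactly right, and the one step you flag — passing from the operator identity to the matrix identity — is indeed justified by the linear independence of $A_1,\dots,A_n,A_1^{\dagger},\dots,A_n^{\dagger}$ (take commutators with $A_j$ and $A_j^{\dagger}$ to kill all coefficients). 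It is worth noting that your statement of the hypothesis also clarifies the logical structure in the thesis: the requirement that $\boldsymbol{B}$ have the doubled form is what produces the block form of $T$, whereas Theorem \ref{thm:formket} runs in the converse direction (given the form of $T$ and $T^{\dagger}=JT^{-1}J$, the doubled structure is preserved); your componentwise alternative at the end is an equally valid, if less compact, route to the same relations $T_3=T_2^{*}$, $T_4=T_1^{*}$.
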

If $T_2 = 0$ we get $J T^\dagger J = T^\dagger$, and the BV transformation becomes a unitary transformation. Tsallis states that $T_2 = 0 \iff M_2=0$ \cite{Tsallis}, which explains why a unitary transformation was sufficient when considering the non-interacting, SOC Bose gas, and why it would fail in the case of the weakly interacting Bose gas.  
% \begin{proof}
% We start with 
% \begin{equation*}
%     T = 
% \begin{pmatrix}
% T_1 & T_2 \\
% T_3 & T_4 \\
% \end{pmatrix},
% \end{equation*}
% where the $T_i$ are arbitrary $n\cross n$ matrices. We define new operators $\boldsymbol{B} = T^\dagger \boldsymbol{A}$ and $\boldsymbol{B}^{\dagger} = \boldsymbol{A}^{\dagger}T$. Let us also define $A = (A_1, \dots, A_n)^T$ and $A^\dagger = (A_1^\dagger, \dots , A_n^\dagger)$ and similarly for $B$ and $B^\dagger$. Then we get
% \begin{align}
%     \begin{split}
%          \label{eq:braB}
%         \boldsymbol{B}^{\dagger} = \boldsymbol{A}^{\dagger}T \implies (B^\dagger, B^T) &= (A^\dagger, A^T)\begin{pmatrix} T_1 & T_2 \\ T_3 & T_4 \end{pmatrix} \\
%         &= (A^\dagger T_1 + A^T T_3, A^\dagger T_2 + A^T T_4),  
%     \end{split}
% \end{align}
% and
% \begin{equation}
%     \boldsymbol{B} = T^\dagger \boldsymbol{A} \implies \begin{pmatrix} B \\ B^{T\dagger} \end{pmatrix} = \begin{pmatrix} T_1^\dagger & T_3^\dagger \\ T_2^\dagger & T_4^\dagger \end{pmatrix} \begin{pmatrix} A \\ A^{T\dagger} \end{pmatrix}= \begin{pmatrix} T_1^\dagger A  + T_3^\dagger A^{T\dagger}  \\ T_2^\dagger A  + T_4^\dagger A^{T\dagger} \end{pmatrix}.
% \end{equation}
% Taking the transpose of \eqref{eq:braB} yields
% \begin{equation}
%     B = T_2^T A^{T\dagger} + T_4^T A = T_1^\dagger A  + T_3^\dagger A^{T\dagger},
% \end{equation}
% and
% \begin{equation}
%     B^{T\dagger} = T_1^T A^{T\dagger} + T_3^T A = T_2^\dagger A  + T_4^\dagger A^{T\dagger}.
% \end{equation}
% This leads to $T_3=T_2^*$ and $T_4 = T_1^*$ which was what we wanted to prove.
% \end{proof}

Let us define
\begin{equation}
    \Sigma_x = \begin{pmatrix} 0 & I \\ I&0    \end{pmatrix},
\end{equation}
and note that $\Sigma_x^2 = I$.
% \begin{theorem}
% \label{thm:sigmaxT}
% \begin{equation}
%     T = \Sigma_x T^* \Sigma_x  \mbox{\qquad and \qquad} T^{-1} = \Sigma_x (T^{-1})^* \Sigma_x.
% \end{equation}
% \end{theorem}
% \begin{proof}
% By theorem \ref{thm:Tform},
% \begin{equation}
%     T = 
%     \begin{pmatrix}
%     T_1 & T_2 \\
%     T_2^* & T_1^* \\
%     \end{pmatrix}.
% \end{equation}
% Thus,
% \begin{align}
%     \begin{split}
%          \Sigma_x T^* \Sigma_x &= \begin{pmatrix} 0 & I \\ I&0  \end{pmatrix} \begin{pmatrix} T_1^* & T_2^* \\ T_2 & T_1 \end{pmatrix} \begin{pmatrix} 0 & I \\ I&0    \end{pmatrix} =  \begin{pmatrix} T_2 & T_1  \\ T_1^* & T_2^*\end{pmatrix} \begin{pmatrix} 0 & I \\ I&0    \end{pmatrix} \\
%          &=  \begin{pmatrix} T_1 & T_2 \\ T_2^* & T_1^* \end{pmatrix} = T.
%     \end{split}
% \end{align}
% By taking the inverse, we obtain $T^{-1} = \Sigma_x (T^{-1})^* \Sigma_x$.
% \end{proof}
We notice that $((\Sigma_x\boldsymbol{A})^T)^\dagger = \boldsymbol{A}$, which is because $\boldsymbol{A}_{i+n} = (\boldsymbol{A}_i)^\dagger$, i.e. $A_i^\dagger$ is the Hermitian conjugate of $A_i$.  One may ask if the transformation $\boldsymbol{B} = T^\dagger \boldsymbol{A}$ preserves this. The following theorem proves this.
\begin{theorem}
    \label{thm:formket}
    If $((\Sigma_x\boldsymbol{A})^T)^\dagger = \boldsymbol{A}$ and $T^\dagger = JT^{-1}J$, then $\boldsymbol{B} = T^\dagger \boldsymbol{A}$ satisfies $((\Sigma_x\boldsymbol{B})^T)^\dagger = \boldsymbol{B}$, suggesting the last $n$ elements of $\boldsymbol{B}$ are the Hermitian conjugates of the first $n$ elements \cite{Xiao}. 
\end{theorem}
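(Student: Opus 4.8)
The plan is to prove this by a direct computation: push the operation $\boldsymbol{v}\mapsto((\Sigma_x\boldsymbol{v})^T)^\dagger$ through the definition $\boldsymbol{B}=T^\dagger\boldsymbol{A}$, and then collapse the result back to $\boldsymbol{B}$ using (i) the hypothesis $((\Sigma_x\boldsymbol{A})^T)^\dagger=\boldsymbol{A}$ and (ii) the block form of $T$ established in Theorem~\ref{thm:Tform}. Throughout one must keep in mind that for matrices and vectors of non-commuting operators the transpose reverses the order of a product, $\dagger$ is transposition followed by entrywise Hermitian conjugation, and $\Sigma_x^T=\Sigma_x^\dagger=\Sigma_x$ with $\Sigma_x^2=I$.

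First I would compute $(\Sigma_x\boldsymbol{B})^T=(\Sigma_x T^\dagger\boldsymbol{A})^T=\boldsymbol{A}^T(T^\dagger)^T\Sigma_x^T=\boldsymbol{A}^T T^*\Sigma_x$, using $(T^\dagger)^T=T^*$. Taking the Hermitian conjugate then gives $((\Sigma_x\boldsymbol{B})^T)^\dagger=\Sigma_x^\dagger (T^*)^\dagger(\boldsymbol{A}^T)^\dagger=\Sigma_x\,T^T\,(\boldsymbol{A}^T)^\dagger$, since $(T^*)^\dagger=T^T$. Next, from the hypothesis together with $(\Sigma_x\boldsymbol{A})^T=\boldsymbol{A}^T\Sigma_x$ one gets $\Sigma_x(\boldsymbol{A}^T)^\dagger=\boldsymbol{A}$, hence $(\boldsymbol{A}^T)^\dagger=\Sigma_x\boldsymbol{A}$. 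Substituting yields $((\Sigma_x\boldsymbol{B})^T)^\dagger=\Sigma_x\,T^T\,\Sigma_x\,\boldsymbol{A}$, so all that remains is to identify $\Sigma_x T^T\Sigma_x$ with $T^\dagger$.

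For that last step I would invoke Theorem~\ref{thm:Tform}: writing $T=\begin{pmatrix}T_1&T_2\\ T_2^*&T_1^*\end{pmatrix}$, a one-line block multiplication shows $\Sigma_x T^*\Sigma_x=T$; transposing this identity (and using $\Sigma_x^T=\Sigma_x$) gives $\Sigma_x T^T\Sigma_x=T^\dagger$. Combining everything, $((\Sigma_x\boldsymbol{B})^T)^\dagger=T^\dagger\boldsymbol{A}=\boldsymbol{B}$, which is the claim; in components this is exactly the statement that the last $n$ entries of $\boldsymbol{B}$ are the Hermitian conjugates of the first $n$.

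The routine part is just the bookkeeping of transposes and daggers on operator-valued objects — in particular recognizing that $(\boldsymbol{A}^T)^\dagger$ is again a column vector, equal to $\Sigma_x\boldsymbol{A}$. The point that deserves real care is that the bosonic-commutation condition $T^\dagger=JT^{-1}J$ alone is \emph{not} enough: the reduction $\Sigma_x T^T\Sigma_x=T^\dagger$ genuinely uses the anti-diagonal block structure of $T$. (As a sanity check, a diagonal phase matrix $T=\mathrm{diag}(e^{i\phi},e^{i\psi})$ with $\psi\neq-\phi$ satisfies $T^\dagger=JT^{-1}J$ yet makes $B_2\neq B_1^\dagger$, so Theorem~\ref{thm:Tform} must be fed in here.) Hence the only genuine obstacle is remembering to invoke that structural fact rather than merely the commutation relation.
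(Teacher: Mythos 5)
Your proof is correct, and since the paper itself omits a proof of this theorem (deferring to \cite{Xiao}), your direct computation fills that gap cleanly: the chain $((\Sigma_x\boldsymbol{B})^T)^\dagger=\Sigma_x T^T(\boldsymbol{A}^T)^\dagger=\Sigma_x T^T\Sigma_x\boldsymbol{A}$ followed by the block identity $\Sigma_x T^T\Sigma_x=T^\dagger$ is exactly the kind of bookkeeping argument the reference relies on. Your closing observation is the most valuable part and is correct: the conclusion does \emph{not} follow from $T^\dagger=JT^{-1}J$ alone, and your counterexample $T=\mathrm{diag}(e^{i\phi},e^{i\psi})$ with $\psi\neq-\phi$ (which satisfies $T^\dagger=JT^{-1}J$ since $J$ is diagonal, yet gives $B_2=e^{-i\psi}A_1^\dagger\neq e^{i\phi}A_1^\dagger=B_1^\dagger$) shows the theorem as literally stated needs the block structure of Theorem \ref{thm:Tform} as an additional hypothesis. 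In the paper's context this is harmless because Theorem \ref{thm:Tform} has already fixed the form of $T$, and your proof in fact uses only that block form together with $((\Sigma_x\boldsymbol{A})^T)^\dagger=\boldsymbol{A}$, never the commutation-preservation condition itself — a sharpening worth keeping in mind when citing this result.
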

% \begin{proof}
% Given that $T^\dagger = JT^{-1}J$, we have
% \begin{equation}
%     \Sigma_x\boldsymbol{B} = \Sigma_x JT^{-1}J\boldsymbol{A} = \Sigma_x J T^{-1} J \Sigma_x \Sigma_x \boldsymbol{A}.
% \end{equation}
% Taking the transpose yields
% \begin{equation}
%     (\Sigma_x\boldsymbol{B})^T = (\Sigma_x\boldsymbol{A})^T \Sigma_x J  (T^{-1})^T  J \Sigma_x.
% \end{equation}
% The Hermitian conjugate of this is
% \begin{align}
%     \begin{split}
%         ((\Sigma_x\boldsymbol{B})^T)^\dagger &= \Sigma_x J (T^{-1})^*  J \Sigma_x ((\Sigma_x\boldsymbol{A})^T)^\dagger \\
%         &= (\Sigma_x J \Sigma_x) (\Sigma_x (T^{-1})^* \Sigma_x) (\Sigma_x J \Sigma_x) ((\Sigma_x\boldsymbol{A})^T)^\dagger.    
%     \end{split}
% \end{align}
% Here, one can easily show that $\Sigma_x J \Sigma_x = -J$, by theorem \ref{thm:sigmaxT} $\Sigma_x (T^{-1})^* \Sigma_x = T^{-1}$, and we have $((\Sigma_x\boldsymbol{A})^T)^\dagger = \boldsymbol{A}$. Thus we arrive at
% \begin{equation}
%     ((\Sigma_x\boldsymbol{B})^T)^\dagger = JT^{-1}J \boldsymbol{A} = \boldsymbol{B}.
% \end{equation}
% Hence, $\boldsymbol{B} = (B_1, \dots, B_n, B_1^\dagger, \dots, B_n^\dagger)^T$ when $\boldsymbol{A} = (A_1, \dots, A_n, A_1^\dagger, \dots, A_n^\dagger)^T$.
% \end{proof}
Hence, $\boldsymbol{B} = (B_1, \dots, B_n, B_1^\dagger, \dots, B_n^\dagger)^T$ when $\boldsymbol{A} = (A_1, \dots, A_n, A_1^\dagger, \dots, A_n^\dagger)^T$. It can also be shown that the eigenvalues, when real, are equally distributed around $0$.
\begin{theorem}
\label{thm:dist}
    Real eigenvalues of $MJ$ are equally distributed around $0$.  
\end{theorem}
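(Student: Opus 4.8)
The plan is to show that if $\lambda$ is an eigenvalue of $MJ$, then so is $-\lambda$, which immediately gives the symmetric distribution about $0$. The starting point is the Hermiticity of $M$ together with the defining property of $J$, namely $J^2=I$ and $J^\dagger = J$. I would exploit the relation between $MJ$ and its Hermitian conjugate: since $M^\dagger = M$, we have $(MJ)^\dagger = J^\dagger M^\dagger = JM$, and as was already shown in the excerpt, $JM$ and $MJ$ have the same set of eigenvalues (their eigenvectors differ by multiplication by $J$). So the eigenvalues of $MJ$ coincide with those of $(MJ)^\dagger$, hence the spectrum of $MJ$ is invariant under complex conjugation; for real eigenvalues this is automatic and gives nothing new, so a second ingredient is needed.

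The second ingredient is the structure of $M$, i.e. the block form $M = \begin{pmatrix} M_1 & M_2 \\ M_2^* & M_1^* \end{pmatrix}$ with $M_1^\dagger = M_1$, $M_2^T = M_2$. The key algebraic fact I would establish is that $\Sigma_x (MJ) \Sigma_x = -(MJ)^*$, using $\Sigma_x M \Sigma_x = M^*$ (which follows directly from the block form) and $\Sigma_x J \Sigma_x = -J$ (immediate from the block forms of $\Sigma_x$ and $J$). Combining these: $\Sigma_x (MJ) \Sigma_x = (\Sigma_x M \Sigma_x)(\Sigma_x J \Sigma_x) = M^*(-J) = -(MJ)^*$. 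Now suppose $MJ\,\boldsymbol{x} = \lambda\boldsymbol{x}$. Taking complex conjugates, $(MJ)^*\boldsymbol{x}^* = \lambda^*\boldsymbol{x}^*$, so $\Sigma_x(MJ)\Sigma_x \boldsymbol{x}^* = -\lambda^*\boldsymbol{x}^*$, i.e. $(MJ)(\Sigma_x\boldsymbol{x}^*) = -\lambda^*(\Sigma_x\boldsymbol{x}^*)$. Hence $-\lambda^*$ is an eigenvalue of $MJ$ whenever $\lambda$ is. When $\lambda$ is real this says $-\lambda$ is an eigenvalue, which is exactly the claim; and since $\Sigma_x\boldsymbol{x}^*\neq 0$, the eigenvector is genuine.

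To get the full statement "equally distributed around $0$" I would also note that this pairing $\lambda \mapsto -\lambda^*$ preserves multiplicities: the map $\boldsymbol{x}\mapsto \Sigma_x\boldsymbol{x}^*$ is an invertible conjugate-linear involution on $\mathbb{C}^{2n}$ carrying the $\lambda$-eigenspace isomorphically onto the $(-\lambda^*)$-eigenspace, so $\dim\ker(MJ-\lambda I) = \dim\ker(MJ+\lambda^* I)$. Restricting to the real eigenvalues, this gives a multiplicity-preserving bijection $\lambda\leftrightarrow -\lambda$ on the real spectrum, which is the precise sense in which the real eigenvalues are symmetric about $0$.

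The main obstacle is not any single computation — each identity ($\Sigma_x M\Sigma_x = M^*$, $\Sigma_x J \Sigma_x = -J$) is a one-line block-matrix check — but rather making sure the argument is stated at the right level of generality: one must be careful that the claim is only about the \emph{real} eigenvalues (complex eigenvalues pair as $\lambda\leftrightarrow-\lambda^*$, not $\lambda\leftrightarrow-\lambda$), and that when $\lambda=0$ the "pair" is $\{0,0\}$ so the count is still consistent. I would also double-check that I do not secretly need diagonalizability of $MJ$ here; the eigenvalue/eigenvector argument above works for any eigenvalue, so the theorem holds regardless, and diagonalizability only becomes relevant later when constructing the transformation matrix $T$.
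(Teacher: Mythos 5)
Your proof is correct and is essentially the paper's own argument: the map $\boldsymbol{x}\mapsto \Sigma_x\boldsymbol{x}^*$ is exactly the antilinear operator $K$ used in the paper, and your identities $\Sigma_x M\Sigma_x = M^*$ and $\Sigma_x J\Sigma_x = -J$ are precisely the statements $[M,K]=0$ and $\{J,K\}=0$ from which the paper derives $MJK\boldsymbol{x} = -\lambda^* K\boldsymbol{x}$. Your added remarks on multiplicity preservation and not needing diagonalizability are fine but do not change the route.
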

\begin{proof}
The proof involves introducing an operator $K$ such that \cite{Xiao,Hemmen}
\begin{equation}
    K \begin{pmatrix} u \\ v \end{pmatrix} = \Sigma_x \begin{pmatrix} u \\ v \end{pmatrix}^* = \begin{pmatrix} v^* \\ u^* \end{pmatrix},
\end{equation}
where $u$ and $v$ represent column vectors of length $n$.
It is easy to show that $\{J, K\} = 0$ and $[M, K] = 0$ \cite{Hemmen}. Thus, if $MJ\boldsymbol{x} = \lambda \boldsymbol{x}$,
\begin{equation}
    MJK\boldsymbol{x} = -KMJ\boldsymbol{x} = -K\lambda\boldsymbol{x} =  -\lambda^* K\boldsymbol{x},
\end{equation}
which shows that if $\boldsymbol{x}$ is an eigenvector of $MJ$ with eigenvalue $\lambda$, then $K\boldsymbol{x}$ is an eigenvector with eigenvalue $-\lambda^*$. In particular, when the eigenvalues of $MJ$ are real, we have that if $\boldsymbol{x}$ is an eigenvector of $MJ$ with eigenvalue $\lambda$, then $K\boldsymbol{x}$ is an eigenvector with eigenvalue $-\lambda$. Hence, when $\pm\omega_i \in \mathbb{R}$, $i=1,\dots,n$ are the eigenvalues of $MJ$, $D$ can be written $D = \textrm{diag}(\omega_1, \dots, \omega_n, \omega_1, \dots, \omega_n)$.
\end{proof}

We can now define what we mean by $M$ being what Xiao \cite{Xiao} calls Bogoliubov-Valatinianly (BV) diagonalizable: There exists a matrix $T$ on the form
\begin{equation}
\label{eq:Tform}
    T = 
    \begin{pmatrix}
    T_1 & T_2 \\
    T_2^* & T_1^* \\
    \end{pmatrix},
\end{equation}
with the property $T^{-1} = JT^\dagger J$, such that $$\boldsymbol{A}^{\dagger}M\boldsymbol{A} = (\boldsymbol{A}^{\dagger}T)(T^{-1}MJTJ)(JT^{-1}J\boldsymbol{A}) = \boldsymbol{B}^{\dagger}D\boldsymbol{B},$$ where $D$ is diagonal with real entries. Here we defined $\boldsymbol{B} = T^\dagger \boldsymbol{A}$ and $\boldsymbol{B}^{\dagger} = \boldsymbol{A}^{\dagger}T$. These will satisfy the commutation relation $\boldsymbol{B}\otimes \boldsymbol{B}^{\dagger} -((\boldsymbol{B}^{\dagger})^T\otimes(\boldsymbol{B})^T)^T=J$ by theorem \ref{thm:non-unitary}, and thus consist of bosonic operators $B_i$ and $B_i^\dagger$. Using theorem \ref{thm:formket}, theorem \ref{thm:dist} and commutators, the diagonalized Hamiltonian can be written
\begin{equation}
\label{eq:diagtheorydiagH}
    H = \boldsymbol{B}^{\dagger}D\boldsymbol{B} = 2\sum_{i=1}^{n} \omega_i \left(B_i^\dagger B_i +\frac{1}{2}\right).
\end{equation}
We see that real entries in $D$ are required such that the Hamiltonian remains Hermitian. We are now ready to prove the main result. This is the same as Theorem 29 in \cite{Xiao}.
\begin{theorem}
    The fact that $MJ$ is digonalizable and has real eigenvalues is equivalent to the fact that the BV diagonalization procedure we have defined for $M$ exists.
\end{theorem}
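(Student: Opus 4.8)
The statement is an equivalence, so I would prove the two implications separately, and the bulk of the work lies in only one of them.

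\textbf{Easy direction (BV diagonalizability $\Rightarrow$ $MJ$ diagonalizable with real spectrum).} Suppose there is a $T$ of the block form \eqref{eq:Tform} with $T^{-1} = JT^\dagger J$ such that $D = T^{-1}MJTJ$ is diagonal with real entries. Multiplying on the right by $J$ and using $J^2 = I$ gives $DJ = T^{-1}(MJ)T$. Since $D$ and $J$ are both diagonal, $DJ$ is diagonal with real entries, and it is similar to $MJ$; hence $MJ$ is diagonalizable with real eigenvalues. Nothing more is needed.

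\textbf{Hard direction (real diagonalizable $MJ$ $\Rightarrow$ BV diagonalization exists).} Now assume $MJ$ is diagonalizable with real eigenvalues. The plan is to assemble $T$ from a carefully chosen basis of eigenvectors of $MJ$. Introduce the (indefinite) Hermitian form $\langle\boldsymbol{x},\boldsymbol{y}\rangle_J \equiv \boldsymbol{x}^\dagger J\boldsymbol{y}$. First I would show that eigenvectors of $MJ$ for distinct eigenvalues are $J$-orthogonal: if $MJ\boldsymbol{x}_i = \lambda_i\boldsymbol{x}_i$ and $MJ\boldsymbol{x}_j = \lambda_j\boldsymbol{x}_j$ with $\lambda_i,\lambda_j\in\mathbb{R}$, then conjugating the first relation and using $M^\dagger = M$ gives $\boldsymbol{x}_i^\dagger JM = \lambda_i\boldsymbol{x}_i^\dagger$, so $\lambda_i\langle\boldsymbol{x}_i,\boldsymbol{x}_j\rangle_J = \boldsymbol{x}_i^\dagger JMJ\boldsymbol{x}_j = \lambda_j\langle\boldsymbol{x}_i,\boldsymbol{x}_j\rangle_J$, forcing $\langle\boldsymbol{x}_i,\boldsymbol{x}_j\rangle_J = 0$ when $\lambda_i\neq\lambda_j$. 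Since $MJ$ is diagonalizable, $\mathbb{C}^{2n}$ decomposes as a direct sum of its eigenspaces, which are mutually $J$-orthogonal; as $J$ is nondegenerate, the restriction of $\langle\cdot,\cdot\rangle_J$ to each eigenspace $E_\lambda$ is therefore nondegenerate as well. On each $E_\lambda$ I would run Gram--Schmidt for a nondegenerate Hermitian form to get a basis on which $\langle\cdot,\cdot\rangle_J$ is diagonal with entries $\pm1$. To turn these eigenvectors into a $T$ with the required block structure and normalization, I would reuse Theorem~\ref{thm:dist} and the antilinear operator $K$ appearing in its proof: from $\{J,K\}=0$, $[M,K]=0$ and $K^2 = I$ one gets that $K$ maps $E_\lambda$ bijectively onto $E_{-\lambda}$ and satisfies $\langle K\boldsymbol{x},K\boldsymbol{y}\rangle_J = -\langle\boldsymbol{x},\boldsymbol{y}\rangle_J^*$. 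Hence, choosing the eigenvectors for $-\lambda$ to be $K$ applied to those for $\lambda$ (and pairing vectors within $E_0$ with their $K$-images) both forces the signature of $\langle\cdot,\cdot\rangle_J$ to split evenly and produces, after ordering the basis as $\boldsymbol{x}_1,\dots,\boldsymbol{x}_n,\boldsymbol{x}_{n+1},\dots,\boldsymbol{x}_{2n}$, a matrix $T$ of the block form \eqref{eq:Tform} with $\langle\boldsymbol{x}_i,\boldsymbol{x}_j\rangle_J = J_{ij}$; this is exactly the construction behind Theorems~\ref{thm:Tform} and \ref{thm:formket}. The relation $\langle\boldsymbol{x}_i,\boldsymbol{x}_j\rangle_J = J_{ij}$ says $T^\dagger JT = J$, i.e.\ $T^{-1} = JT^\dagger J$; and since the columns of $T$ are eigenvectors of $MJ$, $T^{-1}(MJ)T = DJ$ is diagonal with real entries, so $D$ is diagonal and real. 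This is precisely the BV diagonalization, completing the proof, and the argument mirrors Theorem~29 of \cite{Xiao}.

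The step I expect to be the main obstacle is the middle one: establishing that $\langle\cdot,\cdot\rangle_J$ is nondegenerate on each eigenspace and, more delicately, that its signature there is balanced so that the normalized eigenvector matrix can be put in block form with $T^\dagger JT = J$ rather than merely $T^\dagger JT$ equal to some other $\pm1$ diagonal matrix. This is the only place where diagonalizability of $MJ$ (not just reality of its spectrum) is genuinely used, and where Theorem~\ref{thm:dist} and the $K$-operator do the essential work; the case of a nontrivial kernel $E_0$ needs the separate observation that $K$ restricted to $E_0$ still pairs $+1$ directions with $-1$ directions.
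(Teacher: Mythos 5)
Your proof is correct and takes essentially the same route as the paper's: the easy direction by multiplying by $J$, and the hard direction by assembling $T$ from eigenvectors of $MJ$ using $J$-orthogonality across distinct real eigenvalues, orthonormalization within each eigenspace, and the $K$-operator pairing of $E_\lambda$ with $E_{-\lambda}$ — precisely the machinery the paper develops in chapter \ref{sec:setupT} and borrows from \cite{Xiao}. You in fact spell out the key step (nondegeneracy of the $J$-form on each eigenspace and the balancing of its signature) that the paper's own proof only defers to that later discussion and to Theorem 29 of \cite{Xiao}.
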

\begin{proof}
Assume $MJ$ is diagonalizable, and the eigenvalues are real. Then there exists a matrix $T$ with the property that $T^{-1}(MJ)T = DJ$, where $DJ$ is diagonal. I.e. the matrix $T$ is invertible, which is equivalent to its columns being linearly independent. Its columns are the eigenvectors of $MJ$, and so $MJ$ being diagonalizable is equivalent to saying that $MJ$ has $2n$ linearly independent eigenvectors. We will discuss further in chapter \ref{sec:setupT} why this, together with $MJ$ having real eigenvalues is enough to ensure that we can construct a matrix $T$ with the property $JT^\dagger J = T^{-1}$ that simultaneously obeys $T^{-1}(MJ)T = DJ$, where $DJ$ is diagonal. Hence, the new operators defined during the diagonalization are bosonic. By theorem \ref{thm:dist} the eigenvalues, when real, can be written $\pm\omega_i$, with $i=1, \dots, n$. Thus, $DJ = \textrm{diag}(\omega_1, \dots, \omega_n, -\omega_1, \dots, -\omega_n)$ and $D = \textrm{diag}(\omega_1, \dots, \omega_n, \omega_1, \dots, \omega_n)$. Hence, we can write the Hamiltonian as in \eqref{eq:diagtheorydiagH}.
% \begin{equation}
%     H = \boldsymbol{B}^{\dagger}D\boldsymbol{B} = 2\sum_{i=1}^{n} \omega_i \left(B_i^\dagger B_i +\frac{1}{2}\right).
% \end{equation}
Because we assume $\omega_i \in \mathbb{R}$, this Hamiltonian is diagonal and Hermitian, and thus we conclude that $M$ can be BV diagonalized.

To prove equivalence we must also show the opposite implication. Assume $M$ can be BV diagonalized, i.e. that there exists a matrix $T$ such that $T^{-1}(MJ)TJ = D$, where $D$ is diagonal. By multiplying from the right by $J$ we obtain $T^{-1}(MJ)T = DJ$, where, by the definition of $J$, $DJ$ is diagonal if $D$ is diagonal. This proves that if $M$ is BV diagonalizable, $MJ$ is diagonalizable. By the definition of BV diagonalization, $D$ has real entries. Thus, $DJ$ has real entries. As these will be the eigenvalues of $MJ$, it is clear that the eigenvalues of $MJ$ are real. For a more rigorous proof, see \cite{Xiao}.
\end{proof}
%Note that this proof is somewhat heuristic, for a more rigorous proof see \cite{Xiao}.

\subsection{Setting Up the Transformation Matrix} \label{sec:setupT}
For $T$ to be invertible its $2n$ columns must be linearly independent, i.e. $\textrm{Rank}(T) = 2n$, meaning that $MJ$ has to have $2n$ linearly independent eigenvectors. Theorem \ref{thm:non-unitary} might lead one to believe that $T$ satisfies $JT^\dagger J = T^{-1}$ automatically. This is not true, it is in fact a requirement for the diagonalization procedure to describe the system in terms of bosonic quasiparticles. Therefore, we have to be careful in setting up $T$, such that $JT^\dagger J = T^{-1}$, or equivalently $T^\dagger J T = J$, is satisfied. Naming the eigenvectors $\boldsymbol{x}_i$ the requirement $T^\dagger J T = J$ can be written
\begin{equation}
    \boldsymbol{x}_i^\dagger J \boldsymbol{x}_j = J_{ij} = \begin{cases} \delta_{ij} \textrm{\quad if\quad} i\leq n \\
    -\delta_{ij} \textrm{\quad if\quad} i>n \end{cases}.
\end{equation}
We name this requirement BV orthonormalization, and we have to choose our eigenvectors such that they satisfy this. We notice that the BV norm of an eigenvector $\boldsymbol{x}$ in principle can be $\boldsymbol{x}^\dagger J \boldsymbol{x} = 0$. If so, we will not be able to construct $T$. This is the case for complex eigenvalues. For an eigenvalue $\lambda$ we have $MJ\boldsymbol{x} = \lambda \boldsymbol{x}$. Multiplying from the right by $J$ and then by $\boldsymbol{x}^\dagger$ we get
\begin{equation}
    \boldsymbol{x}^\dagger JMJ \boldsymbol{x} = \lambda \boldsymbol{x}^\dagger J \boldsymbol{x}.
\end{equation}
Both sides of this equation are at first glance complex numbers. However, the left hand side must be real because it is its own Hermitian conjugate due to the Hermiticity of $M$. Hence, when $\lambda \in \mathbb{C}$ we get $\boldsymbol{x}^\dagger JMJ \boldsymbol{x} = \boldsymbol{x}^\dagger J \boldsymbol{x} = 0$. I.e. complex eigenvalues have BV norm zero eigenvectors, and these can not be used to construct a matrix $T$ that satisfies $JT^\dagger J = T^{-1}$. Assuming $MJ$ is diagonalizable with real eigenvalues, lemmas 22 and 23 in \cite{Xiao} proves that a BV orthonormalized set of eigenvectors exists for the eigenspace corresponding to any $\omega_i$. %The remaining problem is if $\omega_i = 0$ for some $i$. Lemma 23 in \cite{Xiao} proves that a BV orthonormalized set of eigenvectors exists for the eigenspace corresponding to zero eigenvalue given that $MJ$ is diagonalizable with real eigenvalues.
%, we have $T^{-1}MJT = DJ$. We insert $I = JJ$, multiply from the right by $J$ and use that $JDJ = D$ to find $JT^{-1}(JJ)MJT = D$ or $T^\dagger JMJ T = D$. The columns of $T$ are the eigenvectors $\boldsymbol{x}$ of $MJ$, so
% \begin{equation}
%     \omega_i\boldsymbol{x}_i^\dagger J \boldsymbol{x}_i = \boldsymbol{x}_i^\dagger JMJ \boldsymbol{x}_i = D_{ii} = \begin{cases} \omega_i \mbox{\qquad if \qquad } i\leq n \\ \omega_{i-n} \mbox{\qquad if \qquad } i>n\end{cases}.
% \end{equation}
% We assume $\omega_i \neq 0$ but do not specify a sign. Then, for $i\leq n$ we have $\boldsymbol{x}_i^\dagger J \boldsymbol{x}_i = 1$ while for $i>n$, using $\omega_i = -\omega_{i-n}$, we have $\boldsymbol{x}_i^\dagger J \boldsymbol{x}_i = -1$ just as required for the BV diagonalization. The remaining problem is if $\omega_i = 0$ for some $i$. Lemma 23 in \cite{Xiao} proves that a BV orthonormalized set of eigenvectors exist for the eigenspace corresponding to zero eigenvalue given that $MJ$ is diagonalizable with real eigenvalues.Lemma 23 in \cite{Xiao} proves that a BV orthonormalized set of eigenvectors exist for the eigenspace corresponding to zero eigenvalue given that $MJ$ is diagonalizable with real eigenvalues.

Assuming real eigenvalues, we can prove that eigenvectors corresponding to different eigenvalues are BV orthogonal:
\begin{equation}
    \omega_i \boldsymbol{x}_i^\dagger J \boldsymbol{x}_j = (MJ\boldsymbol{x}_i)^\dagger J \boldsymbol{x}_j = \boldsymbol{x}_i^\dagger JMJ \boldsymbol{x}_j = \omega_j \boldsymbol{x}_i^\dagger J \boldsymbol{x}_j.
\end{equation}
Thus, when $i\neq j$ and $\omega_i \neq \omega_j$ we must have $\boldsymbol{x}_i^\dagger J \boldsymbol{x}_j = 0$. The problem if we had complex eigenvalues is that we have to replace $\omega_i$ by $\omega_i^*$ on the left hand side. And $\omega_i \neq \omega_j$ does not exclude $\omega_i^* = \omega_j$, meaning that for complex eigenvalues there is no guarantee that different eigenvalues can have BV orthogonal eigenvectors. From now on we assume the eigenvalues are real.

When we are forced to work numerically, the eigenvectors provided by the numerical routine for degenerate eigenvalues are not in general BV orthogonal. If we have several equal eigenvalues, we can use the given set of eigenvectors to BV orthonormalize the eigenspace corresponding to these eigenvalues. This can be accomplished by a BV modified Gram-Schmidt process (BVMGS), and the resulting BV orthonormalized vectors will still be eigenvectors corresponding to the original eigenvalue. 

A modified Gram-Schmidt (MGS) process suited for numerics is explained in \cite{GramS}. If one has a set of vectors $\boldsymbol{v}_i$ to be orthonormalized, one can use the following process. Let $\boldsymbol{u}_1 = \boldsymbol{v}_1$. Then, for $k>1$
\begin{align}
    \begin{split}
        \boldsymbol{u}_k^{(1)} &= \boldsymbol{v}_k - \textrm{proj}_{\boldsymbol{u}_1} \boldsymbol{v}_k, \\
        \boldsymbol{u}_k^{(i)} &= \boldsymbol{u}_k^{(i-1)} - \textrm{proj}_{\boldsymbol{u}_{i}} \boldsymbol{u}_k^{(i-1)}, \mbox{\qquad for \qquad} i = 2, \dots, k-1, \\
        \boldsymbol{u}_k &= \frac{\boldsymbol{u}_k^{(k-1)}}{\abs{\boldsymbol{u}_k^{(k-1)}}}.
    \end{split}
\end{align}
BVMGS has two main differences from MGS. First, we replace the inner product by the definition $\langle \boldsymbol{u}, \boldsymbol{v} \rangle = \boldsymbol{u}^\dagger J \boldsymbol{v}$, and make sure the order in these products are such that the new vectors are in fact BV orthogonal. Thus, we change the definition of the projection operator to
\begin{equation}
    \textrm{proj}_{\boldsymbol{u}} \boldsymbol{v} = \boldsymbol{u}\frac{\langle \boldsymbol{u}, \boldsymbol{v}\rangle}{\langle \boldsymbol{u}, \boldsymbol{u}\rangle} = \boldsymbol{u}\frac{\boldsymbol{u}^\dagger J \boldsymbol{v}}{\boldsymbol{u}^\dagger J \boldsymbol{u}}.
\end{equation}
As an example, let us say we have two vectors $\boldsymbol{x}_1$ and $\boldsymbol{x}_2$. Then $\boldsymbol{u}_1 = \boldsymbol{x}_1$ and $\boldsymbol{u}_2 = \boldsymbol{x}_2-\boldsymbol{u}_1 (\boldsymbol{u}_1^\dagger J \boldsymbol{x}_2)/\boldsymbol{u}_1^\dagger J \boldsymbol{u}_1$. As we can see $$\boldsymbol{u}_1^\dagger J \boldsymbol{u}_2 = \boldsymbol{u}_1^\dagger J \boldsymbol{x}_2 - \boldsymbol{u}_1^\dagger J \boldsymbol{x}_2 = 0,$$ meaning the two new vectors are BV orthogonal. In the end it is just a matter of BV normalizing the set by the rule $\boldsymbol{e}_i = \boldsymbol{u}_i/\sqrt{|\boldsymbol{u}_i^\dagger J \boldsymbol{u}_i|}$. 

The second change we make, is that we find two vectors at a time instead of one vector at a time. Let us say the eigenvalue $\lambda$ has multiplicity $m$. Then we use the modified Gram-Schmidt process on the $2m$ eigenvectors provided for $\lambda$ and $-\lambda$. The reason we include the eigenvectors for $-\lambda$ as well, is that theorem \ref{thm:dist} tells us there is a close relationship between the eigenvectors of $\lambda$ and $-\lambda$. We choose one of these $2m$ eigenvectors that has a nonzero BV norm as our start, $\boldsymbol{u}_1$. Then, we also include the vector resulting from applying the operator $K$ ($K \boldsymbol{u} = \Sigma_x \boldsymbol{u}^*$) on the first vector, $K\boldsymbol{u_1}$. Note that by the definition of the operator $K$, $\boldsymbol{y}^\dagger J (K\boldsymbol{y}) = 0$ for any $\boldsymbol{y}$ of length $2n$, i.e. $K\boldsymbol{y}$ is BV orthogonal to $\boldsymbol{y}$. 

Next, we find a new vector BV orthogonal on the first two, $\boldsymbol{u}_2$, make sure that the vector resulting from applying the operator $K$ to this vector, $K\boldsymbol{u}_2$, is also BV orthogonal to the first two, and then include both of these. This is continued, until we have a set of $2m$ new BV orthonormalized vectors. Finally, the $m$ vectors with BV norm $1$ are put in the left half of $T$. Once we have constructed the left half of $T$, it is a simple matter to fill in the right half, as we know that the form of $T$ is \eqref{eq:Tform}.
% \begin{equation}
%     T = 
%     \begin{pmatrix}
%     T_1 & T_2 \\
%     T_2^* & T_1^* \\
%     \end{pmatrix}.
% \end{equation}
The same method can also be used in the case that $\lambda = 0$ with multiplicity $2m$, one simply thinks of the first $m$ occurrences of 0 as $\lambda$ and the last $m$ occurrences of 0 as $-\lambda$. 

Note that for $\lambda>0$ there is no guarantee that the BV norm $1$ vectors will correspond to $\lambda$ and the BV norm $-1$ vectors correspond to $-\lambda$. The important part for the diagonalization procedure is that the eigenvectors with BV norm $1$ are put in the left half of $T$, which automatically puts the eigenvectors with BV norm $-1$ in the right half. The consequence of this, is that the diagonalized matrix $D$ may contain some eigenvalues with a negative sign. See e.g. example 30 in \cite{Xiao}.
%An example of this is shown in appendix \ref{app:Negeigenvaluediag}. In the same appendix some alternative approaches are discussed, and it is argued that they give the same results.

% Finally, a note about some alternate approaches. It can be shown that permuting entries in the operator vector $\boldsymbol{A}$, and hence interchanging the corresponding rows and columns in $M$ yields the same eigenvalues. The Boboliubov transformation is a symplectic transformation. However, it is also a special case of a BV transformation. Setting up the transformation matrix $T$ in the BV transformation will in general be more straight forward than setting up the symplectic transformation matrix. It can be shown that the two approaches would yield the same eigenvalues, and the BV transformation will be preferred. 

\subsection{Summary of Diagonalization Theory}
In the context of diagonalizing Hamiltonians that are quadratic in bosonic operators, we have defined BV diagonalization of a matrix $M$ and shown that it is equivalent to the matrix $MJ$ being diagonalizable with real eigenvalues. This means that if we can show that $MJ$ has real eigenvalues and $2n$ linearly independent eigenvectors then $M$ is BV diagonalizable. We have also made some rules one should follow in setting up the transformation matrix $T$. Additionally, we discussed how complex eigenvalues of $MJ$ are related to instabilities in the system described by the Hamiltonian. 

\cleardoublepage
%===================================== CHAP 3 =================================

\chapter{Mean Field Theory and Phases} \label{chap:MFT}

\section{Mean Field Theory} \label{sec:MFTH}
In \cite{master}, Janssønn developed a framework to describe a SOC, weakly interacting BEC in a Bravais lattice by employing mean field theory (MFT) to reduce the Hamiltonian to a form that was at most quadratic in excitation operators. The operator independent part of the Hamiltonian was then used to study a pure condensate in a square lattice, wherein the most interesting phases of the system were identified, and a phase diagram was presented. The objective of this thesis is to obtain the excitation spectrum, critical superfluid velocity and free energy in these phases. Finally, constructing a phase diagram based on the free energy will be interesting, in order to investigate if the effects of the excitations change the conclusions in \cite{master}. Janssønn used a MFT approach based on van Oosten et. al. \cite{Oosten}. This describes the system as a grand canonical ensemble where the chemical potential, $\mu$, determines the number of particles in the condensate. Condensate operators are replaced by their mean values plus a fluctuation, and terms linear in fluctuations are used to determine the chemical potential \cite{Oosten}.

Due to difficulties encountered regarding the BV diagonalization at the condensate momenta in the many-fold cases, we instead employ the method used by Bogoliubov \cite{bogoliubov1947theory} to the continuum dilute Bose gas. This is also presented by Pethick and Smith \cite{PethickSmith}, Pitaevskii and Stringari \cite{Pitaevskii} and Abrikosov, Gorkov and Dzyaloshinski \cite{abrikosov}. The same MFT approach was applied by Linder and Sudbø \cite{LS} and Toniolo and Linder \cite{Toniolo} in the presence of an optical lattice. Additionally, this was the method we followed in chapter \ref{sec:Weaklyinteracting} when treating the weakly interacting Bose gas in a square lattice. The fluctuations are set to zero by assumption, and hence the condensate operators are replaced by their mean value only \cite{bogoliubov1947theory, PethickSmith, Pitaevskii, abrikosov, LS, Toniolo}. This is what is usually called the Bogoliubov approach and is argued to be valid in 3D for $na_s^3 \ll 1$ in \cite{Bogoliubovvalid}, where $n$ is the total number of atoms per volume. This is the same as the requirement of diluteness, and will soon be discussed in conjunction with our 2D system.

%In \cite{master}, Janssønn used a mean field theory (MFT) approach based on van Oosten et. al. \cite{Oosten}. This describes the system as a grand canonical ensemble where the chemical potential $\mu$ determines the number of particles in the condensate. Condensate operators are replaced by their mean values plus a fluctuation, and terms linear in fluctuations are used to determine the chemical potential \cite{Oosten}. We instead employ the method used by Bogoliubov \cite{bogoliubov1947theory}, Pethick and Smith \cite{PethickSmith}, Pitaevskii and Stringari \cite{Pitaevskii} and Abrikosov, Gorkov and Dzyaloshinski \cite{abrikosov} to the continuum dilute Bose gas. The same method was applied by Linder and Sudbø \cite{LS} and Toniolo and Linder \cite{Toniolo} in the presence of an optical lattice. Additionally, this was the method we followed in \ref{sec:Weaklyinteracting} when treating the weakly interacting Bose gas. The fluctuations are set to zero by assumption, and hence the condensate operators are replaced by their mean value only \cite{bogoliubov1947theory, PethickSmith, Pitaevskii, abrikosov, LS, Toniolo}. The latter approach is what is usually called the Bogoliubov approach and is argued to be valid in 3D for $na_s^3 \ll 1$ in \cite{Bogoliubovvalid}, where $n$ is the total number of atoms per volume. This is the same as the requirement of diluteness, and will soon be discussed in conjunction with our 2D system.

We will use 
\begin{equation}
    \label{eq:NN0excitup}
    N^\uparrow = N_0^\uparrow + \left.\sum_{\boldsymbol{k}}\right.^{'} A_{\boldsymbol{k}}^{\uparrow\dagger} A_{\boldsymbol{k}}^{\uparrow}
\end{equation}
and
\begin{equation}
    \label{eq:NN0excitdown}
    N^\downarrow = N_0^\downarrow + \left.\sum_{\boldsymbol{k}}\right.^{'} A_{\boldsymbol{k}}^{\downarrow\dagger} A_{\boldsymbol{k}}^{\downarrow}
\end{equation}
to replace the number of particles with pseudospin $\alpha$ in the condensate, $N_0^\alpha$, by $N^\alpha$, the total number of particles with pseudospin $\alpha$ in the system. The sums $\left.\sum_{\boldsymbol{k}}\right.^{'}$ exclude any condensate momenta. We fix $N^\uparrow$, $N^\downarrow$ and hence also fix $N = N^\uparrow + N^\downarrow$, the total number of particles in the system. Thus, we consider the system as a canonical ensemble. The chemical potentials, $\mu^\alpha$, are removed from the description, it is now the total number of particles of each pseudospin type, $N^\alpha$, that are interpreted as the input parameters. The sum of \eqref{eq:NN0excitup} and \eqref{eq:NN0excitdown} must also be true,
\begin{equation}
\label{eq:NN0excit}
    N = N_0 + \left.\sum_{\boldsymbol{k}}\right.^{'}\sum_\alpha A_{\boldsymbol{k}}^{\alpha\dagger} A_{\boldsymbol{k}}^{\alpha}.
\end{equation}
In the cases where $N_0^\uparrow = 0$ or $N_0^\downarrow = 0$ this will be the most relevant equation.

In typical experiments the total number of particles, $N$, is often set equal to the number of lattice sites $N_s$. Both \cite{PethickSmith} and \cite{Pitaevskii} mention in their chapters concerning Bose gases in optical lattices that the filling $N/N_s$ is of order unity in 3D. In an experiment in a 2D optical lattice studying the Mott insulator phase $N/N_s=1$ was used \cite{expBEC2dfillingSpielman}. An experiment in a 3D optical lattice studying the superfluid to Mott insulator transition also used $N/N_s=1$ \cite{expBEC3dfillingKetterle}. In the same experiments, the typical lattice size is $N_s = (1-3) \cdot 10^{5}$. Furthermore, $N/N_s=1$ seems to be a typical assumption in several theoretical papers \cite{Toniolo, Oosten}. 

%Also, note that the Mott insulator can only occur for integer filling and $U\gg t$. We are not concerned with the Mott insulator phase. We focus on the superfluid phase, wherein we assume $U\ll t$, and therefore need not limit ourselves to integer filling. 
We assumed that only two-body scatterings are relevant when constructing our Hamiltonian. The condition for this to be valid is that the the Bose gas is sufficiently dilute. It is important that the number of atoms in an interaction volume is small. In 3D with $n=N/V$, where $V$ is the volume of the system, one requires $na_s^3 \ll 1$ \cite{Pitaevskii}, where $a_s$ is the s-wave scattering length. As mentioned, this is the same requirement that is used to check the validity of the Bogoliubov approach  \cite{Bogoliubovvalid} in which condensate operators are replaced by their mean values. In 2D, this interaction ``volume'' is $a_s^2$. The number of atoms in a ``volume'' $a^2$ is given by the filling $N/N_s$, where $a$ is the lattice constant. Hence, we require
\begin{equation}
\label{eq:Nreq}
    \frac{Na_s^2}{N_sa^2} \ll 1 \iff a_s \ll \frac{a}{\sqrt{N/N_s}}.
\end{equation}
This requirement on $a_s$ becomes stricter the greater the filling $N/N_s$ is. We therefore follow experiments and theoretical papers in assuming $N/N_s = 1$ whenever a numerical value is needed. However, the treatment should be valid for any $N/N_s$ as long as \eqref{eq:Nreq} is fulfilled.

Our starting point is the Bose-Hubbard Hamiltonian with SOC \eqref{eq:FullH}
\begin{equation}
    H = \sum_{\boldsymbol{k}}\sum_{\alpha\beta}\eta_{\boldsymbol{k}}^{\alpha\beta}A_{\boldsymbol{k}}^{\alpha\dagger}A_{\boldsymbol{k}}^\beta + \frac{1}{2N_s}\sum_{\boldsymbol{k}\boldsymbol{k}'\boldsymbol{p}\boldsymbol{p}'}\sum_{\alpha\beta}U^{\alpha\beta}A_{\boldsymbol{k}}^{\alpha\dagger}A_{\boldsymbol{k}'}^{\beta\dagger}A_{\boldsymbol{p}}^\beta A_{\boldsymbol{p}'}^\alpha \delta_{\boldsymbol{k}+\boldsymbol{k}',\boldsymbol{p}+\boldsymbol{p}'},
\end{equation}
where
\begin{gather*}
    \eta_{\boldsymbol{k}} = \begin{pmatrix} \epsilon_{\boldsymbol{k}}^{\uparrow} +T^{\uparrow} & s_{\boldsymbol{k}} \\ s_{\boldsymbol{k}}^* &  \epsilon_{\boldsymbol{k}}^{\downarrow} +T^{\downarrow} \end{pmatrix}.
\end{gather*}

Due to the nature of BEC the Bogoliubov approach amounts to treating condensate operators $A_{\boldsymbol{k}_{0i}}^\alpha$ differently than excitation operators $A_{\boldsymbol{k}}^\alpha$, where $\boldsymbol{k}_{0i}$ is any occupied condensate momentum and $\boldsymbol{k}$ is any non-condensate momentum. The condensate operators are assumed dominant, and only terms that are at most quadratic in excitation operators are included. Contributions from terms that are cubic or quartic in excitation operators are assumed negligible. Rewriting the Hamiltonian in this way, enables us to later employ the BV transformation to diagonalize the Hamiltonian and obtain the quasiparticle excitation spectrum. For now, the treatment concerns a general Bravais lattice. In a square lattice, possible condensate momenta are represented in figure \ref{fig:1fold4fold}. It is shown in \cite{master} that the Hamiltonian can be written $H \approx H_0 + H_1 + H_2$, where
\begin{align}
\label{eq:H0before}
    \begin{split}
        H_0 = &\sum_{i}  \sum_{\alpha\beta}\eta_{\boldsymbol{k}_{0i}}^{\alpha\beta}A_{\boldsymbol{k}_{0i}}^{\alpha\dagger}A_{\boldsymbol{k}_{0i}}^\beta \\
        &+ \frac{1}{2N_s}\sum_{iji'j'}\sum_{\alpha\beta}U^{\alpha\beta}A_{\boldsymbol{k}_{0i}}^{\alpha\dagger}A_{\boldsymbol{k}_{0j}}^{\beta\dagger}A_{\boldsymbol{k}_{0i'}}^\beta A_{\boldsymbol{k}_{0j'}}^\alpha \delta_{\boldsymbol{k}_{0i}+\boldsymbol{k}_{0j},\boldsymbol{k}_{0i'}+\boldsymbol{k}_{0j'}},
    \end{split}
\end{align}
\begin{align}
\label{eq:H1before}
    \begin{split}
        H_1 = \frac{1}{N_s}\left.\sum_{\boldsymbol{k}}\right.^{'}\sum_{iji'}\sum_{\alpha\beta}U^{\alpha\beta}&\big(A_{\boldsymbol{k}_{0i}}^{\alpha\dagger}A_{\boldsymbol{k}_{0j}}^{\beta\dagger}A_{\boldsymbol{k}_{0i'}}^\beta A_{\boldsymbol{k}}^\alpha \\
        &+A_{\boldsymbol{k}}^{\alpha\dagger}A_{\boldsymbol{k}_{0i'}}^{\beta\dagger}A_{\boldsymbol{k}_{0j}}^\beta A_{\boldsymbol{k}_{0i}}^\alpha \big)\delta_{\boldsymbol{k}+\boldsymbol{k}_{0i'},\boldsymbol{k}_{0i}+\boldsymbol{k}_{0j}}
    \end{split}
\end{align}
and
\begin{align}
    \begin{split}
    \label{eq:H2before}
        H_2 = &\left.\sum_{\boldsymbol{k}}\right.^{'}\sum_{\alpha\beta}\eta_{\boldsymbol{k}}^{\alpha\beta}A_{\boldsymbol{k}}^{\alpha\dagger}A_{\boldsymbol{k}}^\beta \\
        &+\frac{1}{2N_s}\left.\sum_{\boldsymbol{k}\boldsymbol{k}'}\right.^{''}\sum_{ij}\sum_{\alpha\beta} U^{\alpha\beta}\Big(\big( A_{\boldsymbol{k}_{0i}}^{\alpha\dagger}A_{\boldsymbol{k}_{0j}}^{\beta\dagger}A_{\boldsymbol{k}}^\beta A_{\boldsymbol{k}'}^\alpha\\
        &\mbox{\qquad\qquad\qquad\qquad\qquad}+ A_{\boldsymbol{k}}^{\alpha\dagger}A_{\boldsymbol{k}'}^{\beta\dagger}A_{\boldsymbol{k}_{0j}}^\beta A_{\boldsymbol{k}_{0i}}^\alpha\big)\delta_{\boldsymbol{k}+\boldsymbol{k}',\boldsymbol{k}_{0i}+\boldsymbol{k}_{0j}} \\
        &\mbox{\qquad\qquad\qquad\qquad\qquad}+2\big(A_{\boldsymbol{k}_{0i}}^{\alpha\dagger}A_{\boldsymbol{k}}^{\beta\dagger}A_{\boldsymbol{k}_{0j}}^\beta A_{\boldsymbol{k}'}^\alpha \\
        &\mbox{\qquad\qquad\qquad\qquad\qquad}+A_{\boldsymbol{k}_{0i}}^{\alpha\dagger}A_{\boldsymbol{k}}^{\beta\dagger}A_{\boldsymbol{k}'}^\beta A_{\boldsymbol{k}_{0j}}^\alpha \big)\delta_{\boldsymbol{k}+\boldsymbol{k}_{0i},\boldsymbol{k}'+\boldsymbol{k}_{0j}}\Big).
    \end{split}
\end{align}
The sums $\left.\sum_{\boldsymbol{k}}\right.^{'}$ exclude any condensate momenta, while $\left.\sum_{\boldsymbol{k}\boldsymbol{k}'}\right.^{''}$ excludes any terms where at least one of $\boldsymbol{k}$ and $\boldsymbol{k}'$ is equal to a condensate momentum. All possible momentum configurations in the interaction terms used to derive the expressions above are given in table \ref{tab:interactionmomentumconfig}. The possible presence of terms that are linear in excitation operators given in $H_1$ was pointed out by Janssønn, and have to our knowledge not been explored in the literature \cite{master}. Such terms stem from the possibility that $\boldsymbol{k} + \boldsymbol{k}' = \boldsymbol{p} + \boldsymbol{p}'$ may be fulfilled by three condensate momenta and one non-condensate momentum in many-fold cases as represented by cases 2-5 in table \ref{tab:interactionmomentumconfig}. When there is only one condensate momentum this would be impossible. 

\begin{table}[ht]
    \centering
    \caption{All possible momentum configurations in the interaction terms where at least two momenta are condensate momenta, $\boldsymbol{k}_{0i}$. Table reproduced from \cite{master}.}
    \begin{tabular}{|c|c|c|c|c|}
        \hline
        Case & $\boldsymbol{k}$ & $\boldsymbol{k}'$ & $\boldsymbol{p}$ & $\boldsymbol{p}'$ \\
        \hline
        1 & $\boldsymbol{k}_{0i}$ & $\boldsymbol{k}_{0j}$ & $\boldsymbol{k}_{0i'}$ & $\boldsymbol{k}_{0j'}$  \\
        \hline
        2 & $\boldsymbol{k}_{0i}$ & $\boldsymbol{k}_{0j}$ & $\boldsymbol{k}_{0i'}$ & $\boldsymbol{p}'$ \\
        \hline
        3 & $\boldsymbol{k}_{0i}$ & $\boldsymbol{k}_{0j}$ & $\boldsymbol{p}$ &  $\boldsymbol{k}_{0j'}$ \\
        \hline
        4 & $\boldsymbol{k}_{0i}$ & $\boldsymbol{k}'$ & $\boldsymbol{k}_{0i'}$ &  $\boldsymbol{k}_{0j'}$ \\
        \hline
        5 & $\boldsymbol{k}$ & $\boldsymbol{k}_{0j}$ & $\boldsymbol{k}_{0i'}$ &  $\boldsymbol{k}_{0j'}$ \\
        \hline
        6 & $\boldsymbol{k}_{0i}$ & $\boldsymbol{k}_{0j}$ & $\boldsymbol{p}$ & $\boldsymbol{p}'$ \\
        \hline
        7 & $\boldsymbol{k}_{0i}$ & $\boldsymbol{k}'$ & $\boldsymbol{k}_{0i'}$ & $\boldsymbol{p}'$ \\
        \hline
        8 & $\boldsymbol{k}_{0i}$ & $\boldsymbol{k}'$ & $\boldsymbol{p}$ &  $\boldsymbol{k}_{0j'}$ \\
        \hline
        9 & $\boldsymbol{k}$ & $\boldsymbol{k}'$ & $\boldsymbol{k}_{0i'}$ &  $\boldsymbol{k}_{0j'}$ \\
        \hline
        10 & $\boldsymbol{k}$ & $\boldsymbol{k}_{0j}$ & $\boldsymbol{p}$ &  $\boldsymbol{k}_{0j'}$ \\
        \hline
        11 & $\boldsymbol{k}$ & $\boldsymbol{k}_{0j}$ & $\boldsymbol{k}_{0i'}$ &  $\boldsymbol{p}'$ \\
        \hline
    \end{tabular}
    \label{tab:interactionmomentumconfig}
\end{table}

We now employ the Bogoliubov approach and replace the condensate operators by 
\begin{equation}
\label{eq:BogApp}
    A_{\boldsymbol{k}_{0i}}^\alpha \to \sqrt{N_{\boldsymbol{k}_{0i}}^\alpha}e^{-i\theta_{\boldsymbol{k}_{0i}}^\alpha},
\end{equation}
where $N_{\boldsymbol{k}_{0i}}^\alpha = \langle A_{\boldsymbol{k}_{0i}}^{\alpha\dagger} A_{\boldsymbol{k}_{0i}}^\alpha \rangle $ is the number of condensate particles in pseudospin state $\alpha$ with momentum $\boldsymbol{k}_{0i}$. The factor $e^{-i\theta_{\boldsymbol{k}_{0i}}^\alpha}$ is a phase factor that can be determined by minimizing the free energy with respect to the angle $\theta_{\boldsymbol{k}_{0i}}^\alpha$ \cite{bruusflensberg}. Such phase factors determined by the angles $\theta_{\boldsymbol{k}_{0i}}^\alpha$ are usually omitted, but we will find they play an important role in phases that appear due to SOC. Inserting \eqref{eq:BogApp} in \eqref{eq:H0before}, \eqref{eq:H1before} and \eqref{eq:H2before} we find $H \approx H_0 + H_1 + H_2$, with
\begin{align}
    \begin{split}
        H_0 = &\sum_{i}  \sum_{\alpha\beta}\eta_{\boldsymbol{k}_{0i}}^{\alpha\beta}\sqrt{N_{\boldsymbol{k}_{0i}}^\alpha N_{\boldsymbol{k}_{0i}}^\beta}e^{i(\theta_{\boldsymbol{k}_{0i}}^\alpha-\theta_{\boldsymbol{k}_{0i}}^\beta)} \\
        &+ \frac{1}{2N_s}\sum_{iji'j'}\sum_{\alpha\beta}U^{\alpha\beta}\sqrt{N_{\boldsymbol{k}_{0i}}^{\alpha}N_{\boldsymbol{k}_{0j}}^{\beta}N_{\boldsymbol{k}_{0i'}}^\beta N_{\boldsymbol{k}_{0j'}}^\alpha}\\
        &\mbox{\qquad\qquad\qquad\qquad} \cdot e^{i(\theta_{\boldsymbol{k}_{0i}}^\alpha+\theta_{\boldsymbol{k}_{0j}}^\beta-\theta_{\boldsymbol{k}_{0i'}}^\beta -\theta_{\boldsymbol{k}_{0j'}}^\alpha)} \delta_{\boldsymbol{k}_{0i}+\boldsymbol{k}_{0j},\boldsymbol{k}_{0i'}+\boldsymbol{k}_{0j'}},
    \end{split}
\end{align}
\begin{align}
    \begin{split}
        H_1 = \frac{1}{N_s}\left.\sum_{\boldsymbol{k}}\right.^{'}&\sum_{iji'}\sum_{\alpha\beta}U^{\alpha\beta}\bigg(\sqrt{N_{\boldsymbol{k}_{0i}}^{\alpha}N_{\boldsymbol{k}_{0j}}^{\beta}N_{\boldsymbol{k}_{0i'}}^\beta} e^{i(\theta_{\boldsymbol{k}_{0i}}^\alpha+\theta_{\boldsymbol{k}_{0j}}^\beta -\theta_{\boldsymbol{k}_{0i'}}^\beta)} A_{\boldsymbol{k}}^\alpha \\
        &+\sqrt{N_{\boldsymbol{k}_{0i}}^{\alpha}N_{\boldsymbol{k}_{0j}}^{\beta}N_{\boldsymbol{k}_{0i'}}^\beta} e^{-i(\theta_{\boldsymbol{k}_{0i}}^\alpha+\theta_{\boldsymbol{k}_{0j}}^\beta -\theta_{\boldsymbol{k}_{0i'}}^\beta)} A_{\boldsymbol{k}}^{\alpha\dagger} \bigg)\delta_{\boldsymbol{k}+\boldsymbol{k}_{0i'},\boldsymbol{k}_{0i}+\boldsymbol{k}_{0j}}
    \end{split}
\end{align}
and
\begin{align}
    \begin{split}
        H_2 = &\left.\sum_{\boldsymbol{k}}\right.^{'}\sum_{\alpha\beta}\eta_{\boldsymbol{k}}^{\alpha\beta}A_{\boldsymbol{k}}^{\alpha\dagger}A_{\boldsymbol{k}}^\beta +\frac{1}{2N_s}\left.\sum_{\boldsymbol{k}\boldsymbol{k}'}\right.^{''}\sum_{ij}\sum_{\alpha\beta} U^{\alpha\beta}\\
        &\cdot\Bigg(\Big( \sqrt{N_{\boldsymbol{k}_{0i}}^{\alpha}N_{\boldsymbol{k}_{0j}}^{\beta}}e^{i(\theta_{\boldsymbol{k}_{0i}}^\alpha+\theta_{\boldsymbol{k}_{0j}}^\beta)}A_{\boldsymbol{k}}^\beta A_{\boldsymbol{k}'}^\alpha\\
        &+ \sqrt{N_{\boldsymbol{k}_{0i}}^{\alpha}N_{\boldsymbol{k}_{0j}}^{\beta}}e^{-i(\theta_{\boldsymbol{k}_{0i}}^\alpha+\theta_{\boldsymbol{k}_{0j}}^\beta)}A_{\boldsymbol{k}}^{\alpha\dagger}A_{\boldsymbol{k}'}^{\beta\dagger}\Big)\delta_{\boldsymbol{k}+\boldsymbol{k}',\boldsymbol{k}_{0i}+\boldsymbol{k}_{0j}} \\
        &+2\Big(\sqrt{N_{\boldsymbol{k}_{0i}}^{\alpha}N_{\boldsymbol{k}_{0j}}^\beta}e^{i(\theta_{\boldsymbol{k}_{0i}}^\alpha-\theta_{\boldsymbol{k}_{0j}}^\beta)}A_{\boldsymbol{k}}^{\beta\dagger} A_{\boldsymbol{k}'}^\alpha \\
        &+\sqrt{N_{\boldsymbol{k}_{0i}}^{\alpha}N_{\boldsymbol{k}_{0j}}^\alpha}e^{i(\theta_{\boldsymbol{k}_{0i}}^\alpha-\theta_{\boldsymbol{k}_{0j}}^\alpha)} A_{\boldsymbol{k}}^{\beta\dagger}A_{\boldsymbol{k}'}^\beta  \Big)\delta_{\boldsymbol{k}+\boldsymbol{k}_{0i},\boldsymbol{k}'+\boldsymbol{k}_{0j}}\Bigg).
    \end{split}
\end{align}
Comparing to equation (3.91) in \cite{master} we see the major change is that the sums in $H_2$ remain constrained here, but are unconstrained in \cite{master}. This is because we have now neglected terms containing fluctuation operators that in \cite{master} were moved into $H_2$ by removing the restrictions on the sums over $\boldsymbol{k}$. Since $H^\dagger = H$ we must have $H_0$ real and $H_1$ and $H_2$ Hermitian. 
The fact that $\Im(H_0) = 0$ can be shown by rewriting the sum in terms of possible momentum configurations and using $\sin(-x) = -\sin(x)$.
We also rewrite $H_1$ and $H_2$ to make it more obvious that they are their own Hermitian conjugates. Take for instance the terms
\begin{equation*}
    \left.\sum_{\boldsymbol{k}\boldsymbol{k}'}\right.^{''}\sum_{ij}\sum_{\alpha\beta} U^{\alpha\beta}\sqrt{N_{\boldsymbol{k}_{0i}}^{\alpha}N_{\boldsymbol{k}_{0j}}^\beta}\big(e^{i(\theta_{\boldsymbol{k}_{0i}}^\alpha-\theta_{\boldsymbol{k}_{0j}}^\beta)}A_{\boldsymbol{k}}^{\beta\dagger} A_{\boldsymbol{k}'}^\alpha + e^{i(\theta_{\boldsymbol{k}_{0i}}^\alpha-\theta_{\boldsymbol{k}_{0j}}^\beta)}A_{\boldsymbol{k}}^{\beta\dagger} A_{\boldsymbol{k}'}^\alpha \big).
\end{equation*}
We let $\alpha \leftrightarrow \beta$, $i \leftrightarrow j$ and $\boldsymbol{k} \leftrightarrow \boldsymbol{k}'$ in the second term, and recognize it as the Hermitian conjugate of the first. Finally we may write
\begin{equation}
\label{eq:MFTH}
    H = H_0 + H_1 + H_2,
\end{equation}
where
\begin{align}
    \begin{split}
    \label{eq:H0}
    H_0 = &\sum_{i}  \sum_{\alpha\beta}\eta_{\boldsymbol{k}_{0i}}^{\alpha\beta}\sqrt{N_{\boldsymbol{k}_{0i}}^\alpha N_{\boldsymbol{k}_{0i}}^\beta}e^{i(\theta_{\boldsymbol{k}_{0i}}^\alpha-\theta_{\boldsymbol{k}_{0i}}^\beta)} \\
    &+ \frac{1}{2N_s}\sum_{iji'j'}\sum_{\alpha\beta}U^{\alpha\beta} \sqrt{N_{\boldsymbol{k}_{0i}}^{\alpha}N_{\boldsymbol{k}_{0j}}^{\beta}N_{\boldsymbol{k}_{0i'}}^\beta N_{\boldsymbol{k}_{0j'}}^\alpha} \\
    &\mbox{\qquad\qquad\qquad} \cdot \cos(\theta_{\boldsymbol{k}_{0i}}^\alpha+\theta_{\boldsymbol{k}_{0j}}^\beta-\theta_{\boldsymbol{k}_{0i'}}^\beta -\theta_{\boldsymbol{k}_{0j'}}^\alpha) \delta_{\boldsymbol{k}_{0i}+\boldsymbol{k}_{0j},\boldsymbol{k}_{0i'}+\boldsymbol{k}_{0j'}},
    \end{split}
\end{align}
\begin{align}
    \begin{split}
    \label{eq:H1}
        H_1 = \frac{1}{N_s}\left.\sum_{\boldsymbol{k}}\right.^{'}\sum_{iji'}\sum_{\alpha\beta}U^{\alpha\beta}&\Big(\sqrt{N_{\boldsymbol{k}_{0i}}^{\alpha}N_{\boldsymbol{k}_{0j}}^{\beta}N_{\boldsymbol{k}_{0i'}}^\beta} e^{i(\theta_{\boldsymbol{k}_{0i}}^\alpha+\theta_{\boldsymbol{k}_{0j}}^\beta -\theta_{\boldsymbol{k}_{0i'}}^\beta)} A_{\boldsymbol{k}}^\alpha \\
        &+\textrm{H.c.} \Big)\delta_{\boldsymbol{k}+\boldsymbol{k}_{0i'},\boldsymbol{k}_{0i}+\boldsymbol{k}_{0j}}
    \end{split}
\end{align}
and
\begin{align}
    \begin{split}
    \label{eq:H2}
        H_2 = &\left.\sum_{\boldsymbol{k}}\right.^{'}\sum_{\alpha\beta}\eta_{\boldsymbol{k}}^{\alpha\beta}A_{\boldsymbol{k}}^{\alpha\dagger}A_{\boldsymbol{k}}^\beta +\frac{1}{2N_s}\left.\sum_{\boldsymbol{k}\boldsymbol{k}'}\right.^{''}\sum_{ij}\sum_{\alpha\beta} U^{\alpha\beta}\\
        &\cdot \Bigg[\sqrt{N_{\boldsymbol{k}_{0i}}^{\alpha}N_{\boldsymbol{k}_{0j}}^{\beta}}\bigg(\Big( e^{i(\theta_{\boldsymbol{k}_{0i}}^\alpha+\theta_{\boldsymbol{k}_{0j}}^\beta)}A_{\boldsymbol{k}}^\beta A_{\boldsymbol{k}'}^\alpha + \textrm{H.c.}\Big)\delta_{\boldsymbol{k}+\boldsymbol{k}',\boldsymbol{k}_{0i}+\boldsymbol{k}_{0j}} \\
        &\mbox{\qquad\qquad\qquad}+\Big(e^{i(\theta_{\boldsymbol{k}_{0i}}^\alpha-\theta_{\boldsymbol{k}_{0j}}^\beta)}A_{\boldsymbol{k}}^{\beta\dagger} A_{\boldsymbol{k}'}^\alpha +\textrm{H.c}\Big)\bigg)\delta_{\boldsymbol{k}+\boldsymbol{k}_{0i},\boldsymbol{k}'+\boldsymbol{k}_{0j}}\\
        &+\sqrt{N_{\boldsymbol{k}_{0i}}^{\alpha}N_{\boldsymbol{k}_{0j}}^\alpha}\Big(e^{i(\theta_{\boldsymbol{k}_{0i}}^\alpha-\theta_{\boldsymbol{k}_{0j}}^\alpha)} A_{\boldsymbol{k}}^{\beta\dagger}A_{\boldsymbol{k}'}^\beta + \textrm{H.c.} \Big)\delta_{\boldsymbol{k}+\boldsymbol{k}_{0i},\boldsymbol{k}'+\boldsymbol{k}_{0j}}\Bigg].
    \end{split}
\end{align}

\section{Phase Diagram When Neglecting Excitations}
To investigate the possible phases of the system, the operator independent part of the Hamiltonian, $H_0$, will be used. This describes a pure condensate, where one assumes the free energy $F \approx H_0$ and thus minimizes $H_0$ in terms of the free parameters $N_{\boldsymbol{k}_{0i}}^\alpha$, $\theta_{\boldsymbol{k}_{0i}}^\alpha$ and $k_0$. $k_0$ is defined by $\boldsymbol{k}_{01}  \equiv (k_0, k_0)$, while the total number of condensate particles is $N_0 = \sum_{i}\sum_\alpha N_{\boldsymbol{k}_{0i}}^\alpha$. Neglecting excitations, $N_0$ is the same as the total number of particles, $N$, and is kept fixed. For the moment, we do not choose specific values of $N^\uparrow$ and $N^\downarrow$. Note that even when interactions are weak, there will always be excitations out of the condensate. The results found from minimizing $H_0$ are therefore only guidelines. The more accurate approach is to diagonalize the full Hamiltonian \eqref{eq:MFTH}, and then minimize the free energy. This is the aim of the next chapter.

%neglecting excitations is a poor approximation for an interacting Bose gas. E

A 2D square optical lattice is assumed for the remainder of the thesis. We will also assume $t^\uparrow = t^\downarrow = t$, $T^{\uparrow} = T^{\downarrow} = T$, $U^{\uparrow\uparrow}=U^{\downarrow\downarrow} = U$ and 
\begin{equation}
    U^{\uparrow\downarrow}=U^{\downarrow\uparrow} \equiv \alpha U
\end{equation}
Naturally, since we assumed repulsive interactions, it is assumed that $\alpha \geq 0$. First considering the case where the condensation occurs at zero momentum we get
\begin{align}
    \begin{split}
        H_0 &= N_0(\epsilon_{\boldsymbol{0}}+T)+\frac{U}{2N_s}\left((N_0^\uparrow)^2+2\alpha N_0^\uparrow (N_0-N_0^\uparrow) + (N_0-N_0^\uparrow)^2\right),
        % H_0 &= (N_0^\uparrow+N_0^\downarrow)(\epsilon_{\boldsymbol{0}}+T)+\frac{U}{2N_s}\left((N_0^\uparrow)^2+2\alpha N_0^\uparrow N_0^\downarrow + (N_0^\downarrow)^2\right)\\
        % &= N_0(\epsilon_{\boldsymbol{0}}+T)+\frac{U}{2N_s}\left((N_0^\uparrow)^2+2\alpha N_0^\uparrow (N_0-N_0^\uparrow) + (N_0-N_0^\uparrow)^2\right),
    \end{split}
\end{align}
where we used that $N_0^\uparrow + N_0^\downarrow = N_0$ is fixed. Also, $N_0^\alpha$ is a shorthand for $N_{\boldsymbol{k}_{00}=\boldsymbol{0}}^\alpha$. It is clear the only dependence on $N_0^\uparrow$ lies in the second term. It is easy to show that when $\alpha < 1$ $N_0^\uparrow=N_0^\downarrow = N_0/2$, i.e. balance between pseudospin states, minimizes $H_0$, while for $\alpha>1$ complete imbalance is preferred. For concreteness $N_0^\uparrow = N_0$ and $N_0^\downarrow = 0$ without loss of generality. The former phase is denoted NZ, the latter PZ for non-polarized and polarized zero-momentum phase respectively. %Also, note that when $\alpha=1$ $H_0$ is independent of $N_0^\uparrow$, it depends only on $N_0$ which is fixed.

Next, we assume the condensation occurs into any of the four momenta $\boldsymbol{k}_{0i},$ $i = 1,2,3,4$ introduced in chapter \ref{sec:SOCU0}. We adopt the shorthand notations $N_{\boldsymbol{k}_{0i}}^\alpha=N_{0i}^\alpha$ and $\theta_{\boldsymbol{k}_{0i}}^\alpha = \theta_i^\alpha$ from now on. A priori, any distribution of particles between the four possible momenta found for the non-interacting, SOC Bose gas is possible. It is however expected that including interactions will lead to certain ground states being preferred \cite{SOCOLRev}. Defining $s_{\boldsymbol{k}} \equiv |s_{\boldsymbol{k}}|\exp(-i\gamma_{\boldsymbol{k}})$ and $\Delta\theta_i \equiv \theta_i^\downarrow-\theta_i^\uparrow$ and using \eqref{eq:H0}, the expression for $H_0$ becomes
\begin{align}
    \begin{split}
    \label{eq:fullH0}
        H_0 &= N_0(\epsilon_{\boldsymbol{k}_{01}}+T) + \sum_{i=1}^4 2\sqrt{N_{0i}^{\uparrow} N_{0i}^{\downarrow}}\abs{s_{\boldsymbol{k}_{01}}}\cos(\gamma_{\boldsymbol{k}_{0i}}+\Delta\theta_i)\\
        &+\frac{U}{2N_s}\bigg[  \sum_{i=1}^4 \left((N_{0i}^{\uparrow})^2 + 2\alpha N_{0i}^{\uparrow} N_{0i}^{\downarrow}  + (N_{0i}^{\downarrow})^2 \right) \\
        &+ \sum_{i=1}^3 \sum_{j>i}^4 \bigg(4N_{0i}^{\uparrow}N_{0j}^{\uparrow} + 4N_{0i}^{\downarrow}N_{0j}^{\downarrow} + 2\alpha(N_{0i}^\uparrow N_{0j}^\downarrow + N_{0i}^\downarrow N_{0j}^\uparrow)  \\
        & \mbox{\qquad\qquad}+4\alpha\sqrt{N_{0i}^{\uparrow}N_{0i}^{\downarrow}N_{0j}^{\uparrow}N_{0j}^{\downarrow}}\cos(\Delta\theta_i-\Delta\theta_j) \bigg)\\
        &+8\sqrt{N_{01}^{\uparrow}N_{03}^{\uparrow}N_{02}^{\uparrow}N_{04}^{\uparrow}}\cos(\theta_1^\uparrow+\theta_3^\uparrow-\theta_2^\uparrow-\theta_4^\uparrow) \\
        &+8\sqrt{N_{01}^{\downarrow}N_{03}^{\downarrow}N_{02}^{\downarrow}N_{04}^{\downarrow}}\cos(\theta_1^\downarrow+\theta_3^\downarrow-\theta_2^\downarrow-\theta_4^\downarrow) \\
        &+4\alpha\sqrt{N_{01}^{\uparrow}N_{03}^{\downarrow}N_{02}^{\downarrow}N_{04}^{\uparrow}}\cos(\theta_1^\uparrow+\theta_3^\downarrow-\theta_2^\downarrow-\theta_4^\uparrow) \\
        &+4\alpha\sqrt{N_{01}^{\uparrow}N_{03}^{\downarrow}N_{02}^{\uparrow}N_{04}^{\downarrow}}\cos(\theta_1^\uparrow+\theta_3^\downarrow-\theta_2^\uparrow-\theta_4^\downarrow) \\
        &+4\alpha\sqrt{N_{01}^{\downarrow}N_{03}^{\uparrow}N_{02}^{\uparrow}N_{04}^{\downarrow}}\cos(\theta_1^\downarrow+\theta_3^\uparrow-\theta_2^\uparrow-\theta_4^\downarrow) \\
        &+4\alpha\sqrt{N_{01}^{\downarrow}N_{03}^{\uparrow}N_{02}^{\downarrow}N_{04}^{\uparrow}}\cos(\theta_1^\downarrow+\theta_3^\uparrow-\theta_2^\downarrow-\theta_4^\uparrow) \bigg].
    \end{split}
\end{align}
Several comments can be made here. The SOC dependent terms are minimized when
\begin{equation}
\label{eq:gammathetapi}
    \gamma_{\boldsymbol{k}_{0i}} + \Delta\theta_i = \gamma_{\boldsymbol{k}_{0i}} + \theta_i^\downarrow -\theta_i^\uparrow =  \pi
\end{equation}
and for a pseudospin balanced condensate, $N_{0i}^\uparrow = N_{0i}^\downarrow$. Equation \eqref{eq:gammathetapi} was also found in \cite{master} as a requirement for the chemical potential to be real. The first $U$ dependent terms will, as discussed for the zero-momentum case, prefer balance when $\alpha < 1$ and complete imbalance when $\alpha>1$. Hence, for $\alpha>1$ there will be a competition between SOC and interactions as to whether balance or complete imbalance between pseudospin states is preferred. Also, if SOC ensures balance between pseudospin states, and several momenta are occupied, these first $U$ dependent terms will prefer balance between the momenta as well. %E.g. if $N_{01}^\uparrow = N_{01}^\downarrow = N_{01}/2$ and $N_{03}^\uparrow = N_{03}^\downarrow = N_{03}/2$, then $(N_{01}^{\uparrow})^2 + 2\alpha N_{01}^{\uparrow} N_{01}^{\downarrow}  + (N_{01}^{\downarrow})^2$ $+ (N_{03}^{\uparrow})^2 + 2\alpha N_{03}^{\uparrow} N_{03}^{\downarrow}  + (N_{03}^{\downarrow})^2 $ is minimal for $N_{01} = N_{03}$.

The terms proportional to $\cos(\Delta\theta_i-\Delta\theta_j)$ are minimized if $\cos(\Delta\theta_i-\Delta\theta_j)$ $=$ $-1$. When \eqref{eq:gammathetapi} is fulfilled, we have $\cos(\Delta\theta_1-\Delta\theta_3) = -1$ and $\cos(\Delta\theta_2-\Delta\theta_4) = -1$ while the other angle combinations render the cosine $0$. Thus, when only two momenta are occupied, either $\boldsymbol{k}_{01}$ and $\boldsymbol{k}_{03}$ or  $\boldsymbol{k}_{02}$ and $\boldsymbol{k}_{04}$ are preferred. Which are chosen is arbitrary, and without loss of generality one may assume $\boldsymbol{k}_{01}$ and $\boldsymbol{k}_{03}$. This is defined as the stripe wave (SW) phase. Its name is derived from its striped spin polarization \cite{SOCOLRev}. 
%The SW phase was detected experimentally in \cite{SWdetectedExp} though for a slightly different system than what is studied here. In \cite{SWdetectedExp} a continuum BEC is loaded into a 1D optical superlattice. Additionally a different SOC scheme is used, realizing a model similar to what is described in \cite{SWdetectedTheory} and the 1D Raman induced SOC that was first implemented in \cite{lin2011expSOC}.

Another interesting phase is a condensate at a single nonzero momentum, called the plane wave (PW) phase. 
% Assuming balanced condensates in terms of both pseudospin and momenta, we compare the interaction dependent contributions of the PW and SW phases. We have $2(1+\alpha)(N/2)^2$ in the PW phase, and $4(1+\alpha)(N/4)^2+8(N/4)^2$ in the SW phase. For $\alpha<1$ PW is smallest, while for $\alpha>1$ the SW contribution is smallest. This agrees with the results reported in \cite{SOCOLRev}. There, another phase that is not expected to appear as a ground state is also mentioned. This is the lattice wave (LW) phase, where all four momenta are equally occupied. 
A phase that  is not expected to appear as a ground state is mentioned in \cite{SOCOLRev}, namely the lattice wave (LW) phase, where all four momenta are equally occupied.
In \cite{SOCOLRev} this is called a Skyrmion state. The final six interaction dependent terms in $H_0$ are only relevant if all four condensate momenta are occupied. Their effect will be discussed in the context of the LW phase.

A numeric investigation of \eqref{eq:fullH0} assuming \eqref{eq:gammathetapi} holds was made. As a check, it appeared the choice \eqref{eq:gammathetapi} for the angles was always at least a local minimum of $H_0$. In general the pseudospin balanced PW phase minimizes $H_0$ when $\alpha < 1$ and the momentum and pseudospin balanced SW phase is preferred when $\alpha>1$. When $\lambda_R$ decreases for $\alpha > 1$ a point is reached where a completely imbalanced PW phase is preferred. This happens at very weak SOC, i.e. $\lambda_R \lesssim U$. We have assumed $U \ll t$ and thus it is only when $\lambda_R \ll t$ this state appears. We therefore focus on the two cases when there is no SOC, and when there is SOC with a strength such that SOC dominates over interactions in the minimization of $H_0$. Hence, the completely imbalanced PW phase at weak SOC will be ignored.

%Given that SOC dominates interactions, i.e. that $\lambda_R \gtrsim 1.5U$ the pseudospin balanced PW phase minimizes $H_0$ when $\alpha < 1$ and the pseudospin balanced SW phase is preferred when $\alpha>1$. When $\lambda_R \lesssim U$ these conclusion are no longer necessarily true. It appears more exotic states are possible here. These can be thought of as imbalanced, or completely imbalanced versions of the SW and LW phases. They do not obey $N_{0i}^\alpha= N_{0j}^\alpha$ for all occupied momenta $\boldsymbol{k}_{0i}$ and $\boldsymbol{k}_{0j}$. Such states are excluded in \cite{master} as they would make the chemical potential dependent on an arbitrary momentum index. 

%Neither are such states mentioned in \cite{SOCOLRev}. We will therefore propose the following. The strange states appearing at very weak SOC are yet another indication that introducing SOC is a highly non-trivial perturbation of the system. We have assumed $U \ll t$ and thus it is only when $\lambda_R \ll t$ these exotic states appear. We therefore focus on two cases, when there is no SOC, and when there is SOC with a strength such that SOC dominates over interactions in the minimization of $H_0$. 

The possible phases mentioned so far are PZ, NZ, PW, SW and LW as in \cite{master}. In the next chapter we will derive the elementary excitations in these phases. Two other possible phases are occupation of $\boldsymbol{k}_{01}$ and $\boldsymbol{k}_{02}$ named C1 phase and occupation of $\boldsymbol{k}_{01}$, $\boldsymbol{k}_{02}$ and $\boldsymbol{k}_{03}$ named C2 phase \cite{master}. Shortly, we will calculate $H_0$ in all these phases and construct a phase diagram analogously to what was done in \cite{master}.

We again point out that minimization of $H_0$ is not the most accurate approach. One should minimize the free energy, or equivalently in the case of zero temperature, the ground state energy $\langle H\rangle$. Also remember that we can control $N^\uparrow$ and $N^\downarrow$ and they are thus not variational parameters. The intuition afforded us by investigating $H_0$ tells us that whenever nonzero condensate momenta are occupied, the most natural phases are ones where there are equally many particles in the two pseudospin states and in the different momenta if several condensate momenta are occupied. We will therefore choose $N^\uparrow = N^\downarrow$ (except in the PZ phase) and assume $N_{0i}^\alpha= N_{0j}^\alpha$ for all occupied momenta $\boldsymbol{k}_{0i}$ and $\boldsymbol{k}_{0j}$. The latter was found to be a requirement in \cite{master} to ensure the chemical potentials did not depend on an arbitrary momentum index.

We illustrate the possible phases in figure \ref{fig:phaseillustation}. The figure is a reproduction of a similar figure in \cite{master}, wherein it was shown that if a nonzero condensate momentum is occupied, there will be particles of both pseudospin states present in the condensate. This was needed to cancel the terms linear in condensate fluctuations. As we have now set these fluctuations to zero, the result is not necessarily valid. We however know that SOC is required to obtain nonzero condensate momenta, and SOC will be most operative in the system if there are particles of both pseudospin states in the condensate. We have also found that the operator independent part of the Hamiltonian tends to prefer pseudospin balance in the condensate when SOC dominates the minimization, suggesting complete pseudospin imbalance is unlikely in the nonzero condensate momentum cases.  

% \begin{figure}
%     \centering
%     \includegraphics[width=0.9\linewidth]{}
%     \caption{An illustation of the possible phases. The black points represent lattice sites in momentum space for the 2D square lattice, while the red points represent possible condensate momenta. Lattice sites between $\boldsymbol{k}_{00} = \boldsymbol{0}$ and $\boldsymbol{k}_{0i}$ are not shown. Encircled red points indicate the condensate momentum is occupied, the arrows indicate the presence of pseudospin up and down  atoms in the condensate. (a) and (b) are the polarized (PZ) and non-polarized (NZ) zero momentum phases, named after their degree of pseudospin imbalance and condensate momentum. (c), (d) and (e) show the plane (PW), stripe (SW) and lattice (LW) wave phases, named after the wave patters they generate in real space. (f) and (g) show the arbitrarily named C1 and C2 phases whose excitation spectra will not be explored in this thesis. This figure is a reproduction of figure 4.1 by Janssønn \cite{master} who could later exclude the C1 and C2 phases in the grand canonical ensemble.}
%     \label{fig:phaseillustation}
% \end{figure}

\begin{figure}
    \raggedright
    \begin{subfigure}{.33\textwidth}
      \includegraphics[width=0.9\linewidth]{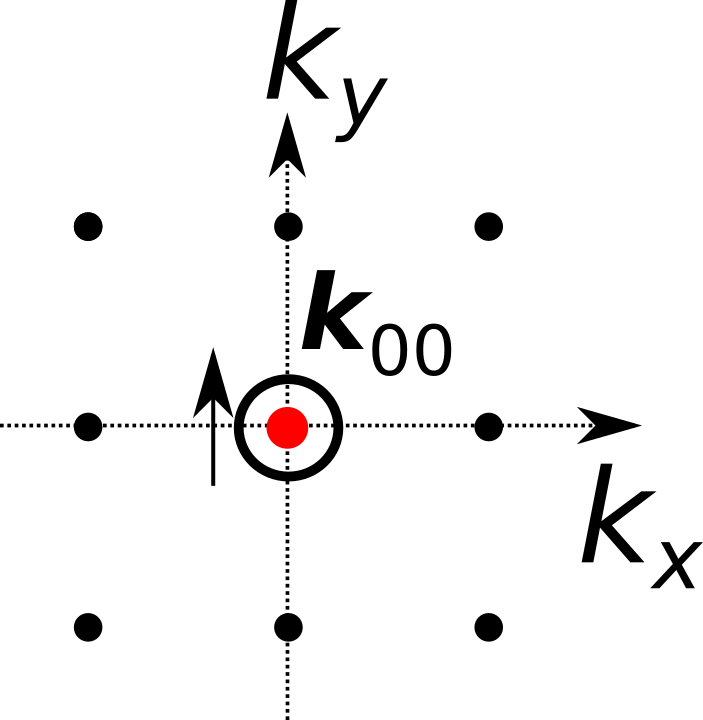}
      \caption{PZ.}
      \label{fig:PZill}
    \end{subfigure}%
    \begin{subfigure}{.33\textwidth}
      \includegraphics[width=0.9\linewidth]{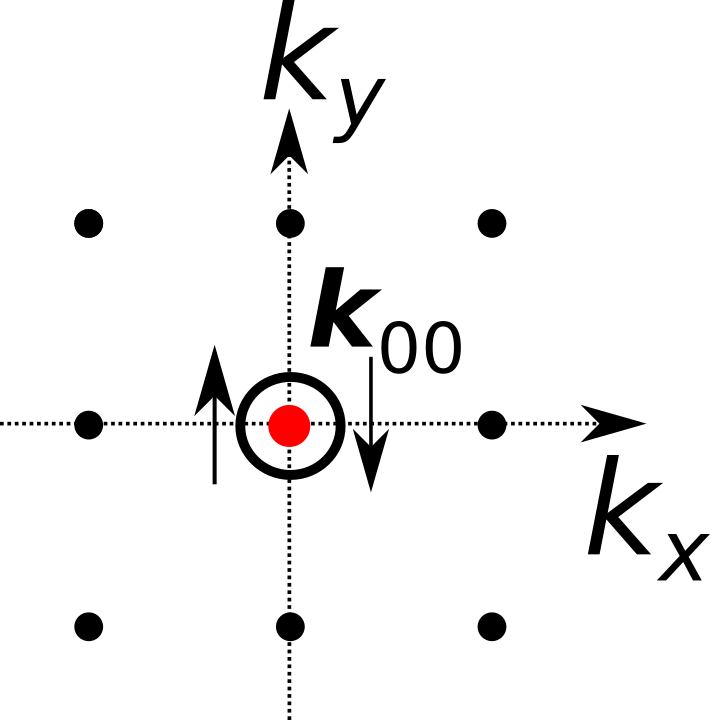}
      \caption{NZ.}
      \label{fig:NZill}
    \end{subfigure}%
    \vskip\baselineskip
    \raggedright
    \begin{subfigure}{.33\textwidth}
      \includegraphics[width=0.9\linewidth]{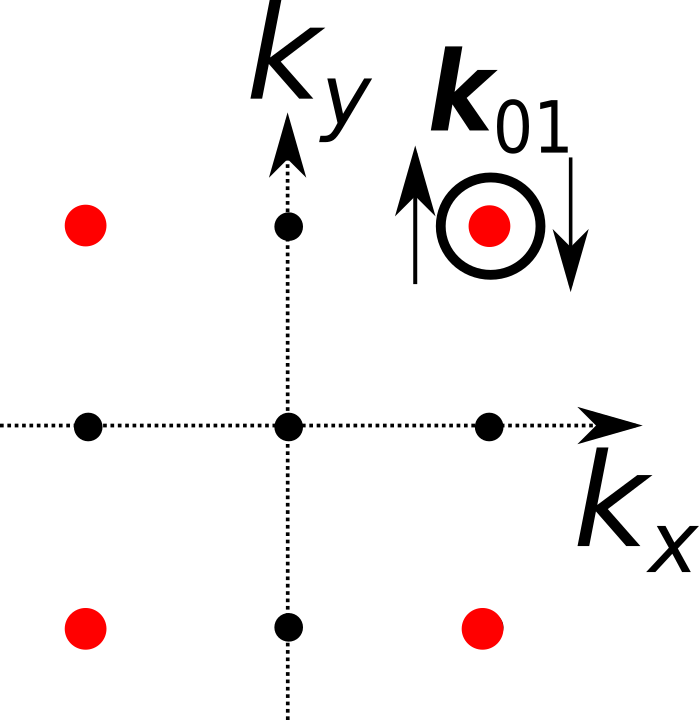}
      \caption{PW.}
      \label{fig:PWill}
    \end{subfigure}%
    \begin{subfigure}{.33\textwidth}
      \includegraphics[width=0.9\linewidth]{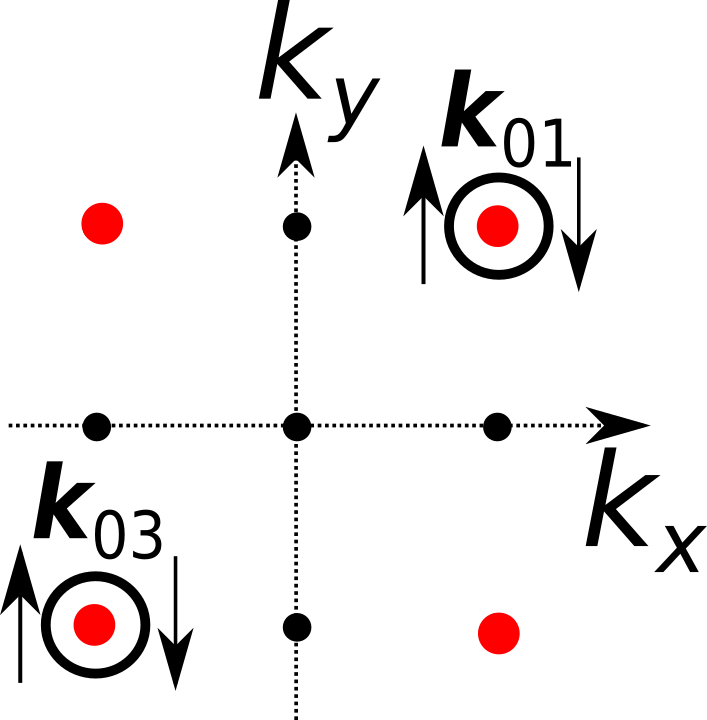}
      \caption{SW.}
      \label{fig:SWill}
    \end{subfigure}%
    \begin{subfigure}{.33\textwidth}
      \includegraphics[width=0.9\linewidth]{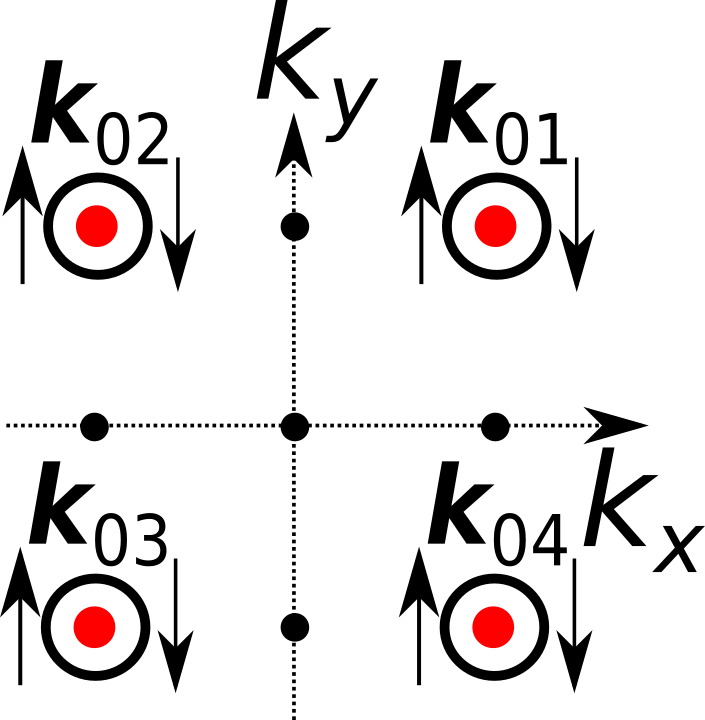}
      \caption{LW.}
      \label{fig:LWill}
    \end{subfigure}%
    \vskip\baselineskip
    \raggedright
    \begin{subfigure}{.33\textwidth}
      \includegraphics[width=0.9\linewidth]{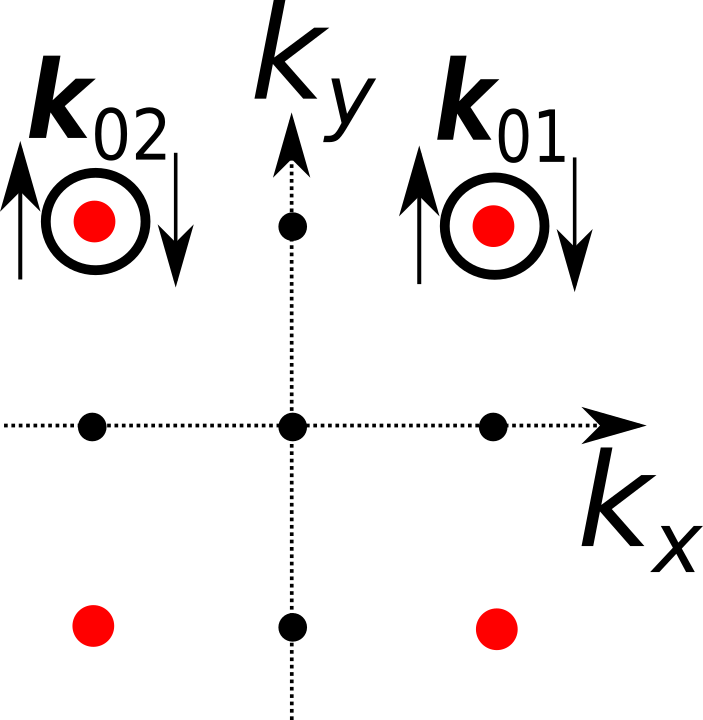}
      \caption{C1.}
      \label{fig:C1ill}
    \end{subfigure}%
    \begin{subfigure}{.33\textwidth}
      \includegraphics[width=0.9\linewidth]{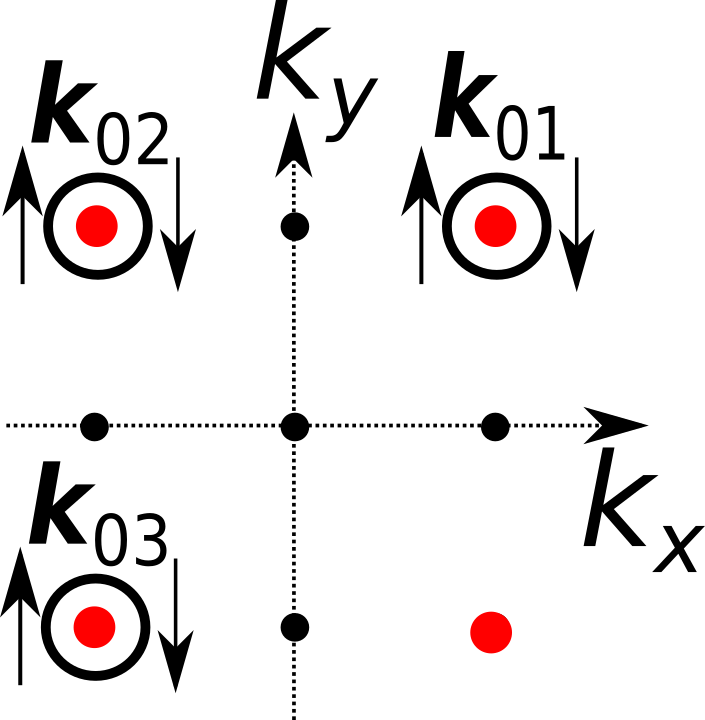}
      \caption{C2.}
      \label{fig:C2ill}
    \end{subfigure}%
    \caption{An illustation of the possible phases. The black points represent lattice sites in momentum space for the 2D square lattice, while the red points represent possible condensate momenta. Lattice sites between $\boldsymbol{k}_{00} = \boldsymbol{0}$ and $\boldsymbol{k}_{0i}$ are not shown. Encircled red points indicate the condensate momentum is occupied, the arrows indicate the presence of pseudospin up and down  atoms in the condensate. (a) and (b) are the polarized (PZ) and non-polarized (NZ) zero momentum phases, named after their degree of pseudospin imbalance and condensate momentum. (c), (d) and (e) show the plane (PW), stripe (SW) and lattice (LW) wave phases, named after the wave patters they generate in real space. (f) and (g) show the arbitrarily named C1 and C2 phases whose excitation spectra will not be explored in this thesis. This figure is adapted from figure 4.1 by Janssønn \cite{master} who could later exclude the C1 and C2 phases in the grand canonical ensemble.
    \label{fig:phaseillustation}}
\end{figure}

\subsection{PZ Phase}
We assume the system condenses at zero momentum into only one pseudospin state such that $N_{\boldsymbol{k}_{00}}^\uparrow = N_0^\uparrow = N_0 = N$, while $N_0^\downarrow = 0$. Thus, $H_0$ is
\begin{equation}
\label{eq:H0PZ}
    H_0^{\textrm{PZ}} = N(\epsilon_{\boldsymbol{0}}+T)+\frac{UN^2}{2N_s}.
\end{equation}
\subsection{NZ Phase}
We assume the condensate has zero momentum and equal number of particles in both pseudospin states. I.e. $N_0^\uparrow = N_0^\downarrow = N_0/2 = N/2$. Hence,
\begin{equation}
\label{eq:H0NZ}
    H_0^{\textrm{NZ}} = N(\epsilon_{\boldsymbol{0}}+T)+\frac{UN^2}{4N_s}(1+\alpha).
\end{equation}
We note that for $\alpha<1$, $H_0^{\textrm{NZ}} < H_0^{\textrm{PZ}}$ while for $\alpha>1$ the opposite is true.
\subsection{PW Phase} \label{sec:PWphaseH0}
Without loss of generality we assume the condensate momentum is $\boldsymbol{k}_{01} = (k_0, k_0)$. Also assuming balance between pseodospin states we find
\begin{equation}
\label{eq:H0PW}
    H_0^{\textrm{PW}} = N(\epsilon_{\boldsymbol{k}_{01}}+T)+N\abs{s_{\boldsymbol{k}_{01}}}\cos(\gamma_{\boldsymbol{k}_{01}}+\Delta\theta_1) + \frac{UN^2}{4N_s}(1+\alpha).
\end{equation}
This $H_0$ is minimized when $\theta_1^\downarrow-\theta_1^\uparrow = \pi/4$ i.e. when \eqref{eq:gammathetapi} holds. Inserting \eqref{eq:gammathetapi} we have
\begin{equation}
    H_0^{\textrm{PW}} = N(\epsilon_{\boldsymbol{k}_{01}}+T)-N\abs{s_{\boldsymbol{k}_{01}}}+ \frac{UN^2}{4N_s}(1+\alpha).
\end{equation}
In terms of the variational parameter $k_0$ the minimum of $H_0$ appears at
\begin{equation}
    k_0 a = k_{0m} a \equiv \arctan(\frac{\lambda_R}{\sqrt{2}t}),
\end{equation}
which can be shown by differentiating with respect to $k_0$. Notice that this is the same $k_0$ found to be the minimum of $\lambda_{\boldsymbol{k}}^-$ for the non-interacting SOC Bose gas.
\subsection{SW Phase}
We assume the condensate momenta are $\boldsymbol{k}_{01} = (k_0, k_0)$ and $\boldsymbol{k}_{03} = -\boldsymbol{k}_{01}$ and that $N_{0i}^\alpha = N/4$ for $i=1,3$ and $\alpha = \uparrow, \downarrow$. Hence,
\begin{align}
    \begin{split}
    \label{eq:H0SW}
        H_0^{\textrm{SW}} &= N(\epsilon_{\boldsymbol{k}_{01}}+T) +\frac{UN^2}{8N_s}\Big(3+\alpha\big(2+\cos(\Delta\theta_1-\Delta\theta_3)\big)\Big) \\
        &+\frac{N}{2}\abs{s_{\boldsymbol{k}_{01}}}\cos(\gamma_{\boldsymbol{k}_{01}}+\Delta\theta_1) +\frac{N}{2}\abs{s_{\boldsymbol{k}_{01}}}\cos(\gamma_{\boldsymbol{k}_{03}}+\Delta\theta_3).
    \end{split}
\end{align}
Once again angles satisfying \eqref{eq:gammathetapi} minimize $H_0$, while $k_0=k_{0m}$ is found to be the value of $k_0$ that minimizes $H_0$. Inserting \eqref{eq:gammathetapi} we have
\begin{align}
    \begin{split}
         H_0^{\textrm{SW}} &= N(\epsilon_{\boldsymbol{k}_{01}}+T) -N\abs{s_{\boldsymbol{k}_{01}}} +\frac{UN^2}{8N_s}(3+\alpha).
    \end{split}
\end{align}
\subsection{LW Phase} \label{sec:H0LWphase}
We assume the condensate momenta are $\pm\boldsymbol{k}_{01} = \pm(k_0, k_0)$ and $\pm\boldsymbol{k}_{02} = \pm(-k_0, k_0)$ and that $N_{0i}^\alpha = N/8$ for all $i$ and $\alpha = \uparrow, \downarrow$. Hence,
\begin{align}
    \begin{split}
    \label{eq:H0LW}
        H_0^{\textrm{LW}} &= N(\epsilon_{\boldsymbol{k}_{01}}+T) + \frac{N}{4}\abs{s_{\boldsymbol{k}_{01}}}\sum_{i=1}^4 \cos(\gamma_{\boldsymbol{k}_{0i}}+\Delta\theta_i)\\
        +&\frac{UN^2}{32N_s}\Bigg(14+2\cos(\theta_1^\uparrow+\theta_3^\uparrow-\theta_2^\uparrow-\theta_4^\uparrow)+2\cos(\theta_1^\downarrow+\theta_3^\downarrow-\theta_2^\downarrow-\theta_4^\downarrow)\\
        &+\alpha \bigg[8+\cos(\Delta\theta_1-\Delta\theta_2) + \cos(\Delta\theta_1-\Delta\theta_3) + \cos(\Delta\theta_1-\Delta\theta_4) \\
        &\mbox{\qquad\quad}+ \cos(\Delta\theta_2-\Delta\theta_3) + \cos(\Delta\theta_2-\Delta\theta_4) + \cos(\Delta\theta_3-\Delta\theta_4)\\
        &\mbox{\qquad\quad} + \cos(\theta_1^\uparrow+\theta_3^\downarrow-\theta_2^\downarrow-\theta_4^\uparrow) + \cos(\theta_1^\downarrow+\theta_3^\uparrow-\theta_2^\uparrow-\theta_4^\downarrow) \\
        &\mbox{\qquad\quad} + \cos(\theta_1^\uparrow+\theta_3^\downarrow-\theta_2^\uparrow-\theta_4^\downarrow) + \cos(\theta_1^\downarrow+\theta_3^\uparrow-\theta_2^\downarrow-\theta_4^\uparrow) \bigg]\Bigg).
    \end{split}
\end{align}
% It will be shown in chapter \ref{sec:LWphase} that the choices \eqref{eq:gammathetapi} together with 
It will be shown in appendix \ref{app:LW} that the choices \eqref{eq:gammathetapi} together with 
$$\theta_1^\uparrow+\theta_3^\uparrow-\theta_2^\uparrow-\theta_4^\uparrow = -\pi/2$$ minimize $\langle H_{\textrm{LW}} \rangle$. Let us define $\theta_1^\alpha+\theta_3^\beta-\theta_2^\gamma-\theta_4^\delta \equiv \alpha\beta\gamma\delta$. Using \eqref{eq:gammathetapi} we find that
\begin{align}
    \begin{split}
        \downarrow\downarrow\downarrow\downarrow = \uparrow\uparrow\uparrow\uparrow + \pi, \mbox{\qquad} \uparrow\downarrow\uparrow\downarrow = \uparrow\downarrow\downarrow\uparrow + \pi \mbox{\qquad and \qquad} \downarrow\uparrow\downarrow\uparrow = \downarrow\uparrow\uparrow\downarrow + \pi,
    \end{split}
\end{align}
which means that all the $\cos(\alpha\beta\gamma\delta)$-terms cancel since $\cos(x+\pi) = -\cos(x)$. Additionally, we find that $\uparrow\downarrow\downarrow\uparrow = \downarrow\uparrow\uparrow\downarrow = \uparrow\uparrow\uparrow\uparrow - \pi/2$ and $\uparrow\downarrow\uparrow\downarrow = \downarrow\uparrow\downarrow\uparrow =  \uparrow\uparrow\uparrow\uparrow + \pi/2$.
%Using \eqref{eq:gammathetapi} we find that $$\theta_1^\downarrow+\theta_3^\downarrow-\theta_2^\downarrow-\theta_4^\downarrow = \theta_1^\uparrow+\theta_3^\uparrow-\theta_2^\uparrow-\theta_4^\uparrow +\pi = \pi/2,$$ while $$\theta_1^\uparrow+\theta_3^\downarrow-\theta_2^\downarrow-\theta_4^\uparrow = \theta_1^\downarrow+\theta_3^\uparrow-\theta_2^\uparrow-\theta_4^\downarrow = \theta_1^\uparrow+\theta_3^\uparrow-\theta_2^\uparrow-\theta_4^\uparrow -\pi/2 = -\pi.$$ 
Once again it is $k_{0m}$ that minimizes $H_0^{\textrm{LW}}$. Inserting the choices for the angles, we get
\begin{align}
    \begin{split}
        H_0^{\textrm{LW}} &= N(\epsilon_{\boldsymbol{k}_{01}}+T) - N\abs{s_{\boldsymbol{k}_{01}}}+\frac{UN^2}{16N_s}(7+3\alpha).
    \end{split}
\end{align}
This is greater than $H_0^{\textrm{SW}}$ for all $\alpha \geq 0$.
% For $\alpha>1$ this is less than $H_0^{\textrm{SW}}$. This disagrees with the results in \cite{master} and the reason is that in \cite{master} it is found that 
% \begin{equation}
% \label{eq:JansLWanglesup}
%     \theta_1^\uparrow+\theta_3^\uparrow-\theta_2^\uparrow-\theta_4^\uparrow = m^\uparrow \pi, \mbox{\qquad\quad} m^\uparrow = 0,1.
% \end{equation}
% Then, the last factor in $H_0^{\textrm{LW}}$ would be $(7+3\alpha)$ and $H_0^{\textrm{LW}}>H_0^{\textrm{SW}}$ for all $\alpha\geq 0$. Relation \eqref{eq:JansLWanglesup} was found in \cite{master} by requiring that the chemical potential is real. We have here removed the chemical potential from the description, and therefore this restriction is no longer relevant. Hence, the LW phase attains a lower value of $H_0$ in the canonical ensemble than in the grand canonical ensemble.

\subsection{C1 and C2 Phases}
In the C1 phase, we assume the condensate momenta are $\boldsymbol{k}_{01} = (k_0, k_0)$ and $\boldsymbol{k}_{02} = (-k_0, k_0)$ and that $N_{0i}^\alpha = N/4$ for $i=1,2$ and $\alpha = \uparrow, \downarrow$. 
% Hence,
% \begin{align}
%     \begin{split}
%         H_0^{\textrm{C1}} &= N(\epsilon_{\boldsymbol{k}_{01}}+T) +\frac{UN^2}{8N_s}\Big(3+\alpha\big(2+\cos(\Delta\theta_1-\Delta\theta_2)\big)\Big) \\
%         &+\frac{N}{2}\abs{s_{\boldsymbol{k}_{01}}}\cos(\gamma_{\boldsymbol{k}_{01}}+\Delta\theta_1) +\frac{N}{2}\abs{s_{\boldsymbol{k}_{01}}}\cos(\gamma_{\boldsymbol{k}_{02}}+\Delta\theta_2).
%     \end{split}
% \end{align}
Once again angles satisfying \eqref{eq:gammathetapi} minimizes $H_0$, at least assuming that SOC dominates the minimization. Furthermore, $k_0=k_{0m}$ is found to be the value of $k_0$ that minimizes $H_0$. Inserting \eqref{eq:gammathetapi} we have
\begin{align}
    \begin{split}
         H_0^{\textrm{C1}} &= N(\epsilon_{\boldsymbol{k}_{01}}+T) -N\abs{s_{\boldsymbol{k}_{01}}} +\frac{UN^2}{8N_s}(3+2\alpha),
    \end{split}
\end{align}
which is greater than $H_0^{\textrm{PW}}$ and $H_0^{\textrm{SW}}$ for all $\alpha \geq 0$. For $\alpha>1$ is is also greater than $H_0^{\textrm{LW}}$.

In the C2 phase, we assume the condensate momenta are $\boldsymbol{k}_{01} = (k_0, k_0)$, $\boldsymbol{k}_{02} = (-k_0, k_0)$ and $\boldsymbol{k}_{03} = -\boldsymbol{k}_{01}$ and that $N_{0i}^\alpha = N/6$ for $i=1,2,3$ and $\alpha = \uparrow, \downarrow$. 
% Hence,
% \begin{align}
%     \begin{split}
%         H_0^{\textrm{C2}} &= N(\epsilon_{\boldsymbol{k}_{01}}+T) +\frac{N}{3}\abs{s_{\boldsymbol{k}_{01}}}\sum_{i=1}^3 \cos(\gamma_{\boldsymbol{k}_{0i}}+\Delta\theta_i) \\
%         &+\frac{UN^2}{32N_s}\Big(15+\alpha\big(9+2\cos(\Delta\theta_1-\Delta\theta_2)\\
%         & \mbox{\qquad\qquad\qquad\qquad} +2\cos(\Delta\theta_1-\Delta\theta_3)+2\cos(\Delta\theta_2-\Delta\theta_3)\big)\Big).
%     \end{split}
% \end{align}
Angles satisfying \eqref{eq:gammathetapi} minimizes $H_0$, at least assuming that SOC dominates the minimization. Additionally, $k_0=k_{0m}$ is found to be the value of $k_0$ that minimizes $H_0$. Inserting \eqref{eq:gammathetapi} we have
\begin{align}
    \begin{split}
         H_0^{\textrm{C2}} &= N(\epsilon_{\boldsymbol{k}_{01}}+T) -N\abs{s_{\boldsymbol{k}_{01}}} +\frac{UN^2}{36N_s}(15+7\alpha),
    \end{split}
\end{align}
which is greater than $H_0^{\textrm{PW}}$ for $\alpha<3$, greater than $H_0^{\textrm{SW}}$ for all $\alpha \geq 0$ and greater than $H_0^{\textrm{LW}}$ for $\alpha>3$. Though further investigation is required, we will exclude the C1 and C2 phases from now on. The arguments being that at least two of the PW, SW and LW phases have lower $H_0$ than the C1 and C2 phases at any $\alpha \geq 0, \alpha \neq 3$ and that they are not mentioned as possible states in the review article \cite{SOCOLRev}. Furthermore, the C1 and C2 phases could be dismissed in \cite{master} as they would render a complex chemical potential, and there is therefore reason to suspect these will not be relevant. 

\subsection{Phase Diagram}
In the PZ phase $N_0^\downarrow = N^\downarrow = 0$ is an input parameter. Hence, the PZ phase is special, in the sense that it has different input parameters than the other phases. Thus, including it in a phase diagram no longer makes sense, as opposed to what was done in \cite{master}. If one sets $N^\uparrow = N$ and $N^\downarrow = 0$ one will get the PZ phase for all parameters based on a treatment of $H_0$ only. Its excitation spectrum is investigated in the next chapter. We ignored a completely pseudospin imbalanced version of the PW phase. However, it can be shown that this phase will always have a higher value of $H_0$ than the PZ phase. Essentially, $\epsilon_{\boldsymbol{0}} = -4t$ is replaced by $\epsilon_{\boldsymbol{k}_{01}} = -4t\cos(k_0 a)$ which is always greater when $k_0 \neq 0$. 

The remaining phases may be compared at equal input parameters. The dependence on the energy offset $T$ is the same in all phases, and it is therefore arbitrary. The variational parameters are set to the values that minimize $H_0$ in the respective phases. Neglecting excitations, and assuming NZ, PW, SW and LW are the only possible phases the phase diagram is shown in figure \ref{fig:PDH0new}. For $\alpha<1$ this is the same as the phase diagram in \cite{master}. For $\alpha>1$ the PZ phase has been replaced by the SW phase for nonzero SOC and the NZ phase for zero SOC. At zero SOC and $N^\uparrow = N^\downarrow$, the NZ phase is the only possible phase. For nonzero SOC $H_0^{\textrm{PW}}< H_0^{\textrm{SW}}$ for $\alpha<1$ and $H_0^{\textrm{PW}}> H_0^{\textrm{SW}}$ for $\alpha>1$, in agreement with \cite{SOCOLRev}. Also, as mentioned $H_0^{\textrm{LW}}> H_0^{\textrm{SW}}$ for $\alpha \geq 0$ meaning the LW phase does not enter the phase diagram when neglecting excitations. 

\begin{figure}
    \centering
    \includegraphics[width=0.7\linewidth]{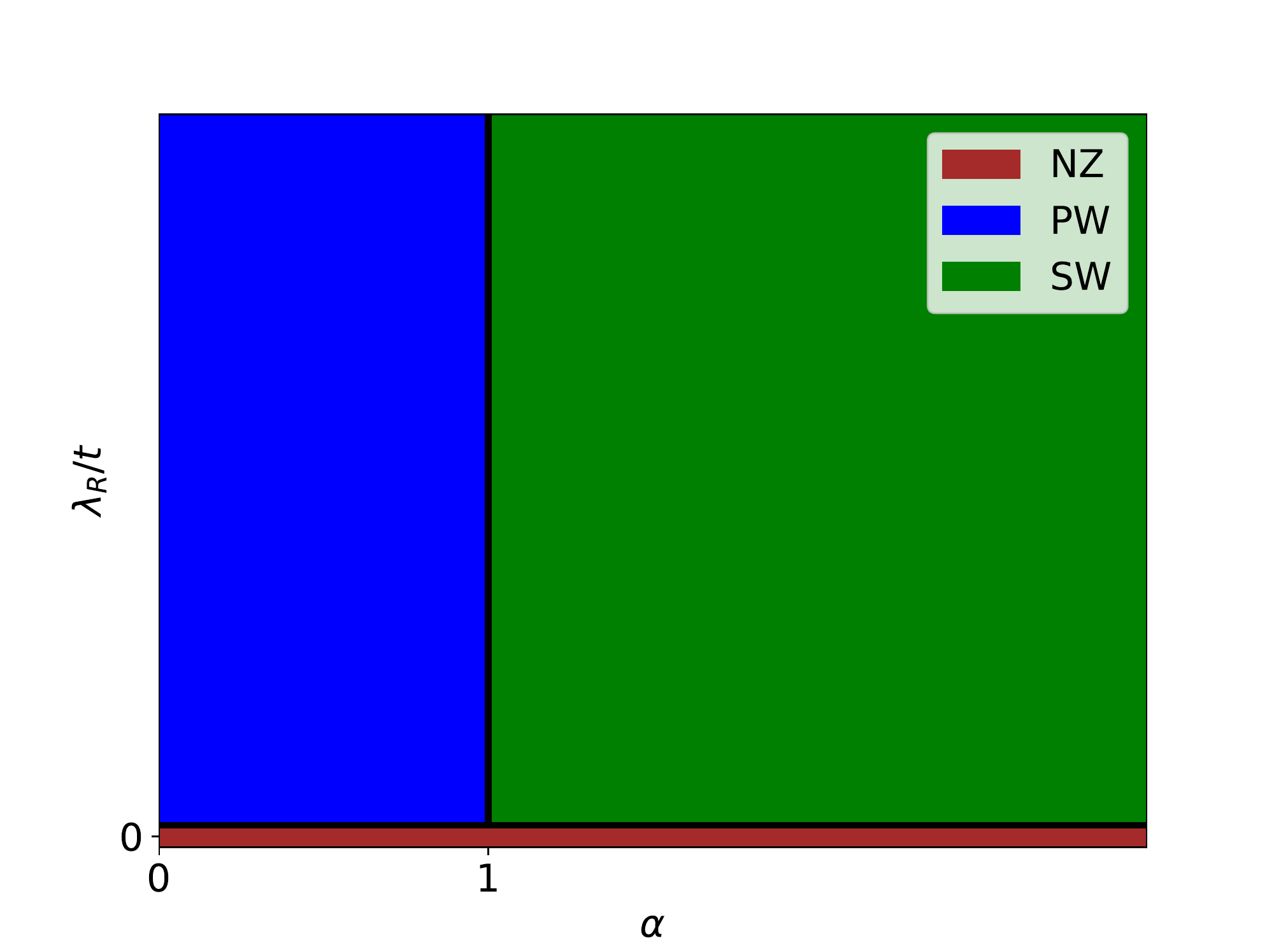}
    \caption[Phase diagram]{Phase diagram when neglecting excitations and $N^\uparrow = N^\downarrow$ is chosen. The area of the NZ phase is exaggerated. This phase only occurs for no SOC, i.e. $\lambda_R = 0$.}
    \label{fig:PDH0new}
\end{figure}

% For $\alpha>1$ the PZ and SW phases have been replaced by the LW phase for nonzero SOC and the NZ phase for zero SOC. At zero SOC and $N^\uparrow = N^\downarrow$, the NZ phase is the only possible state. For nonzero SOC $H_0^{\textrm{PW}}< H_0^{\textrm{LW}}$ for $\alpha<1$ and $H_0^{\textrm{PW}}> H_0^{\textrm{LW}}$ for $\alpha>1$. Also, as mentioned $H_0^{\textrm{SW}}> H_0^{\textrm{LW}}$ for $\alpha>1$ meaning the SW phase does not enter the phase diagram when neglecting excitations. This is a strange result given that the SW phase is believed to exist, however, this will be remedied when elementary excitations are included. The SW phase was detected experimentally in \cite{SWdetectedExp} though for a slightly different system that what is studied here. In \cite{SWdetectedExp} a continuum BEC is loaded into a 1D optical superlattice. Additionally a different SOC scheme is used, realizing a model similar to what is described in \cite{SWdetectedTheory}.

% \begin{figure}
%     \centering
%     \includegraphics[width=0.7\linewidth]{Figures/PhaseDiagramH0Us05.pdf}
%     \caption[Phase diagram]{Phase diagram when neglecting excitations. $N_0^\alpha = N^\alpha = N/2$, $N/N_s = 1$ and $U/t = 0.1$ was used. The area of NZ is exaggerated. This phase only occurs for no SOC, i.e. $\lambda_R = 0$.}
%     \label{fig:PDH0new}
% \end{figure}

If we instead overlook the fact that the PZ phase requires different input parameters, it is found that PZ, NZ, PW and SW are the only possible phases when neglecting excitations. The phase diagram would be similar to the phase diagram in figure 4.2 of \cite{master} and that reported using numerical calculations in \cite{galteland2016competing}. The fundamental difference between the current approach and that used in \cite{master}, is that a transition from the SW to the PZ phase requires a change of the input parameters $N^\uparrow$ and $N^\downarrow$, not a change of $\alpha$ and $\lambda_R$. Hence, the PZ phase is removed when $N^\uparrow = N^\downarrow$. Also note that this difference in the approaches would vanish when considering the excitations. To make the PZ phase stable in the grand canonical ensemble would require $\mu^\uparrow \neq \mu^\downarrow$ which represents different input parameters than in the other phases, where $\mu^\uparrow = \mu^\downarrow$ is assumed \cite{master}. This was not shown in \cite{master}, but is analogous to the result that will be obtained in this thesis in the canonical ensemble, namely that stability of the PZ phase requires different energy offsets $T^\uparrow \neq T^\downarrow$.

\cleardoublepage
%===================================== CHAP 4 =================================

\chapter{Excitation Spectra and Critical Superfluid Velocity} \label{chap:Excitations} %in Spin-Orbit Coupled Weakly Interacting Bose-Einstein Condensate

We have specialized to a 2D square optical lattice with lattice constant $a$. It will be assumed that $t^\uparrow = t^\downarrow = t$, $U^{\uparrow\uparrow}=U^{\downarrow\downarrow} = U$ and $U^{\uparrow\downarrow}=U^{\downarrow\uparrow} = \alpha U$. Then $\epsilon_{\boldsymbol{k}}^\uparrow =\epsilon_{\boldsymbol{k}}^\downarrow = \epsilon_{\boldsymbol{k}}$. The expressions for $\epsilon_{\boldsymbol{k}}^\alpha$ and $s_{\boldsymbol{k}}$ are given in \eqref{eq:epsk} and \eqref{eq:sk}. Apart from the PZ phase we will choose the input parameters such that $T^{\uparrow} = T^{\downarrow} = T$ and $N^\uparrow = N^\downarrow = N/2$. 

When setting up the phase diagram in figure \ref{fig:PDH0new}, we neglected elementary excitations, i.e. set $H\approx H_0$. The purpose of this chapter is to include elementary excitations to see if their effects change the conclusions in figure \ref{fig:PDH0new}. We will include $H_1$ and $H_2$ in the treatment, in order to obtain the quasiparticle excitation spectrum, the free energy and the critical superfluid velocity in the phases PZ, NZ, PW, SW and LW.

\section{PZ Phase}

In the PZ phase only $\boldsymbol{k}_{00} = \boldsymbol{0}$ with spin up is occupied. Thus, $N_0^\uparrow = N_0$ and $N_0^\downarrow = 0$. We define $U_s$ by $2U_s = U^{\uparrow\uparrow}N^\uparrow/N_s = UN/N_s$. We decided to define $2U_s = UN/N_s$ such that we use the same $U_s$ in all phases. Furthermore, the term $4t+\epsilon_{\boldsymbol{k}}$ will appear often. This term varies between $0$ and $8t$, and we give it a new name to better the notation. In other words we define
\begin{align}
    U_s &\equiv \frac{UN}{2N_s} \mbox{\qquad and}\\
    \mathcal{E}_{\boldsymbol{k}} &\equiv \epsilon_{\boldsymbol{k}} - \epsilon_{\boldsymbol{0}} = 4t + \epsilon_{\boldsymbol{k}} = 4t -2t\left(\cos(k_x a) + \cos(k_y a)\right).
\end{align}
From \eqref{eq:H0} we find
\begin{equation}
    H_0^{''} = N_0(\epsilon_{\boldsymbol{0}}+T)+\frac{UN_0^2}{2N_s},
\end{equation}
where the double prime is used to separate it from quantities $H_0$ and $H'_0$ to be defined later. For the PZ phase, there is a problem with the assumption that the energy offset is equal for both pseudospin states, $T^{\uparrow}=T^{\downarrow} = T$, because such a choice does not agree with the assumption that all particles condense into the pseudospin up state. $H_0^{''}$ only depends on $T^{\uparrow}$ and we have already let $T^{\uparrow} \equiv T$. For now, we leave $T^{\downarrow}$ undetermined.

Inserting \eqref{eq:NN0excit} into $H_0^{''}$ we get
\begin{align}
    \begin{split}
    \label{eq:H0PZrewrite}
        H_0^{''} &= N(\epsilon_{\boldsymbol{0}}+T)+\frac{UN^2}{2N_s} \\
        &-(\epsilon_{\boldsymbol{0}}+T)\sum_{\boldsymbol{k}\neq \boldsymbol{0}}\left(A_{\boldsymbol{k}}^{\uparrow\dagger}A_{\boldsymbol{k}}^{\uparrow} + A_{\boldsymbol{k}}^{\downarrow\dagger}A_{\boldsymbol{k}}^{\downarrow}\right) -\frac{UN}{N_s}\sum_{\boldsymbol{k}\neq \boldsymbol{0}}\left(A_{\boldsymbol{k}}^{\uparrow\dagger}A_{\boldsymbol{k}}^{\uparrow} + A_{\boldsymbol{k}}^{\downarrow\dagger}A_{\boldsymbol{k}}^{\downarrow}\right).
    \end{split}
\end{align}
We define the first line as $H_0$ and move the second line to $H_2$ as it is quadratic in excitation operators. Note that terms more than quadratic in excitation operators, or equivalently of order less than $N/N_s$ have been neglected to the same order of approximation as the MFT Hamiltonian \eqref{eq:MFTH}. The Kronecker delta in \eqref{eq:H1} gives for the PZ phase $\boldsymbol{k}=\boldsymbol{k}_{00}=\boldsymbol{0}$ which is excluded from the sum, so $H_1 = 0$. 

In $H_2$ we may replace $N_0$ by $N$ directly to the same order of approximation \cite{Pitaevskii}. Writing out the sums in \eqref{eq:H2} and including the contribution from \eqref{eq:H0PZrewrite} yields
\begin{align}
    \begin{split}
    \label{eq:H2PZfirst}
        H_2 = \sum_{\boldsymbol{k}\neq \boldsymbol{0}} \Bigg( &(\mathcal{E}_{\boldsymbol{k}} +2U_s)A_{\boldsymbol{k}}^{\uparrow\dagger}A_{\boldsymbol{k}}^{\uparrow} +(\mathcal{E}_{\boldsymbol{k}}+2\Delta) A_{\boldsymbol{k}}^{\downarrow\dagger}A_{\boldsymbol{k}}^{\downarrow} \\
        &+s_{\boldsymbol{k}}A_{\boldsymbol{k}}^{\uparrow\dagger}A_{\boldsymbol{k}}^{\downarrow} + s_{\boldsymbol{k}}^{*}A_{\boldsymbol{k}}^{\downarrow\dagger}A_{\boldsymbol{k}}^{\uparrow} \\
        &+U_s e^{i2\theta_{\boldsymbol{0}}^{\uparrow}}A_{\boldsymbol{k}}^{\uparrow}A_{\boldsymbol{-k}}^{\uparrow} + U_s e^{-i2\theta_{\boldsymbol{0}}^{\uparrow}}A_{\boldsymbol{-k}}^{\uparrow\dagger}A_{\boldsymbol{k}}^{\uparrow\dagger}\Bigg).
    \end{split}
\end{align}
A new quantity $\Delta$ has been defined. The coefficient of $A_{\boldsymbol{k}}^{\downarrow\dagger}A_{\boldsymbol{k}}^{\downarrow}$ is $\mathcal{E}_{\boldsymbol{k}} +T^{\downarrow}-T +2U_s(\alpha-1) \equiv \mathcal{E}_{\boldsymbol{k}}+2\Delta$. Hence, 
\begin{equation}
    \label{eq:Delta}
    2\Delta \equiv T^{\downarrow}-T+ 2U_s(\alpha-1).
\end{equation}
For the assumption that all particles condense at pseudospin up to make sense, the coefficient of $A_{\boldsymbol{k}}^{\uparrow\dagger}A_{\boldsymbol{k}}^{\uparrow}$ should be lower than the coefficient of $A_{\boldsymbol{k}}^{\downarrow\dagger}A_{\boldsymbol{k}}^{\downarrow}$. I.e. we require $\Delta > U_s$, and in terms of the input parameter $T^{\downarrow}$ this requirement is
\begin{equation}
    T^{\downarrow} > T + 2U_s(2-\alpha).
\end{equation}
A weakly interacting two-component BEC without SOC was treated by Linder and Sudbø in \cite{LS}. It is apparent that in \cite{LS} $\Delta=0$ if one sets $n_A = N_0/N_s$ and $n_B=0$, i.e. try to use their results in the PZ phase. As we have just argued, this means the PZ phase does not make sense and it could never be stable. It also means that if we want to compare our PZ phase results regarding excitation spectra and critical superfluid velocity to the results in \cite{LS}, we should set both $s_{\boldsymbol{k}}=0$ and $\Delta=0$.

% We note that the terms $\epsilon_{\boldsymbol{k}}^\alpha$ of \cite{LS} corresponds to $\mathcal{E}_{\boldsymbol{k}}$ here, $F_A$ corresponds to $2U_s$ while $F_B$ and $F_{AB}$ correspond to 0 because $N_0^\downarrow=0$.

With the aim of BV diagonalizing the Hamiltonian, we define the operator vectors
\begin{align}
\begin{split}
    \boldsymbol{A}_{\boldsymbol{k}} &= (A_{\boldsymbol{k}}^{\uparrow},A_{-\boldsymbol{k}}^{\uparrow}, A_{\boldsymbol{k}}^{\downarrow},A_{-\boldsymbol{k}}^{\downarrow},A_{\boldsymbol{k}}^{\uparrow\dagger},A_{-\boldsymbol{k}}^{\uparrow\dagger}, A_{\boldsymbol{k}}^{\downarrow\dagger},A_{-\boldsymbol{k}}^{\downarrow\dagger})^T \mbox{\qquad and}\\
    \boldsymbol{A}_{\boldsymbol{k}}^{\dagger} &= (A_{\boldsymbol{k}}^{\uparrow\dagger},A_{-\boldsymbol{k}}^{\uparrow\dagger}, A_{\boldsymbol{k}}^{\downarrow\dagger},A_{-\boldsymbol{k}}^{\downarrow\dagger}, A_{\boldsymbol{k}}^{\uparrow},A_{-\boldsymbol{k}}^{\uparrow}, A_{\boldsymbol{k}}^{\downarrow},A_{-\boldsymbol{k}}^{\downarrow}).
\end{split}
\end{align} 
These satisfy the commutator $\boldsymbol{A}_{\boldsymbol{k}}\otimes \boldsymbol{A}_{\boldsymbol{k}}^{\dagger} -((\boldsymbol{A}_{\boldsymbol{k}}^{\dagger})^T\otimes(\boldsymbol{A}_{\boldsymbol{k}})^T)^T=J$ when $\boldsymbol{k} \neq \boldsymbol{0}$. We can now write 
\begin{equation}
    H = H_0 + \sum_{\boldsymbol{k} \neq \boldsymbol{0}} \boldsymbol{A}_{\boldsymbol{k}}^{\dagger}M_{\boldsymbol{k}}\boldsymbol{A}_{\boldsymbol{k}},
\end{equation}
%where $M_{\boldsymbol{k}}$ is an $8\cross8$ matrix. $\boldsymbol{A}_{\boldsymbol{k}}^{\dagger}M_{\boldsymbol{k}}\boldsymbol{A}_{\boldsymbol{k}} = \sum_{ij}(\boldsymbol{A}_{\boldsymbol{k}}^{\dagger})_i M_{\boldsymbol{k},ij} (\boldsymbol{A}_{\boldsymbol{k}})_j$ tells us that $M_{\boldsymbol{k}, ij}$ is the coefficient in front of $(\boldsymbol{A}_{\boldsymbol{k}}^{\dagger})_i(\boldsymbol{A}_{\boldsymbol{k}})_j$ in the Hamiltonian.
where $M_{\boldsymbol{k}}$ is an $8\cross8$ matrix that should be written on the form
%The matrix $M_{\boldsymbol{k}}$ should be written on a form
\begin{gather}
\label{eq:Mform}
    M_{\boldsymbol{k}} = 
    \begin{pmatrix}
    M_1 & M_2 \\
   M_2^* & M_1^*
    \end{pmatrix},
\end{gather}
where $M_1^\dagger = M_1$, $M_2^T=M_2$ and we have suppressed the $\boldsymbol{k}$-dependence of the submatrices in the notation. We can do this with our $H_2$ by using the commutation relations, which give $A_{\boldsymbol{k}}^{\alpha\dagger}A_{\boldsymbol{k}}^{\alpha} = (A_{\boldsymbol{k}}^{\alpha\dagger}A_{\boldsymbol{k}}^{\alpha} + A_{\boldsymbol{k}}^{\alpha}A_{\boldsymbol{k}}^{\alpha\dagger} - 1)/2$ and for commuting operators simply e.g. $A_{\boldsymbol{k}}^{\alpha}A_{\boldsymbol{-k}}^{\alpha} = (A_{\boldsymbol{k}}^{\alpha}A_{\boldsymbol{-k}}^{\alpha} + A_{\boldsymbol{-k}}^{\alpha}A_{\boldsymbol{k}}^{\alpha})/2$. Note that this simultaneously shifts $H_0$,
\begin{align}
    \begin{split}
        H'_0&=H_0  - \frac{1}{2}\sum_{\boldsymbol{k} \neq \boldsymbol{0}}\left(2\mathcal{E}_{\boldsymbol{k}}+2U_s +2\Delta \right) \\
        &= H_0  - \sum_{\boldsymbol{k} \neq \boldsymbol{0}}\left(\mathcal{E}_{\boldsymbol{k}}+U_s+\Delta \right).
    \end{split}
\end{align}
This is a quantum mechanical correction to the ground state because it stems from a commutator. Remembering that we have $N_s$ lattice sites, we note that there are $N_s-1$ different $\boldsymbol{k}$ in the sum. Thus, the sum over $\boldsymbol{k}$ can be computed for the $\boldsymbol{k}$ independent parts. Firstly,
\begin{equation}
    \sum_{\boldsymbol{k} \neq \boldsymbol{0}}\left(4t  +U_s +\Delta \right) = (4t  +U_s +\Delta)\sum_{\boldsymbol{k} \neq \boldsymbol{0}} 1  = (N_s-1)(4t+U_s+\Delta).
\end{equation}
Secondly,
\begin{equation}
    \sum_{\boldsymbol{k} \neq \boldsymbol{0}} \epsilon_{\boldsymbol{k}} = \sum_{\boldsymbol{k} } \epsilon_{\boldsymbol{k}} - \epsilon_{\boldsymbol{0}} = \sum_{\boldsymbol{k} } \epsilon_{\boldsymbol{k}} +4t. 
\end{equation}
Here,
\begin{equation}
    \sum_{\boldsymbol{k} } \epsilon_{\boldsymbol{k}} = -2t \sum_{k_x k_y} \left( \cos(k_x a) + \cos(k_y a) \right) = -4t  \sum_{k_x k_y} \cos(k_x a).
\end{equation}
The possible $\boldsymbol{k}$ are equally distributed in the first Brillouin zone (1BZ), i.e. $-\pi \leq k_x a < \pi$, $-\pi \leq k_y a < \pi$. Looking at the form of $\epsilon_{\boldsymbol{k}}$ this means the sum $\sum_{\boldsymbol{k}} \epsilon_{\boldsymbol{k}}$ has to be zero, similar to how $\int_{-\pi}^\pi \cos(x) dx = 0$. Thus,
\begin{equation}
\label{eq:H'0PZ}
    H'_0 = H_0 -4tN_s-(N_s-1)(U_s+\Delta).
\end{equation}

We also use 
% \begin{equation}
%     \sum_{\boldsymbol{k}}C(\boldsymbol{k})A_{\boldsymbol{k}}^{\uparrow\dagger}A_{\boldsymbol{k}}^{\uparrow} =  \sum_{\boldsymbol{k}}C(-\boldsymbol{k})A_{-\boldsymbol{k}}^{\uparrow\dagger}A_{-\boldsymbol{k}}^{\uparrow}  
% \end{equation}
% and similar expressions to rewrite
\begin{equation}
    \sum_{\boldsymbol{k}}C(\boldsymbol{k})A_{\boldsymbol{k}}^{\uparrow\dagger}A_{\boldsymbol{k}}^{\uparrow} = \sum_{\boldsymbol{k}}\frac{1}{2}\left( C(\boldsymbol{k})A_{\boldsymbol{k}}^{\uparrow\dagger}A_{\boldsymbol{k}}^{\uparrow}+C(-\boldsymbol{k})A_{-\boldsymbol{k}}^{\uparrow\dagger}A_{-\boldsymbol{k}}^{\uparrow} \right)
\end{equation}
and similar relations to rewrite $H_2$, simultaneously applying the relations $\epsilon_{-\boldsymbol{k}} = \epsilon_{\boldsymbol{k}} \iff \mathcal{E}_{-\boldsymbol{k}} = \mathcal{E}_{\boldsymbol{k}}$ and $s_{-\boldsymbol{k}} = -s_{\boldsymbol{k}}$. Starting from \eqref{eq:H2PZfirst}
% \begin{align}
%     \begin{split}
%         H_2 = \sum_{\boldsymbol{k} \neq \boldsymbol{0}} \Bigg( &\frac{\mathcal{E}_{\boldsymbol{k}} +2U_s}{2}(A_{\boldsymbol{k}}^{\uparrow\dagger}A_{\boldsymbol{k}}^{\uparrow} + A_{\boldsymbol{k}}^{\uparrow} A_{\boldsymbol{k}}^{\uparrow\dagger}) \\
%         &+\frac{\mathcal{E}_{\boldsymbol{k}}+2\Delta}{2}(A_{\boldsymbol{k}}^{\downarrow\dagger}A_{\boldsymbol{k}}^{\downarrow}+ A_{\boldsymbol{k}}^{\downarrow} A_{\boldsymbol{k}}^{\downarrow\dagger})\\
%         &+\frac{s_{\boldsymbol{k}}}{2}(A_{\boldsymbol{k}}^{\uparrow\dagger}A_{\boldsymbol{k}}^{\downarrow} +A_{\boldsymbol{k}}^{\downarrow}A_{\boldsymbol{k}}^{\uparrow\dagger} )\\ &+\frac{s_{\boldsymbol{k}}^{*}}{2}(A_{\boldsymbol{k}}^{\downarrow\dagger}A_{\boldsymbol{k}}^{\uparrow}+ A_{\boldsymbol{k}}^{\uparrow}A_{\boldsymbol{k}}^{\downarrow\dagger}) \\
%         &+\frac{2U_s}{4}e^{i2\theta_{\boldsymbol{0}}^{\uparrow}}(A_{\boldsymbol{k}}^{\uparrow}A_{\boldsymbol{-k}}^{\uparrow}+A_{\boldsymbol{-k}}^{\uparrow}A_{\boldsymbol{k}}^{\uparrow})\\
%         &+ \frac{2U_s}{4}e^{-i2\theta_{\boldsymbol{0}}^{\uparrow}}(A_{\boldsymbol{-k}}^{\uparrow\dagger}A_{\boldsymbol{k}}^{\uparrow\dagger}+ A_{\boldsymbol{k}}^{\uparrow\dagger}A_{\boldsymbol{-k}}^{\uparrow\dagger})\Bigg),
%     \end{split}
% \end{align}
we find
\begin{align}
    \begin{split}
        H_2 = \sum_{\boldsymbol{k} \neq \boldsymbol{0}} \Bigg( &\frac{\mathcal{E}_{\boldsymbol{k}} +2U_s}{4}(A_{\boldsymbol{k}}^{\uparrow\dagger}A_{\boldsymbol{k}}^{\uparrow} + A_{\boldsymbol{k}}^{\uparrow} A_{\boldsymbol{k}}^{\uparrow\dagger} + A_{-\boldsymbol{k}}^{\uparrow\dagger}A_{-\boldsymbol{k}}^{\uparrow} + A_{-\boldsymbol{k}}^{\uparrow} A_{-\boldsymbol{k}}^{\uparrow\dagger}) \\
        &+\frac{\mathcal{E}_{\boldsymbol{k}}+2\Delta}{4}(A_{\boldsymbol{k}}^{\downarrow\dagger}A_{\boldsymbol{k}}^{\downarrow}+ A_{\boldsymbol{k}}^{\downarrow} A_{\boldsymbol{k}}^{\downarrow\dagger}+ A_{-\boldsymbol{k}}^{\downarrow\dagger}A_{-\boldsymbol{k}}^{\downarrow}+ A_{-\boldsymbol{k}}^{\downarrow} A_{-\boldsymbol{k}}^{\downarrow\dagger})\\
        &+\frac{s_{\boldsymbol{k}}}{4}(A_{\boldsymbol{k}}^{\uparrow\dagger}A_{\boldsymbol{k}}^{\downarrow} +A_{\boldsymbol{k}}^{\downarrow}A_{\boldsymbol{k}}^{\uparrow\dagger} - A_{-\boldsymbol{k}}^{\uparrow\dagger}A_{-\boldsymbol{k}}^{\downarrow} -A_{-\boldsymbol{k}}^{\downarrow}A_{-\boldsymbol{k}}^{\uparrow\dagger} )\\ &+\frac{s_{\boldsymbol{k}}^{*}}{4}(A_{\boldsymbol{k}}^{\downarrow\dagger}A_{\boldsymbol{k}}^{\uparrow}+ A_{\boldsymbol{k}}^{\uparrow}A_{\boldsymbol{k}}^{\downarrow\dagger} - A_{-\boldsymbol{k}}^{\downarrow\dagger}A_{-\boldsymbol{k}}^{\uparrow}- A_{-\boldsymbol{k}}^{\uparrow}A_{-\boldsymbol{k}}^{\downarrow\dagger}) \\
        &+\frac{2U_s}{4}e^{i2\theta_{\boldsymbol{0}}^{\uparrow}}(A_{\boldsymbol{k}}^{\uparrow}A_{\boldsymbol{-k}}^{\uparrow}+A_{\boldsymbol{-k}}^{\uparrow}A_{\boldsymbol{k}}^{\uparrow})\\
        &+ \frac{2U_s}{4}e^{-i2\theta_{\boldsymbol{0}}^{\uparrow}}(A_{\boldsymbol{-k}}^{\uparrow\dagger}A_{\boldsymbol{k}}^{\uparrow\dagger}+ A_{\boldsymbol{k}}^{\uparrow\dagger}A_{\boldsymbol{-k}}^{\uparrow\dagger})\Bigg).
    \end{split}
\end{align}
Moving the factor $1/4$ outside the sum, we get
\begin{equation}
    H = H'_0 + \frac{1}{4}\sum_{\boldsymbol{k} \neq \boldsymbol{0}} \boldsymbol{A}_{\boldsymbol{k}}^{\dagger}M_{\boldsymbol{k}}\boldsymbol{A}_{\boldsymbol{k}}.
\end{equation}
% The matrix elements are
% \begin{align*}
%     M_{22}(\boldsymbol{k}) &= M_{55}(\boldsymbol{k})= M_{66}(\boldsymbol{k})= M_{11}(\boldsymbol{k}) =  \mathcal{E}_{\boldsymbol{k}} +2U_s\\
%     M_{44}(\boldsymbol{k}) &= M_{77}(\boldsymbol{k}) = M_{88}(\boldsymbol{k}) = M_{33}(\boldsymbol{k}) = \mathcal{E}_{\boldsymbol{k}} + 2\Delta\\
%     M_{75}(\boldsymbol{k}) &= M_{13}(\boldsymbol{k}) =   s_{\boldsymbol{k}}\\
%     M_{24}(\boldsymbol{k}) &= M_{86}(\boldsymbol{k}) =  -s_{\boldsymbol{k}} = -M_{13}(\boldsymbol{k}) \\
%     M_{57}(\boldsymbol{k}) &= M_{31}(\boldsymbol{k}) =   s_{\boldsymbol{k}}^{*} =  M_{13}^{*}(\boldsymbol{k})\\
%     M_{42}(\boldsymbol{k}) &= M_{68}(\boldsymbol{k}) =  -s_{\boldsymbol{k}}^{*} = -M_{13}^{*}(\boldsymbol{k})\\
%     M_{61} &= M_{52} = 2U_s e^{i2\theta_{\boldsymbol{0}}^{\uparrow}}\\
%     M_{16} &= M_{25} = 2U_s e^{-i2\theta_{\boldsymbol{0}}^{\uparrow}} = M_{52}^{*},
% \end{align*}
% with all other $M_{ij}=0$. Thus, $M_{\boldsymbol{k}}$ is
Here, $M_{\boldsymbol{k}}$ is
\begin{gather}
\label{eq:PZM}
    M_{\boldsymbol{k}} = 
    \begin{pmatrix}
    M_1 & M_2 \\
    M_2^* & M_1^* \\
    \end{pmatrix},
\end{gather}
with
\begin{gather*}
    M_1 = 
    \begin{pmatrix}
    M_{11}(\boldsymbol{k}) & 0 & s_{\boldsymbol{k}} & 0 \\
    0 & M_{11}(\boldsymbol{k}) & 0 & -s_{\boldsymbol{k}} \\
    s_{\boldsymbol{k}}^{*} & 0 & M_{33}(\boldsymbol{k}) & 0 \\
    0 & -s_{\boldsymbol{k}}^{*} & 0 & M_{33}(\boldsymbol{k}) \\
    \end{pmatrix}
\end{gather*}
and
\begin{gather*}
    M_2^* = 
    \begin{pmatrix}
    0 & M_{52} & 0 & 0 \\
    M_{52} & 0 & 0 & 0 \\
    0 & 0 & 0 & 0 \\
    0 & 0 & 0 & 0 \\
    \end{pmatrix}.
\end{gather*}
The matrix elements are
\begin{align}
\begin{split}
    M_{11}(\boldsymbol{k}) =  \mathcal{E}_{\boldsymbol{k}} +2U_s, \mbox{\qquad} M_{33}(\boldsymbol{k}) = \mathcal{E}_{\boldsymbol{k}} + 2\Delta \mbox{\qquad and \qquad} M_{52} = 2U_s e^{i2\theta_{\boldsymbol{0}}^{\uparrow}}.
\end{split}
\end{align}

\subsection{Excitation Spectrum}
We want to find eigenvalues of
\begin{gather*}
    M_{\boldsymbol{k}}J = 
    \begin{pmatrix}
    M_1 & -M_2 \\
   M_2^* & -M_1^* \\
    \end{pmatrix},
\end{gather*}
i.e. all solutions $\lambda$ of  $\det(M_{\boldsymbol{k}}J-\lambda I) = 0$. Analytic eigenvalues are in this thesis calculated using the symbolic computing environment \textit{Maple}. This yields the four double eigenvalues $\lambda(\boldsymbol{k}) = \pm \Omega_\pm(\boldsymbol{k})$, with
\begin{equation}
\label{eq:PZeigen}
    \Omega_{\pm}(\boldsymbol{k}) = \sqrt{C_{1\boldsymbol{k}} \pm 2\sqrt{C_{2\boldsymbol{k}}}},
\end{equation}
where we have defined (suppressing the $\boldsymbol{k}$ dependence of $M_{11}$ and $M_{33}$ in the notation)
\begin{align}
\begin{split}
    2C_{1\boldsymbol{k}} &= 2\abs{s_{\boldsymbol{k}}}^2 -\abs{M_{52}}^2 +M_{11}^2 + M_{33}^2  \mbox{\quad and}\\
    16C_{2\boldsymbol{k}} &= 4\big(M_{11} + M_{33} + \abs{M_{52}}\big)\big(M_{11} + M_{33} - \abs{M_{52}}\big)\abs{s_{\boldsymbol{k}}}^2 \\
    &+\big(M_{11}^2 - M_{33}^2 - \abs{M_{52}}^2\big)^2.
\end{split}
\end{align}
More explicitly this is
% \begin{align}
% \begin{split}
%     C_{1\boldsymbol{k}} = & \abs{s_{\boldsymbol{k}}}^2 +\mathcal{E}_{\boldsymbol{k}}(\mathcal{E}_{\boldsymbol{k}} + 2U_s+2\Delta) + 2\Delta^2,  \mbox{\quad and}\\
%     C_{2\boldsymbol{k}} = &\abs{s_{\boldsymbol{k}}}^2\big( \mathcal{E}_{\boldsymbol{k}}^2 +2(U_s+\Delta)\mathcal{E}_{\boldsymbol{k}} +\Delta(\Delta+2U_s) \big)  \\
%     & +\left(\mathcal{E}_{\boldsymbol{k}}(\mathcal{E}_{\boldsymbol{k}} + 4U_s) - \big(\mathcal{E}_{\boldsymbol{k}} + 2\Delta\big)^2\right)^2.
% \end{split}
% \end{align}
% More expanded we have
\begin{align}
\begin{split}
    C_{1\boldsymbol{k}} = & \abs{s_{\boldsymbol{k}}}^2 +\mathcal{E}_{\boldsymbol{k}}^2 +2(U_s+\Delta)\mathcal{E}_{\boldsymbol{k}} + 2\Delta^2   \mbox{\quad and}\\
    C_{2\boldsymbol{k}} = &\abs{s_{\boldsymbol{k}}}^2\big( \mathcal{E}_{\boldsymbol{k}}^2 +2(U_s+\Delta)\mathcal{E}_{\boldsymbol{k}} +\Delta^2+2U_s\Delta \big)  \\
    & +(\Delta-U_s)^2\mathcal{E}_{\boldsymbol{k}}^2 +2\Delta^2(\Delta-U_s)\mathcal{E}_{\boldsymbol{k}} +\Delta^4.
\end{split}
\end{align}
These eigenvalues satisfy $\Omega_{+}(\boldsymbol{k}=\boldsymbol{0}) = 2\Delta$ and $\Omega_-(\boldsymbol{k}=\boldsymbol{0}) = 0$.

We arrive at 
% \begin{equation}
%     D_{\boldsymbol{k}}J = \textrm{diag}(\Omega_+, \Omega_+, \Omega_-,\Omega_-, -\Omega_+, -\Omega_+, -\Omega_-,-\Omega_-),
% \end{equation}
% and multiplying from the right by $J$ yields
\begin{equation}
    D_{\boldsymbol{k}} = \textrm{diag}(\Omega_+, \Omega_+, \Omega_-,\Omega_-, \Omega_+, \Omega_+, \Omega_-,\Omega_-),
\end{equation}
and
\begin{equation}
    H = H'_0 + \frac{1}{4}\sum_{\boldsymbol{k} \neq \boldsymbol{0}}\boldsymbol{B}_{\boldsymbol{k}}^{\dagger}D_{\boldsymbol{k}}\boldsymbol{B}_{\boldsymbol{k}}.
\end{equation}
The new operators,
$$\boldsymbol{B}_{\boldsymbol{k}} = (B_{\boldsymbol{k},1}, B_{\boldsymbol{k},3}, B_{\boldsymbol{k},2}, B_{\boldsymbol{k},4}, B_{\boldsymbol{k},1}^{\dagger}, B_{\boldsymbol{k},3}^{\dagger}, B_{\boldsymbol{k},2}^{\dagger},B_{\boldsymbol{k},4}^{\dagger})^T,$$
are defined by the transformation matrix, $T_{\boldsymbol{k}}$, as $\boldsymbol{B}_{\boldsymbol{k}} = T_{\boldsymbol{k}}^\dagger \boldsymbol{A}_{\boldsymbol{k}}$. It is clear that the new operators are defined as linear combinations of the old, where the coefficients are given by the complex conjugate of the eigenvectors of $M_{\boldsymbol{k}}J$. Thus, $B_{\boldsymbol{k},1}$ and $B_{\boldsymbol{k},3}$ are defined using the eigenvectors of the largest eigenvalue. Investigating the transformation matrix $T_{\boldsymbol{k}}$ numerically at several trial momenta $\boldsymbol{k}_t$, we are able to confirm a relation $B_{-\boldsymbol{k},3} = B_{\boldsymbol{k},1}$ and similarly $B_{-\boldsymbol{k},4} = B_{\boldsymbol{k},2}$. 
%This does not happen automatically, but by tweaking $T_{\boldsymbol{k}}$ it was possible. The new $T_{\boldsymbol{k}}$ matrix still diagonalizes $M_{\boldsymbol{k}}J$, it was just a matter of redefining the eigenvectors. %To be completely explicit, the approximate relations found for $\boldsymbol{k}_t = (1.5/a, 1.5/a)$ were
% \begin{align}
% \begin{split}
%     B_{\boldsymbol{k}_t, 1} &\approx 0.39(-1+i)A_{\boldsymbol{k}_t}^\uparrow + 0.84 A_{\boldsymbol{k}_t}^\downarrow \\
%     &+0.052(-1+i)A_{-\boldsymbol{k}_t}^{\uparrow\dagger} -0.011 A_{-\boldsymbol{k}_t}^{\uparrow\dagger} \\
%     B_{-\boldsymbol{k}_t, 3} &\approx 0.39(-1+i)A_{-(-\boldsymbol{k}_t)}^\uparrow + 0.84 A_{-(-\boldsymbol{k}_t)}^\downarrow\\
%     &+0.052(-1+i)A_{-\boldsymbol{k}_t}^{\uparrow\dagger} -0.011 A_{-\boldsymbol{k}_t}^{\uparrow\dagger},
% \end{split}
% \end{align}
%using both $T_{\boldsymbol{k}_t}$ and $T_{-\boldsymbol{k}_t}$. 
This is not a rigorous proof, but we feel confident the transformation matrix can be set up in such a way that this holds for any $\boldsymbol{k}_t$. In fact, we can give a more analytic argument for why this should be true. Let us look at the equations used to find the eigenvectors for the eigenvalues $\Omega_+(\boldsymbol{k})$ and $\Omega_+(-\boldsymbol{k}) = \Omega_+(\boldsymbol{k})$. Let $\boldsymbol{x} = (x_1, \dots, x_8)^T$ be a general $8\cross 1$ column vector. The equation $M_{\boldsymbol{k}}J\boldsymbol{x} = \Omega_+(\boldsymbol{k})\boldsymbol{x}$ can be used to determine the eigenvectors and it gives
\begin{align}
    \begin{split}
    \label{eq:PZeigenvec+}
        M_{11}(\boldsymbol{k})x_1 +s_{\boldsymbol{k}}x_3 -M_{52}^* x_6 &= \Omega_+(\boldsymbol{k})x_1,\\
        M_{11}(-\boldsymbol{k})x_2+s_{-\boldsymbol{k}}x_4 -M_{52}^* x_5 &= \Omega_+(\boldsymbol{k})x_2,\\
        s_{\boldsymbol{k}}^* x_1 + M_{33}(\boldsymbol{k})x_3 &= \Omega_+(\boldsymbol{k})x_3, \\
        s_{-\boldsymbol{k}}^* x_2 + M_{33}(-\boldsymbol{k})x_4 &= \Omega_+(\boldsymbol{k})x_4, \\
        M_{52}x_2 -M_{11}(\boldsymbol{k})x_5-s_{\boldsymbol{k}}^* x_7 &= \Omega_+(\boldsymbol{k})x_5, \\
        M_{52}x_1 -M_{11}(-\boldsymbol{k})x_6-s_{-\boldsymbol{k}}^* x_8&= \Omega_+(\boldsymbol{k})x_6, \\
        -s_{\boldsymbol{k}} x_5 - M_{33}(\boldsymbol{k})x_7&= \Omega_+(\boldsymbol{k})x_7, \\
        -s_{-\boldsymbol{k}} x_6 - M_{33}(-\boldsymbol{k})x_8&= \Omega_+(\boldsymbol{k})x_8 \\.
    \end{split}
\end{align}
Meanwhile, the equation $M_{-\boldsymbol{k}}J\boldsymbol{x} = \Omega_+(-\boldsymbol{k})\boldsymbol{x}=\Omega_+(\boldsymbol{k})\boldsymbol{x}$ determines the eigenvectors at $-\boldsymbol{k}$. It gives
\begin{align}
    \begin{split}
        M_{11}(-\boldsymbol{k})x_1 +s_{-\boldsymbol{k}}x_3 -M_{52}^* x_6 &= \Omega_+(\boldsymbol{k})x_1,\\
        M_{11}(\boldsymbol{k})x_2+s_{\boldsymbol{k}}x_4 -M_{52}^* x_5 &= \Omega_+(\boldsymbol{k})x_2,\\
        s_{-\boldsymbol{k}}^* x_1 + M_{33}(-\boldsymbol{k})x_3 &= \Omega_+(\boldsymbol{k})x_3, \\
        s_{\boldsymbol{k}}^* x_2 + M_{33}(\boldsymbol{k})x_4 &= \Omega_+(\boldsymbol{k})x_4, \\
        M_{52}x_2 -M_{11}(-\boldsymbol{k})x_5-s_{-\boldsymbol{k}}^* x_7 &= \Omega_+(\boldsymbol{k})x_5, \\
        M_{52}x_1 -M_{11}(\boldsymbol{k})x_6-s_{\boldsymbol{k}}^* x_8&= \Omega_+(\boldsymbol{k})x_6, \\
        -s_{-\boldsymbol{k}} x_5 - M_{33}(-\boldsymbol{k})x_7&= \Omega_+(\boldsymbol{k})x_7, \\
        -s_{\boldsymbol{k}} x_6 - M_{33}(\boldsymbol{k})x_8&= \Omega_+(\boldsymbol{k})x_8 \\.
    \end{split}
\end{align}
We recognize that these sets of equations are the same, apart from an interchange $x_{2i-1} \leftrightarrow x_{2i}, i=1,2,3,4$. If we investigate the basis at $+\boldsymbol{k}$ and $-\boldsymbol{k}$ we see the same interchange:
\begin{align}
    \boldsymbol{A}_{\boldsymbol{k}} &= (A_{\boldsymbol{k}}^{\uparrow},A_{-\boldsymbol{k}}^{\uparrow}, A_{\boldsymbol{k}}^{\downarrow},A_{-\boldsymbol{k}}^{\downarrow},A_{\boldsymbol{k}}^{\uparrow\dagger},A_{-\boldsymbol{k}}^{\uparrow\dagger}, A_{\boldsymbol{k}}^{\downarrow\dagger},A_{-\boldsymbol{k}}^{\downarrow\dagger})^T \mbox{\qquad and}\\
    \boldsymbol{A}_{-\boldsymbol{k}} &= (A_{-\boldsymbol{k}}^{\uparrow},A_{\boldsymbol{k}}^{\uparrow}, A_{-\boldsymbol{k}}^{\downarrow},A_{\boldsymbol{k}}^{\downarrow},A_{-\boldsymbol{k}}^{\uparrow\dagger},A_{\boldsymbol{k}}^{\uparrow\dagger}, A_{-\boldsymbol{k}}^{\downarrow\dagger},A_{\boldsymbol{k}}^{\downarrow\dagger})^T .
\end{align}
Now, imagine we have found a set of two orthonormal eigenvectors from the set of equations in \eqref{eq:PZeigenvec+} that can be used in diagonalizing $M_{\boldsymbol{k}}J$. Then, in the case of $-\boldsymbol{k}$, we can choose the same eigenvectors with an interchange $x_{2i-1} \leftrightarrow x_{2i}, i=1,2,3,4$, and we are free to choose the opposite order of the eigenvectors. These can then be used in diagonalizing $M_{-\boldsymbol{k}}J$. Thus it is clear that row 2(1) of $T_{-\boldsymbol{k}}^\dagger$ will be the same as row 1(2) of $T_{\boldsymbol{k}}^\dagger$ apart from the interchange $x_{2i-1} \leftrightarrow x_{2i}, i=1,2,3,4$. This shows that $B_{-\boldsymbol{k},3} = B_{\boldsymbol{k},1}$ and $B_{-\boldsymbol{k},1} = B_{\boldsymbol{k},3}$. Similar arguments could be used to argue that the operators corresponding to $\Omega_-(\boldsymbol{k})$ obey $B_{-\boldsymbol{k},4} = B_{\boldsymbol{k},2}$. We will encounter similar relations in the other phases, and refer back to this argument as a method to support the relations between operators we find.

%$x_1 \leftrightarrow x_2, x_3 \leftrightarrow x_4, x_5\leftrightarrow x_6, x_7 \leftrightarrow x_8$

We now have the tools to simplify the diagonalized version of $H_2$, %which is $H_2 = \sum_{\boldsymbol{k} \neq \boldsymbol{0}}\boldsymbol{B}_{\boldsymbol{k}}^{\dagger}D_{\boldsymbol{k}}\boldsymbol{B}_{\boldsymbol{k}}/4$.
\begin{align}
    \begin{split}
        H_2 &= \frac{1}{4}\sum_{\boldsymbol{k} \neq \boldsymbol{0}}\big(\Omega_+(\boldsymbol{k}) B_{\boldsymbol{k},1}^{\dagger} B_{\boldsymbol{k},1} + \Omega_+(\boldsymbol{k}) B_{\boldsymbol{k},3}^{\dagger} B_{\boldsymbol{k},3} \\
        &\mbox{\qquad\qquad}+ \Omega_-(\boldsymbol{k}) B_{\boldsymbol{k},2}^{\dagger} B_{\boldsymbol{k},2} + \Omega_-(\boldsymbol{k}) B_{\boldsymbol{k},4}^{\dagger} B_{\boldsymbol{k},4} \\
        &\mbox{\qquad\qquad} + \Omega_+(\boldsymbol{k}) B_{\boldsymbol{k},1} B_{\boldsymbol{k},1}^{\dagger} + \Omega_+(\boldsymbol{k}) B_{\boldsymbol{k},3} B_{\boldsymbol{k},3}^{\dagger} \\
        &\mbox{\qquad\qquad}+ \Omega_-(\boldsymbol{k}) B_{\boldsymbol{k},2} B_{\boldsymbol{k},2}^{\dagger} + \Omega_-(\boldsymbol{k}) B_{\boldsymbol{k},4} B_{\boldsymbol{k},4}^{\dagger} \big) \\
        &=\frac{1}{2}\sum_{\boldsymbol{k} \neq \boldsymbol{0}}\bigg(\Omega_+(\boldsymbol{k}) \left(B_{\boldsymbol{k},1}^{\dagger} B_{\boldsymbol{k},1}+\frac12\right) + \Omega_+(-\boldsymbol{k}) \left(B_{-\boldsymbol{k},3}^{\dagger} B_{-\boldsymbol{k},3}+\frac12\right) \\
        &\mbox{\qquad\qquad}+ \Omega_-(\boldsymbol{k}) \left(B_{\boldsymbol{k},2}^{\dagger} B_{\boldsymbol{k},2}+\frac12\right) + \Omega_-(-\boldsymbol{k}) \left(B_{-\boldsymbol{k},4}^{\dagger} B_{-\boldsymbol{k},4}+\frac12\right)\bigg)\\
        &= \sum_{\boldsymbol{k} \neq \boldsymbol{0}}\bigg(\Omega_+(\boldsymbol{k}) \left(B_{\boldsymbol{k},1}^{\dagger} B_{\boldsymbol{k},1}+\frac12\right) + \Omega_-(\boldsymbol{k}) \left(B_{\boldsymbol{k},2}^{\dagger} B_{\boldsymbol{k},2}+\frac12\right) \bigg)\\
        &= \sum_{\boldsymbol{k} \neq \boldsymbol{0}}\sum_{\sigma=1}^{2} \Omega_\sigma(\boldsymbol{k}) \left(B_{\boldsymbol{k},\sigma}^{\dagger} B_{\boldsymbol{k},\sigma}+\frac12\right),
    \end{split}
\end{align}
where we defined $\Omega_1(\boldsymbol{k}) \equiv \Omega_+(\boldsymbol{k})$ and $\Omega_2(\boldsymbol{k}) \equiv \Omega_-(\boldsymbol{k})$. We let $\boldsymbol{k} \to -\boldsymbol{k}$ in some terms of the sum. Then we used that $\Omega_\pm(\boldsymbol{k})$ are inversion symmetric in $\boldsymbol{k}$, along with the relations between the new operators, to identify that some terms are equal in the fifth and sixth lines. The diagonal Hamiltonian is
\begin{equation}
\label{eq:HPZkneq0}
    H = H'_0 + \sum_{\boldsymbol{k} \neq \boldsymbol{0}}\sum_{\sigma=1}^{2} \Omega_\sigma(\boldsymbol{k}) \left(  B_{\boldsymbol{k}, \sigma}^{\dagger}B_{\boldsymbol{k}, \sigma} + \frac{1}{2} \right).
\end{equation}

% Clearly, for $\Omega_\pm(\boldsymbol{k}) \in \mathbb{R}$ both $C_{2\boldsymbol{k}} \geq 0$ and $C_{1\boldsymbol{k}} - 2\sqrt{C_{2\boldsymbol{k}}} \geq 0$ are required.

Remember that the eigenvalues need to be real for the diagonalization procedure to be defined, and for the system to be stable. In other words, the PZ phase is only stable as long as the eigenvalues of $M_{\boldsymbol{k}}J$ are real. Numerical investigations suggest that the occurance of complex eigenvalues happens for small $\boldsymbol{k}$. Therefore an expansion for small $\boldsymbol{k}$ should yield a criterion for $\Omega_\pm(\boldsymbol{k})  \in \mathbb{R}$. As this is essentially what we are doing when calculating the critical superfluid velocity, we expect that this will be the same as the requirement for real critical superfluid velocity.  Later, when we find the critical superfluid velocity, we obtain a clear requirement on $\lambda_R/t$, which turns out to be $\lambda_R^2/t^2 \leq \Delta/2t$. We can think of the term $\Delta>U_s$ which is connected to the difference between the energy offsets for the two pseudospin states, as an analogue to a Zeeman splitting. It is this Zeeman splitting that can make the spectra real in the presence of SOC. Without Zeeman splitting, as in the NZ phase, it will not be possible to obtain real spectra for a condensed phase at $\boldsymbol{k}=\boldsymbol{0}$ with SOC. The reason is that any nonzero SOC will yield nonzero condensate momenta when there is no Zeeman splitting, as was found in chapter \ref{sec:SOCU0}.  
%We will also see that removing either interactions or SOC is sufficient to give real eigenvalues. This means that the PZ phase is always stable if we set either $U_s=0$ or $\lambda_R=0$. %TRUE? U=0 probably gives minima not at 0

Figure \ref{fig:PZeigenvalues} shows an example of how the eigenvalues behave in the 1BZ, while figure \ref{fig:PZband} shows the band structure. We notice that both bands have their minimum at $\boldsymbol{k} = \boldsymbol{0}$ and that the lowest eigenvalue is linear close to the minimum. The parameters are chosen such that $U/t = 0.1$ and the average filling of particles per site is $N/N_s = 1$. Hence, $U_s/t = 0.05$ is used. 
%%%%%%%%%%%%%%%%%%%%%%%Make the correct figures
\begin{figure}
    \centering
    \includegraphics[width=0.8\linewidth]{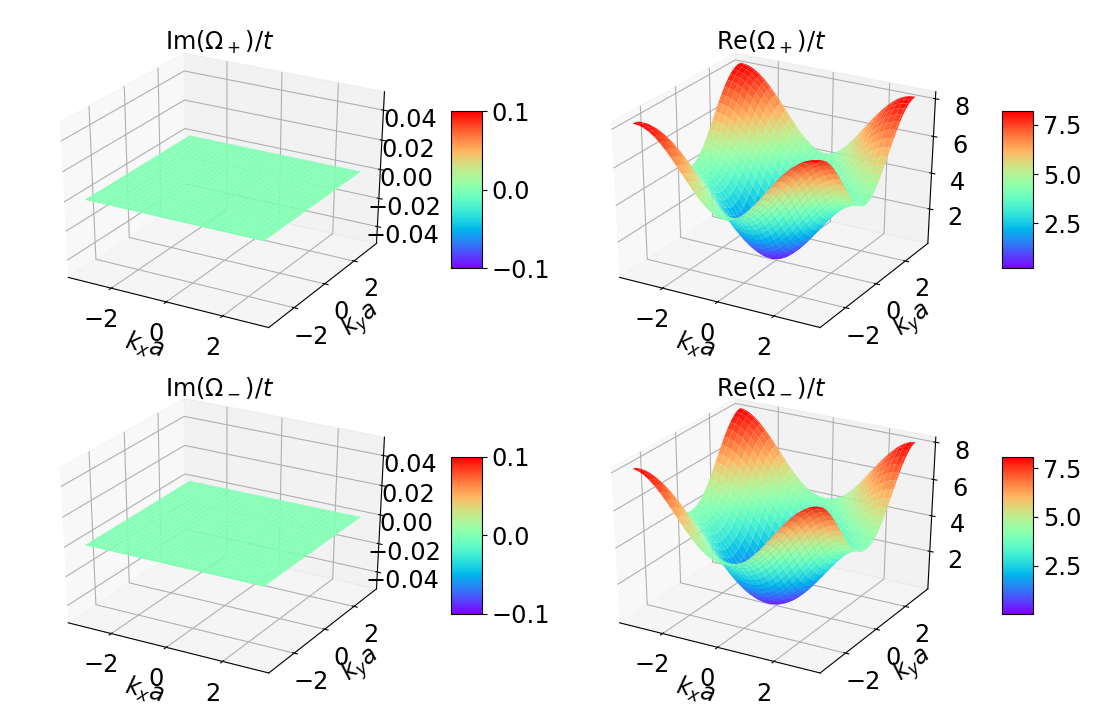}
    \caption[PZ phase excitation spectrum]{Shows real and imaginary parts of $\Omega_\pm(\boldsymbol{k})$ for $\Delta = 2U_s$, $U_s/t = 0.05$ and $\lambda_R/t = 0.1$, a set of parameters that render the eigenvalues real. }
    \label{fig:PZeigenvalues}
\end{figure}

\begin{figure}
    \centering
    \includegraphics[width=0.7\linewidth]{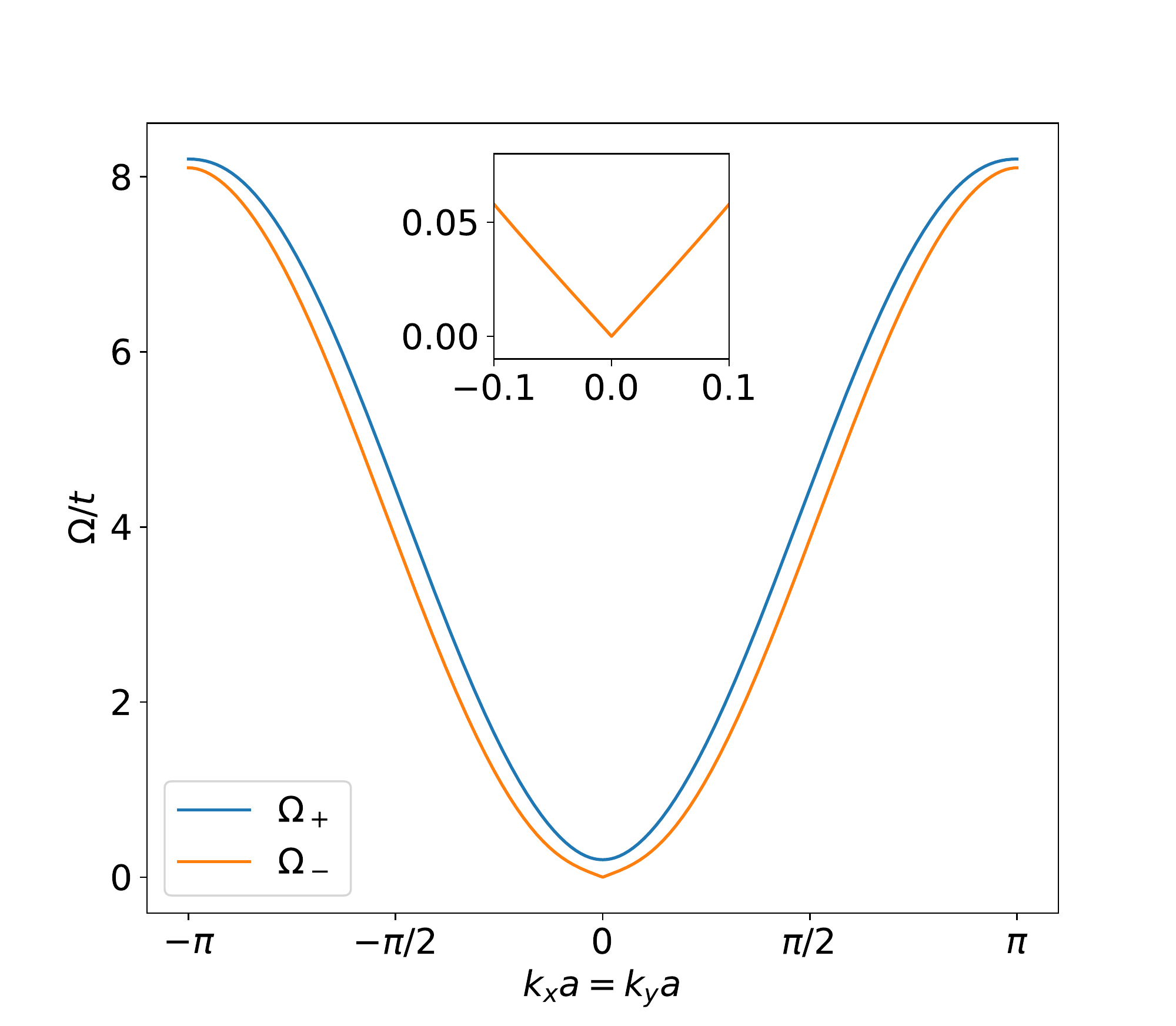}
    \caption[PZ phase band structure]{Shows the bands $\Omega_\pm(\boldsymbol{k})$ along the $k_x=k_y$ direction for $\Delta = 2U_s$, $U_s/t = 0.05$ and $\lambda_R/t = 0.1$, a set of parameters that render the eigenvalues real. A zoomed in portion close to $\boldsymbol{k}=\boldsymbol{0}$ is inserted. Notice how $\Omega_-(\boldsymbol{k})$ appears to be linear for small $k_x a$ and $k_y a$.}
    \label{fig:PZband}
\end{figure}

\subsection{Critical Superfluid Velocity} \label{sec:PZsupervel}
We use \eqref{eq:vc} to compute the critical superfluid velocity. We start by expanding for small $\boldsymbol{k}$, using also $\abs{\boldsymbol{k}}=k=\sqrt{k_x^2+k_y^2}$. Then, 
% First of all,
% \begin{align}
% \begin{split}
%   \mathcal{E}_{\boldsymbol{k}} &\approx t\big((k_x a)^2+(k_y a)^2\big) = t(ka)^2, \\
%     \mathcal{E}_{\boldsymbol{k}}^2 &\approx  t^2\big((k_x a)^2+(k_y a)^2\big)^2 = t^2(ka)^4, \\
%     \abs{s_{\boldsymbol{k}}}^2 &\approx 4\lambda_R^2\big((k_x a)^2+(k_y a)^2\big) = 4\lambda_R^2(ka)^2, 
% \end{split}
% \end{align}
% and so
% \begin{align}
% \begin{split}
%     C_{1\boldsymbol{k}} &\approx 2\Delta^2 + \big(2(U_s+\Delta)t+4\lambda_R^2\big)\big((k_x a)^2+(k_y a)^2\big) \\
%     &= 2\Delta^2 +\big(2(U_s+\Delta)t+4\lambda_R^2\big)(ka)^2, \\
%     C_{2\boldsymbol{k}} &\approx \Delta^4\left( 1 + \frac{4\lambda_R^2(\Delta+2U_s)+2\Delta(\Delta-U_s)t}{\Delta^3}\big((k_x a)^2+(k_y a)^2\big) \right) \\
%     &=\Delta^4\left( 1 + \frac{4\lambda_R^2(\Delta+2U_s)+2\Delta(\Delta-U_s)t}{\Delta^3}(ka)^2\right)\\
%     \sqrt{C_{2\boldsymbol{k}}} &\approx \Delta^2 \left( 1 + \frac{1}{2} \frac{4\lambda_R^2(\Delta+2U_s)+2\Delta(\Delta-U_s)t}{\Delta^3}\big((k_x a)^2+(k_y a)^2\big) \right) \\
%     &= \Delta^2 + \left(2\lambda_R^2 +\frac{4U_s\lambda_R^2}{\Delta} + (\Delta-U_s)t\right)\big((k_x a)^2+(k_y a)^2\big)\\
%     &= \Delta^2 + \left(2\lambda_R^2+ \frac{4U_s\lambda_R^2}{\Delta} + (\Delta-U_s)t\right)(ka)^2.
% \end{split}
% \end{align}
% Using these,
\begin{align}
\begin{split}
    \Omega_+(ka \ll 1) &= 2\Delta + \order{(ka)^2} \neq 0 \textrm{\quad if } \Delta \neq 0, \\
    \Omega_-(ka \ll 1) &\approx \left(4U_s ta^2 - \frac{8U_s\lambda_R^2a^2}{\Delta}\right)^{\frac12}k
    %\Omega_-(k_x a \ll 1, k_y = 0) &\approx \left(4U_s a^2t - \frac{8\lambda_R^2a^2}{\alpha-1}\right)^{\frac12}k_x, \\
    %\Omega_-(k_x =0, k_y a \ll 1) &\approx \left(4U_s a^2t - \frac{8\lambda_R^2a^2}{\alpha-1}\right)^{\frac12}k_y.
\end{split}
\end{align}
These calculations show that $\Omega_+(\boldsymbol{k})$ is quadratic at small $\abs{\boldsymbol{k}}$, while $\Omega_-(\boldsymbol{k})$ is linear, in good agreement with figure \ref{fig:PZband}. Thus, $v_c^+=0$. For $v_c^- = \Omega_-(ka \ll 1)/k$ we find
% \begin{align}
%     \begin{split}
%         \frac{\Omega_-}{k} &\approx \frac{\left(4U_s ta^2 - \frac{8U_s\lambda_R^2a^2}{\Delta}\right)^{\frac12}k}{k} \\
%         &= \left(4U_s ta^2 - \frac{8U_s\lambda_R^2a^2}{\Delta}\right)^{\frac12},
%     \end{split}
% \end{align}
% and thus
\begin{equation}
\label{eq:PZvc}
    v_c^- = \sqrt{4U_s t a^2 - \frac{8U_s\lambda_R^2a^2}{\Delta}}. 
\end{equation}
If we instead had looked for the critical superfluid velocity using \eqref{eq:vs}, the $x$- and $y$-components would both be the same as \eqref{eq:PZvc}. Using our notation, we see that equation (33) of \cite{LS} states that one critical superfluid velocity is zero, corresponding to $v_c^+ = 0$, while the other is $\sqrt{4U_s t a^2}$. This is exactly the same as \eqref{eq:PZvc} with no SOC, i.e. setting $\lambda_R = 0$.

The requirement that $v_c^-$ is real is $\lambda_R^2 \leq t\Delta /2$. In terms of dimensionless variables, this is 
\begin{equation}
\label{eq:vPZreal}
    \frac{\lambda_R^2}{t^2}\leq \frac{1}{2}\frac{\Delta}{t}.
\end{equation}
Therefore, because we believe complex eigenvalues would occur for small $k$, we believe that the energies $\Omega_\pm(\boldsymbol{k})$ are real for all parameters such that $\lambda_R^2 \leq t\Delta /2$ in the PZ phase. A more careful, though numerical, investigation shows that this is correct. Thus we conclude that the PZ phase is stable in the presence of SOC, provided $\Delta>U_s$ and $\lambda_R^2 \leq t\Delta /2$. Since we can control the value of $\Delta$ by changing the input parameter $T^{\downarrow}$ we can always ensure stability of the PZ phase.

It appears that increasing $\lambda_R$ with fixed $\Delta$ reduces the critical superfluid velocity, which is shown in figure \ref{fig:PZbandlam}. As we have seen, the critical superfluid velocity is the slope of the energy spectra, given that they are linear close to their minima. The figures show that $\Omega_-(\boldsymbol{k})$ is linear for small $\abs{\boldsymbol{k}}$ as long as $\lambda_R$ obeys the $<$ sign in \eqref{eq:vPZreal}. We also see that increasing $\lambda_R$ reduces the slope of $\Omega_-(\boldsymbol{k})$ and thus reduces the superfluid velocity, in agreement with \eqref{eq:PZvc}. Furthermore, when $\lambda_R$ obeys the $=$ sign in \eqref{eq:vPZreal} we see that $\Omega_-(\boldsymbol{k})$ appears quadratic, and thus the critical superfluid velocity is zero, again in agreement with \eqref{eq:PZvc}.

\begin{figure}
    \centering
    \includegraphics[width=0.6\linewidth]{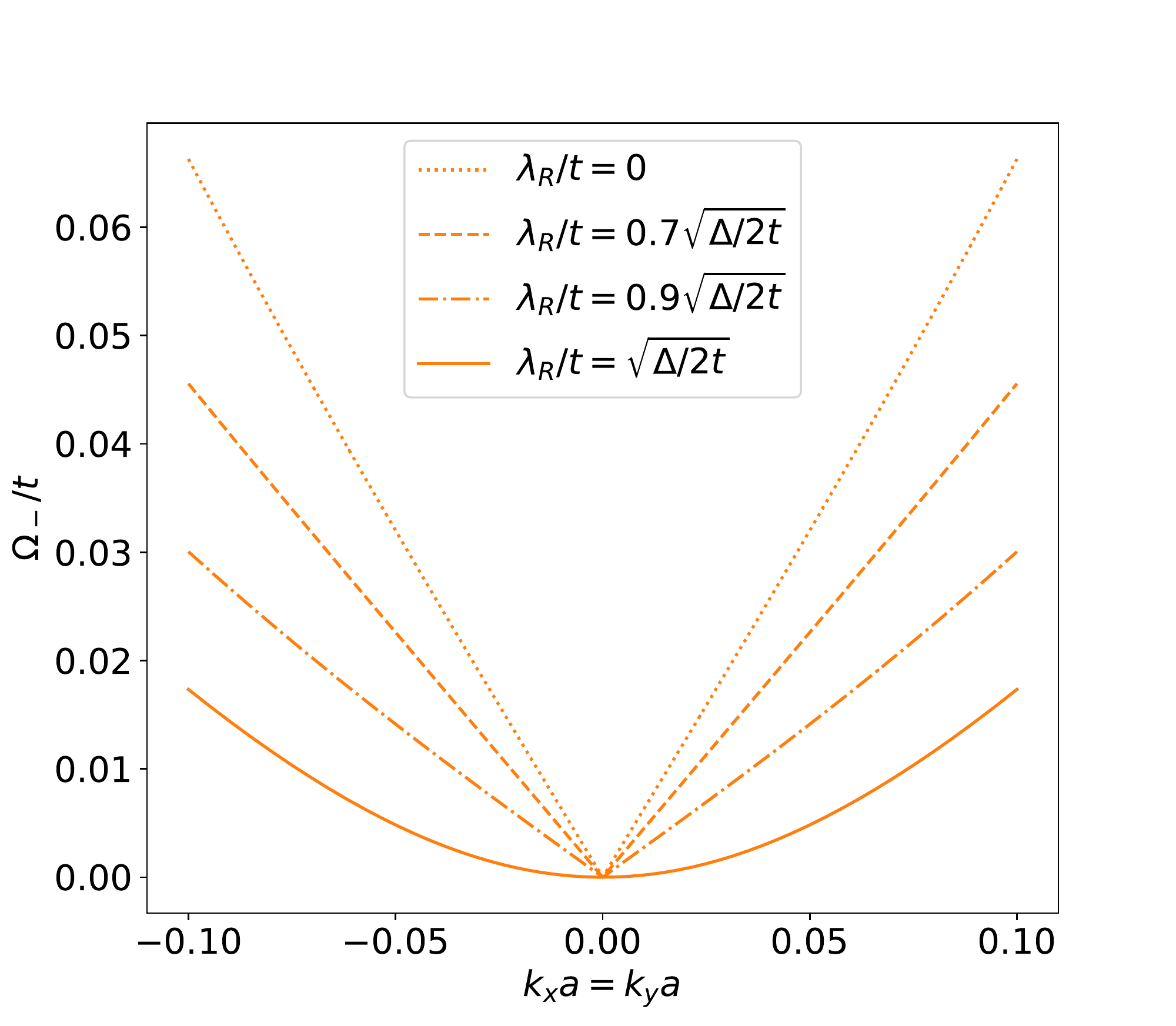}
    \caption[PZ phase superfluid velocity]{Shows the band $\Omega_-(\boldsymbol{k})$ along the $k_x=k_y$ direction for $\Delta = 2U_s$, $U_s/t = 0.05$, $\lambda_R/t = 0$ , $\lambda_R/t = 0.7\sqrt{\Delta/2t} \approx 0.16$, $\lambda_R/t = 0.9\sqrt{\Delta/2t} \approx 0.20$ and $\lambda_R/t = \sqrt{\Delta/2t} \approx 0.22$, a set of parameters that render the eigenvalues real. $\Omega_-(\boldsymbol{k})$ appears to be linear for small $\abs{\boldsymbol{k}}$ with decreasing slope as $\lambda_R/t$ is increased. When $\lambda_R/t = \sqrt{\Delta/2t}$ the apparent quadratic behavior agrees with the arguments in the text that the critical superfluid velocity should be zero. \label{fig:PZbandlam}}
\end{figure}

\subsection{Exitation Spectrum Without Interactions}
We set $U_s=0$, which is the same as setting $U=0$ i.e. no interactions. To compare to other results, we also set $\Delta=0$ (essentially letting $M_{11}=M_{33}=\mathcal{E}_{\boldsymbol{k}}$). Then we obtain
\begin{equation}
    \Omega_\pm(\boldsymbol{k}, U_s = 0, \Delta=0) = \mathcal{E}_{\boldsymbol{k}} \pm \abs{s_{\boldsymbol{k}}},
\end{equation}
which is the same as the spectrum found for a non-interacting SOC Bose gas in chapter \ref{sec:SOCU0} if $T = 4t$. Technically we would obtain $|\mathcal{E}_{\boldsymbol{k}} \pm \abs{s_{\boldsymbol{k}}}|$ from the general expression. However, investigating the eigenvectors numerically, setting $U_s = 0, \Delta=0$ and $\lambda_R\neq 0$, we can see that when $\mathcal{E}_{\boldsymbol{k}} \pm \abs{s_{\boldsymbol{k}}} > 0$ it is $|\mathcal{E}_{\boldsymbol{k}} \pm \abs{s_{\boldsymbol{k}}}|$ that enters the diagonalized Hamiltonian, while when $\mathcal{E}_{\boldsymbol{k}} \pm \abs{s_{\boldsymbol{k}}}<0$ it is $-|\mathcal{E}_{\boldsymbol{k}} \pm \abs{s_{\boldsymbol{k}}}|$ that enters the diagonalized Hamiltonian. Reporting the spectrum as $\mathcal{E}_{\boldsymbol{k}} \pm \abs{s_{\boldsymbol{k}}}$ is then the most correct representation.

The lowest energy $\Omega_{-}(\boldsymbol{k}, U_s = 0, \Delta=0) \stackrel{T = 4t}{=} \lambda_{\boldsymbol{k}}^-$ has its minima at nonzero $\boldsymbol{k}$ when $\lambda_R \neq 0$, which means the PZ phase is not stable when $U = \Delta = 0$ since it was assumed condensation occurs at $\boldsymbol{k} = \boldsymbol{0}$. The lowest energy also shows quadratic behavior close to the minima, and hence there is no superfluidity without interactions. 

% Meanwhile, $\Omega_{-}(\boldsymbol{k}, U_s = 0, \Delta=0)$ has 4 minima located at $k_x, k_y = \pm k_0$, where \cite{master}
% \begin{equation}
%     k_0 a = \arctan(\frac{\lambda_R}{\sqrt{2}t})
% \end{equation}
% Using the minimum at $\boldsymbol{k}_0 = k_0(1,1)$ in \eqref{eq:vs} we find $v_s^-(U_s=0) = 0$, i.e. no superfluidity without interactions.

\subsection{Exitation Spectrum Without SOC}
If we set $s_{\boldsymbol{k}} = 0$ in \eqref{eq:PZeigen} we get
\begin{equation}
    \Omega_\pm(\boldsymbol{k}, s_{\boldsymbol{k}} = 0)  = \frac{1}{\sqrt{2}}\sqrt{M_{11}^2 +M_{33}^2 -\abs{M_{52}}^2 \mp (M_{11}^2 -M_{33}^2 -\abs{M_{52}}^2)}.
\end{equation}
% or
% \begin{align}
%     \begin{split}
%           \Omega_\pm(\boldsymbol{k}, s_{\boldsymbol{k}} = 0) = &\frac{1}{\sqrt{2}}\Big\{\mathcal{E}_{\boldsymbol{k}}\left( \mathcal{E}_{\boldsymbol{k}} +4U_s \right) + (\mathcal{E}_{\boldsymbol{k}} +2\Delta)^2 \\
%          & \mp \left[\mathcal{E}_{\boldsymbol{k}}\left( \mathcal{E}_{\boldsymbol{k}} +4U_s \right) - (\mathcal{E}_{\boldsymbol{k}} +2\Delta)^2\right]\Big\}^{\frac{1}{2}}
%     \end{split}
% \end{align}
Thus, 
\begin{equation}
    \Omega_-(\boldsymbol{k}, s_{\boldsymbol{k}} = 0)  = \sqrt{\mathcal{E}_{\boldsymbol{k}}\left( \mathcal{E}_{\boldsymbol{k}} +4U_s \right)},
\end{equation}
which is the single component spectrum shown in \eqref{eq:noSOCom}. Meanwhile,
\begin{equation}
    \Omega_+(\boldsymbol{k}, s_{\boldsymbol{k}} = 0) = \mathcal{E}_{\boldsymbol{k}} + 2\Delta.
\end{equation}
We notice that if we set $U_s=0$, $\Delta = 0$ and $T=4t$ in the expressions above, we regain the energy in the case of no interactions, and no SOC, i.e. $\Omega_\pm(\boldsymbol{k}, s_{\boldsymbol{k}} = 0, U_s=0, \Delta=0) = \epsilon_{\boldsymbol{k}} + T$. If we were to derive the critical superfluid velocity from these expressions we would find one to be zero, while the other is $\sqrt{4U_s t a^2}$. This agrees with eq. (33) of \cite{LS}, and the general results in chapter \ref{sec:PZsupervel} if we set $\lambda_R=0$ in \eqref{eq:PZvc}. We also note that these eigenvalues in the $s_{\boldsymbol{k}} = 0, \Delta=0$ case corresponds to the eigenvalues in eq. (30) of \cite{LS}.

\subsection{Free Energy}
To find the free energy, $F_{\textrm{PZ}}$, we use the Hamiltonian on the form \eqref{eq:HPZkneq0}, remembering that these results will only be valid when $\Omega_\pm(\boldsymbol{k})$ are real for all $\boldsymbol{k}$, i.e. for $\lambda_R^2 \leq t\Delta/2$. 
We will focus on the effects of the elementary excitations due to interactions and SOC rather than thermal effects. Therefore we set the temperature to zero, i.e. $\beta \to \infty$. Then, the free energy is the same as the ground state energy, $F = \langle H \rangle$. 
% Also, according to \eqref{eq:H'0PZ}
% \begin{align}
%     \begin{split}
%         H_0' &= H_0 -4tN_s-(N_s-1)(U_s+\Delta) \\
%         &= N(T-4t)+\frac{UN^2}{2N_s} -4tN_s-(N_s-1)(U_s+\Delta).
%     \end{split}
% \end{align}
% Thus, using \eqref{eq:F} we get
Using \eqref{eq:F} we get
\begin{align}
\begin{split}
\label{eq:PZF}
    F_{\textrm{PZ}} \stackrel{\beta\to\infty}{=} \langle H_{\textrm{PZ}} \rangle = &N(T-4t)+\frac{UN^2}{2N_s} -4tN_s-(N_s-1)(U_s+\Delta) \\
    &+\frac{1}{2}\sum_{\boldsymbol{k}\neq\boldsymbol{0}}\sum_{\sigma=1}^2 \Omega_\sigma(\boldsymbol{k}),
\end{split}
\end{align}
which we see is independent of the angle $\theta_{\boldsymbol{0}}^{\uparrow}$ because $\Omega_\sigma(\boldsymbol{k})$ is independent of $\theta_{\boldsymbol{0}}^{\uparrow}$. Therefore, $\theta_{\boldsymbol{0}}^{\uparrow}$ is arbitrary. 

Notice that we never specified a choice for $N^\uparrow$ and $N^\downarrow$. We assumed $N_0^\uparrow = N_0$ and $N_0^\downarrow =0$, however for SOC to be operative there needs to be particles with pseudospin down as well. All of these must be excited particles, and we assumed there are few excited particles in total. Hence, we must let $N^\downarrow$ be a small nonzero number, while $N^\uparrow = N- N^\downarrow \approx N$.

\section{NZ phase}
The NZ phase is similar to the PZ phase in that only  $\boldsymbol{k}_{00} = \boldsymbol{0}$ is occupied. However, now we have both pseudospin up and pseudospin down occupied in the condensate. From \eqref{eq:H0} we find
\begin{equation}
    H_0^{''} = (N_0^\uparrow+N_0^\downarrow)(\epsilon_{\boldsymbol{0}}+T)+\frac{U}{2N_s}\left((N_0^\uparrow)^2+(N_0^\downarrow)^2+2\alpha N_0^\uparrow N_0^\downarrow \right).
\end{equation}
We now use \eqref{eq:NN0excitup} and \eqref{eq:NN0excitdown} to replace $N_0^\alpha$ by $N^\alpha$. For $N_0^\uparrow N_0^\downarrow$ this yields
\begin{equation}
    N_0^\uparrow N_0^\downarrow = N^\uparrow N^\downarrow -N^\uparrow \left.\sum_{\boldsymbol{k}}\right.^{'} A_{\boldsymbol{k}}^{\downarrow\dagger}A_{\boldsymbol{k}}^{\downarrow}-N^\downarrow \left.\sum_{\boldsymbol{k}}\right.^{'} A_{\boldsymbol{k}}^{\uparrow\dagger}A_{\boldsymbol{k}}^{\uparrow},
\end{equation}
neglecting terms that are more than quadratic in excitation operators. Hence,
\begin{align}
    \begin{split}
    \label{eq:H0NZrewrite}
        H_0^{''} &= (N^\uparrow+N^\downarrow)(\epsilon_{\boldsymbol{0}}+T)+\frac{U}{2N_s}\left((N^\uparrow)^2+(N^\downarrow)^2+2\alpha N^\uparrow N^\downarrow \right) \\
        &-(\epsilon_{\boldsymbol{0}}+T)\sum_{\boldsymbol{k}\neq \boldsymbol{0}}\left(A_{\boldsymbol{k}}^{\uparrow\dagger}A_{\boldsymbol{k}}^{\uparrow} + A_{\boldsymbol{k}}^{\downarrow\dagger}A_{\boldsymbol{k}}^{\downarrow}\right) \\
        &-\frac{U}{2N_s}\Bigg( 2N^\uparrow \sum_{\boldsymbol{k}\neq \boldsymbol{0}}A_{\boldsymbol{k}}^{\uparrow\dagger}A_{\boldsymbol{k}}^{\uparrow} + 2N^\downarrow \sum_{\boldsymbol{k}\neq \boldsymbol{0}}A_{\boldsymbol{k}}^{\downarrow\dagger}A_{\boldsymbol{k}}^{\downarrow} \\
        &\mbox{\qquad\qquad} +2\alpha \bigg( N^\uparrow \sum_{\boldsymbol{k}\neq \boldsymbol{0}}A_{\boldsymbol{k}}^{\downarrow\dagger}A_{\boldsymbol{k}}^{\downarrow} + N^\downarrow \sum_{\boldsymbol{k}\neq \boldsymbol{0}}A_{\boldsymbol{k}}^{\uparrow\dagger}A_{\boldsymbol{k}}^{\uparrow} \bigg)\Bigg).
    \end{split}
\end{align}
Inserting $N^\uparrow = N^\downarrow = N/2$, we define $H_0 = H_0^{\textrm{NZ}}$ given in \eqref{eq:H0NZ}.
% \begin{equation}
% \label{eq:H0NZ}
%     H_0 = N(\epsilon_{\boldsymbol{0}}+T)+\frac{UN^2}{4N_s}(1+\alpha).
% \end{equation}
The rest of $H_0^{''}$ is moved to $H_2$ as it is quadratic in excitation operators.
%, and we assume $N_0^{\uparrow}=N_0^{\downarrow} = N_0/2$. We define $F^\uparrow = U^{\uparrow\uparrow}N^\uparrow/N_s = F^\downarrow = U^{\downarrow\downarrow}N^\downarrow/N_s = U_s = UN/2N_s$, along the lines of \cite{LS}. Further, the term $4t-\epsilon_{\boldsymbol{k}}$ will appear often. This term varies between $0$ and $8t$, and we give it a new name to clean up the notation. In other words we define
% \begin{align}
%     U_s &\equiv \frac{UN_0}{2N_s} \mbox{\qquad and}\\
%     \mathcal{E}_{\boldsymbol{k}} &\equiv 4t-\epsilon_{\boldsymbol{k}} = 4t -2t\left(\cos(k_x a) + \cos(k_y a)\right).
% \end{align}
% We note that the terms $\epsilon_{\boldsymbol{k}}^\alpha$ of \cite{LS} corresponds to $\mathcal{E}_{\boldsymbol{k}}$ here, $F_A$ and $F_B$ correspond to $U_s$ while $F_{AB}$ corresponds to $U_s\alpha$. As mentioned, we have
% \begin{equation}
%     H_0^{\textrm{NZ}} = N_0(\epsilon_{\boldsymbol{0}}+T)+\frac{UN_0^2}{4N_s}(1+\alpha).
% \end{equation}
% Inserting \eqref{eq:NN0excit} into $H_0^{\textrm{NZ}}$ we get
% \begin{align}
%     \begin{split}
%     \label{eq:H0NZrewrite}
%         H_0^{\textrm{NZ}} &= N(\epsilon_{\boldsymbol{0}}+T)+\frac{UN^2}{4N_s}(1+\alpha) \\
%         &-\left(\epsilon_{\boldsymbol{0}}+T+\frac{UN}{2N_s}(1+\alpha)\right)\sum_{\boldsymbol{k}\neq \boldsymbol{0}}\left(A_{\boldsymbol{k}}^{\uparrow\dagger}A_{\boldsymbol{k}}^{\uparrow} + A_{\boldsymbol{k}}^{\downarrow\dagger}A_{\boldsymbol{k}}^{\downarrow}\right).
%     \end{split}
% \end{align}
%We define the first line as $H_0$ and move the second line to $H_2$ as it is quadratic in excitation operators. 
The linear part $H_1=0$, and in $H_2$ we may replace $N_0^\alpha$ with $N^\alpha$ by the same arguments as for the PZ phase. For $H_2$ we get
\begin{align}
    \begin{split}
    \label{H2NZ}
        H_2 = \sum_{\boldsymbol{k}\neq \boldsymbol{0}} &\Bigg\{\big(\mathcal{E}_{\boldsymbol{k}} + U_s\big) \left(A_{\boldsymbol{k}}^{\uparrow\dagger}A_{\boldsymbol{k}}^{\uparrow} + A_{\boldsymbol{k}}^{\downarrow\dagger}A_{\boldsymbol{k}}^{\downarrow}\right) \\
        &+\left(s_{\boldsymbol{k}}+U_s\alpha e^{i(\theta_{\boldsymbol{0}}^{\downarrow}-\theta_{\boldsymbol{0}}^{\uparrow})}\right)A_{\boldsymbol{k}}^{\uparrow\dagger}A_{\boldsymbol{k}}^{\downarrow}\\
        &+\left(s_{\boldsymbol{k}}^{*}+U_s\alpha e^{-i(\theta_{\boldsymbol{0}}^{\downarrow}-\theta_{\boldsymbol{0}}^{\uparrow})}\right)A_{\boldsymbol{k}}^{\downarrow\dagger}A_{\boldsymbol{k}}^{\uparrow}\\
        &+\frac{U_s}{2}\left( e^{i2\theta_{\boldsymbol{0}}^{\uparrow}}A_{\boldsymbol{k}}^{\uparrow}A_{\boldsymbol{-k}}^{\uparrow} + e^{-i2\theta_{\boldsymbol{0}}^{\uparrow}}A_{\boldsymbol{-k}}^{\uparrow\dagger}A_{\boldsymbol{k}}^{\uparrow\dagger} \right)\\
        &+\frac{U_s}{2}\left( e^{i2\theta_{\boldsymbol{0}}^{\downarrow}}A_{\boldsymbol{k}}^{\downarrow}A_{\boldsymbol{-k}}^{\downarrow} + e^{-i2\theta_{\boldsymbol{0}}^{\downarrow}}A_{\boldsymbol{-k}}^{\downarrow\dagger}A_{\boldsymbol{k}}^{\downarrow\dagger} \right)\\
        &+\frac{U_s\alpha}{2}\bigg( e^{i(\theta_{\boldsymbol{0}}^{\uparrow} + \theta_{\boldsymbol{0}}^{\downarrow})}\left(A_{\boldsymbol{k}}^{\downarrow}A_{\boldsymbol{-k}}^{\uparrow} + A_{\boldsymbol{k}}^{\uparrow}A_{\boldsymbol{-k}}^{\downarrow}\right) \\
        & \mbox{\qquad\qquad} +e^{-i(\theta_{\boldsymbol{0}}^{\uparrow} + \theta_{\boldsymbol{0}}^{\downarrow})} \left(A_{\boldsymbol{-k}}^{\uparrow\dagger}A_{\boldsymbol{k}}^{\downarrow\dagger} + A_{\boldsymbol{-k}}^{\downarrow\dagger}A_{\boldsymbol{k}}^{\uparrow\dagger} \right)\bigg) \Bigg\}.
    \end{split}
\end{align}
All products of excitation operators commute, except for the first two. Thus, when we rewrite \eqref{H2NZ} using commutators, we
% \begin{align}
%     \begin{split}
%         H_2 = \sum_{\boldsymbol{k} \neq \boldsymbol{0}} &\Bigg\{\left(\mathcal{E}_{\boldsymbol{k}} + U_s\right) ((A_{\boldsymbol{k}}^{\uparrow\dagger}A_{\boldsymbol{k}}^{\uparrow} + A_{\boldsymbol{k}}^{\uparrow}A_{\boldsymbol{k}}^{\uparrow\dagger})/2 + (A_{\boldsymbol{k}}^{\downarrow\dagger}A_{\boldsymbol{k}}^{\downarrow}+ A_{\boldsymbol{k}}^{\downarrow}A_{\boldsymbol{k}}^{\downarrow\dagger})/2) \\
%         &+\left(s_{\boldsymbol{k}}+U_s\alpha e^{i(\theta_{\boldsymbol{0}}^{\downarrow}-\theta_{\boldsymbol{0}}^{\uparrow})}\right)(A_{\boldsymbol{k}}^{\uparrow\dagger}A_{\boldsymbol{k}}^{\downarrow} + A_{\boldsymbol{k}}^{\downarrow}A_{\boldsymbol{k}}^{\uparrow\dagger} )/2\\
%         &+\left(s_{\boldsymbol{k}}^{*}+U_s\alpha e^{-i(\theta_{\boldsymbol{0}}^{\downarrow}-\theta_{\boldsymbol{0}}^{\uparrow})}\right)(A_{\boldsymbol{k}}^{\downarrow\dagger}A_{\boldsymbol{k}}^{\uparrow} +A_{\boldsymbol{k}}^{\uparrow} A_{\boldsymbol{k}}^{\downarrow\dagger})/2\\
%         & \textrm{and so on...} \Bigg\}.
%     \end{split}
% \end{align}
% simultaneously shifting $H_0$ to 
simultaneously shift $H_0$ to 
\begin{align}
    \begin{split}
        H'_0 &= H_0  - \sum_{\boldsymbol{k} \neq \boldsymbol{0}} \left(\mathcal{E}_{\boldsymbol{k}} + U_s\right) = H_0 - 4tN_s -(N_s-1)U_s.
    \end{split}
\end{align}
%This is a quantum mechanical correction to the ground state because it stems from commutators. 
% The $\boldsymbol{k}$ sum can be computed, and similarly to what was done for the PZ phase we find
% \begin{equation}
%     \sum_{\boldsymbol{k}\neq \boldsymbol{0}} \left(4t +\epsilon_{\boldsymbol{k}} + U_s\right) = 4tN_s +(N_s-1)U_s,
% \end{equation}
% and thus
% \begin{equation}
% \label{eq:H'0}
%     H'_0 = H_0 - 4tN_s -(N_s-1)U_s.
% \end{equation}
% We now use 
% \begin{equation}
%     \sum_{\boldsymbol{k}}C(\boldsymbol{k})A_{\boldsymbol{k}}^{\uparrow\dagger}A_{\boldsymbol{k}}^{\uparrow} =  \sum_{\boldsymbol{k}}C(-\boldsymbol{k})A_{-\boldsymbol{k}}^{\uparrow\dagger}A_{-\boldsymbol{k}}^{\uparrow}  
% \end{equation}
% and similar expressions to rewrite
% \begin{equation}
%     \sum_{\boldsymbol{k}}C(\boldsymbol{k})A_{\boldsymbol{k}}^{\uparrow\dagger}A_{\boldsymbol{k}}^{\uparrow} = \sum_{\boldsymbol{k}}\frac{1}{2}\left( C(\boldsymbol{k})A_{\boldsymbol{k}}^{\uparrow\dagger}A_{\boldsymbol{k}}^{\uparrow}+C(-\boldsymbol{k})A_{-\boldsymbol{k}}^{\uparrow\dagger}A_{-\boldsymbol{k}}^{\uparrow} \right)
% \end{equation}
% and so on. We also use $\mathcal{E}_{-\boldsymbol{k}} = \mathcal{E}_{\boldsymbol{k}}$ and $s_{-\boldsymbol{k}} = -s_{\boldsymbol{k}}$. 
We also make $-\boldsymbol{k}$-terms explicit. For the diagonal terms, this has the effect of making all $M_{ii}$ equal. For the SOC dependent terms, noting that $s_{-\boldsymbol{k}} = -s_{\boldsymbol{k}}$, we get
\begin{align}
    \begin{split}
        &\sum_{\boldsymbol{k} \neq \boldsymbol{0}}\left(s_{\boldsymbol{k}}+U_s\alpha e^{i(\theta_{\boldsymbol{0}}^{\downarrow}-\theta_{\boldsymbol{0}}^{\uparrow})}\right)(A_{\boldsymbol{k}}^{\uparrow\dagger}A_{\boldsymbol{k}}^{\downarrow} + A_{\boldsymbol{k}}^{\downarrow}A_{\boldsymbol{k}}^{\uparrow\dagger} )/2 = \\
        &\sum_{\boldsymbol{k} \neq \boldsymbol{0}} \Bigg(\frac{1}{4}\left(s_{\boldsymbol{k}}+U_s\alpha e^{i(\theta_{\boldsymbol{0}}^{\downarrow}-\theta_{\boldsymbol{0}}^{\uparrow})}\right)(A_{\boldsymbol{k}}^{\uparrow\dagger}A_{\boldsymbol{k}}^{\downarrow} + A_{\boldsymbol{k}}^{\downarrow}A_{\boldsymbol{k}}^{\uparrow\dagger})  \\
        &\mbox{\qquad} + \frac{1}{4}\left(-s_{\boldsymbol{k}}+U_s\alpha e^{i(\theta_{\boldsymbol{0}}^{\downarrow}-\theta_{\boldsymbol{0}}^{\uparrow})}\right)(A_{-\boldsymbol{k}}^{\uparrow\dagger}A_{-\boldsymbol{k}}^{\downarrow} + A_{-\boldsymbol{k}}^{\downarrow}A_{-\boldsymbol{k}}^{\uparrow\dagger})\Bigg),
    \end{split}
\end{align}
and similarly for its Hermitian conjugate. We use this to write $M_{13}=M_{75} = K_{13} +s_{\boldsymbol{k}}$ and $M_{24}=M_{86} = K_{13} -s_{\boldsymbol{k}}$.

The Hamiltonian is
\begin{equation}
    H = H'_0 + \frac{1}{4} \sum_{\boldsymbol{k} \neq \boldsymbol{0}} \boldsymbol{A}_{\boldsymbol{k}}^{\dagger}M_{\boldsymbol{k}}\boldsymbol{A}_{\boldsymbol{k}},
\end{equation}
where $M_{\boldsymbol{k}}$ is of the form
\begin{gather}
    M_{\boldsymbol{k}} = 
    \begin{pmatrix}
    M_1 & M_2 \\
    M_2^* & M_1^* \\
    \end{pmatrix},
\end{gather}
with
\begin{gather*}
    M_1 = 
    \begin{pmatrix}
    M_{11}(\boldsymbol{k}) & 0 & K_{13} +s_{\boldsymbol{k}} & 0 \\
    0 & M_{11}(\boldsymbol{k}) & 0 & K_{13} -s_{\boldsymbol{k}} \\
    K_{13}^* +s_{\boldsymbol{k}}^* & 0 & M_{11}(\boldsymbol{k}) & 0 \\
    0 & K_{13}^* -s_{\boldsymbol{k}}^* & 0 & M_{11}(\boldsymbol{k}) \\
    \end{pmatrix}
\end{gather*}
and
\begin{gather*}
    M_2^* = 
    \begin{pmatrix}
    0 & M_{52} & 0 & M_{72} \\
    M_{52} & 0 & M_{72} & 0 \\
    0 & M_{72} & 0 & M_{74} \\
    M_{72} & 0 & M_{74} & 0 \\
    \end{pmatrix}.
\end{gather*}
The matrix elements are
% \begin{align}
% \begin{split}
% \label{NZelem}
%     M_{11}(\boldsymbol{k}) &= \mathcal{E}_{\boldsymbol{k}} + U_s,  \\
%     K_{13} &= U_s\alpha e^{i(\theta_{\boldsymbol{0}}^{\downarrow}-\theta_{\boldsymbol{0}}^{\uparrow})},\\
%     M_{52} &= U_s e^{i2\theta_{\boldsymbol{0}}^{\uparrow}},\\
%     M_{72} &= U_s\alpha e^{i(\theta_{\boldsymbol{0}}^{\uparrow} + \theta_{\boldsymbol{0}}^{\downarrow})},\\
%     M_{74} &= U_s e^{i2\theta_{\boldsymbol{0}}^{\downarrow}}.\\
% \end{split}
% \end{align}
\begin{equation}
\begin{alignedat}{2}
M_{11}(\boldsymbol{k}) &= \mathcal{E}_{\boldsymbol{k}} + U_s,  &\qquad  K_{13} &= U_s\alpha e^{i(\theta_{\boldsymbol{0}}^{\downarrow}-\theta_{\boldsymbol{0}}^{\uparrow})},\\
M_{52} &= U_s e^{i2\theta_{\boldsymbol{0}}^{\uparrow}},  &  M_{72} &= U_s\alpha e^{i(\theta_{\boldsymbol{0}}^{\uparrow} + \theta_{\boldsymbol{0}}^{\downarrow})}, \mbox{\qquad} M_{74} = U_s e^{i2\theta_{\boldsymbol{0}}^{\downarrow}}.\\
\end{alignedat}
\end{equation}

% We want to find eigenvalues of
% \begin{gather*}
%     M_{\boldsymbol{k}}J = 
%     \begin{pmatrix}
%     M_1 & -M_2 \\
%   M_2^* & -M_1^* \\
%     \end{pmatrix},
% \end{gather*}
% using $\det(M_{\boldsymbol{k}}J-\lambda I) = 0$. 

\subsection{Excitation Spectrum and Critical Superfluid Velocity}
The main structural difference between these matrices and the matrices of Linder and Sudbø \cite{LS}, is that $M_{13}(\boldsymbol{k}) \neq M_{13}(-\boldsymbol{k})$ because $s_{\boldsymbol{k}} = -s_{-\boldsymbol{k}}$. Also, in \cite{LS} all elements are real, whereas here, only the diagonal elements are real a priori. As it turns out, the fact that $M_{13} \neq \pm M_{24}$ and similar relations, make finding analytic eigenvalues difficult. Therefore, we focus first on the case of no SOC i.e. $\lambda_R = 0$ and thus $s_{\boldsymbol{k}}=0$. We name the matrix of this system $K_{\boldsymbol{k}}$, and it is the same as $M_{\boldsymbol{k}}$, upon setting $s_{\boldsymbol{k}}=0$.
% \begin{gather*}
%     K_{\boldsymbol{k}} = 
%     \begin{pmatrix}
%     K_1 & K_2 \\
%     K_2^* & K_1^* \\
%     \end{pmatrix},
% \end{gather*}
% with
% \begin{gather*}
%     K_1 = 
%     \begin{pmatrix}
%     M_{11}(\boldsymbol{k}) & 0 & K_{13} & 0 \\
%     0 & M_{11}(\boldsymbol{k}) & 0 & K_{13} \\
%     K_{13}^*& 0 & M_{11}(\boldsymbol{k}) & 0 \\
%     0 & K_{13}^* & 0 & M_{11}(\boldsymbol{k}) \\
%     \end{pmatrix}
% \end{gather*}
% and $K_2^* = M_2^*$.
% \begin{gather*}
%     K_2^* = M_2^* =
%     \begin{pmatrix}
%     0 & M_{52} & 0 & M_{72} \\
%     M_{52} & 0 & M_{72} & 0 \\
%     0 & M_{72} & 0 & M_{74} \\
%     M_{72} & 0 & M_{74} & 0 \\
%     \end{pmatrix}.
% \end{gather*}
This matrix has exactly the same form as the matrix considered in equation (19) of \cite{LS}, where the eigenvalues are found analytically. However, we do not assume all elements in $K_{\boldsymbol{k}}$ are real, and thus we can not use these eigenvalues for the $K_{\boldsymbol{k}}$ matrix directly. However, they serve as a nice test of the eigenvalues we do find.

% Now, onto finding the eigenvalues of $K_{\boldsymbol{k}}$, before we investigate the effects of SOC. 
We use \textit{Maple} to find eigenvalues of $K_{\boldsymbol{k}}J$, and obtain eigenvalues on the form $\lambda_K(\boldsymbol{k}) = \pm \Omega_{K\pm}(\boldsymbol{k})$, with
\begin{align}
    \begin{split}
         \Omega_{K\pm}(\boldsymbol{k}) = \frac{1}{\sqrt{2}}\Big\{&2M_{11}^2 +2\left( \abs{K_{13}}^2- \abs{M_{72}}^2 \right)\\
         & -\left( \abs{M_{52}}^2 + \abs{M_{74}}^2 \right) \pm \sqrt{R_{\boldsymbol{k}}} \Big\}^{1/2},
    \end{split}
\end{align}
where we defined
\begin{align}
    \begin{split}
        R_{\boldsymbol{k}} = &16M_{11}^2\abs{K_{13}}^2 + \left( \abs{M_{74}}^2 - \abs{M_{52}}^2 \right)^2 \\
        &+4\left(\abs{M_{72}}^2  - \abs{K_{13}}^2 \right)\left( \abs{M_{52}}^2 + \abs{M_{74}}^2 \right) \\
        &+ 8\Re\left( M_{52}(M_{72}^*)^2 M_{74} \right) + 8\Re\left( K_{13}^2M_{52}M_{74}^* \right) \\
        &-16M_{11}\big( \Re\left( K_{13}M_{52}M_{72}^* \right) +\Re\left( K_{13}M_{72}M_{74}^*\right) \big).
    \end{split}
\end{align}
Here, $\Re(z)$ is the real part of $z$. This is a general result that holds for any matrix on the form $K_{\boldsymbol{k}}$ and can be used as long as $M_{11}(\boldsymbol{k})$ is real. If we assume all matrix elements are real, $\Omega_{K\pm}(\boldsymbol{k})$ agree with the expression for the eigenvalues in \cite{LS}. We can investigate these expressions closer when we have definitions of the matrix elements. For the NZ phase, some of the matrix elements are very similar, especially their absolute values. Therefore, the expressions above can be greatly simplified, giving
\begin{equation}
    \Omega_{K\pm}(\boldsymbol{k}) = \sqrt{ \mathcal{E}_{\boldsymbol{k}}\left(\mathcal{E}_{\boldsymbol{k}} +2U_s(1\pm\alpha) \right)}
\end{equation}
The requirement for these eigenvalues to be real is $\alpha \leq 1$. We note that the requirement $\alpha \leq 1$ is in accordance with the phase diagram given in \cite{master}, where the system is in the NZ phase only for $\lambda_R=0$ and $\alpha \leq 1$ and it also agrees with the conclusions in \cite{LS}. We note that $\Omega_{K\pm}(-\boldsymbol{k}) = \Omega_{K\pm}(\boldsymbol{k})$ and that $\Omega_{K\pm}(\boldsymbol{0})=0$.
%When $\boldsymbol{k}$ is such that $\epsilon_{\boldsymbol{k}} = -4t \iff \mathcal{E}_{\boldsymbol{k}} = 0$, e.g. $\boldsymbol{k} = \boldsymbol{0}$, we see that $\Omega_{K\pm}=0$. 
Furthermore, we see that it corresponds to eq. (30) of \cite{LS} with the following identifications: $\epsilon_{\boldsymbol{k}}^A = \epsilon_{\boldsymbol{k}}^B = 4t +\epsilon_{\boldsymbol{k}} = \mathcal{E}_{\boldsymbol{k}}$, $F_A = F_B = \abs{M_{52}} = \abs{M_{74}} = U_s = UN/2N_s$ and $F_{AB} = \abs{M_{72}} = \abs{K_{13}} = U_s\alpha$. Thus, consulting eq. (33) of \cite{LS} we expect two nonzero critical superfluid velocities
\begin{equation}
\label{eq:NZvc}
    v_c^\pm = \sqrt{2U_s t a^2 (1\pm\alpha)}.
    %\boldsymbol{v}_s^\pm = \sqrt{2a^2 t \frac{4t+\mu}{1+\alpha} (1\pm\alpha)}(1,1).
\end{equation}
This is exactly what one obtains from doing the calculations. Also, we note that if we set $\alpha=0$, i.e. the case of two uncoupled components in the BEC, we get the well known single component Bogoliubov spectrum $\Omega_{K\pm}(\boldsymbol{k}, \alpha=0) = \sqrt{ \mathcal{E}_{\boldsymbol{k}}\left(\mathcal{E}_{\boldsymbol{k}} +2U_s \right)}$ \cite{LS}. This agrees completely with the similar result in \cite{LS}, and also compares favourably to \eqref{eq:noSOCom} though there appears to be a factor of 2 difference in the final term. This seeming discrepancy can be explained. Here, when we set $\alpha=0$, we are treating a system of two independent single component systems, each with $N/2$ particles, whereas chapter \ref{sec:Weaklyinteracting} treats a one-component system with $N$ particles. 

The eigenvalues with no SOC are shown in figs. \ref{fig:NZeigenvaluesSOC0} and \ref{fig:NZbandSOC0}. The apparent linearity agrees with our calculation that there should be two nonzero critical superfluid velocities.
\begin{figure}
    \centering
    \includegraphics[width=0.8\linewidth]{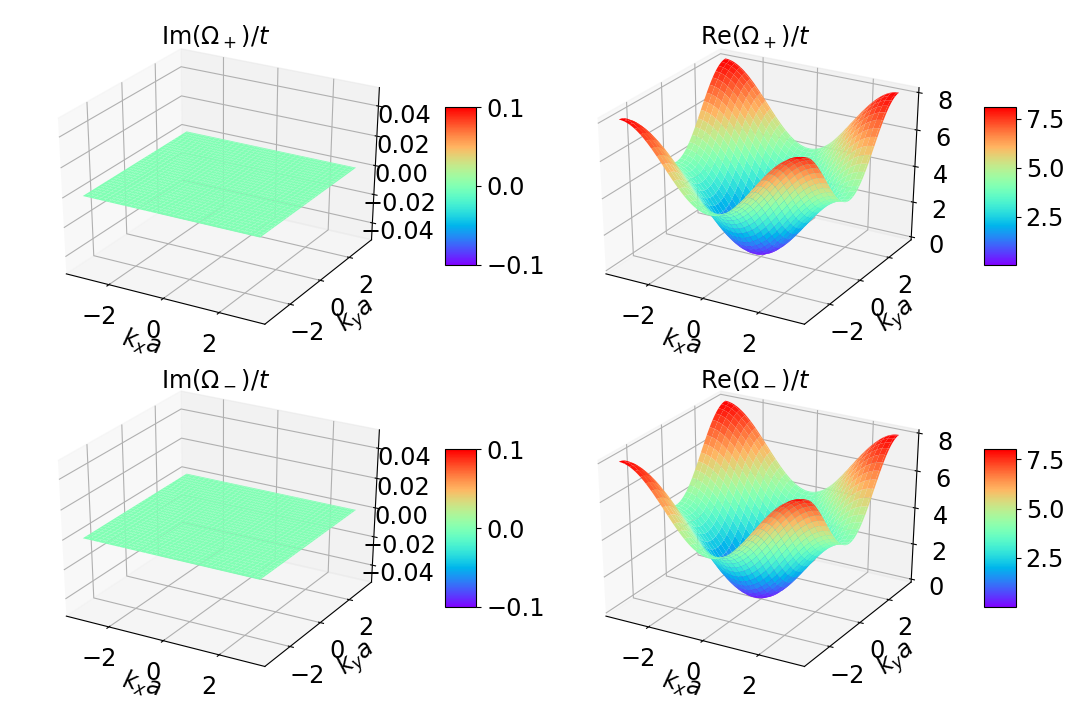}
    \caption{Shows real and imaginary parts of $\Omega_{K\pm}(\boldsymbol{k})$ for $U_s/t = 0.05$ and $\alpha = 0.9$ and thus if $U/t=0.1$ the average filling is $N/N_s =1$.}
    \label{fig:NZeigenvaluesSOC0}
\end{figure}

\begin{figure}
    \centering
    \includegraphics[width=0.7\linewidth]{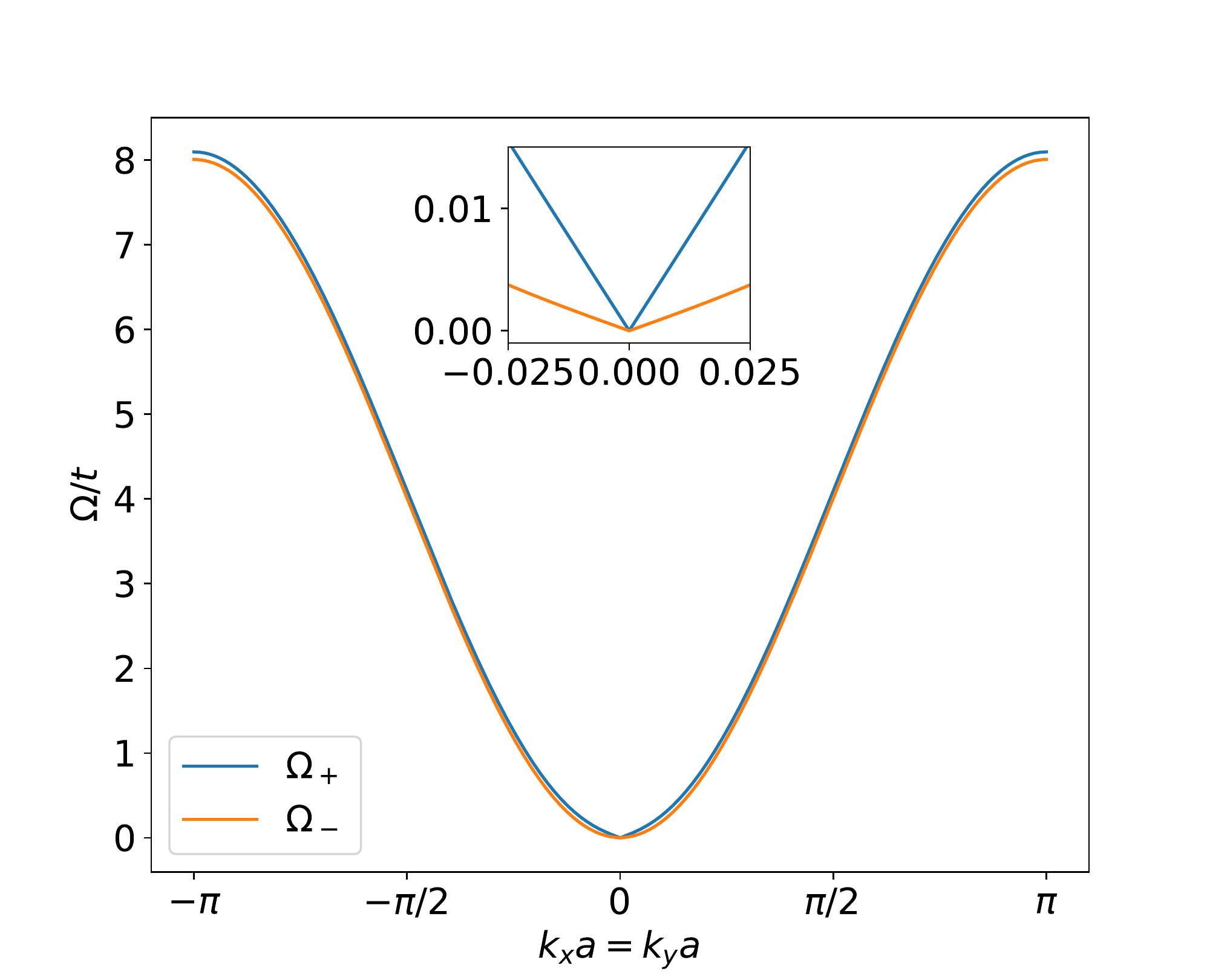}
    \caption{Shows the bands $\Omega_{K\pm}(\boldsymbol{k})$ along the $k_x=k_y$ direction for $U_s/t = 0.05$ and $\alpha = 0.9$. A zoomed in portion close to $\boldsymbol{k}=\boldsymbol{0}$ is inserted. Notice how $\Omega_{K\pm}(\boldsymbol{k})$ appear to be linear for small $\abs{\boldsymbol{k}}$, suggesting nonzero superfluid velocities. If $\alpha$ is decreased toward zero, the two bands become more and more similar.}
    \label{fig:NZbandSOC0}
\end{figure}

Numerical solutions of the eigenvalue problem show that the eigenvalues of $M_{\boldsymbol{k}}J$ are complex for any nonzero $\lambda_R$. This suggests the NZ phase in unstable in the presence of SOC. In figure \ref{fig:NZeigenSOC} we show the real and imaginary parts of the lowest band in the presence of SOC, and we see that a considerable area in $\boldsymbol{k}$-space has complex eigenvalues, with imaginary parts of order $\order{10^{-2}}$. The direction of the two areas with complex eigenvalues in the presence of SOC depend on the angles, which were set to $\theta_{\boldsymbol{0}}^\uparrow = 0$ and $\theta_{\boldsymbol{0}}^\downarrow = 3\pi/4$. It is in fact possible to find analytic eigenvalues of $M_{\boldsymbol{k}}J$ if one sets $\alpha=0$, however the lower branch there also turns out to be complex for any nonzero $\lambda_R$.

\begin{figure}
    \centering
    \includegraphics[width=0.9\linewidth]{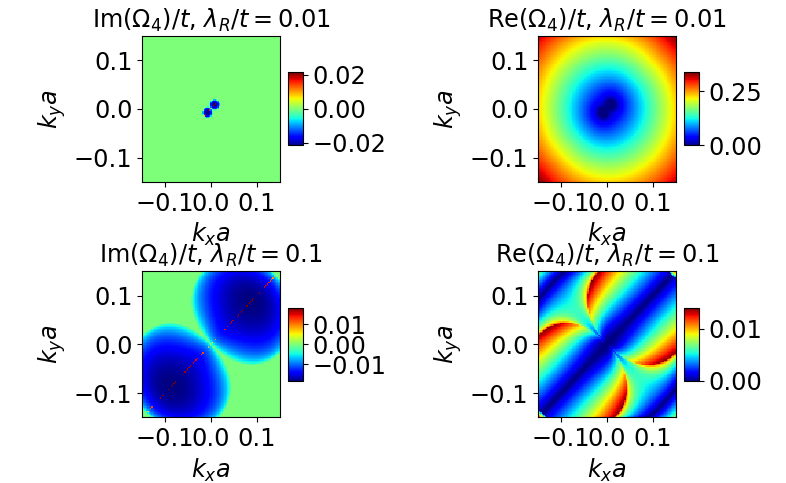}
    \caption{Plots of real and imaginary parts of the lowest numerical eigenvalue $\Omega_4(\boldsymbol{k})$ for $U_s/t = 0.05$ and $\alpha = 0.9$ in an area around $\boldsymbol{k}=\boldsymbol{0}$. In the first row, $\lambda_R/t = 0.01$ while in the second row $\lambda_R/t = 0.1$. Observe the significant imaginary parts of the eigenvalues in considerable areas.}
    \label{fig:NZeigenSOC}
\end{figure}

%Numerically, there is a form (reminiscent of that used by Solli p. 58 in \cite{Solli}) of $M_{\boldsymbol{k}}$ that yields real eigenvalues in the presence of SOC, at least for some values of the angles. However, then the lower branches do not have their minima at $\boldsymbol{k}=\boldsymbol{0}$. We are assuming that the condensate lives at $\boldsymbol{k} = \boldsymbol{0}$, and so this means the NZ phase will not be stable in the presence of SOC. It is in fact possible to find analytic eigenvalues of $M_{\boldsymbol{k}}J$ if one sets $\alpha=0$, however the lower branch there also turns out to be complex for any nonzero $\lambda_R$. Additionally, an attempt was made to transform the NZ phase Hamiltonian to the helicity basis given by \eqref{eq:Helicitybasis} before carrying out a diagonalization using a BV transformation. Once again the eigenvalues obtained are complex in the presence of SOC. 

Furthermore, consider a point alluded to earlier when discussing the PZ phase. The SOC tries to move the minimum at $\boldsymbol{k} = \boldsymbol{0}$ down towards new minima at nonzero $\boldsymbol{k}$, but the result is complex eigenvalues and a smearing out of the zeros of the real part. We also mentioned that with no Zeeman splitting any nonzero SOC will lead to minima at nonzero $\boldsymbol{k}$. Hence, the NZ phase with no Zeeman term should not be possible for any nonzero SOC. In conclusion, the NZ phase is only stable for $\lambda_R =0$ and $\alpha\leq1$, and the excitation spectrum is $\Omega_{K\pm}(\boldsymbol{k})$.

\subsection{Free Energy}
As argued above, we must set $\lambda_R=0$, $\alpha \leq 1$ and use $\Omega_{K\pm}$ to have a real spectrum in the NZ phase.
\begin{equation}
    H_2 =  \frac{1}{4} \sum_{\boldsymbol{k}\neq\boldsymbol{0}}\boldsymbol{B}_{\boldsymbol{k}}^{\dagger}D_{\boldsymbol{k}}\boldsymbol{B}_{\boldsymbol{k}},
\end{equation}
where $D_{\boldsymbol{k}}$ is the matrix 
\begin{align}
    \begin{split}
        D_{\boldsymbol{k}} = \textrm{diag}\big(&\Omega_{K+}(\boldsymbol{k}), \Omega_{K+}(\boldsymbol{k}), \Omega_{K-}(\boldsymbol{k}), \Omega_{K-}(\boldsymbol{k}),\\
        &\Omega_{K+}(\boldsymbol{k}), \Omega_{K+}(\boldsymbol{k}), \Omega_{K-}(\boldsymbol{k}), \Omega_{K-}(\boldsymbol{k})\big).
    \end{split}
\end{align}
    
Just as in the PZ phase, numerical investigations of the transformation matrix $T_{\boldsymbol{k}}$ suggest a relation between the new operators corresponding to equal eigenvalues, and we find $B_{-\boldsymbol{k},3} = B_{\boldsymbol{k},1}$ and $B_{-\boldsymbol{k},4} = B_{\boldsymbol{k},2}$. Additionally, the analytic argument given in the PZ phase is also valid here. Thus, using commutators and the inversion symmetry of the eigenvalues, we obtain
\begin{equation}
    H_2 =  \sum_{\boldsymbol{k}\neq\boldsymbol{0}}\sum_{\sigma=1}^2 \Omega_{K\sigma}(\boldsymbol{k})\left(B_{\boldsymbol{k},\sigma}^{\dagger}B_{\boldsymbol{k},\sigma}+\frac{1}{2}\right),
\end{equation}
where we let $\Omega_{K1}(\boldsymbol{k})=\Omega_{K+}(\boldsymbol{k})$ and $\Omega_{K2}(\boldsymbol{k})=\Omega_{K-}(\boldsymbol{k})$. At zero temperature the free energy, $F_{\textrm{NZ}}$, is equal to $\langle H_{\textrm{NZ}} \rangle$. Using \eqref{eq:F} we find
\begin{equation}
\label{eq:NZF}
    F_{\textrm{NZ}} = H'_0 +\frac{1}{2}\sum_{\boldsymbol{k}\neq\boldsymbol{0}}\sum_{\sigma=1}^2 \Omega_{K\sigma}(\boldsymbol{k}).
\end{equation}
Here, $H'_0$ is
\begin{equation}
    H'_0 = N(\epsilon_{\boldsymbol{0}}+T)+\frac{UN^2}{4N_s}(1+\alpha)- 4tN_s -(N_s-1)U_s.
\end{equation}
$F_\textrm{NZ}$ is independent of the angles $\theta_{\boldsymbol{0}}^{\downarrow}$ and $\theta_{\boldsymbol{0}}^{\uparrow}$, and they are thus arbitrary. 

Notice that we did not assume $N_0^\uparrow = N_0^\downarrow$. However, once we set $N^\uparrow = N^\downarrow$ it is likely that $N_0^\uparrow \approx N_0^\downarrow$, since we assume there are few excitations. The same point can be made for the remaining phases.

\section{PW Phase}
The PW Phase is similar to the NZ phase, except that now $\boldsymbol{k}_{01}=(k_0, k_0)$ is the only occupied condensate momentum. We assume $N_{\boldsymbol{k}_{01}}^\uparrow = N_0^\uparrow$ and  $N_{\boldsymbol{k}_{01}}^\downarrow = N_0^\downarrow$. A single nonzero condensate momentum like this, can be thought of as an analogue to Fulde-Ferrell-Larkin-Ovchinnikov \cite{FFLO} states, usually discussed in the case of fermionic systems and in particular superconductors. To be specific, the PW phase is an analogue of Fulde-Ferrell states \cite{FF}, while the SW phase is an analogue of Larkin-Ovchinnikov states \cite{LO}.

From \eqref{eq:H0} we find
\begin{align}
    \begin{split}
    H_0^{''} &= (N_0^\uparrow+N_0^\downarrow)(\epsilon_{\boldsymbol{k}_{01}}+T)+2\sqrt{N_0^\uparrow N_0^\downarrow}\abs{s_{\boldsymbol{k}_{01}}}\cos(\gamma_{\boldsymbol{k}_{01}}+\Delta\theta_1)\\
    &+\frac{U}{2N_s}\left((N_0^\uparrow)^2+(N_0^\downarrow)^2+2\alpha N_0^\uparrow N_0^\downarrow \right).
    \end{split}
\end{align}
We now use \eqref{eq:NN0excitup} and \eqref{eq:NN0excitdown} to replace $N_0^\alpha$ by $N^\alpha$. For $\sqrt{N_0^\uparrow N_0^\downarrow}$ we Taylor expand the square root and keep only terms that are at most quadratic in excitation operators,
\begin{equation}
    \sqrt{N_0^\uparrow N_0^\downarrow} = \sqrt{N^\uparrow N^\downarrow}-\frac{1}{2}\sqrt{\frac{N^\uparrow}{N^\downarrow}}\left.\sum_{\boldsymbol{k}}\right.^{'} A_{\boldsymbol{k}}^{\downarrow\dagger}A_{\boldsymbol{k}}^{\downarrow} -\frac{1}{2} \sqrt{\frac{N^\downarrow}{N^\uparrow}} \left.\sum_{\boldsymbol{k}}\right.^{'} A_{\boldsymbol{k}}^{\uparrow\dagger}A_{\boldsymbol{k}}^{\uparrow}.
\end{equation}
Hence, inserting \eqref{eq:NN0excitup} and \eqref{eq:NN0excitdown} into $H_0^{''}$ we get
\begin{align}
    \begin{split}
    \label{eq:H0PWrewrite}
        H_0^{''} &= (N^\uparrow+N^\downarrow)(\epsilon_{\boldsymbol{k}_{01}}+T)+2\sqrt{N^\uparrow N^\downarrow}\abs{s_{\boldsymbol{k}_{01}}}\cos(\gamma_{\boldsymbol{k}_{01}}+\Delta\theta_1)\\
        &+\frac{U}{2N_s}\left((N^\uparrow)^2+(N^\downarrow)^2+2\alpha N^\uparrow N^\downarrow \right) \\
        &-(\epsilon_{\boldsymbol{k}_{01}}+T)\sum_{\boldsymbol{k}\neq \boldsymbol{k}_{01}}\left(A_{\boldsymbol{k}}^{\uparrow\dagger}A_{\boldsymbol{k}}^{\uparrow} + A_{\boldsymbol{k}}^{\downarrow\dagger}A_{\boldsymbol{k}}^{\downarrow}\right) \\
        &-\abs{s_{\boldsymbol{k}_{01}}}\left(\sqrt{\frac{N^\uparrow}{N^\downarrow}}\sum_{\boldsymbol{k}\neq \boldsymbol{k}_{01}} A_{\boldsymbol{k}}^{\downarrow\dagger}A_{\boldsymbol{k}}^{\downarrow} + \sqrt{\frac{N^\downarrow}{N^\uparrow}} \sum_{\boldsymbol{k}\neq \boldsymbol{k}_{01}} A_{\boldsymbol{k}}^{\uparrow\dagger}A_{\boldsymbol{k}}^{\uparrow}\right)\cos(\gamma_{\boldsymbol{k}_{01}}+\Delta\theta_1) \\
        &-\frac{U}{2N_s}\Bigg( 2N^\uparrow \sum_{\boldsymbol{k}\neq \boldsymbol{k}_{01}}A_{\boldsymbol{k}}^{\uparrow\dagger}A_{\boldsymbol{k}}^{\uparrow} + 2N^\downarrow \sum_{\boldsymbol{k}\neq \boldsymbol{k}_{01}}A_{\boldsymbol{k}}^{\downarrow\dagger}A_{\boldsymbol{k}}^{\downarrow} \\
        &\mbox{\qquad\qquad} +2\alpha \bigg( N^\uparrow \sum_{\boldsymbol{k}\neq \boldsymbol{k}_{01}}A_{\boldsymbol{k}}^{\downarrow\dagger}A_{\boldsymbol{k}}^{\downarrow} + N^\downarrow \sum_{\boldsymbol{k}\neq \boldsymbol{k}_{01}}A_{\boldsymbol{k}}^{\uparrow\dagger}A_{\boldsymbol{k}}^{\uparrow} \bigg)\Bigg).
    \end{split}
\end{align}
Choosing $N^\uparrow = N^\downarrow = N/2$ we define $H_0 = H_0^{\textrm{PW}}$ given in \eqref{eq:H0PW}.
% \begin{equation}
%     H_0 = N(\epsilon_{\boldsymbol{k}_{01}}+T) + N\abs{s_{\boldsymbol{k}_{01}}}\cos(\gamma_{\boldsymbol{k}_{01}}+\Delta\theta_1) + \frac{UN^2}{4N_s}(1+\alpha).
% \end{equation}
The rest of $H_0^{''}$ is moved to $H_2$ as it is quadratic in excitation operators. Once again, because only one momentum is occupied, $H_1=0$, and in $H_2$ we may replace $N_0$ with $N$ by the same argument as in the PZ phase. 
% In the PW phase, we have
% \begin{equation}
%     H_0^{\textrm{PW}} = N_0(\epsilon_{\boldsymbol{k}_{01}}+T)+N_0\abs{s_{\boldsymbol{k}_{01}}}\cos(\gamma_{\boldsymbol{k}_{01}}+\Delta\theta_1) + \frac{UN_0^2}{4N_s}(1+\alpha).
% \end{equation}
% Inserting \eqref{eq:NN0excit} into $H_0^{\textrm{PW}}$ we get
% \begin{align}
%     \begin{split}
%     \label{eq:H0PWrewrite}
%         H_0^{\textrm{PW}} &= N(\epsilon_{\boldsymbol{k}_{01}}+T)+N\abs{s_{\boldsymbol{k}_{01}}}\cos(\gamma_{\boldsymbol{k}_{01}}+\Delta\theta_1) + \frac{UN^2}{4N_s}(1+\alpha) \\
%         &-\left(\epsilon_{\boldsymbol{k}_{01}}+T+\abs{s_{\boldsymbol{k}_{01}}}\cos(\gamma_{\boldsymbol{k}_{01}}+\Delta\theta_1) +\frac{UN}{2N_s}(1+\alpha)\right)\sum_{\boldsymbol{k}\neq \boldsymbol{k}_{01}}\sum_\alpha A_{\boldsymbol{k}}^{\alpha\dagger}A_{\boldsymbol{k}}^{\alpha}.
%     \end{split}
% \end{align}
% We define the first line as $H_0$ and move the second line to $H_2$ as it is quadratic in excitation operators. Once again because only one momentum is occupied $H_1=0$. In $H_2$ we may replace $N_0$ with $N$ by the same argument as in the PZ phase. 
The quadratic part has a similar form as in the NZ phase,
\begin{align}
    \begin{split}
        H_2 = \sum_{\boldsymbol{k}\neq \boldsymbol{k}_{01}} &\Bigg\{\left(\mathcal{E}_{\boldsymbol{k}} + U_s + G_{k_0}\right) (A_{\boldsymbol{k}}^{\uparrow\dagger}A_{\boldsymbol{k}}^{\uparrow} + A_{\boldsymbol{k}}^{\downarrow\dagger}A_{\boldsymbol{k}}^{\downarrow}) \\
        &+\left(s_{\boldsymbol{k}}+U_s\alpha e^{i(\theta_{1}^{\downarrow}-\theta_{1}^{\uparrow})}\right)A_{\boldsymbol{k}}^{\uparrow\dagger}A_{\boldsymbol{k}}^{\downarrow}\\
        &+\left(s_{\boldsymbol{k}}^{*}+U_s\alpha e^{-i(\theta_{1}^{\downarrow}-\theta_{1}^{\uparrow})}\right)A_{\boldsymbol{k}}^{\downarrow\dagger}A_{\boldsymbol{k}}^{\uparrow}\\
        &+\frac{U_s}{2}\left( e^{i2\theta_{1}^{\uparrow}}A_{\boldsymbol{k}}^{\uparrow}A_{2\boldsymbol{k}_{01}-\boldsymbol{k}}^{\uparrow} + e^{-i2\theta_{1}^{\uparrow}}A_{2\boldsymbol{k}_{01}-\boldsymbol{k}}^{\uparrow\dagger}A_{\boldsymbol{k}}^{\uparrow\dagger} \right)\\
        &+\frac{U_s}{2}\left( e^{i2\theta_{1}^{\downarrow}}A_{\boldsymbol{k}}^{\downarrow}A_{2\boldsymbol{k}_{01}-\boldsymbol{k}}^{\downarrow} + e^{-i2\theta_{1}^{\downarrow}}A_{2\boldsymbol{k}_{01}-\boldsymbol{k}}^{\downarrow\dagger}A_{\boldsymbol{k}}^{\downarrow\dagger} \right)\\
        &+\frac{U_s}{2}\alpha\bigg( e^{i(\theta_{1}^{\uparrow} + \theta_{1}^{\downarrow})}\left(A_{\boldsymbol{k}}^{\downarrow}A_{2\boldsymbol{k}_{01}-\boldsymbol{k}}^{\uparrow} + A_{\boldsymbol{k}}^{\uparrow}A_{2\boldsymbol{k}_{01}-\boldsymbol{k}}^{\downarrow}\right) \\
        &\mbox{\qquad \qquad} +e^{-i(\theta_{1}^{\uparrow} + \theta_{1}^{\downarrow})} \left(A_{2\boldsymbol{k}_{01}-\boldsymbol{k}}^{\uparrow\dagger}A_{\boldsymbol{k}}^{\downarrow\dagger} + A_{2\boldsymbol{k}_{01}-\boldsymbol{k}}^{\downarrow\dagger}A_{\boldsymbol{k}}^{\uparrow\dagger} \right) \bigg) \Bigg\},
    \end{split}
\end{align}
where we defined
\begin{align}
    \begin{split}
    G_{k_0} &\equiv  \epsilon_{\boldsymbol{0}}-\epsilon_{\boldsymbol{k}_{01}}  - \abs{s_{\boldsymbol{k}_{01}}}\cos(\gamma_{\boldsymbol{k}_{01}}+\Delta\theta_1)  \\
    &= 4t(\cos(k_0 a)-1)-2\sqrt{2}\lambda_R\abs{\sin(k_0 a)}\cos(\gamma_{\boldsymbol{k}_{01}}+\Delta\theta_1).
    \end{split}
\end{align}
Notice that upon setting $k_0=0$, this is equivalent to the NZ phase. 

\subsection{Approximate Analytic Eigenvalues in Helicity Basis}
It will prove impossible to find analytic eigenvalues in the above spin basis. We therefore first attempt an approximation along the lines of Toniolo and Linder \cite{Toniolo}, who treated the PW phase with the addition of a Zeeman field. We transform the Hamiltonian to the helicity basis \eqref{eq:Helicitybasis} which diagonalizes the non-interacting part of the Hamiltonian. Then, we claim that $C_{\boldsymbol{k}}^+$ is negligible because only the lowest band is relevant for BEC. I.e. the vast majority of the helicity quasiparticles will be placed in the minima of $\lambda_{\boldsymbol{k}}^-$ before introducing weak interactions. Defining $C_{\boldsymbol{k}} \equiv C_{\boldsymbol{k}}^-$ we find that
\begin{equation}
    \begin{pmatrix} A_{\boldsymbol{k}}^\uparrow \\ A_{\boldsymbol{k}}^\downarrow \end{pmatrix} = \frac{1}{\sqrt{2}} \begin{pmatrix} e^{-i\gamma_{\boldsymbol{k}}} \left( C_{\boldsymbol{k}}^+ - C_{\boldsymbol{k}}^- \right)\\ C_{\boldsymbol{k}}^+ + C_{\boldsymbol{k}}^-  \end{pmatrix} \approx \frac{1}{\sqrt{2}} \begin{pmatrix} -e^{-i\gamma_{\boldsymbol{k}}} C_{\boldsymbol{k}}\\ C_{\boldsymbol{k}}  \end{pmatrix}
\end{equation}
% becomes 
% \begin{equation}
%     \begin{pmatrix} A_{\boldsymbol{k}}^\uparrow \\ A_{\boldsymbol{k}}^\downarrow \end{pmatrix} \approx \frac{1}{\sqrt{2}} \begin{pmatrix} -e^{-i\gamma_{\boldsymbol{k}}} C_{\boldsymbol{k}}\\ C_{\boldsymbol{k}}  \end{pmatrix}
% \end{equation}
with this approximation. Then, $H_2$ becomes
\begin{align}
    \begin{split}
        H_2 = \frac{1}{2} \sum_{\boldsymbol{k}\neq \boldsymbol{k}_{01}} &\Bigg( 2N_{11}(\boldsymbol{k})C_{\boldsymbol{k}}^\dagger C_{\boldsymbol{k}} + N_{32}(\boldsymbol{k})C_{\boldsymbol{k}}C_{2\boldsymbol{k}_{01}-\boldsymbol{k}} +N_{32}^* (\boldsymbol{k})C_{2\boldsymbol{k}_{01}-\boldsymbol{k}}^\dagger C_{\boldsymbol{k}}^\dagger \Bigg),
    \end{split}
\end{align}
where we defined
\begin{align}
    \begin{split}
        N_{11}(\boldsymbol{k}) =& \mathcal{E}_{\boldsymbol{k}} + U_s + G_{k_0} -\abs{s_{\boldsymbol{k}}} -U_s\alpha \cos(\gamma_{\boldsymbol{k}}+\theta_{1}^{\downarrow}-\theta_{1}^{\uparrow}), \\
        N_{32}(\boldsymbol{k}) =& U_s e^{i2\theta_{1}^{\uparrow}}\frac{e^{-i(\gamma_{\boldsymbol{k}}+\gamma_{2\boldsymbol{k}_{01}-\boldsymbol{k}})}}{2}+\frac{U_s e^{i2\theta_{1}^{\downarrow}}}{2}\\
        &-  U_s\alpha e^{i(\theta_{1}^{\uparrow} + \theta_{1}^{\downarrow})}\frac{e^{-i\gamma_{\boldsymbol{k}}}+e^{-i\gamma_{2\boldsymbol{k}_{01}-\boldsymbol{k}}}}{2}.
    \end{split}
\end{align}
Defining
\begin{equation}
    \boldsymbol{C}_{\boldsymbol{k}} = (C_{\boldsymbol{k}}, C_{2\boldsymbol{k}_{01}-\boldsymbol{k}}, C_{\boldsymbol{k}}^\dagger, C_{2\boldsymbol{k}_{01}-\boldsymbol{k}}^\dagger)^T
\end{equation}
and using commutators, we can write the Hamiltonian on matrix form
\begin{equation}
    H_2 = \frac{1}{4}\sum_{\boldsymbol{k}\neq \boldsymbol{k}_{01}} \boldsymbol{C}_{\boldsymbol{k}}^\dagger N_{\boldsymbol{k}} \boldsymbol{C}_{\boldsymbol{k}},
\end{equation}
where
% \begin{gather}
%     N_{\boldsymbol{k}} = 
%     \begin{pmatrix}
%     2N_{11}(\boldsymbol{k}) & 0 & 0 & N_{32}^* (\boldsymbol{k}) \\
%     0 &  0 & N_{32}^* (\boldsymbol{k}) & 0 \\
%     0 & N_{32} (\boldsymbol{k}) & 2N_{11}(\boldsymbol{k})  & 0 \\
%     N_{32} (\boldsymbol{k}) & 0 & 0 & 0 \\
%     \end{pmatrix}.
% \end{gather}
% The eigenvalues of $N_{\boldsymbol{k}}J$ prove to be complex at some $\boldsymbol{k}$. We therefore use
% \begin{equation}
%     \sum_{\boldsymbol{k}\neq \boldsymbol{k}_{01}}C(\boldsymbol{k})C_{\boldsymbol{k}}^{\dagger}C_{\boldsymbol{k}} = \frac{1}{2}\sum_{\boldsymbol{k}\neq \boldsymbol{k}_{01}}\left(C(\boldsymbol{k})C_{\boldsymbol{k}}^{\dagger}C_{\boldsymbol{k}}+ C(2\boldsymbol{k}_{01}-\boldsymbol{k})C_{2\boldsymbol{k}_{01}-\boldsymbol{k}}^{\dagger}C_{2\boldsymbol{k}_{01}-\boldsymbol{k}}\right)
% \end{equation}
% to rewrite $N_{\boldsymbol{k}}$ to
\begin{gather}
    N_{\boldsymbol{k}} = 
    \begin{pmatrix}
    N_{11}(\boldsymbol{k}) & 0 & 0 & N_{32}^* (\boldsymbol{k}) \\
    0 &  N_{11}(2\boldsymbol{k}_{01}-\boldsymbol{k}) & N_{32}^* (\boldsymbol{k}) & 0 \\
    0 & N_{32} (\boldsymbol{k}) & N_{11}(\boldsymbol{k})  & 0 \\
    N_{32} (\boldsymbol{k}) & 0 & 0 & N_{11}(2\boldsymbol{k}_{01}-\boldsymbol{k})  \\
    \end{pmatrix}.
\end{gather}
We used periodicity to fill in the diagonal.
Such a manipulation was also done in \cite{Toniolo} and will be justified later. 
%The form of this Hamiltonian is reminiscent of the quantum mechanical Holstein-Primakoff description of the antiferromagnet, except that the elements on the diagonal are not all the same. 
The eigenvalues of $N_{\boldsymbol{k}}J$ obtained analytically with \textit{Maple} are $\lambda(\boldsymbol{k}) = \pm \Omega_i(\boldsymbol{k})$, $\Omega_1(\boldsymbol{k}) = \Omega_+(\boldsymbol{k})$, $\Omega_2(\boldsymbol{k}) = \Omega_-(\boldsymbol{k})$ with
\begin{align}
    \begin{split}
    \Omega_\pm(\boldsymbol{k}) = \frac{1}{2}\bigg(&\pm N_{11}(\boldsymbol{k}) \mp N_{11}(2\boldsymbol{k}_{01}-\boldsymbol{k}) \\
    &+ \sqrt{\big(N_{11}(\boldsymbol{k}) + N_{11}(2\boldsymbol{k}_{01}-\boldsymbol{k}) \big)^2 - 4\abs{N_{32} (\boldsymbol{k})}^2}  \bigg).
    \end{split}
\end{align}
We notice that $\Omega_-(2\boldsymbol{k}_{01}-\boldsymbol{k}) = \Omega_+(\boldsymbol{k})$ because $N_{32} (2\boldsymbol{k}_{01}-\boldsymbol{k}) = N_{32} (\boldsymbol{k})$. This can be used to represent the diagonalized Hamiltonian in terms of just one band. The procedure is similar to what was done in the PZ phase to combine two equal inversion symmetric bands into one. Let us investigate the equations governing the eigenvectors. For $N_{\boldsymbol{k}}J\boldsymbol{x} = \Omega_+(\boldsymbol{k})\boldsymbol{x}$ we find
\begin{align}
    \begin{split}
        N_{11}(\boldsymbol{k})x_1 -N_{32}^* (\boldsymbol{k})x_4 &= \Omega_+(\boldsymbol{k})x_1, \\
        N_{11}(2\boldsymbol{k}_{01}-\boldsymbol{k})x_2 -N_{32}^* (\boldsymbol{k})x_3 &= \Omega_+(\boldsymbol{k})x_2, \\
        N_{32} (\boldsymbol{k})x_2 - N_{11}(\boldsymbol{k})x_3 &=\Omega_+(\boldsymbol{k})x_3, \\
        N_{32} (\boldsymbol{k})x_1 - N_{11}(2\boldsymbol{k}_{01}-\boldsymbol{k})x_4 &=\Omega_+(\boldsymbol{k})x_4.
    \end{split}
\end{align}
If we instead look for the eigenvectors at $2\boldsymbol{k}_{01}-\boldsymbol{k}$, $N_{2\boldsymbol{k}_{01}-\boldsymbol{k}}J\boldsymbol{x} = \Omega_-(2\boldsymbol{k}_{01}-\boldsymbol{k})\boldsymbol{x} = \Omega_+(\boldsymbol{k})\boldsymbol{x}$ becomes
\begin{align}
    \begin{split}
        N_{11}(2\boldsymbol{k}_{01}-\boldsymbol{k})x_1 -N_{32}^* (\boldsymbol{k})x_4 &= \Omega_+(\boldsymbol{k})x_1, \\
        N_{11}(\boldsymbol{k})x_2 -N_{32}^* (\boldsymbol{k})x_3 &= \Omega_+(\boldsymbol{k})x_2, \\
        N_{32} (\boldsymbol{k})x_2 - N_{11}(2\boldsymbol{k}_{01}-\boldsymbol{k})x_3 &=\Omega_+(\boldsymbol{k})x_3, \\
        N_{32} (\boldsymbol{k})x_1 - N_{11}(\boldsymbol{k})x_4 &=\Omega_+(\boldsymbol{k})x_4.
    \end{split}
\end{align}
The relation $N_{32} (2\boldsymbol{k}_{01}-\boldsymbol{k}) = N_{32} (\boldsymbol{k})$ was used. These equation are the same, apart from an interchange $x_1 \leftrightarrow x_2$ and $x_3 \leftrightarrow x_4$. This is the same change we have in the basis
\begin{align}
    \begin{split}
        \boldsymbol{C}_{\boldsymbol{k}} &= (C_{\boldsymbol{k}}, C_{2\boldsymbol{k}_{01}-\boldsymbol{k}}, C_{\boldsymbol{k}}^\dagger, C_{2\boldsymbol{k}_{01}-\boldsymbol{k}}^\dagger)^T, \\
        \boldsymbol{C}_{2\boldsymbol{k}_{01}-\boldsymbol{k}} &= (C_{2\boldsymbol{k}_{01}-\boldsymbol{k}}, C_{\boldsymbol{k}}, C_{2\boldsymbol{k}_{01}-\boldsymbol{k}}^\dagger, C_{\boldsymbol{k}}^\dagger)^T .
    \end{split}
\end{align}
Imagine $\boldsymbol{x}_1$ is an eigenvector corresponding to $\Omega_1(\boldsymbol{k})= \Omega_+(\boldsymbol{k})$. Then the operator $B_{\boldsymbol{k},1}$ associated with $\Omega_1(\boldsymbol{k})$ is defined as $B_{\boldsymbol{k},1} = \boldsymbol{x}_1^\dagger \boldsymbol{C}_{\boldsymbol{k}}$. An eigenvector $\boldsymbol{x}_2$ corresponding to $\Omega_2 (2\boldsymbol{k}_{01}-\boldsymbol{k}) = \Omega_-(2\boldsymbol{k}_{01}-\boldsymbol{k})$ can then be chosen to be the same as $\boldsymbol{x}_1$ apart from the interchange $x_1 \leftrightarrow x_2$ and $x_3 \leftrightarrow x_4$. The operator $B_{2\boldsymbol{k}_{01}-\boldsymbol{k}, 2}$ associated with $\Omega_2 (2\boldsymbol{k}_{01}-\boldsymbol{k})$ is then defined as $B_{2\boldsymbol{k}_{01}-\boldsymbol{k}, 2} = \boldsymbol{x}_2^\dagger \boldsymbol{C}_{2\boldsymbol{k}_{01}-\boldsymbol{k}} = \boldsymbol{x}_1^\dagger \boldsymbol{C}_{\boldsymbol{k}} = B_{\boldsymbol{k},1}$. Hence, the operator $B_{2\boldsymbol{k}_{01}-\boldsymbol{k}, 2}$ can be defined to be equal to the operator $B_{\boldsymbol{k},1}$. Thus,
\begin{align}
    \begin{split}
        H_2 &= \frac{1}{2}\sum_{\boldsymbol{k}\neq \boldsymbol{k}_{01}} \left(\Omega_1(\boldsymbol{k}) \left(B_{\boldsymbol{k},1}^{\dagger}B_{\boldsymbol{k},1} +\frac12 \right) + \Omega_2(\boldsymbol{k}) \left(B_{\boldsymbol{k},2}^{\dagger}B_{\boldsymbol{k},2} +\frac12 \right)\right) \\
        &= \frac{1}{2}\sum_{\boldsymbol{k}\neq \boldsymbol{k}_{01}} \bigg(\Omega_1(\boldsymbol{k}) \left(B_{\boldsymbol{k},1}^{\dagger}B_{\boldsymbol{k},1} +\frac12 \right) \\
        &\mbox{\qquad\qquad\qquad} + \Omega_2(2\boldsymbol{k}_{01}-\boldsymbol{k}) \left(B_{2\boldsymbol{k}_{01}-\boldsymbol{k},2}^{\dagger}B_{2\boldsymbol{k}_{01}-\boldsymbol{k},2} +\frac12 \right)\bigg) \\
        &= \sum_{\boldsymbol{k}\neq \boldsymbol{k}_{01}} \Omega_1(\boldsymbol{k}) \left(B_{\boldsymbol{k},1}^{\dagger}B_{\boldsymbol{k},1} +\frac12 \right) \\
        &\equiv \sum_{\boldsymbol{k}\neq \boldsymbol{k}_{01}} \Omega_H(\boldsymbol{k}) \left(B_{\boldsymbol{k}}^{\dagger}B_{\boldsymbol{k}} +\frac12 \right) .
    \end{split}
\end{align}
Once more, periodicity was used to replace $\boldsymbol{k}$ by $2\boldsymbol{k}_{01}-\boldsymbol{k}$ in the second term. Then, we used that $\Omega_2(2\boldsymbol{k}_{01}-\boldsymbol{k}) = \Omega_1(\boldsymbol{k}) $ and $B_{2\boldsymbol{k}_{01}-\boldsymbol{k},2} = B_{\boldsymbol{k},1}$. We also defined
\begin{align}
    \begin{split}
    \label{eq:PWOMH}
    \Omega_H(\boldsymbol{k}) = \frac{1}{2}\bigg(&N_{11}(\boldsymbol{k}) - N_{11}(2\boldsymbol{k}_{01}-\boldsymbol{k})\\
    &+ \sqrt{\big(N_{11}(\boldsymbol{k}) + N_{11}(2\boldsymbol{k}_{01}-\boldsymbol{k}) \big)^2 - 4\abs{N_{32} (\boldsymbol{k})}^2}  \bigg).
    \end{split}
\end{align}
This single band is plotted in figure \ref{fig:PWHelApp} using the values of the variational parameters we will find minimizes the free energy in chapter \ref{sec:PWF}. It can be shown that in the case of no Zeeman field, these results are equivalent to the results in \cite{Toniolo}. There appears to be a typo in the definition of the coefficient $b_{\boldsymbol{k}}$ in equation (6) of \cite{Toniolo}. The term proportional to $U'$ should be divided by $2$. Then, we can show that $N_{11}(\boldsymbol{k}) = a_{\boldsymbol{k}}$ and $N_{32} (\boldsymbol{k}) = 2b_{\boldsymbol{k}}$ and so $\Omega_H(\boldsymbol{k})$ is the same band reported in \cite{Toniolo}. 

The helicity basis is undefined when $s_{\boldsymbol{k}} = 0$. Additionally, $\gamma_{\boldsymbol{k}}$ is undefined at $\boldsymbol{k}=\boldsymbol{0}$ and $\gamma_{2\boldsymbol{k}_{01}-\boldsymbol{k}}$ is undefined at $\boldsymbol{k} = 2\boldsymbol{k}_{01}$. This is the mathematical explanation for the discontinuities observed in figure \ref{fig:PWHelApp}. They become less pronounced for higher $\lambda_R$, but do not disappear. To avoid such discontinuities, obtain the free energy and hence determine the variational parameters, we attempt to solve the eigenvalue problem in the original spin basis. 

% This single band is plotted in figure \ref{fig:PWHelApp} using the values of the variational parameters we will find minimizes the free energy in chapter \ref{sec:PWF}. Identifying $N_{11}(\boldsymbol{k})$ with $a_{\boldsymbol{k}}$ and $N_{32} (\boldsymbol{k})$ with $2b_{\boldsymbol{k}}$ we show in appendix \ref{app:PW} that this is the same band reported in \cite{Toniolo}. The helicity basis is undefined when $s_{\boldsymbol{k}} = 0$. Additionally, $\gamma_{\boldsymbol{k}}$ is undefined at $\boldsymbol{k}=\boldsymbol{0}$ and $\gamma_{2\boldsymbol{k}_{01}-\boldsymbol{k}}$ is undefined at $\boldsymbol{k} = 2\boldsymbol{k}_{01}$. This is the mathematical explanation for the discontinuities observed in fig. \ref{fig:PWHelApp}. They become less pronounced for higher $\lambda_R$ but do not disappear. To avoid such discontinuities, obtain the free energy and hence determine the variational parameters, we attempt to solve the eigenvalue problem in the original spin basis. 

\begin{figure}
    \centering
    \begin{subfigure}{.45\textwidth}
      \includegraphics[width=\linewidth]{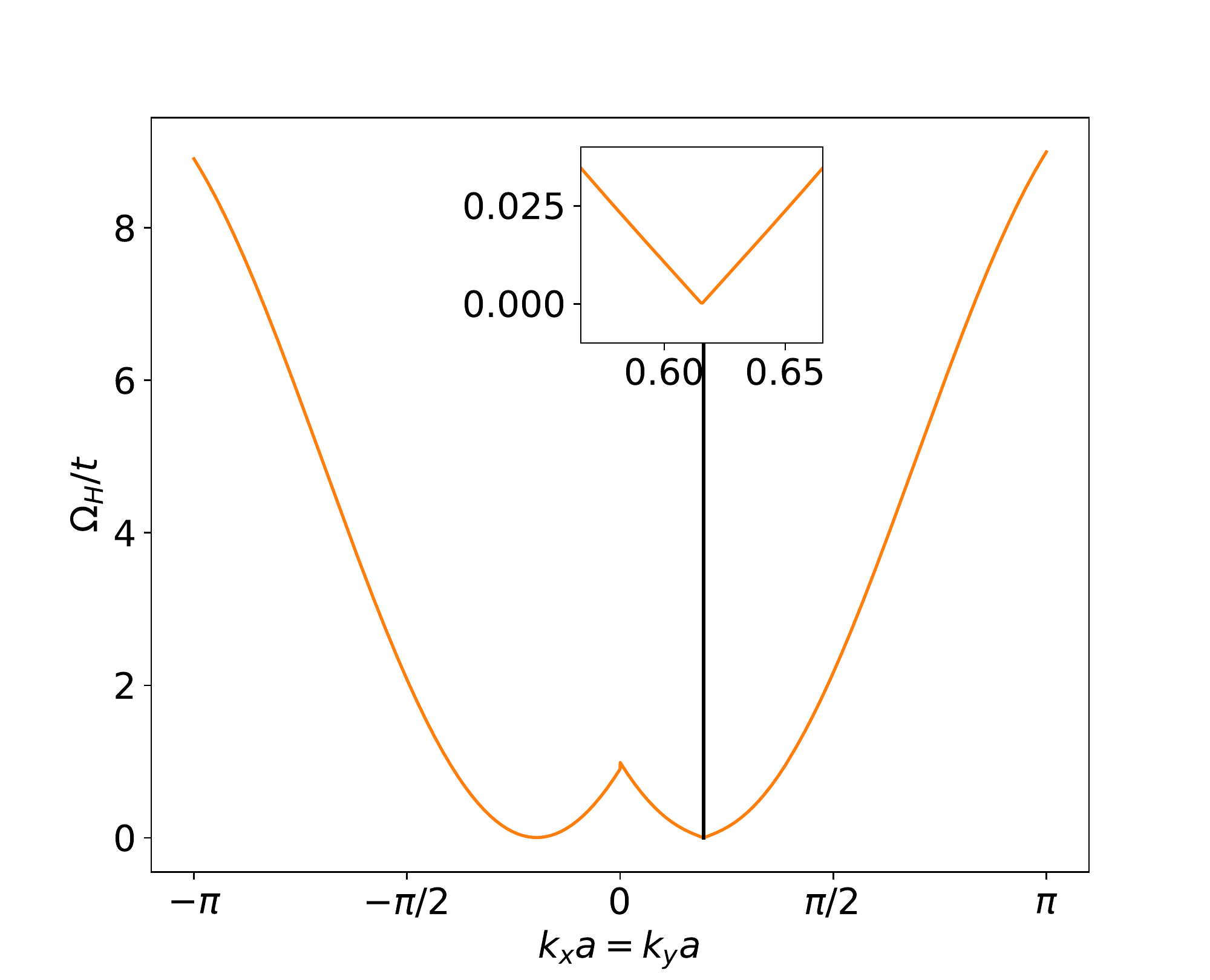}
      \caption{}
    \end{subfigure}%
    \begin{subfigure}{.45\textwidth}
      \includegraphics[width=\linewidth]{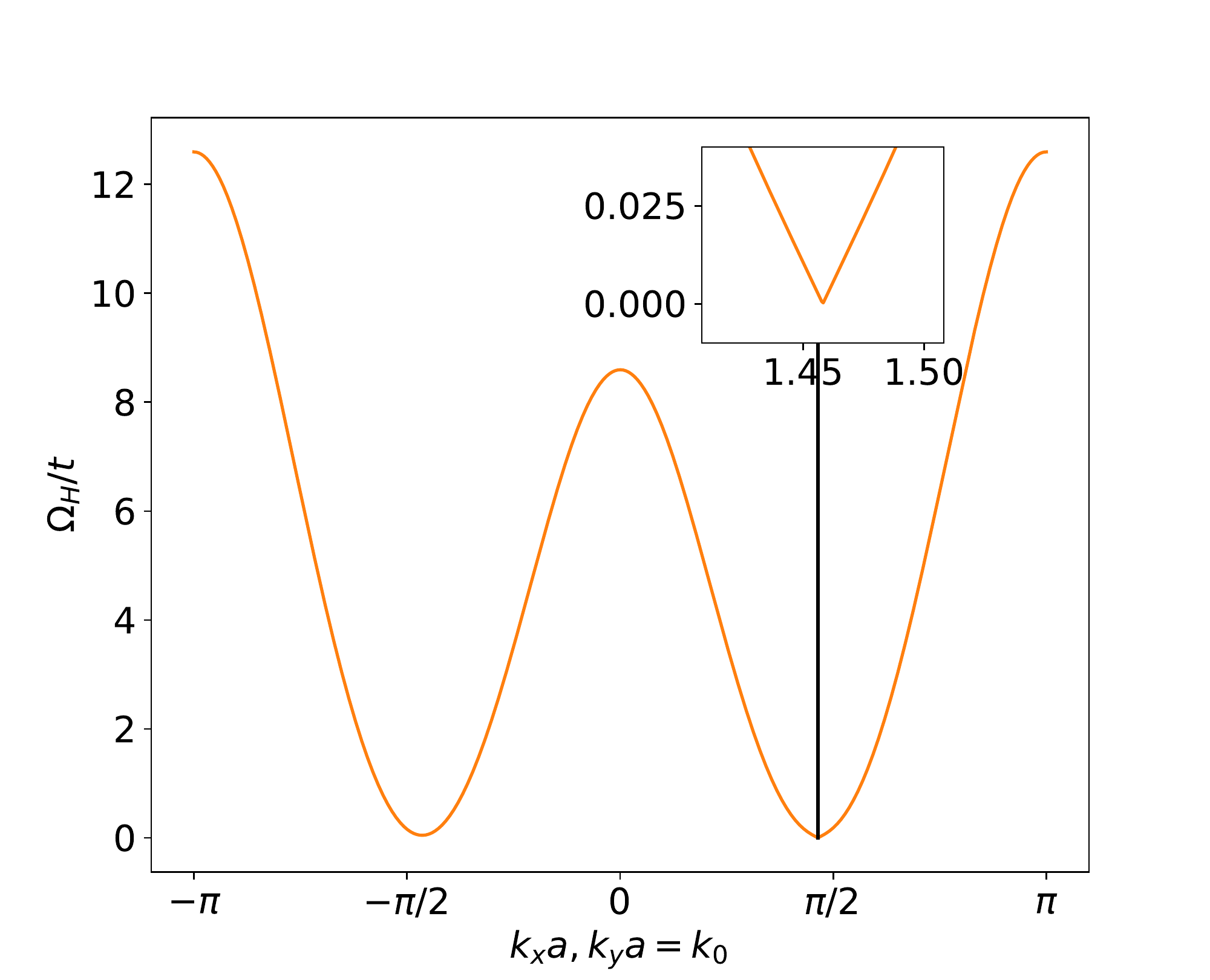}
      \caption{}
    \end{subfigure}%
    \caption{The band $\Omega_H(\boldsymbol{k})$ for  $U_s/t = 0.05$ and $\alpha = 0.9$. The black vertical line indicates the position of $\boldsymbol{k}_{01}$. A zoomed in portion is inserted, showing that the band is linear close to $\boldsymbol{k}_{01}$. In (a) $\Omega_H(\boldsymbol{k})$ is plotted along the direction $k_x=k_y$ with $\lambda_R/t = 1.0$. We observe a discontinuity at $\boldsymbol{k}= 0$ and for lower $\lambda_R/t$ a discontinuity at $2\boldsymbol{k}_{01}$ becomes visible as well. In (b) we plot along $k_x$ for $k_y = k_0 = k_{0m}$ with $\lambda_R/t = 12.5$. Though there is no Zeeman splitting here and a lower value for $U_s/t$ is used, this agrees qualitatively with figure 1 of \cite{Toniolo}. \label{fig:PWHelApp}}
\end{figure}

\subsection{Numeric Eigenvalues in Original Spin Basis}
Letting $-\boldsymbol{k} \to 2\boldsymbol{k}_{01}-\boldsymbol{k}$ in the definition of $\boldsymbol{A}_{\boldsymbol{k}}$ we get
\begin{equation}
    \boldsymbol{A}_{\boldsymbol{k}} = (A_{\boldsymbol{k}}^{\uparrow},A_{2\boldsymbol{k}_{01}-\boldsymbol{k}}^{\uparrow}, A_{\boldsymbol{k}}^{\downarrow},A_{2\boldsymbol{k}_{01}-\boldsymbol{k}}^{\downarrow},A_{\boldsymbol{k}}^{\uparrow\dagger},A_{2\boldsymbol{k}_{01}-\boldsymbol{k}}^{\uparrow\dagger}, A_{\boldsymbol{k}}^{\downarrow\dagger},A_{2\boldsymbol{k}_{01}-\boldsymbol{k}}^{\downarrow\dagger})^T.
\end{equation}
Again using commutator relations, we rewrite $H_2$, simultaneously shifting $H_0$ to 
\begin{align}
    \begin{split}
        H'_0 &= H_0 -\sum_{\boldsymbol{k}\neq \boldsymbol{k}_{01}} \left(4t+\epsilon_{\boldsymbol{k}} + U_s + G_{k_0}\right) \\
        &=  H_0 -4t\cos(k_0 a) - (N_s-1)(4t+U_s+G_{k_0}).
    \end{split}
\end{align}
Hence,
\begin{align}
    \begin{split}
         H &= H'_0 +\frac{1}{4} \sum_{\boldsymbol{k} \neq \boldsymbol{k}_{01}} \boldsymbol{A}_{\boldsymbol{k}}^{\dagger}M_{\boldsymbol{k}}\boldsymbol{A}_{\boldsymbol{k}}.
    \end{split}
\end{align}
The matrix $M_{\boldsymbol{k}}$ takes the form
\begin{gather}
    M_{\boldsymbol{k}} = 
    \begin{pmatrix}
    M_1 & M_2 \\
    M_2^* & M_1^* \\
    \end{pmatrix},
\end{gather}
with
\begin{gather*}
    M_1 = 
    \begin{pmatrix}
    2M_{11}(\boldsymbol{k}) & 0 & 2M_{13}(\boldsymbol{k}) & 0 \\
    0 & 0 & 0 & 0 \\
    2M_{13}^{*}(\boldsymbol{k}) & 0 & 2M_{11}(\boldsymbol{k}) & 0 \\
    0 & 0 & 0 & 0 \\
    \end{pmatrix}
\end{gather*}
and
\begin{gather*}
    M_2^* = 
    \begin{pmatrix}
    0 & M_{52} & 0 & M_{72} \\
    M_{52} & 0 & M_{72} & 0 \\
    0 & M_{72} & 0 & M_{74} \\
    M_{72} & 0 & M_{74} & 0 \\
    \end{pmatrix}.
\end{gather*}
The matrix elements are
\begin{align}
    \begin{split}
    \label{eq:PWelem}
        M_{11}(\boldsymbol{k}) &= \mathcal{E}_{\boldsymbol{k}}+U_s+G_{k_0}, \\
        M_{13}(\boldsymbol{k}) &= s_{\boldsymbol{k}} + U_s\alpha e^{i(\theta_{1}^{\downarrow}-\theta_{1}^{\uparrow})}, \\
        M_{52} &= U_s e^{i2\theta_{1}^{\uparrow}}, \mbox{\qquad} M_{72} = U_s\alpha e^{i(\theta_{1}^{\uparrow} + \theta_{1}^{\downarrow})}, \mbox{\qquad} M_{74} = U_s e^{i2\theta_{1}^{\downarrow}}.\\
    \end{split}
\end{align}
We want to find eigenvalues of $M_{\boldsymbol{k}}J$ using $\det(M_{\boldsymbol{k}}J-\lambda I) = 0$. On this form, it is not possible to get analytic eigenvalues using \textit{Maple}. Numerically, the eigenvalues prove to be complex in the presence of SOC and interactions. We therefore should try to rewrite $M_{\boldsymbol{k}}$ further. Using that we can always rewrite a sum by shifting the summation index, we have %But because the condensate lives at a nonzero $\boldsymbol{k}$, we can no longer use the $\boldsymbol{k} \leftrightarrow -\boldsymbol{k}$ (anti)symmetries.
\begin{equation}
    \sum_{\boldsymbol{k}\in 1\textrm{BZ}}C(\boldsymbol{k})A_{\boldsymbol{k}}^{\uparrow\dagger}A_{\boldsymbol{k}}^{\uparrow} = \sum_{2\boldsymbol{k}_{01}-\boldsymbol{k}\in 1\textrm{BZ}} C(2\boldsymbol{k}_{01}-\boldsymbol{k})A_{2\boldsymbol{k}_{01}-\boldsymbol{k}}^{\uparrow\dagger}A_{2\boldsymbol{k}_{01}-\boldsymbol{k}}^{\uparrow}.
\end{equation}
We would prefer all sums to be over $\boldsymbol{k}\in 1\textrm{BZ}$. Fortunately, we expect the system to be periodic in $\boldsymbol{k}$-space by the size of the 1BZ, i.e. $2\pi/a$. In fact, values of $\boldsymbol{k}$ differing by $2\pi n/a$ with $n$ integer in one or both components are physically equivalent according to chapter 16.2 in \cite{Pitaevskii}. The sum $\sum_{2\boldsymbol{k}_{01}-\boldsymbol{k}\in 1\textrm{BZ}}$ is in fact a sum over an area in $\boldsymbol{k}$-space of equal size to the 1BZ, only shifted by $2\boldsymbol{k}_{01}$. Periodicity suggests that the part of the sum $\sum_{\boldsymbol{k}\in 1\textrm{BZ}}$ we are missing, is the same as the part of the sum that is outside the 1BZ. Thus, we can replace $2\boldsymbol{k}_{01}-\boldsymbol{k}\in 1\textrm{BZ}$ by $\boldsymbol{k}\in 1\textrm{BZ}$ in the sum. Then we can use
% Thus, we claim that
% \begin{align}
%     \begin{split}
%     \label{eq:Shift}
%         \sum_{2\boldsymbol{k}_{01}-\boldsymbol{k}\in 1\textrm{BZ}} &C(2\boldsymbol{k}_{01}-\boldsymbol{k})A_{2\boldsymbol{k}_{01}-\boldsymbol{k}}^{\uparrow\dagger}A_{2\boldsymbol{k}_{01}-\boldsymbol{k}}^{\uparrow} = \\
%         \sum_{\boldsymbol{k}\in 1\textrm{BZ}} &C(2\boldsymbol{k}_{01}-\boldsymbol{k})A_{2\boldsymbol{k}_{01}-\boldsymbol{k}}^{\uparrow\dagger}A_{2\boldsymbol{k}_{01}-\boldsymbol{k}}^{\uparrow},
%     \end{split}
% \end{align}
% and then we can use
\begin{equation}
    \sum_{\boldsymbol{k}\neq \boldsymbol{k}_{01}}C(\boldsymbol{k})A_{\boldsymbol{k}}^{\uparrow\dagger}A_{\boldsymbol{k}}^{\uparrow} = \frac{1}{2}\sum_{\boldsymbol{k}\neq \boldsymbol{k}_{01}}\left(C(\boldsymbol{k})A_{\boldsymbol{k}}^{\uparrow\dagger}A_{\boldsymbol{k}}^{\uparrow}+ C(2\boldsymbol{k}_{01}-\boldsymbol{k})A_{2\boldsymbol{k}_{01}-\boldsymbol{k}}^{\uparrow\dagger}A_{2\boldsymbol{k}_{01}-\boldsymbol{k}}^{\uparrow}\right)
\end{equation}
to rewrite the form of $M_1$ to 
\begin{gather*}
    M_1 = 
    \begin{pmatrix}
    M_{11}(\boldsymbol{k}) & 0 & M_{13}(\boldsymbol{k}) & 0 \\
    0 & M_{11}(2\boldsymbol{k}_{01}-\boldsymbol{k}) & 0 & M_{13}(2\boldsymbol{k}_{01}-\boldsymbol{k}) \\
    M_{13}^{*}(\boldsymbol{k}) & 0 & M_{11}(\boldsymbol{k}) & 0 \\
    0 & M_{13}^{*}(2\boldsymbol{k}_{01}-\boldsymbol{k}) & 0 & M_{11}(2\boldsymbol{k}_{01}-\boldsymbol{k}) \\
    \end{pmatrix}.
\end{gather*}
We do this in all the terms in the Hamiltonian, however, it is only in the terms associated with $M_1$ that it makes a difference. \textit{Maple} is still unable to provide analytic eigenvalues, however, numerically we get four positive energies. As the eigenvalues are obtained numerically the most natural approach would be to name the eigenvalues $\Omega_{i'}(\boldsymbol{k})$ such that $\Omega_{i'} \geq \Omega_{j'}$ when $j'>i'$ for all $\boldsymbol{k}$. If so, all the eigenvalues would be inversion symmetric about $\boldsymbol{k}_{01}$. 

We however recognize two bands, where one is similar to the band $\Omega_H(\boldsymbol{k})$ we found in the preceding section and one seems like a natural generalization of the upper helicity band \eqref{eq:helicitybands}, $\lambda_{\boldsymbol{k}}^+$, that was neglected in the same section. We define these as $\Omega_1(\boldsymbol{k})$ and $\Omega_2(\boldsymbol{k})$. We are now able to recognize the other two energies as the inversions of $\Omega_1(\boldsymbol{k})$ and $\Omega_2(\boldsymbol{k})$ about $\boldsymbol{k}_{01}$. We name these bands $\Omega_1^{'}(\boldsymbol{k})$ and $\Omega_2^{'}(\boldsymbol{k})$ and note that $\Omega_1^{'}(2\boldsymbol{k}_{01}-\boldsymbol{k}) = \Omega_1(\boldsymbol{k})$ and $\Omega_2^{'}(2\boldsymbol{k}_{01}-\boldsymbol{k}) = \Omega_2(\boldsymbol{k})$. 

The argument that the operator $B_{2\boldsymbol{k}_{01}-\boldsymbol{k}, 1}^{'}$ associated with $\Omega_1^{'} (2\boldsymbol{k}_{01}-\boldsymbol{k})$ can be defined to be equal to the operator $B_{\boldsymbol{k},1}$ associated with $\Omega_1(\boldsymbol{k})$ still holds. Similarly, the operator $B_{2\boldsymbol{k}_{01}-\boldsymbol{k}, 2}^{'}$ associated with $\Omega_2^{'}(2\boldsymbol{k}_{01}-\boldsymbol{k})$ can be defined to be equal to the operator $B_{\boldsymbol{k},2}$ associated with $\Omega_2(\boldsymbol{k})$. Using this, along with $\Omega_1^{'}(2\boldsymbol{k}_{01}-\boldsymbol{k}) = \Omega_1(\boldsymbol{k})$ and $\Omega_2^{'}(2\boldsymbol{k}_{01}-\boldsymbol{k}) = \Omega_2(\boldsymbol{k})$ we obtain
\begin{align}
    \begin{split}
        H_2 = \frac{1}{4}\sum_{\boldsymbol{k} \neq \boldsymbol{k}_{01}}& \boldsymbol{B}_{\boldsymbol{k}}^{\dagger}D\boldsymbol{B}_{\boldsymbol{k}} \\
        =\frac{1}{2} \sum_{\boldsymbol{k} \neq \boldsymbol{k}_{01}}& \Bigg( \Omega_1(\boldsymbol{k})\left(B_{\boldsymbol{k}, 1}^{\dagger}B_{\boldsymbol{k}, 1} +\frac12 \right) + \Omega_2(\boldsymbol{k})\left(B_{\boldsymbol{k}, 2}^{\dagger}B_{\boldsymbol{k}, 2} +\frac12 \right)\\
        & + \Omega_1^{'}(\boldsymbol{k})\left(B_{\boldsymbol{k}, 1}^{'\dagger}B_{\boldsymbol{k}, 1}^{'} +\frac12 \right) + \Omega_2^{'}(\boldsymbol{k})\left(B_{\boldsymbol{k}, 2}^{'\dagger}B_{\boldsymbol{k}, 2}^{'} +\frac12 \right)\Bigg) \\
        =\frac{1}{2} \sum_{\boldsymbol{k} \neq \boldsymbol{k}_{01}}& \Bigg( \Omega_1(\boldsymbol{k})\left(B_{\boldsymbol{k}, 1}^{\dagger}B_{\boldsymbol{k}, 1} +\frac12 \right) + \Omega_2(\boldsymbol{k})\left(B_{\boldsymbol{k}, 2}^{\dagger}B_{\boldsymbol{k}, 2} +\frac12 \right)\\
        & + \Omega_1^{'}(2\boldsymbol{k}_{01}-\boldsymbol{k})\left(B_{2\boldsymbol{k}_{01}-\boldsymbol{k}, 1}^{'\dagger}B_{2\boldsymbol{k}_{01}-\boldsymbol{k}, 1}^{'} +\frac12 \right)\\
        & + \Omega_2^{'}(2\boldsymbol{k}_{01}-\boldsymbol{k})\left(B_{2\boldsymbol{k}_{01}-\boldsymbol{k}, 2}^{'\dagger}B_{2\boldsymbol{k}_{01}-\boldsymbol{k}, 2}^{'} +\frac12 \right)\Bigg) \\
        =\sum_{\boldsymbol{k} \neq \boldsymbol{k}_{01}}& \sum_{\sigma=1}^2 \Omega_\sigma(\boldsymbol{k})\left(B_{\boldsymbol{k}, \sigma}^{\dagger}B_{\boldsymbol{k}, \sigma} +\frac12 \right).
    \end{split}
\end{align}

% numerically we get a 4-branch spectrum which is real for certain values of the parameters. Possible exception are small imaginary parts close to $\boldsymbol{k}_{01}$, which will be discussed in section \ref{sec:PWexcitation}. It can be shown that within numerical accuracy, the 4 eigenvalues are symmetric about the point $\boldsymbol{k}_{01}$. This means $\Omega_i(\boldsymbol{k}_{01}+\boldsymbol{k}) = \Omega_i(\boldsymbol{k}_{01}-\boldsymbol{k})$, or, letting $\boldsymbol{k}\to \boldsymbol{k}-\boldsymbol{k}_{01}$, that $\Omega_i(\boldsymbol{k}) = \Omega_i(2\boldsymbol{k}_{01}-\boldsymbol{k})$. 

We will think of, and present $\Omega_1(\boldsymbol{k})$ and $\Omega_2(\boldsymbol{k})$ as the two energy bands in the PW phase. For the purpose of numerics, it is however easier to use the four bands $\Omega_{\sigma'}(\boldsymbol{k})$, $\sigma' = 1',2',3',4'$, that are inversion symmetric about $\boldsymbol{k} = \boldsymbol{k}_{01}$. For $H_2$ we arrive at
\begin{equation}
    H_2 = \frac{1}{2} \sum_{\boldsymbol{k} \neq \boldsymbol{k}_{01}}\sum_{\sigma'=1'}^{4'} \Omega_{\sigma'}(\boldsymbol{k})\left(B_{\boldsymbol{k}, \sigma'}^{\dagger}B_{\boldsymbol{k}, \sigma'} +\frac12 \right),
\end{equation}
where we order the numerical eigenvalues such that $\Omega_{1'}\geq\Omega_{2'}\geq\Omega_{3'}\geq\Omega_{4'}$ for all $\boldsymbol{k}$. The results should be equal in the two approaches, the latter is used solely because it provides simpler numerical calculation of the free energy.

\subsection{Free Energy} \label{sec:PWF}
We have
\begin{align}
    \begin{split}
        H = &H'_0 + \frac{1}{2} \sum_{\boldsymbol{k} \neq \boldsymbol{k}_{01}}\sum_{\sigma'=1'}^{4'} \Omega_{\sigma'}(\boldsymbol{k})\left(B_{\boldsymbol{k}, \sigma'}^{\dagger}B_{\boldsymbol{k}, \sigma'} +\frac12 \right).
    \end{split}
\end{align}
At zero temperature, the free energy is equal to $\langle H \rangle$, and using \eqref{eq:F}
\begin{align}
    \begin{split}
    \label{eq:FPW}
        F_\textrm{PW} = &N(\epsilon_{\boldsymbol{k}_{01}}+T)+N\abs{s_{\boldsymbol{k}_{01}}}\cos(\gamma_{\boldsymbol{k}_{01}}+\Delta\theta_1) + \frac{UN^2}{4N_s}(1+\alpha) \\
        &-4t\cos(k_0 a) - (N_s-1)(4t+U_s+G_{k_0}) + \frac{1}{4}\sum_{\boldsymbol{k} \neq \boldsymbol{k}_{01}}\sum_{\sigma'=1'}^{4'} \Omega_{\sigma'}(\boldsymbol{k}).
    \end{split}
\end{align}

Before we start minimizing $F_\textrm{PW}$ we need to discuss how the excitation spectrum behaves at different parameters. As will be explained in chapter \ref{sec:PWexcitation} we need to keep $\alpha<1$ because for $\alpha\geq 1$ there are secondary minima in the excitation spectrum. If $k_0$ deviates too much from the value $k_{0m}$ found to minimize $H_{0}^\textrm{PW}$ in chapter \ref{sec:PWphaseH0} the excitation spectrum becomes complex.  In order to diagonalize the Hamiltonian the eigenvalues need to be real. It therefore only makes sense to investigate $F_\textrm{PW}$ for the values of $k_0$ such that all $\Omega_{\sigma'}(\boldsymbol{k})$ are real. The minimum of $F_\textrm{PW}$ within this set of $k_0$ values will then be used to estimate the value of $k_0$ that minimizes $F_\textrm{PW}$, which we name $k_{0\textrm{min}}$.  

Also, we fix $U \ll t$ and to be specific we set $U/t=1/10$. We also focus on $N/N_s = 1$, such that $U_s = 0.05$. The values of $U_s$, $\alpha$ and $\lambda_R$ are kept fixed as $k_0$ is varied. The dependence on $T$ is inconsequential, and therefore the energy offset is set to zero. We assume \eqref{eq:gammathetapi} holds while minimizing with respect to $k_0$.

Plots of $F_\textrm{PW}$ as a function of $k_0$ are given in figure \ref{fig:PWFs}. We see that the value of $k_0$ that minimizes $F_\textrm{PW}$ is slightly smaller than $k_{0m}$ for a lattice size of $9\cdot 10^{4}$. However, we also find that $k_{0\textrm{min}}$ approaches $k_{0m}$ from below as the lattice size is increased. It is found that $k_{0\textrm{min}} a \approx 0.3396$ minimizes $F_\textrm{PW}$ while $k_{0m} a \approx 0.3398$ and the relative difference in $k_0$ is of order $\order{10^{-4}}$. The corresponding relative reduction of $F_\textrm{PW}$ is of order $\order{10^{-8}}$. This will not alter the plots of the excitation spectrum in a visible way. We also expect these differences will approach zero as the lattice size becomes larger. We therefore state that $k_{0\textrm{min}} = k_{0m}$ and use $k_0 = k_{0m}$ when producing the figures in chapter \ref{sec:PWexcitation}. Apart from being a finite size effect, there is another reason we may neglect the small difference between $k_{0\textrm{min}}$ and $k_{0m}$. We have been thinking of $k_0$ as continuous, but in fact it is not. The point $\boldsymbol{k}_{01}$ has to be a lattice site, and so $k_0$ is discretized. Setting $k_0 = k_{0m}$ one has to take into account that it may be shifted a small amount to coincide with a lattice site. Hence, when the difference between $k_{0\textrm{min}}$ and $k_{0m}$ is small, they most likely correspond to the same lattice site.

The remaining free parameters are $\theta_{1}^\uparrow$ and $\theta_{1}^\downarrow$. By varying $\theta_1^\uparrow$ with $\theta_1^\downarrow$ fixed, it is found that $F_{\textrm{PW}}$ is minimized by $\theta_{1}^\downarrow-\theta_{1}^\uparrow = \pi/4$ which agrees with \eqref{eq:gammathetapi}. Once $\theta_{1}^\downarrow = \theta_{1}^\uparrow+\pi/4$ is set, it turns out the fluctuations in $F_\textrm{PW}(\theta_{1}^\uparrow)$ are negligible (of order $\order{10^{-13}}$ compared to the value of $F_{\textrm{PW}}$). We conclude that $F_\textrm{PW}$ is independent of $\theta_{1}^\uparrow$ meaning the two phase factors are constrained by $\theta_{1}^\downarrow - \theta_{1}^\uparrow = \pi/4$ only.

\begin{figure}
    \centering
    \begin{subfigure}{.49\textwidth}
      \includegraphics[width=\linewidth]{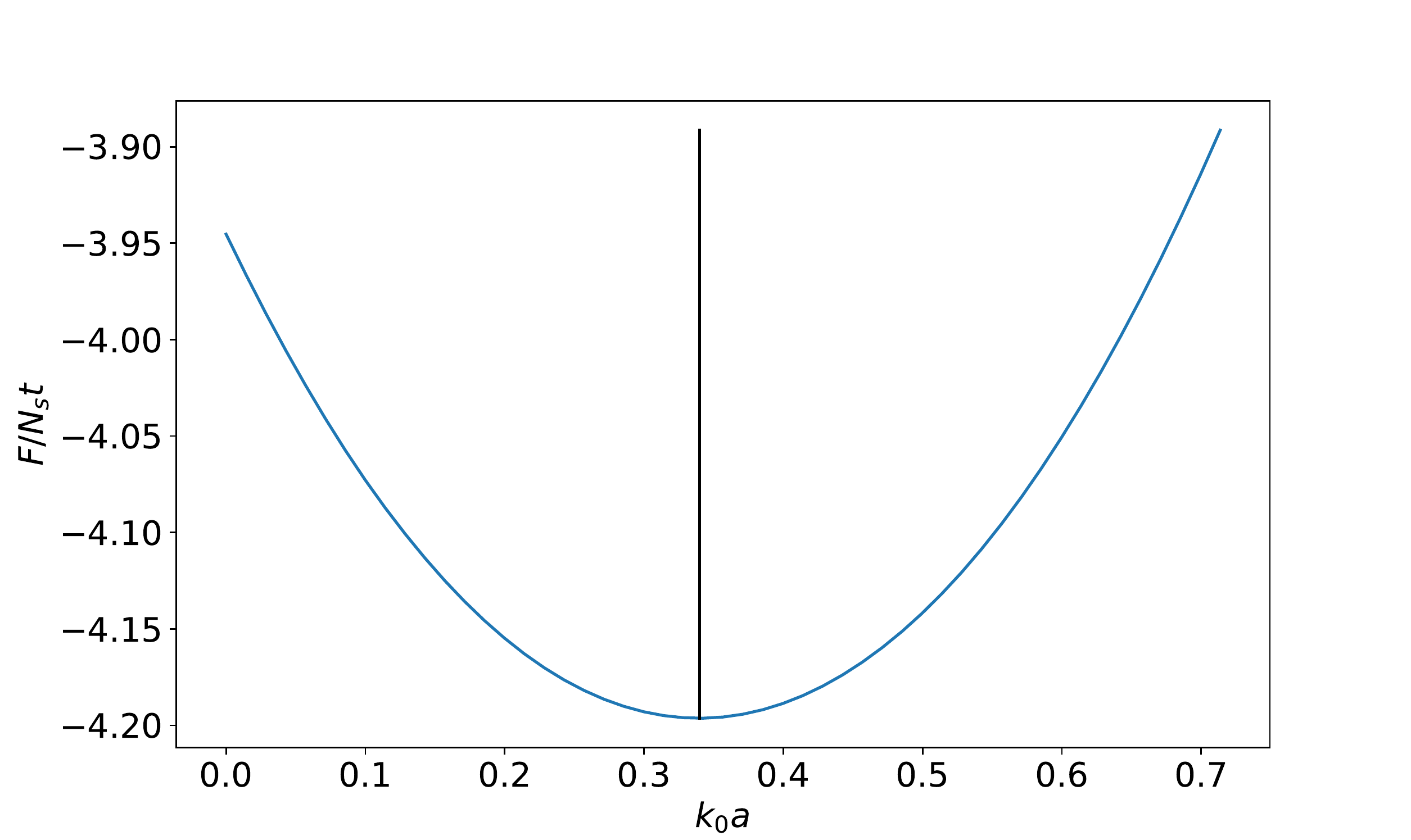}
      \caption{ }
      \label{fig:PWFzoom}
    \end{subfigure}%
    \begin{subfigure}{.49\textwidth}
      \includegraphics[width=0.9\linewidth]{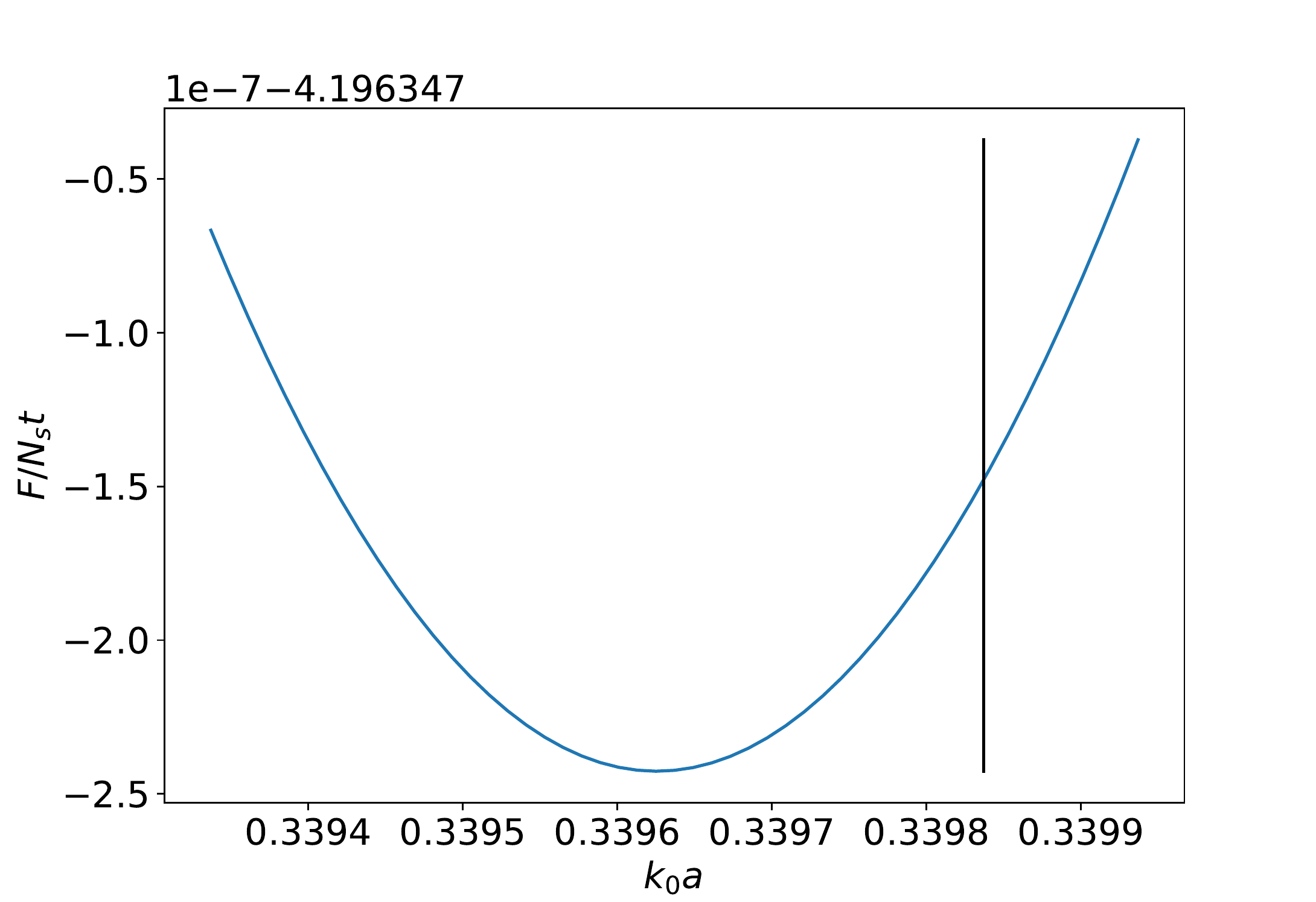}
      \caption{ }
      \label{fig:PWFzoom2}
    \end{subfigure}
    \caption{Shows $F_{\textrm{PW}}$ as a function of $k_0$ in (a) and zoomed into the minimum in (b). The black vertical lines indicate the position of $k_0 = k_{0m}$. For both figures $T=0$, $U_s/t = 0.05$, $\alpha = 0.9$ and $\lambda_R/t = 0.5$ were fixed, and 51 values of $k_0$ were considered. A lattice size of $9\cdot 10^{4}$ was used in (b). \label{fig:PWFs}}
\end{figure}

\subsection{Excitation Spectrum} \label{sec:PWexcitation}

\begin{figure}
    \centering
    \includegraphics[width=0.7\linewidth]{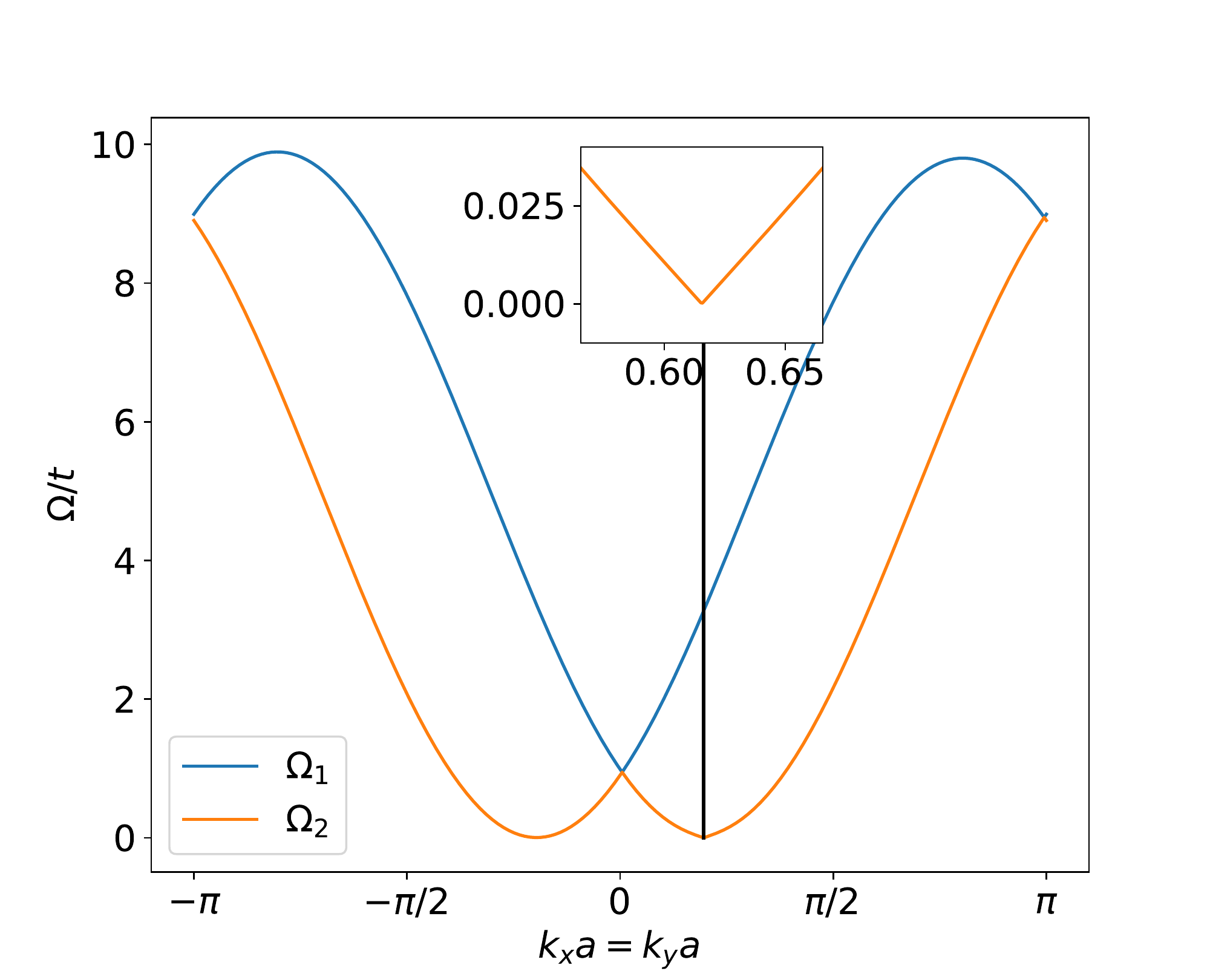}
    \caption{Shows the bands $\Omega_{\sigma}(\boldsymbol{k})$ along the $k_x=k_y$ direction for $U_s/t = 0.05$, $\alpha = 0.9$ and $\lambda_R/t = 1.0$. The black vertical line shows the position of $k_x=k_y=k_{0m}$. Inserted is a closer look at the behavior close to the minimum at $\boldsymbol{k}_{01}$ indicating linear dispersion.}
    \label{fig:PWbandl1}
\end{figure}

The bands $\Omega_1(\boldsymbol{k})$ and $\Omega_2(\boldsymbol{k})$ are shown in figure \ref{fig:PWbandl1} along the direction $k_x=k_y$. The analytic energy $\Omega_H(\boldsymbol{k})$ is shown in the 1BZ in figure \ref{fig:PWeigenl1H}. The energy $\Omega_2(\boldsymbol{k})$ should look very similar. It has been checked that the energy spectrum has its global minimum at $\boldsymbol{k}=\boldsymbol{k}_{01}$, consistent with our initial assumption that the Bose gas condenses into a state with this $\boldsymbol{k}$. In addition, gapped roton minima appear close to $-\boldsymbol{k}_{01}$ and $\pm\boldsymbol{k}_{02}$, as was reported in \cite{Toniolo}. Using a greater resolution it appears the lower bands might be complex close to $\boldsymbol{k}_{01}$. The eigenvalues at $\boldsymbol{k} = \boldsymbol{k}_{01}$ can be obtained analytically, and are in fact real. Hence, any imaginary parts there are numerical errors. However, we are excluding $\boldsymbol{k}_{01}$ from $H_2$, so what really matters is if $\Omega_\sigma(\boldsymbol{k}\neq\boldsymbol{k}_{01})$ are complex.

\begin{figure}
    \centering
    \includegraphics[width=0.9\linewidth]{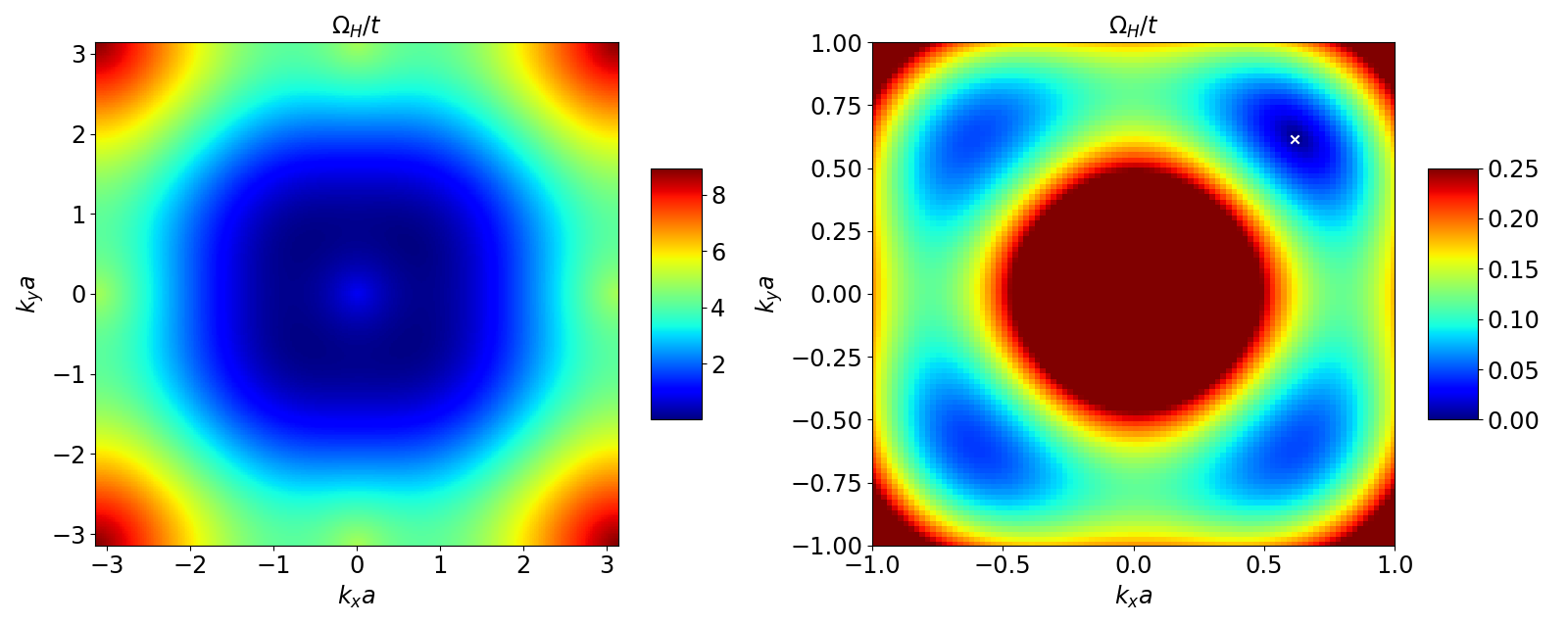}
    \caption{$\Omega_H(\boldsymbol{k})$ for $U_s/t = 0.05$, $\alpha = 0.1$ and $\lambda_R/t = 1.0$. For both $k_x$ and $k_y$ 301 points are considered. In the right figure we focus on the behavior at the phonon minimum at $\boldsymbol{k}_{01}$ and the gapped roton minima at $-\boldsymbol{k}_{01}$ and $\pm\boldsymbol{k}_{02}$. The white cross shows the position of $k_x=k_y=k_{0m}$.}
    \label{fig:PWeigenl1H}
\end{figure}

% The 4 positive eigenvalues of $M_{\boldsymbol{k}}J$, $\Omega_\sigma(\boldsymbol{k})$, are shown in figure \ref{fig:PWbandl1} along the direction $k_x=k_y$. The 4 energies are shown in the first Brillouin zone in figures \ref{fig:PWeigenl1top} and \ref{fig:PWeigenl1}. It has been checked that the energy spectrum has its global minimum at $\boldsymbol{k}=\boldsymbol{k}_{01}$, consistent with our initial assumption that the Bose gas condenses into a state with this $\boldsymbol{k}$. Using a greater resolution it appears the lower bands might be complex close to $\boldsymbol{k}_{01}$. However, we are excluding $\boldsymbol{k}_{01}$ from $H_2$, so what really matters is if $\omega_{k_0,\pm}$ are complex and if $\Omega_\sigma(\boldsymbol{k}\neq\boldsymbol{k}_{01})$ are complex.
% \begin{figure}
%     \centering
%     \includegraphics[width=0.7\linewidth]{Figures/PWeigenhighus05al09l05k0m.png}
%     \caption{Shows real and imaginary parts of $\Omega_{1}$ and $\Omega_2$ for $U_s/t = 0.05$, $\alpha = 0.9$ and $\lambda_R/t = 0.5$. For both $k_x$ and $k_y$ 101 points are considered. The value $k_0 = k_{0m}$ was used.}
%     \label{fig:PWeigenl1top}
% \end{figure}

% \begin{figure}
%     \centering
%     \includegraphics[width=0.7\linewidth]{Figures/PWeigenlowus05al09l05k0m.png}
%     \caption{Shows real and imaginary parts of $\Omega_{3}$ and $\Omega_4$ for $U_s/t = 0.05$, $\alpha = 0.9$ and $\lambda_R/t = 0.5$. The white crosses show the position of $k_x=k_y=k_{0m}$. For both $k_x$ and $k_y$ 101 points are considered.}
%     \label{fig:PWeigenl1}
% \end{figure}

\begin{figure}
    \centering
    \includegraphics[width=0.8\linewidth]{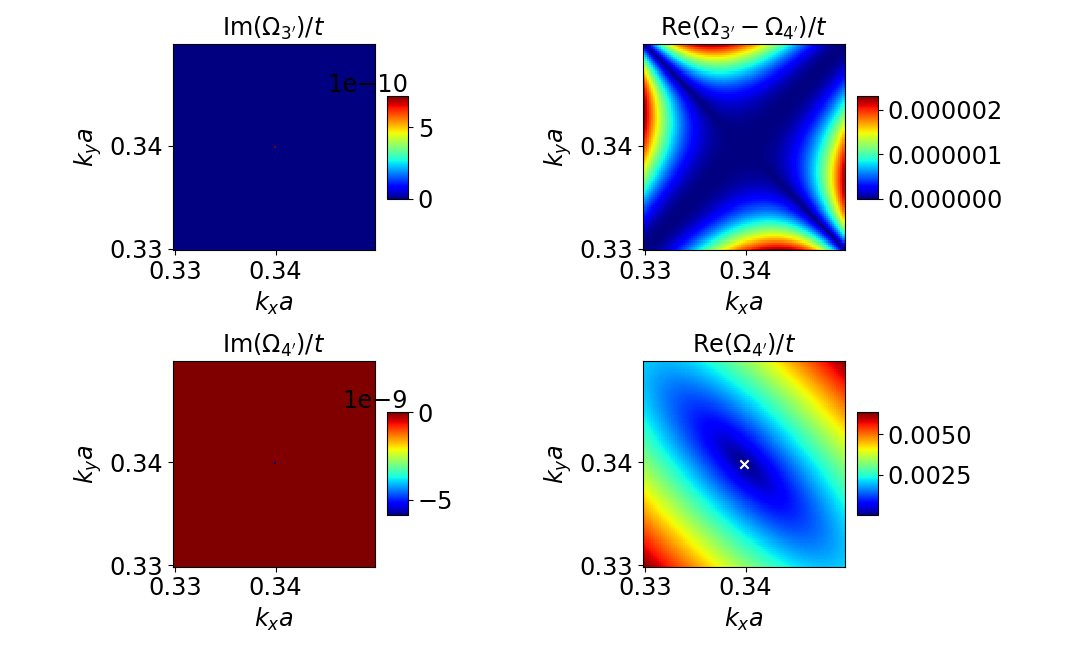}
    \caption{Shows real and imaginary parts of $\Omega_{3'}(\boldsymbol{k})$ and $\Omega_{4'}(\boldsymbol{k})$ in a small area around $\boldsymbol{k}_{01}$ for $U_s/t = 0.05$, $\alpha = 0.9$ and $\lambda_R/t = 0.5$. The white cross shows the position of $k_x=k_y=k_{0m}$. For both $k_x$ and $k_y$ 101 points are considered and this greater resolution reveals the possibility of complex eigenvalues close to $\boldsymbol{k}_{01}$.}
    \label{fig:PWeigenl1zoom}
\end{figure}

\begin{figure}
    \centering
    \includegraphics[width=\linewidth]{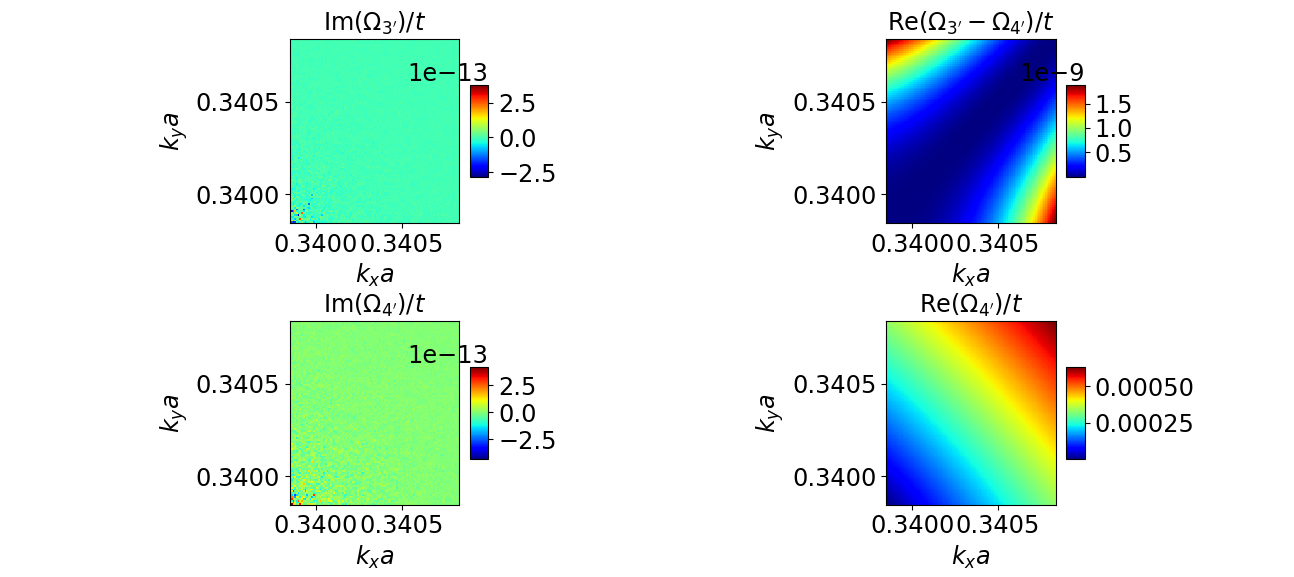}
    \caption{Shows real and imaginary parts of $\Omega_{3'}(\boldsymbol{k})$ and $\Omega_{4'}(\boldsymbol{k})$ in a small area close to $\boldsymbol{k}_{01}$ for $U_s/t = 0.05$, $\alpha = 0.9$ and $\lambda_R/t = 0.5$. The lower left edge is shifted $10^{-5}(1,1)$ from $\boldsymbol{k}_{01} a$. For both $k_x$ and $k_y$ 101 points are considered.}
    \label{fig:PWeigenl1close}
\end{figure}

Figure \ref{fig:PWeigenl1zoom} shows the numeric energies $\Omega_{3'}(\boldsymbol{k})$ and $\Omega_{4'}(\boldsymbol{k})$ for a small area around $\boldsymbol{k}_{01}$. In such a small area around $\boldsymbol{k}_{01}$ we find that $\Omega_{3'}(\boldsymbol{k}) \approx \Omega_{4'}(\boldsymbol{k})$ and they are therefore both approximately equal to $\Omega_2(\boldsymbol{k})$. The reason for this is that $\Omega_2(\boldsymbol{k})$ is always equal to either $\Omega_{3'}(\boldsymbol{k})$ or $\Omega_{4'}(\boldsymbol{k})$ depending on the value of $\boldsymbol{k}$. It appears the complex eigenvalues are contained in an area very close to $\boldsymbol{k}_{01}$. 

As the imaginary parts at $\boldsymbol{k} = \boldsymbol{k}_{01}$ can be shown to be numerical errors, it seems like similar numerical errors appear very close to $\boldsymbol{k} = \boldsymbol{k}_{01}$ as well.  To gain further insight into the problem, figure \ref{fig:PWeigenl1close} gives a version of figure \ref{fig:PWeigenl1zoom} for a small area close to but not including $\boldsymbol{k}_{01}$. Even at a distance as small as $10^{-5}/a$ from $\boldsymbol{k}_{01}$ the imaginary parts have already dropped several orders of magnitude, and quickly continues to drop as we move away from $\boldsymbol{k}_{01}$. We suggest treating these small imaginary parts as numerical errors, and if such an interpretation is valid, we conclude the PW phase is stable, at least for $\alpha < 1$.

The roton minimum at $-\boldsymbol{k}_{01}$ is shown in figure \ref{fig:PWbandBecomes} for $\alpha<1$ and $\alpha>1$. 
Regarding the stability of the PW phase, we investigate what happens to the roton minimum at $-\boldsymbol{k}_{01}$ as we increase $\alpha$. It becomes less gapped, and at $\alpha = 1$ it is found to be gapless. Also interesting, is what happens as we pass $\alpha = 1$. The lowest band develops secondary minima. The minimum at $\boldsymbol{k}_{01}$ is still $0$ and thus one among many global minima. However, for $\alpha>1$ this is not the true spectrum. If we investigate the BV norms of the eigenvectors for values of $\boldsymbol{k}$ between the secondary minima, we find that the lowest eigenvalue will enter the digonalized Hamiltonian with a negative sign in these areas. Thus, one has to imagine the lower band being mirrored around zero energy in the areas between the secondary minima. 

In the end we find a global minimum of the spectrum that is not the phonon minimum at $\boldsymbol{k}_{01}$ but the roton minimum at $-\boldsymbol{k}_{01}$. Such a thing happens for any $\lambda_R/t>0$ as $\alpha$ becomes larger than 1. When $\boldsymbol{k}_{01}$ is no longer the global minimum of the excitation spectrum, we have a violation of the initial assumption that the system condenses into a state with $\boldsymbol{k} = \boldsymbol{k}_{01}$, suggesting the PW phase becomes unstable. This is an energetic instability according to chapter 14.3 in \cite{PethickSmith} as opposed to dynamic instabilities which are connected to complex eigenvalues of $M_{\boldsymbol{k}}J$. The term energetic instability refers to the fact that there is a lower energy state available. We also interpret the gapless roton minimum at $\alpha=1$ as an indication of energetic instability.

\begin{figure}
    \centering
    \includegraphics[width=0.6\linewidth]{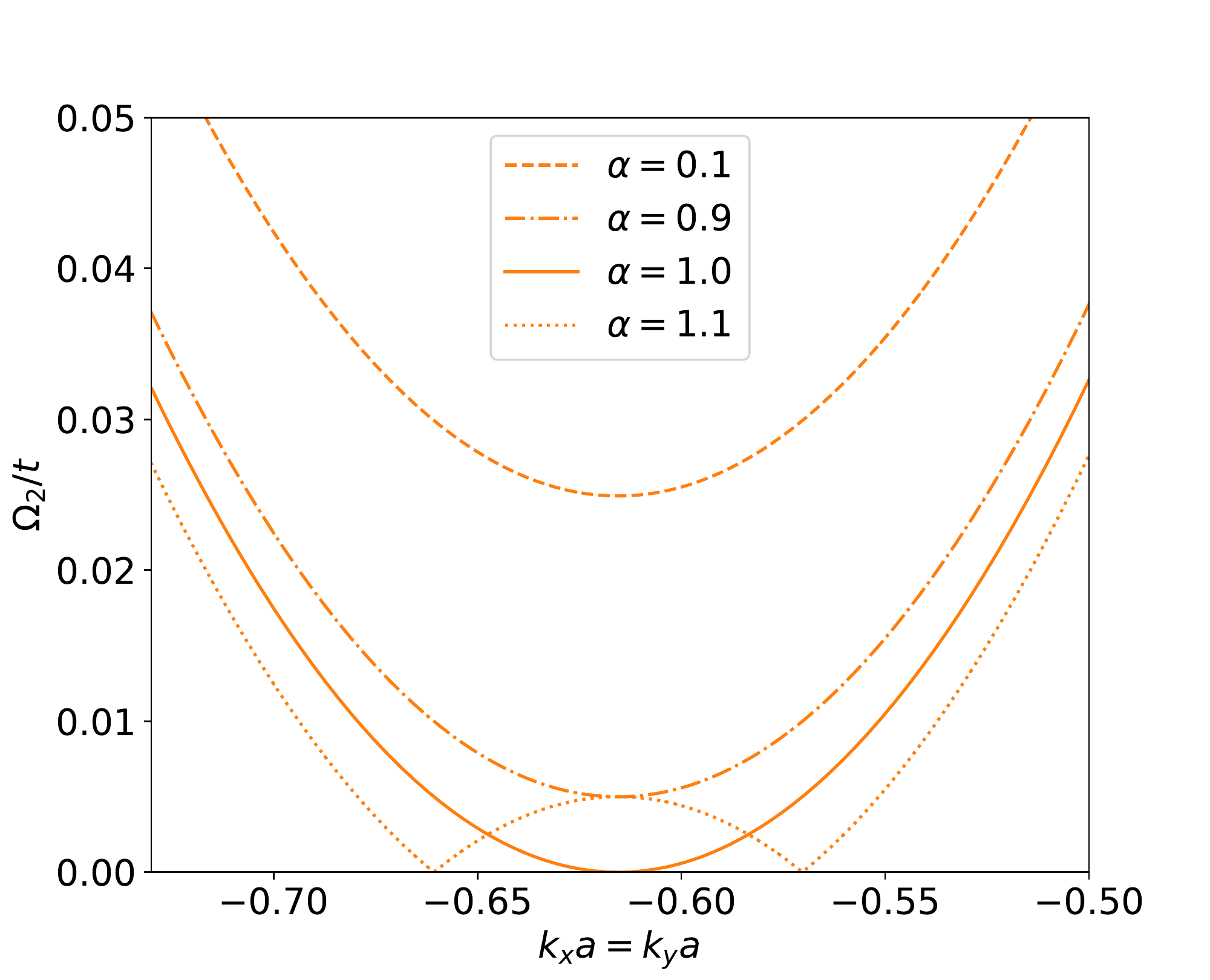}
    \caption{Shows the band $\Omega_2(\boldsymbol{k})$ along the $k_x=k_y$ direction for $U_s/t = 0.05$ and $\lambda_R/t = 1.0$, focusing on the behavior of the roton minimum at $-\boldsymbol{k}_{01}$ for several $\alpha$. Notice that it becomes ungapped for $\alpha=1.0$. The result for $\alpha > 1$ is discussed in the text. \label{fig:PWbandBecomes}}
\end{figure}

\subsection{Critical Superfluid Velocity}
In addition to having a minimal value of $0$, $\Omega_{2}(\boldsymbol{k})$ and $\Omega_H(\boldsymbol{k})$ are linear close to $\boldsymbol{k}_{01}$, suggesting we have a nonzero critical superfluid velocity. It also seems from figure \ref{fig:PWeigenl1zoom} that the critical superfluid velocity will depend on the direction. Such an anisotropic critical superfluid velocity was also found in \cite{Toniolo}. 

Using the analytic band $\Omega_H(\boldsymbol{k})$ in \eqref{eq:PWOMH} we can calculate an analytic expression for the critical superfluid velocity in e.g. the $k_x$-direction. Such a calculation was performed in \cite{Toniolo}, and using equation (12) in \cite{Toniolo} applied to $\Omega_H(\boldsymbol{k})$, we find
\begin{equation}
    v_x = a\sqrt{U_s(1+\alpha)}\sqrt{2t\cos(k_0 a)-2\lambda_R\frac{3\cos^2(k_0 a)-2}{2\sqrt{2}\sin(k_0 a)}}.
\end{equation}
We insert $k_0 a = k_{0m} a = \arctan(\lambda_R/\sqrt{2}t)$ and use
\begin{equation}
    \cos(\arctan(x)) = \frac{1}{\sqrt{1+x^2}} \mbox{\qquad and \qquad} \sin(\arctan(x)) = \frac{x}{\sqrt{1+x^2}}
\end{equation}
to obtain
\begin{equation}
\label{eq:PWanalyticvx}
    v_x = a\sqrt{U_s(1+\alpha)\frac{t^2+\lambda_R^2}{\sqrt{t^2+\lambda_R^2/2}}}.
\end{equation}
We see the superfluid velocity increases with increasing $\alpha$ and with increasing $\lambda_R$. Setting the Zeeman term to zero, this is the same result given in \cite{Toniolo}. Alternatively we can derive $v_x$ by setting $k_x = k_0 + q$, $k_y = k_0$ and expanding $\Omega_H(\boldsymbol{k})$ for small $q$ based on \eqref{eq:vc}. Finally a division by $q$ will yield the critical superfluid velocity, and the result is the same as above.

We can also find an analytic expression for the dependence on the angle $\phi$ made with the $k_x$-axis, which we will name $v_c^{\textrm{an}}(\phi)$. We set $k_x = k_0+q\cos(\phi)$ and $k_y = k_0 + q\sin(\phi)$. Expanding for small $q$, we find
% \begin{align}
%     \begin{split}
%         \mathcal{E}_{\boldsymbol{k}} &\approx 4t -4t\cos(k_0 a) +2t\sin(k_0 a)\big(\cos(\phi)+\sin(\phi)\big) qa + t\cos(k_0 a) (qa)^2, \\
%         \abs{s_{\boldsymbol{k}}} &\approx 2\sqrt{2}\lambda_R \sin(k_0 a) + \sqrt{2}\lambda_R \cos(k_0 a)\big(\cos(\phi)+\sin(\phi)\big) qa\\
%         &+ \lambda_R\frac{\cos^2(k_0 a)\big(3-\sin(2\phi)\big)-2}{2\sqrt{2}\sin(k_0 a)} (qa)^2, \\
%         \gamma_{\boldsymbol{k}} & \approx \frac{3\pi}{4} -\frac{\cos(\phi)-\sin(\phi)}{2\tan(k_0 a)}qa + \frac{\cos(2\phi)}{4\sin^2(k_0 a)}(qa)^2, 
%     \end{split}
% \end{align}
% and so
% \begin{align}
%     \begin{split}
%         \big(N_{11}(\boldsymbol{k})\big)^2 &\approx U_s^2(1+\alpha)^2 + 2U_s(1+\alpha)(qa)^2\\
%         & \cdot \left(t\cos(k_0 a) - \lambda_R \frac{\cos^2(k_0 a)\big(3-\sin(2\phi)\big)-2}{2\sqrt{2}\sin(k_0 a)} -U_s\alpha\frac{1-\sin(2\phi)}{8\tan^2(k_0 a)}\right), \\
%         \abs{N_{32}(\boldsymbol{k})}^2 &\approx U_s^2(1+\alpha)^2 - 2U_s(1+\alpha)(qa)^2U_s\alpha\frac{1-\sin(2\phi)}{8\tan^2(k_0 a)}, \\
%         \Omega_H(k_0 +q\cos(\phi)&, k_0+ q\sin(\phi)) \approx qa\sqrt{U_s(1+\alpha)}\\
%         &\cdot\sqrt{2t\cos(k_0 a)-2\lambda_R\frac{\cos^2(k_0 a)(3-\sin(2\phi))-2}{2\sqrt{2}\sin(k_0 a)}} .
%     \end{split}
% \end{align}
\begin{align}
    \begin{split}
    \Omega_H(k_0 +q\cos(\phi)&, k_0+ q\sin(\phi)) \approx qa\sqrt{U_s(1+\alpha)}\\
    &\cdot\sqrt{2t\cos(k_0 a)-2\lambda_R\frac{\cos^2(k_0 a)(3-\sin(2\phi))-2}{2\sqrt{2}\sin(k_0 a)}} .
    \end{split}
\end{align}
Inserting $k_0 = k_{0m}$ yields
\begin{equation}
\label{eq:PWvcan}
    v_c^{\textrm{an}}(\phi) = a \sqrt{U_s(1+\alpha)\frac{t^2(1+\sin(2\phi))+\lambda_R^2}{\sqrt{t^2+\lambda_R^2/2}}}.
\end{equation}
As we can see, the expression is $\pi$-periodic as expected since $\Omega_H(\boldsymbol{k})$, $\Omega_{2}(\boldsymbol{k})$ and $\Omega_{4'}(\boldsymbol{k})$ appear to be inversion symmetric about $\boldsymbol{k}_{01}$ close to $\boldsymbol{k}_{01}$. The maximum value occurs for $\phi = \pi/4$ and the minimum value at $\phi = 3\pi/4$, which fits well with figure \ref{fig:PWeigenl1zoom}. In other words, the direction in which the critical superfluid velocity is largest, is parallel to $\boldsymbol{k}_{01}$, while the direction in which it is smallest is normal to $\boldsymbol{k}_{01}$. For $\phi = 0$ it is the same as $v_x$ in \eqref{eq:PWanalyticvx}. Interestingly, the critical superfluid velocity does not become the isotropic value found in the NZ phase, $\sqrt{2U_s t a^2 (1 - \alpha)}$, if the SOC is set to zero. This is an example of the fact that introducing SOC to the system is a highly nontrivial perturbation.

As we are working in natural units, where $\hbar = 1$, we find that energy times length has the same dimension as velocity. We therefore measure $v_c$ in units of $ta$ when we are plotting numeric results. Consulting the critical superfluid velocity in \eqref{eq:PWvcan}, we see that $v_c/ta$ is a natural choice when we are measuring all energies in units of $t$. The analytic critical superfluid velocity is shown in figure \ref{fig:PWsuperal} as a function of $\alpha$ for various $\lambda_R$. In figure \ref{fig:PWsuperlamnew} we plot it as a function of $\lambda_R$ for several values of $\alpha$. Both figures show that $v_c^{\textrm{an}}$ increases with increasing $\lambda_R$ and with increasing $\alpha$. Finally, figure \ref{fig:PWsuperphi} shows the critical superfluid velocity as a function of the angle made with the $k_x$-axis.

% Consulting the critical superfluid velocity in the NZ phase, given in \eqref{eq:NZvc}, we see that $v_c/ta$ is a natural choice when we are measuring all energies in units of $t$. Thus, we could write the superfluid velocity in the NZ phase as
% \begin{equation*}
%     \frac{v_c^{\textrm{NZ}}}{ta} = \sqrt{\frac{2U_s}{t}(1 - \alpha)},
% \end{equation*}
% which is clearly dimensionless. If we set $U_s/t = 0.05$ and $\alpha = 0.9$ we get $v_c^{\textrm{NZ}}/ta \approx 0.1$ as an indication of what order of magnitude we expect the critical superfluid velocity to be. For the PW phase, the critical superfluid velocity is shown in figure \ref{fig:PWsuperal} as a function of $\alpha$ for varying $\lambda_R$. In figure \ref{fig:PWsuperlamnew} we plot it as a function of $\lambda_R$ for varying values of $\alpha$. Both figures suggest $v_c^{\textrm{PW}}$ increases with increasing $\lambda_R$ and with increasing $\alpha$. 

\begin{figure}
    \centering
    \includegraphics[width=0.7\linewidth]{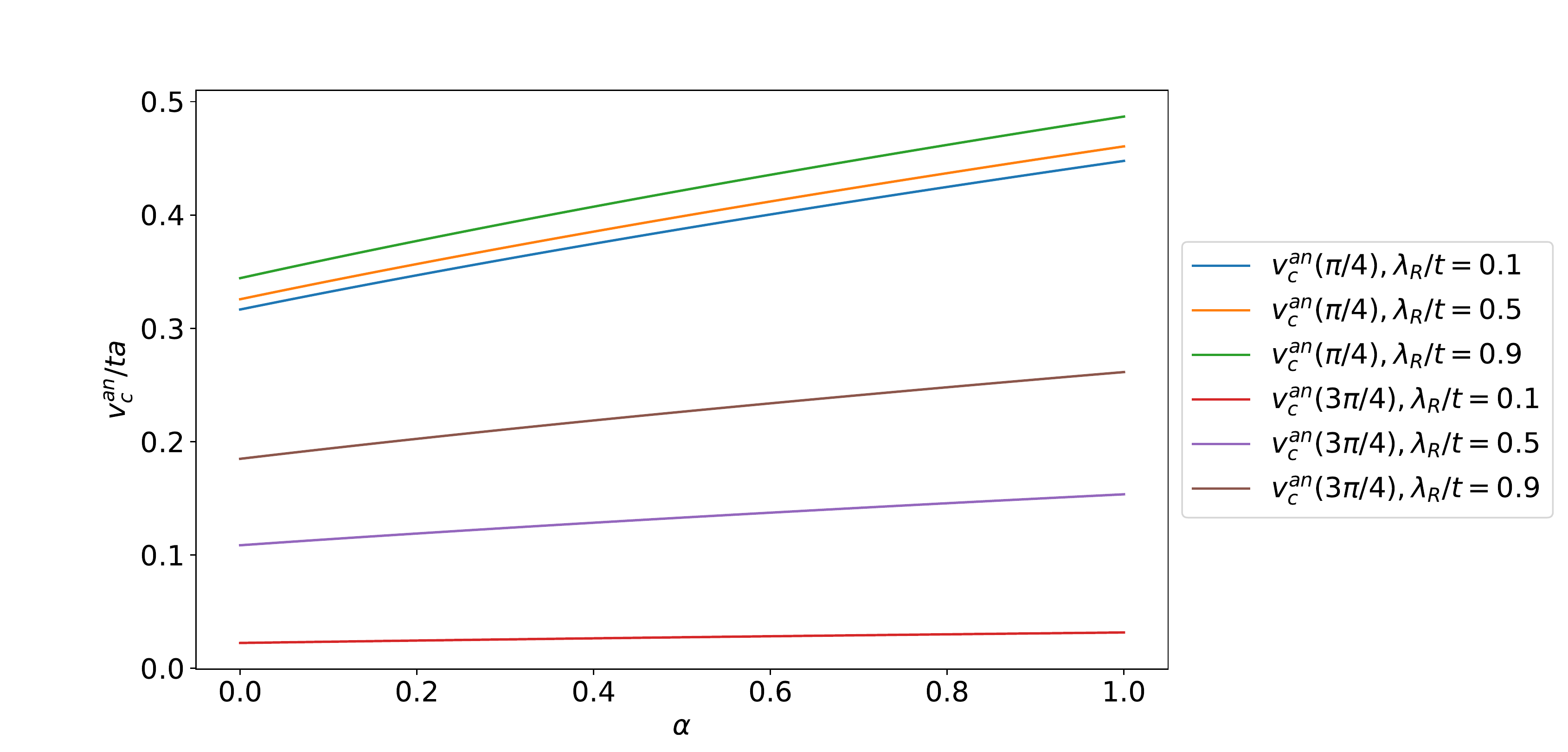}
    \caption{Maximum and minimum values of $v_c^{\textrm{an}}(\phi)$ are plotted against $\alpha$ for various $\lambda_R$ with $U_s/t = 0.05$.}
    \label{fig:PWsuperal}
\end{figure}

\begin{figure}
    \centering
    \includegraphics[width=0.7\linewidth]{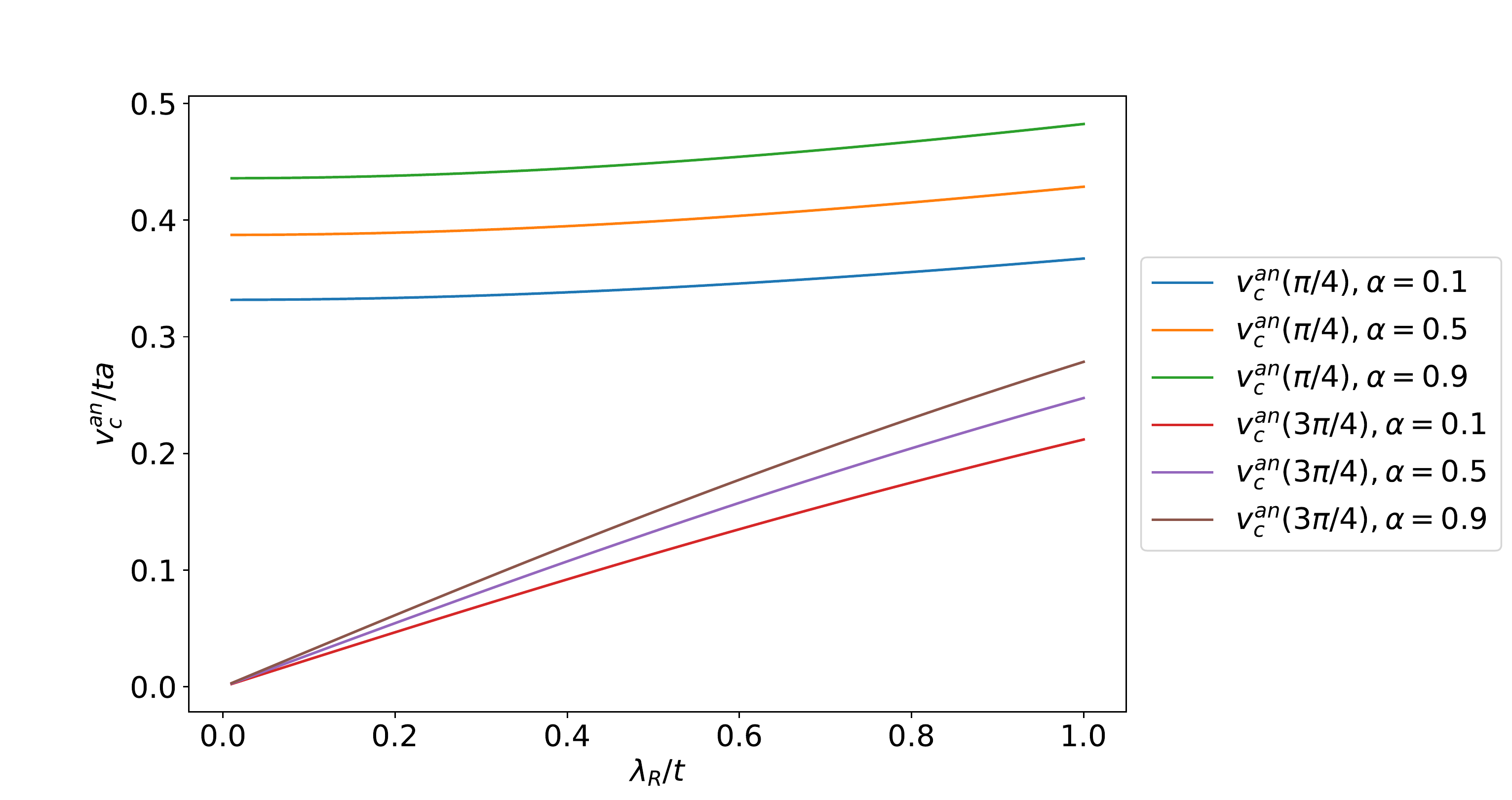}
    \caption{Maximum and minimum values of $v_c^{\textrm{an}}(\phi)$ are plotted against $\lambda_R$ for various $\alpha$, with $U_s/t = 0.05$. Note that $\lambda_R=0$ is not included, as that describes the NZ phase.}
    \label{fig:PWsuperlamnew}
\end{figure}

\begin{figure}
    \centering
    \includegraphics[width=0.7\linewidth]{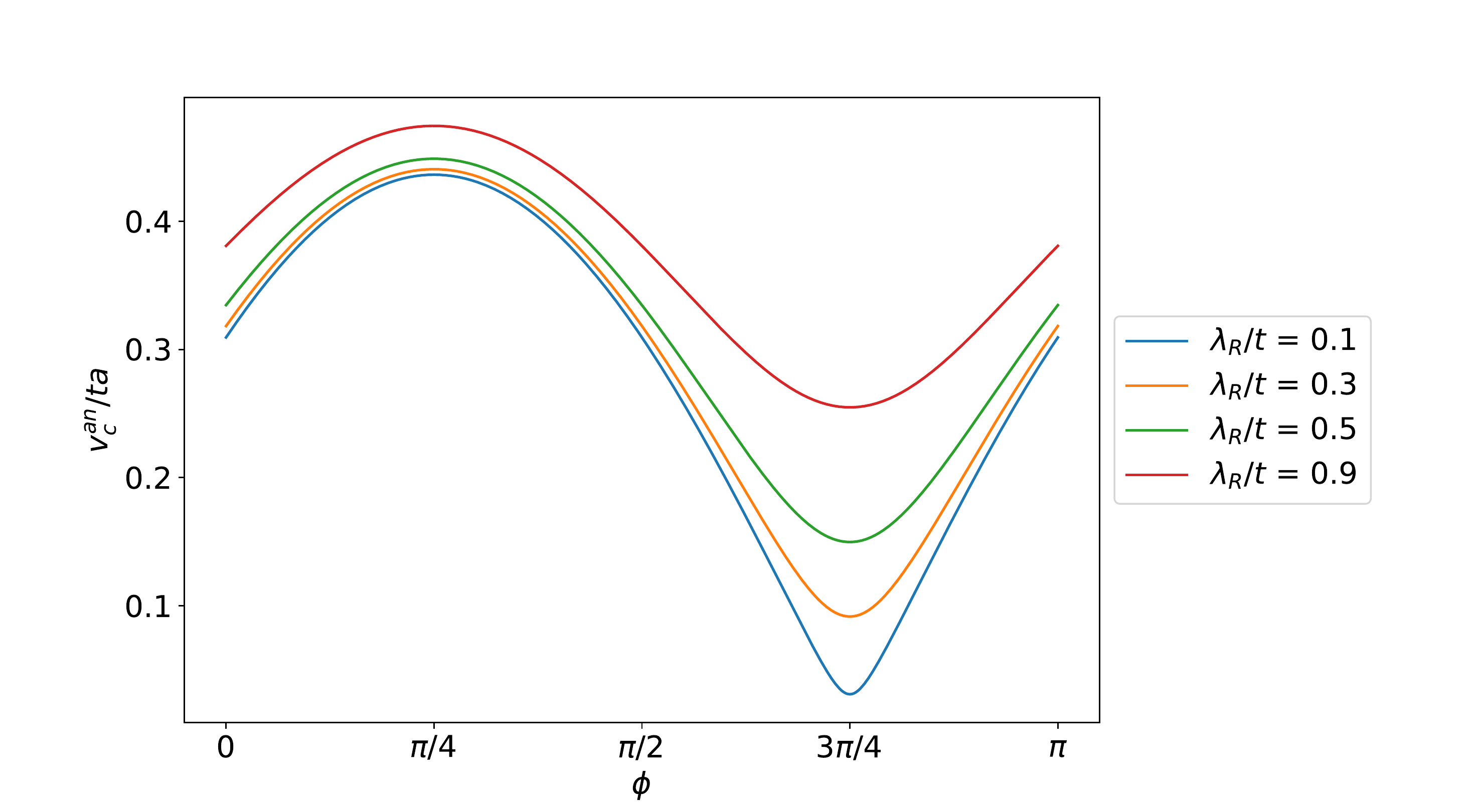}
    \caption{$v_c^{\textrm{an}}(\phi)$ is plotted against the angle with the $k_x$-axis, $\phi$, with $U_s/t = 0.05$ and $\alpha = 0.9$. The expression \eqref{eq:PWvcan} is $\pi$ periodic, which is why only $0$ to $\pi$ are included.}
    \label{fig:PWsuperphi}
\end{figure}

We could also have used numerical calculations to find the critical superfluid velocity as a function of the angle made with the $k_x$-axis, $v_c(\phi)$. Using the numeric eigenvalue $\Omega_{4'}(\boldsymbol{k})$ and parameterizing $\boldsymbol{q}$ by the angle $\phi$ made with the $k_x$-axis, a natural formula to use is
% We could also have used numerical calculations, and given the critical superfluid velocity along the direction in which it is smallest, $v_c^{\textrm{min}}$, along the direction it is greatest, $v_c^{\textrm{max}}$, and as a function of the angle made with the $k_x$-axis, $v_c(\phi)$. Using the numeric eigenvalue $\Omega_{4'}(\boldsymbol{k})$, natural formulae to use are
% \begin{equation}
%     \label{eq:vcnum}
%     v_c^{\textrm{min}} = \frac{\min_{\boldsymbol{q}}[\Omega_{4'}(\boldsymbol{k}_{01}+\boldsymbol{q})]}{\abs{\boldsymbol{q}}} \mbox{\qquad and \qquad} v_c^{\textrm{max}} = \frac{\max_{\boldsymbol{q}}[\Omega_{4'}(\boldsymbol{k}_{01}+\boldsymbol{q})]}{\abs{\boldsymbol{q}}},  
% \end{equation}
% for a fixed, small $|\boldsymbol{q}|$. I.e. we find the minimal (maximum) value of $\Omega_{4'}(\boldsymbol{k})$ on a small circle around $\boldsymbol{k}_{01}$ and divide by the radius of said circle. For the angular dependence we parameterize $\boldsymbol{q}$ by the angle $\phi$ made with the $k_x$-axis,
\begin{equation}
\label{eq:PWvcnumphi}
    v_c(\phi) =  \frac{\Omega_{4'}\big(\boldsymbol{k}_{01}+\abs{\boldsymbol{q}}(\cos(\phi), \sin(\phi))\big)}{\abs{\boldsymbol{q}}}.
\end{equation}
To ensure our results are valid, we will try different values of $|\boldsymbol{q}|$ to see that we get the same results, but $|\boldsymbol{q}|a=10^{-5}$ is used in producing the figures. Remember that $\Omega_{4'}(\boldsymbol{k})$ is not what we think of as the lowest energy in the full 1BZ. However, close to $\boldsymbol{k}_{01}$ we found that $\Omega_{4'}(\boldsymbol{k}) \approx \Omega_2(\boldsymbol{k})$. We also suspect they are both approximately the same as $\Omega_H(\boldsymbol{k})$ close to $\boldsymbol{k}_{01}$. Therefore, if we calculate the critical superfluid velocity numerically using the helicity approximation,
%However, consulting figure \ref{fig:PWeigenl1zoom} we see that $\Omega_{3'}(\boldsymbol{k}) \approx \Omega_{4'}(\boldsymbol{k})$ close to $\boldsymbol{k}_{01}$ and so they are both approximately the same as $\Omega_2(\boldsymbol{k})$.
% \begin{equation}
%     \label{eq:vcnumHel}
%     v_c^{H, \textrm{min}} = \frac{\min_{\boldsymbol{q}}[\Omega_{H}(\boldsymbol{k}_{01}+\boldsymbol{q})]}{\abs{\boldsymbol{q}}} \mbox{\qquad and \qquad} v_c^{H,\textrm{max}} = \frac{\max_{\boldsymbol{q}}[\Omega_{H}(\boldsymbol{k}_{01}+\boldsymbol{q})]}{\abs{\boldsymbol{q}}}, 
% \end{equation}
\begin{equation}
\label{eq:PWvcnumphiHel}
    v_c^H(\phi) =  \frac{\Omega_{H}\big(\boldsymbol{k}_{01}+\abs{\boldsymbol{q}}(\cos(\phi), \sin(\phi))\big)}{\abs{\boldsymbol{q}}},
\end{equation}
we expect the result will be similar. This is checked in figure \ref{fig:PWsuperphiHel}, and it is clear the two approaches give approximately the same results. We expect the helicity approximation to become better the stronger the SOC is, and the figure indicates that the two approaches give more similar results as $\lambda_R$ is increased. %We also note that the differences are so small that they should not be visible in plots of the critical superfluid velocity.
%This is checked in figures \ref{fig:PWsuperlamHel} and \ref{fig:PWsuperphiHel}. The two approaches give approximately the same results. We expect the helicity approximation to become better the stronger SOC we have, and the figures confirm that the two approaches give more similar results as $\lambda_R$ is increased. We also note that the differences are so small that they should not be visible in plots of the critical superfluid velocity. Therefore, $ \Omega_{4'}(\boldsymbol{k})$ was used to produce these figures.

\begin{figure}
    \centering
    \includegraphics[width=0.7\linewidth]{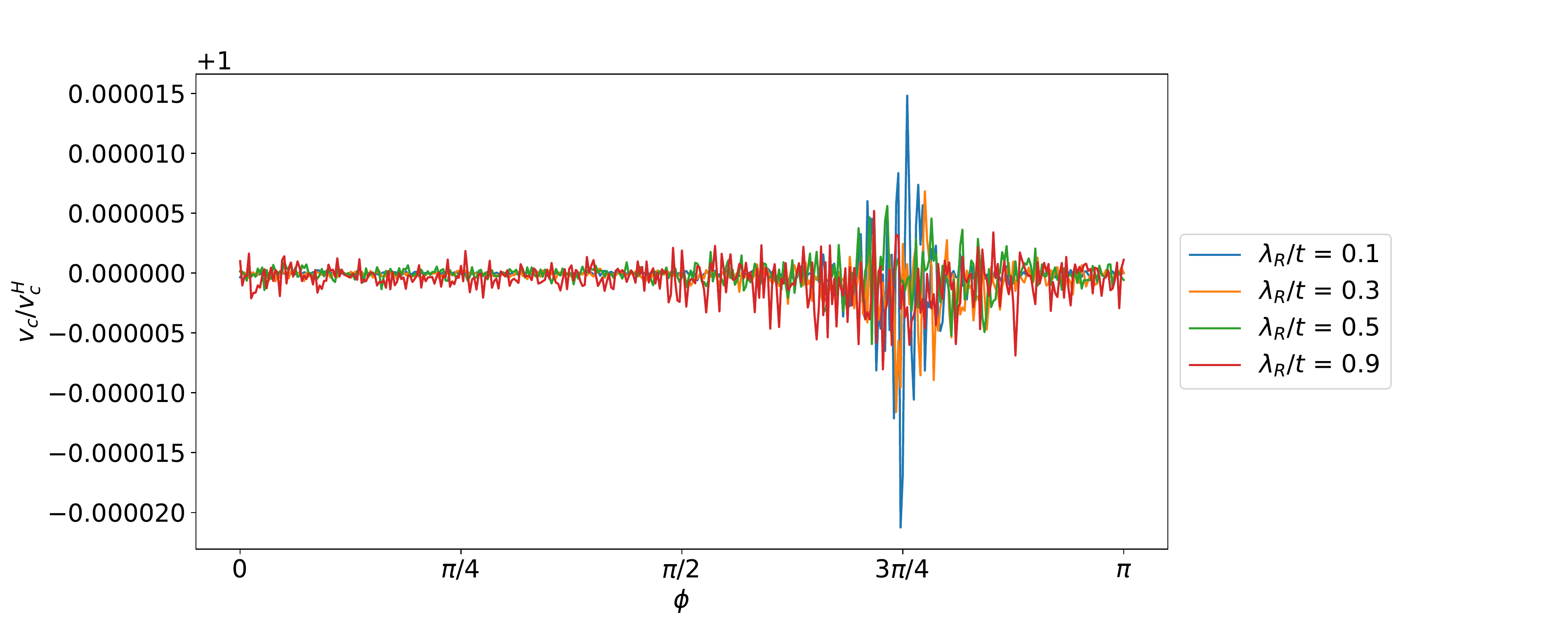}
    \caption{The ratio of the numeric results $v_c$ and $v_c^{H}$ is plotted as a function of the angle with the $k_x$-axis, $\phi$, with $U_s/t = 0.05$ and $\alpha = 0.9$. The value of $k_0 = k_{0m}$ was updated as $\lambda_R$ was changed.}
    \label{fig:PWsuperphiHel}
\end{figure}

Finally, in figures \ref{fig:PWsuperphiDirectionvanv} and \ref{fig:PWsuperphiDirectionvanvHel} we compare the numerically calculated $v_c(\phi)$ and $v_c^{H}(\phi)$ to the analytic expression $v_c^{\textrm{an}}(\phi)$. The ratios are close to $1$, though we see the numeric calculation is less accurate close to the minimum at $\phi = 3\pi/4$. The differences are nevertheless so small that it would not give visible changes in figures \ref{fig:PWsuperal}, \ref{fig:PWsuperlamnew} and \ref{fig:PWsuperphi} if they were produced using either of the numeric methods.

\begin{figure}
    \centering
    \includegraphics[width=0.7\linewidth]{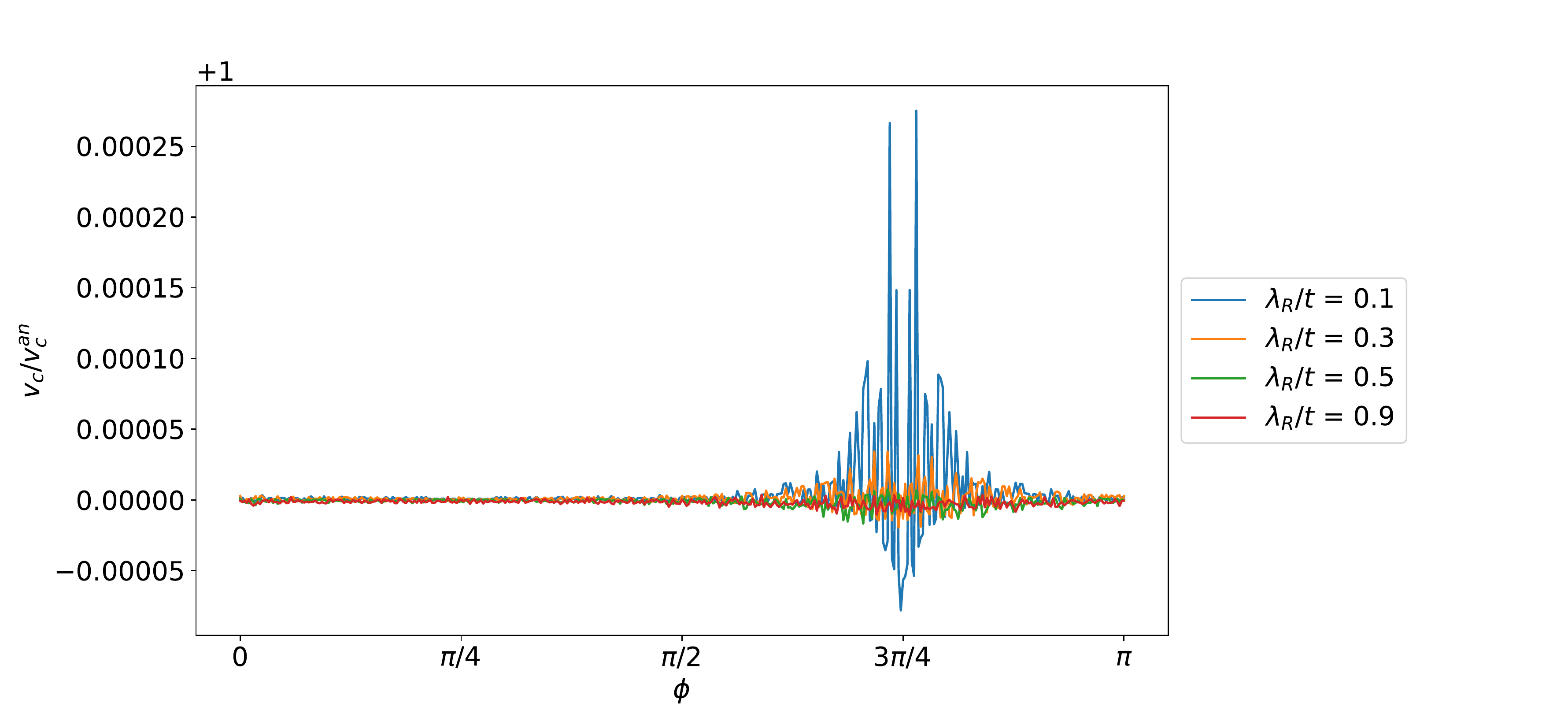}
    \caption{The ratio of the numerically calculated $v_c(\phi)$ from \eqref{eq:PWvcnumphi} and the analytic expression $v_c^{\textrm{an}}(\phi)$ in \eqref{eq:PWvcan} is plotted for various $\lambda_R/t$ with $U_s/t = 0.05$ and $\alpha = 0.9$. The value of $k_0 = k_{0m}$ was updated as $\lambda_R$ was changed. The ratio is close to one, though it appears the numeric approximation is worse closer to the minimal value at $\phi = 3\pi/4$, in particular for small $\lambda_R/t$.}
    \label{fig:PWsuperphiDirectionvanv}
\end{figure}

\begin{figure}
    \centering
    \includegraphics[width=0.7\linewidth]{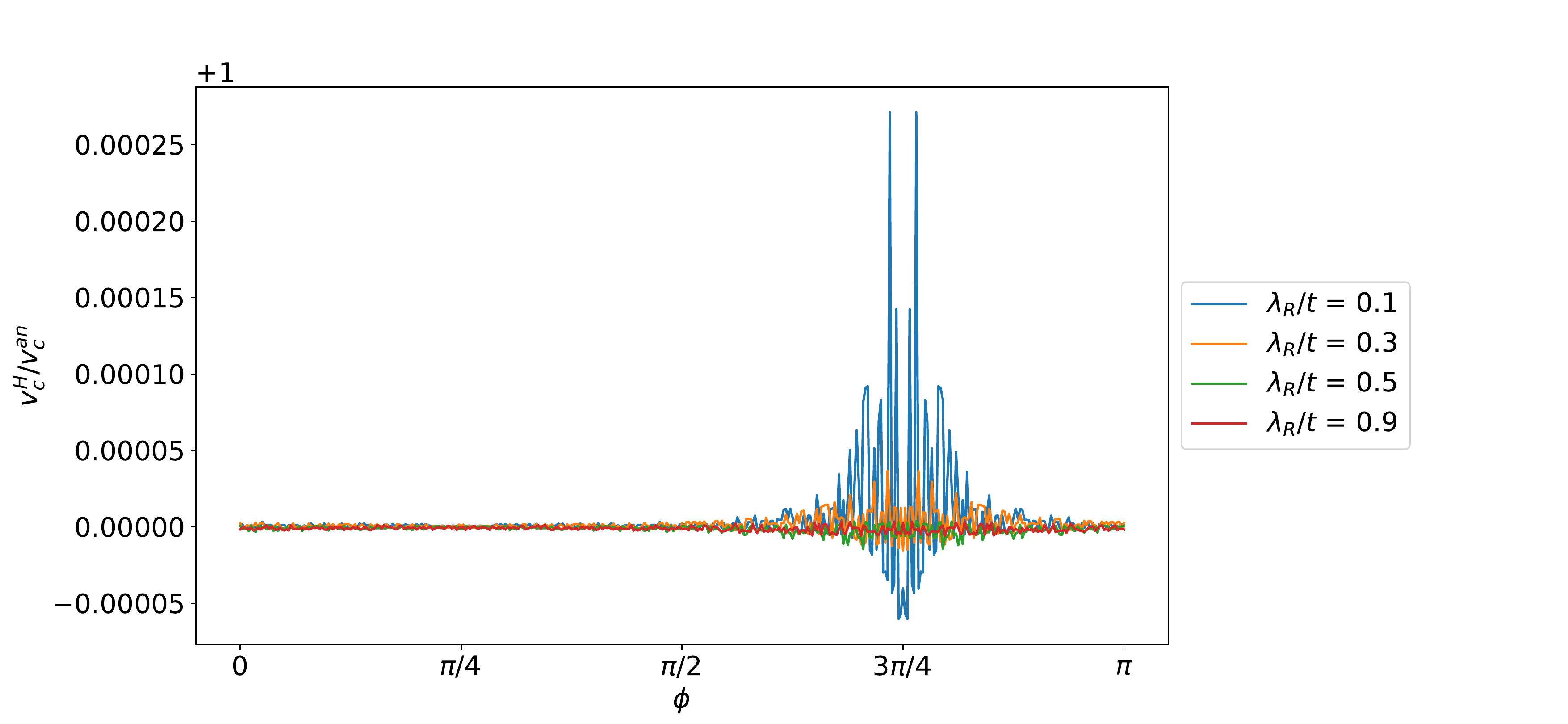}
    \caption{The ratio of the numerically calculated $v_c^{H}(\phi)$ from \eqref{eq:PWvcnumphiHel} and the analytic expression $v_c^{\textrm{an}}(\phi)$ in \eqref{eq:PWvcan} is plotted for various $\lambda_R/t$ with $U_s/t = 0.05$ and $\alpha = 0.9$. The value of $k_0 = k_{0m}$ was updated as $\lambda_R$ was changed. The ratio is close to one, though it appears the numeric approximation is worse closer to the minimal value at $\phi = 3\pi/4$, in particular for small $\lambda_R/t$.}
    \label{fig:PWsuperphiDirectionvanvHel}
\end{figure}

In conclusion we believe the PW phase is stable for $\alpha < 1$ and any nonzero $\lambda_R/t$. Meanwhile, $\lambda_R=0$ leads to $k_0=0$ and thus the NZ phase. There are some indications of small imaginary parts in the eigenvalues even for $\alpha < 1$, but we believe they can be explained as numerical errors rather than indications that the PW phase is unstable. We found an anisotropic critical superfluid velocity as was also reported in \cite{Toniolo}. It is also interesting to see that $v_c^{\textrm{PW}}$ increases as the strength of the spin-orbit coupling is increased, whereas in the PZ phase, the critical superfluid velocity decreases as the strength of SOC is increased, until a point is reached where SOC makes the PZ phase unstable. Such behavior is most likely a result of the fact that nonzero condensate momenta become increasingly favorable as the strength of SOC is increased with all other parameters fixed.

\section{SW Phase}
The SW phase is such that both $\boldsymbol{k}_{01}=(k_0, k_0)$ and $\boldsymbol{k}_{03} = -\boldsymbol{k}_{01}$ are occupied condensate momenta. As a reminder, it was mentioned that it can be thought of as an analogue of Larkin-Ovchinnikov states in superconductors \cite{LO}. 

We assume that $N_{\boldsymbol{k}_{01}}^\uparrow = N_{\boldsymbol{k}_{03}}^\uparrow = N_0^\uparrow/2$ and $N_{\boldsymbol{k}_{01}}^\downarrow =  N_{\boldsymbol{k}_{03}}^\downarrow = N_0^\downarrow/2$. In other words, we assume the condensate is balanced in terms of the condensate momenta. Using \eqref{eq:H0}
\begin{align}
    \begin{split}
        H_0^{''} &= (N_0^\uparrow +N_0^\downarrow)(\epsilon_{\boldsymbol{k}_{01}}+T)  \\
        &+\sqrt{N_0^\uparrow N_0^\downarrow}\abs{s_{\boldsymbol{k}_{01}}}\cos(\gamma_{\boldsymbol{k}_{01}}+\Delta\theta_1) +\sqrt{N_0^\uparrow N_0^\downarrow}\abs{s_{\boldsymbol{k}_{01}}}\cos(\gamma_{\boldsymbol{k}_{03}}+\Delta\theta_3) \\ 
        & +\frac{U}{4N_s}\Big(3(N_0^\uparrow)^2 + 3(N_0^\downarrow)^2 +2\alpha N_0^\uparrow N_0^\downarrow\big(2+\cos(\Delta\theta_1-\Delta\theta_3)\big)\Big) .
    \end{split}
\end{align}
Inserting \eqref{eq:NN0excitup} and \eqref{eq:NN0excitdown} we get 
\begin{align*}
    \begin{split}
        H_0^{''} &=  H_0 -(\epsilon_{\boldsymbol{k}_{01}}+T)\left(\left.\sum_{\boldsymbol{k}}\right.^{'} A_{\boldsymbol{k}}^{\uparrow\dagger}A_{\boldsymbol{k}}^{\uparrow}+\left.\sum_{\boldsymbol{k}}\right.^{'} A_{\boldsymbol{k}}^{\downarrow\dagger}A_{\boldsymbol{k}}^{\downarrow}\right) \\
        &-\frac{\abs{s_{\boldsymbol{k}_{01}}}}{2}\left(\sqrt{\frac{N^\uparrow}{N^\downarrow}}\left.\sum_{\boldsymbol{k}}\right.^{'} A_{\boldsymbol{k}}^{\downarrow\dagger}A_{\boldsymbol{k}}^{\downarrow} + \sqrt{\frac{N^\downarrow}{N^\uparrow}} \left.\sum_{\boldsymbol{k}}\right.^{'} A_{\boldsymbol{k}}^{\uparrow\dagger}A_{\boldsymbol{k}}^{\uparrow}\right)\sum_{i=1,3}\cos(\gamma_{\boldsymbol{k}_{0i}}+\Delta\theta_i) \\
        &-\frac{U}{4N_s}\Bigg( 6N^\uparrow \left.\sum_{\boldsymbol{k}}\right.^{'} A_{\boldsymbol{k}}^{\uparrow\dagger}A_{\boldsymbol{k}}^{\uparrow} + 6N^\downarrow \left.\sum_{\boldsymbol{k}}\right.^{'} A_{\boldsymbol{k}}^{\downarrow\dagger}A_{\boldsymbol{k}}^{\downarrow} \\
        &\mbox{\qquad} + 2\alpha\left(N^\uparrow \left.\sum_{\boldsymbol{k}}\right.^{'} A_{\boldsymbol{k}}^{\downarrow\dagger}A_{\boldsymbol{k}}^{\downarrow}+N^\downarrow \left.\sum_{\boldsymbol{k}}\right.^{'} A_{\boldsymbol{k}}^{\uparrow\dagger}A_{\boldsymbol{k}}^{\uparrow}\right)\big(2+\cos(\Delta\theta_1-\Delta\theta_3)\big)\Bigg).
    \end{split}
\end{align*}
The sum $\left.\sum_{\boldsymbol{k}}\right.^{'}$ excludes the condensate momenta $\pm \boldsymbol{k}_{01}$. Additionally, we defined the operator independent part $H_0$. The remaining part of $H_0^{''}$ is moved to $H_2$ as it is quadratic in excitation operators. 

We choose to fix $N^\uparrow = N^\downarrow = N/2$ and the expression for $H_0$ is then the same as $H_0^{\textrm{SW}}$ given in \eqref{eq:H0SW}.
% \begin{align}
%     \begin{split}
%     \label{eq:H0SW}
%         H_0 &= N(\epsilon_{\boldsymbol{k}_{01}}+T)+ \frac{UN^2}{8N_s}\Big(3+\alpha\big(2+\cos(\Delta\theta_1-\Delta\theta_3)\big)\Big)\\
%         & + \frac{N}{2}\abs{s_{\boldsymbol{k}_{01}}}\big(\cos(\gamma_{\boldsymbol{k}_{01}}+\Delta\theta_1) +\cos(\gamma_{\boldsymbol{k}_{03}}+\Delta\theta_3)\big).
%     \end{split}
% \end{align}
In $H_2$ we may replace $N_0^\alpha$ by $N^\alpha$ directly to the same order of approximation. The coefficient of $A_{\boldsymbol{k}}^{\uparrow\dagger}A_{\boldsymbol{k}}^{\uparrow}$ is
\begin{align}
    \begin{split}
        M_{1,1}(\boldsymbol{k}) & =\epsilon_{\boldsymbol{k}}-\epsilon_{\boldsymbol{k}_{01}} + \frac{UN}{4N_s}-\frac{UN}{4N_s}\alpha\cos(\Delta\theta_1-\Delta\theta_3) \\
        &-\frac{\abs{s_{\boldsymbol{k}_{01}}}}{2}\left(\cos(\gamma_{\boldsymbol{k}_{01}}+\Delta\theta_1)+\cos(\gamma_{\boldsymbol{k}_{03}}+\Delta\theta_3)\right) \\
        & =\mathcal{E}_{\boldsymbol{k}}+\frac{U_s}{2}\big(1-\alpha\cos(\Delta\theta_1-\Delta\theta_3)\big)+G_{k_0}.
    \end{split}
\end{align}
Here, we defined $G_{k_0}$ as
\begin{align}
    \begin{split}
         G_{k_0} \equiv& \epsilon_{\boldsymbol{0}}-\epsilon_{\boldsymbol{k}_{01}}  - \frac{\abs{s_{\boldsymbol{k}_{01}}}}{2}\big(\cos(\gamma_{\boldsymbol{k}_{01}}+\Delta\theta_1) +\cos(\gamma_{\boldsymbol{k}_{03}}+\Delta\theta_3)\big)  \\
         =& 4t(\cos(k_0 a)-1)\\
         &-\sqrt{2}\lambda_R\abs{\sin(k_0 a)}\big(\cos(\gamma_{\boldsymbol{k}_{01}}+\Delta\theta_1) +\cos(\gamma_{\boldsymbol{k}_{03}}+\Delta\theta_3)\big).
    \end{split}
\end{align}
$H_2$ can now be written
\begin{align}
    \begin{split}
        H_2 = \left.\sum_{\boldsymbol{k}}\right.^{'}& \Bigg\{ M_{1,1}(\boldsymbol{k})\left(A_{\boldsymbol{k}}^{\uparrow\dagger}A_{\boldsymbol{k}}^{\uparrow} + A_{\boldsymbol{k}}^{\downarrow\dagger}A_{\boldsymbol{k}}^{\downarrow}\right) \\
        &+\left(s_{\boldsymbol{k}}+\frac{U_s\alpha}{2}\left( e^{i(\theta_1^{\downarrow}-\theta_{1}^{\uparrow})} + e^{i(\theta_3^\downarrow-\theta_3^\uparrow)}\right)\right)A_{\boldsymbol{k}}^{\uparrow\dagger}A_{\boldsymbol{k}}^{\downarrow}\\
        &+\left(s_{\boldsymbol{k}}^{*}+\frac{U_s\alpha}{2}\left( e^{-i(\theta_{1}^{\downarrow}-\theta_{1}^{\uparrow})}+e^{-i(\theta_3^\downarrow-\theta_3^\uparrow)}\right)\right)A_{\boldsymbol{k}}^{\downarrow\dagger}A_{\boldsymbol{k}}^{\uparrow}\\
        &+ \frac{U_s}{4}\Bigg(\bigg[e^{i2\theta_1^\uparrow}A_{\boldsymbol{k}}^{\uparrow}A_{-\boldsymbol{k}+2\boldsymbol{k}_{01}}^{\uparrow} + 2e^{i(\theta_1^\uparrow+\theta_3^\uparrow)}A_{\boldsymbol{k}}^{\uparrow}A_{-\boldsymbol{k}}^{\uparrow} \\
        &+ e^{i2\theta_3^\uparrow}A_{\boldsymbol{k}}^{\uparrow}A_{-\boldsymbol{k}-2\boldsymbol{k}_{01}}^{\uparrow} +e^{i2\theta_1^\downarrow}A_{\boldsymbol{k}}^{\downarrow}A_{-\boldsymbol{k}+2\boldsymbol{k}_{01}}^{\downarrow} \\
        &+ 2e^{i(\theta_1^\downarrow+\theta_3^\downarrow)}A_{\boldsymbol{k}}^{\downarrow}A_{-\boldsymbol{k}}^{\downarrow} + e^{i2\theta_3^\downarrow}A_{\boldsymbol{k}}^{\downarrow}A_{-\boldsymbol{k}-2\boldsymbol{k}_{01}}^{\downarrow}\\
        &+ \alpha e^{i(\theta_1^\uparrow+\theta_1^\downarrow)}\left(A_{\boldsymbol{k}}^{\downarrow}A_{-\boldsymbol{k}+2\boldsymbol{k}_{01}}^{\uparrow} + A_{\boldsymbol{k}}^{\uparrow}A_{-\boldsymbol{k}+2\boldsymbol{k}_{01}}^{\downarrow}\right) \\
        &+ \alpha \left(e^{i(\theta_1^\downarrow+\theta_3^\uparrow)}+e^{i(\theta_1^\uparrow+\theta_3^\downarrow)}\right)\left(A_{\boldsymbol{k}}^{\downarrow}A_{-\boldsymbol{k}}^{\uparrow} + A_{\boldsymbol{k}}^{\uparrow}A_{-\boldsymbol{k}}^{\downarrow}\right) \\
        &+ \alpha e^{i(\theta_3^\uparrow+\theta_3^\downarrow)}\left(A_{\boldsymbol{k}}^{\downarrow}A_{-\boldsymbol{k}-2\boldsymbol{k}_{01}}^{\uparrow} + A_{\boldsymbol{k}}^{\uparrow}A_{-\boldsymbol{k}-2\boldsymbol{k}_{01}}^{\downarrow}\right) \\
        &+\left(2e^{i(\theta_1^\uparrow-\theta_3^\uparrow)}+\alpha e^{i(\theta_1^\downarrow-\theta_3^\downarrow)}\right)A_{\boldsymbol{k}}^{\uparrow\dagger}A_{\boldsymbol{k}+2\boldsymbol{k}_{01}}^{\uparrow} \\
        &+\left(2e^{-i(\theta_1^\uparrow-\theta_3^\uparrow)}+\alpha e^{-i(\theta_1^\downarrow-\theta_3^\downarrow)}\right)A_{\boldsymbol{k}}^{\uparrow\dagger}A_{\boldsymbol{k}-2\boldsymbol{k}_{01}}^{\uparrow} \\
        &+\left(2e^{i(\theta_1^\downarrow-\theta_3^\downarrow)}+\alpha e^{i(\theta_1^\uparrow-\theta_3^\uparrow)}\right)A_{\boldsymbol{k}}^{\downarrow\dagger}A_{\boldsymbol{k}+2\boldsymbol{k}_{01}}^{\downarrow} \\
        &+\left(2e^{-i(\theta_1^\downarrow-\theta_3^\downarrow)}+\alpha e^{-i(\theta_1^\uparrow-\theta_3^\uparrow)}\right)A_{\boldsymbol{k}}^{\downarrow\dagger}A_{\boldsymbol{k}-2\boldsymbol{k}_{01}}^{\downarrow} \\
        &+\alpha e^{i(\theta_1^\uparrow-\theta_3^\downarrow)}A_{\boldsymbol{k}}^{\downarrow\dagger}A_{\boldsymbol{k}+2\boldsymbol{k}_{01}}^{\uparrow} + \alpha e^{-i(\theta_1^\downarrow-\theta_3^\uparrow)}A_{\boldsymbol{k}}^{\downarrow\dagger}A_{\boldsymbol{k}-2\boldsymbol{k}_{01}}^{\uparrow} \\
        &+\alpha e^{i(\theta_1^\downarrow-\theta_3^\uparrow)}A_{\boldsymbol{k}}^{\uparrow\dagger}A_{\boldsymbol{k}+2\boldsymbol{k}_{01}}^{\downarrow} + \alpha e^{-i(\theta_1^\uparrow-\theta_3^\downarrow)}A_{\boldsymbol{k}}^{\uparrow\dagger}A_{\boldsymbol{k}-2\boldsymbol{k}_{01}}^{\downarrow}\bigg] + \textrm{H.c.} \Bigg)\Bigg\}.
    \end{split}
\end{align}
The sum $\left.\sum_{\boldsymbol{k}}\right.^{'}$ excludes the condensate momenta $\boldsymbol{k} = \pm \boldsymbol{k}_{01}$. Additionally, in the interaction terms there was a restriction in \eqref{eq:H2} that $\boldsymbol{k}'$ should not be equal to a condensate momentum. This means that if for any $\boldsymbol{k}$ a momentum index becomes a condensate momentum, then such a term should be excluded. This can happen for $\boldsymbol{k} = \pm 3\boldsymbol{k}_{01}$ where some of $\pm \boldsymbol{k} \pm 2\boldsymbol{k}_{01}$ become the condensate momenta. Thus, $\boldsymbol{k} = \pm 3\boldsymbol{k}_{01}$ are special momenta we need to treat separately. 

With two condensate momenta, there are now ways to satisfy the Kronecker delta in $H_1$ \eqref{eq:H1} that leaves $\boldsymbol{k}$ as a non-condensate momentum. Specifically, this is the terms of the sum where $i=j \neq i'$. Keeping only terms of order $N/N_s$ and above, we may replace $N_0^\alpha$ by $N^\alpha$ directly in $H_1$. Using \eqref{eq:NN0excitup} and \eqref{eq:NN0excitdown} would yield terms cubic in excitation operators, or equivalently of order $\sqrt{N}/N_s$ which have already been neglected when setting up the Hamiltonian $\eqref{eq:MFTH}$. With $U_s = UN/2N_s$ we get the linear part
\begin{align}
    \begin{split}
        H_1 =  \frac{\sqrt{N}}{4}U_s &\bigg\{\left(e^{i(2\theta_1^\uparrow-\theta_3^\uparrow)}+\alpha e^{i(\theta_1^\uparrow+\theta_1^\downarrow-\theta_3^\downarrow)}\right)A_{3\boldsymbol{k}_{01}}^\uparrow + \textrm{H.c.} \\
        &+\left(e^{i(2\theta_1^\downarrow-\theta_3^\downarrow)}+\alpha e^{i(\theta_1^\uparrow+\theta_1^\downarrow-\theta_3^\uparrow)}\right)A_{3\boldsymbol{k}_{01}}^\downarrow + \textrm{H.c.} \\
        &+\left(e^{-i(\theta_1^\uparrow-2\theta_3^\uparrow)}+\alpha e^{-i(\theta_1^\downarrow-\theta_3^\downarrow-\theta_3^\uparrow)}\right)A_{-3\boldsymbol{k}_{01}}^\uparrow + \textrm{H.c.} \\
        &+\left(e^{-i(\theta_1^\downarrow-2\theta_3^\downarrow)}+\alpha e^{-i(\theta_1^\uparrow-\theta_3^\downarrow-\theta_3^\uparrow)}\right)A_{-3\boldsymbol{k}_{01}}^\downarrow + \textrm{H.c.} \bigg\}.
    \end{split}
\end{align}
% Using \eqref{eq:SWangles} this reduces to
% \begin{align}
%     \begin{split}
%         H_1 =  \frac{\sqrt{N_0}}{4}U_s(1-\alpha) &\bigg\{\Big[e^{i(2\theta_1^\uparrow-\theta_3^\uparrow)}A_{3\boldsymbol{k}_{01}}^\uparrow  +e^{i(2\theta_1^\downarrow-\theta_3^\downarrow)} A_{3\boldsymbol{k}_{01}}^\downarrow \\
%         &+e^{-i(\theta_1^\uparrow-2\theta_3^\uparrow)}A_{-3\boldsymbol{k}_{01}}^\uparrow +e^{-i(\theta_1^\downarrow-2\theta_3^\downarrow)} A_{-3\boldsymbol{k}_{01}}^\downarrow \Big] + \textrm{H.c.} \bigg\}.
%     \end{split}
% \end{align}
Finally, we define the coefficients $c_\sigma^\alpha$ such that
\begin{equation}
\label{eq:SWH1}
    H_1 = \left( c_+^{\uparrow *} A_{3\boldsymbol{k}_{01}}^\uparrow  + c_+^{\downarrow *} A_{3\boldsymbol{k}_{01}}^\downarrow  + c_-^{\uparrow *} A_{-3\boldsymbol{k}_{01}}^\uparrow + c_-^{\downarrow *} A_{-3\boldsymbol{k}_{01}}^\downarrow \right) + \textrm{H.c.}
\end{equation}
An idea to treat these linear terms might be to shift some operators by complex constants, and thus remove the linear terms by completing squares with terms from $H_2$. E.g. if we try something like 
\begin{align*}
    &M_{1,1}(3\boldsymbol{k}_{01})A_{3\boldsymbol{k}_{01}}^{\uparrow\dagger}A_{3\boldsymbol{k}_{01}}^{\uparrow} + c_+^{\uparrow*}A_{3\boldsymbol{k}_{01}}^{\uparrow} +c_+^{\uparrow}A_{3\boldsymbol{k}_{01}}^{\uparrow\dagger} \\
    &= M_{1,1}(3\boldsymbol{k}_{01})\left(A_{3\boldsymbol{k}_{01}}^{\uparrow\dagger}+\frac{c_+^{\uparrow*}}{M_{1,1}(3\boldsymbol{k}_{01})}\right)\left(A_{3\boldsymbol{k}_{01}}^{\uparrow}+\frac{c_+^{\uparrow}}{M_{1,1}(3\boldsymbol{k}_{01})}\right)-\frac{\abs{c_+^{\uparrow}}^2}{M_{1,1}(3\boldsymbol{k}_{01})} \\
    &= M_{1,1}(3\boldsymbol{k}_{01})\Tilde{A}_{3\boldsymbol{k}_{01}}^{\uparrow\dagger}\Tilde{A}_{3\boldsymbol{k}_{01}}^{\uparrow}-\frac{\abs{c_+^{\uparrow}}^2}{M_{1,1}(3\boldsymbol{k}_{01})}.
\end{align*}
All this amounts to, is a shift of $H_0$. The new operators, $\Tilde{A}$, obey the same commutation relations as the old operators, $A$. The problem with this approach, is that both $A_{3\boldsymbol{k}_{01}}^{\uparrow\dagger}$ and $A_{3\boldsymbol{k}_{01}}^{\uparrow}$ appear elsewhere in $H_2$ as well. For the diagonalization procedure, we cannot have different definitions of $A_{3\boldsymbol{k}_{01}}^{\uparrow\dagger}$ and $A_{3\boldsymbol{k}_{01}}^{\uparrow}$ at different places in the Hamiltonian. We have to use either only the old, or only the new, shifted operators. If we want to use the new, shifted operators, we will be forced to add and subtract linear terms to e.g. the term $M_{1,3}A_{3\boldsymbol{k}_{01}}^{\uparrow\dagger}A_{5\boldsymbol{k}_{01}}^{\uparrow}$, leaving us with a multitude of new linear terms. Thus this procedure has not made any progress. 

These problems will not appear in the diagonalized version of $H_2$. Thus, if we can find $H_1$ in terms of the new operators, $\boldsymbol{B}_{\boldsymbol{k}}$, with which $H_2$ is diagonal, it should be possible to use the above method to remove linear terms by completing squares. This will require eigenvectors as well as eigenvalues. As we are unable to obtain analytic eigenvalues and eigenvectors using \textit{Maple}, the transformation of $H_1$ to the new basis, and subsequently the completing of squares will have to be done numerically. In the end, this treatment of $H_1$ will have the effect of changing the free energy $F_\textrm{SW}$, it should not affect the excitation spectrum directly. 

\subsection{Matrix Representation}
We define the operator vector
\begin{align}
    \begin{split}
    \label{eq:SWbra}
        \boldsymbol{A}_{\boldsymbol{k}}^{\dagger} = (&A_{\boldsymbol{k}}^{\uparrow\dagger}, A_{-\boldsymbol{k}}^{\uparrow\dagger}, A_{\boldsymbol{k}+2\boldsymbol{k}_{01}}^{\uparrow\dagger}, A_{-\boldsymbol{k}+2\boldsymbol{k}_{01}}^{\uparrow\dagger}, A_{\boldsymbol{k}-2\boldsymbol{k}_{01}}^{\uparrow\dagger}, A_{-\boldsymbol{k}-2\boldsymbol{k}_{01}}^{\uparrow\dagger}, \\
        &A_{\boldsymbol{k}}^{\downarrow\dagger}, A_{-\boldsymbol{k}}^{\downarrow\dagger}, A_{\boldsymbol{k}+2\boldsymbol{k}_{01}}^{\downarrow\dagger}, A_{-\boldsymbol{k}+2\boldsymbol{k}_{01}}^{\downarrow\dagger}, A_{\boldsymbol{k}-2\boldsymbol{k}_{01}}^{\downarrow\dagger}, A_{-\boldsymbol{k}-2\boldsymbol{k}_{01}}^{\downarrow\dagger}, \\
        &A_{\boldsymbol{k}}^{\uparrow}, A_{-\boldsymbol{k}}^{\uparrow}, A_{\boldsymbol{k}+2\boldsymbol{k}_{01}}^{\uparrow}, A_{-\boldsymbol{k}+2\boldsymbol{k}_{01}}^{\uparrow}, A_{\boldsymbol{k}-2\boldsymbol{k}_{01}}^{\uparrow}, A_{-\boldsymbol{k}-2\boldsymbol{k}_{01}}^{\uparrow}, \\
        &A_{\boldsymbol{k}}^{\downarrow}, A_{-\boldsymbol{k}}^{\downarrow}, A_{\boldsymbol{k}+2\boldsymbol{k}_{01}}^{\downarrow}, A_{-\boldsymbol{k}+2\boldsymbol{k}_{01}}^{\downarrow}, A_{\boldsymbol{k}-2\boldsymbol{k}_{01}}^{\downarrow}, A_{-\boldsymbol{k}-2\boldsymbol{k}_{01}}^{\downarrow}).
    \end{split}
\end{align}
As one can see, $\boldsymbol{k} = \boldsymbol{0}, \pm\boldsymbol{k}_{01}, \pm2\boldsymbol{k}_{01}$ are troublesome, as they leave several elements in $\boldsymbol{A}_{\boldsymbol{k}}^{\dagger}$ equal. In terms of the BV diagonalization procedure, this will lead to a definition of $J$ that does not obey $J^2=I$, and might not even be invertible. Thus, we are forced to treat these parts separately. The condensate momenta are already excluded from the sum in $H_2$. Meanwhile, the special momenta $\boldsymbol{0}, \pm2\boldsymbol{k}_{01}$ and $\pm3\boldsymbol{k}_{01}$ will be treated separately. We write the remaining part of $H_2$ as
\begin{equation}
    H'_2 = \frac{1}{4}\sum_{\boldsymbol{k}\neq \boldsymbol{0}, \pm\boldsymbol{k}_{01}, \pm2\boldsymbol{k}_{01}, \pm3\boldsymbol{k}_{01}}\boldsymbol{A}_{\boldsymbol{k}}^{\dagger} M_{\boldsymbol{k}}\boldsymbol{A}_{\boldsymbol{k}} = \frac{1}{4}\left.\sum_{\boldsymbol{k}}\right.^{'}\boldsymbol{A}_{\boldsymbol{k}}^{\dagger} M_{\boldsymbol{k}}\boldsymbol{A}_{\boldsymbol{k}}.
\end{equation}
We use commutators and we make all $-\boldsymbol{k}$ terms explicit in $H_2$. From the commutators we get a shift in $H_0$,
\begin{align}
    \begin{split}
    \label{eq:SWH0prime}
        H'_0 =& H_0 - \sum_{\boldsymbol{k}\neq \pm\boldsymbol{k}_{01}} \left(\mathcal{E}_{\boldsymbol{k}}+\frac{U_s}{2}\big(1-\alpha\cos(\Delta\theta_1-\Delta\theta_3)\big)+G_{k_0}\right) \\
        =& H_0 -8t\cos(k_0 a) \\
        &-(N_s-2)\left(4t+\frac{U_s}{2}\big(1-\alpha\cos(\Delta\theta_1-\Delta\theta_3)\big)+G_{k_0}\right),
    \end{split}
\end{align}
where $H_0$ is given in \eqref{eq:H0SW}. Note that here, only $\pm\boldsymbol{k}_{01}$ is excluded from the sum, since similar manipulations will be performed at the special momenta. $M_{\boldsymbol{k}}$ is a $24 \cross 24$ matrix on the form
\begin{gather}
\label{eq:SWMat}
    M_{\boldsymbol{k}} = 
    \begin{pmatrix}
    M_1 & M_2 \\
    M_2^* & M_1^* \\
    \end{pmatrix},
\end{gather}
with $M_1 = (M_{1L} | M_{1R})$ and $M_2 = (M_{2L} | M_{2R})$. Here,
\setcounter{MaxMatrixCols}{6}
%{\fontsize{5.5}{7}\selectfont ... }
\begin{gather*}
    M_{1L} = 
    \begin{pmatrix}
    M_{1,1}(\boldsymbol{k}) & 0 & M_{1,3} & 0 & M_{1,3}^* & 0 \\
    0 & M_{1,1}(\boldsymbol{k}) & 0 & M_{1,3} & 0 & M_{1,3}^* \\
    M_{1,3}^{*} & 0 & 0 & 0 & 0 & 0 \\
    0 & M_{1,3}^{*} & 0 & 0 & 0 & 0  \\
    M_{1,3} & 0 & 0 & 0 & 0 & 0 \\
    0 & M_{1,3} & 0 & 0 & 0 & 0  \\
    M_{1,7}^*(\boldsymbol{k}) & 0 & M_{1,11}^* & 0 & M_{1,9}^* & 0 \\
    0 & M_{2,8}^*(\boldsymbol{k}) & 0 & M_{1,11}^* & 0 & M_{1,9}^* \\
    M_{1,9}^{*} & 0 & 0 & 0 & 0 & 0\\
    0 & M_{1,9}^{*} & 0 & 0 & 0 & 0 \\
    M_{1,11}^* & 0 & 0 & 0 & 0 & 0 \\
    0 & M_{1,11}^* & 0 & 0 & 0 & 0 \\
    \end{pmatrix},
\end{gather*}
\begin{gather*}
    M_{1R} = 
    \begin{pmatrix}
    M_{1,7}(\boldsymbol{k}) & 0 & M_{1,9} & 0 & M_{1,11} & 0\\
    0 & M_{2,8}(\boldsymbol{k}) & 0 & M_{1,9} & 0 & M_{1,11}\\
    M_{1,11} & 0 & 0 & 0 & 0 & 0\\
    0 & M_{1,11} & 0 & 0 & 0 & 0 \\
    M_{1,9} & 0 & 0 & 0 & 0 & 0\\
    0 & M_{1,9} & 0 & 0 & 0 & 0 \\
    M_{1,1}(\boldsymbol{k}) & 0 & M_{7,9} & 0 & M_{7,9}^* & 0\\
    0 & M_{1,1}(\boldsymbol{k}) & 0 & M_{7,9} & 0 & M_{7,9}^*\\
    M_{7,9}^* & 0 & 0 & 0 & 0 & 0\\
    0 & M_{7,9}^* & 0 & 0 & 0 & 0 \\
    M_{7,9} & 0 & 0 & 0 & 0 & 0\\
    0 & M_{7,9} & 0 & 0 & 0 & 0 \\
    \end{pmatrix},
\end{gather*}
\begin{gather*}
    M_{2L}^* =
    \begin{pmatrix}
    0 & M_{13,2} & 0 & M_{13,4} & 0 & M_{13,6}\\
    M_{13,2} & 0 & M_{13,4} & 0 & M_{13,6} & 0 \\
    0 & M_{13,4} & 0 & 0 & 0 & 0 \\
    M_{13,4} & 0 & 0 & 0 & 0 & 0 \\
    0 & M_{13,6} & 0 & 0 & 0 & 0 \\
    M_{13,6} & 0 & 0 & 0 & 0 & 0 \\
    0 & M_{13,8} & 0 & M_{13,10} & 0 & M_{13,12} \\
    M_{13,8} & 0 & M_{13,10} & 0 & M_{13,12} & 0 \\
    0 & M_{13,10} & 0 & 0 & 0 & 0 \\
    M_{13,10} & 0 & 0 & 0 & 0 & 0 \\
    0 & M_{13,12} & 0 & 0 & 0 & 0 \\
    M_{13,12} & 0 & 0 & 0 & 0 & 0 \\
    \end{pmatrix}
\end{gather*}
and
\begin{gather*}
    M_{2R}^* = 
    \begin{pmatrix}
    0 & M_{13,8} & 0 & M_{13,10} & 0 &M_{13,12}\\
    M_{13,8} & 0 & M_{13,10} & 0 &M_{13,12} & 0\\
    0 & M_{13,10} & 0 & 0 & 0 & 0 \\
    M_{13,10} & 0 & 0 & 0 & 0 & 0\\
    0 & M_{13,12} & 0 & 0 & 0 & 0 \\
    M_{13,12} & 0 & 0 & 0 & 0 & 0\\
    0 & M_{19,8} & 0 & M_{19,10} & 0 & M_{19,12}\\
    M_{19,8} & 0 & M_{19,10} & 0 & M_{19,12} & 0\\
    0 & M_{19,10} & 0 & 0 & 0 & 0 \\
    M_{19,10} & 0 & 0 & 0 & 0 & 0\\
    0 & M_{19,12} & 0 & 0 & 0 & 0 \\
    M_{19,12} & 0 & 0 & 0 & 0 & 0\\
    \end{pmatrix}.
\end{gather*}
We can confirm that $M_1^\dagger = M_1$ and $M_2^T=M_2$. The matrix elements in $M_1$ are
\begin{align}
    \begin{split}
        \label{eq:SWelem}
        M_{1,1}(\boldsymbol{k}) &= \mathcal{E}_{\boldsymbol{k}}+\frac{U_s}{2}\big(1-\alpha\cos(\Delta\theta_1-\Delta\theta_3)\big)+G_{k_0}, \\
        M_{1,3} &= \frac{U_s}{4}\left(2e^{i(\theta_1^\uparrow-\theta_3^\uparrow)}+\alpha e^{i(\theta_1^\downarrow-\theta_3^\downarrow)}\right), \\
        M_{1,7}(\boldsymbol{k}) &= s_{\boldsymbol{k}}+\frac{U_s\alpha}{2}\left( e^{i(\theta_1^{\downarrow}-\theta_{1}^{\uparrow})} + e^{i(\theta_3^\downarrow-\theta_3^\uparrow)}\right), \\
        M_{2,8}(\boldsymbol{k}) &= -s_{\boldsymbol{k}}+\frac{U_s\alpha}{2}\left( e^{i(\theta_1^{\downarrow}-\theta_{1}^{\uparrow})} + e^{i(\theta_3^\downarrow-\theta_3^\uparrow)}\right), \\
        M_{1,9} &= \frac{U_s\alpha}{4}e^{i(\theta_1^\downarrow-\theta_3^\uparrow)}, \mbox{\qquad\qquad} M_{1,11} =  \frac{U_s\alpha}{4}e^{-i(\theta_1^\uparrow-\theta_3^\downarrow)}, \\ %\stackrel{(\ref{eq:SWangles})}{=} -iM_{1,9}^*
        M_{7,9} &= \frac{U_s}{4}\left(2e^{i(\theta_1^\downarrow-\theta_3^\downarrow)}+\alpha e^{i(\theta_1^\uparrow-\theta_3^\uparrow)}\right) , \\ %\stackrel{(\ref{eq:SWangles})}{=} -M_{1,3}
    \end{split}
\end{align}
while the elements in $M_2^*$ are
\begin{equation}
\begin{alignedat}{2}
M_{13,2} &= U_s e^{i(\theta_1^\uparrow+\theta_3^\uparrow)},  &\qquad  M_{13,4} &= \frac{U_s}{4}e^{i2\theta_1^\uparrow}, \\
M_{13,6} &= \frac{U_s}{4}e^{i2\theta_3^\uparrow}, & M_{13,8} &= \frac{U_s\alpha}{2}\left(e^{i(\theta_1^\downarrow+\theta_3^\uparrow)}+e^{i(\theta_1^\uparrow+\theta_3^\downarrow)}\right),   \\
 M_{13,10} &= \frac{U_s\alpha}{4}e^{i(\theta_1^\downarrow+\theta_1^\uparrow)},  &   M_{13,12} &= \frac{U_s\alpha}{4}e^{i(\theta_3^\downarrow+\theta_3^\uparrow)},  \\
M_{19,8} &= U_s e^{i(\theta_1^\downarrow+\theta_3^\downarrow)},  & M_{19,10} &= \frac{U_s}{4}e^{i2\theta_1^\downarrow},   \\
 M_{19,12} &= \frac{U_s}{4}e^{i2\theta_3^\downarrow}. & & \\
\end{alignedat}
\end{equation}
Numerically we obtain $24$ eigenvalues of $M_{\boldsymbol{k}}J$, 8 of which are within numerical accuracy $0$ for all $\boldsymbol{k}$. The remaining 16 may be written $\lambda(\boldsymbol{k}) = \pm\Omega_i(\boldsymbol{k}),$ $i=1,2,\dots,8$. The eigenvalues are ordered such that $\Omega_i(\boldsymbol{k}) \geq \Omega_j(\boldsymbol{k})$ if $j>i$. If the angles obey \eqref{eq:gammathetapi} which minimizes $H_0$, these 8 separate eigenvalues reduce to 4 double eigenvalues. Within numerical accuracy, the eigenvalues are inversion symmetric $\Omega_i(-\boldsymbol{k}) = \Omega_i(\boldsymbol{k})$. By calculating the BV norms of the eigenvectors numerically, it is found that for $\pm\Omega_5, \pm\Omega_6, \pm\Omega_7$ and $\pm\Omega_8$ it is the negative eigenvalues that have eigenvectors with positive BV norm. Equivalently then, the positive eigenvalues have eigenvectors with negative BV norm. In the transformation matrix $T_{\boldsymbol{k}}$ the eigenvectors with positive BV norm need to be placed in the left half to satisfy $JT_{\boldsymbol{k}}^\dagger J = T_{\boldsymbol{k}}^{-1}$ or equivalently $T_{\boldsymbol{k}}^\dagger JT_{\boldsymbol{k}} = J$. Thus, the diagonalized matrix $T_{\boldsymbol{k}}^{-1}M_{\boldsymbol{k}}JT_{\boldsymbol{k}} =  D_{\boldsymbol{k}}J$ is 
\begin{align*}
    D_{\boldsymbol{k}}J = \textrm{diag}&\Big(\Omega_1(\boldsymbol{k}), \Omega_2(\boldsymbol{k}), \Omega_3(\boldsymbol{k}), \Omega_4(\boldsymbol{k}), -\Omega_5(\boldsymbol{k}), -\Omega_6(\boldsymbol{k}), \\
    &-\Omega_7(\boldsymbol{k}), -\Omega_8(\boldsymbol{k}), 0, 0, 0, 0, \\
    &-\Omega_1(\boldsymbol{k}), -\Omega_2(\boldsymbol{k}), -\Omega_3(\boldsymbol{k}), -\Omega_4(\boldsymbol{k}), \Omega_5(\boldsymbol{k}), \Omega_6(\boldsymbol{k}), \\
    &\Omega_7(\boldsymbol{k}), \Omega_8(\boldsymbol{k}), 0, 0, 0, 0 \Big).
\end{align*}
To obtain the diagonal matrix $D_{\boldsymbol{k}}$ that enters the Hamiltonian we multiply from the right by $J$, and find
\begin{align*}
    D_{\boldsymbol{k}} = \textrm{diag}&\Big(\Omega_1(\boldsymbol{k}), \Omega_2(\boldsymbol{k}), \Omega_3(\boldsymbol{k}), \Omega_4(\boldsymbol{k}), -\Omega_5(\boldsymbol{k}), -\Omega_6(\boldsymbol{k}), \\
    &-\Omega_7(\boldsymbol{k}), -\Omega_8(\boldsymbol{k}), 0, 0, 0, 0, \\
    &\Omega_1(\boldsymbol{k}), \Omega_2(\boldsymbol{k}), \Omega_3(\boldsymbol{k}), \Omega_4(\boldsymbol{k}), -\Omega_5(\boldsymbol{k}), -\Omega_6(\boldsymbol{k}), \\
    &-\Omega_7(\boldsymbol{k}), -\Omega_8(\boldsymbol{k}), 0, 0, 0, 0 \Big).
\end{align*}
Hence, $H'_2 = \left.\sum_{\boldsymbol{k}}\right.^{'}(\boldsymbol{B}_{\boldsymbol{k}}^{\dagger} D_{\boldsymbol{k}}\boldsymbol{B}_{\boldsymbol{k}})/4$ becomes
 \begin{align}
    \begin{split}
        H'_2 = \frac{1}{2}\left.\sum_{\boldsymbol{k}}\right.^{'} &\Bigg( \sum_{\sigma=1}^{4}\Omega_\sigma(\boldsymbol{k}) \left(B_{\boldsymbol{k},\sigma'}^{\dagger}B_{\boldsymbol{k},\sigma'}+\frac12\right) \\
        &-\sum_{\sigma=5}^{8}\Omega_\sigma(\boldsymbol{k}) \left(B_{\boldsymbol{k},\sigma'}^{\dagger}B_{\boldsymbol{k},\sigma'}+\frac12\right) \\
        &+\sum_{\sigma=9}^{12} 0 \left(B_{\boldsymbol{k},\sigma'}^{\dagger}B_{\boldsymbol{k},\sigma'}+\frac12\right) \Bigg).
    \end{split}
\end{align}
This seems to indicate that one can lower the energy of the system by adding more quasiparticles of type $B_{\boldsymbol{k},\sigma'}$ with $\sigma' = 5', 6', 7', 8'$. In fact, a solution like this, where the negative energies have eigenvectors with positive BV norm are by Pethick and Smith \cite{PethickSmith} called anomalous modes. They state that the appearance of anomalous modes suggests there exists solutions of the Gross-Pitaevskii equation with lower energy than their original solution \cite{PethickSmith}.
%``The presence of an anomalous mode indicates that the original solution of the Gross-Pitaevskii equation is not the lowest energy state.'' \cite{PethickSmith} 
% However, we have not concerned ourselves with solutions of the Gross-Pitaevskii equation. 
However, our approach does not involve solving the Gross-Pitaevskii equation. On physical grounds, the Hamiltonian needs to be bounded from below. Hence, the most natural check here, is to investigate if $\langle H \rangle$ is bounded from below. 

To further explore this solution, and to obtain $\langle H \rangle$, we will add and subtract the maximum value of $\Omega_5(\boldsymbol{k})$, which we denote by $\Omega_0$, to all the bands. This can be thought of as a shift of the zero for the energies. One may also view this procedure as a redefinition of the chemical potential controlling the quasiparticles. We move the chemical potential to just below the lowest energy, such that according to Bose-Einstein statistics all bands have very low filling except for the macroscopic filling in the minima of the lowest band. When these minima occur at the condensate momenta, only the quasiparticle number operators at the condensate momenta have nonzero averages in the limit of zero temperature, and they are not a part of the sum in $H'_2$. Inserting this shift, we get
\begin{align}
    \begin{split}
        H'_2 = -\Omega_0 N_q + \frac{1}{2}\left.\sum_{\boldsymbol{k}}\right.^{'} \Bigg( &\sum_{\sigma=1}^{4}(\Omega_0+\Omega_\sigma(\boldsymbol{k})) \left(B_{\boldsymbol{k},\sigma'}^{\dagger}B_{\boldsymbol{k},\sigma'}+\frac12\right) \\
        &+\sum_{\sigma=5}^{8}(\Omega_0-\Omega_\sigma(\boldsymbol{k})) \left(B_{\boldsymbol{k},\sigma'}^{\dagger}B_{\boldsymbol{k},\sigma'}+\frac12\right) \\
        &+\sum_{\sigma=9}^{12} \Omega_0 \left(B_{\boldsymbol{k},\sigma'}^{\dagger}B_{\boldsymbol{k},\sigma'}+\frac12\right) \Bigg),
    \end{split}
\end{align}
where the quantity 
$$N_q \equiv \frac{1}{2}\left.\sum_{\boldsymbol{k}}\right.^{'}\sum_{\sigma=1}^{12}\left(B_{\boldsymbol{k},\sigma'}^{\dagger}B_{\boldsymbol{k},\sigma'}+\frac12\right)$$
was defined to simplify the expression. Also defining $\Delta\Omega_i = \Omega_{i}+\Omega_0$ for $i=1,2,3,4$,  $\Delta\Omega_i = \Omega_0$ for $i=5,6,7,8$, $\Delta\Omega_i = \Omega_0-\Delta\Omega_{i'} $ for $i = 9,10,11,12$ and $i' = 8,7,6,5$ and renumbering the operators correspondingly we get
\begin{equation}
     H'_2 = -\Omega_0 N_q +  \frac{1}{2}\left.\sum_{\boldsymbol{k}}\right.^{'}\sum_{\sigma=1}^{12} \Delta\Omega_\sigma(\boldsymbol{k})\left(B_{\boldsymbol{k},\sigma}^{\dagger}B_{\boldsymbol{k},\sigma}+\frac12\right).
\end{equation}
According to the definitions above, $\Delta\Omega_i(\boldsymbol{k}) \geq \Delta\Omega_j(\boldsymbol{k})$ if $j>i$, and hence, $\Delta\Omega_{12}(\boldsymbol{k}) = \Omega_0 - \Omega_5(\boldsymbol{k})$ is the lowest energy band.

The special momenta are treated in appendix \ref{app:SW}. The result is that the special treatment of $\boldsymbol{k} = \boldsymbol{0}$ may be incorporated in $H'_2$ by removing the restriction $\boldsymbol{k} \neq \boldsymbol{0}$. On the other hand, the special treatments of $\pm 2\boldsymbol{k}_{01}$ and $\pm3\boldsymbol{k}_{01}$ yielded eigenvalues that did not exactly correspond to the general excitation spectrum $\Omega_\sigma(\boldsymbol{k})$. Therefore, they are kept separate to be sure the treatment is mathematically sound. The physical significance of these deviations at specific values of $\boldsymbol{k}$ is however unclear. In general one would expect the excitation spectrum to be continuous as a function of $\boldsymbol{k}$, while the special values at $\pm 2\boldsymbol{k}_{01}$ and $\pm3\boldsymbol{k}_{01}$ indicate discontinuities.  

Before we minimize the free energy to find the variational parameters, we make some comments on the general excitation spectrum. It appears the eigenvalues remain real only in the vicinity of $k_0 = k_{0m}$. There is also a limit to how far the angles can deviate from \eqref{eq:gammathetapi} before the spectrum becomes complex. These cases are dynamical instabilities \cite{PethickSmith}. We will refer to energetic stability of the SW phase as when there are only two global minima of the energy spectrum which are placed at the condensate momenta $\pm\boldsymbol{k}_{01}$. Thus, investigating the excitation spectrum we can only claim the SW phase is energetically stable if $k_0$ is very close to $k_{0m}$. In figure \ref{fig:Swbandoff} we show the bands for angles that deviate from \eqref{eq:gammathetapi} and a $k_0$ different from $k_{0m}$ to visualize the comments made about the excitation spectrum thus far. Notice that these are not the choices of the variational parameters that minimize the free energy. 
\begin{figure}
    \centering
    \includegraphics[width=0.6\linewidth]{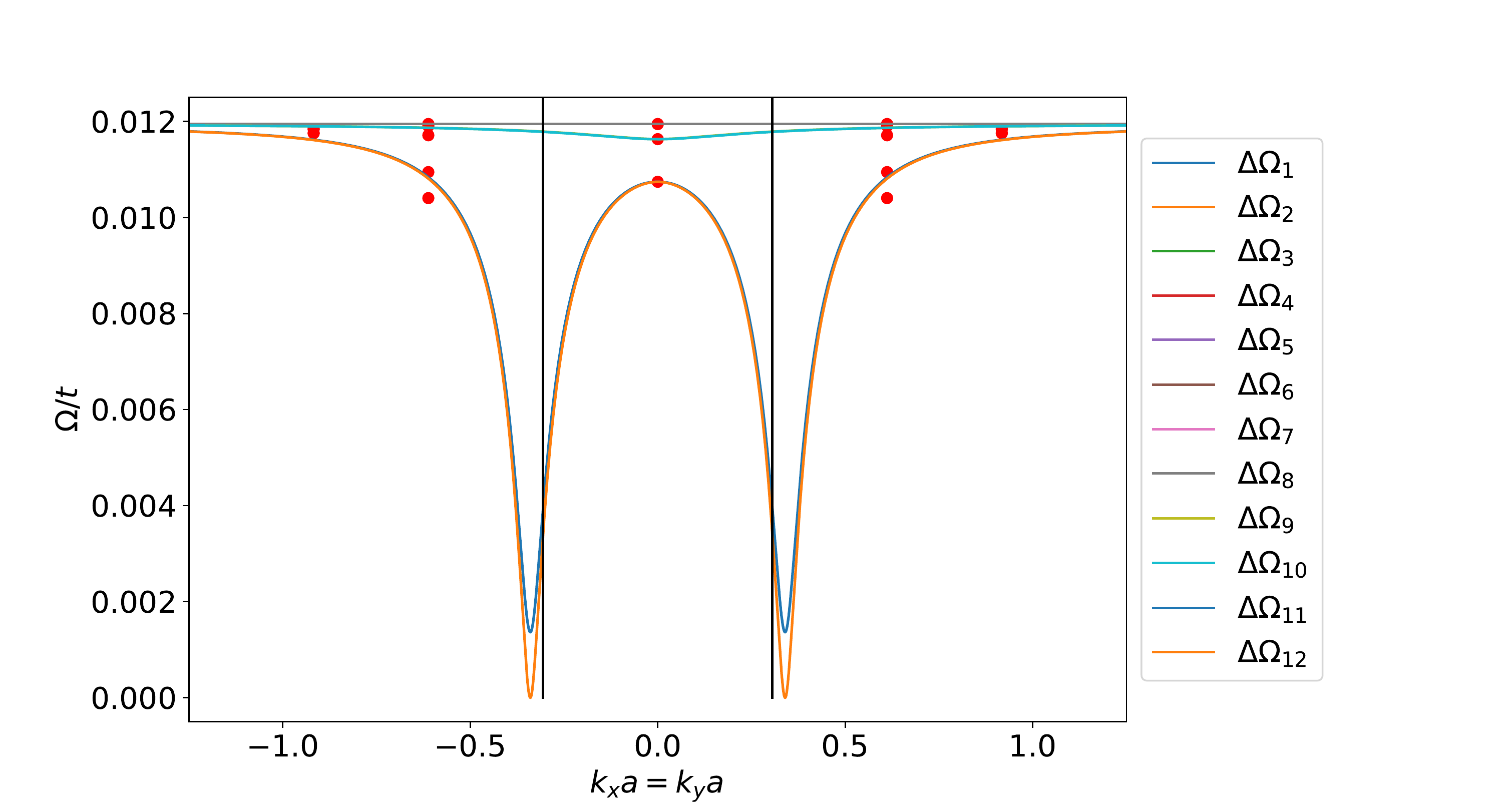}
    \caption{Shows the 8 lowest bands $\Delta\Omega_{\sigma}$ along the $k_x=k_y$ direction for $U_s/t = 0.05$, $\alpha = 1.5$ and $\lambda_R/t = 0.5$. The vertical lines show the position of $k_x=k_y=\pm k_0$. The red points show the special energies found at the special momenta. We set $k_0 = 0.9 k_{0m}$ and one can see that this choice means $\pm\boldsymbol{k}_{01}$ are not the global minima. Furthermore, we used $\Delta\theta_1 = 0.96\pi/4$ and $\Delta\theta_3 = 1.02\cdot5\pi/4$. Note that these are not the choices that minimize the free energy, and so the figure is purely illustrative. \label{fig:Swbandoff}}
\end{figure}

% \begin{figure}
%     \centering
%     \begin{subfigure}{.45\textwidth}
%       \includegraphics[width=\linewidth]{}
%       \caption{}
%     \end{subfigure}%
%     \begin{subfigure}{.45\textwidth}
%       \includegraphics[width=\linewidth]{}
%       \caption{}
%     \end{subfigure}%
%     \caption{Shows the bands $\Delta\Omega_{\sigma}$ along the $k_x=k_y$ direction for $U_s/t = 0.05$, $\alpha = 1.5$ and $\lambda_R/t = 0.5$. The vertical lines show the position of $k_x=k_y=\pm k_0$. The red points show the special energies found at the special momenta. In (b) we focus on the lower bands. We set $k_0 = 0.9 k_{0m}$ and one can see that this choice means $\pm\boldsymbol{k}_{01}$ are not the global minima. Further we used $\Delta\theta_1 = 0.96\pi/4$ and $\Delta\theta_3 = 1.02\cdot5\pi/4$ thus giving 9 separate bands. Notice that these are not the choices that minimize the free energy. \label{fig:Swbandoff}}
% \end{figure}

\subsection{Free Energy} \label{sec:SWF}
The Hamiltonian is now $H = H'_0 + H_1 + H'_2 + 2H_2(2\boldsymbol{k}_{01}) + 2H_2(3\boldsymbol{k}_{01})$. Here, $2H_2(2\boldsymbol{k}_{01})$ and  $2H_2(3\boldsymbol{k}_{01})$ are presented in equations \eqref{eq:SWspecial2k01} and \eqref{eq:SWspecial3k01} in appendix \ref{app:SW}. The first obstacle in calculating the free energy is how to treat the linear terms in $H_1$ \eqref{eq:SWH1}. As hinted at earlier, we will numerically transform $H_1$ to the new basis in which $H_2$ is diagonal. The reason we do this, is that we can then simply remove the linear terms by completing squares in $H_2$. This will lead to some operators being shifted by complex numbers, which does not alter the commutation relations, and therefore does not alter the physics described by these operators. The end result is that $H'_0$ is shifted by some real constants. We choose to use the $2H_2(3\boldsymbol{k}_{01})$ part of $H_2$ to perform this removal of linear terms in excitation operators. The transformation matrix for $\boldsymbol{k} = 3\boldsymbol{k}_{01}$ is named $T_3$, and the operator vectors $\boldsymbol{A}_{3k_0}$ and $\boldsymbol{B}_{3k_0}$. The definition of $\boldsymbol{A}_{3k_0}$ is
\begin{align}
    \begin{split}
        \boldsymbol{A}_{3k_0} = (&A_{3\boldsymbol{k}_{01}}^{\uparrow}, A_{-3\boldsymbol{k}_{01}}^{\uparrow}, A_{5\boldsymbol{k}_{01}}^{\uparrow}, A_{-5\boldsymbol{k}_{01}}^{\uparrow}, A_{3\boldsymbol{k}_{01}}^{\downarrow}, A_{-3\boldsymbol{k}_{01}}^{\downarrow}, A_{5\boldsymbol{k}_{01}}^{\downarrow}, A_{-5\boldsymbol{k}_{01}}^{\downarrow}, \\
        &A_{3\boldsymbol{k}_{01}}^{\uparrow\dagger}, A_{-3\boldsymbol{k}_{01}}^{\uparrow\dagger}, A_{5\boldsymbol{k}_{01}}^{\uparrow\dagger}, A_{-5\boldsymbol{k}_{01}}^{\uparrow\dagger}, A_{3\boldsymbol{k}_{01}}^{\downarrow\dagger}, A_{-3\boldsymbol{k}_{01}}^{\downarrow\dagger}, A_{5\boldsymbol{k}_{01}}^{\downarrow\dagger}, A_{-5\boldsymbol{k}_{01}}^{\downarrow\dagger})^T.
    \end{split}
\end{align}
We may revert to the primed numbering on the operators, such that
\begin{align}
    \begin{split}
        2H_2(3\boldsymbol{k}_{01}) &= \sum_{\sigma=1}^4 \omega_{3k_0, \sigma} \left( B_{3\boldsymbol{k}_{01},\sigma'}^{\dagger}B_{3\boldsymbol{k}_{01}, \sigma'} +\frac12  \right)\\
        &-\sum_{\sigma=5}^8 \omega_{3k_0, \sigma} \left( B_{3\boldsymbol{k}_{01},\sigma'}^{\dagger}B_{3\boldsymbol{k}_{01}, \sigma'} +\frac12  \right)
    \end{split}
\end{align}
and
\begin{align}
    \begin{split}
        \boldsymbol{B}_{3k_0} = (&B_{3\boldsymbol{k}_{01}, 1'}, B_{3\boldsymbol{k}_{01}, 2'}, B_{3\boldsymbol{k}_{01}, 3'}, B_{3\boldsymbol{k}_{01},4'}, B_{3\boldsymbol{k}_{01}, 5'}, B_{3\boldsymbol{k}_{01}, 6'}, B_{3\boldsymbol{k}_{01}, 7'}, B_{3\boldsymbol{k}_{01},8'}, \\
        &B_{3\boldsymbol{k}_{01}, 1'}^{\dagger}, B_{3\boldsymbol{k}_{01}, 2'}^{\dagger}, B_{3\boldsymbol{k}_{01}, 3'}^{\dagger}, B_{3\boldsymbol{k}_{01},4'}^{\dagger}, B_{3\boldsymbol{k}_{01}, 5'}^{\dagger}, B_{3\boldsymbol{k}_{01}, 6'}^{\dagger}, B_{3\boldsymbol{k}_{01}, 7'}^{\dagger}, B_{3\boldsymbol{k}_{01},8'}^{\dagger})^T.
    \end{split}
\end{align}
As an example, let us see how a treatment of the part $c B_{3\boldsymbol{k}_{01},1'} + c^* B_{3\boldsymbol{k}_{01},1'}^\dagger$ works.
\begin{align*}
    \begin{split}
        &\omega_{3k_0,1}B_{3\boldsymbol{k}_{01},1'}^{\dagger}B_{3\boldsymbol{k}_{01},1'} + c B_{3\boldsymbol{k}_{01},1'} +c^* B_{3\boldsymbol{k}_{01},1'}^{\dagger} \\
        &=\omega_{3k_0,1}\left(B_{3\boldsymbol{k}_{01},1'}^{\dagger}+\frac{c}{\omega_{3k_0,1}}\right)\left(B_{3\boldsymbol{k}_{01},  1'}+\frac{c^*}{\omega_{3k_0,1}}\right)-\frac{\abs{c}^2}{\omega_{3k_0,1}} \\ 
        &= \omega_{3k_0,1}\Tilde{B}_{3\boldsymbol{k}_{01},1'}^{\dagger}\Tilde{B}_{3\boldsymbol{k}_{01},1'}-\frac{\abs{c}^2}{\omega_{3k_0,1}}.
    \end{split}
\end{align*}
Finally, because the new operators obey the same commutation relations as the old, we remove the tilde on these. In conclusion, we need to find all the terms like $\abs{c}^2/\omega_{3k_0,1}$ and subtract them from $H'_0$ to get $\Tilde{H}_0$. We have that $\boldsymbol{B}_{3k_0}=T_3^\dagger \boldsymbol{A}_{3k_0}$, or conversely $\boldsymbol{A}_{3k_0} = JT_3 J \boldsymbol{B}_{3k_0}$, using that $T^{-1} = JT^\dagger J$ $\iff$ $(T^\dagger)^{-1} = JTJ$ by inversion. Thus we can see that e.g. $A_{3\boldsymbol{k}_{01}}^{\uparrow} = \sum_i (JT_3 J)_{1,i} (\boldsymbol{B}_{3k_0})_i$. All in all we find that
\begin{align}
    \begin{split}
        H_1 &= \sum_{i=1}^{16} \bigg( c_+^{\uparrow*}(JT_3 J)_{1,i} + c_+^{\uparrow}(JT_3 J)_{9,i} +c_+^{\downarrow*} (JT_3 J)_{5,i} +  c_+^{\downarrow}(JT_3 J)_{13,i} \\
        &+c_-^{\uparrow*} (JT_3 J)_{2,i} + c_-^{\uparrow} (JT_3 J)_{10,i} + c_-^{\downarrow*} (JT_3 J)_{6,i} + c_-^{\downarrow} (JT_3 J)_{14,i}\bigg) (\boldsymbol{B}_{3k_0})_i.
    \end{split}
\end{align}
We write this as $H_1 = \sum_i c_i (\boldsymbol{B}_{3k_0})_i$, and note that because of the form of $T$, $c_{i+8} = c_i^*$, meaning it is enough to consider the first 8 values of $i$. Thus, the final equation needed to find $\Tilde{H}_0$ is
\begin{equation}
\label{eq:H13k0noOm0}
    \Tilde{H}_0 = H'_0 -\sum_{i=1}^{4} \frac{\abs{c_i}^2}{\omega_{3k_0, i}} + \sum_{i=5}^{8} \frac{\abs{c_i}^2}{\omega_{3k_0, i}},
\end{equation}
where $H'_0$ is given in \eqref{eq:SWH0prime}. The plus sign in the second sum is because the energies $\omega_{3k_0, i},$ $i=5,6,7,8$ enter the diagonalized version with a negative sign. We note it was only possible to use the primed numbering on the operators because all the energies $\omega_{3k_0, i}$ are nonzero.

The Hamiltonian is now $H = \Tilde{H}_0 + H'_2 + 2H_2(2\boldsymbol{k}_{01}) + 2H_2(3\boldsymbol{k}_{01})$. A remaining question is how we should treat the terms
\begin{equation}
    -\Omega_0 N_q = -\Omega_0 \frac{1}{2}\sum_{\boldsymbol{k}\neq \pm\boldsymbol{k}_{01}, \pm2\boldsymbol{k}_{01}, \pm3\boldsymbol{k}_{01}}\sum_{\sigma=1}^{12} \left( B_{\boldsymbol{k},\sigma}^{\dagger}B_{\boldsymbol{k},\sigma} +\frac12 \right),
\end{equation}
\begin{equation}
    -\Omega_0 N_{q,2k_0} = -\Omega_0 \sum_{\sigma=1}^{10} \left(B_{2\boldsymbol{k}_{01},\sigma}^{\dagger}B_{2\boldsymbol{k}_{01}, \sigma} + \frac12\right)
\end{equation}
and
\begin{equation}
    -\Omega_0 N_{q,3k_0} = -\Omega_0 \sum_{\sigma=1}^{8} \left( B_{3\boldsymbol{k}_{01},\sigma}^{\dagger}B_{3\boldsymbol{k}_{01}, \sigma} +\frac12 \right).
\end{equation}
The negative prefactor means we can not treat them in the usual way we treat number operators when calculating the free energy. They must instead be treated as numbers, and moved into the operator independent part of the Hamiltonian. These numbers can be calculated using the Bose-Einstein distribution with zero chemical potential because the quasiparticles are non-interacting and thus behave like an ideal Bose gas as explained in chapter 4.3 of \cite{Pitaevskii}. Technically, by using $\Delta\Omega_i(\boldsymbol{k})$ rather than $\Omega_i(\boldsymbol{k}), i=1,2,3,4, -\Omega_{i}(\boldsymbol{k}), i=5,6,7,8$ and $0$ we have shifted the chemical potential from $-\Omega_0$ to $0$. We choose to think of $\Delta\Omega_i(\boldsymbol{k})$ as the excitation energies, and the chemical potential as zero. Hence,
\begin{equation}
    \langle B_{\boldsymbol{k},\sigma}^{\dagger}B_{\boldsymbol{k},\sigma} \rangle = \frac{1}{e^{\beta\Delta\Omega_\sigma(\boldsymbol{k})}-1}.
\end{equation}
\begin{equation}
    \langle B_{2\boldsymbol{k}_{01},\sigma}^{\dagger}B_{2\boldsymbol{k}_{01},\sigma} \rangle = \frac{1}{e^{\beta\Delta\omega_{2k_0,\sigma}}-1},
\end{equation}
and
\begin{equation}
    \langle B_{3\boldsymbol{k}_{01},\sigma}^{\dagger}B_{3\boldsymbol{k}_{01},\sigma} \rangle = \frac{1}{e^{\beta\Delta\omega_{3k_0,\sigma}}-1}.
\end{equation}
Given that for $\boldsymbol{k} \neq \pm \boldsymbol{k}_{01}$ these energies are all nonzero, in the limit of $\beta \to \infty$ these expectation values are all zero. In total, the terms originating with redefinition of zero for energies then contribute
\begin{align}
    \begin{split}
        &-\frac{1}{4}\sum_{\boldsymbol{k}\neq \pm\boldsymbol{k}_{01}, \pm2\boldsymbol{k}_{01}, \pm3\boldsymbol{k}_{01}}\sum_{\sigma=1}^{12} \Omega_0 - \frac12  \sum_{\sigma=1}^{10} \Omega_0 - \frac{1}{2} \sum_{\sigma=1}^{8}  \Omega_0 \\
        &= -3\Omega_0(N_s-6) -5\Omega_0 - 4\Omega_0 = -3\Omega_0 (N_s-3).
    \end{split}
\end{align}
% \begin{equation}
%     \langle B_{2\boldsymbol{k}_{01},2\sigma-1}^{\dagger}B_{2\boldsymbol{k}_{01},2\sigma-1} \rangle = \langle B_{2\boldsymbol{k}_{01},2\sigma}^{\dagger}B_{2\boldsymbol{k}_{01},2\sigma} \rangle = \frac{1}{e^{\beta\Delta\omega_{2k_0,\sigma}}-1},
% \end{equation}
% and
% \begin{equation}
%     \langle B_{3\boldsymbol{k}_{01},2\sigma-1}^{\dagger}B_{3\boldsymbol{k}_{01},2\sigma-1} \rangle = \langle B_{3\boldsymbol{k}_{01},2\sigma}^{\dagger}B_{3\boldsymbol{k}_{01},2\sigma} \rangle = \frac{1}{e^{\beta\Delta\omega_{3k_0,\sigma}}-1}.
% \end{equation}
We define $\Tilde{H}_{0}^{'}$ as the operator independent part of the Hamiltonian including quantum correction, shift from incorporating $H_1$ and a shift due to the redefinition of zero for the energies. All in all, the Hamiltonian is
\begin{align}
    \begin{split}
    \label{eq:SWfullH}
        H &= \Tilde{H}_{0}^{'} + \frac{1}{2}\left.\sum_{\boldsymbol{k}}\right.^{'}\sum_{\sigma=1}^{12} \Delta\Omega_\sigma(\boldsymbol{k})\left(B_{\boldsymbol{k},\sigma}^{\dagger}B_{\boldsymbol{k},\sigma}+\frac12\right) \\
        &+\sum_{\sigma=1}^{10} \Delta\omega_{2k_0, \sigma} \left( B_{2\boldsymbol{k}_{01},\sigma}^{\dagger}B_{2\boldsymbol{k}_{01}, \sigma} + \frac12  \right) \\
        &+\sum_{\sigma=1}^8 \Delta\omega_{3k_0, \sigma} \left( B_{3\boldsymbol{k}_{01},\sigma}^{\dagger}B_{3\boldsymbol{k}_{01}, \sigma}  + \frac{1}{2} \right),
    \end{split}
\end{align}
where %, using \eqref{eq:H13k0noOm0},
\begin{align}
    \begin{split}
    \label{eq:H0tildeprime}
        \Tilde{H}_{0}^{'} &= N(\epsilon_{\boldsymbol{k}_{01}}+T)+ \frac{UN^2}{8N_s}\Big(3+\alpha\big(2+\cos(\Delta\theta_1-\Delta\theta_3)\big)\Big)\\
        & + \frac{N}{2}\abs{s_{\boldsymbol{k}_{01}}}\big(\cos(\gamma_{\boldsymbol{k}_{01}}+\Delta\theta_1) +\cos(\gamma_{\boldsymbol{k}_{03}}+\Delta\theta_3)\big) \\
        &  -8t\cos(k_0 a) -(N_s-2)\left(4t+\frac{U_s}{2}\big(1-\alpha\cos(\Delta\theta_1-\Delta\theta_3)\big)+G_{k_0}\right) \\
        &- \sum_{i=1}^{4} \frac{\abs{c_i}^2}{\omega_{3k_0, i}} + \sum_{i=5}^{8} \frac{\abs{c_i}^2}{\omega_{3k_0, i}}  -3\Omega_0 (N_s-3).
    \end{split}
\end{align}
%The calculation of $F_{\textrm{SW}}$ at nonzero temperature is left for appendix \ref{app:SW}. 
Once again we focus on $\beta \to \infty$ and find
\begin{align}
    \begin{split}
        F_{\textrm{SW}} = \langle H_{\textrm{SW}} \rangle = &\Tilde{H}_{0}^{'} + \frac{1}{4}\left.\sum_{\boldsymbol{k}}\right.^{'}\sum_{\sigma=1}^{12} \Delta\Omega_\sigma(\boldsymbol{k}) \\
        &+ \frac{1}{2}\sum_{\sigma=1}^{10} \Delta\omega_{2k_0, \sigma} + \frac{1}{2}\sum_{\sigma=1}^8 \Delta\omega_{3k_0, \sigma},
    \end{split}
\end{align}
where the sum  $\left.\sum_{\boldsymbol{k}}\right.^{'}$ excludes $\pm \boldsymbol{k}_{01}, \pm 2\boldsymbol{k}_{01}$ and $\pm 3\boldsymbol{k}_{01}$. The idea for minimization of $F_{\textrm{SW}}$ is similar to the approach in the PW phase. We keep $N/N_s = 1$ and $U/t = 0.1$ fixed such that $U_s/t = 0.05$. We also fix $\alpha$ and $\lambda_R$ to appropriate values. Then we vary $k_0$ to find the value of $k_0$ that minimizes $F_\textrm{SW}$ which will be named $k_{0\textrm{min}}$. As mentioned when investigating the excitation spectrum, the SW phase is energetically stable only if it is $k_{0m}$ that minimizes $F_{\textrm{SW}}$. We also notice from investigating the excitation spectrum that it remains real only close to $k_{0m}$. Our calculation of $F_{\textrm{SW}}$ only makes sense when the energies are real and we can therefore only investigate the set of $k_0$ values that render the excitation spectrum real. The minimum of $F_{\textrm{SW}}$ within this set will be used, unless it is at the boundary. 

\begin{figure}
    \centering
    \begin{subfigure}{.49\textwidth}
      \includegraphics[width=\linewidth]{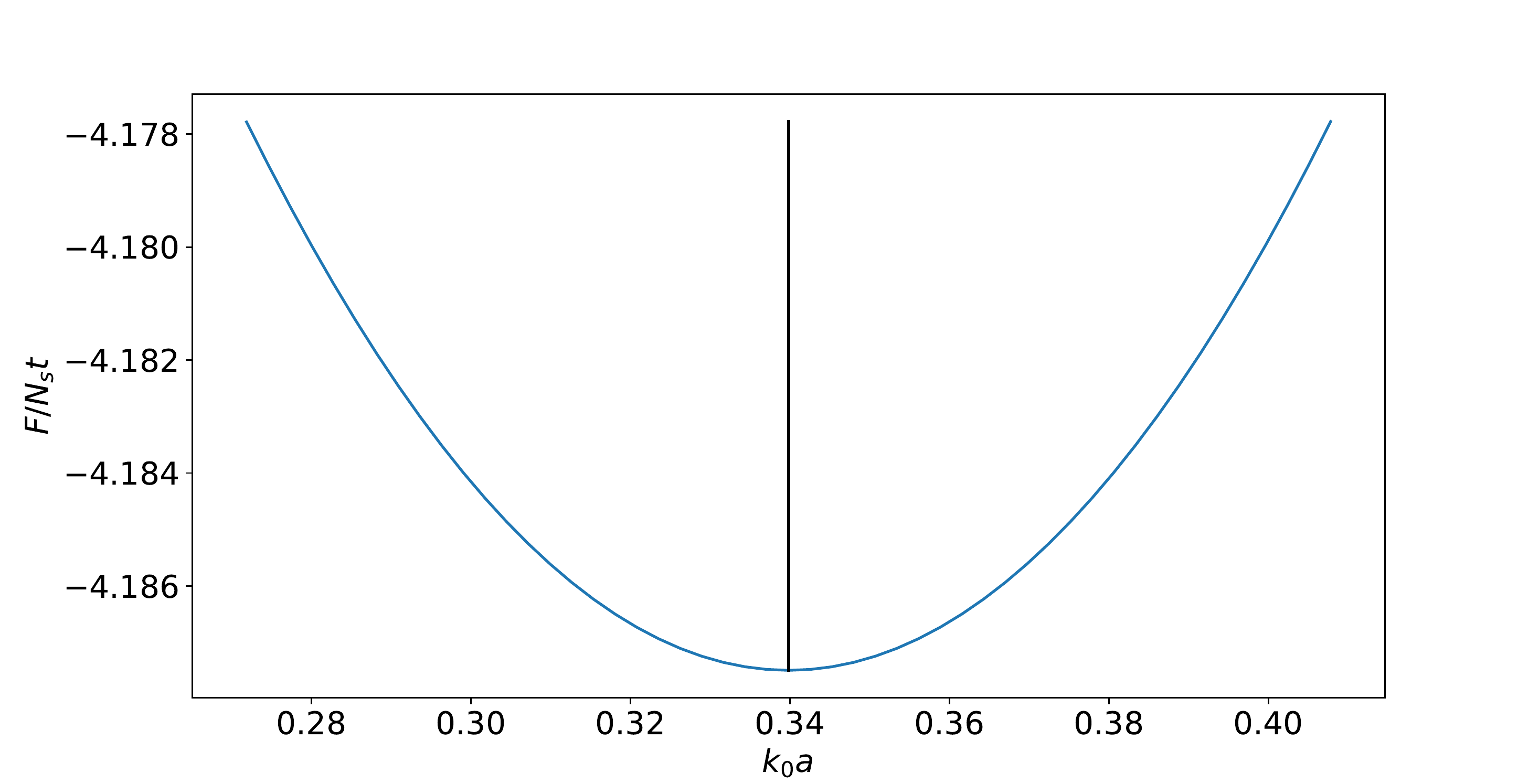}
      \caption{ }
      \label{fig:SWFzoom}
    \end{subfigure}%
    \begin{subfigure}{.49\textwidth}
      \includegraphics[width=\linewidth]{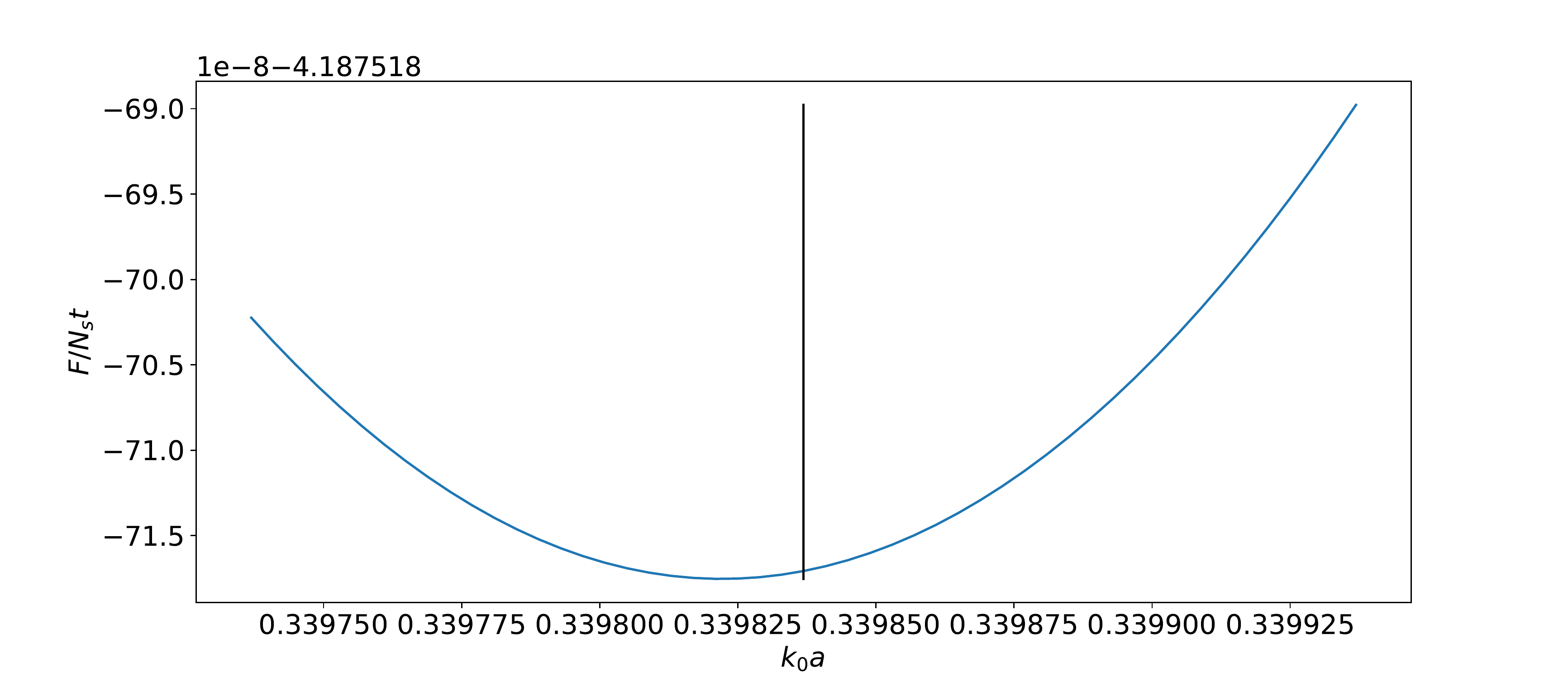}
      \caption{ }
      \label{fig:SWFzoom2}
    \end{subfigure}
    \caption{A plot of $F_{\textrm{SW}}$ as a function of $k_0$. The black vertical line shows the position of $k_0=k_{0m}$. $51$ values of $k_0$ were considered in a set slightly larger than the set for which the energies are real in (a) (only the real parts were used in calculations). The parameters were $T=0$, $U_s/t = 0.05$, $\alpha = 1.5$ and $\lambda_R/t = 0.5$. In (b) we focus on the minimum, and a lattice size of $4\cdot10^{4}$ was used. The minimum moves closer to $k_{0m}$ as the lattice size is increased. \label{fig:SWF}}
\end{figure}

Just as we observed in the PW phase, $k_{0\textrm{min}}$ approaches $k_{0m}$ as $N_s$ increases. For a lattice size of $4\cdot 10^{4}$ we find $k_{0\textrm{min}} a \approx 0.339821$ while $k_{0m} a \approx 0.339837$. The relative error is of order $\order{10^{-5}}$ and should approach zero as the lattice size is increased. The corresponding relative difference in $F_{\textrm{SW}}$ is of order $\order{10^{-10}}$. We therefore state that $k_0 = k_{0\textrm{min}} = k_{0m}$ minimizes $F_{\textrm{SW}}$. 

We also find that satisfying \eqref{eq:gammathetapi} minimizes $F_{\textrm{SW}}$ in terms of the differences $\Delta \theta_1$ and $\Delta\theta_3$. This was found by first assuming $\Delta\theta_3 = 5\pi/4$ and $\theta_1^\downarrow = \pi/4$. Then $F_{\textrm{SW}}$ was calculated for different $\theta_1^\uparrow$, and within numerical accuracy, $\theta_1^\uparrow = 0$ was found to minimize $F_{\textrm{SW}}$. Next, $\Delta\theta_1 = \pi/4$ and $\theta_3^\downarrow = 5\pi/4$ was assumed. It was then found that $\theta_3^\uparrow = 0$ minimizes the free energy. Hence, with $\theta_1^\downarrow = \theta_1^\uparrow + \pi/4$ and $\theta_3^\downarrow = \theta_3^\uparrow + 5\pi/4$ determined, the remaining  free parameters are $\theta_1^\uparrow$ and $\theta_3^\uparrow$. 

We start by investigating minimization in terms of $\theta_1^\uparrow-\theta_3^\uparrow$. It is found that $\theta_1^\uparrow-\theta_3^\uparrow = -\pi/4$ is optimal. Hence, $\theta_3^\uparrow = \theta_1^\uparrow + \pi/4$ and the only remaining angle to vary is $\theta_1^\uparrow$. The relative variations in $F_{\textrm{SW}}$ in terms of this final angle are negligible (of order $\order{10^{-14}}$), and we conclude that $\theta_1^\uparrow$ is free. Choosing a value for $\theta_1^\uparrow$ the remaining angles should be set to
\begin{equation}
\label{eq:SWangles}
    \theta_1^\downarrow = \theta_3^\uparrow = \theta_1^\uparrow + \frac{\pi}{4} \mbox{\qquad\qquad and \qquad\qquad} \theta_3^\downarrow = \theta_3^\uparrow + \frac{5\pi}{4} = \theta_1^\uparrow + \frac{3\pi}{2}.
\end{equation}
It was also found that the dependence of $F_{\textrm{SW}}$ on $\theta_1^\uparrow-\theta_3^\uparrow$ came solely from the contribution of the excitation spectrum.

% Using \eqref{eq:H13k0noOm0} and \eqref{eq:SWangles} we have
% \begin{align}
%     \begin{split}
%         \Tilde{H}_{0}^{'} &= N(\epsilon_{\boldsymbol{k}_{01}}+T-\abs{s_{\boldsymbol{k}_{01}}})+ \frac{UN^2}{8N_s}(3+\alpha)\\
%         &  -4tN_s\cos(k_0 a) -(N_s-2)\left(2\sqrt{2}\lambda_R\abs{\sin(k_0 a)} + \frac{U_s}{2}(1+\alpha) \right) \\
%         &- \sum_{i=1}^{4} \frac{\abs{c_i}^2}{\omega_{3k_0, i}} + \sum_{i=5}^{8} \frac{\abs{c_i}^2}{\omega_{3k_0, i}}  -3\Omega_0 (N_s-3).
%     \end{split}
% \end{align}

\subsection{Spin Basis Excitation Spectrum}
Based on minimization of $F_{\textrm{SW}} = \langle H_{\textrm{SW}} \rangle $ at zero temperature, we know that $k_0 = k_{0m}$ minimizes the free energy. We also know that once $\theta_1^\uparrow$ is set, the other angles follow \eqref{eq:SWangles}. Given that the angles obey \eqref{eq:gammathetapi} the number of bands are reduced to 5 separate bands. This enables us to make several simplifications. For the matrix elements, we make the following identifications
% \begin{align}
%     \begin{split}
%         \label{eq:SWelemnew}
%         M_{1,7}(\boldsymbol{k}) & \stackrel{(\ref{eq:gammathetapi})}{=} s_{\boldsymbol{k}}, \mbox{\qquad\qquad} M_{2,8}(\boldsymbol{k}) \stackrel{(\ref{eq:gammathetapi})}{=} -s_{\boldsymbol{k}}, \\
%         M_{1,11} &   \stackrel{(\ref{eq:gammathetapi})}{=} -iM_{1,9}^*, \mbox{\qquad\quad} M_{7,9}  \stackrel{(\ref{eq:gammathetapi})}{=} -M_{1,3}, \mbox{\qquad} M_{13,8} \stackrel{(\ref{eq:gammathetapi})}{=} 0, \\
%         M_{19,8} &\stackrel{(\ref{eq:gammathetapi})}{=} -iM_{13,2}, \mbox{\qquad} M_{19,10}  \stackrel{(\ref{eq:gammathetapi})}{=} iM_{13,4}, \mbox{\qquad} M_{19,12}  \stackrel{(\ref{eq:gammathetapi})}{=} iM_{13,6}. \\
%     \end{split}
% \end{align}
\begin{equation}
\begin{alignedat}{3}
M_{1,11} &   \stackrel{(\ref{eq:gammathetapi})}{=} -iM_{1,9}^*,  &\qquad  M_{7,9}  &\stackrel{(\ref{eq:gammathetapi})}{=} -M_{1,3}, &\qquad  M_{13,8} &\stackrel{(\ref{eq:gammathetapi})}{=} 0,\\
 M_{19,8} &\stackrel{(\ref{eq:gammathetapi})}{=} -iM_{13,2},  &  M_{19,10}  &\stackrel{(\ref{eq:gammathetapi})}{=} iM_{13,4},  &  M_{19,12}  &\stackrel{(\ref{eq:gammathetapi})}{=} iM_{13,6},\\
 M_{1,7}(\boldsymbol{k}) & \stackrel{(\ref{eq:gammathetapi})}{=} s_{\boldsymbol{k}},  & M_{2,8}(\boldsymbol{k}) &\stackrel{(\ref{eq:gammathetapi})}{=} -s_{\boldsymbol{k}}.  &  &\\
\end{alignedat}
\end{equation}
The 16 nonzero eigenvalues may now be written $\lambda(\boldsymbol{k}) = \pm\Omega_i(\boldsymbol{k}),$ $i=1,2,3,4$ all of which double eigenvalues. 
These are ordered such that $\Omega_1(\boldsymbol{k}) \geq \Omega_2(\boldsymbol{k}) \geq \Omega_3(\boldsymbol{k}) \geq \Omega_4(\boldsymbol{k})$. By similar arguments as given in the PZ phase, making sure the $M_{\boldsymbol{k}}$ matrix of the SW phase will give similar results, we assume the new operators corresponding to the same eigenvalues can be related by $B_{i+1, -\boldsymbol{k}} = B_{i, \boldsymbol{k}}$. Given that the eigenvalues are inversion symmetric, $\Omega_i(-\boldsymbol{k}) = \Omega_i(\boldsymbol{k})$, we can then limit ourselves to 6 new number operators, one for each nonzero band and 2 for the zero mode. 

\begin{figure}
    \centering
    \begin{subfigure}{.45\textwidth}
      \includegraphics[width=\linewidth]{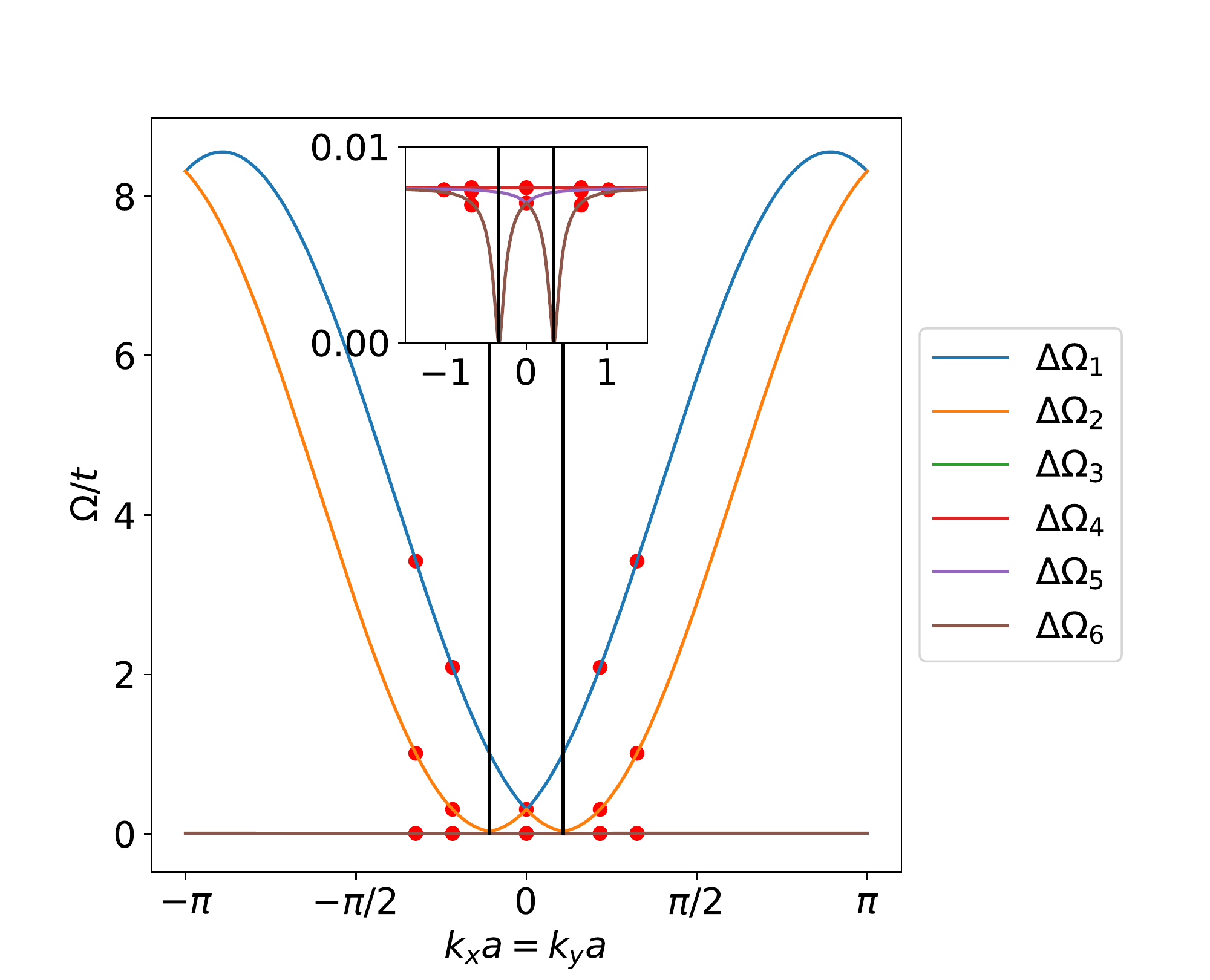}
      \caption{}
    \end{subfigure}%
    \begin{subfigure}{.45\textwidth}
      \includegraphics[width=\linewidth]{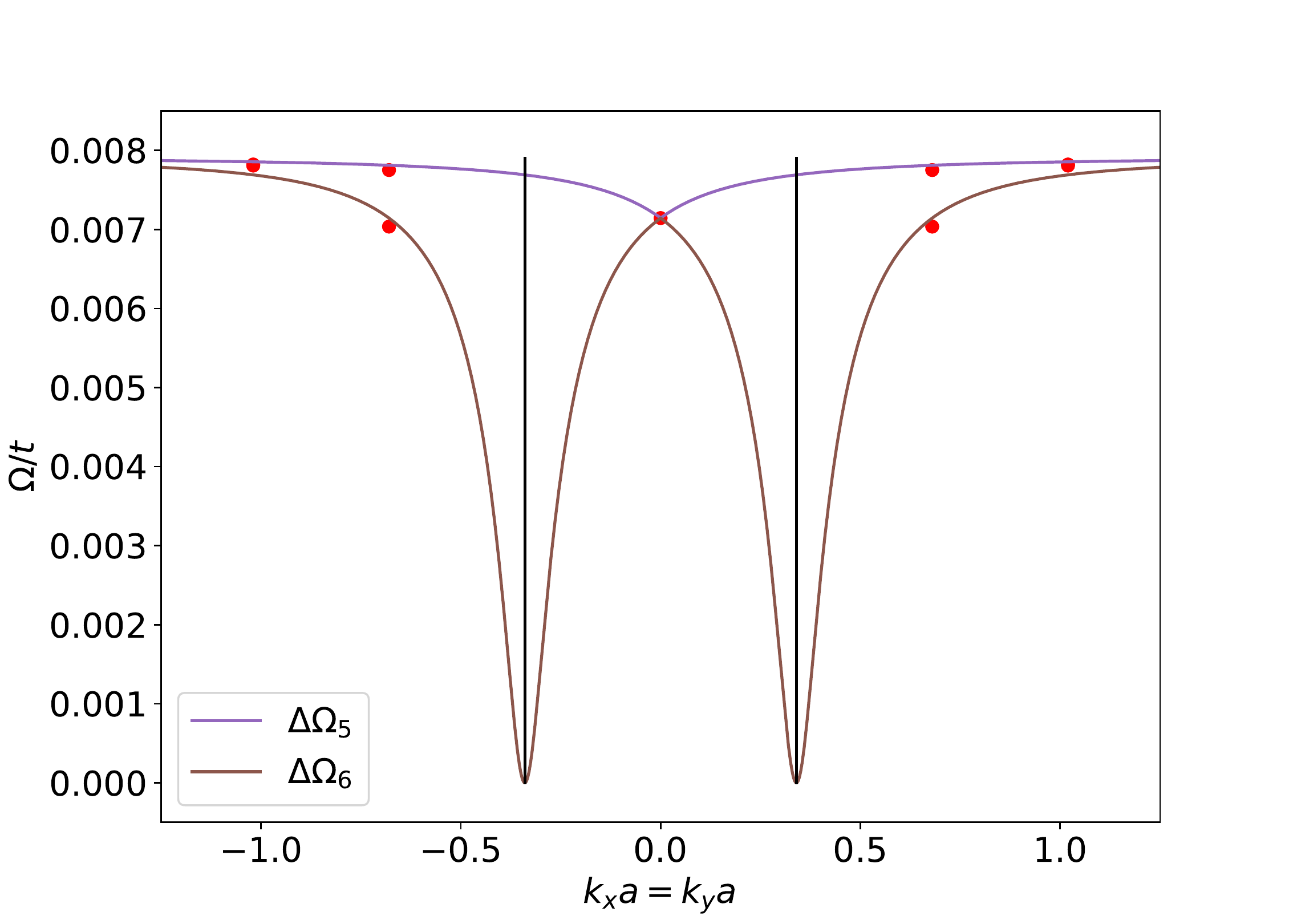}
      \caption{}
    \end{subfigure}%
    \caption{Shows the bands $\Delta\Omega_{\sigma}(\boldsymbol{k})$ along the $k_x=k_y$ direction for $U_s/t = 0.05$, $\alpha = 1.5$ and $\lambda_R/t = 0.5$. The black vertical lines show the position of $k_x=k_y=\pm k_{0m}$, while the red points show the special energies found at the special momenta. The inset in (a) shows the four lowest bands, while in (b) we focus on the two lowest bands.  \label{fig:SWband}}
\end{figure}

Proceeding similarly to the case where the angles were undetermined, we define 
$$N_q \equiv \sum_{\boldsymbol{k} \neq \pm\boldsymbol{k}_{01}, \pm2\boldsymbol{k}_{01}, \pm3\boldsymbol{k}_{01}}\sum_{\sigma=1}^6 B_{\boldsymbol{k},\sigma}^{\dagger}B_{\boldsymbol{k},\sigma}$$
and $\Omega_0$ as the maximum value of $\Omega_3(\boldsymbol{k})$. Also defining $\Delta\Omega_1 = \Omega_1+\Omega_0$, $\Delta\Omega_2 = \Omega_2+\Omega_0$, $\Delta\Omega_3 = \Delta\Omega_4 = \Omega_0$, $\Delta\Omega_5 = \Omega_0-\Omega_4 $ and $\Delta\Omega_{6} = \Omega_0-\Omega_3$, we get
\begin{equation}
     H'_2 = -\Omega_0 N_q + \sum_{\boldsymbol{k}\neq \pm\boldsymbol{k}_{01}, \pm2\boldsymbol{k}_{01}, \pm3\boldsymbol{k}_{01}}\sum_{\sigma=1}^6 \Delta\Omega_\sigma(\boldsymbol{k})\left(B_{\boldsymbol{k},\sigma}^{\dagger}B_{\boldsymbol{k},\sigma}+\frac12\right).
\end{equation}
We again used that $\boldsymbol{k} = \boldsymbol{0}$ can be incorporated in $H'_2$. Notice that both the energy bands and the operators have been given a new numbering. Similarly, the eigenvalues at $\pm2\boldsymbol{k}_{01}$ and $\pm3\boldsymbol{k}_{01}$ become double when the angels satisfy \eqref{eq:gammathetapi}. However the expressions for $H_2(\pm2\boldsymbol{k}_{01})$ and $H_2(\pm3\boldsymbol{k}_{01})$ do not become much simpler.

Figures \ref{fig:SWband} and \ref{fig:SWeigen} show the energy spectrum. The figures show that the SW phase is energetically stable at the chosen parameters. In addition to the gapless roton minima at $\pm\boldsymbol{k}_{01}$ there are gapped roton minima close to $\pm\boldsymbol{k}_{02}$. For $\alpha<1$ some eigenvalues become complex, indicating a dynamical instability. As long as $\alpha>1$, the eigenvalues, including those at the special momenta, remain real. Since we originally had two degrees of freedom, pseudospin up and down, we believe only the two lowest bands $\Delta\Omega_5(\boldsymbol{k})$ and $\Delta\Omega_6(\boldsymbol{k})$ are significant in the sense that the other bands are never occupied. The lowest band is clearly non-linear even close to the minima, and the critical superfluid velocity therefore seems to be zero. In the next section we will use the helicity approximation and find a spectrum which is linear close to the minimum. 

\begin{figure}
    \centering
    \includegraphics[width=0.9\linewidth]{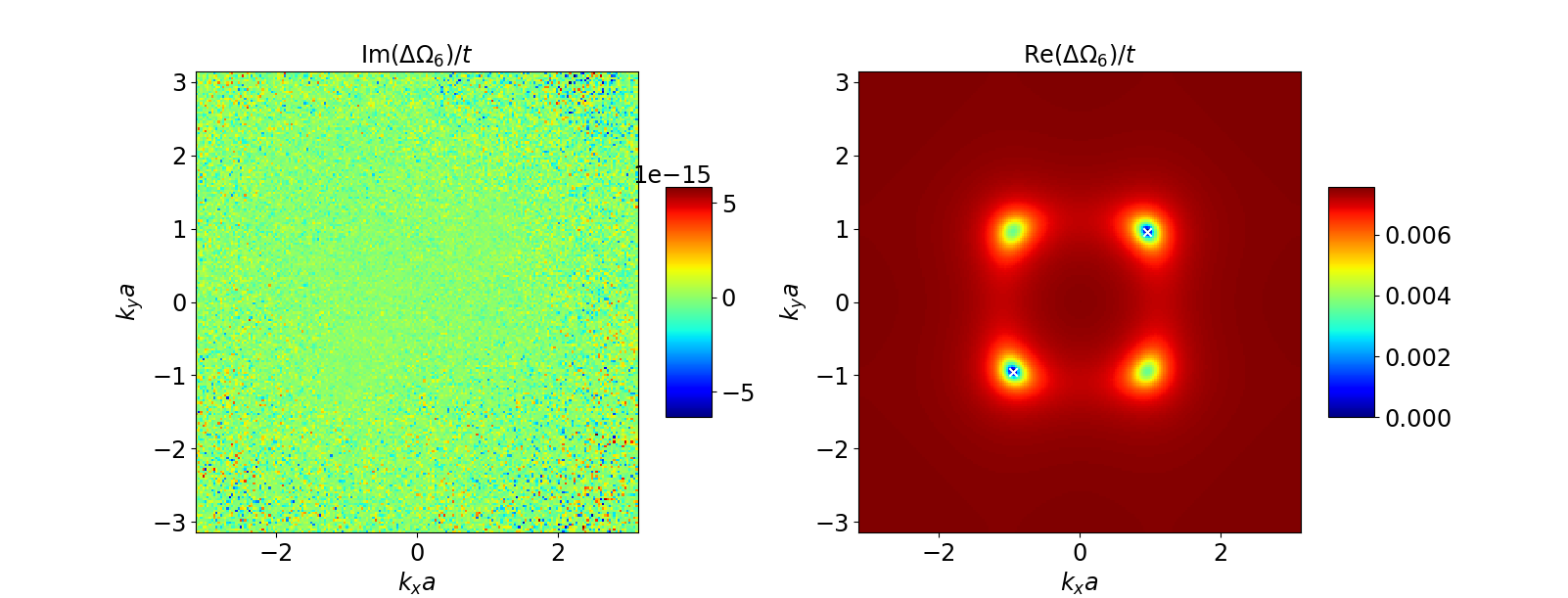}
    \caption{Shows the lowest energy $\Delta\Omega_6(\boldsymbol{k})$ in the first Brillouin zone. The white crosses show the position of $\boldsymbol{k} = \pm\boldsymbol{k}_{01}$, which are clearly the only global minima of the spectrum. We also observe gapped roton minima at $\pm\boldsymbol{k}_{02}$. The parameters were $U_s/t = 0.05$, $\alpha = 1.5$, $\lambda_R/t = 2.0$ and $N_s = 4\cdot 10^{4}$.}
    \label{fig:SWeigen}
\end{figure}

%However, an energetic instability occurs at too large $\alpha$ where the lowest eigenvalue at $\pm 2\boldsymbol{k}_{01}$,  $\Delta \omega_{2k_0, 5}$, drops below zero. For $U_s/t = 0.05$ and $\lambda_R/t = 0.1$ this happens at $\alpha \gtrsim 1.72$. This limit quickly increases for stronger SOC. We must require $\omega_{2k_0, 5} < \Omega_0$ which can be used to obtain clearer limits on $\alpha$. For that matter we must also require $\omega_{3k_0, 5} < \Omega_0$ but this appears to be valid whenever the first is valid. Further investigation is required, but this limit on $\alpha$ appears to be slightly less than the separation between the PZ phase and the SW phase in the phase diagram based on $H_0$ in figure \ref{fig:PDH0}. Thus, if we limit ourselves to these states at $\alpha>1$ it appears the PZ phase will move slightly into the SW phase, where the SW phase becomes energetically unstable.
%, and the above energetic instability is avoided

Now assume we are in a parameter regime where the eigenvalues are real. In our calculation of $\langle H_{\textrm{SW}} \rangle $ at zero temperature there was no indication that it is not bounded from below. The occurrence of anomalous modes may indicate an energetic instability in the context of solving the Gross-Pitaevskii equation \cite{PethickSmith}. In the approach we have used here, which involves transforming the description to a new basis wherein the system behaves like an ideal Bose gas of quasiparticles, the Hamiltonian was found to be bounded from below, at least in the sense that $\langle H_{\textrm{SW}} \rangle $ at zero temperature has a minimal value which is finite, i.e. not $-\infty$. In terms of the quasiparticle description, there is no lower energy state than the one where all quasiparticles occupy the lowest energy at the condensate momenta $\pm\boldsymbol{k}_{01}$. This is both a BEC and describes the SW phase, suggesting it is stable.

% The occurrence of anomalous modes may indicate an energetic instability in the sense that the lowest energy solution of the Gross-Pitaevskii equation has not been found \cite{PethickSmith}.
%It therefore certainly seems the SW phase is stable. 

% It is discussed in \cite{Pitaevskii} that dilute bosonic gases are only metastable states anyway. In actual systems, three-body scatterings, which we have ignored, will eventually lead to solidification. The gas phase is however metastable in the sense that it is both realizable, and stable long enough to ``allow for systematic measurements of many relevant physical quantities'' \cite{Pitaevskii}.  It can also be shown that BEC is only possible in 2D for linear dispersion relations. Hence, the quadratic dispersion relation we have obtained in the SW phase may indicate it is not a BEC, hence violating our intitial assumption. However, dilute atomic gases in optical lattices are small, finite sized systems. The result that BEC is only possible in 2D for linear dispersion relations is found in the thermodynamic limit, where the volume of the system goes to infinity, in such a way that the particle density remains constant. 

\subsection{Lowest Energy using Helicity Basis}
Similarly to what was done in the PW phase to obtain analytic eigenvalues we attempt to transform the problem to the helicity basis \eqref{eq:Helicitybasis} and then, since we are focused on BEC, we keep only the lowest band \eqref{eq:helicitybands}. We believe it was natural to first go through the calculation in the original spin basis because all bands are relevant to the calculation of the free energy at zero temperature, $\langle H \rangle$. We thus use the above results for the variational parameters in the following, i.e. $k_0 = k_{0m}$ and \eqref{eq:SWangles} for the angles. This calculation should be well suited to investigate the lowest band, which is the most interesting band in the context of BEC and to obtain the critical superfluid velocity.

Before we turn on interactions, the vast majority of the helicity quasiparticles should reside in the four minima of $\lambda_{\boldsymbol{k}}^-$. Our intuition is that the weak interactions should pick out a certain ground state, and that the energies close to the condensate momenta obtain a Bogoliubov effect such that they become phonon minima. This is what happened for the weakly interacting Bose gas, and in the phases PZ, NZ and PW. Using
\begin{equation}
    \begin{pmatrix} A_{\boldsymbol{k}}^\uparrow \\ A_{\boldsymbol{k}}^\downarrow \end{pmatrix} \approx \frac{1}{\sqrt{2}} \begin{pmatrix} -e^{-i\gamma_{\boldsymbol{k}}} C_{\boldsymbol{k}}\\ C_{\boldsymbol{k}}  \end{pmatrix}
\end{equation}
$H_2$ becomes
\begin{align}
    \begin{split}
        H_2 = \left.\sum_{\boldsymbol{k}}\right.^{''}& \Bigg\{ N_{11}(\boldsymbol{k})C_{\boldsymbol{k}}^{\dagger}C_{\boldsymbol{k}} +\Bigg(\bigg[N_{13}(\boldsymbol{k})C_{\boldsymbol{k}}^{\dagger}C_{\boldsymbol{k}+2\boldsymbol{k}_{01}} +N_{15}(\boldsymbol{k})C_{\boldsymbol{k}}^{\dagger}C_{\boldsymbol{k}-2\boldsymbol{k}_{01}}\\
        &+ (N_{72}/2)(\boldsymbol{k})C_{\boldsymbol{k}}C_{-\boldsymbol{k}} + N_{74}(\boldsymbol{k})C_{\boldsymbol{k}}C_{-\boldsymbol{k}+2\boldsymbol{k}_{01}}\\
        &+ N_{76}(\boldsymbol{k})C_{\boldsymbol{k}}C_{-\boldsymbol{k}-2\boldsymbol{k}_{01}}   \bigg] + \textrm{H.c.} \Bigg)\Bigg\}.
    \end{split}
\end{align}
Here, 
% \begin{align}
%     \begin{split}
%         N_{11}(\boldsymbol{k}) &= M_{1,1}(\boldsymbol{k}) - \abs{s_{\boldsymbol{k}}}, \\
%         N_{13}(\boldsymbol{k}) &= \frac{M_{1,3}e^{i(\gamma_{\boldsymbol{k}}-\gamma_{\boldsymbol{k}+2\boldsymbol{k}_{01}})}}{2} - \frac{M_{1,3}}{2} - \frac{iM_{1,9} e^{-i\gamma_{\boldsymbol{k}+2\boldsymbol{k}_{01}}}}{2}-\frac{M_{1,9}e^{i\gamma_{\boldsymbol{k}}}}{2}, \\
%         N_{15}(\boldsymbol{k}) &=\frac{M_{1,3}^*e^{i(\gamma_{\boldsymbol{k}}-\gamma_{\boldsymbol{k}-2\boldsymbol{k}_{01}})}}{2} - \frac{M_{1,3}^*}{2} - \frac{M_{1,9}^* e^{-i\gamma_{\boldsymbol{k}-2\boldsymbol{k}_{01}}}}{2}+\frac{iM_{1,9}^* e^{i\gamma_{\boldsymbol{k}}}}{2}, \\
%         N_{72}(\boldsymbol{k}) &= -\frac{M_{13,2}e^{-i2\gamma_{\boldsymbol{k}}}}{2}-\frac{iM_{13,2}}{2}, \\
%         N_{74}(\boldsymbol{k}) &= \frac{M_{13,4}e^{-i(\gamma_{\boldsymbol{k}}+\gamma_{-\boldsymbol{k}+2\boldsymbol{k}_{01}})}}{2} + \frac{iM_{13,4}}{2} - \frac{M_{13,10}}{2}\left(e^{-i\gamma_{-\boldsymbol{k}+2\boldsymbol{k}_{01}}} + e^{-i\gamma_{\boldsymbol{k}}}\right), \\
%         N_{76}(\boldsymbol{k}) &= \frac{M_{13,6}e^{-i(\gamma_{\boldsymbol{k}}+\gamma_{-\boldsymbol{k}-2\boldsymbol{k}_{01}})}}{2} + \frac{iM_{13,6}}{2} - \frac{M_{13,12}}{2}\left(e^{-i\gamma_{-\boldsymbol{k}-2\boldsymbol{k}_{01}}} + e^{-i\gamma_{\boldsymbol{k}}}\right).
%     \end{split}
% \end{align}
\begin{align}
    \begin{split}
        N_{11}(\boldsymbol{k}) &= M_{1,1}(\boldsymbol{k}) - \abs{s_{\boldsymbol{k}}} -\frac{U_s\alpha}{2}\left(\cos(\gamma_{\boldsymbol{k}}+\theta_1^\downarrow -\theta_1^\uparrow) + \cos(\gamma_{\boldsymbol{k}}+\theta_3^\downarrow -\theta_3^\uparrow) \right), \\
        N_{13}(\boldsymbol{k}) &= \frac{M_{1,3}e^{i(\gamma_{\boldsymbol{k}}-\gamma_{\boldsymbol{k}+2\boldsymbol{k}_{01}})}}{2} + \frac{M_{7,9}}{2} - \frac{M_{1,11}^* e^{-i\gamma_{\boldsymbol{k}+2\boldsymbol{k}_{01}}}}{2}-\frac{M_{1,9}e^{i\gamma_{\boldsymbol{k}}}}{2}, \\
        N_{15}(\boldsymbol{k}) &=\frac{M_{1,3}^*e^{i(\gamma_{\boldsymbol{k}}-\gamma_{\boldsymbol{k}-2\boldsymbol{k}_{01}})}}{2} + \frac{M_{7,9}^*}{2} - \frac{M_{1,9}^* e^{-i\gamma_{\boldsymbol{k}-2\boldsymbol{k}_{01}}}}{2}-\frac{M_{1,11}e^{i\gamma_{\boldsymbol{k}}}}{2}, \\
        N_{72}(\boldsymbol{k}) &= -\frac{M_{13,2}e^{-i2\gamma_{\boldsymbol{k}}}}{2}+\frac{M_{19,8}}{2}, \\
        N_{74}(\boldsymbol{k}) &= \frac{M_{13,4}e^{-i(\gamma_{\boldsymbol{k}}+\gamma_{-\boldsymbol{k}+2\boldsymbol{k}_{01}})}}{2} + \frac{M_{19,10}}{2} - \frac{M_{13,10}}{2}\left(e^{-i\gamma_{-\boldsymbol{k}+2\boldsymbol{k}_{01}}} + e^{-i\gamma_{\boldsymbol{k}}}\right), \\
        N_{76}(\boldsymbol{k}) &= \frac{M_{13,6}e^{-i(\gamma_{\boldsymbol{k}}+\gamma_{-\boldsymbol{k}-2\boldsymbol{k}_{01}})}}{2} + \frac{M_{19,12}}{2} - \frac{M_{13,12}}{2}\left(e^{-i\gamma_{-\boldsymbol{k}-2\boldsymbol{k}_{01}}} + e^{-i\gamma_{\boldsymbol{k}}}\right).
    \end{split}
\end{align}
We used that $e^{-i\gamma_{-\boldsymbol{k}}} = - e^{-i\gamma_{\boldsymbol{k}}}$ because $s_{-\boldsymbol{k}} = -s_{\boldsymbol{k}}$. Using commutators and making $-\boldsymbol{k}$-terms explicit we find
\begin{equation}
    H_2 = \frac{1}{4} \left.\sum_{\boldsymbol{k}}\right.^{'} \boldsymbol{C}_{\boldsymbol{k}}^\dagger N_{\boldsymbol{k}}\boldsymbol{C}_{\boldsymbol{k}}
\end{equation}
The operator vector is
\begin{align}
    \begin{split}
        \boldsymbol{C}_{\boldsymbol{k}} = (&C_{\boldsymbol{k}}, C_{-\boldsymbol{k}}, C_{\boldsymbol{k}+2\boldsymbol{k}_{01}}, C_{-\boldsymbol{k}+2\boldsymbol{k}_{01}}, C_{\boldsymbol{k}-2\boldsymbol{k}_{01}}, C_{-\boldsymbol{k}-2\boldsymbol{k}_{01}}, \\
        &C_{\boldsymbol{k}}^{\dagger}, C_{-\boldsymbol{k}}^{\dagger}, C_{\boldsymbol{k}+2\boldsymbol{k}_{01}}^{\dagger}, C_{-\boldsymbol{k}+2\boldsymbol{k}_{01}}^{\dagger}, C_{\boldsymbol{k}-2\boldsymbol{k}_{01}}^{\dagger}, C_{-\boldsymbol{k}-2\boldsymbol{k}_{01}}^{\dagger} )^T .
    \end{split}
\end{align}
The matrix $N_{\boldsymbol{k}}$ takes the form
\begin{gather}
    N_{\boldsymbol{k}} = 
    \begin{pmatrix}
    N_1 & N_2 \\
    N_2^* & N_1^* \\
    \end{pmatrix},
\end{gather}
with
\setcounter{MaxMatrixCols}{6}
\begin{gather*}
    N_1 = 
    \begin{pmatrix}
    N_{11}(\boldsymbol{k}) & 0 & N_{13}(\boldsymbol{k}) & 0 & N_{15}(\boldsymbol{k}) & 0 \\
    0 & N_{11}(-\boldsymbol{k}) & 0 & N_{13}(-\boldsymbol{k}) & 0 & N_{15}(-\boldsymbol{k}) \\
    N_{13}^*(\boldsymbol{k}) & 0 & 0 & 0 & 0 & 0 \\
    0 & N_{13}^*(-\boldsymbol{k}) & 0 & 0 & 0 & 0 \\
    N_{15}^*(\boldsymbol{k}) & 0 & 0 & 0 & 0 & 0 \\
    0 & N_{15}^*(-\boldsymbol{k}) & 0 & 0 & 0 & 0 \\
    \end{pmatrix}
\end{gather*}
and
\begin{gather*}
    N_2^* = 
    \begin{pmatrix}
    0 & N_{72}(\boldsymbol{k}) & 0 & N_{74}(\boldsymbol{k}) & 0 & N_{76}(\boldsymbol{k}) \\
    N_{72}(\boldsymbol{k}) & 0 & N_{74}(-\boldsymbol{k}) & 0 & N_{76}(-\boldsymbol{k}) & 0 \\
    0 & N_{74}(-\boldsymbol{k}) & 0 & 0 & 0 & 0 \\
    N_{74}(\boldsymbol{k}) & 0 & 0 & 0 & 0 & 0 \\
    0 & N_{76}(-\boldsymbol{k}) & 0 & 0 & 0 & 0 \\
    N_{76}(\boldsymbol{k}) & 0 & 0 & 0 & 0 & 0 \\
    \end{pmatrix}.
\end{gather*}
The eigenvalues of $N_{\boldsymbol{k}}J$ are found numerically, as \textit{Maple} did not provide analytic eigenvalues. We focus only on the lowest band, as that is the one that is relevant for BEC. This turns out to be a double eigenvalue with an anomalous mode. We name the original positive energy $\Omega_H(\boldsymbol{k})$ and then the true lowest band $\Delta\Omega_H(\boldsymbol{k}) \equiv \max_{\boldsymbol{k}}\Omega_H(\boldsymbol{k}) - \Omega_{H}(\boldsymbol{k})$. The helicity basis is undefined at $\boldsymbol{k} = \boldsymbol{0}$. Therefore the current treatment does not cover the points $\boldsymbol{0}$ or $\pm2\boldsymbol{k}_{01}$. These will need to be treated in the original spin basis as has been done previously. The treatment of the special momenta $\pm3\boldsymbol{k}_{01}$ can be done in the helicity basis by the same procedure as in the original spin basis. 

\begin{figure}
    \centering
    \includegraphics[width=0.7\linewidth]{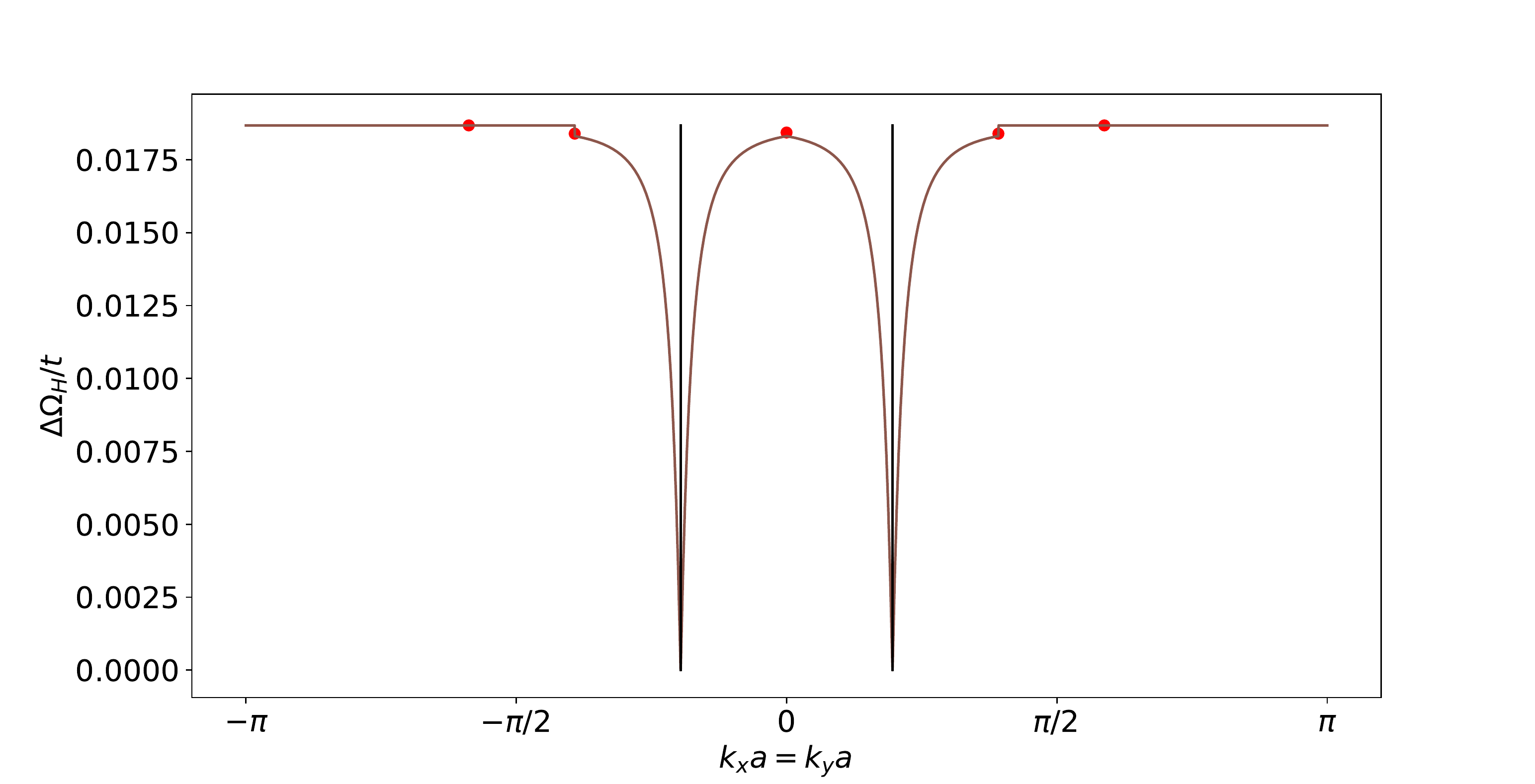}
    \caption{Shows the band $\Delta\Omega_{H}(\boldsymbol{k})$ along the $k_x=k_y$ direction for $U_s/t = 0.05$, $\alpha = 1.5$ and $\lambda_R/t = 1.0$. The black vertical lines show the position of $k_x=k_y=\pm k_{0m}$. The red points show the special energies found at the special momenta.}
    \label{fig:SWHelBand}
\end{figure}

\begin{figure}
    \centering
    \begin{subfigure}{.45\textwidth}
      \includegraphics[width=\linewidth]{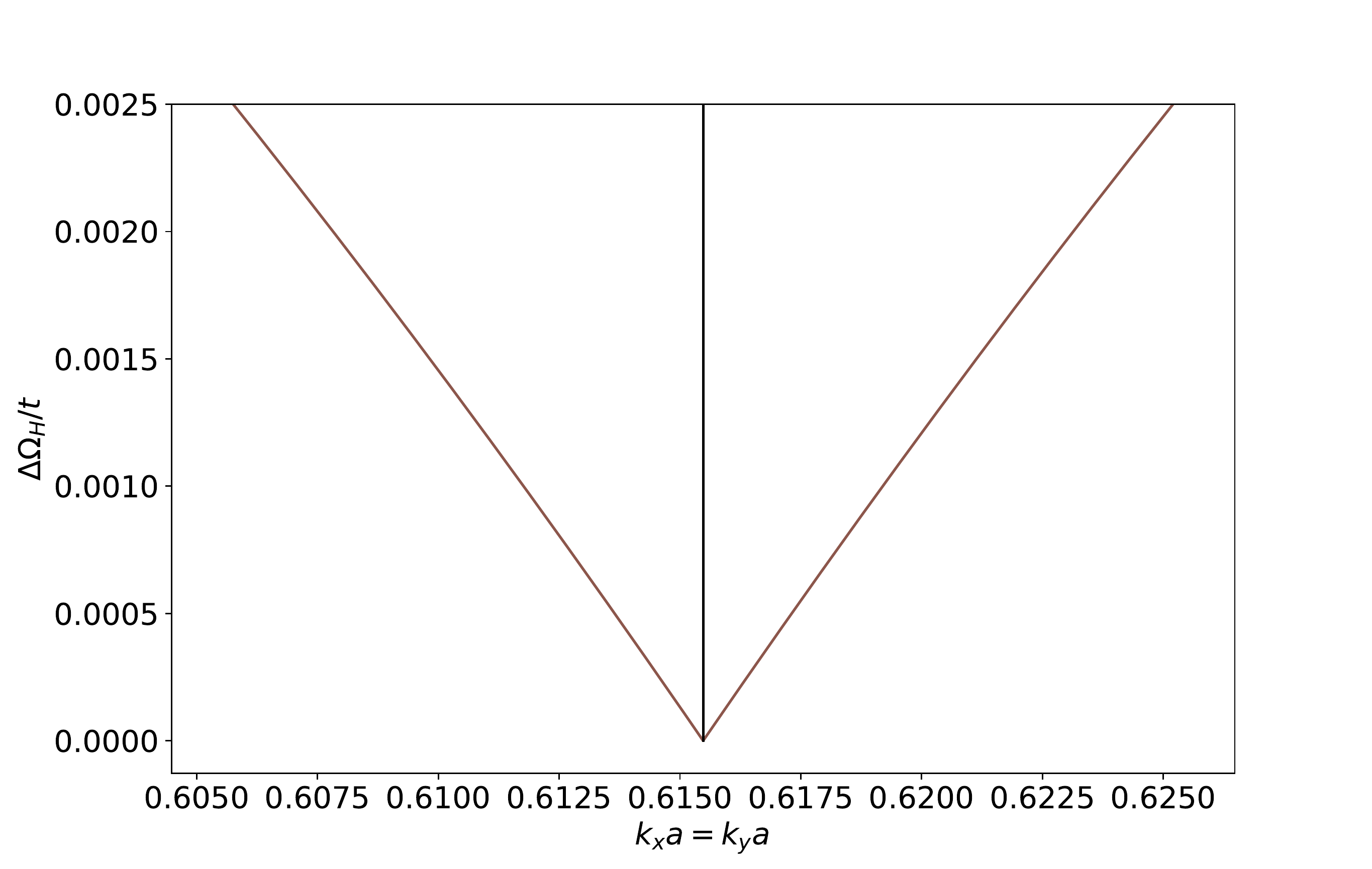}
      \caption{\label{fig:SWHelBandzoom}}
    \end{subfigure}%
    \begin{subfigure}{.45\textwidth}
      \includegraphics[width=\linewidth]{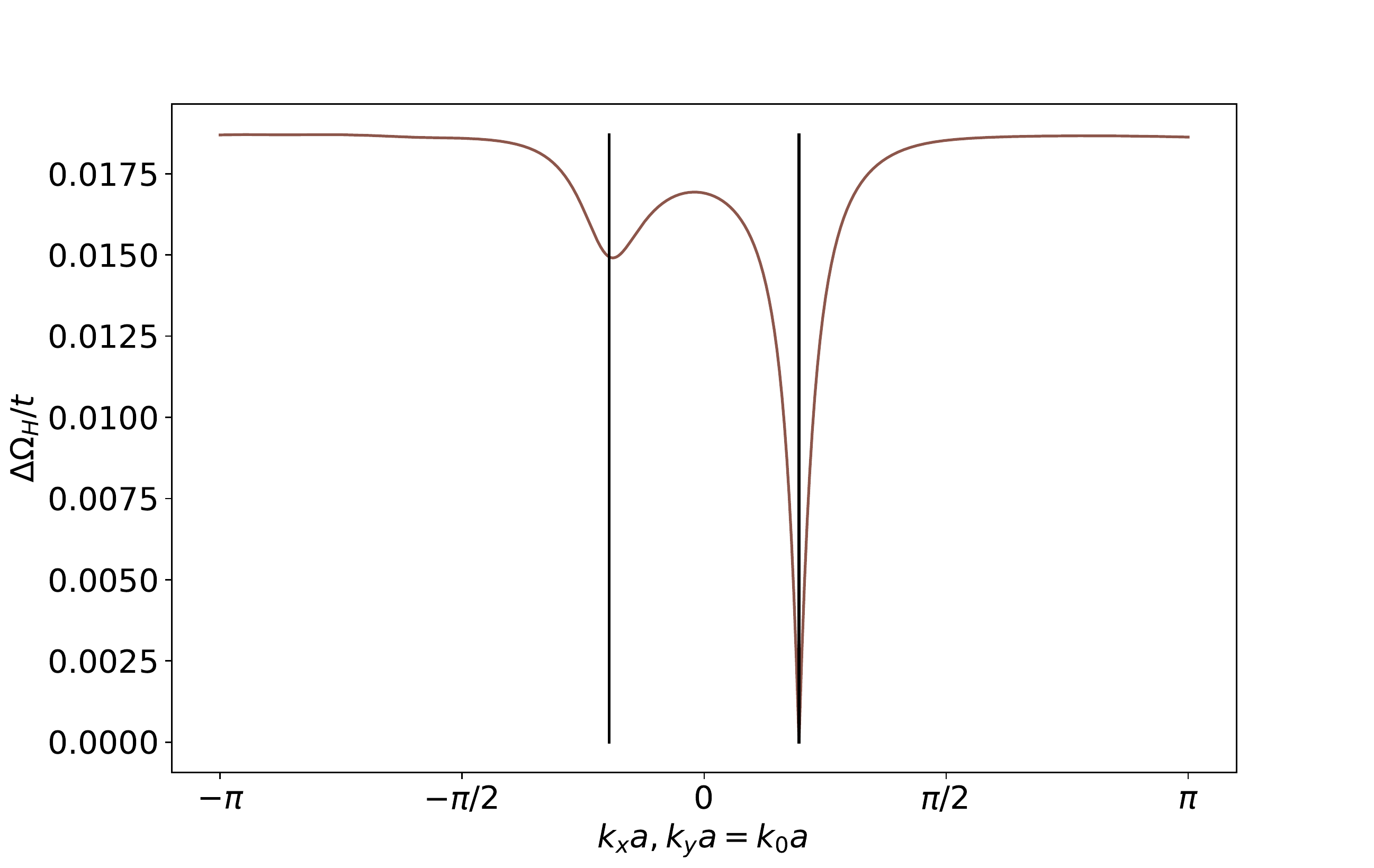}
      \caption{\label{fig:SWHelBandkyk0}}
    \end{subfigure}%
    \caption{(a) shows the linear behavior of $\Delta\Omega_{H}(\boldsymbol{k})$ close to the minimum at $\boldsymbol{k}_{01}$ along the $k_x=k_y$ direction. In (b) we show the band along $k_x$ when $k_y = k_0 = k_{0m}$. The parameters are $U_s/t = 0.05$, $\alpha = 1.5$ and $\lambda_R/t = 1.0$. The black vertical lines show the position of $k_x=k_y= k_{0m}$ (a) and $k_x = \pm k_{0m}$ (b).}
\end{figure}

\begin{figure}
    \centering
    \includegraphics[width=0.9\linewidth]{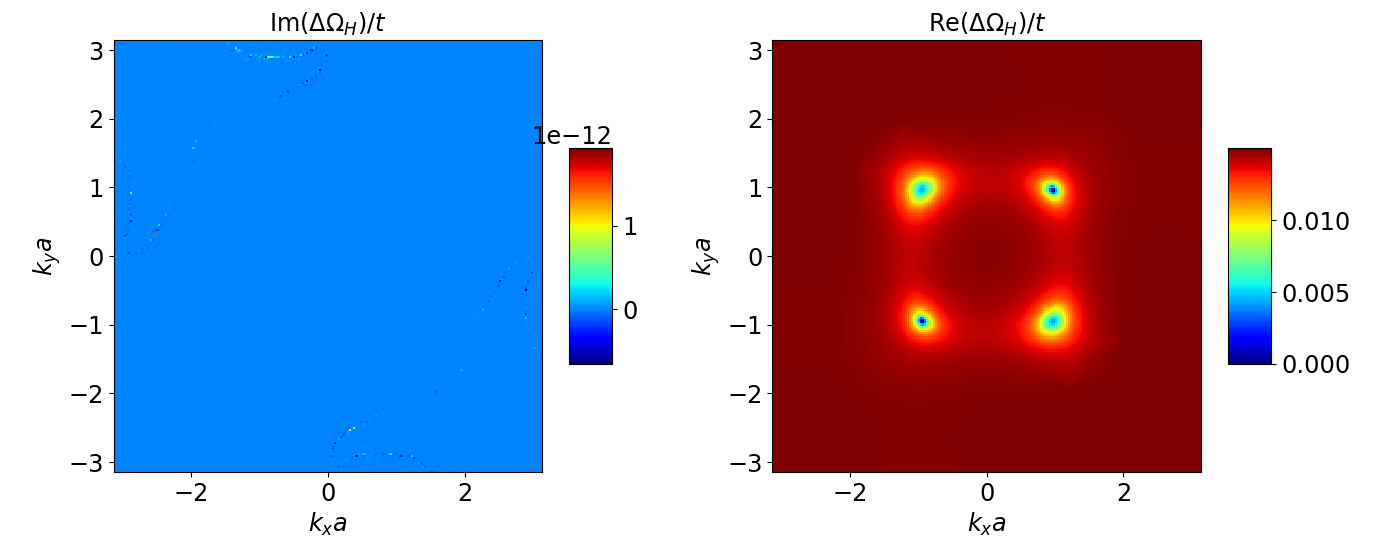}
    \caption{Shows real and imaginary parts of the lowest energy $\Delta\Omega_H(\boldsymbol{k})$ in the first Brillouin zone. The parameters were $U_s/t = 0.05$, $\alpha = 1.5$, $\lambda_R/t = 2.0$ and $N_s = 4\cdot 10^{4}$.}
    \label{fig:SWHeleigen}
\end{figure}

The band  $\Delta\Omega_H(\boldsymbol{k})$ is shown in figure \ref{fig:SWHelBand} along the direction $k_x = k_y$ and in figure \ref{fig:SWHelBandzoom} we focus on the linear behavior close to the minimum at $\boldsymbol{k}_{01}$, suggesting nonzero critical superfluid velocity. The discontinuities at $\boldsymbol{0}$ and $\pm2\boldsymbol{k}_{01}$ are because the helicity basis is undefined for some of the operators here. The lowest special value at $\pm3\boldsymbol{k}_{01}$ coincides with $\Delta\Omega_H(\pm3\boldsymbol{k}_{01})$. In figure \ref{fig:SWHelBandkyk0} we show the band along $k_x$ when $k_y = k_0 = k_{0m}$.  The lowest energy $\Delta\Omega_H(\boldsymbol{k})$ is shown in the 1BZ in figure \ref{fig:SWHeleigen}. We find two global phonon minima at the condensate momenta, and two gapped roton minima placed approximately at $\pm\boldsymbol{k}_{02}$. Accepting that the eigenvalue problem is more prone to numerical errors in the helicitiy basis, the imaginary parts are small enough too claim that the energy is real.

\subsubsection{Critical Superfluid Velocity}
We find an anisotropic critical superfluid velocity. As we did in the PW phase we will give plots of $v_c^{\textrm{min}}$, $v_c^{\textrm{max}}$ and $v_c(\phi)$. We use the minimum at $\boldsymbol{k}_{01}$ to obtain these, and the formulae are
\begin{equation}
    \label{eq:SWvcnum}
    v_c^{\textrm{min}} = \frac{\min_{\boldsymbol{q}}[\Delta\Omega_{H}(\boldsymbol{k}_{01}+\boldsymbol{q})]}{\abs{\boldsymbol{q}}}, \mbox{\qquad\qquad} v_c^{\textrm{max}} = \frac{\max_{\boldsymbol{q}}[\Delta\Omega_{H}(\boldsymbol{k}_{01}+\boldsymbol{q})]}{\abs{\boldsymbol{q}}}
\end{equation}
and
\begin{equation}
    v_c(\phi) =  \frac{\Delta\Omega_{H}\big(\boldsymbol{k}_{01}+\abs{\boldsymbol{q}}(\cos(\phi), \sin(\phi))\big)}{\abs{\boldsymbol{q}}}.
\end{equation}

\begin{figure}
    \centering
    \includegraphics[width=0.7\linewidth]{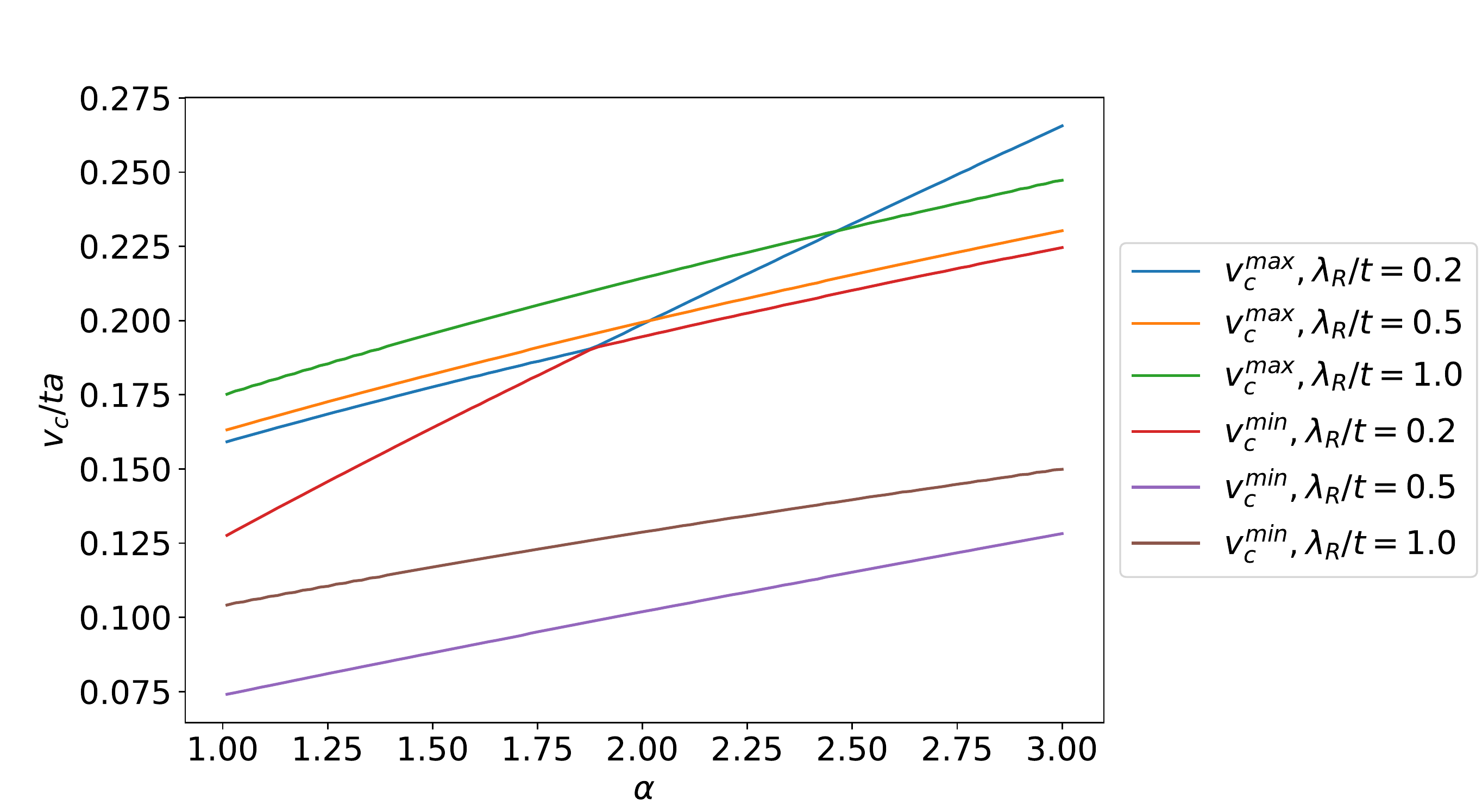}
    \caption{Maximum and minimum values of $v_c^{\textrm{SW}}$ are plotted against $\alpha$ for various $\lambda_R$ with $U_s/t = 0.05$. The value of $k_0 = k_{0m}$ was updated as $\lambda_R$ was changed.}
    \label{fig:SWsuperal}
\end{figure}

\begin{figure}
    \centering
    \includegraphics[width=0.7\linewidth]{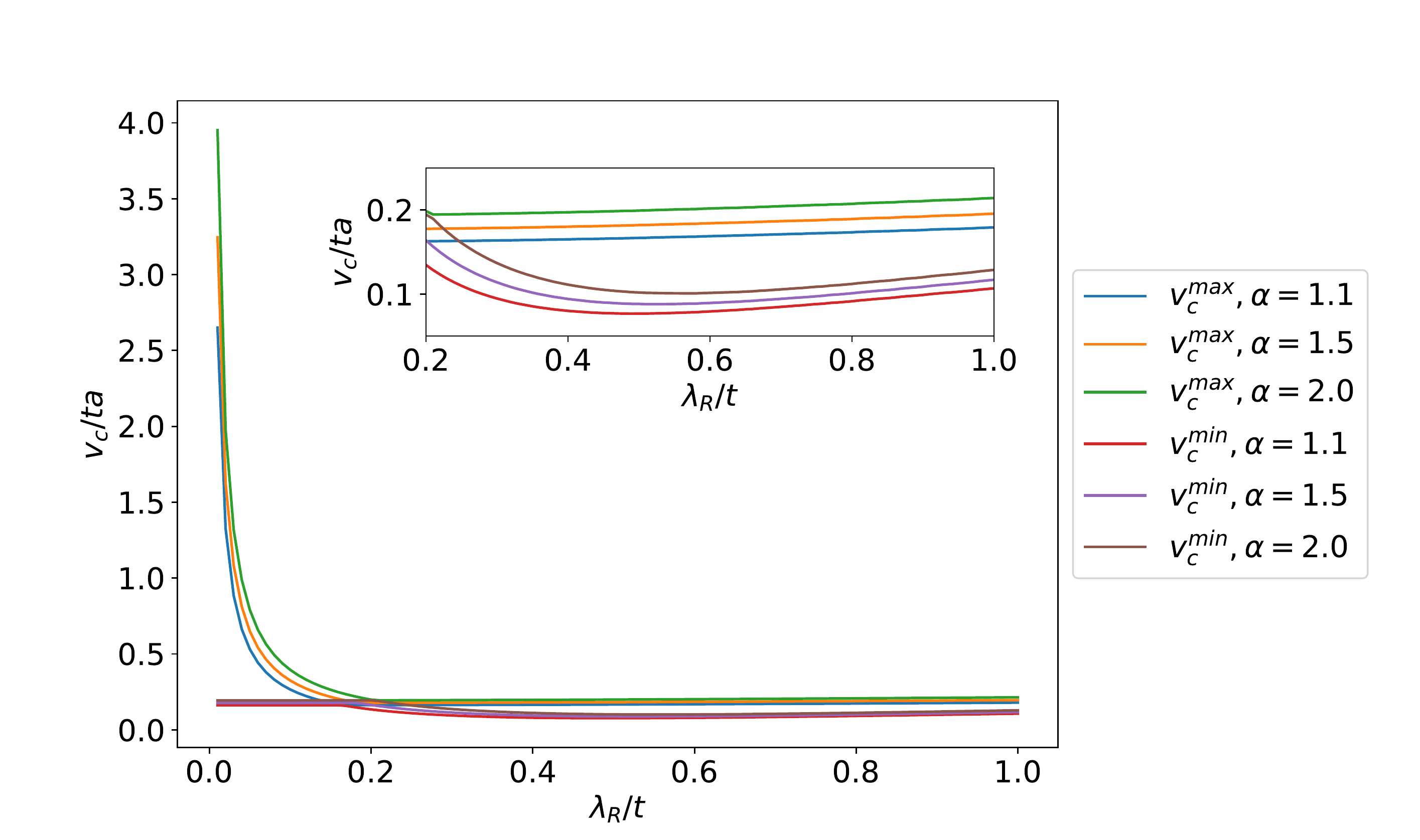}
    \caption{Maximum and minimum values of $v_c^{\textrm{SW}}$ are plotted against $\lambda_R$ for various $\alpha$, with $U_s/t = 0.05$. The value of $k_0 = k_{0m}$ was updated at each $\lambda_R$. Note that $\lambda_R/t \geq 0.01$ was used. The inset focuses on the behavior when $\lambda_R/t \geq 0.2$.}
    \label{fig:SWsuperlam}
\end{figure}

\begin{figure}
    \centering
    \includegraphics[width=0.7\linewidth]{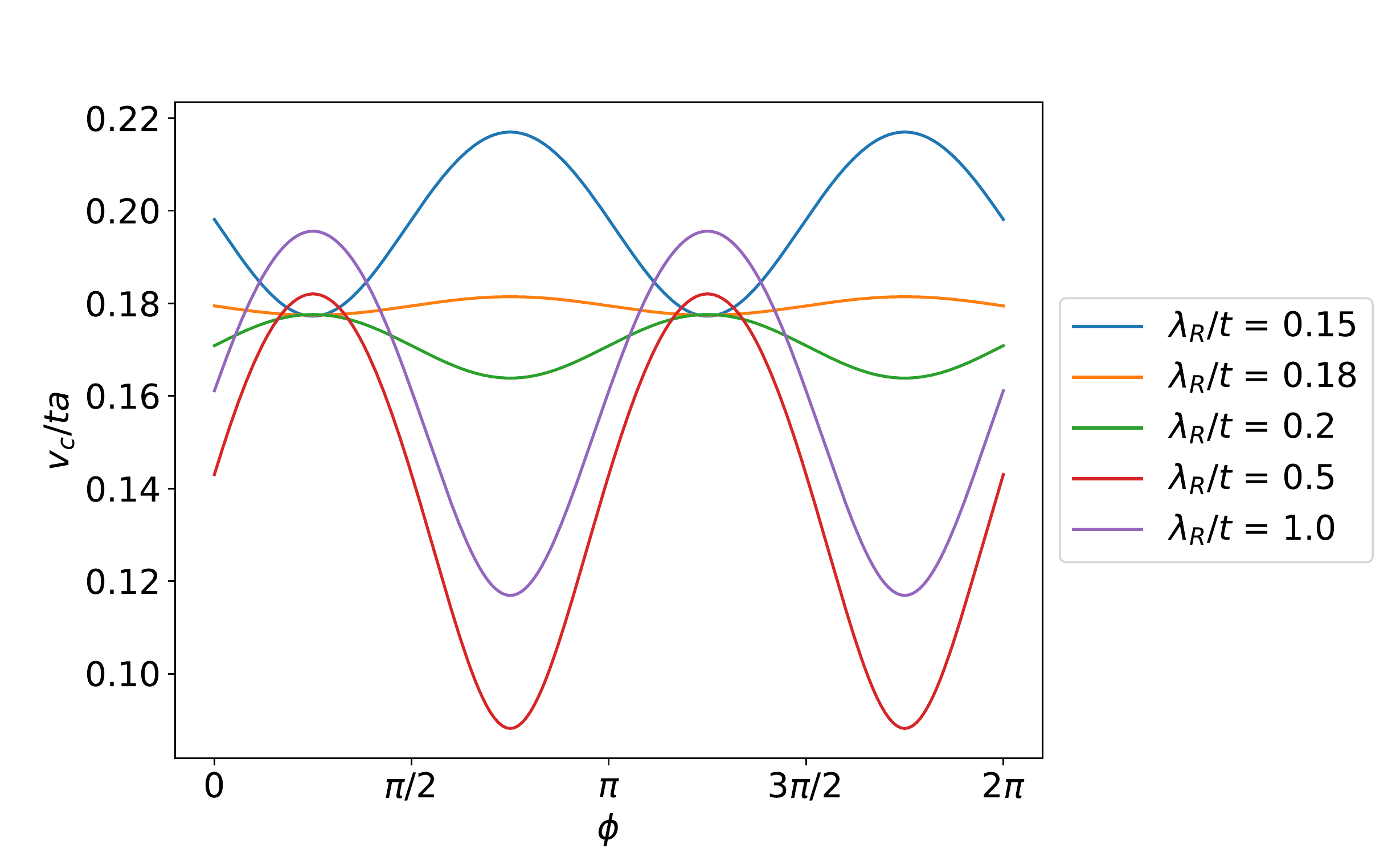}
    \caption{$v_c^{\textrm{SW}}$ is plotted against the angle with the $k_x$-axis, $\phi$, with $U_s/t = 0.05$ and $\alpha = 1.5$. The value of $k_0 = k_{0m}$ was updated as $\lambda_R$ was changed. $v_c^{\textrm{SW}}$ is approximately $\pi$-periodic.}
    \label{fig:SWsuperphi}
\end{figure}

Like in the PW phase, we use $|\boldsymbol{q}|a = 10^{-5}$ in producing the figures. For the SW phase, the critical superfluid velocity is shown in figure \ref{fig:SWsuperal} as a function of $\alpha$ for various values of $\lambda_R$. In figure \ref{fig:SWsuperlam} we plot it as a function of $\lambda_R$ for various values of $\alpha$. Both figures suggest $v_c^{\textrm{SW}}$ increases with increasing $\alpha$. The behavior with $\lambda_R$ is more exotic, and can be understood from figure \ref{fig:SWsuperphi}, showing $v_c(\phi)$ at several $\lambda_R/t$. The direction in which the critical superfluid velocity is greatest appears to change via an isotropic case at a certain $\lambda_R/t$ that depends on $\alpha$. For stronger SOC the behavior is as in the PW phase. $v_c$ is greatest along $\boldsymbol{k}_{01}$ and smallest perpendicular to it. Meanwhile, for weaker SOC the opposite is true. One can understand why the critical superfluid velocity along $\boldsymbol{k}_{01}$ decreases when a weak $\lambda_R/t$ decreases further. The value of $k_0$ will also decrease, and so the two global phonon minima move closer and closer. Hence, there is a limit to how large the energy can become between the minima. On the other hand, the direction perpendicular to $\boldsymbol{k}_{01}$ has no such limitation and the slope there increases. Once we have traversed the strange effect at weak SOC, both $v_c^{\textrm{min}}$ and $v_c^{\textrm{max}}$ begin to increase with increasing $\lambda_R$.

Remember that the energy spectrum obtained from the original spin basis suggested zero critical superfluid velocity. As mentioned, we would intuitively expect the excitation spectrum to be linear close to the condensate momenta due to a Bogoliubov effect from the interactions. A natural question is why the treatment in the original spin basis did not catch the superfluid behavior. An attempt to investigate this is presented in appendix \ref{app:SWdifference}, but no significant insights were gained.

\subsection{Comparison of Spin and Helicity Basis Results}
In the PW phase the results for the critical superfluid velocity were the same whether we used the lower helicity band, or the original spin basis. The global behavior of the energy bands were also similar. The helicity result $\Omega_H(\boldsymbol{k})$ compared favourably to the lowest spin result $\Omega_2(\boldsymbol{k})$, while the upper spin result $\Omega_1(\boldsymbol{k})$ resembled the upper helicity band \eqref{eq:helicitybands}, $\lambda_{\boldsymbol{k}}^+$, that was neglected in the helicity approximation. Globally, the same is true in the SW phase. The bands obtained in the helicity basis are similar to the bands $\Delta\Omega_2(\boldsymbol{k}), \Delta\Omega_4(\boldsymbol{k})$ and $\Delta\Omega_6(\boldsymbol{k})$ in the spin basis. There is however a major difference between $\Delta\Omega_H(\boldsymbol{k})$ and $\Delta\Omega_6(\boldsymbol{k})$ close to the minima at the condensate momenta $\pm\boldsymbol{k}_{01}$. While the helicity approximation gives a linear behavior, the result in the original spin basis gave an approximately quadratic behavior. 

At $\alpha=1$, i.e. equal strength of inter- and intracomponent interactions the spin basis result also displays linear behavior in its lowest nonzero band. However, here the anomalous modes are zero, and so the lowest band is technically the eigenvalues that are zero for all $\boldsymbol{k}$. The stability of the SW phase is at best questionable in such a case. A possible explanation for the non-linearity of the SW phase in the spin basis at $\alpha>1$ is found by considering the PW phase. When $\alpha <1$ the PW phase has a linear minimum at $\boldsymbol{k}_{01}$ and a gapped roton minimum at $-\boldsymbol{k}_{01}$. At $\alpha = 1$ the roton minimum becomes ungapped. For $\alpha>1$ the roton minimum becomes negative and is hence lower than the linear minimum at $\boldsymbol{k}_{01}$ suggesting the PW phase is unstable. Now imagine a superposition of two PW phases, one at $\boldsymbol{k}_{01}$ and one at $-\boldsymbol{k}_{01}$. For $\alpha>1$ the roton minimum at $-\boldsymbol{k}_{01}$ due to a PW phase at $\boldsymbol{k}_{01}$ becomes lower than the linear minimum due to the PW phase at $-\boldsymbol{k}_{01}$ and vice versa. Hence, the SW phase is the result, two negative, approximately quadratic, global minima at $\pm \boldsymbol{k}_{01}$.

%For $\alpha<1$ there are two linear minima. At $\alpha=1$ the two linear minima are still chosen. 

The SW phase was detected experimentally in \cite{SWdetectedExp} though for a slightly different system than what is studied here. In \cite{SWdetectedExp} a continuum BEC is loaded into a 1D optical superlattice. Additionally a different SOC scheme is used, realizing a model similar to what is described in \cite{SWdetectedTheory} and the 1D Raman induced SOC that was first implemented in \cite{lin2011expSOC}. Both \cite{SWdetectedExp, SWdetectedTheory} claim the SW phase shows superfluid behaviour, though the experimental evidence of superfluidity appears to be based solely on the fact that a sharp momentum distribution is observed in time-of-flight \cite{SWdetectedExp}. This should however also be true for a BEC that is not superfluid. In \cite{SWdetectedTheory} the drag force is calculated, and it is shown that the time-scale over which dissipation occurs is larger that the duration of the experiment, and so the motion of an impurity can be considered as dissipationless. A linear dispersion and nonzero critical superfluid velocity was reported for the SW phase in a similar system in  \cite{li2013superstripes}. While this does not prove the SW phase should have a nonzero critical superfluid velocity in the prescense of a square 2D optical lattice and Rashba SOC, it is an indication that the results obtained in the helicity approximation are sensible. 

There is also another difference between the spin and helicity basis results. The maximum value $\Omega_0$ of $\Omega_3(\boldsymbol{k})$ becomes zero at $\alpha = 3$ and is hence small close to $\alpha =3$. The maximum value of $\Omega_H(\boldsymbol{k})$ is in general larger than $\Omega_0$ and does not have a zero for $\alpha>1$. In the original spin basis this means that close to $\alpha=3$ the lowest special energy at $\pm 2\boldsymbol{k}_{01}$,  $\Delta \omega_{2k_0, 10}$ may become negative. At $\lambda_R/t = 0.5$ this is contained within $\alpha = (2.72,3.53)$, at $\lambda_R/t = 1.0$ it is contained within $\alpha = (2.90, 3.12)$ and at $\lambda_R/t = 2.0$ it is contained within $\alpha = (2.96, 3.05)$. The lowest special energy at $\pm 3\boldsymbol{k}_{01}$,  $\Delta \omega_{3k_0, 8}$ may also become negative, but that happens only when $\Delta \omega_{2k_0, 10}$ has already become negative. 

At face value, this appears to be an energetic instability, in the sense that the global minima of the excitation spectrum are no longer at $\boldsymbol{k} = \pm\boldsymbol{k}_{01}$. A similar energetic instability does not occur for the helicity approximation. Nevertheless, we suggest treating this as a mathematical curiosity in the spin basis rather that an indication of instability in the SW phase. The reason being that we view the necessity of treating the special momenta separately as mathematical artifacts pertaining to the BV diagonalization procedure. The most natural result physically is a continuous excitation spectrum, in which case $\Delta\Omega_\sigma(\pm2\boldsymbol{k}_{01})$ is considered to be the energy at $\boldsymbol{k} = \pm2\boldsymbol{k}_{01}$ rather than the special values $\Delta\omega_{2k_0,\sigma}$.  

In conclusion, the results for the lowest band in the helicity approximation are more in accordance with our intuition and published literature \cite{SWdetectedExp, SWdetectedTheory, li2013superstripes}. There is also an argument that this method is best suited to investigate the behavior close to the minimum of the spectrum, since we before introducing interactions focused solely on the lowest helicity energy band. We therefore suggest the presence of a nonzero, anisotropic critical superfluid velocity in the SW phase based on these results. In addition there are no indications of either dynamic or energetic instabilities of the SW phase at $\alpha>1$ using the results from the helicity approximation even when treating the special momenta in a mathematically sound way.

\textit{Note: We later realized that it is the results from the spin basis that must be trusted. The helicity approximation fails to describe the SW phase, as discussed in our paper \cite{PhysRevA.102.053318}. Apparently, neglecting the upper helicity band is not a good approximation in the SW phase. This band affects the lowest band of the excitation spectrum, and the SW phase shows zero sound velocity of the excitations.}

\section{LW Phase} \label{sec:LWphase}
The LW phase is such that $\boldsymbol{k}_{01}=(k_0, k_0)$, $\boldsymbol{k}_{02}=(-k_0, k_0)$, $\boldsymbol{k}_{03} = -\boldsymbol{k}_{01}$ and $\boldsymbol{k}_{04} = -\boldsymbol{k}_{02}$ are occupied condensate momenta. We will find that the LW phase is not present in the phase diagram in chapter \ref{chap:PDdiscuss}, but nevertheless believe a treatment of the LW phase is relevant, for the purposes of proving just that. 
The general approach and results bear many similarities with the SW phase, and for the sake of brevity we postpone the calculations to appendix \ref{app:LW}.
%The general approach and results bear many similarities with the SW phase, and for the sake of brevity we only summarize the main points of the calculations in appendix \ref{app:LW}.
The final result for the excitation spectrum is presented below.

% \subsection{Excitation Spectrum}
With the values for the variational parameters found in appendix \ref{app:LW}, the 8 nonzero bands become 4 double nonzero bands. We may then describe the system as having 18 bands. However, we started out with only two degrees of freedom, pseudospin up and down. Hence, all bands apart from the lowest two will be assumed to be frozen out, i.e. to have occupation numbers zero. If we redefine the 4 positive eigenvalues as $\Omega_\sigma(\boldsymbol{k})$ for $\sigma = 1,2,3,4$, the shifted energies are $\Delta\Omega_\sigma(\boldsymbol{k}) = \Omega_0 + \Omega_\sigma(\boldsymbol{k})$ for $\sigma = 1,2$, $\Delta\Omega_\sigma(\boldsymbol{k}) = \Omega_0$ for $\sigma = 3,\dots, 16$ and $\Delta\Omega_\sigma(\boldsymbol{k}) = \Omega_0 - \Omega_{\sigma'}(\boldsymbol{k})$ for $\sigma = 17,18$ and $\sigma' = 4,3$. Hence, the diagonalized Hamiltonian can be rewritten
\begin{align}
    \begin{split}
        H_2 = &-\Omega_0 N_q + \frac{1}{2}\left.\sum_{\boldsymbol{k}}\right.^{'} \sum_{\sigma=1}^{16} \Delta\Omega_\sigma(\boldsymbol{k}) \\
        &+\left.\sum_{\boldsymbol{k}}\right.^{'} \sum_{\sigma=17}^{18} \Delta\Omega_\sigma(\boldsymbol{k}) \left( B_{\boldsymbol{k},\sigma}^\dagger B_{\boldsymbol{k},\sigma} +\frac12 \right).
    \end{split}
\end{align}
Now, $\Omega_0$ is the maximum value of $\Omega_3(\boldsymbol{k})$, and the definition of $N_q$ follows the usual procedure.
% Now, $\Omega_0$ is the maximum value of $\Omega_3(\boldsymbol{k})$ and the definition of $N_q$ is
% \begin{equation*}
%     N_q \equiv 8(N_s-4) + \left.\sum_{\boldsymbol{k}}\right.^{'} \sum_{\sigma=17}^{18} \left( B_{\boldsymbol{k},\sigma}^\dagger B_{\boldsymbol{k},\sigma} +\frac12 \right).
% \end{equation*}

\begin{figure}[ht]
    \centering
    \begin{subfigure}{.45\textwidth}
      \includegraphics[width=\linewidth]{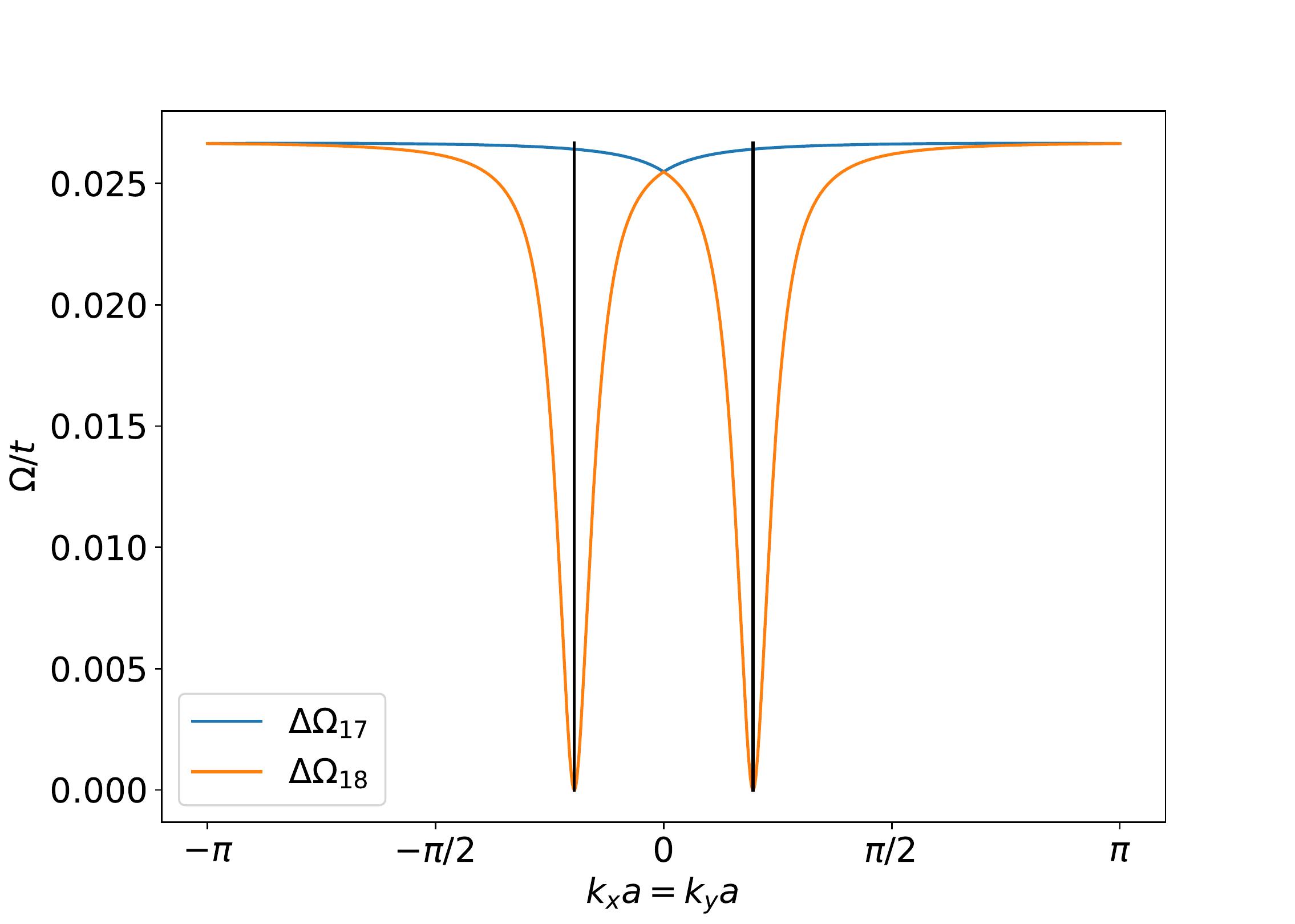}
      \caption{}
    \end{subfigure}%
    \begin{subfigure}{.45\textwidth}
      \includegraphics[width=\linewidth]{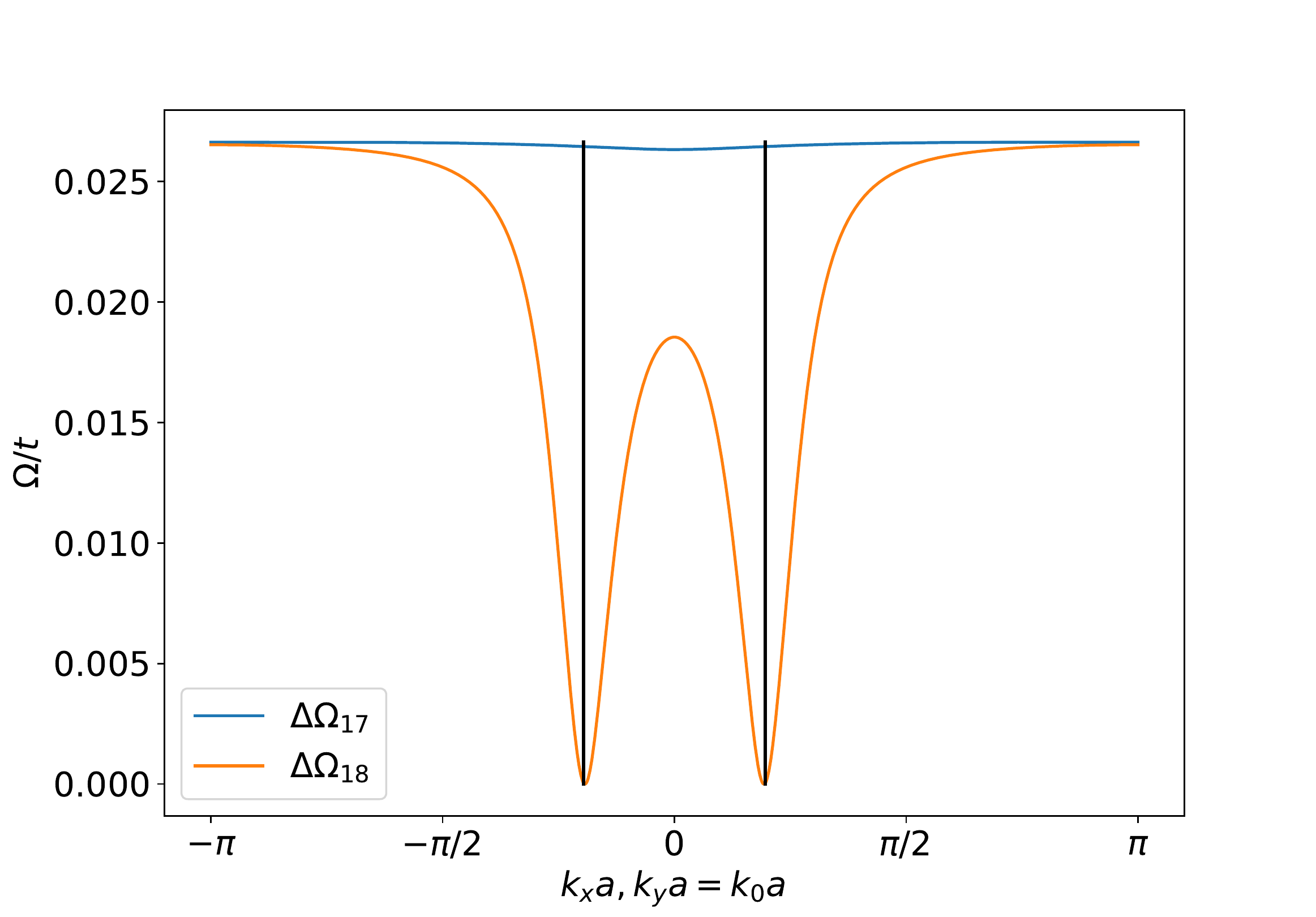}
      \caption{}
    \end{subfigure}%
    \caption{The two lowest bands $\Delta\Omega_{17}(\boldsymbol{k})$ and $\Delta\Omega_{18}(\boldsymbol{k})$ are plotted along $k_x=k_y$ in (a) and along $k_x$ for $k_y = k_0 = k_{0m}$ in (b). The minima are nonlinear. The parameters are $U_s/t = 0.05$, $\alpha = 1.5$ and $\lambda_R/t = 1.0$. The black vertical lines show the position of $k_x=k_y= \pm k_{0m}$ (a) and $k_x = \pm k_{0m}$ (b). \label{fig:LWband}}
\end{figure}

\begin{figure}[ht]
    \centering
    \includegraphics[width=0.9\linewidth]{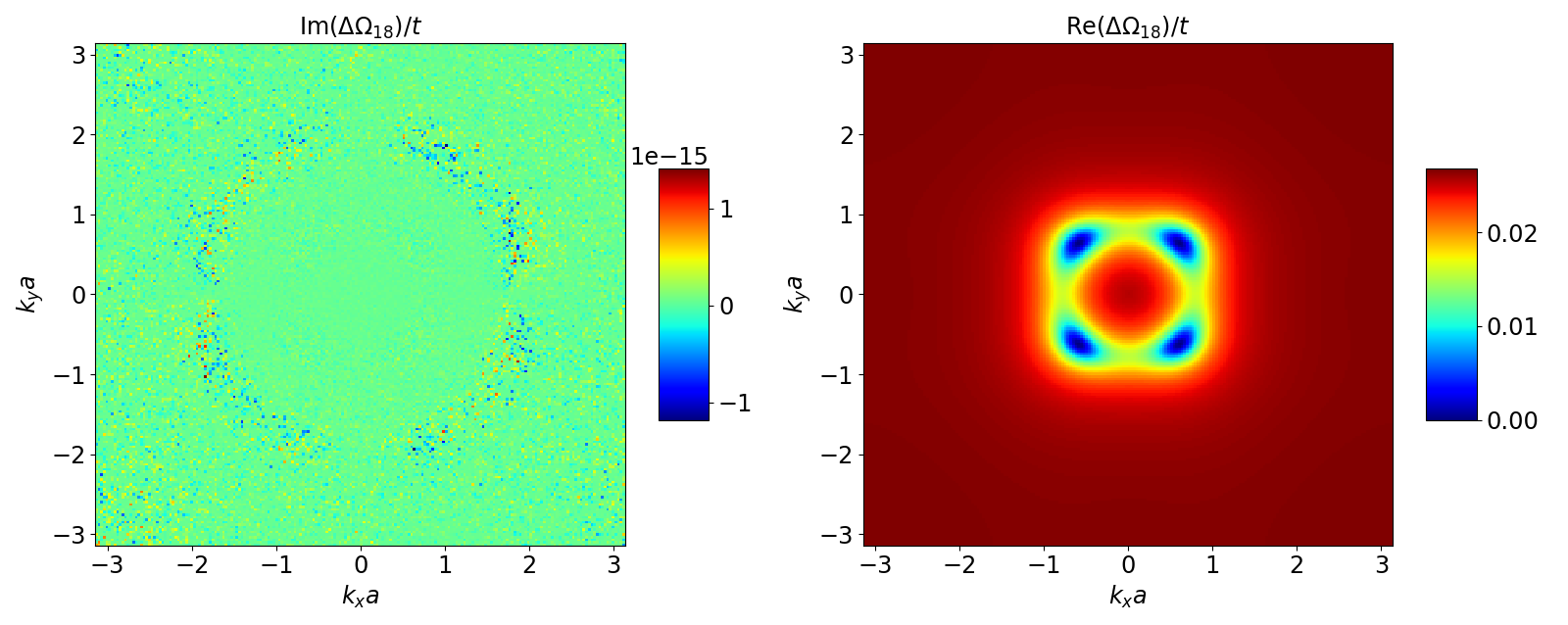}
    \caption{Shows the lowest energy $\Delta\Omega_{18}(\boldsymbol{k})$ for $U_s/t = 0.05$, $\alpha = 1.5$, $\lambda_R/t = 1.0$ and $N_s = 4\cdot 10^4$. The four global minima are located at the four condensate momenta in the LW phase. \label{fig:LWeigen}}
\end{figure}

The two lowest bands are plotted in figure \ref{fig:LWband} while the lowest energy is shown in the 1BZ in figure \ref{fig:LWeigen}. The four global minima occur at the condensate momenta, and they are found to be roton minima. Unlike the SW phase, the helicity approximation does not give phonon minima. The global minima are still approximately quadratic close to the minima. As there is no new physical insight gained from the helicity basis it is omitted here. In conclusion, the critical superfluid velocity is zero in the LW phase.

\cleardoublepage
%===================================== CHAP 5 =================================

\chapter{Phase Diagram and Discussion} \label{chap:PDdiscuss}
\section{Phase Diagram Based on Free Energy}
As mentioned previously, the PZ phase is special because it has $N_0^\downarrow = 0$. Given that $(N-N_0)/N \ll 1$ the choice $N^\uparrow = N^\downarrow$ for the input parameters is not possible. It should be possible to engineer this phase for any $\alpha$ or $\lambda_R$ given that $N^\uparrow \approx N \gg N^\downarrow$ and that the energy offset $T^\downarrow \neq T^\uparrow = T$ is chosen such that $2\Delta = T^\downarrow-T+ 2U_s(\alpha-1)$ obeys $\Delta > U_s$ and $\Delta \geq 2\lambda_R^2/t$.

Next, we choose $N^\uparrow = N^\downarrow$ and $T^\uparrow = T^\downarrow = T$. The phases under consideration are the NZ, PW, SW and LW phases. For no SOC, only NZ is possible since nonzero condensate momenta requires SOC, and it was found that the NZ phase is only stable for $\alpha \leq 1$. For $\alpha>1$ there is no stable state when $\lambda_R = 0$, $N^\uparrow = N^\downarrow$ and $T^\downarrow = T^\uparrow = T$. From now on we focus on nonzero SOC, and the phases PW, SW and LW. It was found that the PW phase can only be stable for $\alpha<1$, while the SW phase is stable for $\alpha>1$. Meanwhile, the LW phase is stable for $\lambda_R/t \gtrsim 0.52+0.22\alpha$ and $\alpha$ greater than a lower limit that approaches $1$ from above as the strength of SOC is increased. 

Investigations of the free energy at zero temperature show that $\langle H_{\textrm{LW}} \rangle > \langle H_{\textrm{SW}} \rangle$ for $\alpha>1$ and $\lambda_R/t \gtrsim 0.52+0.22\alpha$, and so the SW phase will be preferred here. For $\alpha<1$ and $\lambda_R/t >0$ the only candidate is the PW phase. The result is presented in figure \ref{fig:PDH}. The main difference from the results using $H_0$ is that at $\lambda_R = 0$ there is no stable state when $\alpha>1$. This is because the NZ phase was found to be unstable here. Since the ground state energy, $\langle H \rangle$, of the LW phase is higher than that of the SW phase, there is reason to assume similar results would have been obtained for the ignored phases C1 and C2. Just as the LW phase, they did not enter the phase diagram of figure \ref{fig:PDH0new} when neglecting excitations. A final point regarding stability is appropriate here. We found the criteria for dynamic and energetic stability of the LW phase assuming it existed. These calculations nevertheless show it does not exist, and so the phase is not stable, in the sense that the SW phase will be preferred at all input parameters where the LW phase is a candidate.

% Investigations show that $\langle H_{\textrm{LW}} \rangle < \langle H_{\textrm{SW}} \rangle$ for $\alpha>1$, and so the LW phase will replace the SW phase when $\lambda_R/t \gtrsim 0.6+0.136\alpha$. For most parameters with $\alpha<1$ and $\lambda_R/t \gtrsim 0.6+0.136\alpha$ we find $\langle H_{\textrm{PW}} \rangle < \langle H_{\textrm{LW}} \rangle$ such that the PW phase will be the most prevalent phase when $\alpha<1$. It is however found that for $\alpha$ only slightly less than one it is possible to have $\langle H_{\textrm{LW}} \rangle < \langle H_{\textrm{PW}} \rangle$. The result is presented in figure \ref{fig:PDH}. Comparing to the similar figure \ref{fig:PDH0new} there are two main differences. Most importantly, since the LW phase is unstable when $\lambda_R/t \lesssim 0.6+0.136\alpha$ it is replaced by the SW phase for $\alpha>1$. Also, the effects of the elementary excitations means the LW phase will be preferred over the PW phase for $\alpha$ close to one and $0.6+0.136\alpha \lesssim \lambda_R/t \lesssim 1.0$.

\begin{figure}
    \centering
    \includegraphics[width=0.7\linewidth]{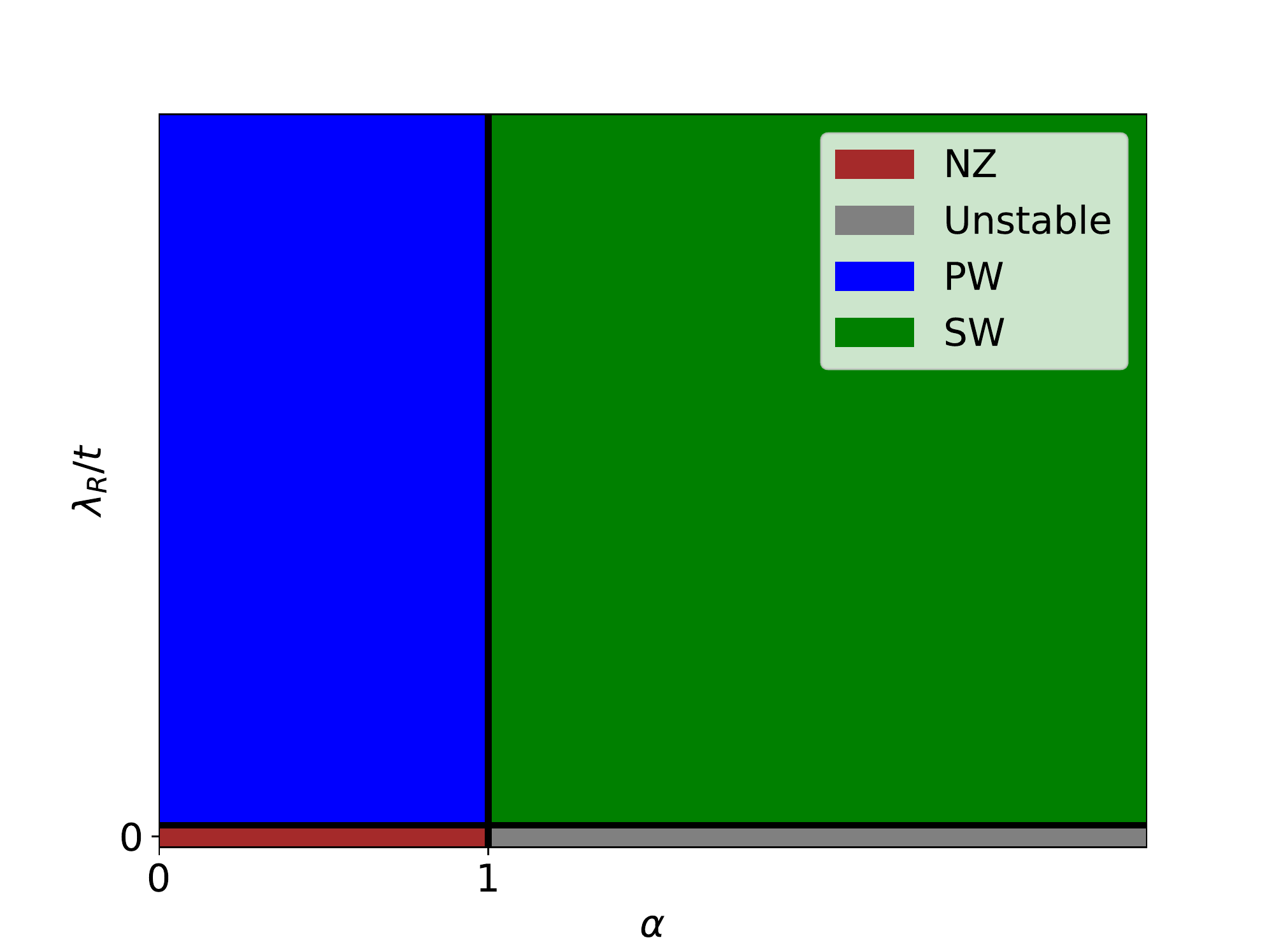}
    \caption[Phase diagram]{Phase diagram when $N^\uparrow = N^\downarrow$ and $T^\uparrow = T^\downarrow = T$. The effects of the elementary excitations have been included. The region of $\lambda_R = 0$ has been exaggerated for better visibility.}
    \label{fig:PDH}
\end{figure}

\section{Ground State Depletion}
The validity of the mean field theory approach requires that the ground state depletion is low. The calculation of the ground state depletion follows the same procedure that was used for the one-component, weakly interacting Bose gas in chapter \ref{sec:Weaklyinteracting}. We start with
\begin{equation}
    \frac{N-N_0}{N} = \frac{1}{N}\left.\sum_{\boldsymbol{k}}\right.^{'} \sum_\alpha \langle A_{\boldsymbol{k}}^{\alpha\dagger} A_{\boldsymbol{k}}^\alpha \rangle.
\end{equation}
The mean value is then transformed to the diagonal basis, in which we can use Bose-Einstein statistics. We focus on zero temperature, such that in the end all the mean values of operators will vanish. We are left with the terms analogous to $|v_{\boldsymbol{k}}|^2$ in \eqref{eq:GSdepnoSoc}, which originated from a commutator. The interesting terms are then the ones relating $A_{\boldsymbol{k}}^\alpha$ to $B_{\boldsymbol{k},\sigma}^\dagger$. The absolute squares of these coefficients are the analogues of $|v_{\boldsymbol{k}}|^2$. These can be obtained from the numerically constructed transformation matrices $T_{\boldsymbol{k}}$, using that $\boldsymbol{A}_{\boldsymbol{k}} = JT_{\boldsymbol{k}}J\boldsymbol{B}_{\boldsymbol{k}}$. This was checked numerically in the phases PZ, NZ, PW, SW and LW, and the ground state depletion was always less than $1\%$ when $U/t = 1/10$. This confirms the validity of the mean field theory performed in chapter \ref{chap:MFT}. Additionally, the ground state depletion became worse at higher $U/t$, as expected. For instance it was $5\%$ in the PW phase with $U=t$ and $\lambda_R = t$. It was also found that making $\lambda_R \gg t$ allows for stronger interactions while keeping the ground state depletion low, which can be viewed in conjunction with a discussion in \cite{Toniolo}.

\section{Discussion}
Most results have been presented together with a discussion. This section is devoted to general discussion of the overall results.

In this thesis we allowed for a complex phase factor in the term
\begin{equation}
    A_{\boldsymbol{k}_{0i}}^\alpha \to \sqrt{N_{\boldsymbol{k}_{0i}}^\alpha}e^{-i\theta_{\boldsymbol{k}_{0i}}^\alpha}.
\end{equation}
The angles $\theta_{\boldsymbol{k}_{0i}}^\alpha$ were shown to be arbitrary in the zero momentum phases PZ and NZ. However, in the SOC induced nonzero condensate momentum phases PW, SW and LW the angles proved to be important. In particular, satisfying
\begin{equation}
    \gamma_{\boldsymbol{k}_{0i}} + \theta_i^\downarrow -\theta_i^\uparrow = \pi
\end{equation}
was important. For the PW phase, this was vital both for the stability and the nonzero critical superfluid velocity. Similarly, if all the angles were set to $0$ by assumption, one would have found that the SW and LW phases were unstable. An interesting point is that the angles were set to zero by assumption in \cite{Toniolo}. It appears the importance of the angles is less pronounced in the helicity basis. Since the phase of the SOC term, $\gamma_{\boldsymbol{k}}$, is involved in the transformation to the helicity basis, it appears the information carried by $\theta_1^\uparrow$ and $\theta_1^\downarrow$ in the spin basis is contained in $\gamma_{\boldsymbol{k}_{01}}$ in the helicity basis. 

The obtained transition line of $\alpha=1$ between the PW and SW phases is the same as that reported in \cite{wangPWSWTransition} by numerically obtaining the wave function that minimizes the Gross-Pitaevskii energy. The states PW and SW were also reported as the only possible states. As further elaborated in \cite{zhaiSWRashbaRev} the wave function in the PW phase gives a uniform total density. In the SW phase both components have a modulated density showing a periodic striped structure. When the density of one component is highest, the density of the other component is at its lowest. Thus, the overlap of the two components is minimized, explaining why the SW phase is energetically favorable when $\alpha>1$, i.e. $U^{\uparrow\downarrow} = U^{\downarrow\uparrow} > U^{\uparrow\uparrow} = U^{\downarrow\downarrow}$. Similar arguments are given in \cite{SOCOLRev}. There, it is also argued that the LW phase, which does not have a uniform density, is less effective than the SW phase at minimizing the overlap of the  two pseudospin components. These phase diagram results have however not taken the elementary excitations into account. This thesis has done so, and confirms the results reported in \cite{wangPWSWTransition, SOCOLRev}.

\cleardoublepage
%===================================== CHAP 5 =================================

\chapter{Conclusion and Outlook} \label{chap:ConcOutlook}

An analytic framework for a theoretical treatment of a two-component, weakly interacting, spin-orbit coupled Bose gas bound to a Bravais lattice has been developed. This was largely based on the framework developed by Janssønn \cite{master}, though it proved convenient to adapt it to the canonical ensemble, in which the total number of particles is an input parameter, rather than the grand canonical ensemble employed by \cite{master} where the number of condensate particles are controlled by the chemical potential. Using mean field theory, the Hamiltonian was presented on a form that was at most quadratic in excitation operators, paving the way for exploration of the quasiparticle excitation spectrum using the BV diagonalization procedure. 

We specialized to Rashba SOC and a 2D square optical lattice, and the framework was subsequently applied to a zero-momentum phase with no SOC, called the NZ phase. The results for the excitation spectrum and the critical superfluid velocity were in accordance with \cite{LS} who had previously studied the same phase. Further it was used to describe a SOC induced phase with a single nonzero condensate momentum, which had previously been studied in \cite{Toniolo} and was named the plane wave (PW) phase. The results regarding excitation spectrum and critical superfluid velocity were found to be a special case of the results reported in \cite{Toniolo}, and both were studied in greater detail for the case of no Zeeman field. There is therefore ample reason to assume the framework developed in this thesis is valid. We also considered a polarized zero-momentum phase that can exist in the presence of SOC. This bears similarities to the results in \cite{Toniolo} with strong Zeeman splitting.

Another SOC induced state called the stripe wave (SW) phase was also studied, and like the PW phase it is a bosonic analogue of Fulde-Ferrell-Larkin-Ovchinnikov states in superconductors \cite{FF, LO, FFLO}. For the case of Rashba SOC and the presence of an optical lattice, the author has not found its excitation spectrum reported in the literature. For the continuum BEC with Raman induced SOC along one direction, its excitation spectrum was reported in \cite{li2013superstripes}, wherein a nonzero critical superfluid velocity was reported. Using the helicity approximation performed in \cite{Toniolo} to the PW phase, we also found a nonzero, anisotropic critical superfluid velocity in the SW phase. Unlike the PW phase, the results in the helicity approximation and using the original spin basis gave different results regarding superfluidity in the SW phase. The reason for this remains unclear to the author. 

Furthermore, we studied the excitation spectrum of one of the phases widely believed not to exist \cite{SOCOLRev,wangPWSWTransition, zhaiSWRashbaRev}, namely the LW phase where four nonzero condensate momenta are occupied. Most phase diagrams reported in the literature have been made by neglecting excitations. In this thesis the effects of the excitations have been considered, and the free energy at zero temperature, i.e. the ground state energy, has been used to determine the phase diagram. 

In this process, we have treated terms in the Hamiltonian that are linear in excitation operators, which have not previously been explored in the literature \cite{master}. This was performed by transforming the linear part to the basis in which the quadratic part of the Hamiltonian was diagonal. In this way, the linear terms could be removed by completing squares, and they gave a shift of the free energy. The end result was in agreement with the general results reported in \cite{SOCOLRev,wangPWSWTransition, zhaiSWRashbaRev}. When intercomponent interactions are weaker than intracomponent interactions, the PW, Fulde-Ferrell analogous phase is preferred, while the SW, Larkin-Ovchinnikov analogous phase is preferred when intercomponent interactions are strongest. 

A natural generalization of the results in this thesis would be to introduce an external Zeeman field as was done in \cite{Toniolo}. The effect of the Zeeman field will be to introduce pseudospin imbalance, and, for strong enough Zeeman field, the nonzero condensate momenta will all converge to zero. Regarding the SW phase, treating a Zeeman field may also make the origin of the distinction in critical superfluid velocity between the two methods used to obtain the excitation spectrum more clear. The reason being that the transformation to the helicity basis does not suffer discontinuities in the presence of a Zeeman field \cite{Toniolo}. It may also be of interest to expand the treatment to other lattice configurations or other SOC schemes. Considering a linear combination of Dresselhaus and Rashba SOC may reveal interesting physics, and possibly make the theoretical results more in accordance with experimentally realizable SOC schemes. 

Another interesting quantity in a two-component, superfluid BEC is the superfluid drag density. This was calculated in \cite{LS} by a method that requires Galilean invariance based on \cite{fildrag}. A method that does not rely on Galilean invariance presented in \cite{weichman} was used by Hartmann \cite{Hartmann}, whose calculations were unsuccessful in the presence of SOC. It may be possible to revisit this by drawing inspiration from this thesis, together with Hartmann's thesis \cite{Hartmann}.

%The effect of the Zeeman field will be to introduce pseudospin imbalance, and, as was shown for the non-interacting SOC Bose gas, for strong enough Zeeman field the nonzero condensate momenta will all converge to zero.

%Regarding the SW phase, treating a Zeeman field may also make the origin of the distinction in critical superfluid velocity between the two methods used to obtain the excitation spectrum more clear. The reason being that the transformation to the helicity basis does not suffer discontinuities in the presence of a Zeeman field \cite{Toniolo}.

%Additionally, the transformation to the helicity basis does not suffer discontinuities in the presence of a Zeeman field \cite{Toniolo}. Hence, treating a Zeeman field may be particularly interesting for the SW phase, as it may make the origin of the  This may make the origin of the distinction in critical superfluid velocity between the two methods used to obtain the excitation spectrum in the SW phase more clear.

\cleardoublepage

%% PART 4
\pagestyle{fancy}
\fancyhf{}
\renewcommand{\chaptermark}[1]{\markboth{\chaptername\ \thechapter.\ #1}{}}
\renewcommand{\sectionmark}[1]{\markright{\thesection\ #1}}
\renewcommand{\headrulewidth}{0.1ex}
\renewcommand{\footrulewidth}{0.1ex}
\fancyfoot[LE,RO]{\thepage}
\fancypagestyle{plain}{\fancyhf{}\fancyfoot[LE,RO]{\thepage}\renewcommand{\headrulewidth}{0ex}}

% \addcontentsline{toc}{chapter}{Bibliography}
\phantomsection %So that link is in right place
\addcontentsline{toc}{chapter}{Bibliography} %Before, so that page is correct
\bibliographystyle{IEEEtran}
\bibliography{main.bbl}

% Generated by IEEEtran.bst, version: 1.14 (2015/08/26)
\begin{thebibliography}{10}
\providecommand{\url}[1]{#1}
\csname url@samestyle\endcsname
\providecommand{\newblock}{\relax}
\providecommand{\bibinfo}[2]{#2}
\providecommand{\BIBentrySTDinterwordspacing}{\spaceskip=0pt\relax}
\providecommand{\BIBentryALTinterwordstretchfactor}{4}
\providecommand{\BIBentryALTinterwordspacing}{\spaceskip=\fontdimen2\font plus
\BIBentryALTinterwordstretchfactor\fontdimen3\font minus
  \fontdimen4\font\relax}
\providecommand{\BIBforeignlanguage}[2]{{%
\expandafter\ifx\csname l@#1\endcsname\relax
\typeout{** WARNING: IEEEtran.bst: No hyphenation pattern has been}%
\typeout{** loaded for the language `#1'. Using the pattern for}%
\typeout{** the default language instead.}%
\else
\language=\csname l@#1\endcsname
\fi
#2}}
\providecommand{\BIBdecl}{\relax}
\BIBdecl

\bibitem{master}
A.~T.~G. Janssønn, Master's thesis, Norwegian University of Science and
  Technology, 2018.

\bibitem{PhysRevA.102.053318}
K.~M\ae{}land, A.~T.~G. Janss\o{}nn, J.~H. Rygh, and A.~Sudb\o{}, ``Plane- and
  stripe-wave phases of a spin-orbit-coupled bose-einstein condensate in an
  optical lattice with a zeeman field,'' \emph{Phys. Rev. A}, vol. 102, p.
  053318, Nov 2020.

\bibitem{37}
Q.~Zhu, C.~Zhang, and B.~Wu, ``Exotic superfluidity in spin-orbit coupled
  bose-einstein condensates,'' \emph{EPL (Europhysics Letters)}, vol. 100,
  no.~5, p. 50003, 2012.

\bibitem{Bose}
S.~N. Bose, ``Planck’s law and light quantum hypothesis,'' \emph{Z. phys},
  vol.~26, no.~1, p. 178, 1924.

\bibitem{Einstein1}
A.~Einstein, ``Quantum theory of the single-atom ideal gas,''
  \emph{Sitzungsber. Preuss. Akad. Wiss.}, vol.~22, p. 261, 1924.

\bibitem{Einstein2}
------, ``Quantum theory of ideal monoatomic gases,'' \emph{Sitzungsber.
  Preuss. Akad. Wiss.}, vol.~1, no.~3, 1925.

\bibitem{Kapitza}
P.~Kapitza, ``Viscosity of liquid helium below the $\lambda$-point,''
  \emph{Nature}, vol. 141, no. 3558, pp. 74--74, 1938.

\bibitem{Allen}
J.~Allen and H.~Jones, ``New phenomena connected with heat flow in helium ii,''
  \emph{Nature}, vol. 141, no. 3562, pp. 243--244, 1938.

\bibitem{London}
F.~London, ``The $\lambda$-phenomenon of liquid helium and the bose-einstein
  degeneracy,'' \emph{Nature}, vol. 141, no. 3571, pp. 643--644, 1938.

\bibitem{Landau}
L.~D. Landau, ``Theory of the superfluidity of helium ii,'' \emph{Physical
  Review}, vol.~60, no.~4, p. 356, 1941.

\bibitem{bogoliubov1947theory}
N.~N. Bogoliubov, ``On the theory of superfluidity,'' \emph{J. Phys}, vol.~11,
  no.~1, p.~23, 1947.

\bibitem{Cornell}
M.~H. Anderson, J.~R. Ensher, M.~R. Matthews, C.~E. Wieman, and E.~A. Cornell,
  ``Observation of bose-einstein condensation in a dilute atomic vapor,''
  \emph{Science}, vol. 269, pp. 198--201, 1995.

\bibitem{Bradley}
C.~C. Bradley, C.~Sackett, J.~Tollett, and R.~G. Hulet, ``Evidence of
  bose-einstein condensation in an atomic gas with attractive interactions,''
  \emph{Physical review letters}, vol.~75, no.~9, p. 1687, 1995.

\bibitem{Ketterle}
K.~B. Davis, M.-O. Mewes, M.~R. Andrews, N.~J. van Druten, D.~S. Durfee,
  D.~Kurn, and W.~Ketterle, ``Bose-einstein condensation in a gas of sodium
  atoms,'' \emph{Physical review letters}, vol.~75, no.~22, p. 3969, 1995.

\bibitem{Nobel2001}
``The nobel prize in physics 2001,'' Nobel Media AB 2020, accessed 15.04.2020:
  \url{https://www.nobelprize.org/prizes/physics/2001/summary/}.

\bibitem{Jaksch}
D.~Jaksch, ``Optical lattices, ultracold atoms and quantum information
  processing,'' \emph{Contemporary Physics}, vol.~45, no.~5, pp. 367--381,
  2004.

\bibitem{bransdenQM}
B.~H. Bransden and C.~J. Joachain, \emph{Quantum Mechanics}.\hskip 1em plus
  0.5em minus 0.4em\relax Pearson Education, 2000.

\bibitem{PCH}
P.~C. Hemmer, \emph{Kvantemekanikk}.\hskip 1em plus 0.5em minus 0.4em\relax
  Tapir, 2005.

\bibitem{DJGriffithsQM}
D.~J. Griffiths, \emph{Introduction to quantum mechanics}.\hskip 1em plus 0.5em
  minus 0.4em\relax Pearson Education, 2005.

\bibitem{manchon2015newRashbaApp}
A.~Manchon, H.~C. Koo, J.~Nitta, S.~Frolov, and R.~Duine, ``New perspectives
  for rashba spin--orbit coupling,'' \emph{Nature materials}, vol.~14, no.~9,
  pp. 871--882, 2015.

\bibitem{KaneMele}
C.~L. Kane and E.~J. Mele, ``Quantum spin hall effect in graphene,''
  \emph{Physical review letters}, vol.~95, no.~22, p. 226801, 2005.

\bibitem{KaneTopologicalInsulators}
M.~Z. Hasan and C.~L. Kane, ``Colloquium: topological insulators,''
  \emph{Reviews of modern physics}, vol.~82, no.~4, p. 3045, 2010.

\bibitem{galitskispielmanSOCrev}
V.~Galitski and I.~B. Spielman, ``Spin--orbit coupling in quantum gases,''
  \emph{Nature}, vol. 494, no. 7435, pp. 49--54, 2013.

\bibitem{Higbie}
J.~Higbie and D.~Stamper-Kurn, ``Periodically dressed bose-einstein condensate:
  A superfluid with an anisotropic and variable critical velocity,''
  \emph{Physical review letters}, vol.~88, no.~9, p. 090401, 2002.

\bibitem{osterloh2005cold}
K.~Osterloh, M.~Baig, L.~Santos, P.~Zoller, and M.~Lewenstein, ``Cold atoms in
  non-abelian gauge potentials: from the hofstadter ``moth'' to lattice gauge
  theory,'' \emph{Physical review letters}, vol.~95, no.~1, p. 010403, 2005.

\bibitem{ruseckas2005non}
J.~Ruseckas, G.~Juzeli{\=u}nas, P.~{\"O}hberg, and M.~Fleischhauer,
  ``Non-abelian gauge potentials for ultracold atoms with degenerate dark
  states,'' \emph{Physical review letters}, vol.~95, no.~1, p. 010404, 2005.

\bibitem{lin2011expSOC}
Y.-J. Lin, K.~Jim{\'e}nez-Garc{\'\i}a, and I.~B. Spielman,
  ``Spin--orbit-coupled bose--einstein condensates,'' \emph{Nature}, vol. 471,
  no. 7336, pp. 83--86, 2011.

\bibitem{wu2016realization2DSOC}
Z.~Wu, L.~Zhang, W.~Sun, X.-T. Xu, B.-Z. Wang, S.-C. Ji, Y.~Deng, S.~Chen,
  X.-J. Liu, and J.-W. Pan, ``Realization of two-dimensional spin-orbit
  coupling for bose-einstein condensates,'' \emph{Science}, vol. 354, no. 6308,
  pp. 83--88, 2016.

\bibitem{RashbaSOC}
Y.~A. Bychkov and E.~I. Rashba, ``Oscillatory effects and the magnetic
  susceptibility of carriers in inversion layers,'' \emph{Journal of physics C:
  Solid state physics}, vol.~17, no.~33, p. 6039, 1984.

\bibitem{DresselhausSOC}
G.~Dresselhaus, ``Spin-orbit coupling effects in zinc blende structures,''
  \emph{Physical Review}, vol. 100, no.~2, p. 580, 1955.

\bibitem{generalSOC}
M.~Aidelsburger, ``Cold atoms twisting spin and momentum,'' \emph{Science},
  vol. 354, no. 6308, pp. 35--36, 2016.

\bibitem{wang2DRashba}
B.-Z. Wang, Y.-H. Lu, W.~Sun, S.~Chen, Y.~Deng, and X.-J. Liu, ``Dirac-,
  rashba-, and weyl-type spin-orbit couplings: Toward experimental realization
  in ultracold atoms,'' \emph{Physical Review A}, vol.~97, no.~1, p. 011605,
  2018.

\bibitem{spielman2016rashba}
D.~L. Campbell and I.~B. Spielman, ``Rashba realization: Raman with rf,''
  \emph{New journal of physics}, vol.~18, no.~3, p. 033035, 2016.

\bibitem{zhaiSWRashbaRev}
H.~Zhai, ``Degenerate quantum gases with spin--orbit coupling: a review,''
  \emph{Reports on Progress in Physics}, vol.~78, no.~2, p. 026001, 2015.

\bibitem{HamnerSOCGalileanExp}
C.~Hamner, Y.~Zhang, M.~Khamehchi, M.~J. Davis, and P.~Engels,
  ``Spin-orbit-coupled bose-einstein condensates in a one-dimensional optical
  lattice,'' \emph{Physical review letters}, vol. 114, no.~7, p. 070401, 2015.

\bibitem{LS}
J.~Linder and A.~Sudb{\o}, ``Calculation of drag and superfluid velocity from
  the microscopic parameters and excitation energies of a two-component
  bose-einstein condensate in an optical lattice,'' \emph{Physical Review A},
  vol.~79, no.~6, p. 063610, 2009.

\bibitem{spielmantunable}
K.~Jim{\'e}nez-Garc{\'\i}a, L.~LeBlanc, R.~Williams, M.~Beeler, C.~Qu, M.~Gong,
  C.~Zhang, and I.~Spielman, ``Tunable spin-orbit coupling via strong driving
  in ultracold-atom systems,'' \emph{Physical review letters}, vol. 114,
  no.~12, p. 125301, 2015.

\bibitem{FF}
P.~Fulde and R.~A. Ferrell, ``Superconductivity in a strong spin-exchange
  field,'' \emph{Physical Review}, vol. 135, no.~3A, p. A550, 1964.

\bibitem{LO}
A.~I. Larkin and Y.~N. Ovchinnikov, ``Nonuniform state of superconductors,''
  \emph{Soviet Physics-JETP}, vol.~20, no.~3, pp. 762--762, 1965.

\bibitem{SFsuperconductivity}
K.~Fossheim and A.~Sudb{\o}, \emph{Superconductivity: physics and
  applications}.\hskip 1em plus 0.5em minus 0.4em\relax John Wiley \& Sons,
  2004.

\bibitem{bloch2008many}
I.~Bloch, J.~Dalibard, and W.~Zwerger, ``Many-body physics with ultracold
  gases,'' \emph{Reviews of modern physics}, vol.~80, no.~3, p. 885, 2008.

\bibitem{PethickSmith}
C.~J. Pethick and H.~Smith, \emph{Bose--Einstein Condensation in Dilute
  Gases}.\hskip 1em plus 0.5em minus 0.4em\relax Cambridge University Press,
  2008.

\bibitem{Pitaevskii}
L.~Pitaevskii and S.~Stringari, \emph{Bose--Einstein Condensation}.\hskip 1em
  plus 0.5em minus 0.4em\relax Oxford University Press, 2003.

\bibitem{abrikosov}
A.~A. Abrikosov, L.~P. Gorkov, and I.~E. Dzyaloshinski, \emph{Methods of
  Quantum Field Theory in Statistical Physics}.\hskip 1em plus 0.5em minus
  0.4em\relax Dover Publications Inc., New York, 1963.

\bibitem{Solli}
S.~Solli, Master's thesis, Norwegian University of Science and Technology,
  2017.

\bibitem{smark}
S.~B. Sjømark, Master's thesis, Norwegian University of Science and
  Technology, 2016.

\bibitem{Thingstad}
E.~Thingstad, Master's thesis, Norwegian University of Science and Technology,
  2017.

\bibitem{pitaevskiisuperfluid}
L.~Pitaevskii and S.~Stringari, \emph{Bose-Einstein condensation and
  superfluidity}.\hskip 1em plus 0.5em minus 0.4em\relax Oxford University
  Press, 2016, vol. 164.

\bibitem{36}
W.~Zheng, Z.-Q. Yu, X.~Cui, and H.~Zhai, ``Properties of bose gases with the
  raman-induced spin--orbit coupling,'' \emph{Journal of Physics B: Atomic,
  Molecular and Optical Physics}, vol.~46, no.~13, p. 134007, 2013.

\bibitem{Toniolo}
D.~Toniolo and J.~Linder, ``Superfluidity breakdown and multiple roton gaps in
  spin-orbit-coupled bose-einstein condensates in an optical lattice,''
  \emph{Phys. Rev. A}, vol.~89, p. 061605, Jun 2014.

\bibitem{UnconventionalStates}
X.~Zhou, Y.~Li, Z.~Cai, and C.~Wu, ``Unconventional states of bosons with the
  synthetic spin{\textendash}orbit coupling,'' \emph{Journal of Physics B:
  Atomic, Molecular and Optical Physics}, vol.~46, no.~13, p. 134001, jun 2013.

\bibitem{bruusflensberg}
H.~Bruus and K.~Flensberg, \emph{Many-Body Quantum Theory in Condensed Matter
  Physics}.\hskip 1em plus 0.5em minus 0.4em\relax Oxford University Press,
  2004.

\bibitem{Oosten}
D.~Van~Oosten, P.~van~der Straten, and H.~Stoof, ``Quantum phases in an optical
  lattice,'' \emph{Physical Review A}, vol.~63, no.~5, p. 053601, 2001.

\bibitem{Tsallis}
C.~Tsallis, ``Diagonalization methods for the general bilinear hamiltonian of
  an assembly of bosons,'' \emph{Journal of Mathematical Physics}, vol.~19,
  no.~1, pp. 277--286, 1978.

\bibitem{Xiao}
M.-w. Xiao, ``Theory of transformation for the diagonalization of quadratic
  hamiltonians,'' \emph{arXiv preprint arXiv:0908.0787}, 2009.

\bibitem{Hemmen}
J.~Van~Hemmen, ``A note on the diagonalization of quadratic boson and fermion
  hamiltonians,'' \emph{Zeitschrift f{\"u}r Physik B Condensed Matter},
  vol.~38, no.~3, pp. 271--277, 1980.

\bibitem{WuNiuStability}
B.~Wu and Q.~Niu, ``Superfluidity of bose--einstein condensate in an optical
  lattice: Landau--zener tunnelling and dynamical instability,'' \emph{New
  journal of Physics}, vol.~5, no.~1, p. 104, 2003.

\bibitem{GramS}
J.~R. Rice, ``Experiments on gram-schmidt orthogonalization,''
  \emph{Mathematics of Computation}, vol.~20, no.~94, pp. 325--328, 1966.

\bibitem{Bogoliubovvalid}
R.~Ozeri, N.~Katz, J.~Steinhauer, and N.~Davidson, ``Colloquium: Bulk
  bogoliubov excitations in a bose-einstein condensate,'' \emph{Rev. Mod.
  Phys.}, vol.~77, pp. 187--205, Apr 2005.

\bibitem{expBEC2dfillingSpielman}
I.~B. Spielman, W.~D. Phillips, and J.~V. Porto, ``Mott-insulator transition in
  a two-dimensional atomic bose gas,'' \emph{Phys. Rev. Lett.}, vol.~98, p.
  080404, Feb 2007.

\bibitem{expBEC3dfillingKetterle}
J.~Mun, P.~Medley, G.~K. Campbell, L.~G. Marcassa, D.~E. Pritchard, and
  W.~Ketterle, ``Phase diagram for a bose-einstein condensate moving in an
  optical lattice,'' \emph{Phys. Rev. Lett.}, vol.~99, p. 150604, Oct 2007.

\bibitem{SOCOLRev}
S.~Zhang, W.~S. Cole, A.~Paramekanti, and N.~Trivedi, ``Spin-orbit coupling in
  optical lattices,'' in \emph{Annual Review of Cold Atoms and
  Molecules}.\hskip 1em plus 0.5em minus 0.4em\relax World Scientific, 2015,
  pp. 135--179.

\bibitem{galteland2016competing}
P.~N. Galteland and A.~Sudb{\o}, ``Competing interactions in
  population-imbalanced two-component bose-einstein condensates,''
  \emph{Physical Review B}, vol.~94, no.~5, p. 054510, 2016.

\bibitem{FFLO}
L.~Radzihovsky, ``Fluctuations and phase transitions in larkin-ovchinnikov
  liquid-crystal states of a population-imbalanced resonant fermi gas,''
  \emph{Physical Review A}, vol.~84, no.~2, p. 023611, 2011.

\bibitem{SWdetectedExp}
J.-R. Li, J.~Lee, W.~Huang, S.~Burchesky, B.~Shteynas, F.~{\c{C}}. Top, A.~O.
  Jamison, and W.~Ketterle, ``A stripe phase with supersolid properties in
  spin--orbit-coupled bose--einstein condensates,'' \emph{Nature}, vol. 543,
  no. 7643, pp. 91--94, 2017.

\bibitem{SWdetectedTheory}
G.~I. Martone and G.~V. Shlyapnikov, ``Drag force and superfluidity in the
  supersolid stripe phase of a spin--orbit-coupled bose--einstein condensate,''
  \emph{Journal of Experimental and Theoretical Physics}, vol. 127, no.~5, pp.
  865--876, 2018.

\bibitem{li2013superstripes}
Y.~Li, G.~I. Martone, L.~P. Pitaevskii, and S.~Stringari, ``Superstripes and
  the excitation spectrum of a spin-orbit-coupled bose-einstein condensate,''
  \emph{Physical review letters}, vol. 110, no.~23, p. 235302, 2013.

\bibitem{wangPWSWTransition}
C.~Wang, C.~Gao, C.-M. Jian, and H.~Zhai, ``Spin-orbit coupled spinor
  bose-einstein condensates,'' \emph{Physical review letters}, vol. 105,
  no.~16, p. 160403, 2010.

\bibitem{fildrag}
D.~Fil and S.~Shevchenko, ``Nondissipative drag of superflow in a two-component
  bose gas,'' \emph{Physical Review A}, vol.~72, no.~1, p. 013616, 2005.

\bibitem{weichman}
P.~B. Weichman, ``Crossover scaling in a dilute bose superfluid near zero
  temperature,'' \emph{Physical Review B}, vol.~38, no.~13, p. 8739, 1988.

\bibitem{Hartmann}
S.~T.~H. Hartmann, Master's thesis, Norwegian University of Science and
  Technology, 2018.

\bibitem{simulatedannealing}
S.~Kirkpatrick, C.~D. Gelatt, and M.~P. Vecchi, ``Optimization by simulated
  annealing,'' \emph{science}, vol. 220, no. 4598, pp. 671--680, 1983.

\end{thebibliography}

\cleardoublepage	%% Edit your references in "mylib.bib"
% \section*{\begin{center}{\Huge Appendices}\end{center}}
% \addcontentsline{toc}{chapter}{Appendices}

%$\\[0.5cm]$

%\noindent 

\begin{appendix}

\chapter{Further Details in the SW Phase} \label{app:SW}
\section{The Special Momenta} %$\boldsymbol{k}=\boldsymbol{0}, \pm 2\boldsymbol{k}_{01}, \pm3\boldsymbol{k}_{01}$
We start by looking at the special momentum $\boldsymbol{k}=\boldsymbol{0}$. We name this part of the Hamiltonian $H_2(\boldsymbol{0})$, and write it as
\begin{equation}
    H_2(\boldsymbol{0}) = \frac12 \boldsymbol{A}_0^{\dagger}M_0\boldsymbol{A}_0,
\end{equation}
where we define a new basis
\begin{align}
    \begin{split}
        \boldsymbol{A}_0^{\dagger} = (&A_{\boldsymbol{0}}^{\uparrow\dagger}, A_{2\boldsymbol{k}_{01}}^{\uparrow\dagger}, A_{-2\boldsymbol{k}_{01}}^{\uparrow\dagger}, A_{\boldsymbol{0}}^{\downarrow\dagger}, A_{2\boldsymbol{k}_{01}}^{\downarrow\dagger}, A_{-2\boldsymbol{k}_{01}}^{\downarrow\dagger}, \\
        &A_{\boldsymbol{0}}^{\uparrow}, A_{2\boldsymbol{k}_{01}}^{\uparrow}, A_{-2\boldsymbol{k}_{01}}^{\uparrow}, A_{\boldsymbol{0}}^{\downarrow}, A_{2\boldsymbol{k}_{01}}^{\downarrow}, A_{-2\boldsymbol{k}_{01}}^{\downarrow}),
    \end{split}
\end{align}
wherein no operators are repeated. $M_0$ has the form
\begin{gather}
    M_{0} = 
    \begin{pmatrix}
    M_{0, 1} & M_{0,2} \\
    M_{0, 2}^* & M_{0, 1}^* \\
    \end{pmatrix},
\end{gather}
with
\setcounter{MaxMatrixCols}{6}
\begin{gather*}
    M_{0, 1} = 
    \begin{pmatrix}
    M_{1,1}(\boldsymbol{0}) & M_{1,3} & M_{1,3}^{*} & M_{1,7}(\boldsymbol{0}) & M_{1,9} & M_{1,11} \\
    M_{1,3}^* & 0 & 0 & M_{1,11} & 0 & 0 \\
    M_{1,3} & 0 & 0 & M_{1,9} & 0 & 0  \\
    M_{1,7}^*(\boldsymbol{0}) & M_{1,11}^* & M_{1,9}^* & M_{1,1}(\boldsymbol{0}) & M_{7,9} & M_{7,9}^* \\
    M_{1,9}^{*} & 0 & 0 & M_{7,9}^* & 0 & 0 \\
    M_{1,11}^* & 0 & 0 & M_{7,9} & 0 & 0 \\
    \end{pmatrix}
\end{gather*}
and
\begin{gather*}
    M_{0, 2}^* = 
    \begin{pmatrix}
    M_{13,2} & M_{13,4} & M_{13,6} & M_{13,8} & M_{13,10} & M_{13,12}\\
    M_{13,4} & 0 & 0 & M_{13,10} & 0 & 0 \\
    M_{13,6} & 0 & 0 & M_{13,12} & 0 & 0 \\
    M_{13,8} & M_{13,10} & M_{13,12} & M_{19,8} & M_{19,10} & M_{19,12}\\
    M_{13,10} & 0 & 0 & M_{19,10} & 0 & 0 \\
    M_{13,12} & 0 & 0 & M_{19,12} & 0 & 0 \\
    \end{pmatrix}.
\end{gather*}
% \setcounter{MaxMatrixCols}{6}
% \begin{gather*}
%     M_{0, 1} = 
%     \begin{pmatrix}
%     M_{1,1}(\boldsymbol{0}) & M_{1,3} & M_{1,3}^{*} & 0 & M_{1,9} & -iM_{1,9}^{*} \\
%     M_{1,3}^* & 0 & 0 & -iM_{1,9}^{*} & 0 & 0 \\
%     M_{1,3} & 0 & 0 & M_{1,9} & 0 & 0  \\
%     0 & iM_{1,9} & M_{1,9}^* & M_{1,1}(\boldsymbol{0}) & -M_{1,3} & -M_{1,3}^* \\
%     M_{1,9}^{*} & 0 & 0 & -M_{1,3}^* & 0 & 0 \\
%     iM_{1,9} & 0 & 0 & -M_{1,3} & 0 & 0 \\
%     \end{pmatrix}
% \end{gather*}
% and
% \begin{gather*}
%     M_{0, 2}^* = 
%     \begin{pmatrix}
%     M_{13,2} & M_{13,4} & M_{13,6} & 0 & M_{13,10} & M_{13,12}\\
%     M_{13,4} & 0 & 0 & M_{13,10} & 0 & 0 \\
%     M_{13,6} & 0 & 0 & M_{13,12} & 0 & 0 \\
%     0 & M_{13,10} & M_{13,12} & -iM_{13,2} & iM_{13,4} & iM_{13,6}\\
%     M_{13,10} & 0 & 0 & iM_{13,4} & 0 & 0 \\
%     M_{13,12} & 0 & 0 & iM_{13,6} & 0 & 0 \\
%     \end{pmatrix}.
% \end{gather*}
Numerically, we find 8 nonzero eigenvalues $\lambda = \pm \omega_{0,i}$, $i=1,2,3,4$ while 4 eigenvalues are within numerical accuracy $0$. Numerical investigations of the transformation matrix $T_0$ show that the two smallest nonzero eigenvalues appear with a minus sign in the diagonalized version. Defining $\Delta\omega_{0,1} = \omega_{0,1}+\Omega_0$, $\Delta\omega_{0,2} = \omega_{0,2}+\Omega_0$, $\Delta\omega_{0,5} = \Omega_0-\omega_{0,4}$ and $\Delta\omega_{0,6} = \Omega_0-\omega_{0,3}$, $H_2(\boldsymbol{0})$ can be written
\begin{align}
    \begin{split}
        H_2(\boldsymbol{0}) = &-\Omega_0 \sum_{\sigma=1}^6 \left(B_{\boldsymbol{0},\sigma}^{\dagger}B_{\boldsymbol{0},\sigma}+\frac12\right) + \Delta\omega_{0, 1}\left(B_{\boldsymbol{0},1}^{\dagger}B_{\boldsymbol{0},1}+\frac12\right) \\
        &+ \Delta\omega_{0, 2}\left(B_{\boldsymbol{0},2}^{\dagger}B_{\boldsymbol{0},2}+\frac12\right) +\Omega_0\left(B_{\boldsymbol{0},3}^{\dagger}B_{\boldsymbol{0},3} +B_{\boldsymbol{0},4}^{\dagger}B_{\boldsymbol{0},4} +1\right)  \\
        &+\Delta\omega_{0, 5}\left(B_{\boldsymbol{0},5}^{\dagger}B_{\boldsymbol{0},5} +\frac12\right) + \Delta\omega_{0, 6}\left(B_{\boldsymbol{0},6}^{\dagger}B_{\boldsymbol{0},6} +\frac12\right).
    \end{split}
\end{align}
Within numerical accuracy, $\Delta\omega_{0,1} = \Delta\Omega_1(\boldsymbol{0}) = \Delta\Omega_2(\boldsymbol{0})$, $\Delta\omega_{0, 2} = \Delta\Omega_3(\boldsymbol{0}) = \Delta\Omega_4(\boldsymbol{0})$, $\Delta\omega_{0,5} = \Delta\Omega_9(\boldsymbol{0}) = \Delta\Omega_{10}(\boldsymbol{0})$ and $\Delta\omega_{0,6} = \Delta\Omega_{11}(\boldsymbol{0}) = \Delta\Omega_{12}(\boldsymbol{0})$. Thus $H_2(\boldsymbol{0})$ can be incorporated in $H'_2$ if we drop the $\boldsymbol{k}\neq \boldsymbol{0}$ limitation in the sums in the definition of $N_q$ and $H'_2$. In $H'_2$ the operators are not yet defined at $\boldsymbol{k}=\boldsymbol{0}$. One may simply define them in such a way that this inclusion of $\boldsymbol{k} = \boldsymbol{0}$ makes sense.

We use commutators and treat $H_2(2\boldsymbol{k}_{01})$ and $H_2(-2\boldsymbol{k}_{01})$ simultaneously by having first made $-\boldsymbol{k}$-term explicit in the sum in $H_2$. A $20\cross20$ matrix $M_{2k_0}$ is found. If we define
\begin{footnotesize}
\begin{align}
    \begin{split}
        \boldsymbol{A}_{2k_0}^{\dagger} = (&A_{2\boldsymbol{k}_{01}}^{\uparrow\dagger}, A_{-2\boldsymbol{k}_{01}}^{\uparrow\dagger}, A_{4\boldsymbol{k}_{01}}^{\uparrow\dagger}, A_{\boldsymbol{0}}^{\uparrow\dagger}, A_{-4\boldsymbol{k}_{01}}^{\uparrow\dagger}, A_{2\boldsymbol{k}_{01}}^{\downarrow\dagger}, A_{-2\boldsymbol{k}_{01}}^{\downarrow\dagger}, A_{4\boldsymbol{k}_{01}}^{\downarrow\dagger}, A_{\boldsymbol{0}}^{\downarrow\dagger}, A_{-4\boldsymbol{k}_{01}}^{\downarrow\dagger}, \\
        &A_{2\boldsymbol{k}_{01}}^{\uparrow}, A_{-2\boldsymbol{k}_{01}}^{\uparrow}, A_{4\boldsymbol{k}_{01}}^{\uparrow}, A_{\boldsymbol{0}}^{\uparrow}, A_{-4\boldsymbol{k}_{01}}^{\uparrow}, A_{2\boldsymbol{k}_{01}}^{\downarrow}, A_{-2\boldsymbol{k}_{01}}^{\downarrow}, A_{4\boldsymbol{k}_{01}}^{\downarrow}, A_{\boldsymbol{0}}^{\downarrow}, A_{-4\boldsymbol{k}_{01}}^{\downarrow}),
    \end{split}
\end{align}
\end{footnotesize}
then $H_2(-2\boldsymbol{k}_{01}) = H_2(2\boldsymbol{k}_{01}) = (\boldsymbol{A}_{2k_0}^{\dagger}M_{2k_0}\boldsymbol{A}_{2k_0})/4$. The origin of our problems is that elements $4+6i$ and $5+6i$, $i=0,1,2,3$, of $\boldsymbol{A}_{2\boldsymbol{k}_{01}}$ are equal. This is why $M_{2k_0}$ has a size of 4 columns and rows less than $M_{\boldsymbol{k}}$. We can use this to construct $M_{2k_0}$ from $M_{\boldsymbol{k}}$. One simply combines rows and columns that corresponds to the elements that are equal in the original basis. The first steps are to add columns $5+6i$ to columns $4+6i$ for all $i=0,1,2,3$. Then one adds rows $5+6i$ to rows $4+6i$ for all $i=0,1,2,3$. Finally, rows and columns $5+6i$ for all $i=0,1,2,3$ are removed to obtain the $20\cross 20$ matrix $M_{2k_0}$. As check, we obtained $M_{2k_0}$ by writing out the Hamiltonian at $\boldsymbol{k} = 2\boldsymbol{k}_{01}$ confirming the above procedure is valid. A similar procedure could also have been used to obtain the matrix $M_0$. For brevity, we do not give an explicit expression for $M_{2k_0}$ since it is obtainable from $M_{\boldsymbol{k}}$.

We obtain 16 nonzero eigenvalues of $M_{2k_0}J$ that can be written $\lambda = \pm\omega_{2k_0, i}$, $i=1, 2, \dots 8$, while four eigenvalues are $0$. Numerical investigations of the transformation matrix $T_{2k_0}$ show that the four smallest nonzero eigenvalues appear with a minus sign in the diagonalized version. On diagonal form we write $H_2(2\boldsymbol{k}_{01}) = H_2(-2\boldsymbol{k}_{01}) =$
\begin{align}
    \begin{split}
    \frac{1}{2}&\Big\{\sum_{\sigma = 1}^{4}\omega_{2k_0, \sigma}\left( B_{2\boldsymbol{k}_{01},\sigma^{'}}^{\dagger}B_{2\boldsymbol{k}_{01}, \sigma^{'}}+\frac12\right) \\
    & -\sum_{\sigma = 5}^{8}\omega_{2k_0, \sigma}\left( B_{2\boldsymbol{k}_{01},\sigma^{'}}^{\dagger}B_{2\boldsymbol{k}_{01}, \sigma^{'}}+\frac12\right) \\
    & + \sum_{\sigma = 9}^{10} 0 \left( B_{2\boldsymbol{k}_{01},\sigma^{'}}^{\dagger}B_{2\boldsymbol{k}_{01}, \sigma^{'}}+\frac12\right)\Big\}.
    \end{split}
\end{align}
We shift the zero of energy by $\Omega_0$ for these eigenvalues as well. Defining $\Delta\omega_{2k_{01},i} = \Omega_0+\omega_{2k_0, i}$ for $i=1,2,3,4$, $\Delta\omega_{2k_{01},i} = \Omega_0$ for $i=5,6$, $\Delta\omega_{2k_{01},i} = \Omega_0-\omega_{2k_0, i'}$  for $i=7,8,9,10$ and $i'=8,7,6,5$ and renumbering the operators we arrive at $H_2(2\boldsymbol{k}_{01}) + H_2(-2\boldsymbol{k}_{01}) = 2H_2(2\boldsymbol{k}_{01})$,
\begin{align}
    \begin{split}
    \label{eq:SWspecial2k01}
        2H_2(2\boldsymbol{k}_{01}) &= -\Omega_0 N_{q,2k_0}+\sum_{\sigma=1}^{10} \Delta\omega_{2k_0, \sigma} \left( B_{2\boldsymbol{k}_{01},\sigma}^{\dagger}B_{2\boldsymbol{k}_{01}, \sigma} +\frac12  \right).
    \end{split}
\end{align}
To simplify the expression, we defined 
$$N_{q,2k_0} \equiv \sum_{\sigma=1}^{10} \left(B_{2\boldsymbol{k}_{01},\sigma}^{\dagger}B_{2\boldsymbol{k}_{01}, \sigma}+\frac12\right).$$
The energies $\Delta\omega_{2k_{01},\sigma}$ do not agree completely with the energy spectrum $\Delta\Omega_\sigma(\boldsymbol{k})$ at $\pm2\boldsymbol{k}_{01}$ and therefore we will keep the treatment of $\boldsymbol{k}=\pm2\boldsymbol{k}_{01}$ separate. 

Moving on to the special treatment of $\pm3\boldsymbol{k}_{01}$ we note that it is not the occurrence of equal operators in the basis that is our problem. Rather it is the occurrence of condensate operators which have already been treated as complex numbers. Terms with these condensate operators are excluded from the sum in $H_2$. The basis \eqref{eq:SWbra} at $3\boldsymbol{k}_{01}$ contains condensate operators in elements $4+6i$ and $5+6i$ for $i=0,1,2,3$. Removing these, we define a new basis
% \begin{align}
%     \begin{split}
%         \boldsymbol{A}_{3\boldsymbol{k}_{01}}^{\dagger} = (&A_{3\boldsymbol{k}_{01}}^{\uparrow\dagger}, A_{-3\boldsymbol{k}_{01}}^{\uparrow\dagger}, A_{5\boldsymbol{k}_{01}}^{\uparrow\dagger}, A_{-\boldsymbol{k}_{01}}^{\uparrow\dagger}, A_{\boldsymbol{k}_{01}}^{\uparrow\dagger}, A_{-5\boldsymbol{k}_{01}}^{\uparrow\dagger}, \\
%         &A_{3\boldsymbol{k}_{01}}^{\downarrow\dagger}, A_{-3\boldsymbol{k}_{01}}^{\downarrow\dagger}, A_{5\boldsymbol{k}_{01}}^{\downarrow\dagger}, A_{-\boldsymbol{k}_{01}}^{\downarrow\dagger}, A_{\boldsymbol{k}_{01}}^{\downarrow\dagger}, A_{-5\boldsymbol{k}_{01}}^{\downarrow\dagger}, \\
%         &A_{3\boldsymbol{k}_{01}}^{\uparrow}, A_{-3\boldsymbol{k}_{01}}^{\uparrow}, A_{5\boldsymbol{k}_{01}}^{\uparrow}, A_{-\boldsymbol{k}_{01}}^{\uparrow}, A_{\boldsymbol{k}_{01}}^{\uparrow}, A_{-5\boldsymbol{k}_{01}}^{\uparrow}, \\
%         &A_{3\boldsymbol{k}_{01}}^{\downarrow}, A_{-3\boldsymbol{k}_{01}}^{\downarrow}, A_{5\boldsymbol{k}_{01}}^{\downarrow}, A_{-\boldsymbol{k}_{01}}^{\downarrow}, A_{\boldsymbol{k}_{01}}^{\downarrow}, A_{-5\boldsymbol{k}_{01}}^{\downarrow}).
%     \end{split}
% \end{align}
% Defining
\begin{align}
    \begin{split}
        \boldsymbol{A}_{3k_0}^{\dagger} = (&A_{3\boldsymbol{k}_{01}}^{\uparrow\dagger}, A_{-3\boldsymbol{k}_{01}}^{\uparrow\dagger}, A_{5\boldsymbol{k}_{01}}^{\uparrow\dagger}, A_{-5\boldsymbol{k}_{01}}^{\uparrow\dagger}, A_{3\boldsymbol{k}_{01}}^{\downarrow\dagger}, A_{-3\boldsymbol{k}_{01}}^{\downarrow\dagger}, A_{5\boldsymbol{k}_{01}}^{\downarrow\dagger}, A_{-5\boldsymbol{k}_{01}}^{\downarrow\dagger}, \\
        &A_{3\boldsymbol{k}_{01}}^{\uparrow}, A_{-3\boldsymbol{k}_{01}}^{\uparrow}, A_{5\boldsymbol{k}_{01}}^{\uparrow}, A_{-5\boldsymbol{k}_{01}}^{\uparrow}, A_{3\boldsymbol{k}_{01}}^{\downarrow}, A_{-3\boldsymbol{k}_{01}}^{\downarrow}, A_{5\boldsymbol{k}_{01}}^{\downarrow}, A_{-5\boldsymbol{k}_{01}}^{\downarrow}).
    \end{split}
\end{align}
Then, we can write
\begin{equation}
    H_2(3\boldsymbol{k}_{01}) = H_2(-3\boldsymbol{k}_{01}) = \frac{1}{4} \boldsymbol{A}_{3k_0}^{\dagger} M_{3k_0}  \boldsymbol{A}_{3k_0}.
\end{equation}
The $16\cross 16$ matrix $M_{3k_0}$ can be obtained from the $24 \cross 24$ matrix $M_{\boldsymbol{k}}$ in \eqref{eq:SWMat} at $3\boldsymbol{k}_{01}$ by removing rows and columns $4+6i$ and $5+6i$, $i=0,1,2,3$, as the entries in these rows and columns correspond to the terms we should remove. The eigenvalues of $M_{3k_0}J$ found numerically can be written $\pm\omega_{3k_0,i}$ for $i=1,2, \dots 8$, i.e. $8$ positive and $8$ negative eigenvalues. The four smallest positive eigenvalues have eigenvectors with negative BV norm and thus enter the diagonalized form with a negative sign. We find $H_2(3\boldsymbol{k}_{01}) = H_2(-3\boldsymbol{k}_{01}) =$
\begin{align}
    \begin{split}
    \frac{1}{2}&\Big\{\sum_{\sigma = 1}^{4}\omega_{3k_0, \sigma}\left( B_{3\boldsymbol{k}_{01},\sigma^{'}}^{\dagger}B_{3\boldsymbol{k}_{01}, \sigma^{'}}+\frac12\right) \\
    & -\sum_{\sigma = 5}^{8}\omega_{3k_0, \sigma}\left( B_{3\boldsymbol{k}_{01},\sigma^{'}}^{\dagger}B_{3\boldsymbol{k}_{01}, \sigma^{'}}+\frac12\right) \Big\}.
    \end{split}
\end{align}
We shift the zero of energy by $\Omega_0$ for these eigenvalues as well. Defining $\Delta\omega_{3k_0,i} = \Omega_0+\omega_{3k_0, i}$ for $i = 1,2,3,4$, $\Delta\omega_{3k_0,i} = \Omega_0-\omega_{3k_0, i'}$ for $i=5,6,7,8$ and $i' = 8,7,6,5$ and renumbering the operators we arrive at $H_2(3\boldsymbol{k}_{01}) + H_2(-3\boldsymbol{k}_{01}) = 2H_2(3\boldsymbol{k}_{01})$,
\begin{align}
    \begin{split}
    \label{eq:SWspecial3k01}
        2H_2(3\boldsymbol{k}_{01}) &= -\Omega_0 N_{q,3k_0} +\sum_{\sigma=1}^8 \Delta\omega_{3k_0, \sigma} \left( B_{3\boldsymbol{k}_{01},\sigma}^{\dagger}B_{3\boldsymbol{k}_{01}, \sigma} +\frac12  \right).
    \end{split}
\end{align}
To simplify the expression, we defined
$$N_{q,3k_0} \equiv \sum_{\sigma=1}^{8} \left(B_{3\boldsymbol{k}_{01},\sigma}^{\dagger}B_{3\boldsymbol{k}_{01}, \sigma}+\frac12 \right).$$
The energies $\Delta\omega_{3k_0,\sigma}$ do not agree completely with the energy spectrum $\Delta\Omega_\sigma(\boldsymbol{k})$ at $\pm3\boldsymbol{k}_{01}$ and therefore we will keep the treatment of $\boldsymbol{k}=\pm3\boldsymbol{k}_{01}$ separate.

\section{Differences Between Spin and Helicity Basis Results} \label{app:SWdifference}
We should first mention that the differences between using the spin basis and using the helicity basis are not due to a fundamental difference between the bases. In fact, it can be shown that if we used the full helicity basis \eqref{eq:Helicitybasis} we would obtain the same eigenvalues as in the original spin basis. The two methods become different once we neglect the upper helicity band \eqref{eq:helicitybands}, $\lambda_{\boldsymbol{k}}^+$, an approximation for which there is no equivalent in the spin basis. 

%If we remove appendix diag theory
%In fact, if we used the full helicity basis \eqref{eq:Helicitybasis} we would obtain the same eigenvalues as in the original spin basis, which was discussed in appendix \ref{app:Negeigenvaluediag}.      is replaced by
%In fact, it can be shown that if we used the full helicity basis \eqref{eq:Helicitybasis} we would obtain the same eigenvalues as in the original spin basis.

In the PW phase, the two methods gave the same results regarding the critical superfluid velocity. The biggest difference between the PW and SW phases, is that the SW phase contains interactions that mix different condensate momenta. We introduce coefficients $\Gamma_\pm^{\alpha\beta} = \{0,1\}$ to terms like $\Gamma_\pm^{\alpha\beta} e^{i(\theta_1^\alpha \pm \theta_3^\beta)}$ and its H.c. as these originate from such interactions. The objective is that if we can track down which terms give rise to the linear behavior in the helicity approximation, we may understand the origin of the distinction between the results in the two approaches. Interactions that mix different condensate momenta are also present in $H_0^{''}$. However, we choose to let the condensate remain unchanged and focus on the excitations, i.e. use the same $H_0^{''}$ and hence the same $M_{1,1}(\boldsymbol{k})$. The matrix elements that are changed become 
\begin{align}
    \begin{split}
        \label{eq:SWelemGamma}
        M_{1,3}(\boldsymbol{k}) &= \frac{U_s}{4}\left(2\Gamma_-^{\uparrow\uparrow}e^{i(\theta_1^\uparrow-\theta_3^\uparrow)}+\alpha\Gamma_-^{\downarrow\downarrow} e^{i(\theta_1^\downarrow-\theta_3^\downarrow)}\right), \\
        M_{1,9} &= \frac{U_s\alpha}{4}\Gamma_-^{\downarrow\uparrow}e^{i(\theta_1^\downarrow-\theta_3^\uparrow)}, \\
        M_{1,11} &=  \frac{U_s\alpha}{4}\Gamma_-^{\uparrow\downarrow}e^{-i(\theta_1^\uparrow-\theta_3^\downarrow)}, \\ %\stackrel{(\ref{eq:SWangles})}{=} -iM_{1,9}^*
        M_{7,9} &= \frac{U_s}{4}\left(2\Gamma_-^{\downarrow\downarrow}e^{i(\theta_1^\downarrow-\theta_3^\downarrow)}+\alpha \Gamma_-^{\uparrow\uparrow}e^{i(\theta_1^\uparrow-\theta_3^\uparrow)}\right) , \\ %\stackrel{(\ref{eq:SWangles})}{=} -M_{1,3}
        M_{13,2} &= U_s \Gamma_+^{\uparrow\uparrow}e^{i(\theta_1^\uparrow+\theta_3^\uparrow)}, \\
        M_{13,8} &= \frac{U_s\alpha}{2}\left(\Gamma_+^{\downarrow\uparrow}e^{i(\theta_1^\downarrow+\theta_3^\uparrow)}+\Gamma_+^{\uparrow\downarrow}e^{i(\theta_1^\uparrow+\theta_3^\downarrow)}\right), \\
        M_{19,8} &= U_s \Gamma_+^{\downarrow\downarrow}e^{i(\theta_1^\downarrow+\theta_3^\downarrow)}. %\stackrel{(\ref{eq:SWangles})}{=} -iM_{13,2},
    \end{split}
\end{align}
There are in total $2^8 = 256$ possible choices for the set $\Gamma_\sigma^{\alpha\beta}$. All possibilities have not been explored, however the numerous choices that were gave no significant insights. All versions where only one or only two $\Gamma_\sigma^{\alpha\beta} = 1$ were attempted. Additionally all versions were only one, two or three $\Gamma_\sigma^{\alpha\beta} = 0$ were attempted, along with several other cases.  For instance, with all $\Gamma_\sigma^{\alpha\beta} = 1$ except one, the result was mostly that both approaches gave real eigenvalues with non-linear behavior close to the minimum, or that both approaches gave complex eigenvalues. Therefore, it appears all the terms that mix condensate momenta are needed to obtain the distinction between the two approaches.

%The special cases were when either $\Gamma_+^{\uparrow\downarrow}$ or $\Gamma_+^{\downarrow\uparrow}$ were set to zero and the rest were $1$. Note that these are the ones involved in $M_{13,8}$.  Those cases gave the same results as we had originally with all $\Gamma_\sigma^{\alpha\beta} = 1$, i.e. linear behavior in the helicity approximation, and non-linear behavior using the spin basis. Hence, the form of the term $M_{13,8}$ seems inconsequential. Other than that, it appears all the condensate momentum mixing terms are needed to obtain this distinction between the two approaches.

It was also attempted to change $H_0^{''}$, and hence change $M_{1,1}(\boldsymbol{k})$, by taking into account the interactions in the condensate that mix condensate momenta. This also gave no significant insights. Finally, it was attempted to remove all interactions involving $\boldsymbol{k}_{03}$ from the excitations, i.e. all $\Gamma_\sigma^{\alpha\beta}=0$ in addition to $M_{13,6} = M_{13,12} = M_{19,12} = 0$. In this case, both approaches yielded a spectrum with two phonon minima at $\pm\boldsymbol{k}_{01}$. Hence, it seems it is the presence of two condensate momenta that removes the linearity in the spin basis. Nevertheless, it remains unclear why the same is not true in the helicity approximation.

%%%%%%--------------LW Appendix------------------------%%%%%%%%%%%%%%

% \input{8LWApp}
%%Just a copy of 6LW with section->chapter and subsection->section

\chapter{LW Phase Calculations} \label{app:LW}
The LW phase is such that $\boldsymbol{k}_{01}=(k_0, k_0)$, $\boldsymbol{k}_{02}=(-k_0, k_0)$, $\boldsymbol{k}_{03} = -\boldsymbol{k}_{01}$ and $\boldsymbol{k}_{04} = -\boldsymbol{k}_{02}$ are occupied condensate momenta. We assume that $N_{\boldsymbol{k}_{0i}}^\alpha = N_0^\alpha/4$ i.e. a balanced condensate in terms of the momenta. The expression for $H_0^{''}$ derived from \eqref{eq:H0} is then
\begin{align}
    \begin{split}
        H_0^{''} &= (\epsilon_{\boldsymbol{k}_{01}}+T)N_0 + \frac{1}{2}\sqrt{N_0^\uparrow N_0^\downarrow}\abs{s_{\boldsymbol{k}_{01}}}\sum_{i=1}^4 \cos(\gamma_{\boldsymbol{k}_{0i}}+\Delta\theta_i)\\
        +&\frac{U}{8N_s}\Bigg((N_0^\uparrow)^2\bigg(7+2\cos(\theta_1^\uparrow+\theta_3^\uparrow-\theta_2^\uparrow-\theta_4^\uparrow)\bigg)\\
        &\mbox{\qquad}+ (N_0^\downarrow)^2\bigg(7+2\cos(\theta_1^\downarrow+\theta_3^\downarrow-\theta_2^\downarrow-\theta_4^\downarrow)\bigg) \\
        &+N_0^\uparrow N_0^\downarrow \alpha \bigg[8+\cos(\Delta\theta_1-\Delta\theta_2) + \cos(\Delta\theta_1-\Delta\theta_3) + \cos(\Delta\theta_1-\Delta\theta_4) \\
        &\mbox{\qquad\qquad}+ \cos(\Delta\theta_2-\Delta\theta_3) + \cos(\Delta\theta_2-\Delta\theta_4) + \cos(\Delta\theta_3-\Delta\theta_4)\\
        &\mbox{\qquad\qquad} + \cos(\theta_1^\uparrow+\theta_3^\downarrow-\theta_2^\downarrow-\theta_4^\uparrow) + \cos(\theta_1^\downarrow+\theta_3^\uparrow-\theta_2^\uparrow-\theta_4^\downarrow) \\
        &\mbox{\qquad\qquad} + \cos(\theta_1^\uparrow+\theta_3^\downarrow-\theta_2^\uparrow-\theta_4^\downarrow) + \cos(\theta_1^\downarrow+\theta_3^\uparrow-\theta_2^\downarrow-\theta_4^\uparrow) \bigg]\Bigg).
    \end{split}
\end{align}
We insert \eqref{eq:NN0excitup} and \eqref{eq:NN0excitdown} and obtain
\begin{align}
    \begin{split}
        &H_0^{''} = H_0 -(\epsilon_{\boldsymbol{k}_{01}}+T)\left.\sum_{\boldsymbol{k}}\right.^{'}\sum_\alpha A_{\boldsymbol{k}}^{\alpha\dagger}A_{\boldsymbol{k}}^{\alpha} \\
        &-\frac{\abs{s_{\boldsymbol{k}_{01}}}}{4}\left(\sqrt{\frac{N^\uparrow}{N^\downarrow}}\left.\sum_{\boldsymbol{k}}\right.^{'} A_{\boldsymbol{k}}^{\downarrow\dagger}A_{\boldsymbol{k}}^{\downarrow} + \sqrt{\frac{N^\downarrow}{N^\uparrow}} \left.\sum_{\boldsymbol{k}}\right.^{'} A_{\boldsymbol{k}}^{\uparrow\dagger}A_{\boldsymbol{k}}^{\uparrow}\right)\sum_{i=1}^4 \cos(\gamma_{\boldsymbol{k}_{0i}}+\Delta\theta_i) \\
        &-\frac{U}{8N_s}\Bigg(2N^\uparrow \bigg(7+2\cos(\theta_1^\uparrow+\theta_3^\uparrow-\theta_2^\uparrow-\theta_4^\uparrow)\bigg)\left.\sum_{\boldsymbol{k}}\right.^{'} A_{\boldsymbol{k}}^{\uparrow\dagger}A_{\boldsymbol{k}}^{\uparrow} \\
        &\mbox{\qquad}+ 2N^\downarrow \bigg(7+2\cos(\theta_1^\downarrow+\theta_3^\downarrow-\theta_2^\downarrow-\theta_4^\downarrow)\bigg)\left.\sum_{\boldsymbol{k}}\right.^{'} A_{\boldsymbol{k}}^{\downarrow\dagger}A_{\boldsymbol{k}}^{\downarrow} \\
        &\mbox{\qquad} +\alpha \bigg[8+\cos(\Delta\theta_1-\Delta\theta_2) + \cos(\Delta\theta_1-\Delta\theta_3) + \cos(\Delta\theta_1-\Delta\theta_4) \\
        &\mbox{\qquad\qquad}+ \cos(\Delta\theta_2-\Delta\theta_3) + \cos(\Delta\theta_2-\Delta\theta_4) + \cos(\Delta\theta_3-\Delta\theta_4)\\
        &\mbox{\qquad\qquad} + \cos(\theta_1^\uparrow+\theta_3^\downarrow-\theta_2^\downarrow-\theta_4^\uparrow) + \cos(\theta_1^\downarrow+\theta_3^\uparrow-\theta_2^\uparrow-\theta_4^\downarrow) \\
        &\mbox{\qquad\qquad} + \cos(\theta_1^\uparrow+\theta_3^\downarrow-\theta_2^\uparrow-\theta_4^\downarrow) + \cos(\theta_1^\downarrow+\theta_3^\uparrow-\theta_2^\downarrow-\theta_4^\uparrow) \bigg] \\
        &\mbox{\qquad\qquad\qquad\qquad} \cdot\left(N^\uparrow \left.\sum_{\boldsymbol{k}}\right.^{'} A_{\boldsymbol{k}}^{\downarrow\dagger}A_{\boldsymbol{k}}^{\downarrow}+N^\downarrow \left.\sum_{\boldsymbol{k}}\right.^{'} A_{\boldsymbol{k}}^{\uparrow\dagger}A_{\boldsymbol{k}}^{\uparrow}\right) \Bigg).
    \end{split}
\end{align}
Here we defined the new $H_0$ in terms of $N^\uparrow$ and $N^\downarrow$. From now on, we insert the choice $N^\uparrow = N^\downarrow = N/2$ for the input parameters. Then, $H_0 = H_0^{\textrm{LW}}$ given in \eqref{eq:H0LW}.
% \begin{align}
%     \begin{split}
%     \label{eq:LWH0}
%         H_0 &= N(\epsilon_{\boldsymbol{k}_{01}}+T) + \frac{N}{4}\abs{s_{\boldsymbol{k}_{01}}}\sum_{i=1}^4 \cos(\gamma_{\boldsymbol{k}_{0i}}+\Delta\theta_i)\\
%         +&\frac{UN^2}{32N_s}\Bigg(14+2\cos(\theta_1^\uparrow+\theta_3^\uparrow-\theta_2^\uparrow-\theta_4^\uparrow)+2\cos(\theta_1^\downarrow+\theta_3^\downarrow-\theta_2^\downarrow-\theta_4^\downarrow)\\
%         &+\alpha \bigg[8+\cos(\Delta\theta_1-\Delta\theta_2) + \cos(\Delta\theta_1-\Delta\theta_3) + \cos(\Delta\theta_1-\Delta\theta_4) \\
%         &\mbox{\qquad\quad}+ \cos(\Delta\theta_2-\Delta\theta_3) + \cos(\Delta\theta_2-\Delta\theta_4) + \cos(\Delta\theta_3-\Delta\theta_4)\\
%         &\mbox{\qquad\quad} + \cos(\theta_1^\uparrow+\theta_3^\downarrow-\theta_2^\downarrow-\theta_4^\uparrow) + \cos(\theta_1^\downarrow+\theta_3^\uparrow-\theta_2^\uparrow-\theta_4^\downarrow) \\
%         &\mbox{\qquad\quad} + \cos(\theta_1^\uparrow+\theta_3^\downarrow-\theta_2^\uparrow-\theta_4^\downarrow) + \cos(\theta_1^\downarrow+\theta_3^\uparrow-\theta_2^\downarrow-\theta_4^\uparrow) \bigg]\Bigg).
%     \end{split}
% \end{align}
The rest of $H_0^{''}$ is moved to $H_2$ as it is quadratic in excitation operators. 

%If we compare \eqref{eq:LWH0} to the expressions for $H_0$ in the other phases, we notice that $H_0^{\textrm{LW}}$ is larger than $H_0^{\textrm{SW}}$ for all parameters assuming \eqref{eq:gammathetapi} holds, meaning that even if we had included the LW phase it would not have changed the phase diagram. Now, onto including elementary excitations, i.e. investigating $H_1$ and $H_2$ to obtain quasiparticles that describe the system as a set of uncoupled harmonic oscillators.

When setting up the phase diagram in figure \ref{fig:PDH0new}, we neglected elementary excitations, and found that the system will not enter the LW phase. The objective in this appendix is to include elementary excitations to see if their effects change this conclusion. The linear part of the Hamiltonian \eqref{eq:H1} now contains a multitude of terms, as the Kronecker delta renders $\boldsymbol{k}$ a non-condensate momentum for all choices with $i' \neq i,j$ except when $i \neq j$ and they take the values 1 and 3, or 2 and 4. We may replace $N_0^\alpha$ by $N^\alpha$ directly to the same order of approximation. With the choice $N^\uparrow = N^\downarrow = N/2$, $H_1$ becomes
\begin{align}
    \begin{split}
        H_1 = \frac{\sqrt{N}U_s}{8\sqrt{2}}\sum_{\alpha}\bigg( &c_{+1}^{\alpha} A_{3\boldsymbol{k}_{01}}^\alpha +  c_{-1}^{\alpha} A_{-3\boldsymbol{k}_{01}}^\alpha + c_{+2}^{\alpha} A_{3\boldsymbol{k}_{02}}^\alpha + c_{-2}^{\alpha} A_{-3\boldsymbol{k}_{02}}^\alpha \\
        &+c_{1-2}^{\alpha}A_{2\boldsymbol{k}_{01}-\boldsymbol{k}_{02}}^\alpha + c_{-1+2}^{\alpha}A_{-2\boldsymbol{k}_{01}+\boldsymbol{k}_{02}}^\alpha \\
        &+ c_{1+2}^{\alpha}A_{2\boldsymbol{k}_{01}+\boldsymbol{k}_{02}}^\alpha + c_{-1-2}^{\alpha}A_{-2\boldsymbol{k}_{01}-\boldsymbol{k}_{02}}^\alpha  \\
        &+c_{2-1}^{\alpha}A_{2\boldsymbol{k}_{02}-\boldsymbol{k}_{01}}^\alpha + c_{-2+1}^{\alpha}A_{-2\boldsymbol{k}_{02}+\boldsymbol{k}_{01}}^\alpha \\
        &+ c_{2+1}^{\alpha}A_{2\boldsymbol{k}_{02}+\boldsymbol{k}_{01}}^\alpha + c_{-2-1}^{\alpha}A_{-2\boldsymbol{k}_{02}-\boldsymbol{k}_{01}}^\alpha \bigg) + \textrm{H.c.},
    \end{split}
\end{align}
where we defined
\begin{align}
    \begin{split}
    \label{eq:LWH1coeff}
        c_{+1}^{\alpha} &= e^{i(2\theta_1^{\alpha}-\theta_3^{\alpha})}+\alpha e^{i(\theta_1^{\alpha} + \theta_1^{\Bar{\alpha}}-\theta_3^{\Bar{\alpha}})}, \\
        c_{-1}^{\alpha} &= e^{i(2\theta_3^{\alpha}-\theta_1^{\alpha})}+\alpha e^{i(\theta_3^{\alpha} + \theta_3^{\Bar{\alpha}}-\theta_1^{\Bar{\alpha}})}, \\
        c_{+2}^{\alpha} &= e^{i(2\theta_2^{\alpha}-\theta_4^{\alpha})}+\alpha e^{i(\theta_2^{\alpha} + \theta_2^{\Bar{\alpha}}-\theta_4^{\Bar{\alpha}})} ,\\
        c_{-2}^{\alpha} &= e^{i(2\theta_4^{\alpha}-\theta_2^{\alpha})}+\alpha e^{i(\theta_4^{\alpha} + \theta_4^{\Bar{\alpha}}-\theta_2^{\Bar{\alpha}})} ,\\
        c_{1-2}^{\alpha} &= e^{i(2\theta_1^{\alpha}-\theta_2^{\alpha})}+\alpha e^{i(\theta_1^{\alpha} + \theta_1^{\Bar{\alpha}}-\theta_2^{\Bar{\alpha}})}+e^{i(\theta_1^{\alpha}+\theta_4^{\alpha}-\theta_3^{\alpha})} \\
        & \mbox{\qquad} +\alpha e^{i(\theta_1^{\alpha} + \theta_4^{\Bar{\alpha}}-\theta_3^{\Bar{\alpha}})} +e^{i(\theta_4^{\alpha}+\theta_1^{\alpha}-\theta_3^{\alpha})}+\alpha e^{i(\theta_4^{\alpha} + \theta_1^{\Bar{\alpha}}-\theta_3^{\Bar{\alpha}})} ,\\
        c_{-1+2}^{\alpha} &= e^{i(2\theta_3^{\alpha}-\theta_4^{\alpha})}+\alpha e^{i(\theta_3^{\alpha} + \theta_3^{\Bar{\alpha}}-\theta_4^{\Bar{\alpha}})}+e^{i(\theta_2^{\alpha}+\theta_3^{\alpha}-\theta_1^{\alpha})} \\
        & \mbox{\qquad} +\alpha e^{i(\theta_2^{\alpha} + \theta_3^{\Bar{\alpha}}-\theta_1^{\Bar{\alpha}})}  +e^{i(\theta_3^{\alpha}+\theta_2^{\alpha}-\theta_1^{\alpha})}+\alpha e^{i(\theta_3^{\alpha} + \theta_2^{\Bar{\alpha}}-\theta_1^{\Bar{\alpha}})} ,\\
        c_{1+2}^{\alpha} &= e^{i(2\theta_1^{\alpha}-\theta_4^{\alpha})}+\alpha e^{i(\theta_1^{\alpha} + \theta_1^{\Bar{\alpha}}-\theta_4^{\Bar{\alpha}})}+e^{i(\theta_1^{\alpha}+\theta_2^{\alpha}-\theta_3^{\alpha})} \\
        & \mbox{\qquad} +\alpha e^{i(\theta_1^{\alpha} + \theta_2^{\Bar{\alpha}}-\theta_3^{\Bar{\alpha}})} +e^{i(\theta_2^{\alpha}+\theta_1^{\alpha}-\theta_3^{\alpha})}+\alpha e^{i(\theta_2^{\alpha} + \theta_1^{\Bar{\alpha}}-\theta_3^{\Bar{\alpha}})} ,\\
        c_{-1-2}^{\alpha} &= e^{i(2\theta_3^{\alpha}-\theta_2^{\alpha})}+\alpha e^{i(\theta_3^{\alpha} + \theta_3^{\Bar{\alpha}}-\theta_2^{\Bar{\alpha}})}+e^{i(\theta_3^{\alpha}+\theta_4^{\alpha}-\theta_1^{\alpha})}\\
        & \mbox{\qquad} +\alpha e^{i(\theta_3^{\alpha} + \theta_4^{\Bar{\alpha}}-\theta_1^{\Bar{\alpha}})} +e^{i(\theta_4^{\alpha}+\theta_3^{\alpha}-\theta_1^{\alpha})}+\alpha e^{i(\theta_4^{\alpha} + \theta_3^{\Bar{\alpha}}-\theta_1^{\Bar{\alpha}})} ,\\
        c_{2-1}^{\alpha} &= e^{i(2\theta_2^{\alpha}-\theta_1^{\alpha})}+\alpha e^{i(\theta_2^{\alpha} + \theta_2^{\Bar{\alpha}}-\theta_1^{\Bar{\alpha}})}+e^{i(\theta_2^{\alpha}+\theta_3^{\alpha}-\theta_4^{\alpha})}\\
        & \mbox{\qquad} +\alpha e^{i(\theta_2^{\alpha} + \theta_3^{\Bar{\alpha}}-\theta_4^{\Bar{\alpha}})}  +e^{i(\theta_3^{\alpha}+\theta_2^{\alpha}-\theta_4^{\alpha})}+\alpha e^{i(\theta_3^{\alpha} + \theta_2^{\Bar{\alpha}}-\theta_4^{\Bar{\alpha}})} ,\\
        c_{-2+1}^{\alpha} &= e^{i(2\theta_4^{\alpha}-\theta_3^{\alpha})}+\alpha e^{i(\theta_4^{\alpha} + \theta_4^{\Bar{\alpha}}-\theta_3^{\Bar{\alpha}})}+e^{i(\theta_1^{\alpha}+\theta_4^{\alpha}-\theta_2^{\alpha})} \\
        & \mbox{\qquad}  +\alpha e^{i(\theta_1^{\alpha} + \theta_4^{\Bar{\alpha}}-\theta_2^{\Bar{\alpha}})} +e^{i(\theta_4^{\alpha}+\theta_1^{\alpha}-\theta_2^{\alpha})}+\alpha e^{i(\theta_4^{\alpha} + \theta_1^{\Bar{\alpha}}-\theta_2^{\Bar{\alpha}})} ,\\
        c_{2+1}^{\alpha} &= e^{i(2\theta_2^{\alpha}-\theta_3^{\alpha})}+\alpha e^{i(\theta_2^{\alpha} + \theta_2^{\Bar{\alpha}}-\theta_3^{\Bar{\alpha}})}+e^{i(\theta_2^{\alpha}+\theta_1^{\alpha}-\theta_4^{\alpha})} \\
        & \mbox{\qquad} +\alpha e^{i(\theta_2^{\alpha} + \theta_1^{\Bar{\alpha}}-\theta_4^{\Bar{\alpha}})} +e^{i(\theta_1^{\alpha}+\theta_2^{\alpha}-\theta_4^{\alpha})}+\alpha e^{i(\theta_1^{\alpha} + \theta_2^{\Bar{\alpha}}-\theta_4^{\Bar{\alpha}})} ,\\
        c_{-2-1}^{\alpha} &= e^{i(2\theta_4^{\alpha}-\theta_1^{\alpha})}+\alpha e^{i(\theta_4^{\alpha} + \theta_4^{\Bar{\alpha}}-\theta_1^{\Bar{\alpha}})}+e^{i(\theta_3^{\alpha}+\theta_4^{\alpha}-\theta_2^{\alpha})} \\
        & \mbox{\qquad} +\alpha e^{i(\theta_3^{\alpha} + \theta_4^{\Bar{\alpha}}-\theta_2^{\Bar{\alpha}})} +e^{i(\theta_4^{\alpha}+\theta_3^{\alpha}-\theta_2^{\alpha})}+\alpha e^{i(\theta_4^{\alpha} + \theta_3^{\Bar{\alpha}}-\theta_2^{\Bar{\alpha}})} .
    \end{split}
\end{align}
Here, $\Bar{\alpha} = \downarrow$ if $\alpha = \uparrow$ and vice versa. We stress the difference between the superscript $\alpha$ representing psuedospin states, and the coefficient $\alpha = U^{\uparrow\downarrow}/U^{\uparrow\uparrow}$. The subscripts of the coefficients are related to the presence and sign of the momenta $\boldsymbol{k}_{01}, \boldsymbol{k}_{02}$ in the momentum indices of the corresponding operators. Our strategy for including $H_1$ will be similar to the SW phase. We will try to transform $H_1$ numerically to the new basis in which $H_2$ is diagonal, and then remove all linear terms by completing squares.

Next, we turn to $H_2$ in \eqref{eq:H2}. Writing out the sums over momentum indices and over $\boldsymbol{k}'$ we find that $\boldsymbol{k}'$ can take 18 separate values. Two of them are $\pm\boldsymbol{k}$, the rest we name $\boldsymbol{p}_i$ and $\boldsymbol{q}_i$ with $i = 1, \dots, 8$ and we define them in table \ref{tab:LWmomenta}. Notice that $\boldsymbol{q}_i(\boldsymbol{k}) = \boldsymbol{p}_i(-\boldsymbol{k})$.
\begin{table}[hb]
    \centering
    \caption{A set of momenta that appear as indices in the Hamiltonian.}
    \begin{tabular}{|c|c|c|}
        \hline
        $i$ & $\boldsymbol{p}_i$ & $\boldsymbol{q}_i$ \\
        \hline
        1 & $\boldsymbol{k}+2\boldsymbol{k}_{01}$ & $-\boldsymbol{k}+2\boldsymbol{k}_{01}$ \\
        \hline
        2 & $\boldsymbol{k}-2\boldsymbol{k}_{01}$ & $-\boldsymbol{k}-2\boldsymbol{k}_{01}$ \\
        \hline
        3 & $\boldsymbol{k}+2\boldsymbol{k}_{02}$ & $-\boldsymbol{k}+2\boldsymbol{k}_{02}$ \\
        \hline
        4 & $\boldsymbol{k}-2\boldsymbol{k}_{02}$ & $-\boldsymbol{k}-2\boldsymbol{k}_{02}$ \\
        \hline
        5 & $\boldsymbol{k}+\boldsymbol{k}_{01}+\boldsymbol{k}_{02}$ & $-\boldsymbol{k}+\boldsymbol{k}_{01}+\boldsymbol{k}_{02}$ \\
        \hline
        6 & $\boldsymbol{k}+\boldsymbol{k}_{01}-\boldsymbol{k}_{02}$ & $-\boldsymbol{k}+\boldsymbol{k}_{01}-\boldsymbol{k}_{02}$ \\
        \hline
        7 & $\boldsymbol{k}-\boldsymbol{k}_{01}+\boldsymbol{k}_{02}$ & $-\boldsymbol{k}-\boldsymbol{k}_{01}+\boldsymbol{k}_{02}$ \\
        \hline
        8 & $\boldsymbol{k}-\boldsymbol{k}_{01}-\boldsymbol{k}_{02}$ & $-\boldsymbol{k}-\boldsymbol{k}_{01}-\boldsymbol{k}_{02}$ \\
        \hline
    \end{tabular}
    \label{tab:LWmomenta}
\end{table}
After writing out these sums, $H_2$ becomes
\begin{align}
    \begin{split}
        H_2 &= \left.\sum_{\boldsymbol{k}}\right.^{'}  \sum_{\alpha\beta} \eta_{\boldsymbol{k}}^{\alpha\beta} A_{\boldsymbol{k}}^{\alpha\dagger}A_{\boldsymbol{k}}^{\beta}+\frac{N}{16N_s}\left.\sum_{\boldsymbol{k}}\right.^{''} \sum_{\alpha\beta} U^{\alpha\beta} \\
        \cdot&\Bigg(\bigg[\left(e^{i(\theta_1^\alpha+\theta_3^\beta)}+e^{i(\theta_3^\alpha+\theta_1^\beta)}+e^{i(\theta_2^\alpha+\theta_4^\beta)}+e^{i(\theta_4^\alpha+\theta_2^\beta)}\right)A_{\boldsymbol{k}}^\beta A_{-\boldsymbol{k}}^\alpha \\
        &+ e^{i(\theta_1^\alpha+\theta_1^\beta)}A_{\boldsymbol{k}}^\beta A_{\boldsymbol{q}_1}^\alpha + e^{i(\theta_3^\alpha+\theta_3^\beta)}A_{\boldsymbol{k}}^\beta A_{\boldsymbol{q}_2}^\alpha \\
        &+e^{i(\theta_2^\alpha+\theta_2^\beta)}A_{\boldsymbol{k}}^\beta A_{\boldsymbol{q}_3}^\alpha + e^{i(\theta_4^\alpha+\theta_4^\beta)}A_{\boldsymbol{k}}^\beta A_{\boldsymbol{q}_4}^\alpha \\
        &+\left(e^{i(\theta_1^\alpha+\theta_2^\beta)}+e^{i(\theta_2^\alpha+\theta_1^\beta)}\right)A_{\boldsymbol{k}}^\beta A_{\boldsymbol{q}_5}^\alpha + \left(e^{i(\theta_1^\alpha+\theta_4^\beta)}+e^{i(\theta_4^\alpha+\theta_1^\beta)}\right)A_{\boldsymbol{k}}^\beta A_{\boldsymbol{q}_6}^\alpha \\
        &+\left(e^{i(\theta_2^\alpha+\theta_3^\beta)}+e^{i(\theta_3^\alpha+\theta_2^\beta)}\right)A_{\boldsymbol{k}}^\beta A_{\boldsymbol{q}_7}^\alpha + \left(e^{i(\theta_3^\alpha+\theta_4^\beta)}+e^{i(\theta_4^\alpha+\theta_3^\beta)}\right)A_{\boldsymbol{k}}^\beta A_{\boldsymbol{q}_8}^\alpha \\
        &+ \left(e^{i(\theta_1^\alpha-\theta_1^\beta)}+e^{i(\theta_2^\alpha-\theta_2^\beta)}+e^{i(\theta_3^\alpha-\theta_3^\beta)}+e^{i(\theta_4^\alpha-\theta_4^\beta)}\right)A_{\boldsymbol{k}}^{\beta\dagger} A_{\boldsymbol{k}}^\alpha + 4A_{\boldsymbol{k}}^{\beta\dagger} A_{\boldsymbol{k}}^\beta \\
        &+e^{i(\theta_1^\alpha-\theta_3^\beta)}A_{\boldsymbol{k}}^{\beta\dagger} A_{\boldsymbol{p}_1}^\alpha + e^{i(\theta_1^\alpha-\theta_3^\alpha)}A_{\boldsymbol{k}}^{\beta\dagger} A_{\boldsymbol{p}_1}^\beta \\
        &+ e^{i(\theta_3^\alpha-\theta_1^\beta)}A_{\boldsymbol{k}}^{\beta\dagger} A_{\boldsymbol{p}_2}^\alpha + e^{i(\theta_3^\alpha-\theta_1^\alpha)}A_{\boldsymbol{k}}^{\beta\dagger} A_{\boldsymbol{p}_2}^\beta \\
        &+e^{i(\theta_2^\alpha-\theta_4^\beta)}A_{\boldsymbol{k}}^{\beta\dagger} A_{\boldsymbol{p}_3}^\alpha + e^{i(\theta_2^\alpha-\theta_4^\alpha)}A_{\boldsymbol{k}}^{\beta\dagger} A_{\boldsymbol{p}_3}^\beta \\
        &+ e^{i(\theta_4^\alpha-\theta_2^\beta)}A_{\boldsymbol{k}}^{\beta\dagger} A_{\boldsymbol{p}_4}^\alpha + e^{i(\theta_4^\alpha-\theta_2^\alpha)}A_{\boldsymbol{k}}^{\beta\dagger} A_{\boldsymbol{p}_4}^\beta \\
        &+ \left(e^{i(\theta_1^\alpha-\theta_4^\beta)}+e^{i(\theta_2^\alpha-\theta_3^\beta)}\right)A_{\boldsymbol{k}}^{\beta\dagger} A_{\boldsymbol{p}_5}^\alpha + \left(e^{i(\theta_1^\alpha-\theta_4^\alpha)}+e^{i(\theta_2^\alpha-\theta_3^\alpha)}\right)A_{\boldsymbol{k}}^{\beta\dagger} A_{\boldsymbol{p}_5}^\beta \\
        &+ \left(e^{i(\theta_1^\alpha-\theta_2^\beta)}+e^{i(\theta_4^\alpha-\theta_3^\beta)}\right)A_{\boldsymbol{k}}^{\beta\dagger} A_{\boldsymbol{p}_6}^\alpha + \left(e^{i(\theta_1^\alpha-\theta_2^\alpha)}+e^{i(\theta_4^\alpha-\theta_3^\alpha)}\right)A_{\boldsymbol{k}}^{\beta\dagger} A_{\boldsymbol{p}_6}^\beta \\
        &+ \left(e^{i(\theta_2^\alpha-\theta_1^\beta)}+e^{i(\theta_3^\alpha-\theta_4^\beta)}\right)A_{\boldsymbol{k}}^{\beta\dagger} A_{\boldsymbol{p}_7}^\alpha + \left(e^{i(\theta_2^\alpha-\theta_1^\alpha)}+e^{i(\theta_3^\alpha-\theta_4^\alpha)}\right)A_{\boldsymbol{k}}^{\beta\dagger} A_{\boldsymbol{p}_7}^\beta \\
        &+ \left(e^{i(\theta_3^\alpha-\theta_2^\beta)}+e^{i(\theta_4^\alpha-\theta_1^\beta)}\right)A_{\boldsymbol{k}}^{\beta\dagger} A_{\boldsymbol{p}_8}^\alpha \\
        &\mbox{\qquad\qquad\qquad}+ \left(e^{i(\theta_3^\alpha-\theta_2^\alpha)}+e^{i(\theta_4^\alpha-\theta_1^\alpha)}\right)A_{\boldsymbol{k}}^{\beta\dagger} A_{\boldsymbol{p}_8}^\beta\bigg] + \textrm{H.c}\Bigg). \\
    \end{split}
\end{align}
The sum $\left.\sum_{\boldsymbol{k}}\right.^{'}$ excludes the condensate momenta $\pm\boldsymbol{k}_{01}$ and $\pm\boldsymbol{k}_{02}$. The same goes for the sum $\left.\sum_{\boldsymbol{k}}\right.^{''}$. However, the double prime on this sum also indicates that any term containing a condensate momentum as an index is excluded. For instance, for $\boldsymbol{k} = 3\boldsymbol{k}_{01}$ the term $A_{\boldsymbol{k}}^{\beta\dagger} A_{\boldsymbol{p}_2}^\alpha$ becomes $A_{3\boldsymbol{k}_{01}}^{\beta\dagger} A_{\boldsymbol{k}_{01}}^\alpha$. Since the condensate operators have already been treated as complex numbers, these terms should be excluded from the sum. 

\section{Matrix Representation}
Including the terms from $H_0^{''}$ the coefficient of $A_{\boldsymbol{k}}^{\uparrow\dagger}A_{\boldsymbol{k}}^{\uparrow}$ is 
\begin{align}
    \begin{split}
        % &\epsilon_{\boldsymbol{k}} + T + \frac{UN}{16N_s}(16+8\alpha) -\epsilon_{\boldsymbol{k}_{01}} -T -\frac{\abs{s_{\boldsymbol{k}_{01}}}}{4}\sum_{i=1}^4 \cos(\gamma_{\boldsymbol{k}_{0i}}+\Delta\theta_i) \\
        % &-\frac{UN}{16N_s}\Bigg(2\bigg(7+2\cos(\theta_1^\uparrow+\theta_3^\uparrow-\theta_2^\uparrow-\theta_4^\uparrow)\bigg) \\
        % &\mbox{\qquad} +\alpha \bigg[8+\cos(\Delta\theta_1-\Delta\theta_2) + \cos(\Delta\theta_1-\Delta\theta_3) + \cos(\Delta\theta_1-\Delta\theta_4) \\
        % &\mbox{\qquad\qquad}+ \cos(\Delta\theta_2-\Delta\theta_3) + \cos(\Delta\theta_2-\Delta\theta_4) + \cos(\Delta\theta_3-\Delta\theta_4)\\
        % &\mbox{\qquad\qquad} + \cos(\theta_1^\uparrow+\theta_3^\downarrow-\theta_2^\downarrow-\theta_4^\uparrow) + \cos(\theta_1^\downarrow+\theta_3^\uparrow-\theta_2^\uparrow-\theta_4^\downarrow) \\
        % &\mbox{\qquad\qquad} + \cos(\theta_1^\uparrow+\theta_3^\downarrow-\theta_2^\uparrow-\theta_4^\downarrow) + \cos(\theta_1^\downarrow+\theta_3^\uparrow-\theta_2^\downarrow-\theta_4^\uparrow) \bigg]\Bigg) \\
        &\epsilon_{\boldsymbol{k}}-\epsilon_{\boldsymbol{0}} + \epsilon_{\boldsymbol{0}} - \epsilon_{\boldsymbol{k}_{01}}  -\frac{\abs{s_{\boldsymbol{k}_{01}}}}{4}\sum_{i=1}^4 \cos(\gamma_{\boldsymbol{k}_{0i}}+\Delta\theta_i) \\
        &+\frac{U_s}{8} \Bigg(2-4\cos(\theta_1^\uparrow+\theta_3^\uparrow-\theta_2^\uparrow-\theta_4^\uparrow)\\
        &\mbox{\qquad} -\alpha \bigg[\cos(\Delta\theta_1-\Delta\theta_2) + \cos(\Delta\theta_1-\Delta\theta_3) + \cos(\Delta\theta_1-\Delta\theta_4) \\
        &\mbox{\qquad\qquad}+ \cos(\Delta\theta_2-\Delta\theta_3) + \cos(\Delta\theta_2-\Delta\theta_4) + \cos(\Delta\theta_3-\Delta\theta_4)\\
        &\mbox{\qquad\qquad} + \cos(\theta_1^\uparrow+\theta_3^\downarrow-\theta_2^\downarrow-\theta_4^\uparrow) + \cos(\theta_1^\downarrow+\theta_3^\uparrow-\theta_2^\uparrow-\theta_4^\downarrow) \\
        &\mbox{\qquad\qquad} + \cos(\theta_1^\uparrow+\theta_3^\downarrow-\theta_2^\uparrow-\theta_4^\downarrow) + \cos(\theta_1^\downarrow+\theta_3^\uparrow-\theta_2^\downarrow-\theta_4^\uparrow) \bigg]\Bigg) \\
        &\equiv \mathcal{E}_{\boldsymbol{k}} + E_{k_0}^\uparrow.
    \end{split}
\end{align}
The coefficient of $A_{\boldsymbol{k}}^{\downarrow\dagger}A_{\boldsymbol{k}}^{\downarrow}$ is $\mathcal{E}_{\boldsymbol{k}} + E_{k_0}^\downarrow$ where the only significant change is that 
\begin{equation*}
    \cos(\theta_1^\uparrow+\theta_3^\uparrow-\theta_2^\uparrow-\theta_4^\uparrow) \mbox{\qquad is replaced by \qquad} \cos(\theta_1^\downarrow+\theta_3^\downarrow-\theta_2^\downarrow-\theta_4^\downarrow).
\end{equation*}
%If \eqref{eq:gammathetapi} holds, then one of these is $-1$ times the other.
%As mentioned in chapter \ref{sec:H0LWphase}, if \eqref{eq:gammathetapi} holds, then all the $\cos(\alpha\beta\gamma\delta)$ cancel and $ E_{k_0}^\uparrow = E_{k_0}^\downarrow = E_{k_0}$.

Our basis is now of length 72 due to the 18 separate momenta, two pseoduspin indices and the presence of terms that individually do not conserve particle numbers. The first 18 elements of $\boldsymbol{A}_{\boldsymbol{k}}$ are
\begin{align}
    \begin{split}
       &A_{\boldsymbol{k}}^\uparrow, A_{-\boldsymbol{k}}^\uparrow, A_{\boldsymbol{p}_1}^\uparrow, A_{\boldsymbol{q}_1}^\uparrow, A_{\boldsymbol{p}_2}^\uparrow, A_{\boldsymbol{q}_2}^\uparrow, A_{\boldsymbol{p}_3}^\uparrow, A_{\boldsymbol{q}_3}^\uparrow, A_{\boldsymbol{p}_4}^\uparrow, \\
       &A_{\boldsymbol{q}_4}^\uparrow, A_{\boldsymbol{p}_5}^\uparrow, A_{\boldsymbol{q}_5}^\uparrow, A_{\boldsymbol{p}_6}^\uparrow, A_{\boldsymbol{q}_6}^\uparrow, A_{\boldsymbol{p}_7}^\uparrow, A_{\boldsymbol{q}_7}^\uparrow, A_{\boldsymbol{p}_8}^\uparrow, A_{\boldsymbol{q}_8}^\uparrow. 
    \end{split}
\end{align}
The next 18 elements are the same only with pseudospin down, while the last 36 are the adjoints of the first 36. To obtain a matrix representation of the problem, we use commutators and make $-\boldsymbol{k}$-term explicit. As in the other phases, there are some momenta at which our basis contains copies of the same operators. For the LW phase we have 25 special momenta; $\boldsymbol{0}, \pm\boldsymbol{k}_{01}, \pm\boldsymbol{k}_{02}, \pm2\boldsymbol{k}_{01}, \pm2\boldsymbol{k}_{02},$ $(\boldsymbol{k}_{01}\pm\boldsymbol{k}_{02})/2, (-\boldsymbol{k}_{01}\pm\boldsymbol{k}_{02})/2,$ $\boldsymbol{k}_{01}\pm\boldsymbol{k}_{02}, -\boldsymbol{k}_{01}\pm\boldsymbol{k}_{02},$ $(3\boldsymbol{k}_{01}\pm\boldsymbol{k}_{02})/2, (-3\boldsymbol{k}_{01}\pm\boldsymbol{k}_{02})/2, (\boldsymbol{k}_{01}\pm3\boldsymbol{k}_{02})/2$ and $(-\boldsymbol{k}_{01}\pm3\boldsymbol{k}_{02})/2$. Except for the condensate momenta, these momenta are all part of the sum in $H_2$ given that the condensate momenta are lattice points in momentum space. 

In addition there are some special momenta where the occurrence of condensate operators means a special treatment is required. There are 12 special momenta of this kind $\pm3\boldsymbol{k}_{01}, \pm3\boldsymbol{k}_{02}, 2\boldsymbol{k}_{01}\pm\boldsymbol{k}_{02}, -2\boldsymbol{k}_{01}\pm\boldsymbol{k}_{02},$ $2\boldsymbol{k}_{02}\pm\boldsymbol{k}_{01}$ and $-2\boldsymbol{k}_{02}\pm\boldsymbol{k}_{01}$. Notice that these are the same momenta that appear as indices in $H_1$. These points will therefore be used to remove the linear terms and will for that purpose be calculated correctly. Regarding the quadratic part $H_2$, we assume the correction due to treating all special momenta in a correct way, compared to ignoring the problems are negligible. We thus write the quadratic part of the Hamiltonian as
\begin{equation}
    H_2 = \frac{1}{4}\left.\sum_{\boldsymbol{k}}\right.^{'} \boldsymbol{A}_{\boldsymbol{k}}^{\dagger}M_{\boldsymbol{k}}\boldsymbol{A}_{\boldsymbol{k}},
\end{equation}
where the prime on the sum indicates that we exclude the condensate momenta. The factor $1/4$ is because we have used commutators and made $-\boldsymbol{k}$-terms explicit to rewrite $H_2$. Our matrix $M_{\boldsymbol{k}}$ is a $72\cross 72$ matrix, and thus too large to conveniently show here. However, we note that the matrix is very sparse, as there are in total $16$ blocks of $16 \cross 16$ zero matrices because operators with momentum indices $\boldsymbol{p}_i$ and $\boldsymbol{q}_i$ do not mix with each other. Using the fact that $M_{\boldsymbol{k}}$ is of the form
\begin{gather}
\label{eq:LWM}
    M_{\boldsymbol{k}} = 
    \begin{pmatrix}
    M_1 & M_2 \\
    M_2^* & M_1^* \\
    \end{pmatrix},
\end{gather}
with $M_1^\dagger = M_1$ and $M_2^T=M_2$ it is in fact enough to specify rows 1, 2, 19 and 20 of $M_1$ and $M_2^*$. The rest of the matrix $M_{\boldsymbol{k}}$ can then be filled, and the remaining unspecified entries are 0. For instance, column 1 of $M_1$ will be the complex conjugate of row 1 of $M_1$. These 8 rows are
\begin{align}
    \begin{split}
        M_{1, \textrm{row }1} = (&M_{1,1}(\boldsymbol{k}), 0, M_{1,3}, 0, M_{1,3}^*, 0, M_{1,7}, 0, M_{1,7}^*, 0, M_{1,11}, 0, \\
        &M_{1,13}, 0, M_{1,13}^*, 0, M_{1,11}^*, 0, M_{1,19}(\boldsymbol{k}), 0, M_{1,21}, 0, M_{1,23}, 0, \\
        &M_{1,25}, 0, M_{1,27}, 0, M_{1,29}, 0, M_{1,31}, 0, M_{1,33}, 0, M_{1,35}, 0), \\
        M_{1, \textrm{row }2} = (&0, M_{1,1}(\boldsymbol{k}), 0, M_{1,3}, 0, M_{1,3}^*, 0, M_{1,7}, 0, M_{1,7}^*, 0, M_{1,11}, \\
        &0, M_{1,13}, 0, M_{1,13}^*, 0, M_{1,11}^*, 0, M_{1,19}(-\boldsymbol{k}), 0, M_{1,21}, 0, M_{1,23}, \\
        &0, M_{1,25}, 0, M_{1,27}, 0, M_{1,29}, 0, M_{1,31}, 0, M_{1,33}, 0, M_{1,35}), \\
        M_{1, \textrm{row }19} = (&M_{1,19}^* (\boldsymbol{k}), 0, M_{1,23}^*, 0, M_{1,21}^*, 0, M_{1,27}^*, 0, M_{1,25}^*, 0, M_{1,35}^*, 0\\
        &M_{1,33}^*, 0, M_{1,31}^*, 0, M_{1,29}^*, 0, M_{19,19}(\boldsymbol{k}), 0, M_{19,21}, 0, M_{19,21}^*, 0, \\
        & M_{19,25}, 0, M_{19,25}^*, 0, M_{19,29}, 0, M_{19,31}, 0, M_{19,31}^*, 0, M_{19,29}^*, 0), \\
        M_{1, \textrm{row }20} = (&0, M_{1,19}^* (-\boldsymbol{k}), 0, M_{1,23}^*, 0, M_{1,21}^*, 0, M_{1,27}^*, 0, M_{1,25}^*, 0, M_{1,35}^*,\\
        &0, M_{1,33}^*, 0, M_{1,31}^*, 0, M_{1,29}^*, 0, M_{19,19}(\boldsymbol{k}), 0, M_{19,21}, 0, M_{19,21}^*, \\
        & 0, M_{19,25}, 0, M_{19,25}^*, 0, M_{19,29}, 0, M_{19,31}, 0, M_{19,31}^*, 0, M_{19,29}^*), \\
    \end{split}
\end{align}
and
\begin{align}
    \begin{split}
        M_{2, \textrm{row }1}^* = (&0, M_{37,2}, 0, M_{37,4}, 0, M_{37,6}, 0, M_{37,8}, 0, M_{37,10}, 0, M_{37,12}, \\
        &0, M_{37,14}, 0, M_{37,16}, 0, M_{37,18}, 0, M_{37,20}, 0, M_{37,22}, 0, M_{37,24},\\
        &0, M_{37,26}, 0, M_{37,28}, 0, M_{37,30}, 0, M_{37,32}, 0, M_{37,34}, 0, M_{37,36}),\\
        M_{2, \textrm{row }2}^* = (&M_{37,2}, 0, M_{37,4}, 0, M_{37,6}, 0, M_{37,8}, 0, M_{37,10}, 0, M_{37,12}, 0, \\
        &M_{37,14}, 0, M_{37,16}, 0, M_{37,18}, 0, M_{37,20}, 0, M_{37,22}, 0, M_{37,24}, 0,\\
        &M_{37,26}, 0, M_{37,28}, 0, M_{37,30}, 0, M_{37,32}, 0, M_{37,34}, 0, M_{37,36}, 0),\\
        M_{2, \textrm{row }19}^* = (&0, M_{37,20}, 0, M_{37,22}, 0, M_{37,24}, 0, M_{37,26}, 0, M_{37,28}, 0, M_{37,30},\\
        &0, M_{37,32}, 0, M_{37,34}, 0, M_{37,36}, 0, M_{55,20}, 0, M_{55,22}, 0, M_{55,24}, \\
        &0, M_{55,26}, 0, M_{55,28}, 0, M_{55,30}, 0, M_{55,32}, 0, M_{55,34}, 0, M_{55,36}),\\
        M_{2, \textrm{row }20}^* = (&M_{37,20}, 0, M_{37,22}, 0, M_{37,24}, 0, M_{37,26}, 0, M_{37,28}, 0, M_{37,30}, 0,\\
        &M_{37,32}, 0, M_{37,34}, 0, M_{37,36}, 0, M_{55,20}, 0, M_{55,22}, 0, M_{55,24}, 0,\\
        &M_{55,26}, 0, M_{55,28}, 0, M_{55,30}, 0, M_{55,32}, 0, M_{55,34}, 0, M_{55,36}, 0).
    \end{split}
\end{align}
To visualize this matrix, imagine an extension of the matrix in the SW phase given in \eqref{eq:SWMat} to a $72 \cross 72$ matrix with the same pattern. The elements in the first row of $M_1$ are
\begin{align}
    \begin{split}
        M_{1,1}(\boldsymbol{k}) &= \mathcal{E}_{\boldsymbol{k}} + E_{k_0}^\uparrow,\\
        M_{1,3} &= \frac{U_s}{8}\left(2e^{i(\theta_1^\uparrow-\theta_3^\uparrow)}+\alpha e^{i(\theta_1^\downarrow-\theta_3^\downarrow)} \right) , \\ %\stackrel{(\ref{eq:LWangles})}{=} \frac{U_s}{8}e^{i(\theta_1^\uparrow-\theta_3^\uparrow)}(2-\alpha)
        M_{1,7} &= \frac{U_s}{8}\left(2e^{i(\theta_2^\uparrow-\theta_4^\uparrow)}+\alpha e^{i(\theta_2^\downarrow-\theta_4^\downarrow)} \right) , \\ %\stackrel{(\ref{eq:LWangles})}{=} \frac{U_s}{8}e^{i(\theta_2^\uparrow-\theta_4^\uparrow)}(2-\alpha)
        M_{1,11} &=  \frac{U_s}{8}\left(2e^{i(\theta_1^\uparrow-\theta_4^\uparrow)}+2e^{i(\theta_2^\uparrow-\theta_3^\uparrow)}+\alpha e^{i(\theta_1^\downarrow-\theta_4^\downarrow)}+\alpha e^{i(\theta_2^\downarrow-\theta_3^\downarrow)} \right), \\
        M_{1,13} &= \frac{U_s}{8}\left(2e^{i(\theta_1^\uparrow-\theta_2^\uparrow)}+2e^{i(\theta_4^\uparrow-\theta_3^\uparrow)}+\alpha e^{i(\theta_1^\downarrow-\theta_2^\downarrow)}+\alpha e^{i(\theta_4^\downarrow-\theta_3^\downarrow)} \right), \\
        M_{1,19} &= s_{\boldsymbol{k}} + \frac{U_s\alpha}{4}\left(e^{i(\theta_1^\downarrow-\theta_1^\uparrow)}+e^{i(\theta_2^\downarrow-\theta_2^\uparrow)}+e^{i(\theta_3^\downarrow-\theta_3^\uparrow)}+e^{i(\theta_4^\downarrow-\theta_4^\uparrow)}\right), \\
        M_{1,21} &= \frac{U_s\alpha}{8}e^{i(\theta_1^\downarrow-\theta_3^\uparrow)}, \mbox{\qquad\qquad} M_{1,23} = \frac{U_s\alpha}{8}e^{i(\theta_3^\downarrow-\theta_1^\uparrow)},\\
        M_{1,25} &= \frac{U_s\alpha}{8}e^{i(\theta_2^\downarrow-\theta_4^\uparrow)}, \mbox{\qquad\qquad} M_{1,27} = \frac{U_s\alpha}{8}e^{i(\theta_4^\downarrow-\theta_2^\uparrow)},\\
        M_{1,29} &= \frac{U_s\alpha}{8}\left(e^{i(\theta_1^\downarrow-\theta_4^\uparrow)}+e^{i(\theta_2^\downarrow-\theta_3^\uparrow)}\right),\\
        M_{1,31} &= \frac{U_s\alpha}{8}\left(e^{i(\theta_1^\downarrow-\theta_2^\uparrow)}+e^{i(\theta_4^\downarrow-\theta_3^\uparrow)}\right),\\
        M_{1,33} &= \frac{U_s\alpha}{8}\left(e^{i(\theta_2^\downarrow-\theta_1^\uparrow)}+e^{i(\theta_3^\downarrow-\theta_4^\uparrow)}\right),\\
        M_{1,35} &= \frac{U_s\alpha}{8}\left(e^{i(\theta_3^\downarrow-\theta_2^\uparrow)}+e^{i(\theta_4^\downarrow-\theta_1^\uparrow)}\right).\\
    \end{split}
\end{align}
The new elements appearing in row 19 of $M_1$ are
\begin{align}
    \begin{split}
        M_{19,19}(\boldsymbol{k}) &= \mathcal{E}_{\boldsymbol{k}} + E_{k_0}^\downarrow,\\
        M_{19,21} &= \frac{U_s}{8}\left(2e^{i(\theta_1^\downarrow-\theta_3^\downarrow)}+\alpha e^{i(\theta_1^\uparrow-\theta_3^\uparrow)} \right)  ,\\
        M_{19,25} &=  \frac{U_s}{8}\left(2e^{i(\theta_2^\downarrow-\theta_4^\downarrow)}+\alpha e^{i(\theta_2^\uparrow-\theta_4^\uparrow)} \right) ,\\
        M_{19,29} &= \frac{U_s}{8}\left(2e^{i(\theta_1^\downarrow-\theta_4^\downarrow)}+2e^{i(\theta_2^\downarrow-\theta_3^\downarrow)}+\alpha e^{i(\theta_1^\uparrow-\theta_4^\uparrow)}+\alpha e^{i(\theta_2^\uparrow-\theta_3^\uparrow)} \right), \\
        M_{19,31} &= \frac{U_s}{8}\left(2e^{i(\theta_1^\downarrow-\theta_2^\downarrow)}+2e^{i(\theta_4^\downarrow-\theta_3^\downarrow)}+\alpha e^{i(\theta_1^\uparrow-\theta_2^\uparrow)}+\alpha e^{i(\theta_4^\uparrow-\theta_3^\uparrow)} \right). 
    \end{split}
\end{align}
Turning to $M_2^*$, the elements in its first row are
\begin{align}
    \begin{split}
        M_{37,2} &= \frac{U_s}{2}\left(e^{i(\theta_1^\uparrow+\theta_3^\uparrow)}+e^{i(\theta_2^\uparrow+\theta_4^\uparrow)}  \right), \\
        M_{37,4} &= \frac{U_s}{8}e^{i2\theta_1^\uparrow}, \mbox{\qquad\qquad\qquad} M_{37,6} = \frac{U_s}{8}e^{i2\theta_3^\uparrow}, \\
        M_{37,8} &= \frac{U_s}{8}e^{i2\theta_2^\uparrow}, \mbox{\qquad\qquad\qquad} M_{37,10} = \frac{U_s}{8}e^{i2\theta_4^\uparrow}, \\
        M_{37,12} &= \frac{U_s}{4}e^{i(\theta_1^\uparrow+\theta_2^\uparrow)}, \mbox{\qquad\qquad\quad} M_{37,14} = \frac{U_s}{4}e^{i(\theta_1^\uparrow+\theta_4^\uparrow)}, \\
        M_{37,16} &= \frac{U_s}{4}e^{i(\theta_2^\uparrow+\theta_3^\uparrow)}, \mbox{\qquad\qquad\quad} M_{37,18} = \frac{U_s}{4}e^{i(\theta_3^\uparrow+\theta_4^\uparrow)}, \\
        M_{37,20} &= \frac{U_s\alpha}{4}\left(e^{i(\theta_1^\uparrow+\theta_3^\downarrow)}+e^{i(\theta_1^\downarrow+\theta_3^\uparrow)}+e^{i(\theta_2^\uparrow+\theta_4^\downarrow)}+e^{i(\theta_2^\downarrow+\theta_4^\uparrow)}\right), \\
        M_{37,22} &= \frac{U_s\alpha}{8}e^{i(\theta_1^\uparrow+\theta_1^\downarrow)}, \mbox{\qquad\qquad} M_{37,24} = \frac{U_s\alpha}{8}e^{i(\theta_3^\uparrow+\theta_3^\downarrow)}, \\
        M_{37,26} &= \frac{U_s\alpha}{8}e^{i(\theta_2^\uparrow+\theta_2^\downarrow)}, \mbox{\qquad\qquad} M_{37,28} = \frac{U_s\alpha}{8}e^{i(\theta_4^\uparrow+\theta_4^\downarrow)}, \\
        M_{37,30} &= \frac{U_s\alpha}{8}\left(e^{i(\theta_1^\downarrow+\theta_2^\uparrow)}+e^{i(\theta_2^\downarrow+\theta_1^\uparrow)}\right), \\
        M_{37,32} &= \frac{U_s\alpha}{8}\left(e^{i(\theta_1^\downarrow+\theta_4^\uparrow)}+e^{i(\theta_4^\downarrow+\theta_1^\uparrow)}\right), \\
        M_{37,34} &= \frac{U_s\alpha}{8}\left(e^{i(\theta_2^\downarrow+\theta_3^\uparrow)}+e^{i(\theta_3^\downarrow+\theta_2^\uparrow)}\right), \\
        M_{37,36} &= \frac{U_s\alpha}{8}\left(e^{i(\theta_3^\downarrow+\theta_4^\uparrow)}+e^{i(\theta_4^\downarrow+\theta_3^\uparrow)}\right).
    \end{split}
\end{align}
Finally, the new elements appearing in row 19 of $M_2^*$ are
\begin{align}
    \begin{split}
        M_{55,20} &= \frac{U_s}{2}\left(e^{i(\theta_1^\downarrow+\theta_3^\downarrow)}+e^{i(\theta_2^\downarrow+\theta_4^\downarrow)}  \right), \\
        M_{55,22} &= \frac{U_s}{8}e^{i2\theta_1^\downarrow}  , \mbox{\qquad\qquad\quad} M_{55,24} = \frac{U_s}{8}e^{i2\theta_3^\downarrow}  , \\ %\stackrel{(\ref{eq:LWangles})}{=} iM_{37,6}
        M_{55,26} &= \frac{U_s}{8}e^{i2\theta_2^\downarrow}  , \mbox{\qquad\qquad\quad}M_{55,28} = \frac{U_s}{8}e^{i2\theta_4^\downarrow}  , \\  %\stackrel{(\ref{eq:LWangles})}{=} -iM_{37,8} %\stackrel{(\ref{eq:LWangles})}{=} -iM_{37,10}
        M_{55,30} &= \frac{U_s}{4}e^{i(\theta_1^\downarrow+\theta_2^\downarrow)}  , \mbox{\qquad\qquad} M_{55,32} = \frac{U_s}{4}e^{i(\theta_1^\downarrow+\theta_4^\downarrow)}  , \\ %\stackrel{(\ref{eq:LWangles})}{=} M_{37,12}%\stackrel{(\ref{eq:LWangles})}{=} -M_{37,14}
        M_{55,34} &= \frac{U_s}{4}e^{i(\theta_2^\downarrow+\theta_3^\downarrow)}  , \mbox{\qquad\qquad} M_{55,36} = \frac{U_s}{4}e^{i(\theta_3^\downarrow+\theta_4^\downarrow)} . %\stackrel{(\ref{eq:LWangles})}{=} -M_{37,16}%\stackrel{(\ref{eq:LWangles})}{=} M_{37,18}.
    \end{split}
\end{align}

Due to the terms with number operators in $H_2$ the use of commutators yields a shift
\begin{align}
    \begin{split}
        H'_0 &= H_0 -\frac{1}{2}\left.\sum_{\boldsymbol{k}}\right.^{'}\left(M_{1,1}(\boldsymbol{k}) + M_{19,19}(\boldsymbol{k})\right) \\
        %&= H_0 -\left.\sum_{\boldsymbol{k}}\right.^{'}\left(\mathcal{E}_{\boldsymbol{k}} + \frac{E_{k_0}^\uparrow+E_{k_0}^\downarrow}{2}\right),
        &= H_0 - 16t\cos(k_0 a) - (N_s-4)\left(4t+\frac{E_{k_0}^\uparrow+E_{k_0}^\downarrow}{2}\right),
    \end{split}
\end{align}
in the operator independent part of the Hamiltonian. 
% As before, this $\boldsymbol{k}$-sum can be calculated, and we obtain
% \begin{equation}
%     H'_0 = H_0 - 16t\cos(k_0 a) - (N_s-4)\left(4t+\frac{E_{k_0}^\uparrow+E_{k_0}^\downarrow}{2}\right).
% \end{equation}

The excitation spectrum is the eigenvalues of $M_{\boldsymbol{k}}J$. The bands bear resemblance to the bands calculated in the SW phase. We have 8 nonzero positive bands, where the four smallest have negative BV norm eigenvectors. There are also 8 nonzero negative bands that are the negatives of the 8 positive bands. The four negative eigenvalues with smallest absolute values have positive BV norm eigenvectors. Hence, it is the negative bands that enter the diagonalized Hamiltonian. Meanwhile, there are a total of 56 eigenvalues that are within numerical accuracy zero. The eigenvalues can be represented by $\lambda(\boldsymbol{k}) = \pm\Omega_i(\boldsymbol{k}),$ $i=1,2,\dots,8$. The eigenvalues are ordered such that $\Omega_i(\boldsymbol{k}) \geq \Omega_j(\boldsymbol{k})$ if $j>i$. 
%Due to the lower nonzero bands entering the diagonalized Hamiltonian with negative signs, the excitation spectrum calculated from $M_{\boldsymbol{k}}J$ has its global minima at $\pm \boldsymbol{k}_{01}$ and $\pm \boldsymbol{k}_{02}$ given certain values of the variational parameters to be determined later. We note that if \eqref{eq:gammathetapi} holds, the 8 separate nonzero bands become 4 double bands. 
We can write $H_2$ as
\begin{align}
    \begin{split}
        H_2 = \frac{1}{2}\left.\sum_{\boldsymbol{k}}\right.^{'} \Bigg(& \sum_{\sigma=1}^4 \Omega_\sigma(\boldsymbol{k}) \left( B_{\boldsymbol{k},\sigma'}^\dagger B_{\boldsymbol{k},\sigma'} +\frac12 \right) \\
        &-\sum_{\sigma = 5}^8 \Omega_\sigma(\boldsymbol{k}) \left( B_{\boldsymbol{k},\sigma'}^\dagger B_{\boldsymbol{k},\sigma'} +\frac12 \right) \\
        &+\sum_{\sigma = 9}^{36} 0 \left( B_{\boldsymbol{k},\sigma'}^\dagger B_{\boldsymbol{k},\sigma'} +\frac12 \right)\Bigg).
    \end{split}
\end{align}
Just as we did in the SW phase, we will shift the zero of the energies by adding and subtracting the maximum value of $\Omega_5(\boldsymbol{k})$ which we name $\Omega_0$. Defining $\Delta\Omega_\sigma \equiv \Omega_0 +\Omega_\sigma$ for $\sigma = 1,2,3,4$, $\Delta\Omega_\sigma \equiv \Omega_0$ for $\sigma = 5, 6, \dots, 33$, $\Delta\Omega_\sigma = \Omega_0 - \Omega_{\sigma'}$ for $\sigma = 33,34,35,36$ and $\sigma' = 8,7,6,5$ and renumbering the operators correspondingly, we get
\begin{align}
    \begin{split}
        H_2 = & -\Omega_0 N_q + \frac{1}{2}\left.\sum_{\boldsymbol{k}}\right.^{'} \sum_{\sigma=1}^{36} \Delta\Omega_\sigma(\boldsymbol{k}) \left( B_{\boldsymbol{k},\sigma}^\dagger B_{\boldsymbol{k},\sigma} +\frac12 \right),
    \end{split}
\end{align}
where we defined
\begin{equation}
    N_q \equiv \frac{1}{2}\left.\sum_{\boldsymbol{k}}\right.^{'} \sum_{\sigma=1}^{36} \left( B_{\boldsymbol{k},\sigma}^\dagger B_{\boldsymbol{k},\sigma} +\frac12 \right).
\end{equation}

\section{The Special Momenta}
In appendix \ref{app:SW} we presented a general procedure to treat the special momenta due to repeated entries in the basis. In the LW phase we ignore the effects of these special momenta, but if one were to check them, the general procedure would be an effective way of doing so.

%Let us compare the matrices $M_{\boldsymbol{k}}$ and $M_{2k_0}$ in the SW phase in the hopes of finding a general procedure to treat special momenta. The latter matrix is given in appendix \ref{app:SW}. The origin of our problems is that elements $4+6i$ and $5+6i$, $i=0,1,2,3$, of $\boldsymbol{A}_{2\boldsymbol{k}_{01}}$ are equal. This is why $M_{2k_0}$ has a size of 4 columns and rows less than $M_{\boldsymbol{k}}$. We can use this to construct $M_{2k_0}$ from $M_{\boldsymbol{k}}$. One simply combines rows and columns that corresponds to the elements that are equal in the original basis. The first steps are to add columns $5+6i$ to columns $4+6i$ for all $i=0,1,2,3$. Then one adds rows $5+6i$ to rows $4+6i$ for all $i=0,1,2,3$. Finally, rows and columns $5+6i$ for all $i=0,1,2,3$ are removed to obtain the $20\cross 20$ matrix $M_{2k_0}$. One can compare $M_{\boldsymbol{k}}$ from \eqref{eq:SWMat} and $M_{2k_0}$ from \eqref{eq:SWM2k0} and convince oneself that this is what happens. Remember that in the SW phase, we actually obtained $M_{2k_0}$ by writing out the Hamiltonian at $\boldsymbol{k} = 2\boldsymbol{k}_{01}$ confirming the above procedure is valid. A similar procedure could also have been used to obtain the matrix $M_0$ in the SW phase. In the LW phase we ignore the effects of these special momenta, but if one were to check them, the above procedure would be an effective way of doing so.

The special momenta related to the occurrence of condensate momenta in the basis $\boldsymbol{A}_{\boldsymbol{k}}$ will be treated correctly in order to remove the linear terms in $H_1$. We however neglect the difference such a treatment causes in the quadratic part $H_2$. The procedure to treat these terms will be shown using the example $\boldsymbol{k} = 3\boldsymbol{k}_{01}$ and is similar to the treatment in the SW phase. We always treat two special momenta simultaneously, and the general structure of the results are the same for all the special momenta of this type.

Having made $-\boldsymbol{k}$-terms explicit one can see that $H_2(3\boldsymbol{k}_{01}) = H_2(-3\boldsymbol{k}_{01})$ and they can be treated simultaneously. At $\boldsymbol{k} = 3\boldsymbol{k}_{01}$ terms $4+18i$ and $5+18i$ for $i=0,1,2,3$ in the basis are condensate momenta. Removing these terms from the basis, and also removing the corresponding rows and columns from $M_{3\boldsymbol{k}_{01}}$ we define a new operator vector $\boldsymbol{A}_{3k_{01}}$ of length $64$ and a new $64 \cross 64$ matrix $M_{3k_{01}}$. The lack of bold font on $k_{01}$ serves to indicate we have reduced the size of the matrix, it is not an indication that $\boldsymbol{k}_{01}$ is no longer a vector. The matrix $M_{3k_{01}}J$ has 8 positive eigenvalues, 8 negative eigenvalues and a total of 48 eigenvalues that within numerical accuracy are zero. The four lowest positive eigenvalues have anomalous modes, and hence it is their negatives that enter the diagonalized Hamiltonian.  We find $H_2(3\boldsymbol{k}_{01}) = H_2(-3\boldsymbol{k}_{01}) =$
\begin{align}
    \begin{split}
    \frac{1}{2}&\Big\{\sum_{\sigma = 1}^{4}\omega_{3k_{01}, \sigma}\left( B_{3\boldsymbol{k}_{01},\sigma^{'}}^{\dagger}B_{3\boldsymbol{k}_{01}, \sigma^{'}}+\frac12\right) \\
    & -\sum_{\sigma = 5}^{8}\omega_{3k_{01}, \sigma}\left( B_{3\boldsymbol{k}_{01},\sigma^{'}}^{\dagger}B_{3\boldsymbol{k}_{01}, \sigma^{'}}+\frac12\right) \\
    & +\sum_{\sigma = 9}^{32}0\left( B_{3\boldsymbol{k}_{01},\sigma^{'}}^{\dagger}B_{3\boldsymbol{k}_{01}, \sigma^{'}}+\frac12\right) \Big\}.
    \end{split}
\end{align}
We shift the zero of energy by $\Omega_0$ for these eigenvalues as well. Defining $\Delta\omega_{3k_{01},i} = \Omega_0+\omega_{3k_{01}, i}$ for $i = 1,2,3,4$, $\Delta\omega_{3k_{01},i} = \Omega_0$ for $i = 5, \dots, 28$ and $\Delta\omega_{3k_{01},i} = \Omega_0-\omega_{3k_0, i'}$ for $i=29,30,31,32$ and $i' = 8,7,6,5$ and renumbering the operators we arrive at $H_2(3\boldsymbol{k}_{01}) + H_2(-3\boldsymbol{k}_{01}) = 2H_2(3\boldsymbol{k}_{01})$,
\begin{align}
    \begin{split}
        2H_2(3\boldsymbol{k}_{01}) &= -\Omega_0 N_{q,3k_{01}}+\sum_{\sigma=1}^{32} \Delta\omega_{3k_{01}, \sigma} \left( B_{3\boldsymbol{k}_{01},\sigma}^{\dagger}B_{3\boldsymbol{k}_{01}, \sigma} +\frac12  \right).
    \end{split}
\end{align}
To simplify the expression, we defined
$$N_{q,3k_{01}} \equiv \sum_{\sigma=1}^{32} \left(B_{3\boldsymbol{k}_{01},\sigma}^{\dagger}B_{3\boldsymbol{k}_{01}, \sigma}+\frac12 \right).$$
The energies $\Delta\omega_{3k_{01},i}$ do not agree completely with the energy spectrum $\Delta\Omega_\sigma(\boldsymbol{k})$ at $\pm3\boldsymbol{k}_{01}$ as was also the case in the SW phase.

\section{Free Energy}
The treatment of $H_1$ follows the same idea used in the SW phase. For the terms $\sim A_{3\boldsymbol{k}_{01}}^\alpha$, $A_{3\boldsymbol{k}_{01}}^{\alpha\dagger}$,  $A_{-3\boldsymbol{k}_{01}}^\alpha$ and $A_{-3\boldsymbol{k}_{01}}^{\alpha\dagger}$ we use the special treatment of $\boldsymbol{k} = \pm3\boldsymbol{k}_{01}$. Using that $\boldsymbol{A}_{3k_{01}} = JT_{3k_{01}}J \boldsymbol{B}_{3k_{01}}$ we can transform $H_1$ to the basis in which $H_2$ is diagonal. For instance, $A_{3\boldsymbol{k}_{01}}^\uparrow = \sum_i (JT_{3k_{01}}J)_{1,i} (\boldsymbol{B}_{3k_{01}})_i$. All in all, we find that
\begin{align}
    \begin{split}
        H_1 = \frac{\sqrt{N}U_s}{8\sqrt{2}}\sum_{i=1}^{64}\left\{\sum_{j\in\mathcal{J}}\sum_{k \in \mathcal{K}} \right\} \Big[ & c_{j}^{\uparrow} (JT_{k}J)_{1,i} + c_{j}^{\downarrow} (JT_{k}J)_{17,i} \\
        &+ c_{j}^{\uparrow*} (JT_{k}J)_{33,i} + c_{j}^{\downarrow*} (JT_{k}J)_{49,i} \\
        &+c_{-j}^{\uparrow} (JT_{k}J)_{2,i} + c_{-j}^{\downarrow} (JT_{k}J)_{18,i}\\
        &+ c_{-j}^{\uparrow*} (JT_{k}J)_{34,i} + c_{-j}^{\downarrow*} (JT_{k}J)_{50,i}\Big](\boldsymbol{B}_{k})_i ,
    \end{split}
\end{align}
where $\mathcal{J} = \{+1, 1-2, 1+2, +2, 2-1, 2+1\}$, $\mathcal{K} = \{3k_{01}, 2k_{01}-k_{02}, 2k_{01}+k_{02}, 3k_{02}, 2k_{02}-k_{01}, 2k_{02}+k_{01}\}$ and the coefficients $c_{\pm j}^\alpha$ are given in \eqref{eq:LWH1coeff}. When $j$ is element $i$ of $\mathcal{J}$, $k$ is element $i$ of $\mathcal{K}$.
We write this as
\begin{align}
    \begin{split}
        H_1 = \sum_{i=1}^{64} \Big[ &c_{1,i}(\boldsymbol{B}_{3k_{01}})_i + c_{1-2,i}(\boldsymbol{B}_{2k_{01}-k_{02}})_i + c_{1+2,i}(\boldsymbol{B}_{2k_{01}+k_{02}})_i \\
        &+c_{2,i} (\boldsymbol{B}_{3k_{02}})_i + c_{2-1}(\boldsymbol{B}_{2k_{02}-k_{01}})_i + c_{2+1}(\boldsymbol{B}_{2k_{02}+k_{01}})_i \Big],
    \end{split}
\end{align}
and note that for all these coefficients, $c_{i+32} = c_{i}^*$, meaning it is enough to consider the first $32$. We define the energies $E_{1,i} = \Delta\omega_{3k_{01}, i}$ for $i=1,2,3,4$, $E_{1,i} = \Delta\omega_{3k_{01}, i'}$ for $i=5,6,7,8$ and $i' = 32,31,30,29$ and $E_{1,i} = \Delta\omega_{3k_{01}, i'}$ for $i=9, \dots,32$ and $i' = 5,\dots,28$. Similar definition are made at the other momenta. Finally then, we may remove $H_1$ by completing squares with terms like
\begin{align}
    \begin{split}
        \sum_{\sigma=1}^{32} \Delta\omega_{3k_{01}, \sigma} \left( B_{3\boldsymbol{k}_{01},\sigma}^{\dagger}B_{3\boldsymbol{k}_{01}, \sigma} +\frac12  \right).
    \end{split}
\end{align}
This leads to a shift of the operator independent part. We find
\begin{align}
    \begin{split}
        \Tilde{H}_0 = H'_0 - \sum_{i=1}^{32} \bigg( &\frac{\abs{c_{1,i}}^2}{E_{1,i}} +  \frac{\abs{c_{1-2,i}}^2}{E_{1-2,i}} +  \frac{\abs{c_{1+2,i}}^2}{E_{1+2,i}} \\
        &+ \frac{\abs{c_{2,i}}^2}{E_{2,i}} + \frac{\abs{c_{2-1,i}}^2}{E_{2-1,i}} + \frac{\abs{c_{2+1,i}}^2}{E_{2+1,i}} \bigg).
    \end{split}
\end{align}
Notice that e.g. $2\boldsymbol{k}_{01}-\boldsymbol{k}_{02}$ is also a part of the basis at $\boldsymbol{k} = 3\boldsymbol{k}_{01}$. In fact, it would have been enough to use the transformation matrices at $\boldsymbol{k} = 3\boldsymbol{k}_{01}$ and $\boldsymbol{k} = 3\boldsymbol{k}_{02}$ to transform $H_1$ to the diagonal basis. However, that yielded unsatisfactory results. Therefore, a procedure where the $A$-operators were always among the first two operators in the basis was used.

At zero temperature the term
\begin{equation}
    -\Omega_0 N_q = -\frac{\Omega_0}{2}\left.\sum_{\boldsymbol{k}}\right.^{'} \sum_{\sigma=1}^{36} \left( B_{\boldsymbol{k},\sigma}^\dagger B_{\boldsymbol{k},\sigma} +\frac12 \right)
\end{equation}
reduces to $-9\Omega_0(N_s-4)$. We define $\Tilde{H}'_0 = \Tilde{H}_0 -9\Omega_0(N_s-4)$ as the final operator independent part. At zero temperature the free energy  is the same as $\langle H \rangle$ and reads
\begin{align}
    \begin{split}
        F_{\textrm{LW}} = \Tilde{H}'_0 +\frac{1}{4}\left.\sum_{\boldsymbol{k}}\right.^{'} \sum_{\sigma=1}^{36} \Delta\Omega_\sigma(\boldsymbol{k}).
    \end{split}
\end{align}

The result in the other phases were that minimization of the free energy gave $k_{0\textrm{min}} = k_{0m}$ when $N_s$ becomes large and that the angles obey \eqref{eq:gammathetapi}. We also find that the excitation spectrum becomes complex for $k_0$ too far away from $k_{0m}$ or for angles that vary too much from \eqref{eq:gammathetapi}. Hence, let us first assume $k_{0\textrm{min}} = k_{0m}$ and that \eqref{eq:gammathetapi} holds. For the LW phase, \eqref{eq:gammathetapi} implies
\begin{equation}
\label{eq:LWgammathetapi}
    \theta_1^\downarrow-\theta_1^\uparrow = \frac{\pi}{4}, \mbox{\quad} \theta_2^\downarrow-\theta_2^\uparrow = \frac{7\pi}{4}, \mbox{\quad} \theta_3^\downarrow-\theta_3^\uparrow = \frac{5\pi}{4}, \mbox{\quad} \theta_4^\downarrow-\theta_4^\uparrow = \frac{3\pi}{4}.
\end{equation}
It seems natural to investigate what value of the sum $ \theta_1^\uparrow + \theta_3^\uparrow - \theta_2^\uparrow -\theta_4^\uparrow$ minimizes $F_{\textrm{LW}}$ since this sum of the angels appears several places in $H'_0$. The first indication is that $-\pi/2$ is the optimal value of the sum. We therefore set $\theta_4^\uparrow = \theta +\pi/2+\theta_1^\uparrow + \theta_3^\uparrow - \theta_2^\uparrow$, and find that $\theta = 0$ is optimal. Thus, we conclude that 
\begin{equation}
\label{eq:LWanglesup}
    \theta_1^\uparrow + \theta_3^\uparrow - \theta_2^\uparrow -\theta_4^\uparrow = -\frac{\pi}{2},
\end{equation}
% and, if \eqref{eq:LWgammathetapi} holds,
% \begin{equation}
% \label{eq:LWanglesdown}
%     \theta_1^\downarrow + \theta_3^\downarrow - \theta_2^\downarrow -\theta_4^\downarrow = \frac{\pi}{2}.
% \end{equation}
When this is true, together with \eqref{eq:LWgammathetapi}, we find that $M_{1,1}(\boldsymbol{k}) = M_{19,19}(\boldsymbol{k})$ which seems natural in the absence of a Zeeman field. Note that this result is different from a similar relation found in \cite{master} in which it was found that one of these is zero, while the other is $\pi$. This might explain the problems encountered regarding the assumptions $N_0^\uparrow = N_0^\downarrow$ and $\mu^\uparrow = \mu^\downarrow$ in \cite{master}.

In order to investigate the differences $\Delta\theta_i = \theta_i^\downarrow -\theta_i^\uparrow$, we assume \eqref{eq:LWanglesup} holds and that $k_{0\textrm{min}} = k_{0m}$. We set $\theta_4^\uparrow = \pi/2 + \theta_1^\uparrow + \theta_3^\uparrow-\theta_2^\uparrow$ and $\theta_4^\downarrow = 3\pi/2 + \theta_1^\downarrow + \theta_3^\downarrow - \theta_2^\downarrow$ which means we will vary both $\Delta\theta_i$ for $i=1,2$ or $3$ and $\Delta\theta_4$ at the same time. First, we set $\theta_1^\downarrow = \pi/4$ and find that $\theta_1^\uparrow = 0$ is optimal. Next, we set $\theta_2^\downarrow = 7\pi/4$ and find that $\theta_2^\uparrow = 0$ is optimal. Finally we set $\theta_3^\downarrow = 5\pi/4$ and find that $\theta_3^\uparrow = 0$ is optimal. In all these cases the two $\Delta\theta_i$ not under consideration are set to the values in \eqref{eq:LWgammathetapi}. %These results are shown in figure \ref{fig:LWFtheta}. 

The final free angles are $\theta_1^\uparrow, \theta_2^\uparrow$ and $\theta_3^\uparrow$. We find that
\begin{equation}
    \label{eq:LWangles23up}
    \theta_2^\uparrow = \theta_3^\uparrow = \theta_1^\uparrow + \pi
\end{equation}
is optimal. With all angles determined by $\theta_1^\uparrow$, the variations of $F_{\textrm{LW}}$ are negligible when varying $\theta_1^\uparrow$. In conclusion, we believe \eqref{eq:LWgammathetapi}, \eqref{eq:LWanglesup} and \eqref{eq:LWangles23up} determine the angles when a value of $\theta_1^\uparrow$ is chosen.

% \begin{figure}
%     \centering
%     \includegraphics[width=0.9\linewidth]{}
%     \caption{Shows the free energy $F_{\textrm{LW}}$ as a function of the angles $\theta_i^\uparrow$ with $\theta_i^\downarrow$ fixed to $\Delta\theta_i$ according to \eqref{eq:LWgammathetapi}. For $\theta_4^\uparrow$ we are plotting as a function of $\theta$ in $\theta_4^\uparrow = \theta +\pi/2+\theta_1^\uparrow + \theta_3^\uparrow - \theta_2^\uparrow$ and the minimum is at $\theta = 0.0$. We also used $k_0 = k_{0m}$. The parameters were $U_s/t = 0.05$, $\alpha = 1.5$, $\lambda_R/t = 2.0$ and $N_s = 1600$.}
%     \label{fig:LWFtheta}
% \end{figure}

\begin{figure}
    \centering
    \begin{subfigure}{.49\textwidth}
      \includegraphics[width=\linewidth]{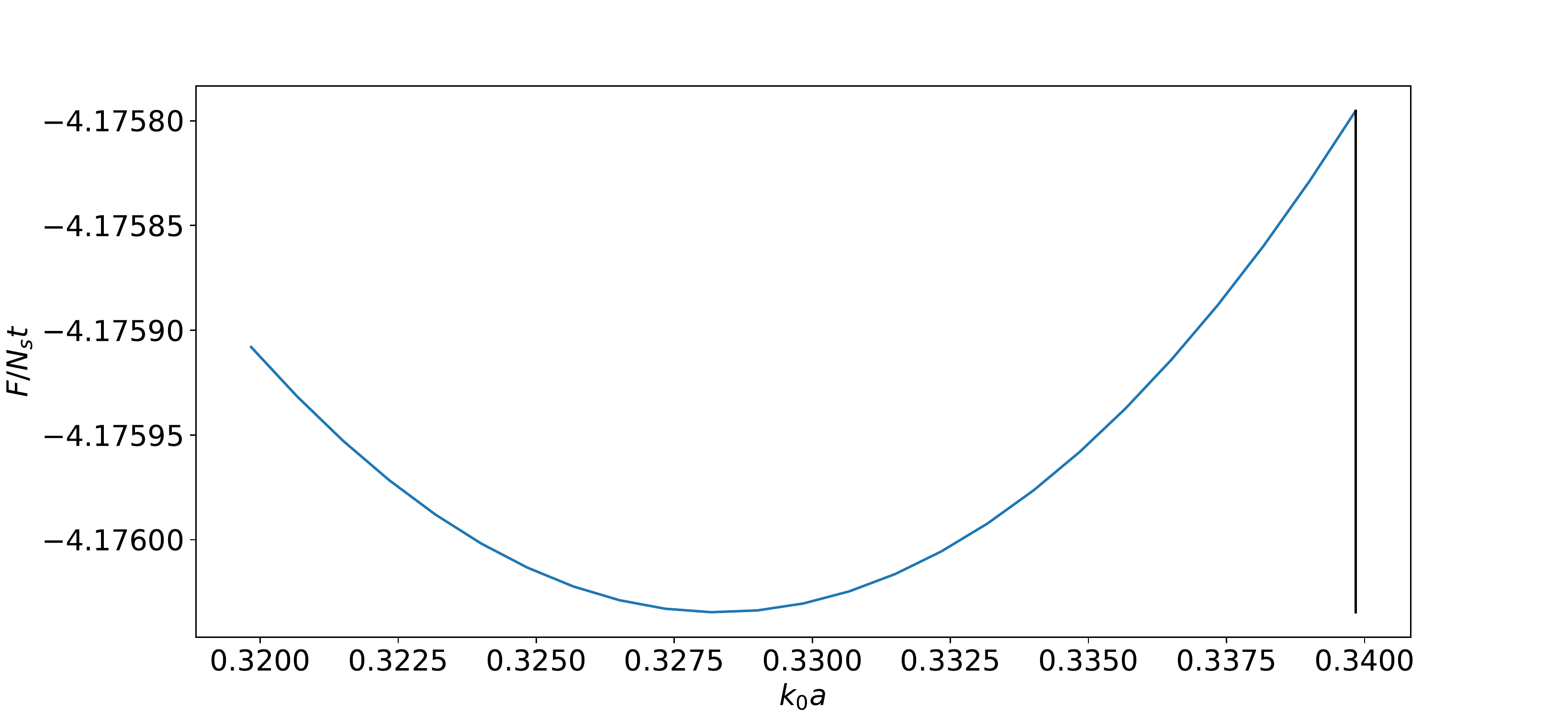}
      \caption{}
    \end{subfigure}%
    \begin{subfigure}{.49\textwidth}
      \includegraphics[width=\linewidth]{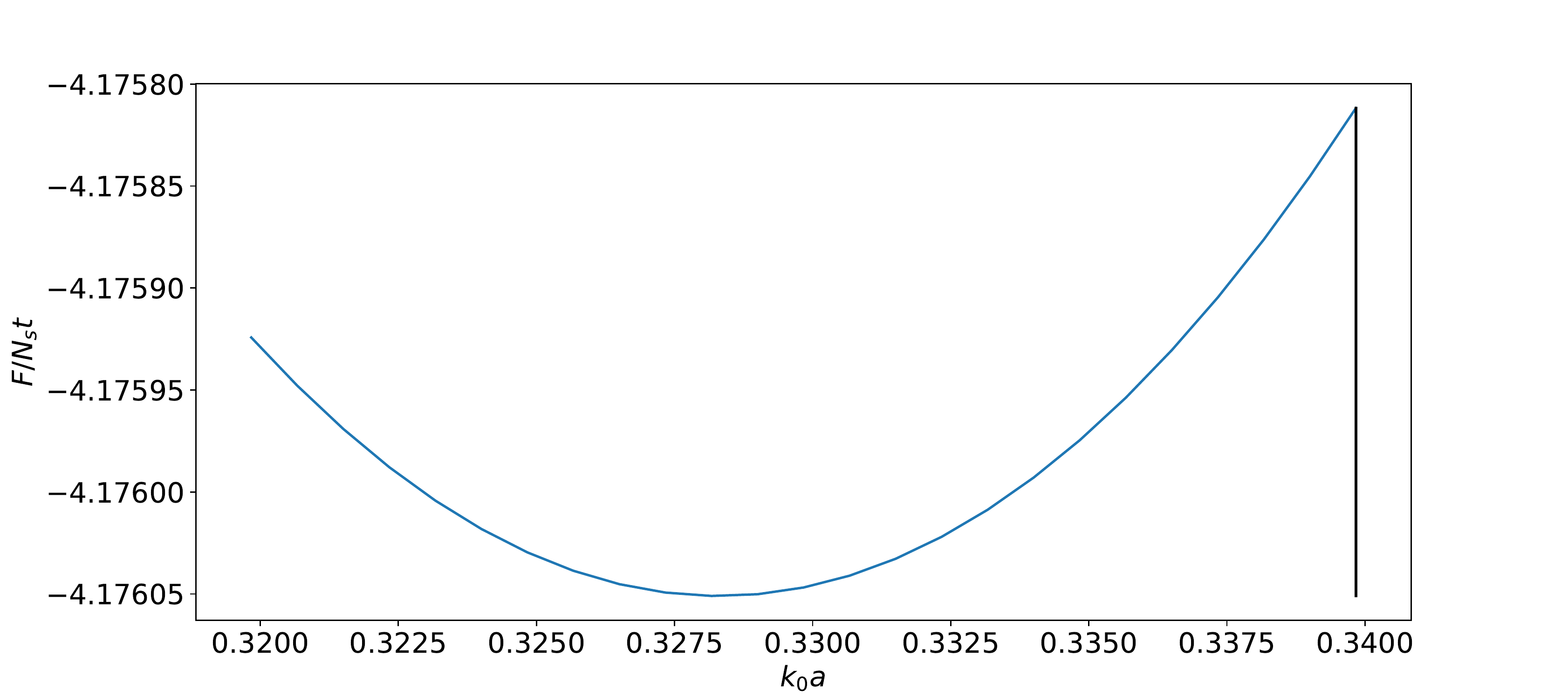}
      \caption{}
    \end{subfigure}%
    \caption{The free energy as a function of $k_0$. The black vertical line shows the position of $k_{0m}$. The angles are set to the values found to minimize $F_{\textrm{LW}}$, while the other parameters are $U_s/t = 0.05$, $\alpha = 1.5$ and $\lambda_R/t = 0.5$. In (a) the lattice size was $N_s 
    = 1600$, while in (b) it was $N_s = 4 \cdot 10^{4}$. The two figures are almost indistinguishable and lead to the same result for $k_{0\textrm{min}}$. \label{fig:LWFk0al15}}
\end{figure}

\begin{figure}
    \centering
    \begin{subfigure}{.49\textwidth}
      \includegraphics[width=\linewidth]{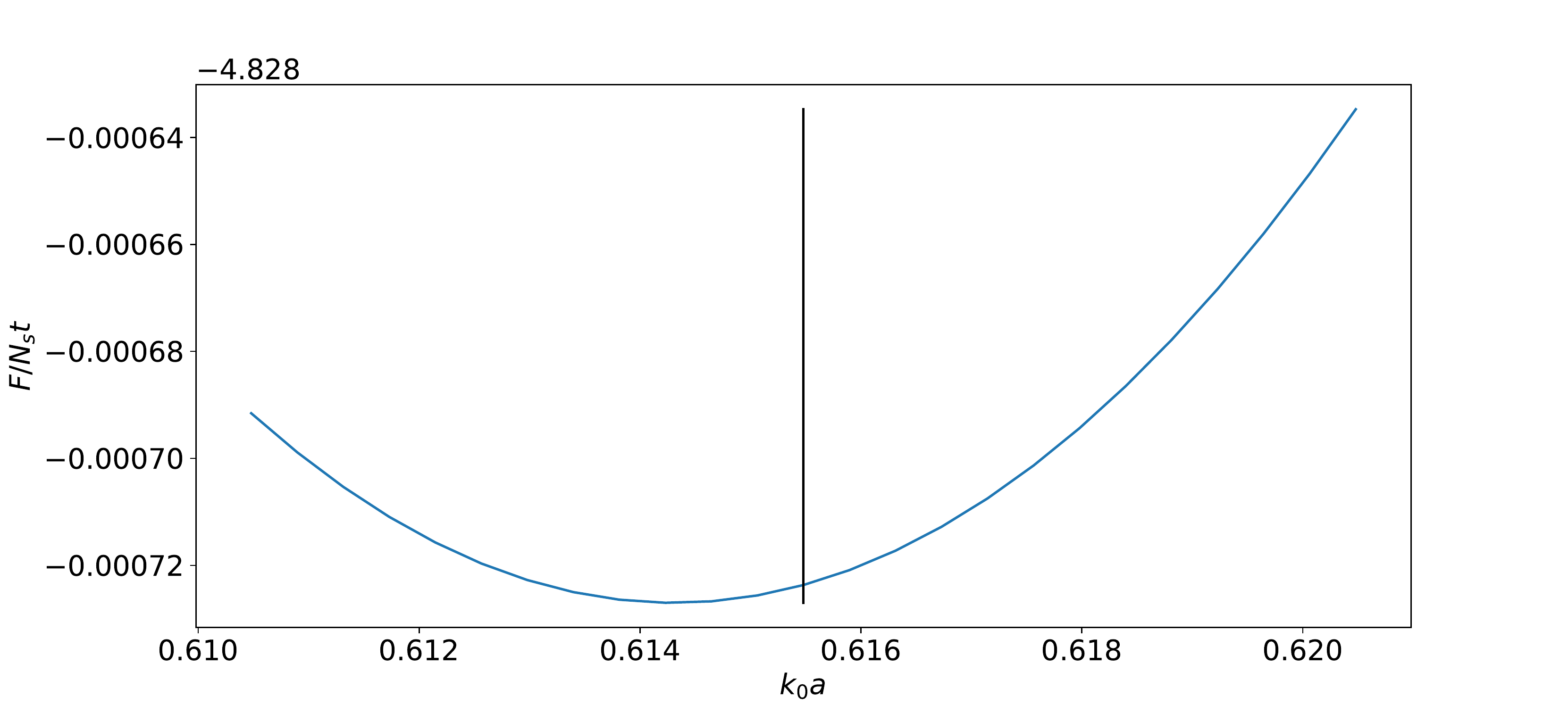}
      \caption{}
    \end{subfigure}%
    \begin{subfigure}{.49\textwidth}
      \includegraphics[width=\linewidth]{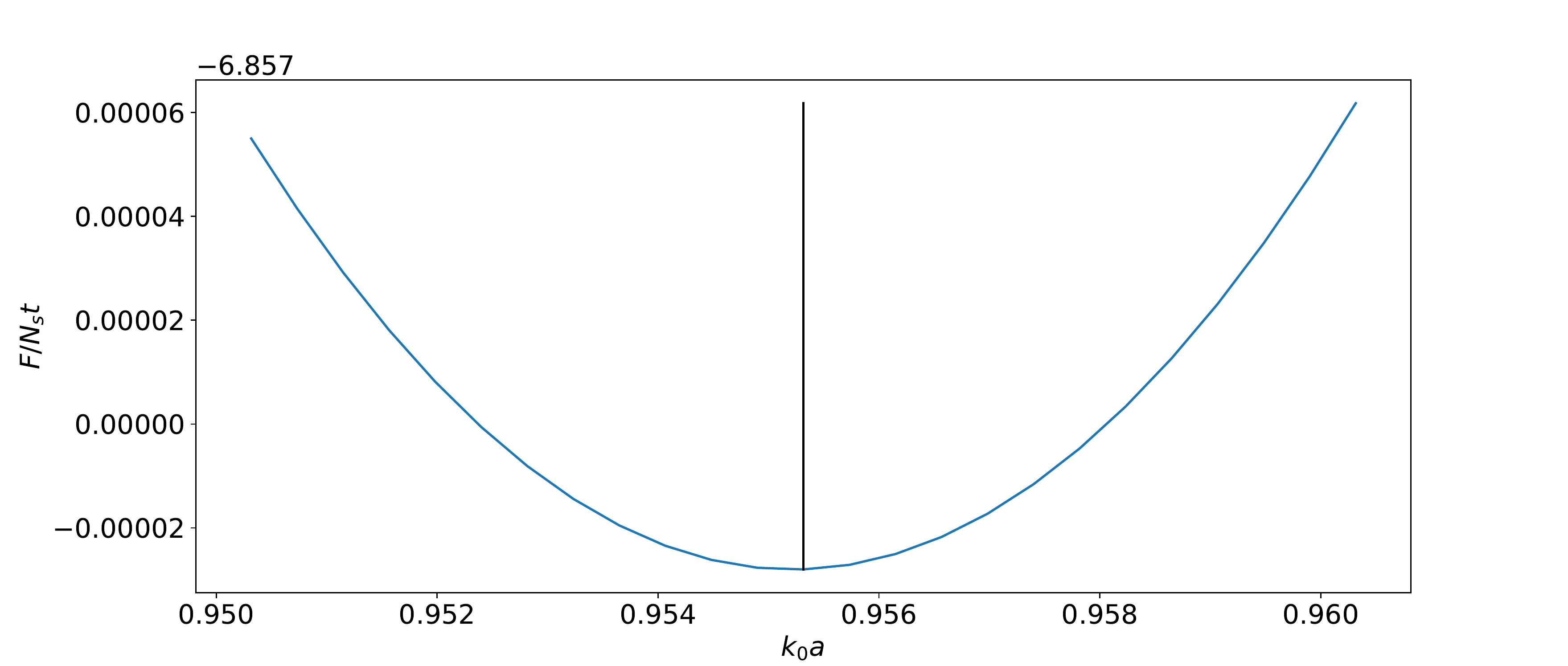}
      \caption{}
    \end{subfigure}%
    \caption{The free energy as a function of $k_0$. The black vertical line shows the position of $k_{0m}$. The angles are set to the values found to minimize $F_{\textrm{LW}}$, while the other parameters are $U_s/t = 0.05$, $\alpha = 1.5$, $\lambda_R/t = 1.0$ (a) and $\lambda_R/t = 2.0$ (b). The lattice size is $N_s 
    = 1600$. It is clear that $k_{0\textrm{min}}$ moves closer to $k_{0m}$ as the strength of SOC is increased. \label{fig:LWFk0lam}}
\end{figure}

Using these values of the angles, we now investigate minimization of $F_{\textrm{LW}}$ with respect to $k_0$. Unlike in the PW and SW phase, $k_{0\textrm{min}}$ appears to be independent of the lattice size. As shown in figure \ref{fig:LWFk0al15} $N_s=1600$ and $N_s = 4 \cdot 10^{4}$ give the same $k_{0\textrm{min}}$, and therefore, $N_s = 1600$ is used in the following. However, it is found that $k_{0\textrm{min}}$ approaches $k_{0m}$ as $\lambda_R$ is increased which is shown in figure \ref{fig:LWFk0lam}. 

With the choices of the angles found to minimize $F_{\textrm{LW}}$ the excitation spectrum is real when $\alpha$ is greater than a lower limit that is greater than $1$ and approaches $1$ from above and the strength of SOC is increased. For $\lambda_R/t = 0.75$ the limit is $\alpha \gtrsim 1.03$ while for $\lambda_R/t = 3.0$ the limit is $\alpha \gtrsim 1.004$. 

We also find that there is a lower limit on the SOC strength $\lambda_R/t$ to ensure that $k_0 = k_{0\textrm{min}}$ gives a spectrum with its global minima at the condensate momenta $\boldsymbol{k}_{0i} = (\pm k_0, \pm k_0)$. It is not completely clear how to quantitatively describe this limit on $\lambda_R/t$. However, we once again point out that $k_0$ is a discrete quantity, as it must be equal to an integer number of lattice spacings in momentum space. Let us say our lattice size is approximately $4 \cdot 10^{5}$ which is slightly larger than in typical experiments \cite{expBEC3dfillingKetterle,expBEC2dfillingSpielman}. The lattice spacing in momentum space is then $\approx 0.01/a$. The value of $k_x = k_y >0$ that corresponds to a global minimum of the excitation spectrum is named $k_g$. We require that $k_x=k_y = k_{0\textrm{min}}$ and $k_x = k_y = k_g$ correspond to the same lattice site in the discrete case. The difference between them should then be significantly less than the lattice spacing $\approx 0.01/a$. One half seems to lenient, while one tenth is probably too strict. We hence arrive at the somewhat arbitrary, though nevertheless reasonable, requirement $|k_g- k_{0\textrm{min}}|a < 0.002$. 

Since this classification is rather heuristic we only find approximate results for the $\alpha$-dependent lower limit on $\lambda_R/t$. We investigate the limit for three values of $\alpha$, $\alpha = 1.05, 1.5, 2.9$ and use these to extrapolate the approximate behavior for all $\alpha$. For $\alpha = 1.5$ and $\lambda_R = 0.8$ we find $|k_g- k_{0\textrm{min}}|a \approx 0.0024$, while for $\lambda_R = 0.85$ we find  $|k_g- k_{0\textrm{min}}|a \approx 0.0020$. Hence we say the limit is $\lambda_R \gtrsim 0.85$. At $\alpha = 1.05$ the limit is $\lambda_R/t \gtrsim 0.75$ while at $\alpha = 2.9$ we find $\lambda_R/t \gtrsim 1.16$. This fits rather well with the linear relation $\lambda_R/t = 0.52 + 0.22\alpha$ which we assume is approximately valid. 
We also find that $|k_{0m}- k_{0\textrm{min}}|a < 0.002$ at these limiting $\lambda_R/t$ values, and that the difference becomes smaller for stronger SOC or lower $\alpha$. Using $k_{0\textrm{min}} = k_{0m}$ therefore seems like a safe approximation.

% Since this classification is rather heuristic we only find approximate results for the $\alpha$-dependent lower limit on $\lambda_R/t$. We investigate the limit for three values of $\alpha$, $\alpha = 0.4, 1.5, 2.9$ and use these to extrapolate the approximate behaviour for all $\alpha$. With the choices of the angles found to minimize $F_{\textrm{LW}}$ the excitation spectrum is real when $\alpha$ is greater than a lower limit that is greater than $1/3$ and approaches $1/3$ from above and the strength of SOC is increased. For $\lambda_R/t = 0.5$ the limit is $\alpha \gtrsim 0.351$ while for $\lambda_R/t = 1.0$ the limit is $\alpha \gtrsim 0.339$. 

% For $\alpha = 0.4$ and $\lambda_R = 0.5$ we find $|k_g- k_{0\textrm{min}}|a \approx 0.0037$, for $\lambda_R = 0.6$ we find $|k_g- k_{0\textrm{min}}|a \approx 0.0023$ while for $\lambda_R = 0.65$ we find  $|k_g- k_{0\textrm{min}}|a \approx 0.0015$. Hence we say the limit is $\lambda_R \gtrsim 0.65$, though it may be slightly smaller. At $\alpha = 1.5$ the limit is $\lambda_R/t \gtrsim 0.8$ while at $\alpha = 2.9$ we find $\lambda_R/t \gtrsim 1.0$. This fits rather well with the linear relation $\lambda_R/t = 0.6 + 0.136\alpha$ which we assume is approximately valid. We also find that $|k_{0m}- k_{0\textrm{min}}|a \lesssim 0.003$ at these limiting $\lambda_R/t$ values, and that the difference becomes smaller for stronger SOC or lower $\alpha$. Using $k_{0\textrm{min}} = k_{0m}$ therefore seems like a safe approximation.

In conclusion we believe the LW phase is dynamically stable for $\alpha$ greater than a lower limit above $1$ that moves close to $1$ as $\lambda_R$ is increased. In addition energetic stability sets in for $\lambda_R/t \gtrsim 0.52 + 0.22\alpha$. The choices \eqref{eq:LWgammathetapi}, \eqref{eq:LWanglesup} and \eqref{eq:LWangles23up} for the angles and $k_0 = k_{0m}$ minimizes $F_{\textrm{LW}}$. 

Note that this has been a heuristic minimization of $F_{\textrm{LW}}$ given the number of variational parameters. We can certainly claim to have found a local minimum of $F_{\textrm{LW}}$ within the set of values that render the LW phase stable. There is however no guarantee we have found the global minimum of $F_{\textrm{LW}}$. A more rigorous method to determine the variational parameters would be to use simulated annealing \cite{simulatedannealing}. Calculating the free energy $F_{\textrm{LW}}$ was however such a computationally heavy procedure that the more heuristic approach with educated guesses was used. We also note that if the global minimum of $F_{\textrm{LW}}$ lies outside the set of values that renders the LW phase stable one at least has to traverse an energy barrier to move from the minimum we have found to such a global minimum. In addition, the calculation of $F_{\textrm{LW}}$ is unclear at dynamic instabilities.

\end{appendix}		%% Optional

\end{document}